\numberwithin{equation}{section}
\newtheorem{theorem}{Theorem}[section]
\newtheorem{lemma}[theorem]{Lemma}
\newtheorem{proposition}[theorem]{Proposition}
\newtheorem{corollary}[theorem]{Corollary}
\newtheorem{remark}[theorem]{Remark}
\newtheorem{definition}[theorem]{Definition}
\newtheorem{assumption}[theorem]{Assumption}
\theoremstyle{definition}
\definecolor{remi}{rgb}{0,0,0}
\renewcommand{\tilde}{\widetilde}          % wider `tilde'
\DeclareMathSymbol{\leqslant}{\mathalpha}{AMSa}{"36} % nicer `smaller or equal'
\DeclareMathSymbol{\geqslant}{\mathalpha}{AMSa}{"3E} % nicer `larger or equal'
\DeclareMathSymbol{\eset}{\mathalpha}{AMSb}{"3F}     % nicer `emptyset'
\renewcommand{\leq}{\;\leqslant\;}                   % redef. of < or =
\renewcommand{\geq}{\;\geqslant\;}                   % redef. of > or =
\newcommand{\dd}{\text{\rm d}}             % a straight d for differentials
\newcommand{\mc}{\mathcal}
\newcommand{\cc}{\mathbb{C}}
\newcommand{\la}{\lambda}
\newcommand{\eps}{\epsilon}
\newcommand{\pl}{\partial}
\newcommand{\bbar}{\overline}
\newcommand{\cjd}{\rangle}
\newcommand{\cjg}{\langle}
\newcommand{\demi}{\tfrac{1}{2}}
\DeclareMathOperator{\supp}{supp}
\newcommand{\D}{\mathbb{D}}
\newcommand{\R}{\mathbb{R}}
\newcommand{\Z}{\mathbb{Z}}
\newcommand{\N}{\mathbb{N}}
\newcommand{\E}{\mathds{E}}
\renewcommand{\P}{\mathds{P}}
\newcommand{\caA}{{\mathcal A}}
\newcommand{\caC}{{\mathcal C}}
\newcommand{\caH}{{\mathcal H}}
\newcommand{\C}{\mathbb{C}}
\renewcommand{\P}{\mathds{P}}
\newcommand{\hf}{\frac{_1}{^2}}
\def\eps{\varepsilon}
\def\S{\mathbb{S}}
\def\T{\mathbb{T}}
\def\bi{\begin{itemize}}
\def\ei{\end{itemize}}
\def\bnum{\begin{enumerate}}
\def\enum{\end{enumerate}}
\def\<#1{\langle #1 \rangle}
\begin{document}

\title[Compactified Imaginary Liouville Theory]{Compactified Imaginary Liouville Theory}

\author[C. Guillarmou]{Colin Guillarmou$^{1}$}
\address{Universit\'e Paris-Saclay, CNRS,  Laboratoire de math\'ematiques d'Orsay, 91405, Orsay, France.}
\email{colin.guillarmou@universite-paris-saclay.fr}

 \author[A. Kupiainen]{ Antti Kupiainen$^{2}$}
\address{University of Helsinki, Department of Mathematics and Statistics}
\email{antti.kupiainen@helsinki.fi}

\author[R. Rhodes]{R\'emi Rhodes$^{3}$}
\address{Aix Marseille Univ, CNRS, I2M, Marseille, France}
\email{remi.rhodes@univ-amu.fr }

\footnotetext[1]{}

%%%%%%%%%%%%%%%%%%%%%%%%%%%%%%%%%%%%%%%%%%%%%%%%%%%%%%%%%%%%%%%%%%%%%%%%%%%%%%
%%%%%%%%%%%%%%%%%%%%% Here the document begins %%%%%%%%%%%%%%%%%%%%%%%%%%%%%%%
%%%%%%%%%%%%%%%%%%%%%%%%%%%%%%%%%%%%%%%%%%%%%%%%%%%%%%%%%%%%%%%%%%%%%%%%%%%%%%

\keywords{  Liouville Quantum Gravity, quantum field theory, Gaussian multiplicative chaos, Ward identities, BPZ equations, DOZZ formula  }

 \begin{abstract}
On a given Riemann surface, we construct a path integral based on the Liouville action functional with imaginary parameters. % over a space of maps defined on a Riemannian manifold and taking values into the circle with radius $R$. 
The construction relies on the compactified Gaussian Free Field (GFF), which we perturb   with a curvature term and an exponential potential.  In physics this path integral is conjectured to describe the scaling limit of critical loop models such as Potts and O(n) models.   %, which can be thought of as a Gaussian integration measure over 1-forms on a given Riemann surface, eventually with punctures. 
 % with imaginary parameters, as prescribed by the Liouville functional. 
The potential term is defined by means of imaginary Gaussian Multiplicative Chaos theory. The curvature term  involves
 integrated 1-forms, which are multivalued on the manifold, and requires a delicate regularisation in order to preserve diffeomorphism invariance.
% s to be regularized via an arbitrary choice of branch cuts for the integrated $1$-forms. A considerable part of our work consists in proving that the resulting path integral does not depend on the choice of the branch cuts.
We prove that the probabilistic
 %Then we show that the 
 path integral %is diffeomorphism invariant, transform covariantly under the action of the Weyl group and obeys 
 satisfies the axioms of Conformal Field Theory (CFT)  including Segal's gluing axioms. %, thus  defining a Conformal Field Theory (CFT). 
 We construct the correlation functions for this CFT, involving  electro-magnetic operators. This CFT has several exotic features: most importantly, it is non unitary and has the structure of a  logarithmic CFT. This is the first mathematical construction of a logarithmic CFT and therefore the present paper  provides a concrete mathematical setup for this concept.
 \end{abstract}

\maketitle
\setcounter{tocdepth}{2}
\tableofcontents

%\vspace{1cm}
%\begin{abstract}
 %\end{abstract}
%\vspace{0.5cm}

\begin{center}
%\today -- \currenttime
\end{center}
\footnotesize

%\noindent{\bf Key words or phrases:}   .  

%\noindent{\bf MSC 2000 subject classifications:  81T40,  81T20, 60D05.}    

\normalsize

%%%%%%%%%%%%%%%%%%%%%%%%%%%%%%%%%%%%%%%%%%%%%%%%%%%%%%%%%%%%%%%%%%%%%%%%%%%%%%%%%%
\section{Introduction}
%%%%%%%%%%%%%%%%%%%%%%%%%%%%%%%%%%%%%%%%%%%%%%%%%%%%%%%%%%%%%%%%%%%%%%%%%%%%%%%%%%
%%%%%%%%%%%%%%%%%%%%%%%%%%%%%%%%%%%%%%%%%%%%%%%%%%%%%%%%%%%%%%%%%%%%%%%%%%%%
Conformal Field Theory (CFT), in its Euclidean formulation,  is a framework in physics for describing the various possible universality classes for statistical physics systems undergoing a second order phase transition at their critical point. In particle physics Quantum Field Theories are expected to tend to CFTs in the limit of small scales and 
 two dimensional CFTs are also  building blocks for string theory.  The success of CFTs in physics owes to the fact that their study gave birth to the theory of conformal bootstrap \cite{P,FGG}. In the groundbreaking work by Belavin-Polyakov-Zamolodchikov \cite{BPZ}, this method coupled with the Virasoro algebra symmetry was used to solve important two dimensional CFTs by   expressing their correlation functions in terms of representation theoretical special functions.
Ever since this work CFT has also had a deep influence on mathematics with applications  in modular forms, representation theories of  infinite-dimensional Lie algebras and vertex algebras, Monstrous Moonshine, geometric Langlands theory and knot theory to mention a few. Laying the rigorous foundations  of CFTs also  triggered interest among mathematicians. Several axiomatic setups  have been proposed, each of them relying mainly on one of the various aspects of CFTs. The treatment of CFTs as Vertex algebras by Kac \cite{kac} grew out of the representation theoretical structure of CFTs whereas Frenkel and Ben-Zvi \cite{frenkel} formalized CFTs within the context of algebraic geometry, and this was further expanded by Beilinson and Drinfeld \cite{drinfeld}.  The lecture notes \cite{gawedzki} emphasized the probabilistic approach based on the Feynman path integral, closer in spirit to statistical physics. Segal \cite{segal} proposed an axiomatic definition to CFTs  inspired by the path integral approach and designed to capture geometrically the conformal bootstrap for CFTs. % envisioned by \cite{BPZ}. 

In spite of various axiomatic definitions of CFTs, finding concrete examples obeying any given set of axioms has proven more challenging and is, actually, a rare event in mathematics. For long, the preferential approach has been mostly algebraic but, even though advances were made,   even the most basic examples of CFTs, the minimal models, are still not fully constructed mathematically (see \cite{bingui} for recent advances in the context of Vertex Operator Algebras, in particular concerning their chiral part). More recently, the probabilistic formulation has led to the first proof of conformal bootstrap in a non-trivial CFT: the path integral for the Liouville CFT, the prototype of unitary noncompact CFT, was constructed in \cite{DKRV,GRV} and it was later shown to satisfy Segal's axioms for CFT and obey the conformal bootstrap  \cite{GKRV,GKRV2}.  %Following this philosophy, a probabilistic construction of the Toda CFT was given in \cite{cercle}. %: it is an instance of noncompact CFT with extended symmetry, here the W-algebra. Even though there is still a long way to go along the road  to the conformal bootstrap for Toda theory, this probabilistic setup offers an interesting playground to probe and understand CFTs with W-symmetries. 

%More systematically, the construction of concrete examples is important to develop and run the axiomatic definitions of CFTs to the test bench.
Whereas  the connection of Liouville CFT to statistical physics is indirect, via the the so called KPZ conjecture \cite{KPZ} that relates a statistical physics model on a random lattice to the same model on a non random lattice coupled with Liouville CFT,  an imaginary version of Liouville CFT is expected in physics to describe a major class of two dimensional statistical physics models, the so-called loop models.
 %In the present manuscript, we construct a   CFT called Liouville CFT with imaginary parameters (shortcut as CILT). 
 Given a    compact Riemann surface $\Sigma$ with Riemannian metric $g$, this CFT is formulated in terms of an action functional
\begin{equation}\label{defintroaction}
S_{\rm L}(\phi,g):=\frac{1}{4\pi}\int_\Sigma\big(|\dd \phi|_g^2+iQK_g\phi+\mu e^{i\beta \phi}\big)\,\dd {\rm v}_g,
\end{equation}
where $\mu\in \C$, $Q,\beta\in\R$ are parameters, ${\rm v}_g$ and $K_g$ are respectively the volume form and the scalar curvature in the metric $g$, and  $\phi:\Sigma\to \T_R$  is a map taking values into the circle $\T_R:=\R/(2\pi R\Z)$ of radius $R$. Statistical averages of functionals $F(\phi)$ of the map $\phi$ are  then formally given by a  path integral  over some space  of maps $\phi:\Sigma\to\T_R$
\begin{equation}\label{defintroPI}
%F\mapsto 
\langle F(\phi)\rangle_{\Sigma,g}:=\int_{\phi:\Sigma\to\T_R}F(\phi)e^{-S_{\rm L}(\phi,g)}\,D\phi.
\end{equation}
If the parameters $\mu$ and $Q$ are set to $0$ then this path integral gives rise to the compactified Gaussian Free Field (or compactified GFF for short, see \cite[Lecture 1.4]{gawedzki} or \cite[subsections 6.3.5 and 10.4.1]{DMS} for physics references, or  \cite[subsection 2.1.3]{dubedat} in mathematics), a CFT with central charge 1. The   case $\mu=0$ goes in physics under the name of Coulomb gas, in which case the so-called background charge (the curvature term in the action) shifts the central charge of the compactified GFF to the value ${\bf c}=1-6Q^2$. The  exponential term gives rise to a nontrivial interacting CFT provided the value of $Q$ is fixed to
\begin{equation}\label{defQ}
Q=\frac{\beta}{2}-\frac{2}{\beta}.
\end{equation}
In this paper we give a probabilistic construction of the path integral \eqref{defintroPI} and prove that the resulting theory satisfies Segal's axioms of conformal field theory. We also give an explicit integral representation for the structure constants (i.e. 3 point correlation functions) of this CFT  in terms of (a generalization of) the famous Dotsenko-Fateev integrals. These results provide the foundation for proving conformal bootstrap for the theory to be discussed in later works. Let us now briefly discuss the mathematical and physical motivations for studying this theory.

\subsection{Logarithmic CFT}
There are two major differences with  the action functional \eqref{defintroaction} and  that of   Liouville CFT. In the latter one takes $i\beta\in\R$ and the field $\phi$ takes values in $\R$ in contrast to the circle in \eqref{defintroaction}. These two changes have a drastic % , up  to the replacements $i\beta\to\gamma\in\R$ and $iQ\to Q\in\R$ and the huge difference that the integration space, i.e. the space of maps $\phi:\Sigma\to\T_R$, is no more a vector space (recall that in the Liouville CFT the integration space is the space of maps $\phi:\Sigma\to\R$). These may appear as   insignificant changes but actually we will see that they have a drastic 
effect on the algebraic, geometric and probabilistic aspect of this CFT.
% The construction of this path integral   offers   new perspectives in the mathematical study of CFTs because of certain nonstandard aspects, some of them will be studied in forthcoming works. 
First it is a  \textbf{non-unitary} (i.e. non-reflection positive) CFT.  This means its Hamiltonian, obtained as a generator of dilations in a canonical Hilbert space of the theory which we construct in this paper, is \textbf{non self-adjoint}. Second
and most importantly, it gives rise to a  \textbf{logarithmic CFT}, topic which has been under active research   in physics \cite{feigin,fuchs,gurarie,mathieu,nives,raoul,read,ruelle,zuber}  and in mathematics \cite{liu,kytola1,kytola2}.  Concretely, this means 
that  the Hamiltonian  should be diagonalizable in Jordan blocks. Important questions in this context are the classification of the reducible but indecomposable representations of the Virasoro algebra and the structure of the conformal bootstrap, for which the  holomorphic factorization is no more expected at least under its standard form.
Third, we prove this CFT possesses   \textbf{non scalar primary fields}, namely they  possess a spin. Fourth  it  is    \textbf{non-rational and non diagonal}: its spectrum  is countable  and different representations of the Virasoro algebra are involved in the chiral and anti-chiral parts entering the correlation functions. Hence our path integral provides the foundations to the mathematical study of all these concepts.

%Finally we give an explicit integral representation for the structure constants (i.e. 3 point correlation functions) of this CFT  in terms of (a generalization of) the famous Dotsenko-Fateev integrals. These integrals can be evaluated explicitly  in the special case where only electric operators are involved, see \eqref{npointfunction} and they coincide with  the imaginary DOZZ formula  \eqref{ImDOZZ}. Similar integral expressions hold for  $n$-point correlation functions.
%In particular, structure constants (i.e. 3 point correlation functions) are not known for the most general form of primary fields: the electro-magnetic operators. Yet  we can express the structure constants (and more generally $n$-point correlation functions) in terms of (a generalization of) Dotsenko-Fateev integrals, which coincide with the imaginary DOZZ formula in the specific case where only electric operators are involved, see \eqref{npointfunction} and \eqref{ImDOZZ}. Therefore this path integral will serve  as a vast playground to the mathematical study of all these concepts. 

Beyond the CFT aspects, we also expect this CFT will have a strong interplay with   Schramm Loewner Evolution and Conformal Loop Ensembles. Indeed, this was advocated first in physics (see for instance \cite{IJK} or the review \cite{reviewCC})  and a similar interplay occurred  recently in the case of  the Liouville CFT, in which case  these two objects were bridged by the mating-of-trees formalism\footnote{A  framework to study the coupling between SLE or CLE  and Liouville CFT in terms of more classical probabilistic objects such as Brownian motion, Levy process, and Bessel process.}  \cite{dms}. This has led to mutual better understanding of these objects \cite{ahs21,arsz23,as21} as well as  unexpected results for various statistical physics models \cite{ars22,pierre} and the same is expected in case of the CFT studied in this paper.

%Now, before going into further details in the presentation of the path integral, let us first discuss how this path integral showed up in physics.

\subsection{Physics: from loop models to the Coulomb gas and imaginary Liouville theory}
 
 %The study of critical phenomena has led to a remarkable interplay between statistical mechanics and   field theory. Critical fluctuations are described by a Conformal Field Theory (CFT), namely a quantum field theory with local scale invariance. This point of view, envisioned by 
 In the approach of \cite{BPZ} %, has proven especially powerful in  two dimensions by relating 
 the critical exponents of a statistical physics model were related to the representation theory of the Virasoro algebra of the  CFT conjectured to describe its scaling limit. 
Progress in the physical understanding of two-dimensional critical phenomena has therefore been obtained on the one hand by algebraically 
 classifying CFTs and on the other hand by associating them with putative scaling limits of concrete lattice models \cite{BPZ,cardy}. It has been known since the 70's that the critical points of many models of statistical physics can be described in terms of the GFF, dubbed the Coulomb gas representation  (see \cite{cardy,FSZ,nienhuis} for instance). This picture turned out to be particularly suited to describing the scaling limit of so called loop models. These are models for a random ensemble of loops (closed paths) on a two dimensional lattice and they could be mapped to models of fluctuating surfaces described by a height function $h$ mapping the lattice points to $\Z$ or $\R$, the basic idea being to think of loops as contour lines of the random surface.   The scaling limit of $h$ was then argued to be related to GFF. The variance of the GFF (here rescaled to the parameter  $\beta$ in   \eqref{defintroaction}) was then fixed by a priori knowledge of  an exact value of some critical exponent, see \cite{cardy} for instance.

However, for the full description of the loop models  the naive GFF theory needs to be modified  in several ways: first the Gaussian height function has to be taken periodic i.e. to take values on a circle instead of the real line, giving rise this way to the compactified GFF described before. Secondly, to obtain the expected central charge $c$ of the CFT one needs to include the curvature term in \eqref{defintroaction} which shifts $c$ from the GFF value $c=1$ to $c=1-6Q^2$. Finally, later on it was argued in \cite{kondev,JK,KGN}\footnote{See also \cite{KH} for multicomponent compactified bosons with Kac-Moody symmetry.} (see also the review paper \cite{reviewCC}) that the action should also include the  Liouville potential $e^{i\beta\phi}$ and requiring it to be a marginal perturbation  (and thus potentially giving rise to a CFT with the same $c$) then fixes the value of $\beta$ in terms of $Q$ as prescribed by \eqref{defQ}. 
%the ) and screening charges  (the exponential term in \eqref{defintroaction})  thereby producing a different CFT given by the path integral \eqref{defintroPI}.  
  %The compactified GFF with background charge was the first Coulomb gas picture developed for loop models. One drawback was that the correspondence with lattice models left the coupling constant unknown (the variance of the GFF, here encoded in $\beta$ for later purpose). Fixing this value required to import additional knowledge on the model, usually obtained by other exact methods (typically   the a priori knowledge of  an exact value of some critical exponent, see \cite{cardy} for instance). Later on, it was argued in \cite{kondev,JK,KGN}\footnote{See also \cite{KH} for multicomponent compactified bosons with Kac-Moody symmetry.} (see also the review paper \cite{reviewCC}) that the action should also include the  Liouville potential $e^{i\beta\phi}$ and requiring it to be a marginal perturbation  (and thus potentially giving rise to a CFT) then fixes the value of $\beta$ in terms of $Q$ as prescribed by \eqref{defQ}. 
  This is how physicists ended up with the path integral \eqref{defintroPI} with action functional \eqref{defintroaction} to describe the long-distance properties of  self-avoiding loop models. For instance, the critical q-state Potts model on a connected planar graph $G=(V,E)$ where $V$ is a set of vertices and $E$ the set of edges has  the  following loop gas representation of its partition function \cite{baxter} 
 $$
 Z=q^{|V|/2} \sum_{\rm loops } x^{|E'|}q^{\ell(E')/2},$$
where the sum runs over subgraphs $(V,E')\subset (V,E)$ and  $\ell(E')$ being the number of loops in the loop configurations on the medial graph. Here $x>0$, and $q^{1/2}$ appears as a formal parameter and can thus be assigned complex values. At the critical point $x=x_c$, the  conjectural relation  with the path integral is
$$ q^{1/2} =-2\cos(\pi \beta^2/4)$$
for $0<\beta<2$, with corresponding central charge 
$c=1-6(\frac{\beta}{2}-\frac{2}{\beta})^2.$
A similar correspondence exists with the $O(n)$-model,  see \cite{cardy2,reviewCC,rychkov}\footnote{A further  orbifold of our path integral might be involved: this question is under investigation in physics but the mathematical construction of the path integral is quite similar.}.  These conjectures are  important but still far from being understood in mathematics. Let us stress that in the different context of the dimer model,   yet of the same flavour, there has been a long series of works studying the convergence towards the compactified GFF, see \cite{K1,K2, dubedat,BDT,BLR1,BLR2}. We will not develop any further this aspect in the paper. Our main goal is  here to construct mathematically the path integral \eqref{defintroPI}, \eqref{defintroaction} and establish the basic CFT axioms.

Let us close this introduction by stressing that one important byproduct of our work is a mathematical construction of the Coulomb gas and we feel that some foundational aspects of this fundamental model have been a bit overlooked in the physics literature, which has some impact on the conclusions drawn for the path integral \eqref{defintroPI}. Our concerns come from two modifications needed  to define   the curvature term: one is topological and the other one comes from the windings of the electro-magnetic operators. The topological modification for the curvature was already considered in \cite{verlinde} but to our knowledge the other problem has not been addressed before. However, even in \cite{verlinde}, the main properties of the  modification are not addressed in detail and it turns out that the resulting path integral is not diffeomorphism invariant unless $Q\in \frac{1}{R}\Z$ (in the case considered in \cite{verlinde} this holds). In the case of the Coulomb gas (hence with no potential), for any $Q$ we can choose the radius $R$ so that $Q\in \frac{1}{R}\Z$, and this  condition is always assumed in physics.  But in the case of the imaginary Liouville path integral, the presence of the potential forces $\beta\in \frac{1}{R}\Z$ as well. If we further impose $Q\in \frac{1}{R}\Z$, as we do in the present paper, then this forces the central charge to be of the form ${\bf c}=1-6\frac{(p-q)^2}{pq}$ for some (relatively prime)  integers $p,q$  which matches the possible set of central charges for minimal models. Yet, in physics, the path integral is also considered for irrational values of the central charge, in particular $Q\notin \frac{1}{R}\Z$,  and then  the path integral violates the most basic requirement (diffeomorphism invariance) of a Quantum Field Theory. We haven't found any reference to this fact in the literature except for a remark by  A.B. Zamolodchikov in  \cite{Za}   for a somewhat related model (Generalized Minimal Models), quoting: `` It remains a question if such infinite algebra is
consistent with general requirements of quantum field theory. In particular, the construction of a modular invariant partition function of Generalized Minimal Models  obviously encounters severe problems." Yet the path integral can be constructed in this case, see Section \ref{irrational}, even though it exhibits unstandard features.
  
%\end{document}
\section{Outline and main results}
%%%%%%%%%%%%%%%%%%%%%%%%%%%%%%%%%%%%%%%%%%%%%%%%

 Because the construction of the path integral requires some geometric background, we sketch here the construction, as a guideline for the paper.  First we discuss informally some aspects of the construction  at a general level but we will then give two concrete   examples with simple topologies to introduce pedagogically two important notions of the construction: the defect graph and the topological instantons.
 
 To make sense of the measure \eqref{defintroPI} with action \eqref{defintroaction}, it is first necessary to understand the path integral for the compactified GFF (i.e. with $Q=\mu=0$ in  \eqref{defintroaction}). For this, one observes that the differential $\dd \phi$ of a  (say smooth) map $\phi:\Sigma\to \T_R$ defines a real valued closed $1$-form $\omega$ on $\Sigma$, with periods $\int_\gamma \omega\in 2\pi R\Z$ if $\gamma$ is any closed curve on $\Sigma$. The Hodge decomposition then allows to uniquely decompose  $\omega=\omega_h+\dd f$, where   $f:\Sigma\to \R$ is a function
and $\omega_h$ is a harmonic 1-form (the De Rham cohomology group is represented by harmonic 1-forms). In other words the $\T_R$ valued map $\phi$ can be viewed as a sum of $f$ and a   multivalued harmonic function $\int_{\gamma_{x_0,x}} \omega:=I_{x_0}(\omega_h)$ on $\Sigma$ where $\gamma_{x_0,x}$ is a path from $x_0$ to $x$. 
This decomposition is orthogonal in the space $\Omega^1(\Sigma)$ of real valued 1-forms on $\Sigma$ equipped with the inner product 
\begin{equation}
\langle \omega,\omega'\rangle_2:=\int_\Sigma  \omega\wedge *\,\omega',
\end{equation}
where $*$ is the Hodge operator, in such a way that
$$ \int_\Sigma|\dd \phi|_g^2 \,\dd {\rm v}_g=\int_\Sigma|\dd f|_g^2 \,\dd {\rm v}_g+ \int_\Sigma  \omega_h\wedge *\, \omega_h.$$
Therefore the formal measure in \eqref{defintroPI} for the compactified GFF can be understood as the measure
\begin{align}\label{boson}
e^{-\frac{1}{4\pi} \int_\Sigma|\dd \phi|_g^2 \,\dd {\rm v}_g}D\phi
=
e^{-\frac{1}{4\pi}\int_\Sigma|\dd f|_g^2 \,\dd {\rm v}_g}Df\times \,e^{-\frac{1}{4\pi}     \int_\Sigma  \omega_h\wedge \star\, \omega_h     }\dd \mu(\omega_h)\times\,\dd c
\end{align}
where  %$H^1(\Sigma)$ is equipped with the counting measure 
$\dd \mu(\omega_h)$ is  the counting measure on the De Rham cohomology group (isomorphic to $\Z^{2{\mathfrak{g}}}$ where ${\mathfrak{g}}$ is the genus of $\Sigma$), $dc$ is the Lebesgue measure on $\R/2\pi R\Z$ ($c$ plays the role of the zero mode)  and the formal measure $e^{-\frac{1}{4\pi}\int_\Sigma|\dd f|_g^2 \,\dd {\rm v}_g}Df$ is interpreted in our work as the distribution law of the usual  Gaussian Free Field on $\Sigma$. The functional $F(\phi)$ in \eqref{defintroPI} then  has to be well chosen so that  the result does not depend on the ambiguity related to the fact that $I_{x_0}(\omega_h)$ is multivalued on $\Sigma$ (see Section \ref{secPI}). This aspect is quite subtle and  it manifests, for instance,  as soon as one wishes to make sense of the curvature term in \eqref{defintroaction}. We expand this point a bit further now.

There are actually two main difficulties with the curvature term. First, since the  zero mode  $c$ lives on the circle $\R/2\pi R\Z$, the curvature term $\frac{Q}{4\pi}\int_\Sigma \phi K_g\dd {\rm v}_g$ has to be $2\pi R$-periodic in the variable $c$. By Gauss-Bonnet $\frac{1}{4\pi}\int_\Sigma  K_g\dd {\rm v}_g=2-2{\mathfrak{g}}$ 
so this forces $2Q$ to be an integer multiple of $1/R$\footnote{Actually, this point could be circumvented by adding artificial (electric) operators as is usually done in the physics literature.}. The same argument holds for the Liouville potential and this forces $\beta$ to be an integer  multiple of $1/R$. Second, and more  subtly, the curvature term contains the primitive in  $\phi=c+f+I_{x_0}(\omega_h)$, which is a multivalued function on $\Sigma$. Therefore the integral  $\int_\Sigma \phi K_g \dd {\rm v}_g$ does not make sense  unambiguously; even worse, 
if we define the primitive on the universal cover, and then descend  it on a fundamental domain,  the quantity $\int_\Sigma \phi K_g \dd {\rm v}_g$ will depend on the choice of the fundamental domain. That may sound like mathematical nonsense, but concretely, this means that a naive definition of the curvature term produces a theory that is not diffeomorphism invariant. A considerable part of our work consists in regularizing this integral via branch-cuts on $\Sigma$ and in proving that the result does not depend on the choice of the branch-cuts.  The result is that a change of $\sigma$ changes the regularised  curvature integral by an integer multiple of $2\pi R$ and thus one has to impose $QR\in\Z$. Together with $\beta R\in\Z$ and 
 the expression for $Q$ given by \eqref{defQ} we conclude $\beta^2$ has to be rational and hence also that the central charge $\bf c$ will be rational. Hence a diffeomorphism invariant theory exists only for rational values of  $\beta^2$ and   $\bf c$.%This construction defines unambiguously the curvature term but is diffeomorphism invariant only for  $\beta^2$ rational. Gathering these obstructions, our construction is valid only for $\beta^2$ rational \footnote{For the torus, the curvature term does not play a role and one can actually take $\beta^2$ irrational.}.

Concerning the Liouville potential term, the construction  involves some technology related to the renormalization of imaginary Gaussian Multiplicative Chaos (GMC for short), in particular some tools developed in \cite{LRV15,LRV19}, and is restricted to $0<\beta^2<2$. The full picture of this path integral is expected to cover the values $0<\beta^2<4$ in which case further renormalisations  (in addition to the Wick ordering here) are needed, reminiscent of the Sine-Gordon model. We stress that these renormalizations need to cover not only the construction of the partition function but also all the correlation functions; a task that has not been completed yet for   the Sine-Gordon model. Yet, this is an important future direction of development.  
The construction  for $0<\beta^2<2$ rational of the path integral and the correlation functions is carried out in Section \ref{secPI}. We give now more details in two simple cases.
 
 \subsection{ Construction on the Riemann sphere.}
 Now we give a more concrete  construction on the simplest possible topology, the Riemann sphere, because the compactified GFF coincides then with the standard GFF. We first give the construction of the path integral without electro-magnetic operators. Consider a smooth Riemannian metric $g$ on the Riemann sphere $\hat\C$, with volume form ${\rm v}_g$ and Ricci curvature $K_g$. The GFF in the metric $g$ is denoted by $X_g$ (see section \ref{IGMC} for details). We consider %$\beta\in \R$, $R>0$ and $Q$ 
 $\beta,Q\in \frac{1}{R}\Z$ and $Q=\frac{\beta}{2}-\frac{2}{\beta}$. 
 The construction involves imaginary GMC, namely the limit
 $$M_\beta(X_g,\dd x):=\lim_{\epsilon\to 0}\epsilon^{-\frac{\beta^2}{2}}e^{i\beta X_{g,\epsilon}(x)}\dd x$$ where $X_{g,\epsilon}$ is a  reasonable regularization of the GFF at scale $\epsilon$ in the metric $g$. The convergence is non trivial for $\beta^2\in (0,2)$ and the limit is a distribution of order 2 with exponential moments. More generally, we will consider imaginary GMC $M_\beta(\phi_g,\dd x)$ where we replace the GFF $X_g$ by a more general functional of the GFF, denoted by $\phi_g$ and called the \emph{Liouville field}, which will be made precise depending on the context. The definition of the path integral is then (up to some  metric dependent terms encoded in the constant $C(g)$, the precise meaning of which we skip for now)
 \begin{equation}\label{defPIintro}
   \langle F\rangle_{\hat\C,g } 
   :=
C(g) \int_{\R/2\pi R\Z}\E\Big[ F(\phi_g)e^{-\frac{i   Q}{4\pi}\int_{\hat\C }  K_g\phi_g\,{\rm dv}_g -\mu  M^g_\beta(\phi_g,\hat\C)}\Big]\,\dd c
\end{equation}
 for all reasonable test functions $F$, periodic with period $2\pi R\Z$ in the variable $c$, and here the Liouville field is $\phi_g=c+X_g$. Note the integration $\dd c$ of the constant mode $c$ of the field $\phi_g$  over the   circle of radius $R$. Also, the curvature term here perfectly makes sense because, at this stage, the Liouville field is a well defined distribution on $\hat\C$ and can be integrated over $\hat\C$.
 
\subsubsection{Electric operators}  
Next we introduce {electric operators}: given a point $x\in\hat\C$ on the sphere and a weight $\alpha\in R^{-1}\Z$, they are formally defined by 
  \[V_{(\alpha,0)}(x)=\lim_{\epsilon\to 0}\epsilon^{-\frac{\alpha^2}{2}}e^{i\alpha\phi_{g,\epsilon}(x)}.\] 
  The condition $\alpha\in R^{-1}\Z$ makes sure that the electric operator, seen as a function of $c$, is well defined on the circle. Products of such operators for distinct points ${\bf x}=(x_1,\dots,x_n)\in\hat\C^n$ with respective weights $\boldsymbol{\alpha}=(\alpha_1,\dots,\alpha_n)\in (R^{-1}\Z)^n$ will be denoted by $V_{(\boldsymbol{\alpha},0)}({\bf x}):=\prod_{j=1}^nV_{(\alpha_j,0)}(x_j)$. Correlation functions for electric operators are formally given by evaluating the path integral above with $F(\phi_g)=V_{(\boldsymbol{\alpha},0)}({\bf x})$, namely   $\langle V_{(\boldsymbol{\alpha},0)}({\bf x})\rangle_{\hat{\C},g}$.  Of course, the limit $\epsilon\to 0$ is ill defined because, making sense only as a distribution, the GFF cannot be evaluated pointwise. To remedy this, the usual trick is to apply the Girsanov transform; special care has to be taken here because of the imaginary weights so that rigorously implementing the Girsanov transform has to go through analytic continuation arguments (see Section \ref{sec:correls}). The outcome is that the path integral with electric operators is (again, up to explicit trivial factors  $C(g,{\bf x},\boldsymbol{\alpha})$)
  \begin{equation}\label{defPIelec}
   \langle FV_{(\boldsymbol{\alpha},0)}({\bf x})\rangle_{\hat\C,g } 
   :=
 C(g,{\bf x},\boldsymbol{\alpha}) \int_{\R/2\pi R\Z}e^{i c\sum_{j=1}^n\alpha_j}\E\Big[ F(\phi_g)e^{-\frac{i   Q}{4\pi}\int_{\hat\C }  K_g\phi_g\,\dd v_g -\mu  M^g_\beta(\phi_g,\hat\C)}\Big]\,\dd c
\end{equation}
 where the Liouville field is now $\phi_g=c+X_g+u_{\bf x}$, and the function $u_{\bf x}(x):=\sum_{j=1}^ni\alpha_jG_g(x,x_j)$ (with $G_g$ the Green function on $\hat\C$ in the metric $g$) stands for the shift resulting from the Girsanov transform. Observe that this shift creates singularities in the potential $M^g_\beta(\phi_g,\Sigma)$, which can be formally written as 
 $e^{i\beta c}\int_{\hat\C} e^{-\beta\sum_j\alpha_jG_g(x,x_j)}M_\beta(X_g,\dd x)$. Well-posedness  and existence of exponential moments of this random variable are part of our work.
 
\subsubsection{Magnetic operators.} The next step is to introduce {magnetic operators}. Magnetic operators have the effect to make the Liouville field $\phi_g$ acquire a winding when turning  around some prescribed points. More precisely, consider distinct points ${\bf z}=(z_1,\dots,z_n)\in\hat\C^n$ with respective magnetic charges $\boldsymbol{m}=(m_1,\dots,m_n)\in\Z^n$. Then we consider a closed 1-form $\nu_{{\bf z},\boldsymbol{m}}$ on $\hat\C\setminus \{{\bf z}\}$ (with $ \{{\bf z}\}:=\cup_{j=1}^n \{z_j\}$) such that $\nu_{{\bf z},\boldsymbol{m}}$ is of the form $m_jR \dd\theta$  in local radial coordinates $z-z_j=re^{i\theta}$ near the point $z_j$, for $j=1,\dots,n$. Such a 1-form exists if and only if the charges satisfy $\sum_{j=1}^nm_j=0$. Also, note that the choice of this closed 1-form is not unique but the path integral we will construct will not depend on this choice. Then we want to offset the GFF with a primitive of this 1-form, namely we want to consider the path integral \eqref{defPIelec} where the Liouville field is now 
 \[\phi_g=c+X_g+u_{\bf x}+I_{x_0}(\nu_{{\bf z},\boldsymbol{m}}),\] 
 and $I_{x_0}(\nu_{{\bf z},\boldsymbol{m}}):=\int_{x_0}^x\nu_{{\bf z},\boldsymbol{m}}$ is a primitive of the 1-form $\nu_{{\bf z},\boldsymbol{m}}$ with base point $x_0$. The point is that $\nu_{{\bf z},\boldsymbol{m}}$ is not exact (and $\hat\C\setminus \{{\bf z}\}$ is of course not simply connected) so that the primitive is multivalued on $\hat\C\setminus \{{\bf z}\}$: it has a monodromy $2\pi m_jR$ when turning once around a given point $z_j$.  The first important remark is that any function of the form $e^{i \frac{k}{R}I_{x_0}(\nu_{{\bf z},\boldsymbol{m}}) }$, for $k\in\Z$, is unambiguously defined as a smooth function on $\hat\C\setminus \{{\bf z}\}$. In particular,   since $\beta$ is an integer multiple of $1/R$, the potential term in \eqref{defPIelec} is well defined as it reads 
 $$e^{i\beta c}\int_{\hat\C} e^{-\beta\sum_j\alpha_jG_g(x,x_j)}e^{i\beta I_{x_0}(\nu_{{\bf z},\boldsymbol{m}})(x)}M_\beta(X_g,\dd x). $$
 Also, and at the level of electric operators, offsetting the field by $I_{x_0}(\nu_{{\bf z},\boldsymbol{m}})$ produces a further factor of the form $\prod_{j=1}^ne^{i\alpha_j I_{x_0}(\nu_{{\bf z},\boldsymbol{m}})(x_j)}$; again, the conditions $\alpha_j\in R^{-1}\Z$ make this product unambiguously defined. Actually the main problem comes from the curvature term in \eqref{defPIelec}: the monodromy of the field $\phi_g$ makes this term being ill-defined and depending on the choice of the primitive. This term must be regularized. For this, we first introduce a system of branch cuts for $I_{x_0}(\nu_{{\bf z},\boldsymbol{m}})$, which we call \emph{defect graph}. It consists in a family of smooth   simple curves $(\xi_p)_{p=1,\dots,m-1}$, which do not intersect except eventually at their endpoints. Each arc $\xi_p:[0,1]\to \hat\C$ is a smooth oriented curve parametrized by arclength with endpoints $\xi_p(0)=z_j$ and $\xi_p(1)=z_{j'}$ for $j\not= j'$ with orientation in the sense of increasing charges, meaning $m_j\leq m_{j'}$. We must further impose a direction along which the arcs reach the endpoints. So we fix a family of unit tangent vectors $v_j\in T_{z_j}\hat \C$ ($j=1,\dots,n$) and we further require these arcs to obey: if $\xi_p(a)=z_j$ with $a\in \{0,1\}$, then  $\xi_p'(a)$ is positively colinear to $v_j$.    Finally, consider the  oriented  graph with vertices $\{\mathbf{z}\}$ and edges $(z_j,z_{j'})$ when there is an arc connecting $z_j$ to $z_{j'}$. This graph must be connected and without cycle. What we call defect graph is the set $\mc{D}:=\bigcup_{p=1}^{m-1}\xi_p([0,1])$, see Figure \ref{fig:defectintro}. 

\begin{figure}
\includegraphics[scale=0.5]{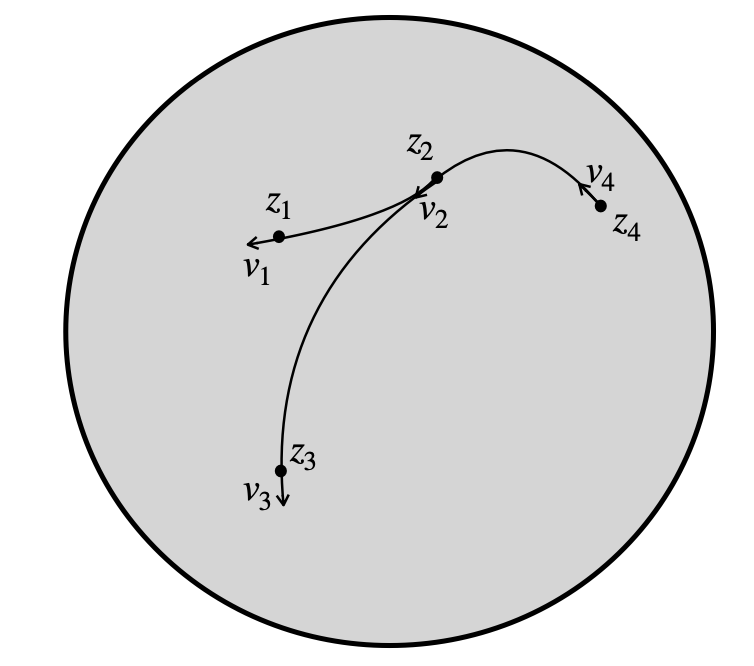}
\caption{A defect graph with 4 points.} %the unit vectors are in blue and the arcs in green.}
\label{fig:defectintro}
\end{figure}

The defect graph is a proper branch cut for $\nu_{{\bf z},\boldsymbol{m}}$ in the sense that this 1-form is exact on the complement $\hat\C\setminus \mc{D}$. Therefore it admits an unambiguously defined primitive denoted by $I^{ \boldsymbol{\xi}}_{x_0}(\nu_{{\bf z},\boldsymbol{m}})$. Then we introduce the regularized curvature term as
 $$\int_{\hat\C }^{\rm reg}  K_g\phi_g\,{\rm dv}_g:=\int_{\hat\C}  K_g(c+X_g+u_{\bf x})\,{\rm dv}_g+ \int_{\Sigma}^{\rm reg}  I^{ \boldsymbol{\xi}}_{x_0}( \nu^{\rm h}_{\mathbf{z},\mathbf{m}}) K_g \,{\rm dv}_g$$
 where
 \begin{equation}\label{regintro}
\int_{\hat\C}^{\rm reg}  I^{ \boldsymbol{\xi}}_{x_0}( \nu_{\mathbf{z},\mathbf{m}}) K_g \,{\rm dv}_g:=\int_{\Sigma}  I^{ \boldsymbol{\xi}}_{x_0}( \nu_{\mathbf{z},\mathbf{m}}) K_g \,{\rm dv}_g-2\sum_{p=1}^{n}\kappa(\xi_p)\int_{\xi_p}k_g\dd \ell_g
\end{equation}
where $k_g$ denotes the geodesic curvature of an oriented curve, $\dd \ell_g$ the Riemannian measure on $\xi_p$ induced by $g$ and $\kappa(\xi_p)$ are coefficients that have explicit expressions in terms of the magnetic charges; their value is actually imposed by an argument relying on the Gauss-Bonnet theorem (see Subsection \ref{curvature_mp}). The definition of the path integral with both electric and magnetic operators is then
  \begin{align}\label{defPIboth}
   &\langle FV_{(\boldsymbol{\alpha},0)}({\bf x})V_{(0,\boldsymbol{m})}({\bf v})\rangle_{\hat\C,g } \\
 &  :=
 C(g,{\bf x},{\bf z},\boldsymbol{\alpha},\boldsymbol{m}) \int_{\R/2\pi R\Z}e^{i c\sum_{j=1}^n\alpha_j}\E\Big[e^{-\frac{1}{2\pi}\langle \dd X_g, \nu_{\mathbf{z},\mathbf{m}}\rangle_2} F(\phi_g)e^{-\frac{i   Q}{4\pi}\int^{\rm reg}_{\Sigma }  K_g\phi_g\,\dd v_g -\mu  M^g_\beta(\phi_g,\Sigma)}\Big]\,\dd c\nonumber
\end{align}
with $\phi_g=c+X_g+u_{\bf x}+I_{x_0}(\nu_{{\bf z},\boldsymbol{m}})$ and ${\bf v}=((z_1,v_1),\dots,(z_{n_{\mathfrak{m}}},v_{n_{\mathfrak{m}}}))\in (T\hat\C)^{n}$, and $ C(g,{\bf x},{\bf z},\boldsymbol{\alpha},\boldsymbol{m})$ is a constant encoding trivial factors (in particular it contains the product $\prod_{j=1}^ne^{i\alpha_j I_{x_0}(\nu_{{\bf z},\boldsymbol{m}})(x_j)}$). What is non trivial is to show that this regularized curvature does not depend on the choice of the defect graph: this is proved in subsection  \ref{curvature_mp}.

Electro-magnetic operators, denoted by $ V_{(\boldsymbol{\alpha},\boldsymbol{m})}({\bf v})$ are then defined by merging both electric and magnetic operators, namely by taking the limit ${\bf x}\to {\bf z}$ in \eqref{defPIboth}. Yet, this limit is ill-defined because of the windings around the points ${\bf z}$. This is why we must further impose the direction along which each $x_j$ reaches $z_j$, and we require this limit to be taken in the direction $v_j$, in which case we prove that the limit makes sense and defines this way correlation functions of electro-magnetic operators  $\langle V_{(\boldsymbol{\alpha},\boldsymbol{m})}({\bf v})\rangle_{\Sigma,g } $. Furthermore, we obtain an explicit description how these correlation functions  are affected by a change of the choice of the unit vectors $v_j$, which translates the notion of spin for primary fields in the physics literature.
 
\begin{figure}[h]
         \begin{subfigure}[b]{0.45\textwidth}
         \includegraphics[width=1\textwidth]{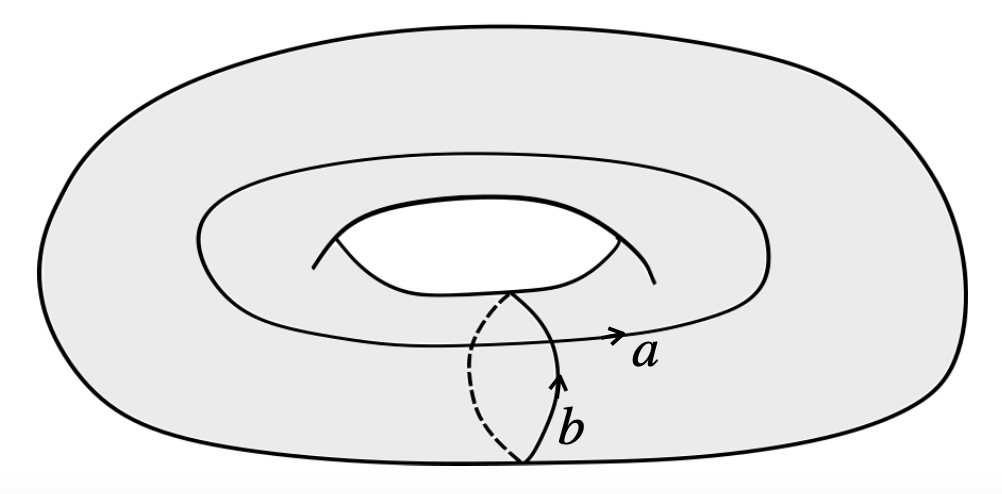}
         \caption{Torus with its two homology cycles.}
         \label{fig:three sin x}
     \end{subfigure}
    \hfill
     \begin{subfigure}[b]{0.54\textwidth}
         \includegraphics[width=1\textwidth]{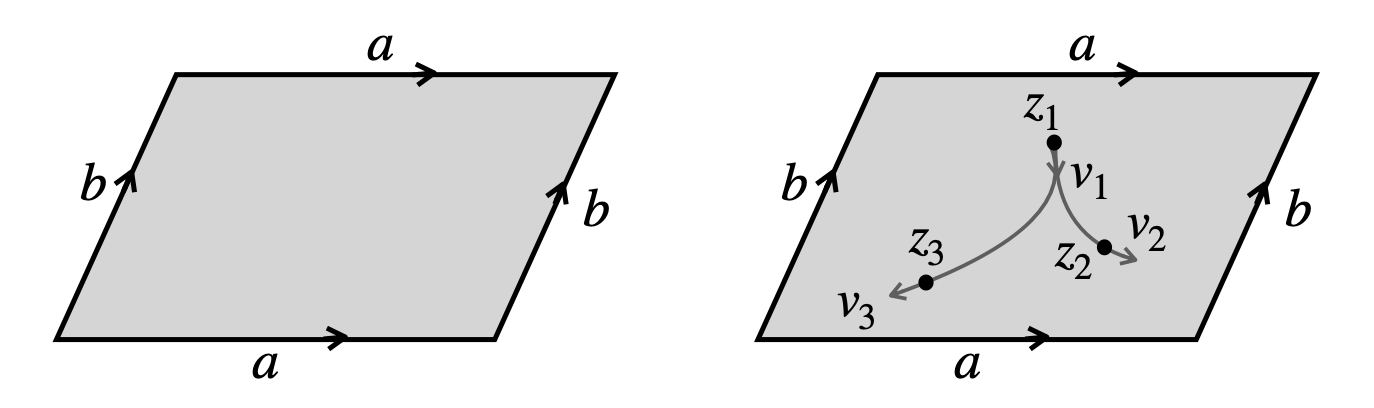}
         \caption{Planar representation of $\T_\tau$. On the right-side, the torus $\T_\tau$ with a defect graph.}
       %  \label{fig:five over x}
     \end{subfigure}
        \caption{Homology cycles and defect graph on the torus.}
        \label{fig:torus}
\end{figure}
 
%  \begin{figure}[h] 
%\centering
%\subfloat[Torus with its two homology cycles.]{\includegraphics[width=.35\textwidth]{torus1.png} 
%}
%\,\hspace{1cm}
%\subfloat[Planar representation of $\T_\tau$. The two cycles are in green and blue so that corresponding sides of the parallelogram are identified. On the right-side, the torus $\T_\tau$ with a defect graph.]{\includegraphics[width=.55\textwidth]{torus2.png} }
%\caption{Homology cycles and defect graph on the torus.}
%\label{fig:torus}
%\end{figure}

\subsection{Other topologies.}
Now we sketch the philosophy of the construction for more complex topologies, and we focus here on the case of complex tori. As soon as the genus of the surface is non zero, the compactified GFF integration measure \eqref{boson} has a so called  instanton component  corresponding to the summation over the De Rham cohomology.  Concretely, the torus $\T_\tau=\C/(\Z+\tau\Z)$ with modulus $\tau=\tau_1+i\tau_2\in\C$ (with $\tau_2>0$) %can be realized as $\T_\tau:=\{s+t\tau\,|\, s,t\in [0,1]\}$ with opposite sides identifies ($x\sim \tau+x$ and $y\tau\sim 1+y\tau$, see Figure \ref{fig:torus}). A 
has a basis of homology given by the cycles $a(t)=t$ and $b(t)=t\tau$ for $t\in[0,1]$. A dual basis of cohomology is given by 
$\omega_1=\dd x-\frac{\tau_1}{\tau_2}\dd y$ and $\omega_2= \frac{1}{\tau_2} \dd y$. For ${\bf k}=(k_1,k_2)\in\Z^2$ we set $\omega_{\bf k}=k_1\omega_1+k_2\omega_2$. Then, the path integral basically corresponds to \eqref{defPIintro} with the Liouville field given by 
\[\phi_g=c+X_g+I_{x_0}(\omega_{\bf k}),\] 
and a further summation over ${\bf k}\in \Z^2$. Again, the problem here is that the curvature term is ill-defined\footnote{Note that we do not assume the curvature is uniformized with $K_g=0$.} as there is an arbitrary choice to be made for the primitive $I_{x_0}(\omega_{\bf k})$, which is multivalued on $\Sigma$, and the resulting integral $\int_{\Sigma } K_gI_{x_0}(\omega_{\bf k})\,{\rm dv}_g $ does depend on this choice. So the curvature term has to be regularized: the cycles $a,b$ are chosen as branch cuts and then the 1-form $\omega_{\bf k}$ is exact on $\Sigma\setminus (a\cup b)$. One needs then to introduce counterterms in a way similar to \eqref{regintro} to finally get the path integral
\begin{equation}
  \langle F\rangle_{\T_\tau,g } 
   :=
C(g)\sum_{{\bf k}\in\Z^{2}} e^{-\frac{1}{4\pi}\| \omega_{\bf k}\|_2^2}\int_{\R/2\pi R\Z}\E\Big[e^{-\frac{1}{2\pi}\langle \dd X_g, \omega_{\bf k}\rangle_2} F(\phi_g)e^{-\frac{i   Q}{4\pi}\int_{\T_\tau }^{\rm reg}  K_g\phi_g\,\dd v_g -\mu  M^g_\beta(\phi_g,\T_\tau)}\Big]\,\dd c.
\end{equation}
On general Riemann surfaces, a considerable part of our work consists in proving that the path integral does not depend on these branch cuts, with the invariance under diffeomorphisms as aconsequence.  One can then define electro-magnetic operators on $\T_\tau$ as was done for the Riemann sphere, introducing  further branch cuts for the magnetic operators (see Figure \ref{fig:torus}). This concludes our short overview of the construction.

\subsection{Main results}
%%%%%%%%%%%%%%%%%%%%%%
Let us expand now in further details the properties of this CFT. First we stress that the correlation functions of interest are expectation  values  of the electro-magnetic fields, the construction of which is summarized above. In the Coulomb gas picture, an electric operator with weight $\alpha=\frac{e}{R}$ creates   an electric charge $e\in \Z$ 
at the insertion point $x\in\Sigma$. A magnetic operator in turn creates  a magnetic charge $m\in \Z$ at its insertion point $z$. Its effect in the loop model is to create a discontinuity $2\pi R m$ in the height field $\phi$ along a line emanating from $z$. This means the field $\phi$ has a winding $2\pi Rm$ around the point $z$. The discontinuity curve has to end to the location of another magnetic charge and if we have charges $m_i$ at points $z_i$ for $i=1,\dots, n,$ we must impose neutrality condition $\sum_im_i=0$. Electro-magnetic operators $V_{\alpha,m}({z, v})$ mix the two effects.
They are the primary fields of the compactified imanginary Liouville theory; they are not scalar as they possess a spin, details can  be found in Section  \ref{sec:correls}.

 In CFTs, the 3 point correlation functions on the Riemann sphere, or structure constants,  play a special role as building blocks of the conformal bootstrap formulae.  In Section \ref{sec:3point} we compute the three-point functions of the electro-magnetic operators  and show them to be given by a generalization of the well known Dotsenko-Fateev integrals. In the case when the magnetic charges are set to $0$, these integrals are given by the imaginary DOZZ formula \cite{Za}. A similar explicit expression for these integrals in the presence of magnetic charges seems not to be known and is an interesting open question.

Finally we complete the CFT axioms by establishing   Segal's gluing axioms  \cite{segal} for the path integral. Segal's axioms  were designed to capture  the conformal bootstrap approach to CFTs using a geometrical perspective (for a nice introduction to mathematicians see \cite{gawedzki}). In practical terms we extend the definition of the path integral \eqref{defintroPI} to surfaces $\Sigma$ with boundary $\partial\Sigma=\cup_{a=1}^mC_a$ consisting of $m$ analytic circles $C_a$. The result is a function $\caA_\Sigma$ (called the amplitude) of the boundary values $\{\phi|_{C_a}\}_{a=1}^m$ and can be viewed as an element of   $\caH^{\otimes m}$ where $\caH$ is a Hilbert space realised as an $L^2$ space on the space of the field configurations $\phi|_C$. The main theorems to prove are the gluing axioms, Propositions \ref{glueampli} and \ref{selfglueampli}, stating that the amplitudes pair in a natural way upon gluing surfaces along boundary circles. This result in turn allows  to evaluate the correlation functions  $\langle \prod_j V_{e_j,m_j}(z_j)\rangle_{\Sigma,g}$ by cutting the surface $\Sigma$ with $n$ marked points $z_j$ along  simple analytic curves $C_a$, $a=1,\dots, 3{\mathfrak{g}}-3+n$, to  building blocks  $B_a$, $a=1,\dots,2{\mathfrak{g}}-2+n$ with each $B_a$ topologically a sphere with $i\in\{0,1,2\}$ marked points and $3-i$ holes thereby reducing the correlation function to the pairing of the amplitudes of the simple building blocks.  The Hilbert space $\caH$  is here concretely given as
\begin{align*}%\label{}
\caH=L^2(H^{-s}(\T)\times\Z,d\mu)
\end{align*}
where $H^{-s}(\T)$ with $s>0$ is a Sobolev space of distributions on the circle coming from 
 the $c$ variable and the restriction of the full plane GFF to a circle and $m\in\Z$ parametrises the winding of $\phi|_C$ around the circle. $m$ is the additional "magnetic quantum number" of the compactified GFF. 
 
 The following theorem summarises the main results of this paper for the   correlation functions of the electro-magnetic operators; we refer to Theorem \ref{limitcorel} and to Propositions \ref{glueampli} and \ref{selfglueampli} for more detailed statements. As explained we suppose $\beta=\frac{m}{R}$ and $Q=\frac{n}{R}$ for $m\in\Z_+$ and $n\in\Z$. Letting $k\in\Z_+$ be the greatest common divisor of $m$ and $m-2n$ set $p=\frac{m}{k}$ and $q=\frac{m-2n}{k}$. Then some calculation gives for the compactification radius $R=\frac{k}{2}\sqrt{pq}$ and the central charge  becomes
 \begin{align}\label{cpq}
{\bf c}=1-6\frac{(p-q)^2}{pq}.
\end{align}
 \begin{theorem}
Let $\beta^2=4\frac{p}{q}$  where $p,q\in\N$ are coprime and let the compactification radius $R=\frac{k}{2}\sqrt{pq}$ with $k\in\N$. Let $(\Sigma,g)$ be a closed Riemannian surface and $(z_j,v_j)\in T\Sigma^{n}$. Assume the electric charges satisfy $\alpha_j:=e_j/R>Q$ and that $m_j\in \Z$ satisfy 
$\sum_{j=1}^{n}m_j=0$. Then
\vskip 1mm
\noindent (i) 
 The correlation functions  $\langle  V_{(\boldsymbol{\alpha},\boldsymbol{m})}({\bf v})\rangle_{\Sigma,g}$ %from \eqref{defPIboth} with $F=1$ 
exist as limits of regularised objects and do not depend on the choice of the cohomology basis nor on the magnetic discontinuity curves. 
\vskip 1mm
\noindent (ii)  $\langle  V_{(\boldsymbol{\alpha},\boldsymbol{m})}({\bf v})\rangle_{\Sigma,g}$
satisfy the axioms of a Conformal Field Theory with central charge \eqref{cpq} and conformal weights $\Delta_{e,m}$ and spin $s$ given by
\begin{align*}%\label{}
%{\bf c}= 1-6Q^2=1-6\frac{(p-q)^2}{pq}, \quad 
\Delta_{e,m}=\frac{e}{2R}(\frac{e}{2R}-Q)+\frac{m^2R^2}{4},\ \ \ s=QRm.
\end{align*}
in the sense that %$\langle \prod_j V_{\alpha_j,m_j}(z_j)\rangle_{\Sigma,g}$ 
they transform in a covariant way under the action of 
diffeomorphisms $g\to\psi^\ast g$, see \eqref{diffinvariance},  
 Weyl scaling  $g\to e^\sigma g$, $g\in C^\infty(\Sigma)$, see \eqref{confan} and 
rotations of the vectors $v_j\to O_jv_j$, $O_j\in SO(2)$, see \eqref{spinrelation}.
\vskip 1mm

\noindent (iii)  $\langle  V_{(\boldsymbol{\alpha},\boldsymbol{m})}({\bf v})\rangle_{\Sigma,g}$ satisfy the gluing axioms  of Segal for conformal field theory under cutting of the surface along analytically parametrised simple curves, see Proposition \ref{glueampli} and \ref{selfglueampli}.
%\begin{enumerate}
%\item  Diffeomorphism action, see \eqref{diffinvariance} 
%\item  Weyl scaling of the metric, see \eqref{confan}
%\item Rotation of the vectors $v_j$, see \eqref{spinrelation}.
%end{enumerate}
\vskip 1mm

\noindent (iv) If $(e_1,e_2)$ is the canonical basis of $\R^2=T\hat{\C}$ and $g_0=\frac{|dz|^2}{\max (|z|,1)^{4}}$ the choice of metric on the Riemann sphere, 
then the $3$-point function on the Riemann sphere is given by 
\begin{align*}
& \langle   V_{(\alpha_1,m_1)}(0,e_1)V_{(\alpha_2,m_2)}(1,e_1)V_{(\alpha_3,m_3)}(\infty,e_1)    \rangle_{\hat \C,g_0}=\\
& 2\pi R  \frac{(-\mu)^\ell}{\ell!}\int_{\C^\ell}\prod_{j=1}^\ell x_j^{\Delta_1}\bar{x}_j^{\bar\Delta_1}(1-x_j)^{ \Delta_2}
(1-\bar x_j)^{\bar\Delta_2}\prod_{j<j'}|x_j-x_{j'}|^{\beta^2}\dd x_1\dots\dd x_\ell.
\end{align*}
where 
\[\ell:=\frac{ 2 Q-\sum_j\alpha_j}{\beta}\in \N,\quad \Delta_j=\beta\alpha_j+\frac{kpm_j}{2}, \quad \bar\Delta_j=\beta\alpha_j-\frac{kpm_j}{2}.\]
\end{theorem}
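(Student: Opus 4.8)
The plan is to obtain each of the four assertions by assembling the detailed results developed in the body of the paper, so I describe the strategy part by part. For (i), existence is established by isolating the only non-Gaussian ingredient in the defining expression \eqref{defPIboth}, namely the imaginary GMC measure $M^g_\beta$: its convergence and its distributional regularity of order two for $\beta^2\in(0,2)$ are taken from the imaginary chaos theory of \cite{LRV15,LRV19} (recalled in Section \ref{IGMC}), and the finiteness of $\E[e^{-\mu M^g_\beta(\phi_g,\Sigma)}\cdots]$ follows from the exponential moment bounds there once one has Girsanov-shifted the field by $u_{\bf x}$ and by the winding primitive $I_{x_0}(\nu_{{\bf z},\boldsymbol{m}})$. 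The merging of electric insertions at ${\bf x}$ into genuine electro-magnetic operators at ${\bf z}$ is the content of Theorem \ref{limitcorel}: I would show that the limit ${\bf x}\to{\bf z}$ exists when each $x_j$ approaches $z_j$ along the prescribed frame direction $v_j$, the directional prescription being forced by the monodromy $2\pi m_jR$ of $I_{x_0}(\nu_{{\bf z},\boldsymbol{m}})$ about $z_j$. Independence of the auxiliary data reduces to the graph-independence of the regularized curvature proved in Subsection \ref{curvature_mp}: the counterterms $\kappa(\xi_p)$ in \eqref{regintro} are pinned down by Gauss-Bonnet so that any deformation or re-routing of the defect graph alters the regularized curvature integral only by an integer multiple of $2\pi R$, which is annihilated by the $\dd c$-integration over $\R/2\pi R\Z$; the same mechanism, applied to the harmonic instantons together with the $\mathrm{SL}(2{\mathfrak g},\Z)$-invariance of the lattice sum $\sum_{\bf k}e^{-\frac{1}{4\pi}\|\omega_{\bf k}\|_2^2}(\cdots)$, gives independence of the cohomology basis.

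For (ii), diffeomorphism invariance \eqref{diffinvariance} is essentially a corollary of (i): since the regularized path integral has been written in coordinate-free terms and shown independent of the branch cuts, the base point $x_0$ and the defect graph, a pullback by $\psi$ merely relabels these choices. The Weyl covariance \eqref{confan} is obtained by computing the variation of each ingredient --- the metric-dependent constant $C(g)$, the curvature coupling and the GMC --- under $g\to e^\sigma g$; the curvature term produces a Polyakov-type anomaly whose coefficient shifts the compactified-GFF central charge $1$ to ${\bf c}=1-6Q^2=1-6(p-q)^2/(pq)$ by \eqref{defQ} and the quantization constraints, while the local rescaling of each vertex operator yields the weight $\Delta_{e,m}$. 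The spin relation \eqref{spinrelation}, $s=QRm$, I would read off from the dependence of the merged operator on the frame vector: a rotation $v_j\to O_jv_j$ by angle $\theta$ rotates the direction of approach in the limit defining $V_{(\boldsymbol{\alpha},\boldsymbol{m})}$ and, through the curvature counterterm attached to the arc leaving $z_j$, multiplies the correlation by $e^{is\theta}$.

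For (iii), I would extend the construction to surfaces with analytically parametrized boundary circles, realizing the amplitude $\caA_\Sigma$ as an element of $\caH^{\otimes m}$ with $\caH=L^2(H^{-s}(\T)\times\Z,d\mu)$, the factor $\Z$ recording the magnetic winding sector. The gluing identities of Propositions \ref{glueampli} and \ref{selfglueampli} then follow from the Markov property of the GFF --- a glued surface decomposes into the independent fields on the pieces conditioned on their common boundary trace plus a harmonic interpolation --- supplemented by the matching decomposition of the instanton sum and of the branch-cut data across the gluing curve, so that the $\caH$-pairing reproduces exactly the boundary integration.

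Finally, for (iv) I would compute directly: expanding $e^{-\mu M^g_\beta}$ in powers of $\mu$ and integrating the zero mode $c$ over $\R/2\pi R\Z$ combines the curvature phase $-2iQc$ (since $\frac{1}{4\pi}\int K_{g_0}\,\dd{\rm v}_{g_0}=2$ on the sphere), the electric phases $e^{ic\sum_j\alpha_j}$ and the $\ell$ screening phases $e^{i\beta c}$ into the neutrality constraint selecting exactly $\ell=(2Q-\sum_j\alpha_j)/\beta$ and producing the prefactor $2\pi R\,(-\mu)^\ell/\ell!$; a Wick computation of the resulting moment of $M_\beta(X_{g_0},\cdot)$, with the explicit Green function of $g_0$ and the winding contributions of $I_{x_0}(\nu)$, then collapses the correlation to the $\ell$-fold Dotsenko-Fateev integral, the magnetic charges entering precisely through the holomorphic and antiholomorphic exponents $\Delta_j=\beta\alpha_j+kpm_j/2$ and $\bar\Delta_j=\beta\alpha_j-kpm_j/2$, whose difference reflects the spin found in (ii). I expect the main obstacle to be the graph-independence of the regularized curvature underlying (i): choosing $\kappa(\xi_p)$ so that the outcome is simultaneously independent of the defect graph, of the cohomology basis and of the frame, while remaining compatible with the $2\pi R$-periodicity imposed by the zero mode, is the delicate geometric heart of the construction and is exactly what forces the rationality of $\beta^2$ and of ${\bf c}$.
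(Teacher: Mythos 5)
Your overall architecture coincides with the paper's: the theorem is indeed proved by assembling Theorems \ref{limitcorel} and \ref{limitcorelmixed} for (i)--(ii), Propositions \ref{glueampli} and \ref{selfglueampli} for (iii), and the Section \ref{sec:3point} computation for (iv), and your part (iv) in particular reproduces the paper's charge-selection and Wick computation faithfully. There is, however, one concrete misstatement of mechanism in (i). Independence of the defect graph is \emph{not} obtained by the $\dd c$-integration absorbing multiples of $2\pi R$: Lemma \ref{inv_par_graphe} shows the regularized curvature integral is \emph{exactly} unchanged under the S-, A- and D-moves, the counterterm $-2\kappa(\xi_p)\int_{\xi_p}k_g\,\dd\ell_g$ cancelling the Gauss--Bonnet variation of the bulk term identically. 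The $2\pi R\Z$-type ambiguity you invoke arises elsewhere, namely under a change of geometric symplectic basis, where the regularized integral shifts by an element of $8\pi^2R\Z$ (Lemma \ref{independence_basis}); and it is killed not by the zero-mode integral but by the quantization $Q\in\tfrac1R\Z$, which makes $e^{-\frac{iQ}{4\pi}\cdot 8\pi^2Rn}=1$. The $c$-integral plays the distinct role of enforcing $2\pi R$-periodicity (whence $2Q,\beta\in\tfrac1R\Z$) and the neutrality selection.

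A second genuine gap is in your Weyl-covariance plan for (ii): ``computing the variation of each ingredient'' stalls at the zero mode. After the imaginary Girsanov transform the field is shifted by $-\tfrac{iQ}{2}(\rho-m_g(\rho))$, and in real Liouville theory the residual $m_g(\rho)$-factors are absorbed by translating $c$; here that translation is imaginary and meaningless on $\R/2\pi R\Z$. The paper replaces it by a Fourier argument: expanding the potential in powers of $\mu$, the $c$-integral retains at most the single term with $\tfrac{n}{R}+\sum_j\alpha_j-\chi(\Sigma)Q+p\beta=0$, and for that term all $m_g(\rho)$-factors multiply to one, cf.\ \eqref{0integral}; without this device the anomaly computation does not close. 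Relatedly, every Girsanov step with imaginary charges requires the analytic-continuation argument in the charges (this is why the test functionals are polynomial in linear observables of the GFF, and why the order-two distributional control of Lemma \ref{expmomentunif} is needed), which your proposal passes over. Finally, for (iii) your GFF-Markov sketch is the correct skeleton for Proposition \ref{glueampli}, but self-gluing is a partial trace that only makes sense once the amplitudes are shown to lie in $\mc{H}^{\otimes\mathfrak{b}}$ (proved in the paper by doubling, Lemma \ref{amplitudeL^2}) and are realized as compositions of Hilbert--Schmidt operators; moreover the ``matching of instanton sums across the cut'' hides the key mechanism whereby the harmonic shifts $P_n$, with boundary value $2\pi Rn$ on the glued circle and $\dd P_n=\omega^c_{\bf n}+\dd f_{\bf n}$, regenerate via Girsanov exactly the relative-cohomology classes missing on the cut surface.
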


\noindent{\bf Remarks.} 

\vskip 2mm
\noindent{\bf 1.}
Segal's axioms give  access to the spectrum of the CFT by considering the generator of the semigroup of annuli (see \cite{segal} or \cite{GKRV2} in the case of Liouville theory) i.e. the Hamiltonian operator of the CFT.  This operator is not self-adjoint but it has discrete spectrum and non-trivial Jordan blocks. This will be studied in a forthcoming work. 
\vskip 2mm

\noindent{\bf 2.}  Let $p=2$. Then the central charge of our construction coincides with that of the $(2,q)$- minimal models \cite{BPZ} and the set of  its spectral weights contains  the weights of the degenerate Virasoro representations, namely the weights    $\Delta_{e,0}$ with $e/R\in \sqrt{\frac{p}{q}}\Z$.  For the minimal radius $R=\frac{1}{2}\sqrt{pq}$ only the degenerate weights are present in our case. It has been argued in physics by \cite{KM} that the minimal model CFT whose spectrum consists of the degenerate weights with  $e/R=r \sqrt{\frac{p}{q}}$ with $r=1,\dots,q-1$ can be recovered from this compactified imaginary Liouville theory by a so-called BRST reduction, providing this way a path integral construction of the minimal models. This question, raised to us by N. Seiberg, was the original motivation for this work, and we plan to understand this aspect in future work.
\vskip 2mm

\noindent{\bf 3.}  There is an important open question in physics to construct a CFT dubbed Timelike Liouville Theory or Imaginary Liouville Theory \cite{Schomerus, Rib1,Za,HaMaWi}. This is a non compact CFT with continuous spectrum and structure constants given by the inverse of the standard DOZZ formula. Even though there may be some links with such a CFT, our path integral does not coincide with it (among other reasons, it has discrete spectrum). We choose to call our path integral CILT because our construction is based on the Liouville action with imaginary parameters, yet on the compactified boson so that it is not mistaken with the putative  Timelike Liouville Theory.
 \vskip 2mm

\noindent{\bf 4.} 
 The compactified boson satisfies the electro-magnetic duality, meaning that it is invariant under $R\leftrightarrow 1/R$ up to swapping the roles of electric/magnetic operators. If we think of the relation of this model to the scaling limit of loop models, then this duality echoes the duality of SLE curves \cite{dubSLE}. It is not clear to us how to interpret this duality at the level of the path integral, mainly because it requires to construct the path integral for $\beta^2>2$. 
 
 \noindent{\bf 5}
We can also define correlation function in the case where the central charge is not rational; the main condition is that $\sum_j\alpha_j\in \chi(\Sigma)Q+\frac{1}{R}\Z$. The invariance by diffeomorphisms is not as general as in the rational case, so it is not completely clear that this model fits in the full  CFT picture. On the other hand the spectral analysis and representation theory for this model is possible and seems very interesting. This is the reason we mention this case in Theorem \ref{irrational_case}. 
 
  % \setminus \mc{D}

 \vskip 2mm
\noindent\textbf{Acknowledgements.}  A. Kupiainen acknowledges the support of the ERC Advanced Grant 741487 and of Academy of Finland. R. Rhodes is partially supported by the Institut Universitaire de France (IUF). R. Rhodes  acknowledges the support of the ANR-21-CE40-0003. The authors wish to thank  L. Eberhardt, J. Jacobsen, E. Peltola, S. Ribault, R. Santachiara, J. Teschner, B. Wu  for fruitful discussions about this work. Special thanks are addressed to Nathan Seiberg; this work originates from discussions with him about the importance of constructing a path integral for minimal models.

%%%%%%%%%%%%%%%%%%%%%%%%%%%%%%%%%%%%%%%%%%%%%%%%%%%%%%%%%%%%%%%%%%%%%%%%%%%%%%%%%%
%%%%%%%%%%%%%%%%%%%%%%%%%%%%%%%%%%%%%%%%%%%%%%%%%%%%%%%%%%%%%%%%%%%%%%%%%%%%%%%%%%
\section{Geometric preliminaries}
%%%%%%%%%%%%%%%%%%%%%%%%%%%%%%%%%%%%%%%%%%%%%%%%%%%%%%%%%%%%%%%%%%%%%%%%%%%%%%%%%%
%%%%%%%%%%%%%%%%%%%%%%%%%%%%%%%%%%%%%%%%%%%%%%%%%%%%%%%%%%%%%%%%%%%%%%%%%%%%%%%%%%
 
\textbf{Notations:} We shall denote $\dot{u}(t)=\pl_tu(t)$ the derivative of $C^1$ curves $u:[0,1]\to \Sigma$ with values in a surface. If $(\Sigma,g)$ is a closed, or compact with boundary, oriented surface, we denote $\cjg f,f'\cjd_2=\int f\bar{f}'
{\rm dv}_g$ the $L^2$ pairing with respect to the Riemannian measure ${\rm v}_g$, we denote $\cjg u,f\cjd$ the distribuional pairing if $u\in \mc{D}'(\Sigma)$ is a distribution and $f\in C_c^\infty(\Sigma^\circ,\R)$ compactly supported in the interior $\Sigma^\circ$ 
of $\Sigma$, with the convention that if $u\in C^\infty(\Sigma)$ then $\cjg u,f\cjd=\cjg u,f\cjd_2$. 
 
\subsection{Closed surfaces}
  %%%%%%%%%%%%%%%%
Let $\Sigma$ be an oriented compact surface of genus ${\mathfrak{g}}$ and let $g$ be a smooth Riemannian metric. 
The metric $g$ induces a conformal class $[g]:=\{ e^{\varphi}g; \varphi\in C^\infty(\Sigma)\}$, which is equivalent to a complex structure, i.e. a field $J\in C^\infty(\Sigma,{\rm End}(T\Sigma))$ of endomorphisms of the tangent bundle such that $J^2=-{\rm Id}$. There are local charts $\omega_j:U_j\to \D$ such that $\omega_j\circ \omega_k^{-1}$ are holomorphic functions and $\omega_j^*g=e^{\rho_j}|dz|^2$ on $\D$, where $z=x+iy$ is the usual complex coordinate on $\C$ and $\rho_j\in C^\infty(\D)$. The complex structure in the holomorphic charts $\omega_j$ is given by $J\pl_x=\pl_y$ and $J\pl_y=-\pl_x$. The Hodge operator $*:T^*\Sigma\to T^*\Sigma$ is the dual to $J$, it satisfies $*dx=dy$ and $*dy=-dx$ in local holomorphic charts. 
The pair $(\Sigma,J)$ (or equivalently $(\Sigma,[g])$) is called a \emph{closed Riemann surface}. The orientation is given by any non-vanishing $2$-form $w_\Sigma\in C^\infty(\Sigma;\Lambda^2T^*\Sigma)$ so that $(\omega_j^{-1})^*w_\Sigma=e^{f_j} dx\wedge dy$ in $\D$ for some function $f_j$.

The Gauss-Bonnet formula reads  
\begin{equation}\label{GB} 
\int_{\Sigma}K_g{\rm dv}_g=4\pi\chi(\Sigma)
\end{equation}
where $\chi(\Sigma)=(2-2{\mathfrak{g}})$ is the Euler characteristic, $K_g$ the scalar curvature of $g$ and ${\rm dv}_g$ the Riemannian measure.  The uniformisation theorem says that in the conformal class of $g$, there exists a unique metric $g_0=e^{\rho_0}g$ of scalar curvature $K_{g_0}=  -2$ if ${\mathfrak{g}}\geq 2$, $K_{g_0}=  0$ if ${\mathfrak{g}}=1$ or $K_{g_0}=  2$ if ${\mathfrak{g}}=0$. For a metric $\hat{g}=e^{\rho}g$, one has the relation 
\[ K_{\hat{g}}=e^{-\rho}(\Delta_g \rho+K_{g})\]
where $\Delta_g=d^*d$ is the non-negative Laplacian (here $d$ is exterior derivative and $d^*$ its adjoint). 
Let us recall a result of Aubin \cite{Aubin} on prescribing the curvature (it will be useful later).
\begin{lemma}[Aubin]\label{Aubin}
Let $(\Sigma,g)$ be  a closed Riemannian surface with genus ${\mathfrak{g}}\geq 2$. Let $f\in C^\infty(\Sigma)$ be a non-positive function such that $\int_\Sigma f \dd {\rm v}_{g}<0$. Then there exists a conformal metric $\hat{g}:=e^{\rho}g$ for some $\rho\in C^\infty(\Sigma)$ such that $K_{\hat{g}}=f$.
\end{lemma}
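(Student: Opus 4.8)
My plan is to convert the prescription $K_{\hat g}=f$ into a single semilinear elliptic equation and solve it by the direct method of the calculus of variations. Writing $\hat g=e^{\rho}g$ and using the transformation rule $K_{\hat g}=e^{-\rho}(\Delta_g\rho+K_g)$ recalled above, the requirement $K_{\hat g}=f$ is equivalent to
\[
\Delta_g\rho+K_g=f\,e^{\rho}\qquad\text{on }\Sigma,
\]
where $\Delta_g=d^*d\geq0$. By the uniformisation theorem I may first replace $g$ by the conformal metric of constant curvature $K_{g_0}=-2$ (this changes neither $f$ nor the problem), so that $K_g$ may be assumed constant; this is convenient but not essential. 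I would then introduce, on $H^1(\Sigma)$, the functional
\[
J(\rho):=\tfrac12\int_\Sigma|\dd\rho|_g^2\,\dd{\rm v}_g+\int_\Sigma K_g\rho\,\dd{\rm v}_g-\int_\Sigma f\,e^{\rho}\,\dd{\rm v}_g,
\]
whose Euler--Lagrange equation is precisely the PDE above. Since $f\leq0$, the last term equals $\int_\Sigma(-f)e^{\rho}\,\dd{\rm v}_g$ with $-f\geq0$, so $J$ is \emph{convex}; and in two dimensions the Moser--Trudinger inequality guarantees $\int_\Sigma e^{\rho}\,\dd{\rm v}_g<\infty$ for every $\rho\in H^1(\Sigma)$, making $J$ a well-defined $C^1$ convex functional.

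The heart of the matter is coercivity, and this is exactly where the genus hypothesis enters. I would decompose $\rho=\bar\rho+\rho_0$ into its mean $\bar\rho$ and its zero-mean part $\rho_0$. By Gauss--Bonnet $\int_\Sigma K_g\rho\,\dd{\rm v}_g=4\pi\chi(\Sigma)\bar\rho+\int_\Sigma K_g\rho_0\,\dd{\rm v}_g$, and the nonlinear term is $e^{\bar\rho}\int_\Sigma(-f)e^{\rho_0}\,\dd{\rm v}_g$. Since ${\mathfrak{g}}\geq2$ forces $4\pi\chi(\Sigma)=\int_\Sigma K_g\,\dd{\rm v}_g<0$, minimising the explicit $\bar\rho$-dependence $4\pi\chi(\Sigma)\bar\rho+A\,e^{\bar\rho}$, where $A:=\int_\Sigma(-f)e^{\rho_0}\,\dd{\rm v}_g>0$ (positivity from $f\leq0$, $f\not\equiv0$), produces a lower bound of the form $4\pi|\chi(\Sigma)|\log A+\text{const}$. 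I would then bound $\log A$ from below by Jensen's inequality applied to the probability measure $\dd\nu:=(-f)\,\dd{\rm v}_g/\int_\Sigma(-f)\,\dd{\rm v}_g$, which gives $\log A\geq \text{const}-C\|\dd\rho_0\|_2$; together with the Poincaré bound $|\int_\Sigma K_g\rho_0\,\dd{\rm v}_g|\leq C\|\dd\rho_0\|_2$ this leaves $J(\rho)\geq\tfrac12\|\dd\rho_0\|_2^2-C'\|\dd\rho_0\|_2-C''\to+\infty$. (Had $\chi(\Sigma)\geq0$ the term $4\pi\chi(\Sigma)\bar\rho$ would instead drive $J\to-\infty$ as $\bar\rho\to-\infty$, which is precisely why the statement requires ${\mathfrak{g}}\geq2$.)

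With coercivity established, a minimising sequence is bounded in $H^1(\Sigma)$; being convex and strongly continuous, $J$ is weakly lower semicontinuous, so a minimiser $\rho_*$ exists and is a weak solution of $\Delta_g\rho_*+K_g=f\,e^{\rho_*}$. I would then bootstrap: Moser--Trudinger gives $f\,e^{\rho_*}\in L^p$ for all $p<\infty$, hence $\rho_*\in W^{2,p}$ and $\rho_*\in C^{1,\alpha}$, whence $f\,e^{\rho_*}\in C^{1,\alpha}$ and Schauder estimates upgrade $\rho_*$ to $C^\infty(\Sigma)$; then $\hat g=e^{\rho_*}g$ is a smooth conformal metric with $K_{\hat g}=f$. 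The main obstacle is the coercivity estimate, specifically handling the coupling between the constant mode $\bar\rho$ and the oscillation $\rho_0$; the essential analytic input taming the exponential nonlinearity is the two-dimensional Moser--Trudinger inequality, while the sign $\chi(\Sigma)<0$ is what makes the functional bounded below.
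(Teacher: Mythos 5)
The paper does not actually prove this lemma: it is quoted from Aubin \cite{Aubin} and used as a black box, so there is no in-paper argument to compare against. Your proof is the classical Berger/Kazdan--Warner variational route for prescribing curvature when $\chi(\Sigma)<0$, and it is correct in substance: the reduction to $\Delta_g\rho+K_g=fe^{\rho}$ matches the paper's conventions ($K_{\hat g}=e^{-\rho}(\Delta_g\rho+K_g)$ for $\hat g=e^{\rho}g$, and $\int_\Sigma K_g\,\dd{\rm v}_g=4\pi\chi(\Sigma)$), the functional $J$ is convex precisely because $-f\geq0$, the splitting $\rho=\bar\rho+\rho_0$ with minimisation in $\bar\rho$ plus Jensen for the measure $(-f)\,\dd{\rm v}_g/\int_\Sigma(-f)\,\dd{\rm v}_g$ is the standard and correct way to exploit $\chi(\Sigma)<0$, and the $L^p\to W^{2,p}\to C^{1,\alpha}\to$ Schauder bootstrap at the end is fine. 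Here the hypothesis $f\leq0$ buys you convexity (hence painless weak lower semicontinuity and even uniqueness of the solution), which is why no Moser--Trudinger borderline analysis of the type needed for sign-changing $f$ or for $\chi(\Sigma)\geq0$ ever appears.

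One step is stated slightly too quickly. Your coercivity bound $J(\rho)\geq\tfrac12\|\dd\rho_0\|_2^2-C'\|\dd\rho_0\|_2-C''$ is obtained \emph{after} minimising out $\bar\rho$, so by itself it only bounds $\|\dd\rho_0\|_2$ along a minimising sequence; it does not yet show the sequence is bounded in $H^1(\Sigma)$, since $\bar\rho_n$ could a priori escape. The gap is filled by the estimates you already have: once $\|\dd\rho_{0,n}\|_2$ is bounded, Jensen gives $A_n:=\int_\Sigma(-f)e^{\rho_{0,n}}\,\dd{\rm v}_g\geq a_0>0$ and Moser--Trudinger gives $A_n\leq C$, and then the scalar function $\bar\rho\mapsto 4\pi\chi(\Sigma)\bar\rho+A_ne^{\bar\rho}$ tends to $+\infty$ as $\bar\rho\to+\infty$ (exponential term) and as $\bar\rho\to-\infty$ (linear term, using $\chi(\Sigma)<0$), so $J(\rho_n)\leq C$ forces $\bar\rho_n$ to stay in a compact interval. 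With that one line added, the direct method goes through and the proof is complete.
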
 
 
\subsection{Surfaces with analytic parametrized boundary.}\label{subsub:boundary} Let $\T=\{e^{i\theta}\in \C\,|\, \theta\in [0,2\pi]\}$ be the unit circle.  A compact Riemann surface $\Sigma$ with parametrized analytic boundary $\partial\Sigma=\sqcup_{j=1}^{{\mathfrak{b}}}\pl_j\Sigma$ is a 
compact oriented surface with smooth boundary with a family of charts 
 $\omega_j:U_j\to \omega_j(U_j)\subset \C$ for $j=1,\dots,j_0$  and smooth diffeomorphisms $\zeta_j:\T \to \pl_j\Sigma$ where 
 \begin{itemize}
 \item $\cup_j U_j$ is an open covering of $\Sigma$ with $U_j\cap \pl_j\Sigma\not=\emptyset$ if and only if $j\in [1,{\mathfrak{b}}]$
 \item $\omega_j(U_j)=\D$ if $j>{\mathfrak{b}}$ and 
 $\omega_j(U_j)=\mathbb{A}_{\delta}:=\{z\in \C\,|\, |z|\in (\delta,1] \}$ for some $\delta<1$ if $j\leq {\mathfrak{b}}$ and  $\omega_j(\pl_j\Sigma)=\T$
 \item $\omega_k \circ \omega_j^{-1}$ are holomorphic maps in the interior of the domain  $\omega_j(U_j\cap U_k)$ 
 \item $\omega_j\circ \zeta_j:\T \to \T$ is real analytic, i.e. it extends holomorphically in a neighborhood of $\T$.
 \end{itemize}
The charts induce  a complex structure $J$ as for the closed case. A Riemannian metric $g$ is \emph{compatible} with 
$J$ if $(\omega_j^{-1})^*g=e^{\rho_j}|dz|^2$ for some smooth function $\rho_j$ on $\omega_j(U_j)$.
The boundary circles $\pl_j\Sigma$ inherit an orientation from the orientation of $\Sigma$, simply by taking 
the $1$-form $-\iota_{\pl_j\Sigma}^*(i_{\nu}\omega_\Sigma)$  where $i_\nu$ is the interior product with non-vanishing interior pointing vector field $\nu$ to $\Sigma$ and $\iota_{\pl_j\Sigma}:\pl_j\Sigma \to \Sigma$ is the natural inclusion. 
In the chart given by the annulus $\mathbb{A}_\delta$, the orientation is then given by $\dd\theta$ (i.e. the counterclockwise orientation) on the unit circle parametrized by $(e^{i\theta})_{\theta\in [0,2\pi]}$. 
We say that the boundary $\pl_j\Sigma$ is \emph{outgoing} if the orientation $(\zeta_j)_*(\dd\theta)$ is the orientation of $\pl_j\Sigma$ induced by that of $\Sigma$ (we also say that the orientation of $\pl_j\Sigma$ is positive) otherwise the parametrized boundary $\pl_j\Sigma$ is called \emph{incoming}. We define  $\varsigma_j\in \{\pm 1\}$ with $\varsigma_j=-1$ if $\pl_j\Sigma$ outgoing and $\varsigma_j=1$ if $\pl_j\Sigma$ is incoming. Composing $\zeta_j$ by the inversion $o(z):=1/z$ reverses orientation and transform an outgoing boundary into an incoming one and conversely.
Notice that there is $\delta<1$ such that $\zeta_j%=\omega_j^{-1}\circ \omega_j\circ \zeta_j
$ is holomorphic from an annular neighborhood $\mathbb{A}_{\delta}$  of $\T$ (resp.  $\mathbb{A}_{\delta}^{-1}:=\delta^{-1}\mathbb{A}_{\delta}$)   
 to a neighborhood $U'_j$ of $\pl_j\Sigma$ if $\omega_j\circ \zeta_j|_{\T}$ preserves orientation, i.e. $\pl_j\Sigma$ is outgoing (resp. reverses orientation, i.e. $\pl_j\Sigma$ is incoming). Up to adding $U'_j$ to the set of charts and replacing $U_j$ by $U_j\cap \Sigma^\circ$ for $j=1,\dots,{\mathfrak{b}}$, the new set of charts 
 $((U'_j, \omega'_j), (U_j,\omega_j))$ with $\omega_j':=\zeta_j^{-1}\circ o^{(1+\varsigma_j)/2}$ produces the same complex structure on $\Sigma$ as the original one, and  the parametrisation of the boundary component $\pl_j\Sigma$ is given by $(\omega'_j)^{-1}|_\T$. Without loss of generality, we can and will thus assume from now that, when choosing our set of charts $(U_j,\omega_j)$ above, the boundary parametrisations for $j=1,\dots,\mathfrak{b}$ are given by 
 \[\zeta_j :\T \to \pl_j \Sigma ,\quad \zeta_j(e^{i\theta}):=\left\{\begin{array}{ll}
 \omega_{j}^{-1}(e^{i\theta}) & \textrm{ if }\varsigma_j=-1\\
\omega_{j}^{-1}(e^{-i\theta}) &  \textrm{ if }\varsigma_j=1
 \end{array}\right..
 \]
The metric $g$ is said \emph{admissible} if  it is compatible with the complex structure and for $(U_j,\omega_j)$ with $j=1,\dots,\mathfrak{b}$, we have 
\[ (\omega_j^{-1})^*g=|dz|^2/|z|^2.\]
on $\omega_j(U_j)$. In that case the boundary is geodesic for $g$, with length $2\pi$, and the metric $g$ has curvature $K_g=0$ near $\pl \Sigma$.

\subsection{Gluing and cutting surfaces}\label{sub:gluing}
%%%%%%%%%%%%%%%%%%%%%%%%%%%%
Let $\Sigma$ be a Riemann surface, not necessarily connected, with $\mathfrak{b}$ parametrized analytic boundary connected components $\zeta_j:\T\to \pl_j\Sigma$. Let $\omega_j:U_j\to \mathbb{A}_\delta$ be the charts near $\pl_j\Sigma$ with $\omega_j^{-1}|_{\T}=\zeta_j$ as above.
If $\pl_j\Sigma$ is outgoing and $\pl_k\Sigma$ is incoming, we can glue $\pl_j\Sigma$ to $\pl_k\Sigma$ to obtain a new Riemann surface $\Sigma^{\#}$ with $({\mathfrak{b}}-2)$-boundary components: this is done by identifying  
$\zeta_j(e^{i\theta})\simeq \zeta_k(e^{i\theta})$. A neighborhood in $\Sigma^{\#}$ of the identified circle $\pl_j\Sigma \simeq \mc{C}_k$ is given by $(U_j\cup U_k)/\sim $ where $\sim$ means the identification $\pl_j\Sigma \sim\pl_k\Sigma$,  and  
\[\omega_{jk}: z\in U_j \mapsto \omega_j(z) \in \mathbb{A}_\delta, \quad z\in U_k\mapsto \frac{1}{\omega_k(z)}\in \mathbb{A}^{-1}_\delta.\]
produces a chart. 
 If $\pl_j\Sigma$ and $\pl_k\Sigma$ belong to different connected components of $\Sigma$, $b_0(\Sigma^{\#})=b_0(\Sigma)-1$  if $b_0$ denotes the $0$-th Betti number; otherwise $b_0(\Sigma^{\#})=b_0(\Sigma)$.
If $\Sigma$ is equipped with an admissible metric, then $g$ induces a smooth metric on the glued Riemann surface $\Sigma^{\#}$ compatible with the complex structure. 

Conversely, starting from a Riemann surface $(\Sigma,J)$ with ${\mathfrak{b}}\geq 0$ boundary circles $\pl_1\Sigma,\dots,\pl_{{\mathfrak{b}}}\Sigma$ and choosing a simple analytic curve $\mc{C}$ in the interior $\Sigma^\circ$ of $\Sigma$
and a chart $\omega:V\mapsto \{|z|\in [\delta,\delta^{-1}]\}\subset \C$ for some $\delta<1$ with $\omega(\mc{C})=\T$, there is a natural Riemann surface $\bbar{\Sigma}_{\mc{C}}$ obtained by compactifying 
 $\Sigma_{\mc{C}}:=\Sigma\setminus \mc{C}$ into a Riemann surface with  ${\mathfrak{b}}+2$ boundary circles by adding two copies 
 $\pl_{{\mathfrak{b}}+1}\bbar{\Sigma}_{\mc{C}}= \mc{C}$, $\pl_{{\mathfrak{b}}+1}\bbar{\Sigma}_{\mc{C}}=\mc{C}$ of $\mc{C}$ to $\Sigma_{\mc{C}}$ using the chart $\omega$ on respectively $\omega^{-1}(\{|z|\leq 1\})$ and $\omega^{-1}(\{|z|\geq 1\})$ (one incoming, one outgoing). Using the gluing procedure of $\pl_{{\mathfrak{b}}+1}\bbar{\Sigma}_{\mc{C}}$ with $\pl_{{\mathfrak{b}}+2}\bbar{\Sigma}_{\mc{C}}$ in $\bbar{\Sigma}_{\mc{C}}$ just described above using the parametrizations  $\zeta_1:=\omega^{-1}|_{\T}$ and $\zeta_2:=\omega^{-1}|_{\T}$, we recover $(\Sigma,J)$.
    
%\subsubsection{Admissible surfaces}\label{admissiblesurfaces} A Riemann surface with analytically parametrized boundary $\boldsymbol{\zeta}:\T \to (\pl\Sigma)^b$, marked points $x_j\in \Sigma^\circ$ in the interior and a metric $g$ that is \emph{conformally admissible with $(J,\boldsymbol{\zeta})$} is called an {\it admissible} surface. We denote it by    
% $(\Sigma,g, {\bf x},\boldsymbol{\zeta})$ if ${\bf x}:=(x_1,\dots,x_m)$ are the marked points and $\boldsymbol{\zeta}:=(\zeta_1,\dots,\zeta_b)$ are the parametrizations of the boundary connected components. We shall also denote the outgoing boundaries by $\mc{C}_j^+$ and the incoming ones by $\mc{C}_j^-$.
%Notice that the parametrization $\zeta_j$ is entirely determined by the point $p_j$ and the orientation choice $\sigma_j$ once $g$ has been fixed as above, since it is a parametrization by arclength with 
%respect to the metric $g$.
%
%Finally, if $(\Sigma,g, {\bf x},\boldsymbol{\zeta})$ and $(\Sigma',g', {\bf x}',\boldsymbol{\zeta}')$ are two connected admissible surfaces with at least one incoming boundary circle $\mc{C}_j$ for $\Sigma$ and one outgoing boundary circle $\mc{C}_k'$ for $\Sigma'$, 
%then the surface $\Sigma\#_{jk}\Sigma':=(\Sigma\sqcup \Sigma')_{jk}$ equipped with the glued metric denoted $g\# g'$ yields a new admissible surface by simply taking 
%$(\Sigma\#_{jk}\Sigma',g\# g',({\bf x},{\bf x}'),(\boldsymbol{\zeta},\boldsymbol{\zeta}'))$.

%\subsection{Doubled surface}\label{sub:doubled} 
%%%%%%%%%%%%%%%%%%%%%%%%%%%%%%%%%%%%%%
%\textcolor{red}{(TODO)}

\subsection{Determinant of Laplacians} \label{detoflap}
%%%%%%%%%%%%%%%%%%%%%%%%%%%%%%%%%%%%%

For a Riemannian metric $g$ on a connected oriented compact surface $\Sigma$, the non-negative Laplacian $\Delta_g=\dd^*\dd$ has discrete spectrum
${\rm Sp}(\Delta_g)=(\la_j)_{j\in \N_0}$ with $\la_0=0$ and $\la_j\to +\infty$. We can define the determinant of $\Delta_g$ by 
\[ {\det} '(\Delta_g)=\exp(-\pl_s\zeta(s)|_{s=0})\]
where $\zeta(s):=\sum_{j=1}^\infty \la_j^{-s}$ is the spectral zeta function of $\Delta_g$, which admits a meromorphic continuation from ${\rm Re}(s)\gg 1$ to $s\in \cc$ and is holomorphic at $s=0$. We recall that if $\hat{g}=e^{\varphi}g$ for some $\varphi\in C^\infty(\Sigma)$, one has the so-called Polyakov formula (see \cite[eq. (1.13)]{OPS}) 
\begin{equation}\label{detpolyakov} 
\log \frac{{\det}'(\Delta_{\hat{g}})}{{\rm v}_{\hat{g}}(\Sigma)}= \log \frac{{\det}'(\Delta_g)}{{\rm v}_{g}(\Sigma)} -\frac{1}{48\pi}\int_\Sigma(  |d\varphi|_g^2+2K_g\varphi){\rm dv}_g\end{equation}
where $K_g$ is the scalar curvature of $g$ as above.   

When $(\Sigma,g)$ is a connected oriented compact surface $\Sigma$ with boundary, the Laplacian $\Delta_{g}$ with Dirichlet boundary conditions has discrete spectrum $(\la_{j,D})_{j \geq 1}$ and the determinant is defined in a similar way as $\det (\Delta_{g,D}): = e^{- \zeta'_{g,D}(0)}$ where $\zeta_{g,D}$   is the spectral zeta function of $\Delta_{g}$ with Dirichlet boundary conditions defined for ${\rm Re}(s)\gg 1$ by $\zeta_{g,D}(s):=\sum_{j=1}^\infty \la_{j,D}^{-s}$. The function $\zeta_{g,D}(s)$ admits a meromorphic extension to $s\in \C$ and is holomorphic at $s=0$.

\subsection{Green's function and resolvent of Laplacian}\label{sec:Green}
%%%%%%%%%%%%%%%%%%%%%%%%%%%
Each compact Riemannian surface $(\Sigma,g)$ has a (non-negative) Laplace operator $\Delta_g=\dd^*\dd$ where the adjoint is taken with respect to ${\rm v}_g$.  The Green function $G_g$ on a surface $\Sigma$ without boundary  is defined 
to be the integral kernel of the resolvent operator $R_g:L^2(\Sigma)\to L^2(\Sigma)$ satisfying $\Delta_{g}R_g=2\pi ({\rm Id}-\Pi_0)$, $R_g^*=R_g$ and $R_g1=0$, where $\Pi_0$ is the orthogonal projection in $L^2(\Sigma,{\rm dv}_g)$ on $\ker \Delta_{g}$ (the constants). By integral kernel, we mean that for each $f\in L^2(\Sigma,{\rm dv}_g)$
\[ R_gf(x)=\int_{\Sigma} G_g(x,x')f(x'){\rm dv}_g(x').\] 
The Laplacian $\Delta_{g}$ has an orthonormal basis of real valued eigenfunctions $(e_j)_{j\in \N_0}$ in $L^2(\Sigma,{\rm dv}_g)$ with associated eigenvalues $\la_j\geq 0$; we set $\la_0=0$ and $\varphi_0=({\rm v}_g(\Sigma))^{-1/2}$.  The Green function then admits the following Mercer's representation in $L^2(\Sigma\times\Sigma, {\rm dv}_g \otimes {\rm dv}_g)$
\begin{equation}
G_g(x,x')=2\pi \sum_{j\geq 1}\frac{1}{\lambda_j}e_j(x)e_j(x').
\end{equation}
Similarly, on a surface with smooth boundary $\Sigma$, we will consider the Green function with Dirichlet boundary conditions $G_{g,D}$ associated to the Laplacian $\Delta_{g}$ with Dirichlet condition on $\pl \Sigma$. In this case, the associated resolvent operator
\[ R_{g,D}f(x)=\int_{\Sigma} G_{g,D}(x,x')f(x'){\rm dv}_g(x')\] 
solves $\Delta_{g}R_{g,D}=2\pi {\rm Id}$. 

%Let us then assume that $g$ is hyperbolic. We have 
%\begin{lemma}\label{greenneardiag}
%If $g$ is a  metric on the surface $\Sigma$, the Green function $G_g(x,x')$ for $\Delta_g$ has the following form
%near the diagonal
%\begin{equation}\label{greenfct2} 
%G_g(x,x')= -\frac{1}{2\pi}\log(d_g(x,x'))+m_g(x,x')
%\end{equation}
%for some continuous function $m_g$ on $\Sigma\x \Sigma$. 
%\end{lemma}

%\section{Homology/cohomology groups on Riemann surfaces} \label{detoflap}
%%%%%%%%%%%%%%%%%%%%%%%%%%%%%%%%%%%%%
%In this section we recall standard facts on homology and cohomology on Riemann surfaces. We also establish a few preliminary results that will be useful for the construction of the path integral.  

\subsection{First homology group and symplectic basis on closed surfaces} \label{detoflap}
%%%%%%%%%%%%%%%%%%%%%%%%%%%%%%%%%%%%%

 Let $\Sigma$ be a closed oriented surface of genus ${\mathfrak{g}}$. We denote by $\mc{H}_1(\Sigma)$ the first homology group of $\Sigma$ with value in $\Z$. It is an abelian group isomorphic to $\Z^{2{\mathfrak{g}}}$, which can be obtained as the abelianization of the fundamental group $\pi_1(\Sigma,x_0)$ for some fixed $x_0\in \Sigma$.
Recall that elements in $\pi_1(\Sigma,x_0)$ are equivalence classes of closed $C^1$ curves on the surfaces (equivalence been given by homotopy fixing $x_0$), and for a closed curve $c$ on $\Sigma$, we denote by $[c]$ its class in $\mc{H}_1(\Sigma)$.

For two transverse oriented $C^1$ curves $a:\T\to \Sigma$, $b: \T\to \Sigma$ on $\Sigma$, the algebraic intersection number  $\iota(a,b)\in\Z$ is the number of intersection points $p=a(t_1)=b(t_2)$ weighted by $+1$ (resp. $-1$) if the orientation of the basis $(\dot{a}(t_1),\dot{b}(t_2))$ of $T_p\Sigma$ at $p$ is positive (resp. negative) with respect to the orientation of $\Sigma$. 
This number only depends on the homology class $[a],[b]$ of $a,b$, it defines a bilinear skew-symmetric map  
\[ \iota : \mc{H}_1(\Sigma)\wedge \mc{H}_1(\Sigma)\to \Z.\]
This map is symplectic and there exists a symplectic basis $([a_i],[b_i])_{i=1,\dot, {\mathfrak{g}}}$ of $\mc{H}_1(\Sigma)$ represented by closed simple curves $(a_i,b_i)_i$ such that
\begin{equation}\label{int_numbers} 
\iota(a_i,b_j)=\delta_{ij}, \quad \iota(a_i,a_j)=0, \quad \iota(b_i,b_j)=0.
\end{equation}
A   symplectic basis $([a_i],[b_i])_{i=1,\dots,{\mathfrak{g}}}$ of $\mc{H}_1(\Sigma)$ represented by simple closed curves $(a_i,b_i)_i$ will be called a \emph{geometric symplectic basis}. 
See Figure \ref{fig:homology_closed}.
 If $c=\sum_{j=1}^{\mathfrak{g}}\alpha_j a_j+
\beta_jb_j$ and $c'=\sum_{j=1}^{\mathfrak{g}}\alpha'_j a_j+
\beta'_{j}b_j$, we have 
\[ \iota(c,c')= \sum_{j=1}^{{\mathfrak{g}}}\alpha_j\beta_{j}'-\alpha'_{j}\beta_j=  (\alpha,\beta){\rm J}(\alpha',\beta')^\top\]
where ${\rm J}$ is the canonical symplectic matrix on $\R^{2{\mathfrak{g}}}$. We denote by ${\rm Sp}(2{\mathfrak{g}},\R)$ the symplectic group, defined by 
$A\in {\rm Sp}(2{\mathfrak{g}},\R)$ if $A{\rm J}A^\top={\rm J}$, and let ${\rm SP}(2{\mathfrak{g}},\Z):={\rm SP}(2{\mathfrak{g}},\R)\cap {\rm GL}(2{\mathfrak{g}},\Z)$ the subgroup whose coefficients are integers. 
Any other  basis $(a',b')=(a_1',\dots,a_{{\mathfrak{g}}}',b_1',\dots b'_{\mathfrak{g}})$  (not necessarily symplectic)
of $\mc{H}_1(\Sigma)$ is related to $(a,b)=(a_1,\dots,a_{\mathfrak{g}},b_1,\dots, b_{\mathfrak{g}})$ by a matrix $A\in {\rm SL}(2{\mathfrak{g}},\Z)$ with integer coefficients and of determinant $1$ 
\[(a',b')=A(a,b).\]
The basis $(a',b')$ is symplectic if and only if $A\in {\rm Sp}(2{\mathfrak{g}},\Z)$. We shall typically use the notation $\boldsymbol{\sigma}=(\sigma_1,\dots,\sigma_{2{\mathfrak{g}}})$ for a basis of $\mc{H}_1(\Sigma)$ and $(a_1,b_1,\dots,a_{\mathfrak{g}},b_{\mathfrak{g}})$ when the basis is a geometric symplectic basis. We refer to \cite[Chapter 6.1]{Farb-Margalit} for a detailed discussion on the symplectic structure of $\mc{H}_1(\Sigma)$. 
It follows from the proof of \cite[Theorem 6.4]{Farb-Margalit} that each symplectic basis of $\mc{H}_1(\Sigma)$ 
can be realized by a geometric symplectic basis. Finally, as explained in \cite[Section 6.3]{Farb-Margalit},
if $\psi:\Sigma\to \Sigma$ is an orientation preserving diffeomorphism, it induces an automorphism $\psi_*: \mc{H}_1(\Sigma)\to \mc{H}_1(\Sigma)$ which belongs to ${\rm Sp}(2{\mathfrak{g}},\Z)$ (as it preserves the intersection form).
\begin{figure}
\includegraphics[width=0.5\textwidth]{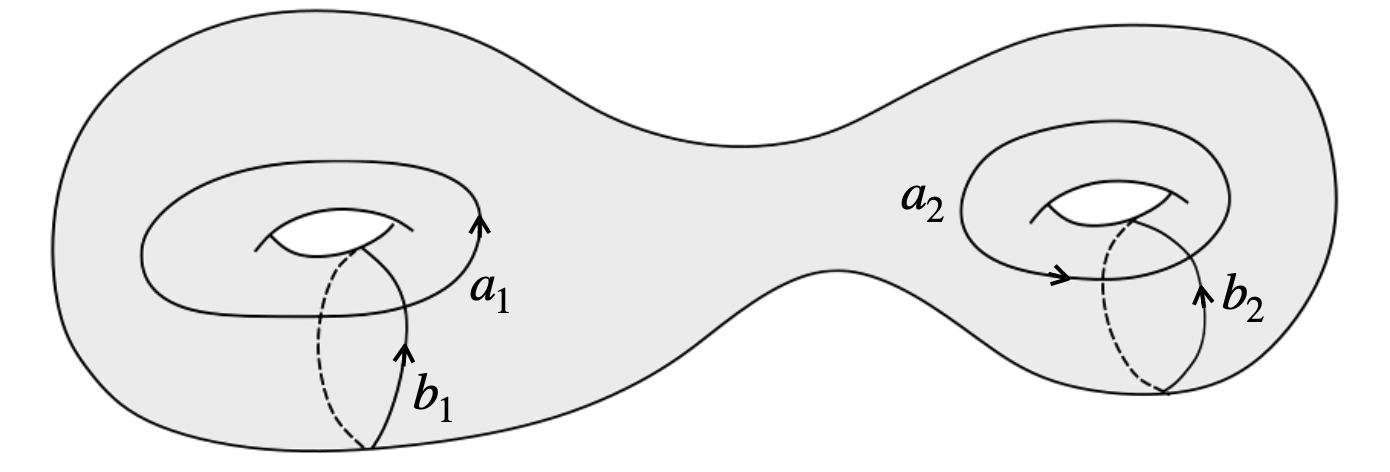}
\caption{A geometric symplectic basis of $\mc{H}_1(\Sigma)$}
\label{fig:homology_closed}
\end{figure}

 \subsection{De Rham first cohomology group on closed Riemann surfaces} 
 We equip $\Sigma$ with a complex structure $J\in {\rm End}(T\Sigma)$, i.e. satisfying $J^2=-{\rm Id}$. Denote by $\Lambda^p\Sigma:=\Lambda^p T^*\Sigma$ the bundle of differential $p$ forms. Let 
$\dd: C^\infty(\Sigma,\Lambda^k\Sigma)\to C^\infty(\Sigma,\Lambda^{k+1}\Sigma)$ be the exterior derivative and $*:T^*\Sigma\to T^*\Sigma$ be the Hodge operator,  dual to $-J$. The Hodge operator induces a scalar product on the space $C^\infty(\Sigma,\Lambda^1\Sigma)$ of real valued $1$-forms 
\[ \cjg \omega,\omega'\cjd_2 := \int_{\Sigma} \omega\wedge *\omega'.\]
The formal adjoint of $\dd: C^\infty(\Sigma)\to C^\infty(\Sigma,\Lambda^1\Sigma)$ with respect to this scalar product on $1$-forms 
and the Riemannian $L^2$ scalar product on functions is given by $\dd^*=-*\, \dd\, *$. 
The first de Rham cohomology space is defined by 
\[ \mc{H}^1(\Sigma):= \ker \dd|_{C^\infty(\Sigma,\Lambda^1\Sigma)}/ {\rm Im}\, \dd|_{C^\infty(\Sigma)}.\]
It is isomorphic to the real vector space of real-valued closed and co-closed forms, or equivalently real harmonic $1$-forms,
\[{\rm Harm}^1(\Sigma)=\{\omega \in C^\infty(\Sigma,\Lambda^1)\,|\, \dd\omega=0,\dd*\omega=0\}.\] 

The space $\mc{H}^1(\Sigma)$ is dual to $\mc{H}_1(\Sigma)$ with the duality map given by 
\[ \mc{H}^1(\Sigma) \times \mc{H}_1(\Sigma) \to \R ,  \quad  \cjg \omega, \sigma\cjd:= \int_\sigma \omega\] 
and called the period of $\omega$ over $\sigma$. If we fix a basis $(\sigma_{1},\dots,\sigma_{2{\mathfrak{g}}})$ of $\mc{H}_1(\Sigma)$, one can then find a unique basis $(\omega_1,\dots,\omega_{2{\mathfrak{g}}})$ dual to $(\sigma_{1},\dots,\sigma_{2{\mathfrak{g}}})$.
For $R>0$, we define the $\Z$-module consisting of cohomology classes with periods in $2\pi R\Z$:
\begin{equation}\label{H^1RSigma}
\mc{H}^1_R(\Sigma) :=\{ \omega \in \mc{H}^1(\Sigma)\,|\, \forall \sigma\in \mc{H}_1(\Sigma),  \cjg \omega,\sigma\cjd \in 2\pi R\Z\}.
\end{equation}
The discussion above implies the following:
\begin{lemma}\label{basisH^1}
Let $R>0$ and let $\boldsymbol{\sigma}=(\sigma_1,\dots,\sigma_{2{\mathfrak{g}}})$ be a basis of $\mc{H}_1(\Sigma)$. Then 
there exists $2{\mathfrak{g}}$ independent closed smooth $1$-forms $\omega_1,\dots,\omega_{2{\mathfrak{g}}}$ forming a basis of $\mc{H}^1_R(\Sigma)$ dual to $(\sigma_1,\dots,\sigma_{2{\mathfrak{g}}})$ in the sense that 
\[ \forall i,j,\quad  \frac{1}{2\pi R}\int_{\sigma_i}\omega_j=\delta_{ij}.\]
If $\omega_1',\dots,\omega'_{2{\mathfrak{g}}}$ is another such family, then for each $j=1,\dots,2{\mathfrak{g}}$ there exists $\varphi_j\in C^\infty(\Sigma)$ such that $\omega'_j=\omega_j-\dd\varphi_j$. There is a unique such basis of $\mc{H}_R^1$ so that in addition $\dd *\omega_j=0$ for all $j=1,\dots,2{\mathfrak{g}}$.
\end{lemma}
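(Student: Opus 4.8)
The plan is to reduce the whole statement to two classical inputs: the de Rham theorem, which tells us that the period pairing $\cjg\,\cdot\,,\,\cdot\,\cjd:\mc{H}^1(\Sigma)\times\mc{H}_1(\Sigma)\to\R$ is a perfect pairing of real vector spaces and that a closed $1$-form is exact if and only if all its periods vanish; and the Hodge theorem, which tells us that each de Rham class has a unique representative $\omega$ with $\dd\omega=0$ and $\dd*\omega=0$. Everything else is integrality bookkeeping.

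For existence in the first assertion, I would start from the observation that $\boldsymbol{\sigma}=(\sigma_1,\dots,\sigma_{2{\mathfrak{g}}})$ is in particular an $\R$-basis of $\mc{H}_1(\Sigma)\otimes\R$, so non-degeneracy of the period pairing yields a unique $\R$-basis of $\mc{H}^1(\Sigma)$, represented by smooth closed $1$-forms $\omega_1,\dots,\omega_{2{\mathfrak{g}}}$, characterized by $\int_{\sigma_i}\omega_j=2\pi R\,\delta_{ij}$, i.e.\ by the normalization $\frac{1}{2\pi R}\int_{\sigma_i}\omega_j=\delta_{ij}$; these are automatically independent. To see that they lie in $\mc{H}^1_R(\Sigma)$ and generate it over $\Z$, I would use that $\boldsymbol{\sigma}$ is a $\Z$-basis of $\mc{H}_1(\Sigma)\cong\Z^{2{\mathfrak{g}}}$: any $\sigma\in\mc{H}_1(\Sigma)$ writes $\sigma=\sum_i n_i\sigma_i$ with $n_i\in\Z$, whence $\int_\sigma\omega_j=2\pi R\,n_j\in 2\pi R\Z$, so $\omega_j\in\mc{H}^1_R(\Sigma)$; conversely any $\omega\in\mc{H}^1_R(\Sigma)$ expands over $\R$ as $\omega=\sum_j c_j\omega_j$, and pairing with $\sigma_i$ gives $c_i=\frac{1}{2\pi R}\int_{\sigma_i}\omega\in\Z$, so the $\omega_j$ form a $\Z$-basis of $\mc{H}^1_R(\Sigma)$.

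For the second assertion, if $(\omega'_j)_j$ is another family with the same normalization, then $\int_{\sigma_i}(\omega'_j-\omega_j)=0$ for all $i$, so $\omega'_j-\omega_j$ is closed with all periods zero, hence exact by de Rham, and writing it as $-\dd\varphi_j$ gives $\omega'_j=\omega_j-\dd\varphi_j$ with $\varphi_j\in C^\infty(\Sigma)$. For the last assertion, I would replace each $\omega_j$ by the unique harmonic representative of its class given by the Hodge theorem: it has the same periods (these depend only on the class), hence still satisfies the duality normalization and still lies in $\mc{H}^1_R(\Sigma)$, and by construction it satisfies $\dd*\omega_j=0$. Uniqueness follows because two cohomologous harmonic forms differ by an exact harmonic form $\dd f$, and harmonicity forces $\Delta_g f=0$, hence $f$ constant and $\dd f=0$.

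I do not expect a serious obstacle here: the argument is a direct application of de Rham duality and Hodge theory. The only step deserving genuine care is the integrality in the first assertion, namely passing from an $\R$-basis of $\mc{H}^1(\Sigma)$ dual to $\boldsymbol{\sigma}$ to an honest $\Z$-basis of the period lattice $\mc{H}^1_R(\Sigma)$; this is precisely where one must use that $\boldsymbol{\sigma}$ is a $\Z$-basis of $\mc{H}_1(\Sigma)$ and that pairing a form in $\mc{H}^1_R(\Sigma)$ with an integral cycle recovers its integer coordinates.
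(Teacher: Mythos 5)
Your proof is correct and follows exactly the route the paper intends: the lemma is stated there as an immediate consequence of the preceding discussion, namely the perfect period pairing between $\mc{H}^1(\Sigma)$ and $\mc{H}_1(\Sigma)$ (giving the dual basis and, via the $\Z$-basis property of $\boldsymbol{\sigma}$, the integrality identifying a $\Z$-basis of $\mc{H}^1_R(\Sigma)$), exactness of closed forms with vanishing periods for the second assertion, and the unique harmonic representative from Hodge theory for the last one. Your integrality bookkeeping and the uniqueness argument ($\dd f$ harmonic forces $\Delta_g f=0$, hence $f$ constant on the closed connected surface) fill in precisely the details the paper leaves implicit.
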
 
   
We deduce that if $\omega_j$ are some closed forms as in Lemma \ref{basisH^1}, for each ${\bf k}=(k_1,\dots,k_{2{\mathfrak{g}}})\in \Z^{2{\mathfrak{g}}}$, the form 
\begin{equation}\label{omega_k}
\omega_{\bf k}:=\sum_{j=1}^{2{\mathfrak{g}}}k_j \omega_j
\end{equation} 
satisfies $\int_{\sigma_i}\omega_{\bf k}=k_i 2\pi R$ and the map ${\bf k}\mapsto \omega_{\bf k}$ identifies $\Z^{{2\mathfrak{g}}}$ with $\mc{H}_R^1(\Sigma)$.\\

We pursue with a technical lemma that will be useful later:
\begin{lemma}\label{dX_gomega} 
Let $\Sigma$ be a closed Riemann surface and let $\omega\in C^\infty(\Sigma,\Lambda^1\Sigma)$ be a closed $1$-form. Then 
\[\cjg \dd R_g\dd^*\omega,\omega\cjd_2=2\pi \|(1-\Pi_1)\omega\|_2^2\]
where $\Pi_1:L^2(\Sigma,\Lambda^1\Sigma)\to {\rm Harm}^1(\Sigma)$ is the orthogonal projection.
\end{lemma}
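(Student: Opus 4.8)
The plan is to reduce everything to the Hodge decomposition of the closed form $\omega$ together with the defining relation $\Delta_g R_g = 2\pi(\mathrm{Id}-\Pi_0)$.

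First I would invoke Hodge theory on the closed surface $\Sigma$ to write $\omega = \omega_h + \dd f$ with $\omega_h\in\mathrm{Harm}^1(\Sigma)$ and $f\in C^\infty(\Sigma)$. The key point is that the general decomposition $\omega=\omega_h+\dd f+\dd^*\beta$ has no co-exact component here: since $\dd\omega=0$ while $\dd\omega_h=\dd\dd f=0$, we get $\dd\dd^*\beta=0$, whence $\|\dd^*\beta\|_2^2=\langle\dd\dd^*\beta,\beta\rangle_2=0$ and so $\dd^*\beta=0$. By definition of $\Pi_1$ this gives $\Pi_1\omega=\omega_h$ and $(1-\Pi_1)\omega=\dd f$, so the right-hand side of the claimed identity equals $2\pi\|\dd f\|_2^2$.

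Next I would compute the left-hand side by unwinding the operators. Since $\dd^*\omega_h=0$ we have $\dd^*\omega=\dd^*\dd f=\Delta_g f$. Applying the resolvent and using that on the eigenbasis $(e_j)$ one has $R_g\Delta_g=2\pi(\mathrm{Id}-\Pi_0)$ (directly from the Mercer representation of $G_g$ recalled in Section \ref{sec:Green} and $R_g 1=0$), we obtain $R_g\dd^*\omega=2\pi(f-\Pi_0 f)$; taking the exterior derivative kills the constant $\Pi_0 f$, so $\dd R_g\dd^*\omega=2\pi\,\dd f$. It then remains to pair with $\omega$:
\[
\langle \dd R_g\dd^*\omega,\omega\rangle_2=2\pi\langle \dd f,\omega_h+\dd f\rangle_2=2\pi\|\dd f\|_2^2,
\]
where the cross term vanishes because $\langle \dd f,\omega_h\rangle_2=\langle f,\dd^*\omega_h\rangle_2=0$ (equivalently, exact and harmonic forms are $L^2$-orthogonal). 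This matches the expression for the right-hand side and concludes.

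I do not expect a serious obstacle: the computation is a direct consequence of the definitions. The only points requiring a little care are the vanishing of the co-exact part, which uses the closedness of $\omega$ in an essential way, and the identity $R_g\Delta_g=2\pi(\mathrm{Id}-\Pi_0)$, which follows from $R_g$ being the spectrally defined inverse of $\tfrac{1}{2\pi}\Delta_g$ off the constants together with $R_g 1=0$. Since $\omega$, and hence $f$, are smooth, there are no regularity issues in applying $R_g$, $\dd$ and $\dd^*$, and all pairings are well defined.
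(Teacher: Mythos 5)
Your proof is correct, but it takes a genuinely different route from the paper's. The paper argues at the operator level: it introduces the Hodge Laplacian on $1$-forms $\Delta_{g,1}=\dd^*\dd+\dd\dd^*$ together with its resolvent $R_{g,1}$ satisfying $R_{g,1}\Delta_{g,1}=2\pi(\mathrm{Id}-\Pi_1)$, proves the intertwining identity $\dd R_g=R_{g,1}\dd$ (the displayed relation \eqref{formule_link_green}, obtained by applying $R_{g,1}$ to $\Delta_{g,1}\dd R_g=2\pi \dd$), and then uses the closedness of $\omega$ exactly once, to write $\dd\dd^*\omega=\Delta_{g,1}\omega$, so that $\langle \dd R_g\dd^*\omega,\omega\rangle_2=\langle R_{g,1}\Delta_{g,1}\omega,\omega\rangle_2=2\pi\|(1-\Pi_1)\omega\|_2^2$; it never decomposes $\omega$. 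You instead decompose $\omega=\Pi_1\omega+\dd f$ by Hodge theory, using closedness to kill the co-exact part (your argument $\|\dd^*\beta\|_2^2=\langle \beta,\dd\dd^*\beta\rangle_2=0$ is the right one), after which only the scalar resolvent is needed: $\dd^*\omega=\Delta_g f$, $R_g\Delta_g=2\pi(\mathrm{Id}-\Pi_0)$ gives $\dd R_g\dd^*\omega=2\pi\,\dd f$, and the $L^2$-orthogonality of exact and harmonic forms finishes the computation. Your version is more elementary, since neither the $1$-form Laplacian nor its Green operator enters, and it makes the statement transparent by exhibiting $(1-\Pi_1)\omega$ explicitly as $\dd f$; what the paper's version buys is the clean operator relation between the Green operators on functions and on $1$-forms, though that identity is not reused elsewhere in the paper. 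The two points in your write-up that required care both check out: $R_g\Delta_g=2\pi(\mathrm{Id}-\Pi_0)$ follows from the spectral description of $R_g$ in Section \ref{sec:Green} together with $R_g1=0$, and the cross term vanishes because $\langle \dd f,\Pi_1\omega\rangle_2=\langle f,\dd^*\Pi_1\omega\rangle_2=0$.
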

\begin{proof}
%Let us first compute 
%\[ \mathbb{E}[\cjg dX_g,\omega\cjd^2]=  \int_{\Sigma\times \Sigma} G_g(x,x')d^*\omega(x)d^*\omega(x'){\rm dv}_g(x){\rm dv}_g(x')=
%\cjg dR_gd^*\omega,\omega\cjd\]
We observe the following about the Green's function: $G_g$ is the integral kernel of the operator $R_g=2\pi \Delta_g^{-1}$ satisfying 
\[ \Delta_g \Delta_g^{-1}=({\rm Id}-\Pi_0)\]
where $\Pi_0$ is the projector on constants with respect to the volume form ${\rm v}_g$. Let $\Delta_{g,1}=\dd^*\dd+\dd\dd^*$ be the Laplacian on $1$-forms and 
$R_{g,1}$ be the operator so that 
\[  \Delta_{g,1} R_{g,1}=R_{g,1} \Delta_{g,1}=2\pi({\rm Id}-\Pi_1)\]
where $\Pi_1$ is the orthogonal projector on ${\rm Harm}_1(\Sigma)=\ker \dd\cap \ker \dd^*|_{C^\infty(\Sigma,\Lambda^1\Sigma)}$. Then 
\[ \Delta_{g,1} \dd R_g=\dd \Delta_g R_g=2\pi \dd ({\rm Id}-\Pi_0)=2\pi \dd,\]
thus applying $R_{g,1}$ on the left, 
\begin{equation}\label{formule_link_green} 
2\pi ({\rm Id}-\Pi_1) \dd R_g =2\pi  \dd R_g=2\pi R_{g,1}\dd
 \end{equation}
This proves, using that $\dd\omega=0$, that 
\[\cjg dR_g\dd^*\omega,\omega\cjd_2=\cjg R_{g,1}\dd \dd^* \omega,\omega\cjd_2= \cjg R_{g,1}\Delta_{g,1} \omega,\omega\cjd=2\pi \|(1-\Pi_1)\omega\|_2^2.\qedhere\]
\end{proof}

\subsection{Homology and cohomology on  surfaces with boundary} \label{sub:cohomoboundary}
%%%%%%%%%%%%%%%%%%%%%%%%%%%%%%%%%%%%%

Now, let us consider the case of a Riemann surface $(\Sigma,J)$ with boundary (note that $J$ induces an orientation). Assume that there are ${\mathfrak{b}}$ boundary connected components $\pl_1\Sigma,\dots,\pl_{{\mathfrak{b}}}\Sigma$, oriented positively with respect to the orientation of $\Sigma$, and the genus is denoted by ${\mathfrak{g}}\geq 0$. 

\begin{figure}[!h]
     \centering
     \begin{subfigure}[b]{0.45\textwidth}
         \centering
         \includegraphics[width=0.5\textwidth]{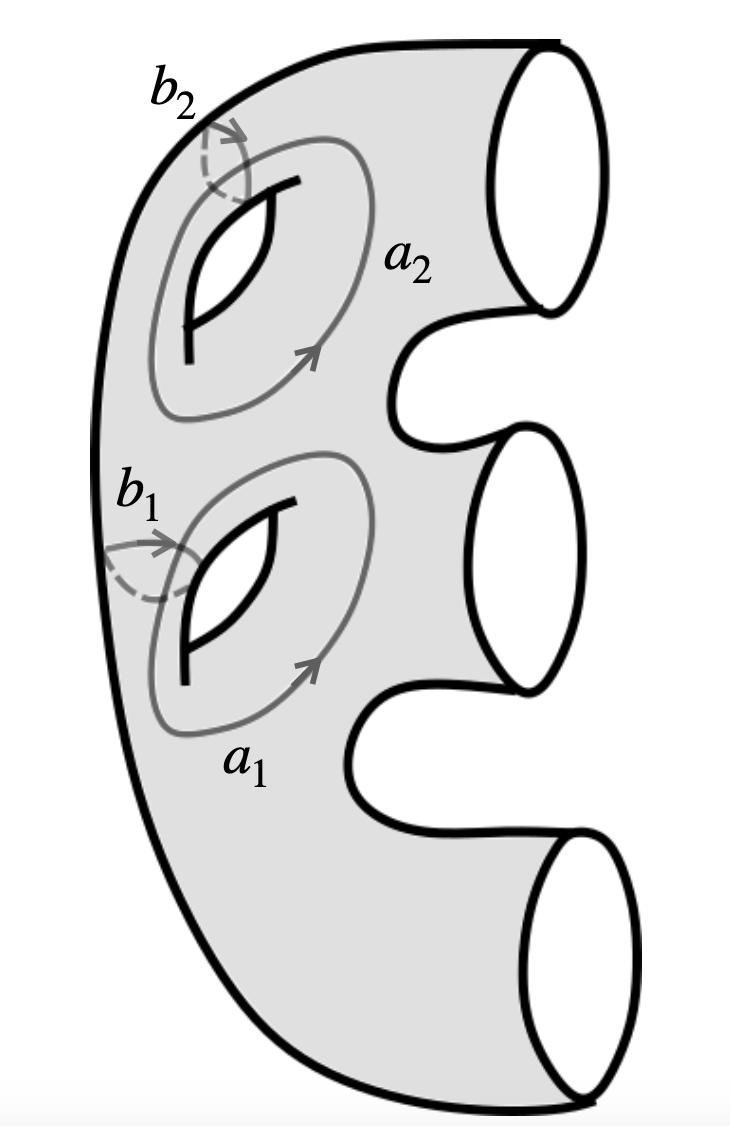}
         \caption{Surface $\Sigma$ with a boundary and a basis of interior cycles. \vspace{1cm}}
         %\label{fig:y equals x}
     \end{subfigure}
     \hfill
     \begin{subfigure}[b]{0.45\textwidth}
         \centering
         \includegraphics[width=0.7\textwidth]{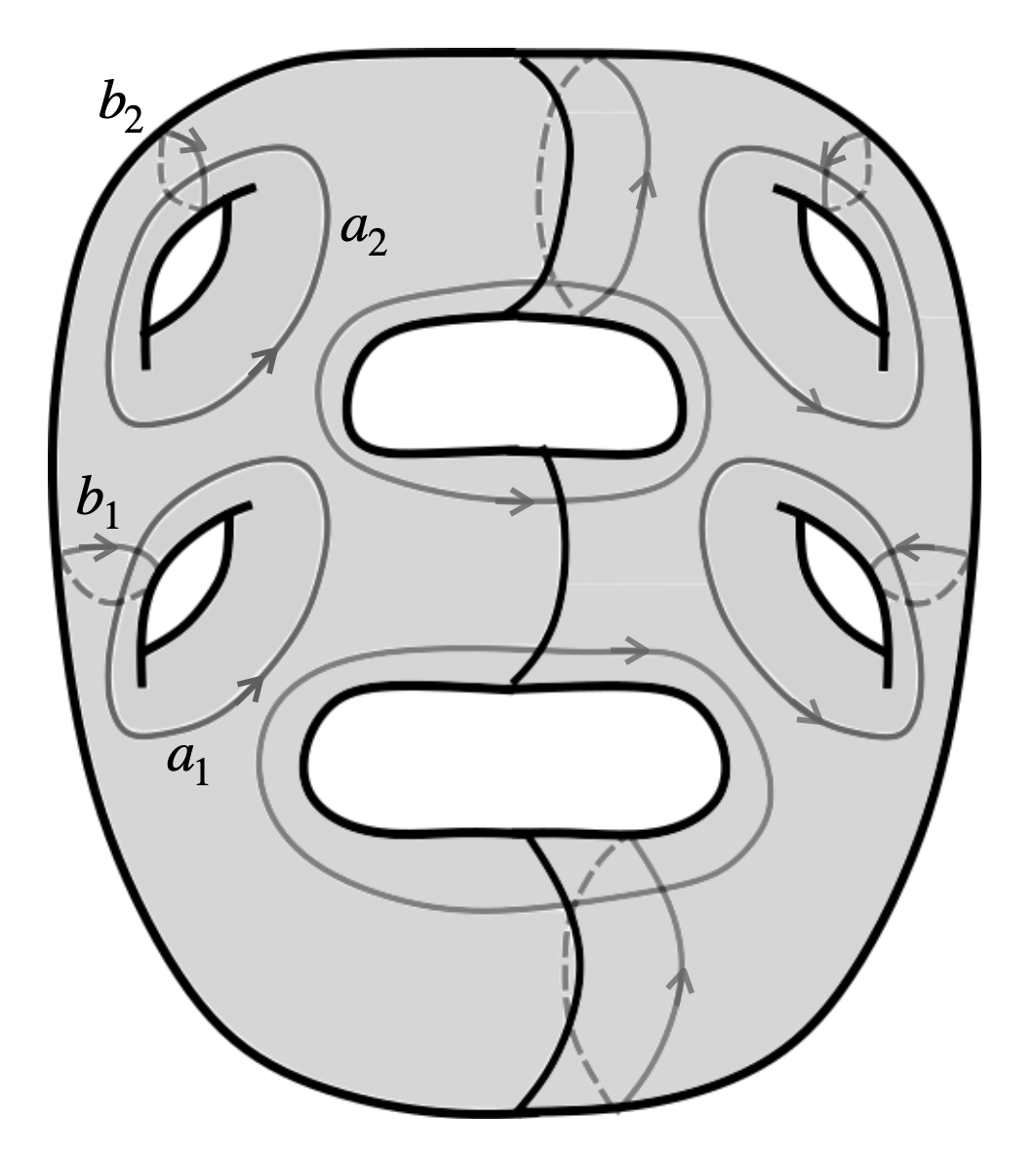}
         \caption{The double $\Sigma^{\#2}$ with a choice of geometric symplectic basis. The cycles that are part of the cohomology basis on $\Sigma$, completed to a full cohomology basis in grey. In black, the boundary of $\Sigma$ as a subset of   $\Sigma^{\#2}$.}
        % \label{fig:three sin x}
     \end{subfigure}
        \caption{Homology on the doubled surface.}
        \label{fig:double}
\end{figure}

As for closed surfaces, the first homology group $\mc{H}_1(\Sigma)$ is represented by oriented closed curves in $\Sigma$. It is isomorphic to $\Z^{2{\mathfrak{g}}+{\mathfrak{b}}-1}$. The positively oriented boundary circles $c_j=\pl_j\Sigma$ for $j=1,\dots,{\mathfrak{b}}$ are 
elements in $\mc{H}_1(\Sigma)$ and $\sum_{j=1}^{\mathfrak{b}}[c_j]=[\pl \Sigma]=0$ is the class of the boundary, which is trivial. 
If $\Sigma^{\# 2}=\Sigma\#\Sigma$ is the double of $\Sigma$ 
($(\Sigma,J)$ glued with $(\Sigma,-J)$ along the boundary), the inclusion map of the right copy of $\Sigma$ into $\Sigma^{\#2}$ induces  a linear injection 
\[ i_\Sigma: \mc{H}_1(\Sigma)\to \mc{H}_1(\Sigma^{\#2})\]
and the intersection pairing  $\iota$ defined above for closed surfaces also makes sense on $\Sigma$ using this injection. The involution $\tau_{\Sigma}$ exchanging the right copy of $\Sigma$ with the left copy induces a linear map on $\mc{H}_1(\Sigma)$ fixing the boundary curves $[c_j]$.
The group $\mc{H}_1(\Sigma)$ is generated by $([\sigma_1],\dots,[\sigma_{2{\mathfrak{g}}}], [c_{i_1}],\dots,[c_{i_{{\mathfrak{b}}-1}}])$ where $\sigma_j$ are simple curves that do not intersect $\pl \Sigma$ and $i_1,\dots,i_{{\mathfrak{b}}-1}\in \{1,\dots,{\mathfrak{b}}\}$ are distinct.  Notice that these cycles $\sigma$ can also be viewed as element in the relative homology $\mc{H}_1(\Sigma,\pl \Sigma)$, and we can assume that $[\sigma_j]\not=0$ in $\mc{H}_1(\Sigma,\pl \Sigma)$. In fact, one can choose $(\sigma_j)_j$ to be $(a_1,b_1,\dots,a_{{\mathfrak{g}}},b_{{\mathfrak{g}}})$ with 
\[ \iota(a_j,a_i)=0= \iota(b_j,b_i), \quad \iota(a_j,b_i)=\delta_{ij},\]
that is, $(i_\Sigma(a_j),i_{\Sigma}(b_j))_{j=1,\dots,{\mathfrak{g}}}$ is a subset of a geometric symplectic basis in $\mc{H}_1(\Sigma^{\#2})$; see Figure \ref{fig:double}. We call such a basis 
\begin{equation}\label{baseabsolute}
(a_1,b_1,\dots,a_{{\mathfrak{g}}},b_{{\mathfrak{g}}},c_{i_1},\dots c_{i_{{\mathfrak{b}}-1}})
\end{equation} 
a \emph{canonical geometric basis} of $\mc{H}_1(\Sigma)$. Up to renumbering the boundary components, we will assume that $i_j=j$.\\

% \begin{figure}[h] 
%\centering
%\begin{subfigure}[b]{0.3\textwidth}
%         \centering
%         \includegraphics[width=.27\textwidth,height=0.15\textheight]{double1.png} 
%         \caption{Surface $\Sigma$ with a boundary and a basis of interior cycles.}
%     \end{subfigure}
%%\subfloat[Surface $\Sigma$ with a boundary and a basis of interior cycles.]{\includegraphics[width=.27\textwidth,height=0.15\textheight]{double1.png} 
%}
%\,\hspace{1cm}
%\subfloat[The double $\Sigma^{\#2}$ with a choice of geometric symplectic basis. In red/blue, the cycles that are part of the cohomology basis on $\Sigma$, completed to a full cohomology basis with cyan cycles. In green, the boundary of $\Sigma$ as a subset of   $\Sigma^{\#2}$.]
%{\includegraphics[width=.55\textwidth,height=0.25\textheight]{double2.png} }
%\caption{Homology on the doubled surface.}
%\label{fig:double}
%\end{figure}

Elements in the relative homology $\mc{H}_1(\Sigma,\pl \Sigma)$ are represented by either oriented closed curves or oriented curves with endpoints on the boundary $\pl \Sigma$. The relative homology $\mc{H}_1(\Sigma,\pl \Sigma)$  also has dimension $2{\mathfrak{g}}+{\mathfrak{b}}-1$. A basis of cycles of $\mc{H}_1(\Sigma,\pl \Sigma)$ can be obtained by taking 
\begin{equation}\label{baserelative}
(a_1,b_1,\dots,a_{\mathfrak{g}},b_{\mathfrak{g}}, d_{i_1},\dots,d_{i_{{\mathfrak{b}}-1}})
\end{equation} 
where $(a_j,b_j)$ are chosen  inside $\Sigma^\circ$ such that the intersection numbers are given as in \eqref{int_numbers} as before, and the $d_{i_j}$ for $(i_1,\dots,i_{{\mathfrak{b}}-1})\in \{1,\dots,{\mathfrak{b}}\}$ all disjoint are non-intersecting oriented simple curves  with the initial endpoint on 
$\pl_{i_j}\Sigma$ and the final endpoint on $\pl_{i_{j+1}}\Sigma$, and each $d_{i_j}$ is not intersecting any other curve of the basis (see Figure \ref{fig:homologybdry}). Such a basis will be called a \emph{canonical geometric basis} of $\mc{H}_1(\Sigma,\pl\Sigma)$
As above,  we will assume that $i_j=j$.

\begin{figure}[!h]
     \centering
     \begin{subfigure}[b]{0.45\textwidth}
         \centering
         \includegraphics[width=0.5\textwidth]{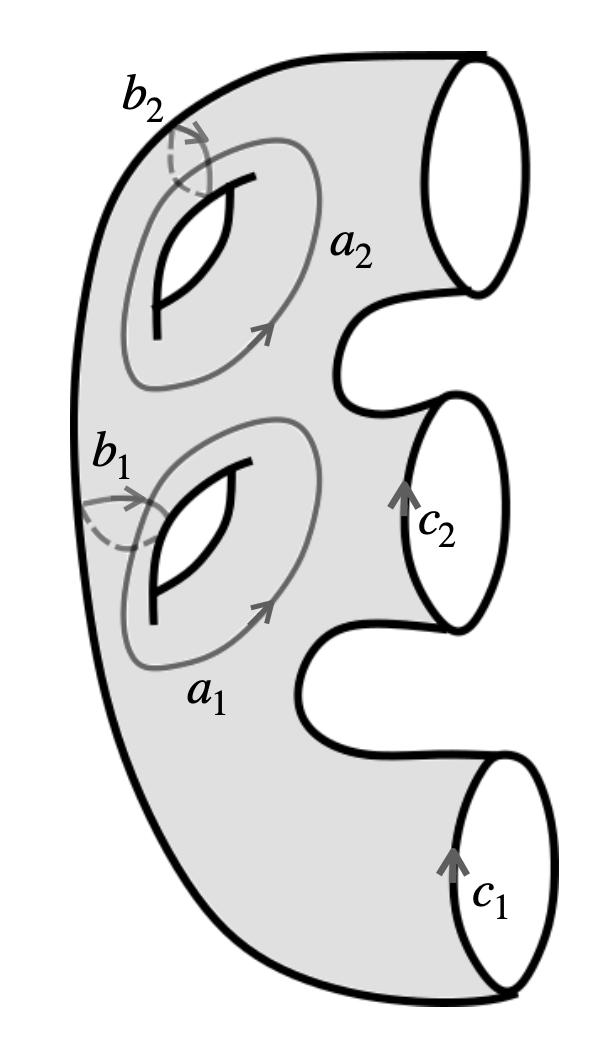}
         \caption{A canonical geometric  basis of $\mc{H}_1(\Sigma)$.}
         %\label{fig:y equals x}
     \end{subfigure}
     \hfill
     \begin{subfigure}[b]{0.45\textwidth}
         \centering
         \includegraphics[width=0.5\textwidth]{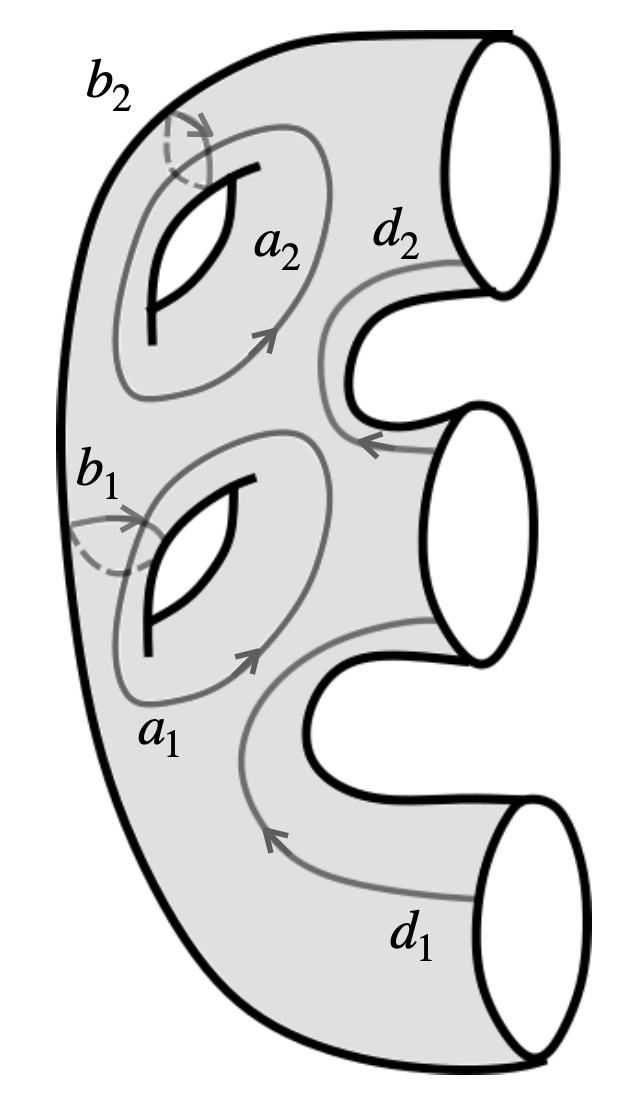}
         \caption{A canonical geometric  basis of  $\mc{H}_1(\Sigma,\pl\Sigma)$.}
        % \label{fig:three sin x}
     \end{subfigure}
        \caption{Absolute and relative homology on $\Sigma$.}
        \label{fig:homologybdry}
\end{figure}

%\begin{figure}
%\includegraphics[scale=0.4]{homology_basis_bdry.png}
%\caption{A canonical geometric  basis of $\mc{H}_1(\Sigma)$ and $\mc{H}_1(\Sigma,\pl\Sigma)$.}
%\label{fig:homologybdry}
%\end{figure}

The \emph{absolute cohomology} $\mc{H}^k(\Sigma)$ of degree $k\in \{0,1,2\}$ is defined by 
 \[ \mc{H}^k(\Sigma):= \ker \dd|_{C_{\rm abs}^\infty(\Sigma,\Lambda^k\Sigma)}/{\rm Im}\, \dd|_{C_{\rm abs}^\infty(\Sigma,\Lambda^{k-1}\Sigma)}\]
where, if $\nu$ is the interior pointing unit normal vector to the boundary and  $\iota_{\pl\Sigma}:\pl\Sigma \to \Sigma$ denotes the inclusion map, we define the set of real-valued absolute forms ($i_\nu$ denotes interior product)
\[C^\infty_{\rm abs}(\Sigma,\Lambda^k\Sigma):= \{ u\in C^\infty (\Sigma,\Lambda^k\Sigma)\,|\, \iota_{\pl \Sigma}^*(i_\nu u)=0, \iota_{\pl \Sigma}^*(i_\nu du)=0\}.\]
The space  $\mc{H}^{k}(\Sigma)$ is isomorphic to the real vector space of real-valued absolute closed and co-closed forms
\[ {\rm Harm}^k(\Sigma):=\{ u\in C^\infty (\Sigma,\Lambda^k\Sigma)\,|\, \dd u=0, \, \dd*u=0, \, \iota_{\pl\Sigma}^*(i_{\nu} u)=0\}.\] 

The \emph{relative cohomology} $\mc{H}^k(\Sigma,\pl \Sigma)$ of degree $k\in \{0,1,2\}$ is defined by
 \[ \mc{H}^k(\Sigma,\pl \Sigma):= \ker \dd|_{C_{\rm rel}^\infty(\Sigma,\Lambda^k\Sigma)}/{\rm Im}\, \dd|_{C_{\rm rel}^\infty(\Sigma,\Lambda^{k-1}\Sigma)}\]
where
\[C^\infty_{\rm rel}(\Sigma,\Lambda^k\Sigma):= \{ u\in C^\infty (\Sigma,\Lambda^k\Sigma)\,|\, \iota_{\pl \Sigma}^*u=0\}.\]
The space $\mc{H}^k(\Sigma,\pl \Sigma)$ is isomorphic to the real vector space of real-valued closed and co-closed relative forms
\[ {\rm Harm}^k(\Sigma,\pl \Sigma):=\{ u\in C^\infty (\Sigma,\Lambda^k\Sigma)\,|\, \dd u=0, \, \dd*u=0, \, \iota_{\pl \Sigma}^*u=0\}.\] 
The Poincar\'e duality says  that the Hodge star operator
\[  *:  {\rm Harm}^1(\Sigma)\to {\rm Harm}^1(\Sigma,\pl \Sigma) \]
is an isomorphism.  

Using the double $\Sigma^{\#2}$ of $\Sigma$ obtained by gluing $\Sigma$ with itself at the boundary and the natural involution $\tau$ on $\Sigma^{\#2}$, one has that $\mc{H}^1(\Sigma,\pl\Sigma)\simeq \{u \in \mc{H}^1(\Sigma^{\#2})\,|\, \tau^*u=-u\}$ and 
$\mc{H}^1(\Sigma)\simeq \{u \in \mc{H}^1(\Sigma^{\#2})\,|\, \tau^*u=u\}$. We recover that
\[ 4{\mathfrak{g}}+2{\mathfrak{b}}-2=\dim \mc{H}^1(\Sigma^{\#2})=\dim \mc{H}^1(\Sigma)+\dim \mc{H}^1(\Sigma,\pl\Sigma)=2\dim \mc{H}^1(\Sigma)\]  
and there are $2{\mathfrak{g}}+\mathfrak{b}-1$ independent forms in $\mc{H}^1(\Sigma)$ (resp. $\mc{H}^1(\Sigma,\pl\Sigma)$).\\
The duality between $\mc{H}^1(\Sigma)$  and $\mc{H}_1(\Sigma)$ is given by the pairing
\begin{equation}\label{duality1}
\mc{H}^1(\Sigma) \times \mc{H}_1(\Sigma) \to \R , \quad \cjg u,\sigma\cjd:= \int_{\sigma} u.
\end{equation} 
The fact that $\sum_{j=1}^{\mathfrak{b}}[c_j]=0$ in $\mc{H}_1(\Sigma)$ becomes simply Stokes formula: for any closed $1$-form $\omega$ on $\Sigma$
\[ \sum_{j=1}^{\mathfrak{b}} \int_{c_j}\omega=\int_{\pl \Sigma}\omega=\int_{\Sigma} \dd \omega=0.\]
As in \eqref{H^1RSigma} we define for $R>0$
\[\mc{H}^1_R(\Sigma) :=\{ \omega \in \mc{H}^1(\Sigma)\,|\, \forall \sigma\in \mc{H}_1(\Sigma),  \cjg \omega,\sigma\cjd \in 2\pi R\Z\}.\]

For the relative homology $\mc{H}_1(\Sigma,\pl \Sigma)$, the relative condition ensures that the integral of a closed $1$-form on a curve $\sigma$ representing an element $[\sigma]\in \mc{H}_1(\Sigma,\pl \Sigma)$
only depends on the homotopy class of $\sigma$ if the homotopy is such that  the endpoints  of the family of curves stay on $\pl \Sigma$.  The relative homology $\mc{H}_1(\Sigma,\pl \Sigma)$  is dual to $\mc{H}^1(\Sigma,\pl\Sigma)$ by the pairing
\begin{equation}\label{duality2}
\mc{H}^1(\Sigma,\pl\Sigma) \times \mc{H}_1(\Sigma,\pl \Sigma) \to \R , \quad \cjg u,\sigma\cjd:= \int_{\sigma} u  
\end{equation}
and we define 
\[\mc{H}^1_R(\Sigma,\pl\Sigma) :=\{ \omega \in \mc{H}^1(\Sigma,\pl \Sigma)\,|\, \forall \sigma\in \mc{H}_1(\Sigma,\pl \Sigma),  \cjg \omega,\sigma\cjd \in 2\pi R\Z\}.\]

The duality isomorphisms \eqref{duality1} and \eqref{duality2} imply the following:
\begin{lemma}\label{existence_primitive}
1) Fix a basis of $\mc{H}_1(\Sigma)$. Let $\omega_1,\omega_2\in C_{\rm abs}^\infty(\Sigma,\Lambda^1\Sigma)$ such that 
$\int_{\sigma}\omega_1=\int_{\sigma}\omega_2$ for all $\sigma$ in the basis of $\mc{H}_1(\Sigma)$. Then there is $f\in C^\infty(\Sigma)$ with $\pl_\nu f|_{\pl \Sigma}=0$ such that 
\[ \omega_1=\omega_2+\dd f.\]
2) Fix a basis of $\mc{H}_1(\Sigma,\pl \Sigma)$. Let $\omega_1,\omega_2\in C_{\rm rel}^\infty(\Sigma,\Lambda^1\Sigma)$ such that $\int_{\sigma}\omega_1=\int_{\sigma}\omega_2$ for all $\sigma$ in the basis of $\mc{H}_1(\Sigma,\pl \Sigma)$. Then there is $f\in C^\infty(\Sigma)$ with $f|_{\pl \Sigma}=0$ such that 
\[ \omega_1=\omega_2+\dd f.\]
\end{lemma}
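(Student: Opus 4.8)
The plan is to reduce everything to the non-degeneracy of the de Rham pairings \eqref{duality1} and \eqref{duality2}, recorded in the text as dualities between $\mc{H}^1$ and $\mc{H}_1$ in the absolute and relative settings. I treat 1) in detail; part 2) is word-for-word identical with \emph{absolute} replaced by \emph{relative}. Set $\omega:=\omega_1-\omega_2$. Since $C^\infty_{\rm abs}(\Sigma,\Lambda^1\Sigma)$ is a linear subspace (the defining conditions $\iota_{\pl\Sigma}^*(i_\nu u)=0$, $\iota_{\pl\Sigma}^*(i_\nu\dd u)=0$ are linear in $u$) and both $\omega_1,\omega_2$ are closed, as is needed to pair them against $\mc{H}_1(\Sigma)$ via \eqref{duality1}, the difference $\omega$ is a closed absolute $1$-form and defines a class $[\omega]\in\mc{H}^1(\Sigma)$. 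The hypothesis $\int_{\sigma_i}\omega_1=\int_{\sigma_i}\omega_2$ on the fixed basis says $\cjg [\omega],\sigma_i\cjd=0$ for each basis element, hence $\cjg[\omega],\sigma\cjd=0$ for every $\sigma\in\mc{H}_1(\Sigma)$ by bilinearity; non-degeneracy of the pairing \eqref{duality1} then forces $[\omega]=0$ in $\mc{H}^1(\Sigma)$.

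It remains to turn the vanishing of $[\omega]$ into an exact form whose primitive has the correct boundary behaviour. By the definition $\mc{H}^1(\Sigma)=\ker\dd/\mathrm{Im}\,\dd$ taken \emph{inside the absolute complex}, the vanishing $[\omega]=0$ means $\omega=\dd f$ for some $f\in C^\infty_{\rm abs}(\Sigma,\Lambda^0\Sigma)$. Unwinding the absolute condition in degree zero is the crux: for a function $u$ one has $i_\nu u=0$ trivially, while $i_\nu\dd u=\dd u(\nu)=\pl_\nu u$, so the two defining conditions collapse to $\pl_\nu f|_{\pl\Sigma}=0$; that is, $C^\infty_{\rm abs}(\Sigma,\Lambda^0\Sigma)$ is exactly the space of Neumann functions. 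This yields $\omega_1=\omega_2+\dd f$ with $\pl_\nu f|_{\pl\Sigma}=0$, proving 1). For 2), the relative condition $\iota_{\pl\Sigma}^*u=0$ reads $f|_{\pl\Sigma}=0$ on a function, so $C^\infty_{\rm rel}(\Sigma,\Lambda^0\Sigma)$ is the space of Dirichlet functions; the identical argument with \eqref{duality2} in place of \eqref{duality1} then gives $f\in C^\infty(\Sigma)$ with $f|_{\pl\Sigma}=0$.

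The only genuine ingredient is the non-degeneracy of the two pairings, already available as the stated duality isomorphisms; the remainder is bookkeeping of definitions. The single point deserving attention is the degree-zero boundary condition: one must notice that the coboundary defining $\mc{H}^1(\Sigma)$ (resp.\ $\mc{H}^1(\Sigma,\pl\Sigma)$) is $\dd$ restricted to $C^\infty_{\rm abs}$ (resp.\ $C^\infty_{\rm rel}$), so the primitive $f$ automatically inherits the Neumann (resp.\ Dirichlet) condition rather than merely lying in $C^\infty(\Sigma)$. I do not expect a substantive obstacle here. Should one prefer not to assume $\omega_1,\omega_2$ closed a priori, the conclusion $\omega_1=\omega_2+\dd f$ forces $\dd\omega_1=\dd\omega_2$, so closedness of $\omega$ is the only regime in which the statement can hold and is in any case implicit in invoking \eqref{duality1}--\eqref{duality2}.
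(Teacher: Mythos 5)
Your proof is correct and follows exactly the route the paper intends: the paper states this lemma without a proof, asserting only that it follows from the duality isomorphisms \eqref{duality1} and \eqref{duality2}, and your argument---vanishing of periods on a basis kills $[\omega_1-\omega_2]$ by non-degeneracy of the pairing, after which exactness in the absolute (resp.\ relative) complex identifies the degree-zero boundary condition as Neumann, via $i_\nu f=0$ and $\iota_{\pl\Sigma}^*(i_\nu \dd f)=\pl_\nu f|_{\pl\Sigma}$ (resp.\ Dirichlet, via $\iota_{\pl\Sigma}^*f=f|_{\pl\Sigma}$)---is precisely the intended bookkeeping. The closedness caveat you flag is harmless, since the lemma is only ever invoked for closed forms (e.g.\ in the proof of Proposition \ref{harmpoles} and in Section \ref{sub:setupamp}).
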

Now, we will construct particular bases of $\mc{H}^1_R(\Sigma,\pl\Sigma)$, which have the nice property of being compactly supported. This will be particularly useful for gluing Segal's amplitudes later.
\begin{lemma}\label{compactsupp}
Let $R>0$ and let $\Sigma$ be an oriented compact surface with genus ${\mathfrak{g}}$ and ${\mathfrak{b}}$ boundary connected components $\pl_1\Sigma,\dots,\pl_{\mathfrak{b}}\Sigma$ 
and $\sigma_1,\dots,\sigma_{2{\mathfrak{g}}+{\mathfrak{b}}-1}$ be a basis of $\mc{H}_1(\Sigma,\pl \Sigma)$. Then there 
is a basis $\omega^{c}_1,\dots, \omega^{c}_{2{\mathfrak{g}}+{\mathfrak{b}}-1}$ of $\mc{H}^1_R(\Sigma,\pl\Sigma)$ made of closed forms that are compactly supported inside the interior 
$\Sigma^\circ$ of $\Sigma$ and dual to $\sigma_1,\dots,\sigma_{2{\mathfrak{g}}+{\mathfrak{b}}-1}$ in the sense that 
\[  \frac{1}{2\pi R}\int_{\sigma_k}\omega_j^c =\delta_{jk}.\]
Moreover, $\omega_{2{\mathfrak{g}}+j}^c=df_j$ for some smooth function $f_j$ on 
$\Sigma$ that is locally constant in a neighborhood of $\pl\Sigma$. 
\end{lemma}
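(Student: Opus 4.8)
The plan is to work with a canonical geometric basis \eqref{baserelative}, so that $\sigma_j\in\{a_i,b_i\}$ is an interior closed curve for $j\le 2{\mathfrak{g}}$ and $\sigma_{2{\mathfrak{g}}+l}=d_l$ is an arc joining $\pl_l\Sigma$ to $\pl_{l+1}\Sigma$ for $1\le l\le {\mathfrak{b}}-1$ (an arbitrary adapted basis is then recovered by applying the corresponding integer change of basis, which preserves both compact support and the splitting used for the ``moreover'' claim). Since the pairing \eqref{duality2} between $\mc{H}^1(\Sigma,\pl\Sigma)$ and $\mc{H}_1(\Sigma,\pl\Sigma)$ is nondegenerate and both spaces have dimension $2{\mathfrak{g}}+{\mathfrak{b}}-1$, the dual family of cohomology classes exists and is unique; the real content is to exhibit representatives that are \emph{simultaneously} smooth, compactly supported in $\Sigma^\circ$, and normalized so that $\frac{1}{2\pi R}\int_{\sigma_k}\omega_j^c=\delta_{jk}$ on the nose. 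I would build the two blocks of the basis by different means.

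For the arc block ($j=2{\mathfrak{g}}+l$) I would set $\omega^c_{2{\mathfrak{g}}+l}:=\dd f_l$, where $f_l\in C^\infty(\Sigma)$ is locally constant near $\pl\Sigma$, equal to $0$ near $\pl_1\Sigma,\dots,\pl_l\Sigma$ and to $2\pi R$ near $\pl_{l+1}\Sigma,\dots,\pl_{\mathfrak{b}}\Sigma$, and interpolated smoothly in the interior; such $f_l$ exists because the $\pl_i\Sigma$ are disjoint. Then $\dd f_l$ vanishes wherever $f_l$ is locally constant, hence is supported in $\Sigma^\circ$, which at once yields the ``moreover'' statement. Its periods follow by telescoping: $\int_{a_i}\dd f_l=\int_{b_i}\dd f_l=0$ (closed interior curves), while for the arc $d_m$ one has $\int_{d_m}\dd f_l=f_l|_{\pl_{m+1}\Sigma}-f_l|_{\pl_m\Sigma}=2\pi R\,\delta_{lm}$, which is exactly $\frac{1}{2\pi R}\int_{\sigma_k}\omega^c_{2{\mathfrak{g}}+l}=\delta_{(2{\mathfrak{g}}+l),k}$.

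For the interior block ($j\le 2{\mathfrak{g}}$) I would use Thom classes (Poincaré duals) of the symplectically conjugate curves. For a simple closed curve $\gamma\subset\Sigma^\circ$, choose a tubular neighborhood $N(\gamma)\cong\gamma\times(-\epsilon,\epsilon)\subset\Sigma^\circ$ with transverse coordinate $t$ and a cut-off $\chi(t)$ equal to $0$ near $-\epsilon$ and $1$ near $+\epsilon$; then $\eta_\gamma:=\dd\chi=\chi'(t)\,\dd t$, extended by zero, is a smooth closed $1$-form supported in $N(\gamma)$ satisfying $\int_\sigma\eta_\gamma=\iota(\sigma,\gamma)$ for every transverse cycle $\sigma$. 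Taking the dual of $a_i$ to be $2\pi R\,\eta_{b_i}$ and that of $b_i$ to be $-2\pi R\,\eta_{a_i}$ (signs dictated by \eqref{int_numbers}), the intersection relations \eqref{int_numbers} produce the correct Kronecker periods against all $a_{i'},b_{i'}$. The decisive point is that the arc periods vanish: in a canonical geometric basis the arcs $d_m$ are disjoint from the $a_i,b_i$, so shrinking $N(a_i),N(b_i)$ to avoid every $d_m$ forces $\int_{d_m}\eta_{a_i}=\int_{d_m}\eta_{b_i}=0$. These representatives are again supported in $\Sigma^\circ$.

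The $2{\mathfrak{g}}+{\mathfrak{b}}-1$ forms thus have period matrix $2\pi R\,{\rm Id}$ against $(\sigma_k)_k$, so by nondegeneracy of \eqref{duality2} they are linearly independent in cohomology and form a basis; since their periods over the $\sigma_k$ all lie in $2\pi R\Z$ and the $\sigma_k$ $\Z$-generate $\mc{H}_1(\Sigma,\pl\Sigma)$, they lie in $\mc{H}^1_R(\Sigma,\pl\Sigma)$. I expect the main obstacle to be not the abstract existence of the dual classes---which is immediate from \eqref{duality2}---but producing compactly supported representatives with the exact normalization at the same time: the Thom-class construction is what confines the support of the interior-block forms to $\Sigma^\circ$, and the disjointness built into the canonical geometric basis is precisely what annihilates the unwanted arc periods. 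The only routine-but-delicate bookkeeping is tracking the orientation signs in $\iota$ when fixing the coefficients of $\eta_{a_i},\eta_{b_i}$.
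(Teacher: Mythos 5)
Your proof is correct, but it takes a genuinely different route from the paper's. The paper starts from an abstract dual basis $\omega^{r}_1,\dots,\omega^{r}_{2{\mathfrak{g}}+{\mathfrak{b}}-1}$ of $\mc{H}^1(\Sigma,\pl\Sigma)$, whose existence is immediate from the duality \eqref{duality2}, and then \emph{corrects it near the boundary}: since $\iota_{\pl\Sigma}^*\omega^r_j=0$ forces $\int_{\pl_i\Sigma}\omega_j^r=0$, each $\omega^r_j$ admits a primitive $f_{ij}$ in a collar $U_i$ of $\pl_i\Sigma$ vanishing on $\pl_i\Sigma$, and subtracting $\sum_i \dd(\chi_i f_{ij})$ for cutoffs $\chi_i$ equal to $1$ near $\pl_i\Sigma$ kills the form in a neighborhood of $\pl\Sigma$ without changing any period. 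That handles an arbitrary basis in one stroke and needs no geometric model of the curves. You instead build explicit representatives for a canonical geometric basis \eqref{baserelative} --- Thom classes (bump forms $\dd\chi$ supported in tubular neighborhoods) of the symplectically conjugate interior curves for the cycle block, and exact forms $\dd f_l$ with $f_l$ locally constant near $\pl\Sigma$ and jumping by $2\pi R$ across boundary components for the arc block --- and then transport everything to a general basis by the integer change of basis. Your route is more constructive, and it delivers the ``moreover'' clause \emph{by construction}, whereas the paper obtains it a posteriori by integrating $\omega^c_{2{\mathfrak{g}}+j}$ from a boundary base point; the price is the orientation bookkeeping in $\iota$ and the preliminary reduction. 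One remark on that reduction, which you half-acknowledge with the word ``adapted'': the ``moreover'' clause cannot survive an \emph{arbitrary} integer change of basis, since if $\sigma_{2{\mathfrak{g}}+j}$ has a component along the image of $\mc{H}_1(\Sigma)$ in $\mc{H}_1(\Sigma,\pl\Sigma)$ then its dual form has a nonzero period over some closed loop and is not exact. But the paper's own argument carries the same implicit restriction --- its primitive $f_j(x)=\int_{\alpha_{x_0,x}}\omega^c_{2{\mathfrak{g}}+j}$ is well defined only when all absolute periods of $\omega^c_{2{\mathfrak{g}}+j}$ vanish, i.e.\ when $\sigma_1,\dots,\sigma_{2{\mathfrak{g}}}$ span the image of absolute homology in $\mc{H}_1(\Sigma,\pl\Sigma)$, as they do for the canonical geometric bases actually used in the sequel --- so this is a limitation of the statement as phrased, not a gap in your argument.
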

\begin{proof} Let us start with a basis $\omega^{r}_1,\dots, \omega^{r}_{2{\mathfrak{g}}+{\mathfrak{b}}-1}$ of $\mc{H}^1(\Sigma,\pl\Sigma)$, dual to  
$\sigma_1,\dots,\sigma_{2{\mathfrak{g}}+{\mathfrak{b}}-1}$ of $\mc{H}_1(\Sigma,\pl \Sigma)$ in the sense above. 
Since $\iota^*_{\pl\Sigma}\omega_j^r=0$, we see that $\int_{\pl_i\Sigma}\omega_j^r=0$ for all $i,j$. In particular, in a collar neighborhood $U_i$ of $\pl_i\Sigma$, we can define for $z_i\in \pl_i\Sigma$
\[ f_{ij}(z):=\int_{\alpha_{z_i,z}} \omega_j^r\]
which is a well-defined smooth function in $U_i$ if $\alpha_{z_i,z}\subset U_i$ is a smooth curve with initial 
endpoint $z_i$ and final endpoint $z$. Notice also that $f_{ij}|_{\pl_i \Sigma}=0$.
Let $\chi_i\in C^\infty(\Sigma)$ with support in $U_i$ and equal to $1$ near $\pl_i\Sigma$. Then $\omega^{c}_j:=\omega_j^r-\sum_{i=1}^{\mathfrak{b}} d(\chi_i f_{ij})$ belongs to $\mc{H}^1(\Sigma,\pl\Sigma)$ and is compactly supported inside $\Sigma^\circ$, and it provides a basis of  $\mc{H}^1(\Sigma,\pl\Sigma)$ since the integrals of $\omega_j^c$ on the cycles forming a basis of $\mc{H}_1(\Sigma,\pl \Sigma)$ are equal to the integrals of $\omega_j^r$ on these cycles.
The fact that $\omega_{2{\mathfrak{g}}+j}^c=df_j$ for some $f_j$ can be checked by writing $f_j(x)=\int_{\alpha_{x_0,x}}\omega_{2{\mathfrak{g}}+j}^c$ for some $x_0\in \pl\Sigma$ where $\alpha_{x_0,x}$ is a smooth curve with endpoint at $x_0$ and $x$ (depending smoothly on $x$), and noting that the result does not depend on the curve by our assumptions on $\omega_{2{\mathfrak{g}}+j}^c$. It is also locally constant near $\pl \Sigma$ since $\omega_{2{\mathfrak{g}}+j}^c$ is compactly supported in $\Sigma^\circ$.
\end{proof}

\begin{lemma}\label{boundary_forms_absolute}
Let $R>0$ and let $\Sigma$ be an oriented compact surface with genus ${\mathfrak{g}}$ and ${\mathfrak{b}}$ boundary connected components $\pl_1\Sigma,\dots,\pl_{\mathfrak{b}}\Sigma$, oriented positively with respect to $\Sigma$, let $c_i$ be the cycle corresponding to $\pl_i\Sigma$. 
Let $(\sigma_j)_{j=1,\dots,2{\mathfrak{g}}}$ be independent cycles of  $\mc{H}_1(\Sigma)$ so that $((\sigma_j)_{j\leq 2{\mathfrak{g}}},(c_i)_{i\leq {\mathfrak{b}}-1})$ form a basis of $\mc{H}_1(\Sigma)$.
Then there are ${\mathfrak{b}}-1$ independent closed forms $\omega_2^{\rm a},\dots,\omega^{\rm a}_{{\mathfrak{b}}}$ in $\mc{H}^1_R(\Sigma)$ such that  
\begin{align*} 
& \forall \ell=2,\dots, {\mathfrak{b}},\,   \frac{1}{2\pi R}\int_{\pl_1\Sigma} \omega_\ell^{\rm a}=-1, \quad \forall i \in [2, {\mathfrak{b}}], \,  \frac{1}{2\pi R} \int_{\pl_i\Sigma} \omega_\ell^{\rm a}=\delta_{\ell i}, \\
& \forall j=1,\dots,2{\mathfrak{g}}, \, \forall \ell=2,\dots,{\mathfrak{b}}, \, 
\int_{\sigma_j}\omega_\ell^{\rm a}=0.
\end{align*}
Moreover, $\omega_\ell^{\rm a}$ can be chosen so that $\omega^{\rm a}_\ell|_{U_i}=0$ for some open neighborhood $U_i$ of $\pl_i\Sigma$ if $i\not\in \{\ell,1\}$, 
while in $U_\ell$ and $U_1$, there are biholomorphisms $\psi_\ell: \{z\in \C\,| \, |z|\in (\delta,1]\}\to U_\ell$ and $\psi_1: \{z\in \C\,|\, |z|\in (\delta,1]\}\to U_1$ for some $\delta<1$ such that $\psi_\ell(\T)=\pl_\ell\Sigma$,  $\psi_1(\T)=\pl_1\Sigma$ and if $z=re^{i\theta}$ are radial coordinates 
\[ \psi_\ell^*\omega_\ell ^{\rm a}=R d\theta, \quad  \psi_1^*\omega_\ell^{\rm a}=-R d\theta.\]  
\end{lemma}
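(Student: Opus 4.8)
The plan is to first produce the correct absolute cohomology class and then to adjust its representative so that it takes the prescribed shape in the boundary collars. Throughout, $\ell\in\{2,\dots,{\mathfrak{b}}\}$ is fixed.

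First I would encode the requested periods as a linear functional on $\mc{H}_1(\Sigma)$ through its values on the basis $(\sigma_1,\dots,\sigma_{2{\mathfrak{g}}},c_1,\dots,c_{{\mathfrak{b}}-1})$, namely $\sigma_j\mapsto 0$, $c_1\mapsto -2\pi R$ and $c_i\mapsto 2\pi R\,\delta_{\ell i}$ for $i=2,\dots,{\mathfrak{b}}-1$. Since the pairing \eqref{duality1} is a perfect duality between $\mc{H}^1(\Sigma)$ and $\mc{H}_1(\Sigma)$ (both of dimension $2{\mathfrak{g}}+{\mathfrak{b}}-1$), there is a unique class realising this functional, and I would fix a smooth closed representative $\omega\in C^\infty_{\rm abs}(\Sigma,\Lambda^1\Sigma)$. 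Using the relation $\sum_{j=1}^{{\mathfrak{b}}}[c_j]=0$ one computes $\tfrac{1}{2\pi R}\int_{c_{{\mathfrak{b}}}}\omega=\delta_{\ell{\mathfrak{b}}}$, so $\omega$ already satisfies all the period conditions of the statement; as all its periods lie in $2\pi R\Z$, its class lies in $\mc{H}^1_R(\Sigma)$.

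Next I would install the local normal form. Set $n_\ell=1$, $n_1=-1$ and $n_i=0$ otherwise, and use the collar charts $\psi_i:\mathbb{A}_\delta\to U_i$ from Subsection \ref{subsub:boundary}, oriented so that $\int_{\pl_i\Sigma}R\,\dd\theta=2\pi R$ and with the collars $U_i$ mutually disjoint. On $U_i$ both $\omega$ and $n_iR\,\dd\theta$ are closed and, by the previous step, carry the same period $2\pi R\,n_i$ around the core circle $\pl_i\Sigma$; hence their difference has vanishing period on the annulus and is exact, $\omega-n_iR\,\dd\theta=\dd h_i$ with $h_i\in C^\infty(U_i)$ single-valued. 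Picking cutoffs $\chi_i\in C_c^\infty(U_i)$ equal to $1$ near $\pl_i\Sigma$, I would set, in the spirit of the proof of Lemma \ref{compactsupp},
\[
\omega_\ell^{\rm a}:=\omega-\sum_{i=1}^{{\mathfrak{b}}}\dd(\chi_i h_i).
\]
Because each $\chi_i h_i$ is a genuine single-valued function, subtracting $\dd(\chi_i h_i)$ changes no period, so $\omega_\ell^{\rm a}$ is closed, has the same periods as $\omega$, and lies in $\mc{H}^1_R(\Sigma)$. On the region where $\chi_i\equiv1$—which, after shrinking the charts $\psi_i$, we may take to be all of $U_i$—one has $\omega_\ell^{\rm a}=\omega-\dd h_i=n_iR\,\dd\theta$; this gives $\psi_\ell^*\omega_\ell^{\rm a}=R\,\dd\theta$ and $\psi_1^*\omega_\ell^{\rm a}=-R\,\dd\theta$ on the full annulus, and $\omega_\ell^{\rm a}\equiv0$ near $\pl_i\Sigma$ for $i\notin\{1,\ell\}$. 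Since $i_\nu(\dd\theta)=0$ along the boundary one checks $\omega_\ell^{\rm a}\in C^\infty_{\rm abs}$, and the period matrix $\big(\tfrac{1}{2\pi R}\int_{\pl_i\Sigma}\omega_\ell^{\rm a}\big)_{i,\ell=2,\dots,{\mathfrak{b}}}=\mathrm{Id}$ forces $\omega_2^{\rm a},\dots,\omega_{{\mathfrak{b}}}^{\rm a}$ to be independent.

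I expect the only genuine obstacle to be the normal-form step: a closed $1$-form with nonzero period cannot, in general, be cut off in a collar while remaining closed. The construction evades this precisely because the model $n_iR\,\dd\theta$ is arranged to carry the same boundary period as $\omega$, so that the correction $\omega-n_iR\,\dd\theta$ is exact on the collar and the cutoff subtracts only a compactly supported exact form. The remaining care is bookkeeping of orientations, ensuring the sign convention $\int_{\pl_i\Sigma}R\,\dd\theta=+2\pi R$ fixed in Subsection \ref{subsub:boundary}, which produces the value $-1$ on $\pl_1\Sigma$ and $+1$ on $\pl_\ell\Sigma$.
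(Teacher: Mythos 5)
Your proposal is correct and follows essentially the same route as the paper: realize the prescribed periods by duality between $\mc{H}^1(\Sigma)$ and $\mc{H}_1(\Sigma)$, use $\sum_{j}[c_j]=0$ to pin down the remaining boundary period, and then normalize in disjoint collars by subtracting $\dd(\chi_i h_i)$ where $h_i$ is a single-valued primitive of $\omega-n_iR\,\dd\theta$ on the annulus (exactly the cutoff mechanism of Lemma \ref{compactsupp}). The only cosmetic difference is that you prescribe the period on $c_1$ and deduce the one on $c_{\mathfrak{b}}$, whereas the paper does the reverse, which changes nothing.
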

\begin{proof} Let us take
$\omega_2,\dots, \omega_{{\mathfrak{b}}}\in \mc{H}^1(\Sigma)$ such that $(2\pi R)^{-1}\int_{c_i}\omega_\ell=\delta_{i\ell}$ for $i\in [2, {\mathfrak{b}}]$ and $\int_{\sigma_j}\omega_\ell=0$ for all $j\leq 2{\mathfrak{g}}$. We necessarily have  $(2\pi R)^{-1}\int_{c_1}\omega_\ell =-1$ since $\sum_{i=1}^{\mathfrak{b}}c_i=0$ in $\mc{H}_1(\Sigma)$. With the arguments of Lemma \ref{compactsupp}, we can replace $\omega_\ell$ by $\omega_\ell':=\omega_\ell-\sum_{i\not\in \{1,\ell\}}df_{\ell i}$ for some $f_{\ell i}\in C^\infty(\Sigma)$ 
supported in $U_i$ so that $\supp(\omega_\ell')\cap \pl_i\Sigma=\emptyset$ for $i\notin\{\ell,1\}$. Now, in $U_\ell$, we see that 
$\omega_\ell'-R (\psi_\ell)_*d\theta$ integrates to $0$ on $c_\ell=\pl_\ell\Sigma$. Therefore there is $f_{\ell\ell}\in C^\infty(U_\ell)$ with compact 
support in $U_\ell$ such that $\omega_\ell'=R (\psi_\ell)_*d\theta+df_{\ell\ell}$ in $U_\ell$. The same argument shows that there is $f_{\ell {\mathfrak{b}}}\in C^\infty(U_{\mathfrak{b}})$ with compact support in $U_b$ such that $\omega_\ell'=-R (\psi_{\mathfrak{b}})_*d\theta+df_{\ell {\mathfrak{b}}}$ in $U_{\mathfrak{b}}$, and we can just choose
\[ \omega_\ell^{\rm a}:=\omega_\ell-\sum_{i=1}^{\mathfrak{b}} df_{\ell i}\]
which satisfies the desired properties.
\end{proof}

\subsection{Gluing of surfaces and homology/cohomology} \label{sub:opencohomo}
%%%%%%%%%%%%%%%%%%%%%%%%%%%%%%%%%%%%%

Consider two surfaces $\Sigma_1,\Sigma_2$ with parametrized boundary $\boldsymbol{\zeta}_i=(\zeta_{i1},\dots \zeta_{i\mathfrak{b}_i})$ for $i=1,2$ where $\zeta_{ij}: \T\to \pl_j \Sigma_i$ by identifying 
$\pl_1\Sigma_1\sim \pl_1\Sigma_2$ using the parametrizations of the boundary. 
We can construct a basis of homology on the glued surface  $\Sigma:=\Sigma_1\#\Sigma_2$ from bases of relative/absolute homology/cohomology on $\Sigma_j$.

\begin{figure}[!h]
     \centering
     \begin{subfigure}[b]{0.54\textwidth}
         \centering
         \includegraphics[width=1\textwidth]{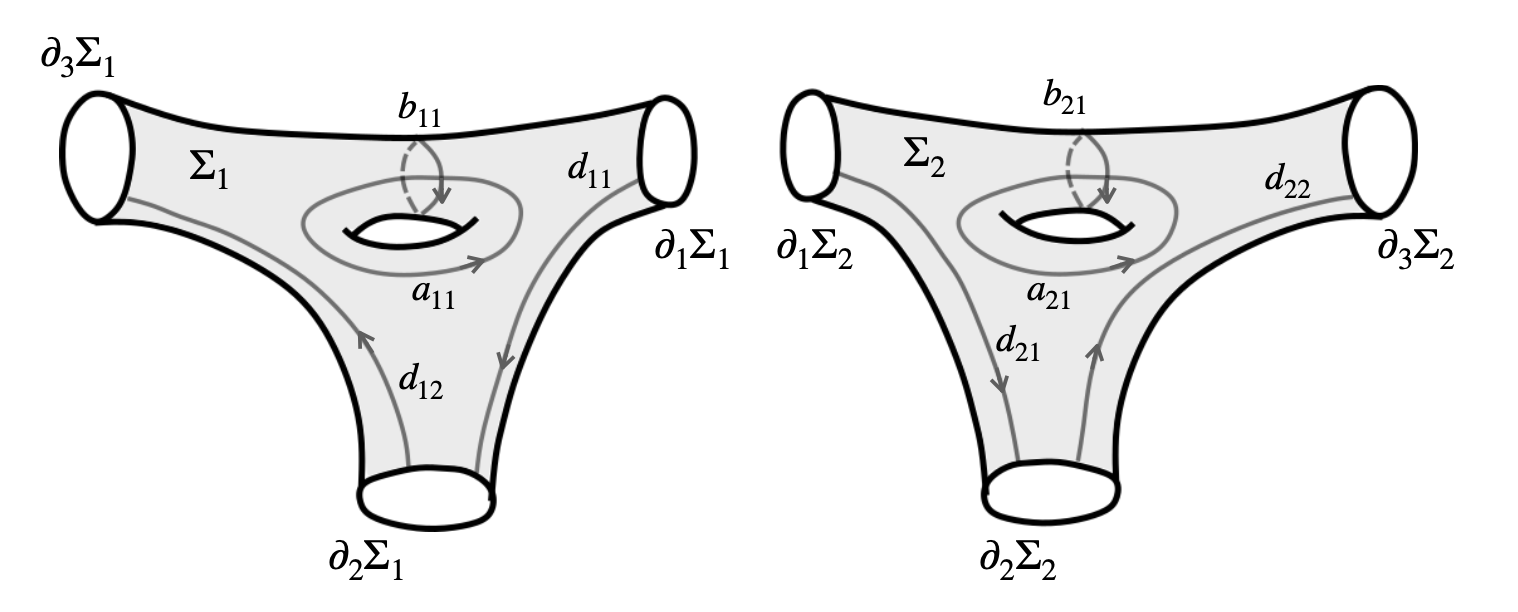}
         \caption{Relative homology on $\Sigma_1$ and $\Sigma_2$.}
         %\label{fig:y equals x}
     \end{subfigure}
     \hfill
     \begin{subfigure}[b]{0.45\textwidth}
         \centering
         \includegraphics[width=1\textwidth]{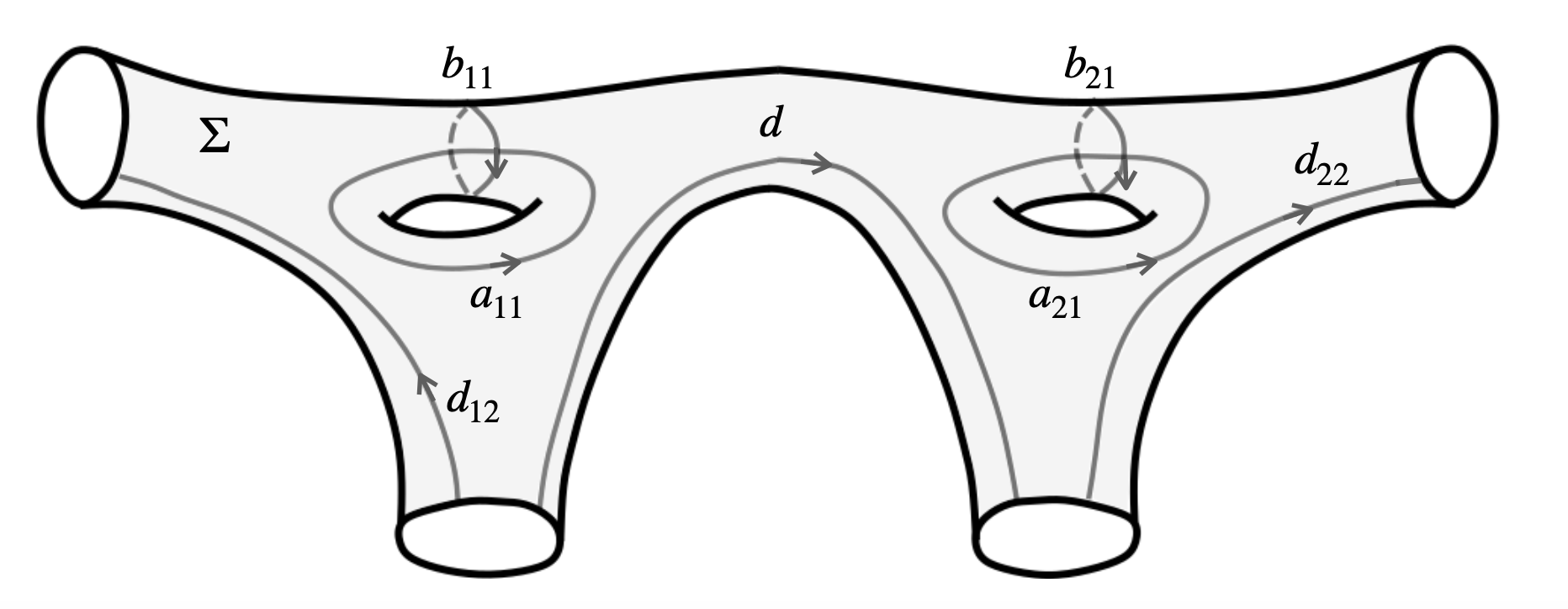}
         \caption{A canonical geometric  basis of  $\mc{H}_1(\Sigma,\pl\Sigma)$.}
        % \label{fig:three sin x}
     \end{subfigure}
        \caption{Gluing relative homology bases on $\Sigma=\Sigma_1\#\Sigma_2$.}
        \label{fig:glue_homology}
\end{figure}

%\begin{figure}
%\includegraphics[scale=0.5]{glue_homology.png}
%\caption{Gluing relative homology bases}
%\label{fig:glue_homology}
%\end{figure}

\begin{lemma}\label{baseglue}
For $i=1,2$, let $\Sigma_i$  be two oriented surfaces with  genus ${\mathfrak{g}}_i$ and ${\mathfrak{b}}_i$ boundary connected components parametrized by $\boldsymbol{\zeta}_i=(\zeta_{i1},\dots,\zeta_{i\mathfrak{b}_i})$ with $\zeta_{ij}: \T\to \pl_j\Sigma_i$ all oriented positively. 
Let $\Sigma$ be the oriented surface  obtained by gluing $\Sigma_1$ to $\Sigma_2$ using the identifications $\pl_1 \Sigma_1\sim -\pl_1 \Sigma_2$ given by $\zeta_{11}(e^{i\theta})=\zeta_{21}(e^{-i\theta})$. The surface $\Sigma$ has genus ${\mathfrak{g}}_1+{\mathfrak{g}}_2$ and ${\mathfrak{b}}_1+{\mathfrak{b}}_2-1$ boundary connected components.\\
1) Let $\boldsymbol{\sigma}_i:=S_i\cup D_i$ be a canonical geometric basis of the relative homology $\mc{H}_1(\Sigma_j,\pl\Sigma_i)$  as described in \eqref{baserelative} for $j=1,2$, with 
\[S_i=( a_{i1}, b_{i1},\dots,a_{i{\mathfrak{g}}_i}, b_{i{\mathfrak{g}}_i}), \quad  D_i:=( d_{i1},\dots,d_{i{\mathfrak{b}}_i-1})  \]
and where $d_{i1}$ are chosen so that the endpoint of $d_{11}$ on $\pl_1\Sigma_1$ coincide with the endpoint of $d_{21}$ on $\pl_1\Sigma_2$. Let 
\[\begin{gathered}
D:=d_{11}-d_{21}\subset \mc{H}_1(\Sigma,\pl \Sigma), \quad  \hat{D}_i=(d_{i2},\dots,d_{i{\mathfrak{b}}_i-1}) \in \mc{H}_1(\Sigma,\pl\Sigma)
\end{gathered}\] 
where we identify cycles in $\Sigma_i$ with their image in $\Sigma$ induced by the inclusion $\Sigma_i\to \Sigma$.
Then the set 
\begin{equation}\label{basisH_1rel} 
\boldsymbol{\sigma}:= D \cup S_1 \cup S_2\cup \hat{D}_1\cup \hat{D}_2  \subset \mc{H}_1(\Sigma,\pl\Sigma)
\end{equation}
forms a canonical geometric basis of $\mc{H}_1(\Sigma,\pl \Sigma)$. See Figure \ref{fig:glue_homology}. We say that $\boldsymbol{\sigma}$ is the gluing of the relative homology bases $\boldsymbol{\sigma}_1$ and $\boldsymbol{\sigma}_2$ and we use the notation 
\[\boldsymbol{\sigma}=\boldsymbol{\sigma}_1\# \boldsymbol{\sigma}_2.\] 
2) Let $\boldsymbol{\sigma}_i= S_i\cup C_i$ be a canonical geometric basis of the absolute homology $\mc{H}_1(\Sigma_i)$  as described in \eqref{baseabsolute} for $i=1,2$, with 
\[S_i=( a_{i1},b_{i1},\dots,a_{i{\mathfrak{g}}_i}, b_{i{\mathfrak{g}}_i}), \quad C_i=(c_{i2}\dots,c_{i{\mathfrak{b}}_i})\subset \mc{H}_1(\Sigma_i)\]  
and $c_{ij}=\pl_j\Sigma_i$ the positively oriented boundary cycles associated to $\pl_1\Sigma_i,\dots,\pl_{{\mathfrak{b}}_i-1}\Sigma_i$.  Let 
\[
\hat{C}_i=(c_{i2},\dots,c_{i({\mathfrak{b}}_i-1)})\subset \mc{H}_1(\Sigma)
\]
where we identify cycles in $\Sigma_i$ with their image in $\Sigma$ induced by the inclusion $\Sigma_i\to \Sigma$. Then the set 
\begin{equation}\label{basisH_1abs}  
\boldsymbol{\sigma}:=S_1\cup S_2\cup C_1\cup \hat{C}_2 \subset \mc{H}_1(\Sigma) 
\end{equation}
is a canonical geometric basis of $\mc{H}_1(\Sigma)$. See Figure \ref{fig:gluing_homology_a.png}. We say that $\boldsymbol{\sigma}$ is the gluing of the absolute homology bases $\boldsymbol{\sigma}_1$ and $\boldsymbol{\sigma}_2$ and we use the notation 
\[\boldsymbol{\sigma}=\boldsymbol{\sigma}_1\# \boldsymbol{\sigma}_2.\] 
\\
3) For $i=1,2$, let $\omega^{i,{\rm c}}_1,\dots, \omega^{i,{\rm c}}_{2{\mathfrak{g}}_i+{\mathfrak{b}}_i-1}\in \mc{H}_R^1(\Sigma_i,\pl \Sigma_i)$ 
be the compactly supported basis from Lemma \ref{compactsupp} associated to $S_i\cup D_i$, with $\omega^{i,{\rm c}}_{2{\mathfrak{g}}_i+j}$ being the form dual to $d_{ij}$
for $j=1,\dots,{\mathfrak{b}}_i-1$. Let $\omega^{1,{\rm a}}_2,\dots, \omega^{1,{\rm a}}_{{\mathfrak{b}}_1}\in \mc{H}^1_R(\Sigma_1)$ and $\omega^{2,{\rm a}}_1,\dots, \omega^{2,{\rm a}}_{{\mathfrak{b}}_2-1}\in \mc{H}^1_R(\Sigma_2)$ be the closed forms from Lemma  \ref{boundary_forms_absolute} chosen such that, 
\[\begin{gathered}
\forall j,\ell \in [2,\mathfrak{b}_1],\,  \frac{1}{2\pi R}\int_{c_{11}}\omega^{1,{\rm a}}_j =-1,\, \quad  \frac{1}{2\pi R}\int_{c_{1\ell}}\omega^{1,{\rm a}}_j =\delta_{j\ell} \\
\forall j,\ell \in [1,\mathfrak{b}_2-1], \, \frac{1}{2\pi R}\int_{c_{2{\mathfrak{b}}_2}}\omega^{2,{\rm a}}_j =-1, \quad  \frac{1}{2\pi R}\int_{c_{2\ell}}\omega^{2,{\rm a}}_j =\delta_{j\ell}.
\end{gathered}\]
Then $\omega^{i,{\rm c}}_1,\dots, \omega^{i,{\rm c}}_{2{\mathfrak{g}}_i+{\mathfrak{b}}_i-1}\subset \mc{H}^1_R(\Sigma_i,\pl \Sigma_i)$ for $i=1,2$, and $\omega^{2,{\rm a}}_2,\dots,\omega^{2,{\rm a}}_{{\mathfrak{b}}_2-1}\subset \mc{H}^1_R(\Sigma_2)$, can all be considered as smooth closed forms on $\Sigma$, extending them by $0$ outside $\Sigma_j$ and $\Sigma_2$ respectively --- recall that $\omega^{2,{\rm a}}_j=0$ near $\pl_{1}\Sigma_2$. The forms 
\[\omega_j^{\rm a}:= \textbf{1}_{\Sigma_1}\omega^{1,{\rm a}}_j+\textbf{1}_{\Sigma\setminus \Sigma_1}\omega^{2,{\rm a}}_{1} \, \textrm{ for }j \in [2,{\mathfrak{b}}_1]\] 
are also smooth and closed on $\Sigma$. The set 
\begin{equation}\label{basisH^1_abs}
 \omega^{1,{\rm c}}_1,\dots, \omega^{1,{\rm c}}_{2{\mathfrak{g}}_1}, \omega^{2,{\rm c}}_1,
\dots, \omega^{2,{\rm c}}_{2{\mathfrak{g}}_2}, \omega_{2}^{1,{\rm a}},\dots,  \omega_{{\mathfrak{b}}_1}^{1,{\rm a}}, 
\omega_{2}^{2,{\rm a}},\dots  \omega_{{\mathfrak{b}}_2-1}^{2,{\rm a}}
\end{equation}
is a basis of $\mc{H}^1_R(\Sigma)$, dual to \eqref{basisH_1abs}. 
The set 
\begin{equation}\label{basisH^1_rel} 
\omega^{1,{\rm c}}_1,\dots, \omega^{1,{\rm c}}_{2{\mathfrak{g}}_1+{\mathfrak{b}}_1-1}, \omega^{2,{\rm c}}_1,
\dots, \omega^{2,{\rm c}}_{2{\mathfrak{g}}_2},  \omega^{2,{\rm c}}_{2{\mathfrak{g}}_2+2},\dots, \omega^{2,{\rm c}}_{2{\mathfrak{g}}_2+{\mathfrak{b}}_2-1}
\end{equation}
is a basis of $\mc{H}^1_R(\Sigma,\pl \Sigma)$, dual to \eqref{basisH_1rel}.
\end{lemma}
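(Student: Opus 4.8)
The plan is to deduce all three parts at once from the duality pairing $(\omega,\sigma)\mapsto \frac{1}{2\pi R}\int_\sigma\omega$ between $\mc{H}^1_R$ and the homology groups. I would begin with the topological bookkeeping. Gluing $\Sigma_1$ and $\Sigma_2$ along a single boundary circle is additive for the Euler characteristic, $\chi(\Sigma)=\chi(\Sigma_1)+\chi(\Sigma_2)$, and the two glued circles merge into one interior circle; this gives genus $\mathfrak{g}_1+\mathfrak{g}_2$ and $\mathfrak{b}_1+\mathfrak{b}_2-2$ boundary circles, so that $\dim\mc{H}_1(\Sigma)=\dim\mc{H}_1(\Sigma,\pl\Sigma)=2(\mathfrak{g}_1+\mathfrak{g}_2)+\mathfrak{b}_1+\mathfrak{b}_2-3$. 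A direct count shows that each of the four families \eqref{basisH_1rel}, \eqref{basisH_1abs}, \eqref{basisH^1_abs}, \eqref{basisH^1_rel} has exactly this cardinality, so for each it is enough to establish either linear independence or spanning, and the pairing computation will supply both simultaneously.

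For parts 1) and 2) I would first check that the listed cycles have the geometric shape required by \eqref{baserelative} and \eqref{baseabsolute}. The handles $S_1,S_2$ are simple closed curves contained in the disjoint interiors $\Sigma_1^\circ,\Sigma_2^\circ$, so they retain the intersection numbers \eqref{int_numbers} from each piece and have vanishing mutual intersection. The arc $D=d_{11}-d_{21}$ is, after identifying the common endpoint on $\pl_1\Sigma_1\sim\pl_1\Sigma_2$, a single simple relative arc joining $\pl_2\Sigma_1$ to $\pl_2\Sigma_2$ and crossing the gluing circle once, while the families $\hat D_i$, $\hat C_i$ keep the endpoint and boundary data they carried in $\Sigma_i$. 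Since in each $\Sigma_i$ the arcs $d$ (resp. the boundary cycles $c$) were chosen disjoint from the handles and from one another, the whole collection has the intersection pattern of a canonical geometric basis; linear independence, and hence the basis property, is then a consequence of the period computation of part 3).

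For part 3) the forms $\omega^{i,{\rm c}}_j$ and the forms $\omega^{2,{\rm a}}_\ell$ with $\ell\ge2$ are supported away from the gluing circle (recall $\omega^{2,{\rm a}}_\ell=0$ near $\pl_1\Sigma_2$), so they extend by zero to smooth closed forms on $\Sigma$ with periods in $2\pi R\Z$. The one delicate object is the glued form $\omega^{{\rm a}}_j=\mathbf{1}_{\Sigma_1}\omega^{1,{\rm a}}_j+\mathbf{1}_{\Sigma\setminus\Sigma_1}\omega^{2,{\rm a}}_1$: using the normal forms of Lemma \ref{boundary_forms_absolute}, $\omega^{1,{\rm a}}_j\sim -R\,d\theta$ near the period $-1$ boundary $\pl_1\Sigma_1$, whereas $\omega^{2,{\rm a}}_1\sim R\,d\theta$ near the period $+1$ boundary $\pl_1\Sigma_2$; because the identification $\zeta_{11}(e^{i\theta})=\zeta_{21}(e^{-i\theta})$ reverses orientation ($d\theta\mapsto-d\theta$), the two local expressions coincide and $\omega^{{\rm a}}_j$ is a genuine smooth closed form on $\Sigma$. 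With all forms in hand I would compute the matrix $\frac{1}{2\pi R}\int_{\sigma_k}\omega_j$: a form supported in one piece only sees the part of a cycle lying in that piece, so the entries reduce to the dualities already proved in Lemmas \ref{compactsupp} and \ref{boundary_forms_absolute}, the handle-dual compactly supported forms moreover having vanishing period on every boundary cycle since they vanish near $\pl\Sigma$. In particular $\int_D\omega^{1,{\rm c}}_{2\mathfrak{g}_1+1}=\int_{d_{11}}\omega^{1,{\rm c}}_{2\mathfrak{g}_1+1}-\int_{d_{21}}\omega^{1,{\rm c}}_{2\mathfrak{g}_1+1}=2\pi R-0$, which is exactly why the $d_{21}$-dual form is dropped in \eqref{basisH^1_rel}, and $\frac{1}{2\pi R}\int_{c_{2\ell}}\omega^{{\rm a}}_j=\delta_{1\ell}=0$ for $\ell\ge2$ accounts for the vanishing on $\hat C_2$. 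In every case the matrix is the identity, which proves simultaneously that the cohomology families are bases dual to the stated homology families and, a posteriori, that those homology families are themselves bases.

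The step I expect to be genuinely substantive is the smooth matching of $\omega^{{\rm a}}_j$ across the identified circle together with the accompanying sign bookkeeping: which boundary carries the $-1$ period, and how the orientation-reversing gluing flips $d\theta$ so that $-R\,d\theta$ on one side agrees with $R\,d\theta$ on the other. Once these signs are pinned down, the remaining work is the routine localization of each period integral to one of the two pieces.
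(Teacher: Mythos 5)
Your proposal is correct, and it organizes the logic differently from the paper. The paper dispatches the homology statements 1) and 2) by Mayer--Vietoris (``a straightforward exercise of topology'') and then proves 3) by exactly the period computations you perform, listing the nonvanishing pairings such as $\frac{1}{2\pi R}\int_{d_{11}-d_{21}}\omega^{1,{\rm c}}_{2{\mathfrak{g}}_1+1}=1$ and $\frac{1}{2\pi R}\int_{c_{1j}}\omega^{\rm a}_j=1$. You run the implication in the opposite direction: the period matrix being the identity, combined with the dimension count $\dim \mc{H}_1(\Sigma)=\dim\mc{H}_1(\Sigma,\pl\Sigma)=2({\mathfrak{g}}_1+{\mathfrak{g}}_2)+{\mathfrak{b}}_1+{\mathfrak{b}}_2-3$, yields the basis property of the homology families as a corollary of part 3), so Mayer--Vietoris enters only in the background (to know the dimension, and to see that the extensions by zero have all periods in $2\pi R\Z$, since $\mc{H}_1(\Sigma)$ is generated by cycles lying in the two pieces). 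This makes the argument more self-contained, at the cost of making 1)--2) depend on the construction of the dual forms, whereas the paper establishes them independently. A genuine addition on your side is the smoothness check for the glued form $\omega^{\rm a}_j$: the matching of $-R\,d\theta$ at the period-$(-1)$ boundary $\pl_1\Sigma_1$ with $+R\,d\theta$ at the period-$(+1)$ boundary $\pl_1\Sigma_2$ under the orientation-reversing identification $d\theta\mapsto -d\theta$ is asserted in the statement of the lemma but never written out in the paper's proof, and your signs agree with the normalizations of Lemma \ref{boundary_forms_absolute}.

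Two small remarks. First, your boundary count ${\mathfrak{b}}_1+{\mathfrak{b}}_2-2$ is the correct one, consistent with Section \ref{sub:gluing}; the ``${\mathfrak{b}}_1+{\mathfrak{b}}_2-1$'' in the lemma statement is a typo, and your cardinality checks match the true dimension. Second, since these are $\Z$-bases rather than $\R$-bases, you should complete the dimension-count deduction by one more line: the dual forms lie in $\mc{H}^1_R(\Sigma)$ (resp.\ $\mc{H}^1_R(\Sigma,\pl\Sigma)$), so for any integral class written as $\sigma=\sum_k t_k\sigma_k$ with $t_k\in\Q$ one has $2\pi R\,t_j=\int_\sigma\omega_j\in 2\pi R\Z$, forcing $t_j\in\Z$; hence the listed cycles span over $\Z$ and not merely over $\R$. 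This is a one-sentence patch, comparable in weight to the paper's own ``readily checked'', not a gap in the approach.
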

\begin{proof} Part 1) is a straightforward exercise of topology using Mayer-Vietoris. For 2), first consider the relative cohomology case. It is readily checked that the forms $\omega^{i,c}_j$ have vanishing integrals on all cycles of \eqref{basisH_1rel} except 
\[\begin{gathered}
\frac{1}{2\pi R}\int_{d_{11}-d_{21}}\omega^{1,c}_{2{\mathfrak{g}}_1+1}=1,\quad \frac{1}{2\pi R}\int_{d_{1j}}\omega^{1,c}_{2{\mathfrak{g}}_1+j}=1\, 
\textrm{ if } j\in [2,{\mathfrak{b}}_1-1], \\
\frac{1}{2\pi R}\int_{d_{2j}}\omega^{2,c}_{2{\mathfrak{g}}_2+j}=1 , \, \textrm{ if } j\in [2,{\mathfrak{b}}_2-1],  \quad \frac{1}{2\pi R}\int_{\sigma_{ij}}\omega^{i,c}_j=1 \textrm{ if } j\in [1,2{\mathfrak{g}}_i].
\end{gathered}\] 
where $(\sigma_{i1},\dots,\sigma_{i2\mathfrak{g}_i})=(a_{i1},b_{i1},\dots,a_{i\mathfrak{g}_i},b_{i\mathfrak{g}_i})$. This shows that \eqref{basisH^1_rel} is a dual basis to \eqref{basisH_1rel}.

Similarly,  the forms $\omega^{2,{\rm a}}_j$ and $\omega^{\rm a}_j$ have vanishing integrals on all cycles of \eqref{basisH_1rel} except for the following
\[
\frac{1}{2\pi R}\int_{c_{1j}}\omega^{a}_j=1 \,  \textrm{ if } j\in [2,{\mathfrak{b}}_1] ,\quad \frac{1}{2\pi R}\int_{c_{2j}}\omega^{2,a}_{j}=1\, \textrm{ if } j\in [2,{\mathfrak{b}}_2-1] 
\] 
which implies, using the pairing we already did in the relative case, that \eqref{basisH^1_abs} is a dual basis to \eqref{basisH_1abs}.
\end{proof}

\begin{figure}[!h]
     \centering
     \begin{subfigure}[b]{0.51\textwidth}
         \centering
         \includegraphics[width=1\textwidth]{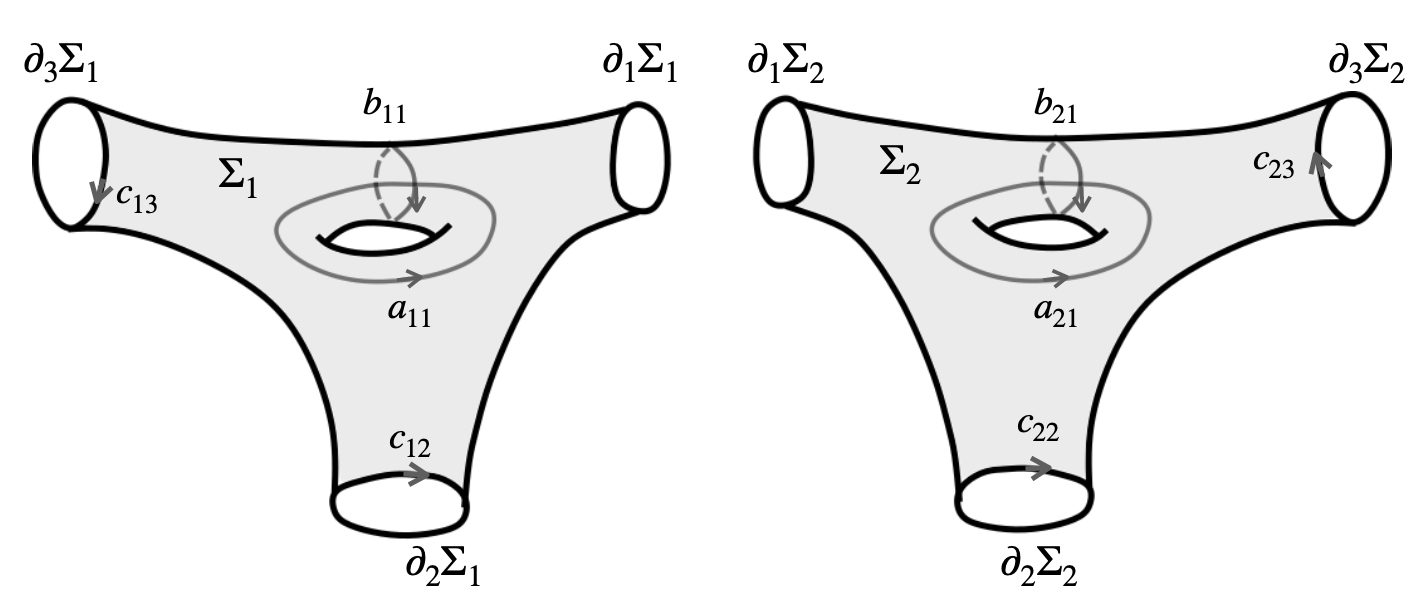}
         \caption{Homology bases on $\Sigma_1$ and $\Sigma_2$.}
         %\label{fig:y equals x}
     \end{subfigure}
     \hfill
     \begin{subfigure}[b]{0.47\textwidth}
         \centering
         \includegraphics[width=1\textwidth]{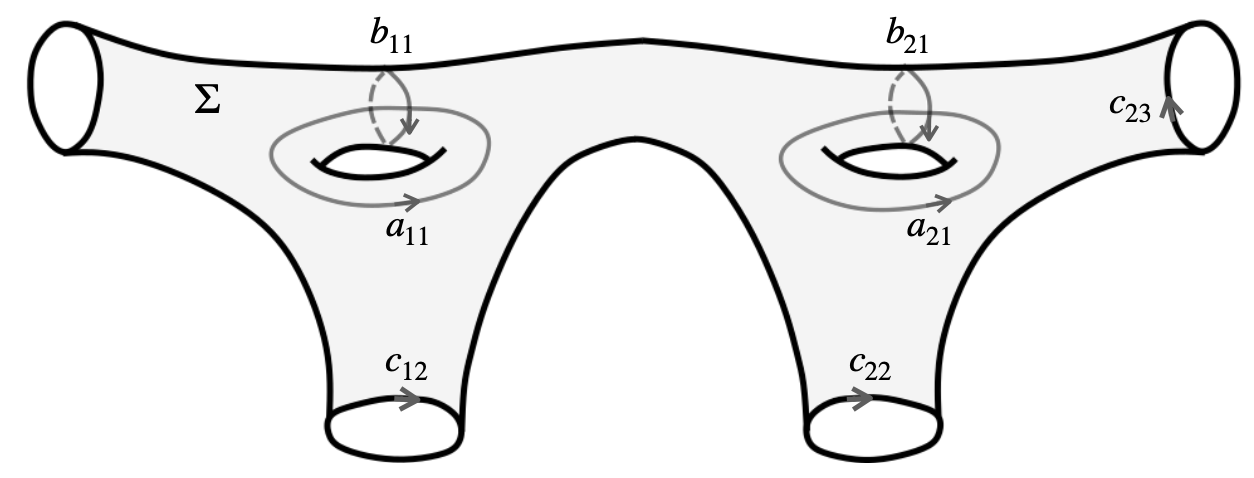}
         \caption{Homology  basis of  $\mc{H}_1(\Sigma)$.}
        % \label{fig:three sin x}
     \end{subfigure}
        \caption{Gluing  homology bases on $\Sigma=\Sigma_1\#\Sigma_2$.}
        \label{fig:gluing_homology_a.png}
\end{figure}

%
%\begin{figure}
%\includegraphics[scale=0.45]{gluing_homology_a.png}
%\caption{Gluing homology bases}\label{fig:gluing_homology_a.png}
%\end{figure}

Similarly, with the same argument, we can glue two boundary components on a connected oriented surface.
\begin{lemma}\label{baseselfglue}
Let $\Sigma$ be an oriented surface with genus ${\mathfrak{g}}$ and ${\mathfrak{b}}$ boundary connected 
components $\zeta_\ell: \T\to \pl_\ell\Sigma$ for $\ell\in [1, {\mathfrak{b}}]$, all oriented positively. 
Let $\Sigma^{\#}$ be the oriented surface  obtained by identifying $\pl_1 \Sigma\sim - \pl_2 \Sigma$ in $\Sigma$ via $\zeta_1(e^{i\theta})=\zeta_2(e^{-i\theta})$, where the minus sign $-\pl_2\Sigma$ denotes  $\pl_2\Sigma$  with the reverse orientation. The surface $\Sigma^{\#}$ has genus ${\mathfrak{g}}+1$ and ${\mathfrak{b}}-2$ boundary connected components.\\
1) Let  $\boldsymbol{\sigma}=S\cup D$ be a canonical geometric basis of the relative homology $\mc{H}_1(\Sigma,\pl\Sigma)$  as described in \eqref{baserelative}, with 
\[S=( a_1,\dots,a_{{\mathfrak{g}}_j}, b_1,\dots,b_{{\mathfrak{g}}}), \quad  D:=( d_1,\dots,d_{{\mathfrak{b}}-1})  \]
and where $d_1$ is chosen so that the endpoint of $d_1$ on $\pl_1\Sigma$ coincide with the endpoint of 
$d_1$ on $\pl_2\Sigma$ after gluing $\pl_1\Sigma$ with $\pl_2\Sigma$. 
Let 
\[  C=(c_1,\dots,c_{\mathfrak{b}})\subset \mc{H}_1(\Sigma)\]  
be the boundary cycles associated to $\pl_1\Sigma,\dots,\pl_{{\mathfrak{b}}}\Sigma$ and let $c^\#_\ell$ and $d^\#_\ell$ the image of $c_\ell$ and $d_\ell$ in $\Sigma^{\#}$ after gluing; in particular $c_1^\#=-c_2^\#$ and $d_1^\#$ is a closed curve. 
The set    
\begin{equation}\label{basisH_1rel'} 
\boldsymbol{\sigma}^\#:= S \cup d_1^\#\cup c_1^\#\cup \cup_{\ell=3}^{{\mathfrak{b}}-1}d_j^\# \subset \mc{H}_1(\Sigma,\pl\Sigma)
\end{equation}
is a basis of $\mc{H}_1(\Sigma^{\#},\pl \Sigma^{\#})$. \\
2) With the same notation as in 1), the set
\begin{equation}\label{basisH_1abs'}  
 \boldsymbol{\sigma}^\#:= S \cup d_1^\# \cup c_1^\#\cup \cup_{\ell=3}^{{\mathfrak{b}}-1}c^\#_\ell \subset \mc{H}_1(\Sigma^{\#}) 
\end{equation}
is a canonical geometric basis of $\mc{H}_1(\Sigma^{\#})$.\\
3) Let $\omega^{{\rm c}}_1,\dots, \omega^{{\rm c}}_{2{\mathfrak{g}}+{\mathfrak{b}}-1}\in \mc{H}^1(\Sigma,\pl \Sigma)$ 
be the compactly supported basis from Lemma \ref{compactsupp} associated to $S\cup D$, with $\omega^{{\rm c}}_{2{\mathfrak{g}}+\ell}$ being the form dual to $d_\ell$
for $\ell=1,\dots,{\mathfrak{b}}-1$. Let $\omega^{{\rm a}}_1,\dots, \omega^{{\rm a}}_{{\mathfrak{b}}-1}\in \mc{H}^1(\Sigma)$ be the closed forms from Lemma  \ref{boundary_forms_absolute}. 
Then the form $\omega_1^{{\rm a}}-\omega_2^{\rm a}$ and $\omega_{2\mathfrak{g}+1}^{\rm c}$ both induce smooth closed $1$-forms $\omega_{c_1^\#}$ and $\omega_{d_1^\#}$ 
on the glued surface $\Sigma^{\#}$. 
The set 
\begin{equation}\label{basisH^1_abs'}
 \omega^{{\rm c}}_1,\dots, \omega^{{\rm c}}_{2{\mathfrak{g}}}, \omega_{d_1^\#}, \omega_{c_1^\#}, \omega^{\rm a}_3,\dots,\omega^{\rm a}_{{\mathfrak{b}}-1}
\end{equation}
is a basis of $\mc{H}^1(\Sigma^{\#})$, dual to \eqref{basisH_1abs'}. 
The set 
\begin{equation}\label{basisH^1_rel'} 
\omega^{{\rm c}}_1,\dots, \omega^{{\rm c}}_{2{\mathfrak{g}}}, \omega_{d_1^\#}, \omega_{c_1^\#}, \omega^{\rm c}_{2{\mathfrak{g}}+3},\dots,\omega^{\rm c}_{2{\mathfrak{g}}+{\mathfrak{b}}-1}
\end{equation}
is a basis of $\mc{H}^1(\Sigma^{\#},\pl \Sigma^{\#})$, dual to \eqref{basisH_1rel'}.
\end{lemma}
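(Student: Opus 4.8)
The plan is to mirror the proof of Lemma \ref{baseglue}, the only structural novelty being that identifying $\pl_1\Sigma$ with $-\pl_2\Sigma$ inside a single connected surface glues on a handle: it raises the genus by one and lowers the number of boundary components by two, whereas gluing two distinct surfaces added their genera. I would first settle the homology statements 1) and 2) topologically, and then transplant the forms of Lemmas \ref{compactsupp} and \ref{boundary_forms_absolute} to $\Sigma^{\#}$, checking periods exactly as was done for \eqref{basisH^1_abs} and \eqref{basisH^1_rel}.

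For 1) and 2), I would apply Mayer--Vietoris to the cover of $\Sigma^{\#}$ by a tubular neighborhood $U\simeq S^1$ of the gluing circle $\mc{C}=c_1^\#$ and its complement $V=\Sigma^{\#}\setminus \mc{C}\simeq \Sigma$, whose intersection $U\cap V$ is a disjoint union of two annuli. The dimension bookkeeping then gives $2{\mathfrak{g}}+{\mathfrak{b}}-1=2({\mathfrak{g}}+1)+({\mathfrak{b}}-2)-1$, and the two new classes are precisely $c_1^\#=-c_2^\#$, the image of the identified boundary, now a non-separating interior cycle, and $d_1^\#$, the image of the relative arc $d_1$ whose endpoints on $\pl_1\Sigma$ and $\pl_2\Sigma$ have been identified, hence a closed loop. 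The extra handle is witnessed by the single transverse intersection $\iota(d_1^\#,c_1^\#)=\pm 1$; the genus cycles in $S$ and the surviving boundary and relative cycles keep the intersection numbers \eqref{int_numbers} because their representatives can be chosen disjoint from $\mc{C}$. Comparing cardinalities and intersection numbers then certifies that \eqref{basisH_1rel'} and \eqref{basisH_1abs'} are the claimed bases.

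For 3), the forms $\omega_j^{\rm c}$ of Lemma \ref{compactsupp} and the forms $\omega_\ell^{\rm a}$ of Lemma \ref{boundary_forms_absolute} with $\ell\geq 3$ all vanish in a neighborhood of $\mc{C}$ (choosing the reference boundary among the unglued components $\pl_3\Sigma,\dots,\pl_{\mathfrak{b}}\Sigma$), hence extend by their own values to smooth closed forms on $\Sigma^{\#}$ representing the images of the corresponding cycles; in particular $\omega_{2{\mathfrak{g}}+1}^{\rm c}$ descends to $\omega_{d_1^\#}$, with period $2\pi R$ on $d_1^\#$ and vanishing period on $c_1^\#$ (the form is zero near $\mc{C}$) and on all remaining basis cycles. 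The one genuinely new construction is $\omega_{c_1^\#}$: near $\pl_1\Sigma$ one has $\psi_1^*\omega_1^{\rm a}=R\,d\theta$ and $\omega_2^{\rm a}=0$, while near $\pl_2\Sigma$ one has $\psi_2^*\omega_2^{\rm a}=R\,d\theta$ and $\omega_1^{\rm a}=0$; since the identification $\zeta_1(e^{i\theta})=\zeta_2(e^{-i\theta})$ reverses orientation and sends $d\theta\mapsto -d\theta$, the forms $\omega_1^{\rm a}$ and $-\omega_2^{\rm a}$ both read $R\,d\theta$ on the two sides of $\mc{C}$, so $\omega_1^{\rm a}-\omega_2^{\rm a}$ glues to a smooth closed form $\omega_{c_1^\#}$. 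Its periods are then read off directly: $\tfrac{1}{2\pi R}\int_{c_1^\#}\omega_{c_1^\#}=1$, the periods on $S$ vanish by construction of the $\omega^{\rm a}$, those on the surviving boundary cycles vanish by their prescribed integrals, and the period on $d_1^\#$ vanishes once $d_1$ is chosen to leave $\pl_1\Sigma,\pl_2\Sigma$ radially and to avoid the bands supporting $\omega_1^{\rm a}$ and $\omega_2^{\rm a}$. Duality of \eqref{basisH^1_abs'} against \eqref{basisH_1abs'} and of \eqref{basisH^1_rel'} against \eqref{basisH_1rel'} then follows verbatim as in Lemma \ref{baseglue}.

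I expect the main obstacle to be exactly this smooth gluing and period-normalisation of $\omega_{c_1^\#}$: it hinges on careful orientation bookkeeping under the orientation-reversing identification and on placing the reference boundary of Lemma \ref{boundary_forms_absolute} away from $\pl_1\Sigma$ and $\pl_2\Sigma$. The degenerate case ${\mathfrak{b}}=2$, in which $\Sigma^{\#}$ is closed and no unglued reference boundary exists, must then be handled apart --- there one may take the reference among the two glued components, so that $\omega_1^{\rm a}$ already glues to $R\,d\theta$ across $\mc{C}$, or equivalently obtain $\omega_{c_1^\#}$ as the harmonic representative dual to $c_1^\#$ furnished by the closed-surface Lemma \ref{basisH^1}.
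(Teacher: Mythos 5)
Your proposal is correct and follows exactly the route the paper intends: the paper gives no separate proof of Lemma \ref{baseselfglue}, stating only that it follows ``with the same argument'' as Lemma \ref{baseglue}, i.e.\ Mayer--Vietoris for the homology statements and explicit period computations for the dual cohomology bases, which is precisely your plan. Your additional care with the orientation-reversing identification $\zeta_1(e^{i\theta})=\zeta_2(e^{-i\theta})$ in gluing $\omega_1^{\rm a}-\omega_2^{\rm a}$ into $\omega_{c_1^\#}$, the vanishing of its period on $d_1^\#$, and the degenerate case $\mathfrak{b}=2$ (where $\Sigma^{\#}$ is closed and no reference boundary survives) supplies exactly the details the paper leaves implicit, and your treatment of each is sound.
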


\subsection{Equivariant functions and distributions} \label{UnivCover}
 
Let $\Sigma$ be a compact Riemann surface, with or without boundary. 
We let ${\mathfrak{b}}\geq 0$ the number of boundary connected components and ${\mathfrak{g}}$ the genus, and 
$\beta_1=\dim \mc{H}^1(\Sigma)$ the first Betti number ($\beta_1=2{\mathfrak{g}}$ if $\pl \Sigma=\emptyset$ and 
$\beta_1=2{\mathfrak{g}}+{\mathfrak{b}}-1$ if $\pl \Sigma\not=\emptyset$).
 The universal cover $\pi: \tilde{\Sigma}_{x_0}\to \Sigma$ of $\Sigma$  can be constructed from a fixed base point $x_0\in \Sigma$ as the set of 
continuous paths $c:[0,1]\to \Sigma$ with $c(0)=x_0$ up to homotopy. It has a distinguished  point 
$\tilde{x}_0$  projecting to $x_0$ (given by the curve $c(t)=c(0)=x_0$),  
and we can view the fundamental group $\pi_1(\Sigma,x_0)$ as a group of deck transformations on $\tilde{\Sigma}_{x_0}$.

For each closed $1$-form $\omega$ on $\Sigma$ (eventually with a boundary), we can construct a function (a primitive)  of $\omega$ on $\tilde{\Sigma}_{x_0}$ by 
\begin{equation}\label{primitive}
I_{x_0}(\omega)(\tilde{x})=\int_{\alpha_{x_0,x}}\omega
 \end{equation}
where the integral is along any $C^1$ path $\alpha_{x_0,x}:[0,1]\to \Sigma_{x_0}$ such that
$\alpha_{x_0,x}(0)=x_0$, $\alpha_{x_0,x}(1)=x:=\pi(\tilde{x})$ and 
$\alpha_{x_0,x}$ lifts to $\tilde{\Sigma}=\tilde{\Sigma}_{x_0}$ as a curve with initial point $\tilde{x}_0$ and endpoint $\tilde{x}$.
 Notice that $\dd I_{x_0}(\omega)=\pi^*\omega$, and if $\omega$ is harmonic on $\Sigma$ then 
 $I_{x_0}(\omega)$ is a harmonic function on $\tilde{\Sigma}$. 
\begin{lemma}\label{phidescend}
Let $\omega\in \mc{H}^1_R(\Sigma)$ be a closed $1$-form on $\Sigma$ with integrals in $2\pi R\Z$ along the homology curves. Then 
$e^{\frac{i}{R}I_{x_0}(\omega)}$ descends to a  well-defined smooth function on $\Sigma$. 
\end{lemma}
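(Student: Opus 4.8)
The plan is to lift everything to the universal cover $\pi\colon \tilde\Sigma_{x_0}\to \Sigma$, where by \eqref{primitive} the primitive $I_{x_0}(\omega)$ is a genuine smooth function satisfying $\dd I_{x_0}(\omega)=\pi^*\omega$. Since $\pi$ is a smooth covering map, and in particular a local diffeomorphism, a smooth function on $\tilde\Sigma_{x_0}$ descends to a (smooth) function on $\Sigma$ precisely when it is invariant under the deck group, which we identify with $\pi_1(\Sigma,x_0)$. Thus the statement reduces to checking that $e^{\frac{i}{R}I_{x_0}(\omega)}$ is invariant under every deck transformation.

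First I would compute how $I_{x_0}(\omega)$ transforms under the deck transformation associated to a loop $\gamma$ based at $x_0$. Representing $\tilde x\in\tilde\Sigma_{x_0}$ by a path $\alpha_{x_0,x}$ from $x_0$ to $x=\pi(\tilde x)$, this deck transformation sends $\tilde x$ to the class of the concatenation $\gamma\cdot\alpha_{x_0,x}$, so from \eqref{primitive},
\begin{equation*}
I_{x_0}(\omega)(\gamma\cdot\tilde x)=\int_{\gamma\cdot\alpha_{x_0,x}}\omega=\int_\gamma\omega+\int_{\alpha_{x_0,x}}\omega=\int_\gamma\omega+I_{x_0}(\omega)(\tilde x).
\end{equation*}
The increment $\int_\gamma\omega$ is the period of $\omega$ along $\gamma$; because $\omega$ is closed it depends only on the homology class $[\gamma]\in\mc{H}_1(\Sigma)$, i.e. only on the pairing $\cjg\omega,[\gamma]\cjd$.

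The key step is then to invoke the hypothesis $\omega\in\mc{H}^1_R(\Sigma)$: by the definition \eqref{H^1RSigma}, every such period lies in $2\pi R\Z$, so $\int_\gamma\omega=2\pi R\,n_\gamma$ for some $n_\gamma\in\Z$. Consequently
\begin{equation*}
e^{\frac{i}{R}I_{x_0}(\omega)(\gamma\cdot\tilde x)}=e^{\frac{i}{R}\left(I_{x_0}(\omega)(\tilde x)+2\pi R\,n_\gamma\right)}=e^{2\pi i n_\gamma}\,e^{\frac{i}{R}I_{x_0}(\omega)(\tilde x)}=e^{\frac{i}{R}I_{x_0}(\omega)(\tilde x)},
\end{equation*}
which is the desired deck invariance. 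Hence $e^{\frac{i}{R}I_{x_0}(\omega)}$ factors as $F\circ\pi$ for a unique function $F$ on $\Sigma$, and $F$ is smooth because $\pi$ is a local diffeomorphism and $e^{\frac{i}{R}I_{x_0}(\omega)}$ is the exponential of the smooth function $I_{x_0}(\omega)$.

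I do not expect a genuine obstacle here, since the argument is purely topological. The only point demanding a little care is the bookkeeping of the deck-group action, namely fixing the concatenation convention so that the increment is exactly the period $\int_\gamma\omega$; the overall sign is in any case immaterial to the conclusion because $2\pi R\Z$ is symmetric. I would also remark that the argument is insensitive to whether $\Sigma$ has boundary, as both \eqref{primitive} and the defining relation \eqref{H^1RSigma} hold verbatim in that case; for surfaces with boundary one simply uses that the periods over a generating family of $\mc{H}_1(\Sigma)$ control the period along any loop.
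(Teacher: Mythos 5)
Your proof is correct and follows essentially the same route as the paper: both compute the change of $I_{x_0}(\omega)$ under the action of $\pi_1(\Sigma,x_0)$ on $\tilde{\Sigma}_{x_0}$, identify the increment as a period of $\omega$ over a closed curve on $\Sigma$, and use the hypothesis $\omega\in\mc{H}^1_R(\Sigma)$ to conclude that this increment lies in $2\pi R\Z$, hence $e^{\frac{i}{R}I_{x_0}(\omega)}$ is deck-invariant and descends as a smooth function. The only cosmetic difference is that you phrase the increment via concatenation of a based loop $\gamma$ with $\alpha_{x_0,x}$, whereas the paper integrates $\pi^*\omega$ over a curve $\alpha_{x,\gamma.x}$ joining $x$ to $\gamma.x$ that projects to a closed curve — the same computation in different bookkeeping.
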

 \begin{proof} For $x\in \tilde{\Sigma}_{x_0}$ and $\gamma\in \pi_1(\Sigma,x_0)$, one has 
 \[\begin{split} 
I_{x_0}(\omega)(\gamma.x)-I_{x_0}(\omega)(x)=& \int_{\alpha_{x,\gamma.x}}\pi^*\omega \in 2\pi R\Z 
  \end{split}\]
where $\alpha_{x,y}$ is any $C^1(\tilde{\Sigma}_{x_0})$ curve with $\alpha(0)=x$ and $\alpha(1)=y$ and 
$\alpha_{x,\gamma.x}$ descends to a closed curve on $\Sigma$, thus an element in $\mc{H}_1(\Sigma)$. This shows that 
$e^{\frac{i}{R}  I_{x_0}(\omega)}$ is invariant by $\pi_{1}(\Sigma,x_0)$, and thus descends on $\Sigma$ as a smooth function.
 \end{proof}
 
We can now define the space of equivariant functions on $\tilde{\Sigma}=\tilde{\Sigma}_{x_0}$: let $\Gamma:=\pi_1(\Sigma,x_0)$ be the fundamental group of $\Sigma$, then we define the $\Z$-module
\[ C^\infty_{\Gamma}(\tilde{\Sigma}):=\{ u\in C^\infty(\tilde{\Sigma})\,|\, \forall \gamma\in \Gamma, \gamma^*u-u \in 2\pi R\Z \}.\] 
This space of functions can also be defined by the property that $e^{i\frac{u}{R}}$ is a smooth function that descends on $\Sigma$. Now, we observe that each $u\in C^\infty_{\Gamma}(\tilde{\Sigma})$ induces a map 
\[\chi_u: \Gamma \to 2\pi R\Z ,\quad \chi_u(\gamma):=\gamma^*u-u.\]
One easily checks that $\chi_u({\rm Id})=0$ and $\chi_u(\gamma_1\gamma_2)=\chi_u(\gamma_1)+\chi_u(\gamma_2)$, so that $\chi_u$ is a group morphism. Now,
each  group morphism $\chi: \Gamma\to 2\pi R\Z$ is equivalent to an element in $\mc{H}^1_R(\Sigma)$. Fixing a basis 
$\omega_1,\dots,\omega_{\beta_1}$ of $\mc{H}^1_R(\Sigma)$  there is ${\bf k}\in \Z^{\beta_1}$ such that $\chi(\gamma)= \int_{\gamma}\omega_{{\bf k}}$ for all $\gamma\in \Gamma$ if $\omega_{\bf k}=\sum_{j=1}^{\beta_1}k_j \omega_j$.
We denote $\chi_{\bf k}$ this morphism associated to $\omega_{\bf k}$. Next we define 
\[ C^\infty_{\chi}(\tilde{\Sigma}):=\{ u\in C^\infty(\tilde{\Sigma})\,|\,\forall \gamma\in \Gamma, \gamma^*u-u=\chi(\gamma)\}\subset C^\infty_{\Gamma}(\tilde{\Sigma}).\] 
For $u\in C^\infty_{\chi_{\bf k}}(\tilde{\Sigma})$, we see that $u-I_{x_0}(\omega_{{\bf k}})$ is $\Gamma$-invariant, thus descends to a smooth function on $\Sigma$. We can thus rewrite 
\[ C^\infty_{\chi_{\bf k}}(\tilde{\Sigma})=\{ \pi^*f +I_{x_0}(\omega_{{\bf k}}) \, |\, f\in C^\infty(\Sigma)\}.\]
The discussion above shows that 
\[ C^\infty_{\Gamma}(\tilde{\Sigma})= \cup_{{\bf k}\in \Z^{\beta_1}} C^\infty_{\chi_{\bf k}}(\tilde{\Sigma}).\]
In particular each $u\in C^\infty_{\Gamma}(\tilde{\Sigma})$ can be written in a unique way as  $u= \pi^*f+I_{x_0}(\omega_{{\bf k}})$
for some $f\in C^\infty(\Sigma)$ and some ${\bf k}\in \Z^{\beta_1}$. We can also consider, for $s\in \R$, the Sobolev $\Z$-module 
\begin{equation}\label{HsGamma} 
H^s_{\Gamma}(\tilde{\Sigma})=:\{ f\in H^{s}_{\rm loc}(\tilde{\Sigma})\,|\, \forall \gamma\in \Gamma,\, \gamma^*f-f \in 2\pi R\Z \}
\end{equation}
where $H^s_{\rm loc}(\tilde{\Sigma})$ is the space of distributions that are in the Sobolev space $H^s(U)$ 
on each relatively compact open set $U\subset \tilde{\Sigma}$.
As for smooth functions, each $u\in H^s_{\Gamma}(\tilde{\Sigma})$ can be written as $u= \pi^*f+I_{x_0}(\omega_{{\bf k}})$ for some $f\in H^s(\Sigma)$ and some ${\bf k}\in \Z^{\beta_1}$. In other words, we get an identification 
\[  \Z^{\beta_1} \times H^{s}(\Sigma) \to H^s_{\Gamma}(\tilde{\Sigma}), \quad  ({\bf k},f)\mapsto  \pi^*f+I_{x_0}(\omega_{{\bf k}}).\]

\subsection{Harmonic $1$-forms with poles at prescribed marked points }
  %%%%%%%%%%%%%%%%%%%%%%%%%%%%%%%%%%%%%%%%%
 Let $\Sigma$ be a Riemann surface of genus ${\mathfrak{g}}$, with or without boundary. In case there is a boundary, we denote $\pl_j\Sigma$ for $j=1,\dots,{\mathfrak{b}}$ the oriented boundary connected components, where $\varsigma_j=-1$ if the boundary $\pl_j\Sigma$ is outgoing (i.e. positive) and $\varsigma_j=1$ if it is incoming.
 We write $\beta_1=\dim \mc{H}_1(\Sigma)$ the first Betti number. 
 Consider some distinct marked points ${\bf z}=(z_1,\dots,z_{n})\in\Sigma^{n}$ in the interior of $\Sigma$. We attach some winding numbers ${\bf m}:=(m_1,\dots,m_{n})\in \Z^{n}$ to these points, these will be called \emph{magnetic charges}. Denote by $U_1,\dots, U_{n}$ neighborhoods of $z_1,\dots,z_{n}$ and biholomorphic maps $\psi_j:
 \D\to U_j$ such that $\psi_j(0)=z_j$. Let $\hat{\Sigma}:=\Sigma\setminus \cup_{j=1}^n\psi_j(\D^\circ)$. In the disc $\D$, we can use the variable $z=re^{i\theta}$ and we denote by $\dd\theta$ the $1$-form in $\D\setminus \{0\}$ that is closed and coclosed in that pointed disk.
 
  \begin{proposition} \label{harmpoles}
 Let $\boldsymbol{\sigma}=(\sigma_1,\dots,\sigma_{\beta_1})\subset \mc{H}_1(\Sigma)$ 
 be a basis realized by closed curves not intersecting the disks $U_j$ around $z_j$, and such that $\sigma_{2{\mathfrak{g}}+\ell}=\pl_\ell\Sigma$ for $\ell=1,\dots,{\mathfrak{b}}-1$ and $\sigma_\ell\cap \pl \Sigma=\emptyset$ if $\ell\leq 2{\mathfrak{g}}$. Let ${\bf k}=(k_1,\dots,k_{{\mathfrak{b}}})\in \Z^{{\mathfrak{b}}}$ and ${\bf m}:=(m_1,\dots,m_{n})\in \Z^{n}$  and assume that  $\sum_{\ell=1}^{{\mathfrak{b}}}\varsigma_\ell k_\ell +\sum_{j=1}^{n}m_j=0$. 
 Then there exists a smooth real valued closed 1-form on $\Sigma\setminus \{z_1,\dots,z_{n}\}$, denoted by $\nu_{{\bf z},{\bf m}}$ if $\pl \Sigma=\emptyset$, resp. $\nu_{{\bf z},{\bf m},{\bf k}}$ if $\pl\Sigma\not=\emptyset$, such that $\iota_{\pl \Sigma}(i_\nu \nu_{{\bf z},{\bf m},{\bf k}})=0$ and 
\[\left\{  \begin{array}{ll}
\psi_j^*(\nu_{{\bf z},{\bf m}}|_{U_j})=m_jR\dd\theta, & \pl\Sigma=\emptyset \\
 \psi_j^*(\nu_{{\bf z},{\bf m},{\bf k}}|_{U_j})=m_j R\dd\theta, &   \pl\Sigma\not=\emptyset 
\end{array}\right.\]
 and such that for all $j=1,\dots, 2\mathfrak{g}$ and $\ell=1,\dots, {\mathfrak{b}}$,
 \[\left\{\begin{array}{ll}
 \int_{\sigma_j}\nu_{{\bf z},{\bf m}}=0, & \pl\Sigma=\emptyset\\
 \int_{\sigma_j}\nu_{{\bf z},{\bf m},{\bf k}}=0 \,\, \textrm{and }\,\,     \frac{1}{2\pi R} \int_{\pl_\ell\Sigma} \nu_{\mathbf{z},\mathbf{m},\mathbf{k}}=\varsigma_\ell k_\ell, &  \pl\Sigma\not=\emptyset .
\end{array}\right.\]
The form $\nu_{{\bf z},{\bf m}}$ satisfies $\dd^*\nu_{{\bf z},{\bf m}}\in C^\infty(\Sigma)\cap C_c^\infty(\Sigma\setminus \{{\bf z}\})$ when $\pl \Sigma=\emptyset$ and $\dd^*\nu_{{\bf z},{\bf m},{\bf k}}\in C^\infty(\Sigma)\cap C_c^\infty(\Sigma\setminus \{{\bf z}\})$ when $\pl \Sigma=\emptyset$.
Moreover we have, in the distribution sense, 
 \[\left\{\begin{array}{ll}
 \dd\nu_{{\bf z},{\bf m}}=-  2\pi R \sum_{j=1}^{n}m_j \delta_{z_j} & \pl\Sigma=\emptyset \\
 \dd\nu_{{\bf z},{\bf m},{\bf k}}=-  2\pi R \sum_{j=1}^{n}m_j \delta_{z_j} & \pl\Sigma\not=\emptyset 
 \end{array}\right.\]
 where $\delta_{z_j}$ is the Dirac measure at $z_j$. 
There is a unique real-valued closed and coclosed $1$-form $\nu^{{\rm h}}_{{\bf z},{\bf m}}$ if $\pl \Sigma=\emptyset$, resp. 
$\nu^{{\rm h}}_{{\bf z},{\bf m},{\bf k}}$ if $\pl \Sigma\not=\emptyset$,  on $\Sigma\setminus \{{\bf z}\}$ such that 
\begin{equation}\label{nuh-nu}
\left\{ \begin{array}{ll}
 \nu^{{\rm h}}_{{\bf z},{\bf m}}-\nu_{{\bf z},{\bf m}}=\dd f_{{\bf m}}, & \pl \Sigma=\emptyset\\
   \nu^{{\rm h}}_{{\bf z},{\bf m},{\bf k}}-\nu_{{\bf z},{\bf m},{\bf k}}=\dd f_{{\bf m},{\bf k}}, & \pl \Sigma\not=\emptyset
\end{array}  \right.
 \end{equation}
for some $f_{{\bf m}}\in C^\infty(\Sigma)$ if $\pl \Sigma=\emptyset$, resp. $f_{{\bf m},{\bf k}}\in C^\infty(\Sigma)$ with 
$f_{{\bf m},{\bf k}}|_{\pl \Sigma}=0$ if $\pl \Sigma\not=\emptyset$.
\end{proposition}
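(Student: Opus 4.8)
The plan is to build $\nu$ (I drop subscripts, writing $\nu$ for $\nu_{{\bf z},{\bf m}}$ or $\nu_{{\bf z},{\bf m},{\bf k}}$) in two stages: first I would produce a \emph{closed} $1$-form on $\Sigma\setminus\{{\bf z}\}$ carrying the prescribed angular singularities and the prescribed periods, arranged so that it coincides with genuinely harmonic model forms near each $z_j$ and, when $\pl\Sigma\neq\emptyset$, near $\pl\Sigma$; then I would extract the harmonic representative $\nu^{\rm h}$ by a single scalar correction $\dd f$. The whole point of the first stage is to make all corrections supported away from the marked points and the boundary, which is what ultimately forces $\dd^*\nu$ to be compactly supported.

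\emph{Local models and closedness.} In the coordinate $\psi_j:\D\to U_j$ put $\eta_j:=m_jR\,\dd\theta$, where $\dd\theta$ is the angular form on $\D\setminus\{0\}$; it is closed and, since the Hodge operator on $1$-forms is conformally invariant in dimension $2$, also co-closed for $g$ on $U_j\setminus\{z_j\}$, hence harmonic there. When $\pl\Sigma\neq\emptyset$ I take in each boundary collar the harmonic angular model $\varsigma_\ell k_\ell R\,\dd\theta$, which satisfies $\iota_{\pl\Sigma}^*(i_\nu\,\cdot)=0$ because $\dd\theta$ annihilates the normal direction. Choosing cutoffs $\chi_j$ (equal to $1$ near $z_j$, supported in $U_j$) and $\tilde\chi_\ell$ (equal to $1$ near $\pl_\ell\Sigma$, supported in the collar), set $\omega_0:=\sum_j\chi_j\eta_j+\sum_\ell\tilde\chi_\ell\,\varsigma_\ell k_\ell R\,\dd\theta$. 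Then $\dd\omega_0$ is a smooth $2$-form supported in the transition annuli, hence compactly supported in $\Sigma^\circ\setminus\{{\bf z}\}$, and a Stokes computation on the annuli evaluates its integral: the puncture annuli contribute $-2\pi R\sum_j m_j$ and the boundary annuli contribute $2\pi R\sum_\ell\varsigma_\ell k_\ell$ (up to the orientation conventions relating the collar charts to the induced boundary orientation). By the neutrality hypothesis $\sum_\ell\varsigma_\ell k_\ell+\sum_j m_j=0$ these cancel, so $\int_\Sigma\dd\omega_0=0$. Since a compactly supported $2$-form of vanishing integral on the connected open surface $\Sigma^\circ$ is $\dd$ of a compactly supported $1$-form, I can write $\dd\omega_0=\dd\beta$ with $\beta\in C^\infty_c(\Sigma^\circ\setminus\{{\bf z}\})$ and put $\nu_1:=\omega_0-\beta$. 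By construction $\nu_1$ is smooth and closed on $\Sigma\setminus\{{\bf z}\}$, equals $\eta_j$ near each $z_j$ and $\varsigma_\ell k_\ell R\,\dd\theta$ near $\pl_\ell\Sigma$, so its boundary periods are $\tfrac{1}{2\pi R}\int_{\pl_\ell\Sigma}\nu_1=\varsigma_\ell k_\ell$ and $\iota_{\pl\Sigma}^*(i_\nu\nu_1)=0$ hold automatically, $\beta$ being interior. The residue of $\eta_j$ then yields the current identity $\dd\nu_1=-2\pi R\sum_j m_j\,\delta_{z_j}$ (the sign being fixed by the orientation convention for $\dd\theta$).

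\emph{Periods and co-closedness.} The remaining periods $\int_{\sigma_j}\nu_1$ over the interior cycles $\sigma_1,\dots,\sigma_{2{\mathfrak{g}}}$ are corrected by subtracting the absolute harmonic form in ${\rm Harm}^1(\Sigma)$ with \emph{vanishing boundary periods} and interior periods matching $\int_{\sigma_j}\nu_1$; such a form exists by the duality \eqref{duality1} together with (the analogue for surfaces with boundary of) Lemma \ref{basisH^1}. Adding it leaves the singular behaviour, the boundary periods and the boundary condition untouched, because harmonic forms are smooth at the $z_j$, satisfy $\dd^*=0$ and the absolute boundary condition; call the result $\nu$. Since $\nu$ coincides near each $z_j$ with the co-closed form $\eta_j$ and near $\pl\Sigma$ with a harmonic model, $\dd^*\nu$ vanishes there and extends smoothly by $0$ across the $z_j$, so $\dd^*\nu\in C^\infty(\Sigma)\cap C_c^\infty(\Sigma\setminus\{{\bf z}\})$; adding harmonic forms did not change $\dd^*\nu$ nor the current $\dd\nu$.

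\emph{Harmonic representative.} For $\nu^{\rm h}$ I look for $f$ with $\dd^*(\nu+\dd f)=0$, i.e. $\Delta_g f=-\dd^*\nu=:h$, where $h$ is smooth, compactly supported in $\Sigma^\circ\setminus\{{\bf z}\}$, and satisfies $\int_\Sigma h\,{\rm dv}_g=-\cjg\nu,\dd 1\cjd_2=0$. When $\pl\Sigma=\emptyset$ this solvability condition gives a smooth $f$, unique up to an additive constant, so $\dd f$ and hence $\nu^{\rm h}:=\nu+\dd f$ are unique; when $\pl\Sigma\neq\emptyset$ I instead solve the Dirichlet problem $\Delta_g f=h$, $f|_{\pl\Sigma}=0$, which has a unique smooth solution since $\Delta_g$ with Dirichlet conditions is invertible (Section \ref{sec:Green}). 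In both cases $\nu^{\rm h}$ is closed and co-closed on $\Sigma\setminus\{{\bf z}\}$, carries the same $m_jR\,\dd\theta$ singularity (as $f$ is smooth), and differs from $\nu$ by $\dd f$ with $f$ as required; uniqueness follows because any two candidates differ by an exact and harmonic $\dd\varphi$ with $\Delta_g\varphi=0$, whence $\varphi$ is constant in the closed case and $\varphi\equiv0$ in the boundary case by $\varphi|_{\pl\Sigma}=0$. The step I expect to be most delicate is the bookkeeping in the first stage: tracking the $\varsigma_\ell$ orientation signs so that the neutrality condition becomes exactly the Stokes compatibility $\int_\Sigma\dd\omega_0=0$, and keeping both the primitive $\beta$ and the period-fixing harmonic forms supported away from $\{{\bf z}\}$ and $\pl\Sigma$, which is precisely what makes $\nu$ harmonic near the singularities and gives the clean, compactly supported data $h$ for the final Poisson problem.
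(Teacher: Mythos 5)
Your overall architecture is viable, and most of it is sound: the cutoff sum $\omega_0$, the Stokes computation turning the neutrality hypothesis $\sum_\ell\varsigma_\ell k_\ell+\sum_j m_j=0$ into $\int_\Sigma\dd\omega_0=0$, the use of $H^2_c(\Sigma^\circ\setminus\{{\bf z}\})\cong\R$ to produce a compactly supported primitive $\beta$, the distributional identities for $\dd\nu$ and $\dd^*\nu$, the Poisson/Dirichlet problem yielding $\nu^{\rm h}$, and the uniqueness argument all check out. However, there is a genuine gap in your second stage. You correct the interior periods by subtracting an absolute harmonic form $h\in{\rm Harm}^1(\Sigma)$ and assert that this leaves the singular behaviour untouched ``because harmonic forms are smooth at the $z_j$.'' Smoothness of $h$ preserves the singularity type and the winding numbers, but the proposition asserts the \emph{exact} identity $\psi_j^*(\nu|_{U_j})=m_jR\,\dd\theta$ on all of $U_j$, whereas your $\nu=\nu_1-h$ equals $m_jR\,\dd\theta-h$ there, with $h$ generically nonvanishing on $U_j$. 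For the same reason the exact collar normal form $\varsigma_\ell k_\ell R\,\dd\theta$ near $\pl_\ell\Sigma$ is destroyed; this is not cosmetic, since the paper uses precisely these exact local/collar forms later (the merging of electro-magnetic operators, the regularized norm of Lemma \ref{renorm_L^2}, and the gluing of magnetic forms in Lemma \ref{magcurvtwo} and in Section \ref{s:gluingboson} all exploit that $\nu$ equals the model on the nose near the punctures and boundary). So as written your $\nu$ satisfies the period, boundary-condition, $\dd$ and $\dd^*$ conclusions, but not the stated local normalization.

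The repair is routine and stays inside your toolkit: since $h$ is closed, it is exact on each disk $U_j$, say $h=\dd u_j$ there, and on each boundary collar it is exact as well because you arranged $\int_{\pl_\ell\Sigma}h=0$; replacing $h$ by $h-\sum_j\dd(\chi_j u_j)-\sum_\ell\dd(\tilde\chi_\ell u_\ell)$ gives a cohomologous closed form with the same periods that vanishes identically near the $z_j$ and near $\pl\Sigma$ (this is exactly the cutoff trick of Lemmas \ref{compactsupp} and \ref{boundary_forms_absolute}; to get equality on all of $U_j$ one sets the construction up on slightly larger disks from the start). With that one change your proof goes through. For comparison, the paper short-circuits both of your stages: it applies Lemma \ref{boundary_forms_absolute} to $\hat\Sigma=\Sigma\setminus\cup_j\psi_j(\D^\circ)$, treating the circles $\pl U_j$ (and $\pl\Sigma$) as boundary components, so the forms come with vanishing interior periods and the exact $\pm R\,\dd\theta$ behaviour in the collars built in, and no compactly supported cohomology step or a posteriori period fix is needed; your route re-derives that lemma's content more hands-on, which is a legitimate and arguably more self-contained alternative. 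For the harmonic representative the paper minimizes $\|\dd f\|_2^2+2\langle\dd f,\nu\rangle_2$ via Rellich compactness rather than citing solvability of $\Delta_g f=-\dd^*\nu$; the two arguments are equivalent in substance.
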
   

 \begin{proof}
For the first claim, when $\pl \Sigma=\emptyset$ it suffices to take a linear combination $\nu_{{\bf z},{\bf m}}=-\sum_{j=2}^{n}m_j \omega_j^{{\rm a}}$ of the forms $\omega_{j}^{\rm a}$ 
of Lemma \ref{boundary_forms_absolute} applied to the surface with boundary $\hat{\Sigma}$, and extend smoothly this form by setting  $\nu_{{\bf z},{\bf m}}|_{U_j}=(\psi_j)_*(m_jR \dd\theta)$ for all $j=1,\dots,n$. When now 
$\pl \Sigma\not=\emptyset$, let us assume the boundary components are all outgoing to simplify the exposition. We define $\pl_j\hat{\Sigma}:=\pl_j\Sigma$ when $j=1,\dots,{\mathfrak{b}}$ and 
$\pl_{{\mathfrak{b}}+j} \hat{\Sigma}:=\pl_j U_j$ for $j=1,\dots,n$ which are all supposed to be oriented positively inside $\hat{\Sigma}$. We then set 
$\nu_{{\bf z},{\bf m},{\bf k}}=-\sum_{j=2}^{n}m_j \omega_{{\mathfrak{b}}+j}^{{\rm a}}+\sum_{j=2}^{{\mathfrak{b}}}k_j\omega_j^a$ where the forms $\omega_{j}^{\rm a}$ are the forms of Lemma \ref{boundary_forms_absolute} applied to the surface with boundary $\hat{\Sigma}$, and we extend smoothly  $\nu_{{\bf z},{\bf m},{\bf k}}$ on $U_j$ 
by setting  $\nu_{{\bf z},{\bf m},{\bf k}}|_{U_j}=(\psi_j)_*(m_j R \dd\theta)$ for all $j=1,\dots,n$. In both cases, $\pl\Sigma=\emptyset$ or $\pl \Sigma\not=\emptyset$, the forms $\nu_{{\bf z},{\bf m}}$ 
and $\nu_{{\bf z},{\bf m},{\bf k}}$ satisfy the desired properties.

Next, we compute $\dd\nu_{{\bf z},{\bf m}}$ and $\dd^*\nu_{{\bf z},{\bf m}}$ in the distribution sense. 
Using that $\dd*\dd\theta=0$ in $\D\cap \{|z|>\eps\}$ and that $*\dd\theta=\dd r/r$, 
if $f\in C_c^\infty(\D^\circ)$ is real valued, 
\[\cjg \dd^*\nu_{{\bf z},{\bf m}},f\cjd= \int_{\Sigma} \nu_{{\bf z},{\bf m}}\wedge * \dd f=\lim_{\eps\to 0}\int_{|z|>\eps}\nu_{{\bf z},{\bf m}}\wedge *\dd f=-\lim_{\eps\to 0}\int_{|z|=\eps}f*\nu_{{\bf z},{\bf m}}=0\]
which shows that $\dd^*\nu_{{\bf z},{\bf m}}=0$ in the distribution sense near $x_j$, and $\dd^*\nu_{{\bf z},{\bf m}}$ is thus smooth on $\Sigma$ since 
$\nu_{{\bf z},{\bf m}}$ is smooth outside ${\bf z}$. 
Now for $\dd\nu_{{\bf z},{\bf m}}$, we already know this is $0$ outside ${\bf z}$. We check that near $z_j$
\[\int_{\Sigma} \nu_{{\bf z},{\bf m}}\wedge *\dd^*(f{\rm v}_g)=\lim_{\eps\to 0}\int_{|z|>\eps}\nu_{{\bf z},{\bf m}}\wedge \dd*(f{\rm v}_g)=-\lim_{\eps\to 0}\int_{|z|=\eps}f\nu_{{\bf x},{\bf m}}=- m_j2\pi R f(0).\]
The same argument applies for $\nu_{{\bf z},{\bf m},{\bf k}}$ in the case $\pl \Sigma\not=\emptyset$.

To find the harmonic form, let $\omega:=\nu_{{\bf z},{\bf m}}$ if $\pl \Sigma=\emptyset$ and 
$\omega:=\nu_{{\bf z},{\bf m},{\bf k}}$ if $\pl \Sigma\not=\emptyset$. Let $H_0^1(\Sigma)=\{f\in H^1(\Sigma)\,|\, f|_{\pl \Sigma}=0\}$ where $H^1(\Sigma)$ is the Sobolev space of functions  $f\in L^2(\Sigma)$ such that $\dd f\in L^2$.
We search a minimizer of the functional on $H^1(\Sigma)$ (resp. $H_0^1(\Sigma)$ if $\pl \Sigma\not=\emptyset$)
\[ E(f):= \|\dd f\|_2^2 +2 \cjg \dd f,\omega\cjd_2=\|\dd f\|_2^2 +2 \cjg f,\dd^*\omega\cjd_2\]
Since $\dd^*\omega$ is smooth on $\Sigma$, we see that 
\[C^{-1}\|f\|^2_{H^1(\Sigma)}-C\|f\|_2^2 \leq E(f)\leq C\|f\|_{H^1(\Sigma)}^2\]
for some $C>0$. By Sobolev compact embedding $H^1(\Sigma)\to L^2(\Sigma)$, this implies that there is a minimizer $f_0$ of $E(f)$ that belongs to $H^1(\Sigma)$ (resp. $H_0^1(\Sigma)$ if $\pl \Sigma\not=\emptyset$). 
It must be a critical point that thus solves
\[\Delta_g f_0+\dd^*\omega=0.\]
The solution $f_0$ is unique up to constant if $\pl \Sigma=\emptyset$ and unique if $\pl \Sigma\not=\emptyset$. We then define $\omega^{\rm h}:=\omega+\dd f_0$: it satisfies  $\dd \omega^{\rm h}=0$ outside ${\bf z}$ and $\dd^*\omega^{\rm h}=
\Delta_g f_0+\dd^*\omega=0$.
 \end{proof} 

The form $\nu_{\mathbf{z},\mathbf{m}}\in L^1(\Sigma)$ (resp. $\nu_{\mathbf{z},\mathbf{m},{\bf k}}\in L^1(\Sigma)$) does not belong to $L^2(\Sigma)$, but we can define a renormalized $L^2$-norm: first, for $\omega\in C^\infty(\Sigma\setminus \{{\bf z}\})$, let 
\[\|\omega\|^2_{g,\epsilon}:=\int_{\Sigma_{{\bf z},\epsilon,g}} \omega \wedge *\bbar{\omega}\]
where $\Sigma_{{\bf z},\epsilon,g}:=\Sigma\setminus \bigcup_{j=1}^{n}B_g(z_j,\epsilon)$ with $B_g(z_j,\epsilon)$ the geodesic ball centered at $z_j$ with radius $\epsilon$ with respect to $g$.

\begin{lemma}\label{renorm_L^2}
Let $\omega=\nu_{{\bf z},{\bf m}}$ if $\pl \Sigma=\emptyset$ or $\omega=\nu_{{\bf z},{\bf m},{\bf k}}$ if $\pl \Sigma\not=\emptyset$. As $\epsilon\to 0$, the following limit exists 
\begin{equation}\label{limit_renorm_norm}
\|\omega\|^2_{g,0}:=\lim_{\epsilon\to 0}\Big(\|\omega\|^2_{g,\epsilon}+  2\pi R^2 (\log \epsilon) \sum_{j=1}^{n} m_j^2\Big).\end{equation}
Furthermore, if $g'=e^\rho g$ is conformal to $ g$ then
\[\|\omega\|^2_{g',0} =\|\omega\|^2_{g,0}+\pi R^2\sum_{j=1}^n m_j^2\rho(z_j).\]
The same holds with $\omega=\nu^{\rm h}_{\mathbf{z},\mathbf{m}}$ or $\omega=\nu^{\rm h}_{\mathbf{z},\mathbf{m},{\bf k}}$.
\end{lemma}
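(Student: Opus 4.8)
The plan is to localise the whole analysis near the punctures $z_1,\dots,z_n$ and to exploit the fact that in dimension two the density $\omega\wedge *\bbar\omega=|\omega|^2_g\,{\rm dv}_g$ is \emph{pointwise conformally invariant}: on a $1$-form the Hodge star obeys $*_{e^\rho g}=*_g$, so $\omega\wedge *\bbar\omega$ is unchanged when $g$ is replaced by $g'=e^\rho g$. Only the cutoff region $\Sigma_{{\bf z},\epsilon,g}$, defined through geodesic balls, sees the metric. This single observation organises both claims, and I would use it throughout (it also explains why the answer can only pick up a contribution localised at the $z_j$).

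First I would prove existence of the limit \eqref{limit_renorm_norm}. For a fixed small $r_0$, away from $\cup_j\psi_j(\D_{r_0})$ the form $\omega$ is smooth and $\epsilon$-independent, so it contributes a finite constant to $\|\omega\|^2_{g,\epsilon}$. In the chart $\psi_j$ one has $\psi_j^*\omega=m_jR\,\dd\theta+\eta_j$ with $\eta_j$ smooth near $0$; for $\nu^{\rm h}_{{\bf z},{\bf m}}$ this persists since $\nu^{\rm h}=\nu+\dd f$ with $f$ smooth, so $\eta_j$ merely gains an extra exact smooth term, and the whole argument is unchanged. Expanding $|\omega|^2$, the cross term and $|\eta_j|^2$ are integrable near $0$ (indeed $\langle\dd\theta,\eta_j\rangle_{\rm flat}\,{\rm dv}_{\rm flat}=O(1)\,\dd r\,\dd\theta$), hence contribute finitely. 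The only divergent piece is $|m_jR\,\dd\theta|^2$, whose density equals, by conformal invariance, the flat one $m_j^2R^2r^{-2}\,\dd x\,\dd y$.

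Second I would evaluate this divergent piece over the cut domain. Writing $\psi_j^*g=e^{\rho_j}|\dd z|^2$ with $\rho_j$ smooth, the geodesic distance satisfies $d_g(z_j,\psi_j(z))=e^{\rho_j(0)/2}|z|(1+O(|z|))$, so $\psi_j^{-1}(B_g(z_j,\epsilon))$ is the Euclidean disc $\{|z|<\epsilon e^{-\rho_j(0)/2}\}$ up to a relative error $O(\epsilon)$ on its boundary radius. Integrating $m_j^2R^2r^{-2}$ over $\{\epsilon e^{-\rho_j(0)/2}<r<r_0\}$ yields $-2\pi m_j^2R^2\log\epsilon+\text{finite}$, while the $O(\epsilon)$ non-roundness of the inner boundary changes this by $O(\epsilon)$ (integrand $\sim\epsilon^{-2}$ over an area $\sim\epsilon^3$). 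Summing over $j$ and adding the counterterm $2\pi R^2(\log\epsilon)\sum_jm_j^2$ cancels the divergence, so $\|\omega\|^2_{g,0}$ exists.

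For the Weyl law I would argue by difference, which sidesteps identifying the finite constant. Since $\omega\wedge *\bbar\omega$ is the \emph{same} density for $g$ and $g'=e^\rho g$ and the counterterms are metric-independent,
\begin{equation*}
\|\omega\|^2_{g',0}-\|\omega\|^2_{g,0}=\lim_{\epsilon\to0}\sum_{j=1}^n\Big(\int_{B_g(z_j,\epsilon)}-\int_{B_{g'}(z_j,\epsilon)}\Big)\omega\wedge *\bbar\omega,
\end{equation*}
each bracket being a finite integral over the annular symmetric difference once a common inner disc (on which the divergent integrals cancel) is removed. In the chart $\psi_j^*g'=e^{\rho\circ\psi_j+\rho_j}|\dd z|^2$, so the $g'$-ball has inner radius $\epsilon e^{-(\rho_j(0)+\rho(z_j))/2}$; only the $m_j^2R^2r^{-2}$ density survives in the limit (the cross and $|\eta_j|^2$ terms, and the non-roundness corrections, integrate to $o(1)$ over a region of radial width $O(\epsilon)$), giving
\begin{equation*}
2\pi m_j^2R^2\big(\log(\epsilon e^{-\rho_j(0)/2})-\log(\epsilon e^{-(\rho_j(0)+\rho(z_j))/2})\big)=\pi m_j^2R^2\rho(z_j).
\end{equation*}
Summing over $j$ produces the stated identity. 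The main obstacle is the second step: making precise the relation between the geodesic ball $B_g(z_j,\epsilon)$ and a coordinate disc, and checking that its $O(\epsilon)$ departure from roundness neither spoils convergence of the finite part nor contributes to the conformal anomaly.
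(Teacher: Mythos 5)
Your proof is correct, and for the conformal anomaly it takes a genuinely different route from the paper. The existence part is essentially the paper's argument: both localise at the $z_j$, exploit that $\omega\wedge *\bbar{\omega}$ reduces to the flat density $m_j^2R^2 r^{-2}\,r\,\dd r\,\dd\theta$ in a conformal chart (the paper writes this as $\dd r\wedge r\dd\theta={\rm v}_{|dz|^2}$ rather than invoking conformal invariance of the Hodge star explicitly, but it is the same fact), and compare the geodesic radius with the coordinate radius via $d_g(z_j,\psi_j(z))=e^{\rho_j(0)/2}|z|(1+O(|z|))$ — your estimate that the $O(\epsilon^2)$ non-roundness of $\{d_g=\epsilon\}$ contributes $O(\epsilon)$ to the $r^{-2}$ integral is exactly what is needed and is sound. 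Where you diverge is the Weyl law: the paper identifies the Hadamard (cutoff) regularisation with a Riesz regularisation, i.e.\ it proves
\[
\lim_{\eps\to 0}\Big(\int_{\delta\D\setminus\{d_j\geq\eps\}}\tfrac{1}{r^2}\,{\rm v}_{|dz|^2}+2\pi\log\eps\Big)={\rm FP}_{s=0}\int_{\delta\D}d_j(z)^s\tfrac{1}{r^2}\,{\rm v}_{|dz|^2},
\]
and then computes the shift of the finite part under $d'_j=e^{\rho(z_j)/2}d_j(1+G_j)$ by expanding $d_j^s(1+G_j)^s$ holomorphically in $s$ near $s=0$ and isolating the simple pole $2\pi/s$. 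You instead subtract the two cutoff norms directly: since the density and the counterterm are metric-independent, the anomaly is the limit of integrals over the symmetric differences $B_g(z_j,\epsilon)\,\triangle\, B_{g'}(z_j,\epsilon)$, an annular region of inner/outer radii $\epsilon e^{-(\rho_j(0)+\rho(z_j))/2}$ and $\epsilon e^{-\rho_j(0)/2}$ up to relative $O(\epsilon)$, on which only the $r^{-2}$ piece survives and yields $\pi m_j^2R^2\rho(z_j)$. Your approach is more elementary — no meromorphic continuation or finite-part calculus — and makes the localisation of the anomaly at the $z_j$ transparent; the paper's Riesz-regularisation device is heavier here but is the kind of tool that generalises when the regularised quantity must be compared with other zeta- or analytically-continued objects. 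One cosmetic remark: your common-inner-disc device to make the formally divergent bracket $\bigl(\int_{B_g(z_j,\epsilon)}-\int_{B_{g'}(z_j,\epsilon)}\bigr)$ well defined is necessary and correctly flagged, and your reduction of the $\nu^{\rm h}$ case to $\nu+\dd f$ with $f$ smooth matches the paper's use of \eqref{nuh-nu}.
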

\begin{proof} We consider the case $\pl \Sigma=\emptyset$, the proof in the case with boundary being exactly the same.
Let us write the metric in a small geodesic ball $B_g(z_j,\eps)$ in complex coordinates (using $\psi_j:\D\to U_j$) under the form $g_j:=e^{\rho_j}|\dd z|^2$. One has for $\eps>0$ small that 
\[ \int_{U_j\setminus B_{g}(z_j,\eps)}\nu_{\mathbf{z},\mathbf{m}}\wedge *\bbar{\nu_{\mathbf{z},\mathbf{m}}}=
R^2 m_j^2 \int_{\D\setminus B_{g_j}(0,\eps)} \frac{1}{r^2} (\dd r\wedge r\dd\theta) 
\]
where $z=re^{i\theta}$. Note that $\dd r\wedge r\dd\theta={\rm v}_{|dz|^2}$ is the Euclidean volume form, which is smooth. 
Using the exponential map at $z_j$ for the metric $g_j$, it is direct to check that the Riemannian 
distance $d_j(z):=d_{g_j}(0,z)$ to $0$ in $\mathbb{D}$ satisfies
$d_j(z)=re^{\rho_j(0)}(1+F_j(z))$ with $F_j$ smooth and $F_j(0)=0$. Moreover, one also checks that for $\delta>0$ fixed small, 
as $\eps\to 0$
\[\int_{\delta \D\setminus B_{g_j}(0,\eps)} \frac{1}{r^2} (\dd r\wedge r\dd\theta) =
 2\pi \int_{\eps e^{-\frac{\rho_j(0)}{2}}}^\delta \frac{\dd r}{r}+\mc{O}(1)=-2\pi \log(\eps)+\mc{O}(1).\]
 Next, a direct computation also shows that one can use the Riesz regularization (right hand side) to compute the Hadamard regularization (left hand side):
 \[\lim_{\eps\to 0}\Big(\int_{\delta \D\setminus \{d_j\geq \eps\}} \frac{1}{r^2} {\rm v}_{|dz|^2}+2\pi\log(\eps)\Big)={\rm FP}_{s=0}\int_{\delta \D}d_j(z)^{s}\frac{1}{r^2} {\rm v}_{|dz|^2}\]
where $s\in \C$ and ${\rm FP}_{s=0}$ denotes the finite part (note that the RHS is meromorphic in $s$ with a simple pole at $s=0$). 
If $g'=e^{\rho}g$ is a metric conformal to $g$ and $g'_j=\psi_j^*g'$, 
let $d'_j(z):=d_{g'_j}(0,z)$: then $d'_j(z)=e^{\frac{\rho(z_j)}{2}}d_j'(z)(1+G_j(z))$ for some smooth $G_j(z)$ satisfying $G_j(0)=0$. We then obtain for ${\rm Re}(s)>0$
\[\int_{\delta \D}d'_j(z)^{s}\frac{1}{r^2} {\rm v}_{|dz|^2}=e^{s\frac{\rho(z_j)}{2}}\int_{\delta \D}d_j(z)^{s}(1+G_j(z))^{s}\frac{1}{r^2} {\rm v}_{|dz|^2}.\]
Using that $G_j(z)=\mc{O}(d_j(z))$, we then obtain
\[\begin{split}
e^{s\frac{\rho(z_j)}{2}}\int_{\delta \D}d_j(z)^{s}(1+G_j(z))^{s}\frac{1}{r^2} {\rm v}_{|dz|^2}=& e^{s\frac{\rho(x_j)}{2}}\int_{\delta \D}d_j(z)^{s}(1+s\log(1+G_j(z))+s^2K_j(s,z))
\frac{1}{r^2} {\rm v}_{|dz|^2}\\
=& (1+s\frac{\rho(z_j)}{2})\int_{\delta \D}d_j(z)^{s}\frac{1}{r^2} {\rm v}_{|dz|^2}+ H(s)
\end{split}\]
where $K_j(s,z)$ is holomoprhic in $s$ and smooth in $d_j$, while 
$H(s)$ is holomorphic near $s=0$ and $H(0)=0$ (here we used that $\log(1+G_j(z))=\mc{O}(d_j(z))$, so 
$\int_{\delta \D}d_j(z)^{s}\log(1+G_j(z))r^{-2} {\rm v}_{|dz|^2}$ is analytic near $s=0$). Since as $s\to 0$
\[ \int_{\delta \D}d_j(z)^{s}\frac{1}{r^2} {\rm v}_{|dz|^2}=2\pi \int_0^\delta r^{s-1}dr +\mc{O}(1)= \frac{2\pi}{s}+\mc{O}(1) \]
we deduce that
\[{\rm FP}_{s=0}\int_{\delta \D}d'_j(z)^{s}\frac{1}{r^2} {\rm v}_{|dz|^2}={\rm FP}_{s=0}\int_{\delta \D}d_j(z)^{s}\frac{1}{r^2} {\rm v}_{|dz|^2}+\pi \rho(x_j),\]
which shows the result. The same proof works with $\nu^{\rm h}_{\mathbf{x},\mathbf{m}}$ by using \eqref{nuh-nu}.
\end{proof}

\subsection{Equivariant functions and Sobolev distributions. Case with marked points.}\label{sec:equivariant}
Let $(\Sigma,g)$ be a closed Riemannian surface and ${\bf z}=(z_1,\dots,z_n)$ disjoint marked points on $\Sigma$.
As in Section \ref{UnivCover}, let $\Gamma:= \pi_1(\Sigma\setminus\{{\bf z}\},x_0)$ be the fundamental group of the punctured surface $\Sigma_{\bf z}:=\Sigma\setminus\{{\bf z}\}$, $\tilde{\Sigma}_{\bf z}$ be the universal cover of $\Sigma_{\bf z}$ with $\pi: \tilde{\Sigma}_{\bf z}\to \Sigma_{\bf z}$ the projection, and $\tilde{x}_0\in \tilde{\Sigma}_{\bf z}$ is a fixed preimage of $x_0$ used to define $\tilde{\Sigma}_{\bf z}$. The metric $g$ lifts to $\tilde{\Sigma}_{\bf z}$ and provides a Riemannian measure. We say that $u\in L^2_{\rm loc}(\tilde{\Sigma}_{\bf z})$ if on each fundamental domain $\mc{F}$ of $\Gamma$, $u\in L^2(\mc{F},{\rm dv}_g)$.
Then we can consider the space 
\[  L^2_{\Gamma}(\tilde{\Sigma}_{\bf z}):=\{ u\in L^2_{\rm loc}(\tilde{\Sigma}_{\bf z})\,|\, \forall \gamma\in \Gamma, \gamma^*u-u \in 2\pi R\Z \}.\]
We have $L^2_{\Gamma}(\tilde{\Sigma}_{\bf z})=\bigcup_{\chi} L^2_{\chi}(\tilde{\Sigma}_{\bf z})$
where the union is over the set of group morphisms $\chi: \Gamma_{{\bf z}}\to 2\pi R\Z$ and 
\[ L^2_{\chi}(\tilde{\Sigma}_{\bf z})=\{ u\in L^2_{\rm loc}(\tilde{\Sigma}_{\bf z})\,|\, \forall \gamma\in \Gamma, \gamma^*u-u=\chi(\gamma)\}.\]
Each such group morphism is represented by an element $\omega_{\bf k}+\nu_{{\bf z},{\bf m}}$ 
for $({\bf k},{\bf m})\in \Z^{2{\mathfrak{g}}}\times \Z^n$ via 
\[\chi_{{\bf k},{\bf m}}(\gamma)=\int_{\gamma} (\nu_{{\bf z},{\bf m}}+\omega_{\bf k})\] 
where $\omega_{\bf k}\in \mc{H}^1_R(\Sigma)$ for ${\bf k}\in \Z^{2{\mathfrak{g}}}$ as in \eqref{omega_k} using a basis of $\mc{H}^1_R(\Sigma)$.
This means that 
\[L^2_{\chi_{{\bf k},{\bf m}}}(\tilde{\Sigma}_{\bf z})=\{ \pi^*f +I_{x_0}(\omega_{\bf k})+I_{x_0}(\nu_{{\bf z},{\bf m}}) \, |\, f\in L^2(\Sigma)\}\]
and each element $u\in L^2_{\chi_{{\bf k},{\bf m}}}(\tilde{\Sigma}_{\bf z})$ has a unique decomposition under the form $u=\pi^*f +I_{x_0}(\omega_{\bf k})+I_{x_0}(\nu_{{\bf z},{\bf m}})$.
Here we have set $I_{x_0}(\nu_{{\bf z},{\bf m}})(\tilde{x})=\int_{\alpha_{x_0,x}}\nu_{{\bf z},{\bf m}}$ if $\alpha_{x_0,x}$ lifts to a curve with initial point $\tilde{x}_0$ and endpoint $\tilde{x}$.
We also consider, for $s\in (-1/2,0)$, 
\[ H^s_\Gamma (\tilde{\Sigma}_{\bf z}):=L^2_{\Gamma}(\tilde{\Sigma}_{\bf z})+\pi^*(H^s(\Sigma))\]
where $H^s(\Sigma)$ is the Sobolev space or order $s$ on $\Sigma$.  
There is a one-to-one correspondance 
\[  \Z^{2{\mathfrak{g}}+{\bf m}} \times H^{s}(\Sigma) \to H^s_{\Gamma}(\tilde{\Sigma}_{\bf z}), \quad  ({\bf k},{\bf m},f)\mapsto  \pi^*f+I_{x_0}(\omega_{{\bf k}})+I_{x_0}(\nu_{{\bf z},{\bf m}}).\]

 \section{Curvature term} \label{sub:fund}
%%%%%%%%%%%%%%%%%%%%%%%%%%%%%%

Let $(\Sigma,g)$ be a closed oriented Riemannian surface. For $\omega_{\bf k}\in \mc{H}^1_{R}(\Sigma)$ a closed $1$-form, the construction of the path integral will require to make sense of the integrals 
\[\int_\Sigma K_g I_{x_0}(\omega_{\bf k})\dd {\rm v}_g, \quad \int_\Sigma K_g I_{x_0}(\nu_{{\bf z},{\bf m}})\dd {\rm v}_g \] 
 the problem being that $ I_{x_0}(\omega_{\bf k})$ and $I_{x_0}(\nu_{{\bf z},{\bf m}})$ are multivalued on $\Sigma$, i.e. they live on the universal cover $\tilde{\Sigma}$ of $\Sigma$ and $\tilde{\Sigma}_{\bf z}$ of $\Sigma\setminus \{\bf z\}$. 
 We will thus consider $I_{x_0}(\omega_{\bf k})$ and $I_{x_0}(\nu_{{\bf z},{\bf m}})$ as well-defined 
 functions on a dense open set of $\Sigma$ and $\Sigma\setminus \{\bf z\}$ by removing curves. 
 To obtain an invariant definition, it will be required to remove a curvature term coming from these curves.

\subsection{Curvature term associated to $\mc{H}^1_{R}(\Sigma)$}
%%%%%%%%%%%%%%%%%%%%%%%%%
%\textcolor{red}{(RR: I suggest to reorganize to highlight the main results. Like writing lemmas and proofs...hard to follow where the beginning of this section goes}

Before giving the definition of the regularized curvature term, let us introduce the following notation:  if $\boldsymbol{\sigma}=(a_j,b_j)_{j=1,\dots,{\mathfrak{g}}}$ is a geometric symplectic basis of $\mc{H}_1(\Sigma)$, we let 
\begin{equation}\label{sigma^2}
\Sigma_{\boldsymbol{\sigma}}:= \Sigma \setminus \boldsymbol{\sigma}=\Sigma \setminus \cup_{j=1}^{{\mathfrak{g}}}(a_j\cup b_j).
\end{equation}
We observe that any closed form $\omega$ on $\Sigma_{\boldsymbol{\sigma}}$ is exact (see below) and we denote, for 
 $x_0\in \Sigma_{\boldsymbol{\sigma}}$ a fixed base point, 
\[ I^{\boldsymbol{\sigma}}_{x_0}(\omega)(x):=\int_{\alpha_{x_0,x}}\omega\]
defined using the integral of $\omega$ along an oriented path $\alpha_{x_0,x}\subset \Sigma_{\boldsymbol{\sigma}}$ with  $x_0,x$ as initial and final points, depending smoothly on $x$. This
is a well-defined smooth function on $\Sigma_{\boldsymbol{\sigma}}$, not depending on the choice of path $\alpha_{x_0,x}$ in $\Sigma_{\boldsymbol{\sigma}}$, and $\dd I^{\boldsymbol{\sigma}}_{x_0}(\omega)=\omega$.

\begin{definition}\label{curvature_integral}
For $\boldsymbol{\sigma}=(a_j,b_j)_{j=1,\dots,{\mathfrak{g}}}$ a geometric symplectic basis of $\mc{H}_1(\Sigma)$ and $\omega\in \mc{H}^1_R(\Sigma)$
with associated morphism $\chi_\omega :\mc{H}_1(\Sigma) \to 2\pi R\Z$ given by $\chi_\omega (\gamma):=\int_\gamma \omega$, we define the regularized integral 
\begin{equation}\label{def_reg_integtral}
\int_{\Sigma_{\boldsymbol{\sigma}}}^{\rm reg} K_g I^{\boldsymbol{\sigma}}_{x_0}(\omega)\dd {\rm v}_g:=  \int_{\Sigma_{\boldsymbol{\sigma}}}  I^{\boldsymbol{\sigma}}_{x_0}(\omega) K_g\dd {\rm v}_g+2\sum_{j=1}^{{\mathfrak{g}}}
\Big(\chi_\omega (a_j)\int_{b_j}k_{g}\dd \ell_{g}-\chi_\omega (b_j)\int_{a_j}k_{g}\dd \ell_{g}\Big)
\end{equation}
where we use the convention that the geodesic curvature of an oriented curve $c(t)\subset \Sigma$ parametrized by arclength is defined by 
\[ k_{g}(c(t))= \cjg \nabla_{\dot{c}(t)}\dot{c}(t),\nu(t)\cjd_g\]
where $\nu(t)$ is the unit normal to the curve $c$ at $c(t)$ such that $(\dot{c}(t),\nu)$ is positively oriented in $\Sigma$.
\end{definition}
We notice that in the physics literature, a related renormalization was proposed by Verlinde-Verlinde 
\cite[Section 6.3]{verlinde} in the context of bozonization of fermions. 
Our renormalization only uses a symplectic homology basis and not a fundamental domain for $\Sigma$ in the universal cover $\tilde{\Sigma}$. This seems to us more adapted to study the invariance of the curvature integral under change of diffeomorphisms.
 
We state the important invariance properties of the regularized integral in $4$ Lemmas: local invariance with respect to the geometric symplectic basis, invariance by conformal change of metric, diffeomorphism invariance 
and global invariance by change of symplectic basis of $\mc{H}_1(\Sigma)$. The proofs will be done below.

\begin{lemma}[\textbf{Local invariance}]\label{lem:local_invariance}
Let $\boldsymbol{\sigma}'=(a_j',b_j')_{j=1,\dots, {\mathfrak{g}}}$ be another geometric symplectic basis of $\mc{H}_1(\Sigma)$ representing the symplectic homology basis $[\sigma]=([a_j],[b_j])_j$. Then 
\[\int_{\Sigma_{\boldsymbol{\sigma}}}^{\rm reg} K_gI^{\boldsymbol{\sigma}}_{x_0}(\omega)\dd {\rm v}_g=\int_{\Sigma_{\boldsymbol{\sigma}'}}^{\rm reg} K_g I^{\boldsymbol{\sigma}'}_{x_0}(\omega)\dd {\rm v}_g.\]
\end{lemma}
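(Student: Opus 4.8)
The plan is to compute the difference $F(\boldsymbol{\sigma})-F(\boldsymbol{\sigma}')$ of the two regularised integrals directly and show it vanishes, where $F(\boldsymbol{\sigma})$ denotes the right-hand side of \eqref{def_reg_integtral}. The first observation is that the morphism $\chi_\omega$ depends only on the homology classes $[a_j],[b_j]$, so $\chi_\omega(a_j')=\chi_\omega(a_j)$ and $\chi_\omega(b_j')=\chi_\omega(b_j)$; hence the \emph{coefficients} of the geodesic-curvature counterterms are unchanged, and only the curves over which $k_g$ is integrated, together with the primitive $I_{x_0}(\omega)$, actually move. Since $\omega=\omega^{\rm h}+\dd F$ with $F\in C^\infty(\Sigma)$ single-valued (Hodge decomposition), the single-valued part contributes $\int_\Sigma (F-F(x_0))K_g\,\dd{\rm v}_g$ to the bulk term, which is manifestly basis-independent, so one may assume $\omega$ harmonic; this is convenient but not essential.

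Next I turn to the bulk terms. As the cut loci have measure zero, both $\int_{\Sigma_{\boldsymbol{\sigma}}}I^{\boldsymbol{\sigma}}_{x_0}(\omega)K_g\,\dd{\rm v}_g$ and $\int_{\Sigma_{\boldsymbol{\sigma}'}}I^{\boldsymbol{\sigma}'}_{x_0}(\omega)K_g\,\dd{\rm v}_g$ equal integrals over all of $\Sigma$, and their difference is $\int_\Sigma \Delta\,K_g\,\dd{\rm v}_g$ with $\Delta:=I^{\boldsymbol{\sigma}}_{x_0}(\omega)-I^{\boldsymbol{\sigma}'}_{x_0}(\omega)$. On the complement $U:=\Sigma\setminus(\boldsymbol{\sigma}\cup\boldsymbol{\sigma}')$ both terms are primitives of the same closed form $\omega$, so $\Delta$ is locally constant on $U$ and, since the two primitives share the base point $x_0$, it vanishes on the component of $x_0$. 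Crossing any arc of a cut, $\Delta$ jumps by the monodromy of the corresponding primitive, which is a period of $\omega$ over the dual cycle; thus $\Delta$ takes values in the lattice generated by the $\chi_\omega(a_j),\chi_\omega(b_j)$ and is constant on each face of the graph $\boldsymbol{\sigma}\cup\boldsymbol{\sigma}'$.

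The heart of the argument is then Gauss--Bonnet applied face by face: for each face $\mc{R}$ one writes $\int_{\mc{R}}K_g\,\dd{\rm v}_g=2\pi\chi(\mc{R})-\int_{\partial\mc{R}}k_g\,\dd\ell_g-\sum_v\epsilon_v$, where the $\epsilon_v$ are the exterior angles at the crossing vertices and $\partial\mc{R}$ carries the boundary orientation. Summing $\sum_{\mc{R}}\Delta|_{\mc{R}}\int_{\mc{R}}K_g\,\dd{\rm v}_g$, and using that each interior arc is shared by two faces with opposite induced orientations together with the oddness $k_g\mapsto -k_g$ under orientation reversal (which follows from the definition of $k_g$ in terms of $\nu$), the boundary terms collapse to $\sum_e(\text{jump of }\Delta\text{ across }e)\int_e k_g\,\dd\ell_g$. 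Because the monodromy jump is constant along each closed cut and equals a fixed period, the same for $a_j$ as for the homologous $a_j'$ and likewise for $b_j,b_j'$, this should reorganise into the combination $\sum_j\big(\chi_\omega(a_j)(\int_{b_j}-\int_{b_j'})-\chi_\omega(b_j)(\int_{a_j}-\int_{a_j'})\big)k_g\,\dd\ell_g$ that occurs in the counterterm difference, which is designed precisely to absorb it.

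The main obstacle is the bookkeeping in this last step. One must match signs and orientations so that each monodromy jump pairs with the correct dual curve, verify the numerical normalisation (the factor $2$ and the $a_j\leftrightarrow b_j$ duality in \eqref{def_reg_integtral}), and show that the residual discrete terms $2\pi\sum_{\mc{R}}\Delta|_{\mc{R}}\chi(\mc{R})-\sum_{\mc{R}}\Delta|_{\mc{R}}\sum_v\epsilon_v$ cancel, the latter using that the $\Delta$-values are constant along cycles and that the interior angles around each crossing vertex sum to $2\pi$. I expect the cleanest route is to reduce to changing a single basis curve at a time: two homologous simple curves $b_1,b_1'$ cobound a $2$-chain $A$ on which $\Delta$ is the single period $\pm\chi_\omega(a_1)$, so the bulk difference is $\pm\chi_\omega(a_1)\int_A K_g\,\dd{\rm v}_g$, and either Gauss--Bonnet on $A$ or the first variation along a homotopy $b_1^t$ from $b_1$ to $b_1'$ (in which the geodesic curvature of the sweeping curve appears naturally) produces the boundary integral to be cancelled. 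Controlling the general position of the curves, i.e. their mutual intersections, the corners they create, and the precise factor relating the swept-region curvature to the counterterm, is the delicate technical point on which the whole Lemma rests.
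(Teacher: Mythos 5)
Your overall route is genuinely different from the paper's. You compute the global difference of the two regularised integrals via the locally constant jump function $\Delta:=I^{\boldsymbol{\sigma}}_{x_0}(\omega)-I^{\boldsymbol{\sigma}'}_{x_0}(\omega)$ on $\Sigma\setminus(\boldsymbol{\sigma}\cup\boldsymbol{\sigma}')$ and then apply Gauss--Bonnet face by face. The paper instead runs a first-variation argument: it deforms one curve $a_j^s$ at a time inside the torus piece $\mc{T}_j$, computes $\pl_s$ of the integral over the moving fundamental domain using the equivariant extension of $I^{\boldsymbol{\sigma}}_{x_0}(\omega)$, and \emph{differentiates} the Gauss--Bonnet identity on the swept polygon $\hat{K}_j^s$. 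The point of that route is exactly to kill the terms you struggle with: along the deformation the Euler characteristic of $\hat{K}_j^s$ and the sum of its interior angles are constant in $s$, so the discrete contributions drop out upon differentiation and one gets $\pl_s\int^{\rm reg}=0$ with no combinatorial bookkeeping.

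This identifies the genuine gap in your proposal. With the paper's convention $\int_\Sigma K_g\,\dd{\rm v}_g=4\pi\chi(\Sigma)$ (so Gauss--Bonnet with boundary reads $\int_{\mc{R}}K_g\,\dd{\rm v}_g+2\int_{\pl\mc{R}}k_g\,\dd\ell_g+2\sum_v\epsilon_v=4\pi\chi(\mc{R})$), your face-by-face summation leaves the residue $4\pi\sum_{\mc{R}}\Delta|_{\mc{R}}\,\chi(\mc{R})-2\sum_v\sum_{\mc{R}\ni v}\Delta|_{\mc{R}}\,\epsilon_v$, and you only \emph{assert} that it cancels. It does not cancel for free: each such term is $2\pi$ times a lattice element of $2\pi R\Z$, and Lemma \ref{independence_basis} (proved in Appendix \ref{Lemma_indep}) shows that for a genuine change of symplectic basis precisely this kind of residue survives --- e.g.\ the factor-mix generator produces the leftover $-4\pi(\chi_\omega(b_1)+\chi_\omega(b_2))\in 8\pi^2R\Z$. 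So showing the residue vanishes exactly when $[\boldsymbol{\sigma}']=[\boldsymbol{\sigma}]$ is not a routine verification; it \emph{is} the content of the lemma, and your argument as written only yields equality up to $4\pi^2R\Z$-type terms. Two further points in your fallback reduction also need proof: (i) at a crossing vertex the four incident faces carry \emph{different} values of $\Delta$, so ``the interior angles around each vertex sum to $2\pi$'' does not annihilate the weighted angle sum; (ii) replacing a single curve $b_1$ by a homologous $b_1'$ is delicate because homologous simple closed curves need not be isotopic, the cobounding $2$-chain $A$ may carry integer multiplicities and corners (re-introducing exactly the discrete terms), and intermediate bases must be shown to remain geometric symplectic (the disjointness/intersection pattern must be preserved), which the paper sidesteps by working infinitesimally with $a_j^s\subset\mc{T}_j^\circ$ for small $s$.
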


\begin{lemma}[\textbf{Conformal change of metrics}]\label{conformal_change_reg_int}
Let $x_0\in \Sigma$ and $\sigma=(a_j,b_j)_{j=1,\dots,{\mathfrak{g}}}$ be a geometric symplectic basis of $\mc{H}_1(\Sigma)$. Let $\rho\in C^\infty(\Sigma)$ and $\hat{g}=e^{\rho}g$ be two conformally related metrics on $\Sigma$ and 
$\omega\in \mc{H}^1_{R}(\Sigma)$ a closed $1$-form. Then the following identity holds true
\[ 
\int^{\rm reg}_{\Sigma_{\boldsymbol{\sigma}}} I^{\boldsymbol{\sigma}}_{x_0}(\omega)K_{\hat{g}}\dd {\rm v}_{\hat{g}}
=\int^{\rm reg}_{\Sigma_{\boldsymbol{\sigma}}} I^{\boldsymbol{\sigma}}_{x_0}(\omega)K_{g}\dd {\rm v}_{g}+\cjg  d\rho,\omega\cjd_2.\]
 \end{lemma}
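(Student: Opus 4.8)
The plan is to compare the two regularised integrals by reducing the problem to the conformal anomaly of the bulk integrand and the conformal variation of the geodesic curvature, and then to exhibit an exact cancellation of the resulting boundary contributions. Write $u:=I^{\boldsymbol{\sigma}}_{x_0}(\omega)$, which is smooth on $\Sigma_{\boldsymbol{\sigma}}$ with $\dd u=\omega$ and has finite jumps across the cuts $a_j,b_j$. Using the relation $K_{\hat g}\,\dd {\rm v}_{\hat g}=(K_g+\Delta_g\rho)\,\dd {\rm v}_g$ recalled after \eqref{GB} (with $\Delta_g=\dd^*\dd$ the non-negative Laplacian), the difference of the two bulk integrals is exactly $\int_{\Sigma_{\boldsymbol{\sigma}}}u\,\Delta_g\rho\,\dd {\rm v}_g$. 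I would then integrate this by parts on the cut surface $\Sigma_{\boldsymbol{\sigma}}$, whose boundary consists of two banks of each curve $a_j$ and $b_j$. Green's identity, together with $\dd u=\omega$ and the fact that the cuts are null sets, gives
\[
\int_{\Sigma_{\boldsymbol{\sigma}}}u\,\Delta_g\rho\,\dd {\rm v}_g=\cjg \dd\rho,\omega\cjd_2-\int_{\pl\Sigma_{\boldsymbol{\sigma}}}u\,\pl_{n}\rho\,\dd\ell_g,
\]
where $n$ is the outward unit normal to $\Sigma_{\boldsymbol{\sigma}}$. Since $\cjg \dd\rho,\omega\cjd_2=\int_\Sigma \dd\rho\wedge *\omega$ is precisely the desired right-hand side, everything reduces to evaluating the remaining term $B:=\int_{\pl\Sigma_{\boldsymbol{\sigma}}}u\,\pl_n\rho\,\dd\ell_g$.

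Next I would compute $B$ bank by bank. On the two banks of a single cut the outward normals are opposite, while $\rho$, being a genuine smooth function on $\Sigma$, has continuous normal derivative; the only discontinuity is that of $u$, which jumps by a period of $\omega$. A short computation in a local model (already visible on the flat torus, which exhibits the whole mechanism) shows that across $a_j$ the jump of $u$ is $-\chi_\omega(b_j)$ and across $b_j$ it is $+\chi_\omega(a_j)$, the relevant cycle being the intersection-dual one and the signs being fixed by $\iota(a_j,b_j)=\delta_{ij}$. Pairing the two banks of each cut collapses $B$ to
\[
B=\sum_{j=1}^{{\mathfrak{g}}}\Big(\chi_\omega(b_j)\int_{a_j}\pl_{\nu}\rho\,\dd\ell_g-\chi_\omega(a_j)\int_{b_j}\pl_{\nu}\rho\,\dd\ell_g\Big),
\]
where on each curve $\nu$ is the positively oriented normal used in the definition of $k_g$.

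Finally I would handle the counterterms. The conformal variation of the geodesic curvature, $k_{\hat g}\,\dd\ell_{\hat g}=(k_g-\tfrac{1}{2}\pl_\nu\rho)\,\dd\ell_g$ — obtainable directly from the transformation law of the Levi-Civita connection, or from Gauss--Bonnet combined with $K_{\hat g}\dd {\rm v}_{\hat g}=(K_g+\Delta_g\rho)\dd {\rm v}_g$ — shows that the difference of the two curvature counterterms in \eqref{def_reg_integtral}, evaluated for $\hat g$ and $g$, equals exactly this same $B$. Adding the bulk contribution $\cjg \dd\rho,\omega\cjd_2-B$ to the counterterm contribution $B$, the two copies of $B$ cancel and one is left with $\cjg \dd\rho,\omega\cjd_2$, which is the assertion.

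I expect the main obstacle to be the bookkeeping of signs. The sign of the jump of $u$ across each cut (governed by the intersection form and by the orientation convention for $\nu$) must be matched against the sign in the conformal variation of $k_g$ and against the factor $2$ and the $a_j\leftrightarrow b_j$ antisymmetry built into \eqref{def_reg_integtral}; it is exactly this matching that forces the particular normalisation of the counterterms, and getting any one sign wrong destroys the cancellation. A secondary technical point is to justify Green's identity on $\Sigma_{\boldsymbol{\sigma}}$, which is a surface with boundary and corners at the common vertex of the cuts: since $u$ extends smoothly up to each bank and the corners are of codimension two, they contribute nothing, and the integration by parts is legitimate.
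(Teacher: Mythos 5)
Your proposal is correct and follows essentially the same route as the paper's proof: the paper likewise reduces the bulk difference to $\int_{\Sigma_{\boldsymbol{\sigma}}} I^{\boldsymbol{\sigma}}_{x_0}(\omega)\,\Delta_g\rho\,\dd{\rm v}_g$, integrates by parts on the cut surface, pairs the two banks of each cut via the equivariance $I^{\boldsymbol{\sigma}}_{x_0}(\omega)(\gamma_{\sigma_j}x)=I^{\boldsymbol{\sigma}}_{x_0}(\omega)(x)+\chi(\gamma_{\sigma_j})$ together with the reversal of the normal, and cancels the resulting bank integrals against the counterterms of \eqref{def_reg_integtral} using $k_{\hat g}\,\dd\ell_{\hat g}=k_g\,\dd\ell_g-\tfrac{1}{2}\pl_\nu\rho\,\dd\ell_g$. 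Your sign bookkeeping (jump across $a_j$ governed by $\chi_\omega(b_j)$ and vice versa, fixed by the intersection pairing) and your remark that the corners contribute nothing match the paper's argument on the fundamental domains $K_j$, so there is no gap.
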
 

\begin{lemma}[\textbf{Invariance by diffeomorphism}]\label{lemcurvdiff}
 For $\psi: \Sigma\to \Sigma$ an orientation preserving diffeomorphism and $\boldsymbol{\sigma}=(a_j,b_j)_{j=1,\dots,{\mathfrak{g}}}$ a geometric symplectic basis  of $\mc{H}_1(\Sigma)$, let $\psi(\boldsymbol{\sigma})=(\psi(a_j),\psi(b_j))_{j=1,\dots,{\mathfrak{g}}}$ be  the image geometric symplectic basis representing the basis
 $(\psi_*[a_j],\psi_*[b_j])_{j=1,\dots,{\mathfrak{g}}}$ of $\mc{H}_1(\Sigma)$. 
 Let $x_0\in \Sigma$, then, the following identity holds true 
\[ \int_{\Sigma_{\boldsymbol{\sigma}}}^{\rm reg} K_g I^{\boldsymbol{\sigma}}_{x_0}(\omega)\dd {\rm v}_g =\int_{\Sigma_{\psi(\sigma)}}^{\rm reg} K_{\psi_*g} I^{\psi(\boldsymbol{\sigma})}_{\psi(x_0)}(\psi_*\omega)\dd {\rm v}_{\psi_*g}.\]
 \end{lemma}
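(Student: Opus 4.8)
The plan is to exploit the fact that, by definition of the pushforward metric $\psi_*g=(\psi^{-1})^*g$, one has $\psi^*(\psi_*g)=g$, so that $\psi$ is an \emph{isometry} from $(\Sigma,g)$ onto $(\Sigma,\psi_*g)$. Consequently every Riemannian ingredient entering Definition \ref{curvature_integral} transforms functorially, and the whole identity reduces to a change of variables. First I would record the naturality identities
\[
\psi^*K_{\psi_*g}=K_g,\qquad \psi^*{\rm v}_{\psi_*g}={\rm v}_g,\qquad \psi^*(\psi_*\omega)=\omega,
\]
the middle one using that $\psi$ is orientation preserving. I would also note that $\psi(\boldsymbol{\sigma})$ is again a geometric symplectic basis, since $\psi_*$ preserves the intersection form and thus belongs to ${\rm Sp}(2{\mathfrak{g}},\Z)$, so that the right-hand side is a legitimate instance of the regularized integral.

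Next I would check that the primitive and the period morphism are natural under $\psi$. Since $\psi$ maps $\Sigma_{\boldsymbol{\sigma}}$ diffeomorphically onto $\Sigma_{\psi(\boldsymbol{\sigma})}$, any admissible path $\alpha_{x_0,x}\subset\Sigma_{\boldsymbol{\sigma}}$ from $x_0$ to $x$ is carried to an admissible path $\psi\circ\alpha_{x_0,x}\subset\Sigma_{\psi(\boldsymbol{\sigma})}$ from $\psi(x_0)$ to $\psi(x)$, and the substitution $\int_{\psi\circ\alpha}\psi_*\omega=\int_{\alpha}\psi^*(\psi_*\omega)=\int_{\alpha}\omega$ gives
\[
I^{\psi(\boldsymbol{\sigma})}_{\psi(x_0)}(\psi_*\omega)\circ\psi=I^{\boldsymbol{\sigma}}_{x_0}(\omega).
\]
The same computation applied to the closed curves $a_j,b_j$ yields $\chi_{\psi_*\omega}(\psi(a_j))=\chi_\omega(a_j)$ and $\chi_{\psi_*\omega}(\psi(b_j))=\chi_\omega(b_j)$.

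Finally I would assemble the two sides. For the bulk term I change variables $y=\psi(x)$ in $\int_{\Sigma_{\psi(\boldsymbol{\sigma})}} I^{\psi(\boldsymbol{\sigma})}_{\psi(x_0)}(\psi_*\omega)\,K_{\psi_*g}\,\dd{\rm v}_{\psi_*g}$: pulling back the top form by the orientation-preserving map $\psi$ and using the three naturality identities together with the transformation of the primitive gives exactly $\int_{\Sigma_{\boldsymbol{\sigma}}} I^{\boldsymbol{\sigma}}_{x_0}(\omega)\,K_g\,\dd{\rm v}_g$. For the counterterms I use that geodesic curvature and arclength are isometry invariants, so $\int_{\psi(a_j)}k_{\psi_*g}\,\dd\ell_{\psi_*g}=\int_{a_j}k_g\,\dd\ell_g$ and likewise for $b_j$; combined with $\chi_{\psi_*\omega}(\psi(a_j))=\chi_\omega(a_j)$ this matches the counterterms in \eqref{def_reg_integtral} on the two sides, completing the proof. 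The one step deserving genuine care is the bookkeeping of orientations in the geodesic curvature counterterm: the sign of $k_g$ is fixed by requiring $(\dot c,\nu)$ to be positively oriented in $\Sigma$, and it is precisely the orientation-preserving hypothesis on $\psi$ that guarantees this sign—and hence the signs in \eqref{def_reg_integtral}—is reproduced correctly under the map; everything else is routine functoriality.
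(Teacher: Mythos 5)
Your proof is correct and takes essentially the same route as the paper's: both reduce the identity to a change of variables under $\psi$, viewed as an isometry from $(\Sigma,g)$ to $(\Sigma,\psi_*g)$, combining the naturality of the primitive, $I^{\psi(\boldsymbol{\sigma})}_{\psi(x_0)}(\psi_*\omega)\circ\psi=I^{\boldsymbol{\sigma}}_{x_0}(\omega)$, with the invariance of the periods $\chi_{\psi_*\omega}(\psi(\sigma_j))=\chi_\omega(\sigma_j)$ and of the geodesic-curvature line integrals. Your explicit check of the sign convention for $k_g$ via the orientation-preserving hypothesis is a detail the paper leaves implicit, but it does not alter the argument.
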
 
\begin{lemma}[\textbf{Invariance by change of symplectic basis of $\mc{H}_1(\Sigma)$}]\label{independence_basis}
 Let $\sigma=(a_j,b_j)_{j=1,\dots,{\mathfrak{g}}}$ and $\sigma'=(a_j',b'_j)_{j=1,\dots,{\mathfrak{g}}}$ be two geometric symplectic bases  
 of $\mc{H}_1(\Sigma)$.   Let $x_0\in \Sigma$, then the following identity holds true 
\[ \int_{\Sigma_{\boldsymbol{\sigma}}}^{\rm reg} K_g I^{\boldsymbol{\sigma}}_{x_0}(\omega)\dd {\rm v}_g -\int_{\Sigma_{\sigma'}}^{\rm reg} K_{g}I^{\boldsymbol{\sigma}'}_{x_0}(\omega)\dd {\rm v}_{g}\in 8\pi^2 R\Z.\]
 \end{lemma}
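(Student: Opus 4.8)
The plan is to combine the two invariance results already established with a judicious choice of metric and representatives. The starting observation is that the quantity we must control,
\[\Theta:=\int_{\Sigma_{\boldsymbol{\sigma}}}^{\rm reg} K_g I^{\boldsymbol{\sigma}}_{x_0}(\omega)\dd {\rm v}_g -\int_{\Sigma_{\boldsymbol{\sigma}'}}^{\rm reg} K_{g}I^{\boldsymbol{\sigma}'}_{x_0}(\omega)\dd {\rm v}_{g},\]
is insensitive both to the conformal representative of $g$ and to the choice of representing curves. Indeed, by Lemma \ref{conformal_change_reg_int} a conformal change $g\to e^\rho g$ adds $\cjg d\rho,\omega\cjd_2$ to \emph{each} of the two regularized integrals, a term independent of the basis, so $\Theta$ is a conformal invariant; and by Lemma \ref{lem:local_invariance} each integral depends only on the homology classes $[\boldsymbol{\sigma}],[\boldsymbol{\sigma}']$, so we may pick any convenient geometric representatives.

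With this freedom I would pass, for ${\mathfrak{g}}\geq 2$, to a conformal metric whose curvature is concentrated in a small disk. Using Lemma \ref{Aubin}, choose a point $x_\ast\in\Sigma\setminus(\boldsymbol{\sigma}\cup\boldsymbol{\sigma}')$, a small disk $D\ni x_\ast$ disjoint from both cut systems, and a conformal metric (still denoted $g$) with $K_g\leq 0$ supported in $D$ and $K_g\equiv 0$ outside $D$; by Gauss--Bonnet \eqref{GB}, $\int_D K_g\dd {\rm v}_g=4\pi\chi(\Sigma)$. If moreover the representing curves are chosen to be $g$-geodesics, then $k_g\equiv 0$ along each of them and the geodesic-curvature counterterms in Definition \ref{curvature_integral} vanish for \emph{both} bases. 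Since $K_g$ is supported in $D\subset\Sigma_{\boldsymbol{\sigma}}\cap\Sigma_{\boldsymbol{\sigma}'}$, this leaves
\[\Theta=\int_{D}\big(I^{\boldsymbol{\sigma}}_{x_0}(\omega)-I^{\boldsymbol{\sigma}'}_{x_0}(\omega)\big) K_g\dd {\rm v}_g.\]

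The integrand is now easy to evaluate. On $\Sigma\setminus(\boldsymbol{\sigma}\cup\boldsymbol{\sigma}')$ the function $h:=I^{\boldsymbol{\sigma}}_{x_0}(\omega)-I^{\boldsymbol{\sigma}'}_{x_0}(\omega)$ has differential $\omega-\omega=0$, hence is locally constant with values in the period group $2\pi R\Z$ (its value at a point equals $\int_\gamma\omega$ along the closed loop $\gamma$ obtained by concatenating the two defining paths). As $D$ is connected and avoids the cuts, $h\equiv 2\pi R N_0$ on $D$ for some $N_0\in\Z$, whence
\[\Theta=2\pi R N_0\int_D K_g\dd {\rm v}_g=2\pi R N_0\cdot 4\pi\chi(\Sigma)=8\pi^2 R\, N_0\,\chi(\Sigma)\in 8\pi^2 R\Z,\]
using $\chi(\Sigma)=2-2{\mathfrak{g}}\in\Z$. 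The remaining cases are immediate: for ${\mathfrak{g}}=0$ one has $\mc{H}_1(\Sigma)=0$, so $\omega=0$ and $\Theta=0$, while for ${\mathfrak{g}}=1$ one has $\chi(\Sigma)=0$ and, reducing by conformal invariance to a flat metric with geodesic cuts, both integrals vanish.

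The main obstacle is the simultaneous geometric realization underlying the second paragraph: one must produce a single conformal metric that is \emph{flat outside a small disk} $D$ and for which the representatives of \emph{both} symplectic bases are geodesics disjoint from $D$. These demands interact, since flattening the curvature outside $D$ is a global conformal change that displaces geodesics, while fixing the geodesics first constrains where the curvature may sit. I would resolve this by working with the flat conical metric carrying a single cone point of angle $2\pi(1-\chi(\Sigma))>2\pi$ at $x_\ast$ (the $D\to\{x_\ast\}$ limit of the Aubin metrics), arguing that for a generic location of $x_\ast$ the finitely many closed geodesics representing $a_j,b_j,a_j',b_j'$ avoid the cone point, and then smoothing the cone conformally inside a disk $D$ small enough to miss these geodesics; outside $D$ the metric is unchanged and flat, so the cuts stay geodesic. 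A more computational but more robust alternative, avoiding the realization issue entirely, is to use Lemma \ref{lem:local_invariance} to reduce $\Theta$ to a change of basis $A\in{\rm Sp}(2{\mathfrak{g}},\Z)$, to note that $\Theta \bmod 8\pi^2 R\Z$ telescopes along any factorization of $A$ into generators, and to verify the claim on a generating set of symplectic transvections --- each realized by a Dehn twist modifying a single cut --- by a local Gauss--Bonnet computation producing precisely a multiple of $(2\pi R)(2\pi)$.
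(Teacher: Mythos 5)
Your main route has a genuine gap, and it sits exactly where you flag it --- but the problem is not merely that the simultaneous geodesic realization is hard to construct: it cannot exist with the freedom you claim. If it did, your computation would prove the strictly stronger statement $\Theta=8\pi^2R\,N_0\,\chi(\Sigma)\in 8\pi^2R\,\chi(\Sigma)\,\Z$. This is incompatible with the paper's own evaluation of the fourth Burkhardt generator (the factor mix) in Appendix \ref{Lemma_indep}, where the shift is $-4\pi\big(\chi_\omega(b_1)+\chi_\omega(b_2)\big)$: since the difference $\Theta$ is exactly conformally invariant (Lemma \ref{conformal_change_reg_int}) and, by the invariance you invoke, independent of the choice of representatives, both computations must return the same number; for ${\mathfrak{g}}=2$ and $\omega$ with $\chi_\omega(b_1)=2\pi R$, $\chi_\omega(b_2)=0$ the paper gets $-8\pi^2R$, which is not a multiple of $8\pi^2R\,|\chi(\Sigma)|=16\pi^2R$. (Your conclusion would also relax the quantization constraint from $QR\in\Z$ to $QR\,\chi(\Sigma)\in\Z$, contradicting the paper's discussion of when diffeomorphism invariance holds.) Concretely, two things break. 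First, with all $4{\mathfrak{g}}$ curves geodesic and the curvature confined to a single disk $D$, Gauss--Bonnet applied to the complementary regions (whose corners, being transverse intersections of geodesics, have interior angles in $(0,\pi)$) forces every component not containing $D$ to be a disk of total exterior angle $2\pi$ and rigidly constrains where $D$ may sit relative to the two systems; the ``generic cone point'' argument ignores that moving $x_\ast$ changes the metric globally and hence the geodesic representatives themselves, so you never get to place the curvature in a component of $\Sigma\setminus(\boldsymbol{\sigma}\cup\boldsymbol{\sigma}')$ of your choosing. Second, Lemma \ref{lem:local_invariance} is proved by deforming curves, so it only controls representatives reached through continuous families of geometric symplectic bases; simple closed curves in the same homology class need not be isotopic (they can differ by Torelli mapping classes), so the jump from the given cut systems to geodesic representatives is not licensed by that lemma either.

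Your fallback alternative is essentially the paper's actual proof: after using Aubin's theorem to concentrate the curvature in an arbitrarily small set (your same first step), the paper checks the identity generator by generator on the four Burkhardt generators of ${\rm Sp}(2{\mathfrak{g}},\Z)$ --- factor rotation, factor swap, transvection (realized by a Dehn twist and handled by exactly the local Gauss--Bonnet argument you sketch), and factor mix. The point your sketch misses is that the per-generator shift is not always zero: rotation, swap and transvection leave the regularized integral unchanged, while the factor mix produces $-4\pi\big(\chi_\omega(b_1)+\chi_\omega(b_2)\big)\in 8\pi^2R\,\Z$, and this nonzero contribution is precisely why only invariance modulo $8\pi^2R\,\Z$ holds. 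Note also that, with the factor $2$ in front of the counterterms in Definition \ref{curvature_integral}, the relevant unit is $2\cdot(2\pi)(2\pi R)=8\pi^2R$, not the $(2\pi R)(2\pi)$ of your parenthetical. Carrying out that generator-by-generator verification is the viable way to complete the proof.
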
  
 
In order to write the proofs of these Lemmas, we first need to introduce a few notations and geometric 
decomposition of $\Sigma_{\boldsymbol{\sigma}}$.
Choose a geometric symplectic basis $\boldsymbol{\sigma}:=(a_j,b_j)_{j=1,\dots, {\mathfrak{g}}}$ of $\mc{H}_1(\Sigma)$ with intersection points  $x_j:=a_j\cap b_j$.
The surface $\Sigma$ can be decomposed under the form 
\[ \Sigma = S_{\boldsymbol{\sigma}} \cup\cup_{j=1}^{{\mathfrak{g}}} \mc{T}_j\]
where $S_{\boldsymbol{\sigma}}$ is a sphere with ${\mathfrak{g}}$ disks $\mc{D}_1,\dots,\mc{D}_{\mathfrak{g}}$  removed, thus having ${\mathfrak{g}}$ boundary circles $(c_j)_j$, and each 
$\mc{T}_j$ is a torus with a disk $\mc{D}'_j$ removed and whose  non-trivial homology cycles are $(a_j,b_j)$. The boundary of $\mc{D}'_j$ is 
glued to $c_j$.  Thus
\begin{equation}\label{sigma^2}
\Sigma_{\boldsymbol{\sigma}}:= \Sigma \setminus \cup_{i=1}^{{\mathfrak{g}}}(a_j\cup b_j)
\end{equation}
is an open surface which decomposes as  
\[ \Sigma_{\boldsymbol{\sigma}}= S_\sigma\cup \cup_{i=1}^{\mathfrak{g}} K_j \]
where 
\[K_j = \{ z\in \C\,|\, {\rm Re}(z)\in (0,1),{\rm Im}(z)\in (0,1)\} \setminus D(e,\eps) , \quad e=\frac{1}{2}(1+i),\]   
is a square with a small disk $\mc{D}'_j=D(e,\eps)$ centered at $e$ and of radius $\eps<1/4$ removed. The circle $c_j$ is glued to the circle 
$\pl\mc{D}'_j\subset K_j$, and the closure $\bbar{K}_j$ is a surface with a boundary circle $\pl D(e,\eps)$ and with 
$4$ oriented boundary curves 
\[\sigma_{a_j}=\{t\in \C\,| t\in [0,1]\}, \, \bar{\sigma}_{a_j}=\{i+t\in \C\,|, t\in [0,1]\},\, 
\sigma_{b_j}=\{ it\in \C\,|\, t\in [0,1]\}, \, \bar{\sigma}_{b_j}=\{1+it\in \C\,|\, t\in [0,1]\}\]
forming $4$ corners. 
The torus $\mc{T}_j$ is realized as a quotient  $\mc{T}_j = \tilde{\mc{T}}_j/ (\Z+i\Z)$ where 
$\tilde{\mc{T}}_j=\C \setminus \cup_{k\in \Z+i\Z}D(e+k,\eps)$ of $\mc{T}_j$, with the action of the abelian group $\Z+i\Z$ being by translation. 
The generators $1,i$ of $\Z+i\Z\simeq \Z^2$ are identified to the cycles $a_j,b_j$ and we write $\gamma_{a_j}(z)=z+1$ and $\gamma_{b_j}(z)=z+i$.
The set $K_j$ is a fundamental domain for the quotient map $\pi_j:\tilde{\mc{T}}_j\to \tilde{\mc{T}}_j/(\Z+i\Z)$, and  
$\gamma_{a_j}(\sigma_{a_j})=\bar{\sigma}_{a_j}$ and  $\gamma_{b_j}(\sigma_{b_j})= \bar{\sigma}_{b_j}$. 
The curves $\sigma_{a_j},\bar{\sigma}_{a_j}$ (resp. $\sigma_{b_j},\bar{\sigma}_{b_j}$) are lifts of $a_j\cap \mc{T}_j$ (resp. $b_j\cap \mc{T}_j$) to 
$\bbar{K}_j$.

\begin{figure}[h] 
\includegraphics[width=0.7\textwidth]{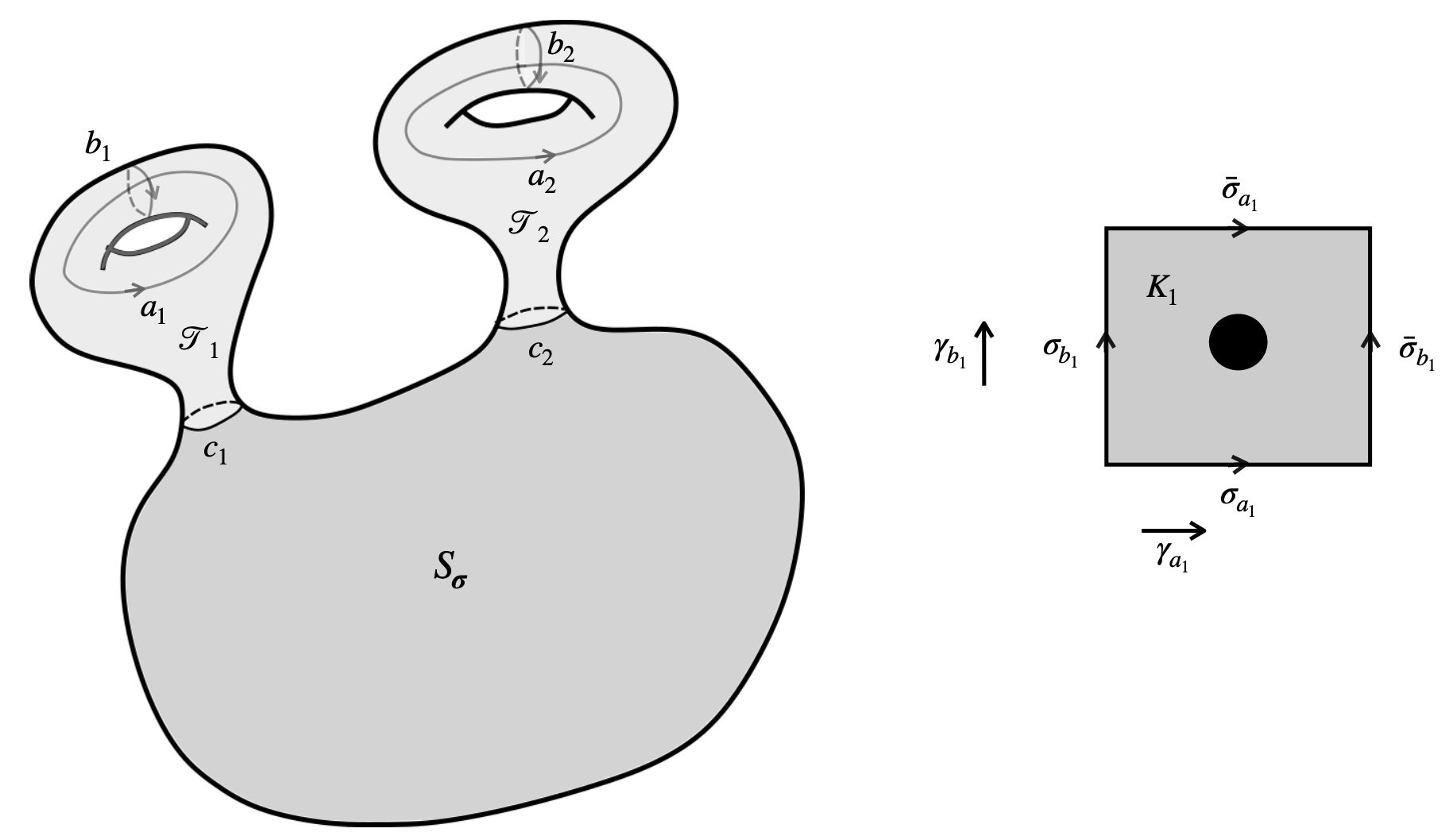} 
\caption{The surface decomposition and a geometric symplectic basis of $\mc{H}_1(\Sigma)$.}
\label{fig:domaine}
\end{figure}

 Since $c_j$ is the boundary of $\mc{T}_j$ (viewed as embedded in $\Sigma$),
we observe that $\int_{c_j}\omega=0$ for all closed form $\omega$. Since the only non contractible closed curves in $\Sigma_{\boldsymbol{\sigma}}$ are generated by the $(c_j)_j$, this shows that all closed form $\omega$ on $\Sigma_{\boldsymbol{\sigma}}$ are exact, as was claimed above, and $I^{\boldsymbol{\sigma}}_{x_0(\omega)}$ is well-defined on $\Sigma_{\boldsymbol{\sigma}}$.
This function, restricted to $\mc{T}_j\setminus \{a_j,b_j\}$, pulls-back by $\pi_j$ to a smooth function on $K_j$ that extends smoothy to $\bbar{K}_j$ satisfying 
\[ \forall z\in \sigma_{a_j}, \, I^{\boldsymbol{\sigma}}_{x_0}(\omega)(\gamma_{b_j}(z))=I^{\boldsymbol{\sigma}}_{x_0}(\omega)(z)+\int_{b_j}\omega, \quad  \forall z\in \sigma_{b_j}, \, I^{\boldsymbol{\sigma}}_{x_0}(\omega)(\gamma_{a_j}(z))=I^{\boldsymbol{\sigma}}_{x_0}(\omega)(z)+\int_{a_j}\omega.\]
If we glue $\tilde{\mc{T}}_j$  to $S_{\boldsymbol{\sigma}}$ by identifying $\pl \mc{D}'_j \subset K_j$ with $c_j\subset \pl S_{\boldsymbol{\sigma}}$, then $I^{\boldsymbol{\sigma}}_{x_0}(\omega)|_{K_j}$ extends smoothly from $K_j$ to $\tilde{\mc{T}}_j$ satisfying 
\[I^{\boldsymbol{\sigma}}_{x_0}(\omega)(\gamma_{b_j}(z))=I^{\boldsymbol{\sigma}}_{x_0}(\omega)(z)+\int_{b_j}\omega,\quad   
I^{\boldsymbol{\sigma}}_{x_0}(\omega)(\gamma_{a_j}(z))=I^{\boldsymbol{\sigma}}_{x_0}(\omega)(z)+\int_{a_j}\omega.\] 
We will call it the \emph{equivariant extension} of $I^{\boldsymbol{\sigma}}_{x_0}(\omega)$ to $\tilde{\mc{T}}_j$
and denote it in the same way $I^{\boldsymbol{\sigma}}_{x_0}(\omega)$. 

We also make the following observation: let $U^\sigma_{x_0}\subset \tilde{\Sigma}$ be the connected component, in the universal cover of $\Sigma$, of the open set $\pi^{-1}(\Sigma_{\boldsymbol{\sigma}})$ that contains the point $\tilde{x}_0$. Then $\pi^*I^{\boldsymbol{\sigma}}_{x_0}(\omega)=I_{x_0}(\omega)$ on $U^{\boldsymbol{\sigma}}_{x_0}$ and 
therefore $\pi^*I^{\boldsymbol{\sigma}}_{x_0}(\omega)$ extends on $\tilde{\Sigma}$ as a smooth function in $C^\infty_\Gamma(\tilde{\Sigma})$.

\begin{proof}[Proof of Lemma \ref{lem:local_invariance}] It suffices to prove that, if $a_j^s, b_j^s$ are families of simple curves representing $[a_j],[b_j]$ with $a_j^0=a_j,b_j^0=b_j$,
then for $\boldsymbol{\sigma}^s=(a_j^s,b_j^s)$ the derivative 
\[ \pl_{s}\Big(\int_{\Sigma_{\boldsymbol{\sigma}^s}}^{\rm reg} K_g I_{x_0}^{\boldsymbol{\sigma}^s}(\omega)\dd {\rm v}_g\Big)|_{s=0}=0.\]
Without loss of generality, we prove this when only one $a_j^s$ is depending on $s$, and for small $s$ the curve $a_j^s\subset \mc{T}^\circ_j$.
Let $a_j^s(t)=a_j(t)+sv_j(t)+\mc{O}(s^2)$ and we consider its lifts to the cover $\tilde{\mc{T}}_j$: this defines two families of curves 
$\sigma_{a_j^s}$ and $\bar{\sigma}_{a_j^s}=\gamma_{b_j}(\sigma_{a_j^s})$ joining the vertical lines $i\R$ and $i\R+1$ 
(the lifts of $\sigma_{b_j}$ and $\bar{\sigma}_{b_j}$) and the domain $K_j^s\subset \{{\rm Re}(z)\in [0,1]\}$ 
enclosed between these two   curves produces a new fundamental domain for $\mc{T}_j$. If $I^{\boldsymbol{\sigma}}_{x_0}(\omega)$ is the smooth equivariant extension of $I^\sigma(\omega)|_{K_j}$ to $\tilde{\mc{T}}_j$, one has 
\[\int_{K_j^s} I^{\boldsymbol{\sigma}^s}_{x_0}(\omega)K_g\dd {\rm v}_g=\int_{K_j^s} I^{\boldsymbol{\sigma}}_{x_0}(\omega)K_g\dd {\rm v}_g\]
and it thus suffices to compute the variation
\[\begin{split}
\pl_s\Big(\int_{K_j^s} I^{\boldsymbol{\sigma}}_{x_0}(\omega)K_g\dd {\rm v}_g\Big)|_{s=0}=& -\int_{\sigma_{a_j}}g(v_j,\nu) I^{\boldsymbol{\sigma}}_{x_0}(\omega)K_g\dd \ell_g+
\int_{\bar{\sigma}_{a_j}}g(d\gamma_{b_j}.v_j,d\gamma_j.\nu) I^{\boldsymbol{\sigma}}_{x_0}(\omega)K_g\dd \ell_g\\
=& \chi(\gamma_{b_j})\int_{\sigma_{a_j}}g(v_j,\nu)K_g\dd \ell_g
\end{split}\]
where $\nu$ is the incoming unit normal vector field to $\sigma_{a_j}$ in $K_j$. 
 \begin{figure}[h] 
\includegraphics[width=0.35\textwidth]{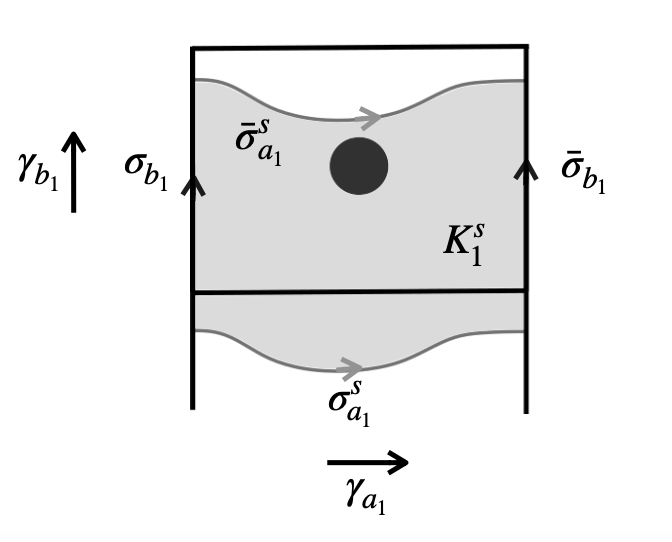} 
\caption{Moving the curve $a_1^s$} 
\label{fig:moving_domaine}
\end{figure} 
We can now differentiate the Gauss-Bonnet formula in the polygonal 
domain $\hat{K}_j^s$ bounded by $\sigma_{a_j^s}\cup \bar{\sigma}_{a_j}\cup i\R\cup (1+i\R)$: since the sum of the interior angles is constant (equal to 
$2\pi$), we get
\[ \int_{\sigma_{a_j}}g(v_j,\nu)K_g\dd \ell_g=-\pl_s\Big(\int_{\hat{K}_{j}^s} K_g\dd {\rm v}_g\Big)|_{s=0}=
2\pl_s\Big(\int_{\sigma_{a_j^s}}k_g d\ell_g\Big)|_{s=0}.\]
This implies that 
\[\pl_s\Big(\int_{K_j^s} I^{\boldsymbol{\sigma}}_{x_0}(\omega)K_g\dd {\rm v}_g\Big)|_{s=0}=2\chi(\gamma_{b_j})\pl_s\Big(\int_{\sigma_{a_j^s}}k_g d\ell_g\Big)|_{s=0}\]
and thus 
\[\pl_s\Big(\int^{\rm reg}_{\Sigma_{\sigma^s}}I^{\boldsymbol{\sigma}^s}_{x_0}(\omega)K_g\dd {\rm v}_g\Big)|_{s=0}=0.\qedhere\]
\end{proof}

Next, we check the conformal covariance of the regularized integral.
 \begin{proof}[Proof of Lemma \ref{conformal_change_reg_int}] We use the relation $K_{\hat{g}}=e^{-\rho}(K_g+\Delta_{g}\rho)$ and $\dd{\rm v}_{\hat{g}}=e^{\rho}\dd{\rm v}_{g}$, 
 then by integration by parts 
 \[\begin{split} 
 \int_{\Sigma_{\sigma}} I^{\boldsymbol{\sigma}}_{x_0}(\omega)K_{\hat{g}}\dd {\rm v}_{\hat{g}}=&\int_{\Sigma_{\boldsymbol{\sigma}}} I^{\boldsymbol{\sigma}}_{x_0}(\omega)K_{g}\dd {\rm v}_{g}+ 
 \cjg  d\rho,\omega\cjd_2+ \sum_{j=1}^{{\mathfrak{g}}}\int_{\sigma_{a_j}}\pl_{\nu}\rho I^{\boldsymbol{\sigma}}_{x_0}(\omega)\dd \ell_g\\
& +  \sum_{j=1}^{{\mathfrak{g}}}\int_{\bar{\sigma}_{a_j}}\pl_{\nu}\rho I^{\boldsymbol{\sigma}}_{x_0}(\omega)\dd \ell_g + 
\int_{\sigma_{b_j}}\pl_{\nu}\rho I^{\boldsymbol{\sigma}}_{x_0}(\omega)\dd \ell_g +\int_{\bar{\sigma}_{b_j}}\pl_{\nu}\rho I^{\boldsymbol{\sigma}}_{x_0}(\omega)\dd \ell_g 
\end{split}
\]
 where $\pl_\nu$ is the  interior boundary unit normal pointing vector in $K_j$.
We can use that $\bar{\sigma}_{a_j}=\gamma_{b_j}(\sigma_{a_j})$ and $\bar{\sigma}_{b_j}=\gamma_{a_j}(\sigma_{b_j})$  
that $\rho$ is a well define smooth function on $\Sigma$: since 
 $I^{\boldsymbol{\sigma}}_{x_0}(\omega)(\gamma_{\sigma_j} x)=I^{\boldsymbol{\sigma}}_{x_0}(\omega)(x)+\chi(\gamma_{\sigma_j})$ for $\sigma\in \{a,b\}$ and
  $(\gamma_{b_j})_*\pl_{\nu}=-\pl_\nu$ on $\sigma_{a_j}$, 
\[\int_{\sigma_{a_j}}\pl_{\nu}\rho I^{\boldsymbol{\sigma}}_{x_0}(\omega)\dd \ell_g+\int_{\bar{\sigma}_{a_j}}\pl_{\nu}\rho I^{\boldsymbol{\sigma}}_{x_0}(\omega)\dd \ell_g=-\chi(\gamma_{b_j}) \int_{\sigma_{a_j}}\pl_{\nu}\rho \,\dd \ell_g, \]
\[\int_{\sigma_{b_j}}\pl_{\nu}\rho I^{\boldsymbol{\sigma}}_{x_0}(\omega)\dd \ell_g+\int_{\bar{\sigma}_{b_j}}\pl_{\nu}\rho I^{\boldsymbol{\sigma}}_{x_0}(\omega)\dd \ell_g=\chi(\gamma_{a_j}) \int_{\sigma_{b_j}}\pl_{\nu}\rho \,\dd \ell_g,\]
and we compute that $k_{\hat{g}}\dd \ell_{\hat{g}}=k_{g}\dd \ell_{g}-\demi\pl_{\nu}\rho \, \dd \ell_{g}$ on $\sigma_j$. 
Combining these facts, we obtain the desired formula for the conformal change of $\int_{\Sigma_{\boldsymbol{\sigma}}}^{\rm reg} K_g I^{\boldsymbol{\sigma}}_{x_0}(\omega)\dd {\rm v}_g$.
 \end{proof}
 
 The next step is to check the invariance of the regularized integral with respect to diffeomorphism.

\begin{proof}[Proof of Lemma \ref{lemcurvdiff}]
First we observe that for $\omega \in \mc{H}^1_R(\Sigma)$ and $y\in \Sigma_{\psi(\boldsymbol{\sigma})}$ 
\[ I_{x_0}(\omega)(\psi^{-1}(y))=\int_{x_0}^{\psi^{-1}(y)}\omega=\int_{\psi(x_0)}^{y}\psi_*\omega =I^{\psi(\boldsymbol{\sigma})}_{\psi(x_0)}(\psi_*\omega)(y)\]
thus we get 
\[ \int_{\Sigma_{\boldsymbol{\sigma}}}  I^{\boldsymbol{\sigma}}_{x_0}(\omega) K_g\dd {\rm v}_g=\int_{\Sigma_{\psi(\boldsymbol{\sigma})}} I^{\psi(\boldsymbol{\sigma})}_{\psi(x_0)}(\psi_*\omega)K_{\psi_*g}\dd {\rm v}_{\psi_*g}.\]
On the other hand, we also have for $\sigma\in \{a,b\}$
\[ \chi(\gamma_{\sigma_j})=\int_{\sigma_j}\omega=\int_{\psi(\sigma_j)}\psi_*\omega,\]
and thus we obtain
\[ \sum_{j=1}^{{\mathfrak{g}}}\int_{b_j}\omega \int_{a_j}k_{g}\dd {\rm v}_{g}=\sum_{j=1}^{{\mathfrak{g}}}\int_{\psi(b_j)}\psi_*\omega \int_{a_j}k_{\psi_*g}\dd {\rm v}_{\psi_*g}\]
and similarly when we exchange $a_j$ and $b_j$. This ends the proof.
\end{proof}  

The final step consists in proving Lemma \ref{independence_basis} that the regularized curvature integral of $I^{\boldsymbol{\sigma}}_{x_0}(\omega)$ on $\Sigma_{\boldsymbol{\sigma}}$ 
does not depend on the choice of canonical basis of $\mc{H}_1(\Sigma)$. Since the proof is slightly more technical and longer, we defer it to Appendix \ref{Lemma_indep} for readability.

 \subsection{Curvature terms associated to magnetic points.}\label{curvature_mp}
 %%%%%%%%%%%%%%%%%%%%%%%%%%%%%%%%%%%%%%%%%%%%

 Let ${\bf z}=(z_1,\dots,z_{n})$ be disjoint marked points and ${\bf m}=(m_1,\dots,m_n)\in \Z^n$ some associated magnetic charges, and let $v_j \in T_{z_j}\Sigma$ some unit tangent vectors (with respect to the metric $g$ on $\Sigma$), and denote ${\bf v}=((z_1,v_1),\dots,(z_{n},v_n))\in (T\Sigma)^n$. Consider the closed 1 form $\nu_{{\bf z},{\bf m}} $ of Proposition \ref{harmpoles}.
 As above, we will need to define $\int_\Sigma K_g I_{x_0}(\nu_{{\bf z},{\bf m}}){\rm dv}_g$, but $ I_{x_0}(\nu_{{\bf z},{\bf m}})$ being multivalued, we have to remove some curves and curvature terms along these curves to obtain a natural quantity.
 
We need a family of arcs, which we call \emph{defect lines} and form a defect graph, constructed as follows:
\begin{definition}{\bf Defect graph:}
We consider a family of $n-1$ arcs as follows:
\begin{itemize}
\item these arcs are indexed by $p\in\{1,\dots, n-1\}$, are simple and do not intersect except possibly at their endpoints,
\item each arc is a smooth oriented curve $\xi_p:[0,1]\to  \Sigma $ parametrized by arclength with endpoints $\xi_p(0)=z_j$ and $\xi_{p}(1)=z_{j'}$ for some $j\not=j'$, with orientation in the direction of increasing charges, meaning $m_j\leq m_{j'}$.
\item these arcs reach the endpoints in the directions prescribed by $\mathbf{v}$, meaning  $\xi_{p}'(0)=\lambda_{p,j} v_{j}$  and $\xi_{p}'(1)=\lambda_{p,j'} v_{j'}$ for some  $\lambda_{p,j},\lambda_{p,j'} >0$. 
\item consider the  oriented  graph with vertices $\mathbf{z}$ and edges $(z_j,z_{j'})$ when there is an arc connecting $z_j$ to $z_{j'}$. This graph must be connected and without cycle, i.e. there is no sequence of edges $(z_{j_1},z_{j_2}),\dots,(z_{j_k},z_{j_{k+1}})$ with $j_1=j_{k+1}$.  
\end{itemize}
In what follows, the union $\mc{D}_{\mathbf{v},\boldsymbol{\xi}}:=\bigcup_{p\in\{1,\dots, n-1\}}\xi_p([0,1])$ will be called the defect graph associated to $\mathbf{v}$ and the collection of arcs $\boldsymbol{\xi}:=(\xi_{1},\dots, \xi_{n-1})$.
\end{definition}
We notice that the graph $\mc{D}_{\mathbf{v},\boldsymbol{\xi}}$, viewed as a subset of $\Sigma$, is homotopic to a point.  Let us first state the following lemma, the proof of which will be done below.  
\begin{lemma}\label{exactform}
On $\Sigma\setminus \mc{D}_{\mathbf{v},\boldsymbol{\xi}}$, the 1-form $\nu_{\mathbf{z},\mathbf{m}}$ is exact.
\end{lemma}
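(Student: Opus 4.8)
The plan is to show that the de Rham class of $\nu_{\mathbf{z},\mathbf{m}}$ in $H^1_{\rm dR}(V)$ vanishes, where $V:=\Sigma\setminus\mc{D}_{\mathbf{v},\boldsymbol{\xi}}$. Since $\nu_{\mathbf{z},\mathbf{m}}$ is smooth and closed on the larger set $\Sigma\setminus\{\mathbf{z}\}\supset V$ by Proposition \ref{harmpoles}, it is in particular closed on $V$, and it is exact there precisely when all of its periods over $H_1(V;\Z)$ vanish. First I would therefore identify $H_1(V;\Z)$ together with the image of the map $\iota_*:H_1(V;\Z)\to H_1(\Sigma\setminus\{\mathbf{z}\};\Z)$ induced by the inclusion $\iota:V\hookrightarrow\Sigma\setminus\{\mathbf{z}\}$. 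The point of passing to this image is that the period of the closed form $\nu_{\mathbf{z},\mathbf{m}}$ over a loop contained in $V$ depends only on the class of that loop in $H_1(\Sigma\setminus\{\mathbf{z}\};\Z)$, so it suffices to control $\iota_*(H_1(V;\Z))$ and to evaluate periods there.

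Next I would analyse the topology of $V$. By construction $\mc{D}_{\mathbf{v},\boldsymbol{\xi}}$ is a connected graph with $n$ vertices and $n-1$ edges and no cycle, hence a tree, and in particular it is contractible, as already observed just before the statement. An embedded tree in a surface has a regular neighbourhood $N$ with $\chi(N)=\chi(\mc{D}_{\mathbf{v},\boldsymbol{\xi}})=1$, so $N$ is an open disk. Consequently $V$ deformation retracts onto the compact subsurface $\Sigma\setminus N$, which has a single boundary circle $\partial N$ and Euler characteristic $\chi(\Sigma)-\chi(N)=(2-2\mathfrak{g})-1=1-2\mathfrak{g}$; thus it is a genus-$\mathfrak{g}$ surface with one boundary component and $H_1(V;\Z)\cong\Z^{2\mathfrak{g}}$, generated by the genus cycles. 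Moreover $\iota_*(H_1(V;\Z))$ is contained in the subgroup of $H_1(\Sigma\setminus\{\mathbf{z}\};\Z)$ spanned by $[\sigma_1],\dots,[\sigma_{2\mathfrak{g}}]$: the boundary circle $\partial N$ is null-homologous already in $V$, and in $H_1(\Sigma\setminus\{\mathbf{z}\};\Z)$ it equals the sum $\sum_{j=1}^n[\gamma_j]=0$ of the small positively oriented loops around the punctures, so no winding around an individual puncture survives in $V$.

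Finally, by Proposition \ref{harmpoles} the periods of $\nu_{\mathbf{z},\mathbf{m}}$ over the genus cycles all vanish, $\int_{\sigma_j}\nu_{\mathbf{z},\mathbf{m}}=0$ for $j=1,\dots,2\mathfrak{g}$. Combined with the previous step this shows that every period of $\nu_{\mathbf{z},\mathbf{m}}$ over $H_1(V;\Z)$ is zero, whence $[\nu_{\mathbf{z},\mathbf{m}}]=0$ in $H^1_{\rm dR}(V)$ and $\nu_{\mathbf{z},\mathbf{m}}$ is exact on $V$; concretely, fixing $x_0\in V$, the function $x\mapsto\int_{x_0}^{x}\nu_{\mathbf{z},\mathbf{m}}$ along paths in $V$ is then single-valued and provides a primitive. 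The main point to get right—and the only place where the hypotheses on the defect graph genuinely enter—is that removing a \emph{tree} passing through all the $z_j$ kills exactly the puncture-winding classes while leaving the $2\mathfrak{g}$ genus classes untouched; here the neutrality condition $\sum_j m_j=0$ underlying the existence of $\nu_{\mathbf{z},\mathbf{m}}$ reappears as the compatibility check $\int_{\partial N}\nu_{\mathbf{z},\mathbf{m}}=2\pi R\sum_j m_j=0$, consistent with $\partial N$ being null-homologous.
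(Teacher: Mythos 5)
Your proof is correct and follows essentially the same route as the paper: both arguments use that the defect graph is a contractible tree, so that the first homology of $\Sigma\setminus\mc{D}_{\mathbf{v},\boldsymbol{\xi}}$ is that of $\Sigma$ with a small disk removed, hence generated by the genus cycles $[\sigma_j]$, on which the periods of $\nu_{\mathbf{z},\mathbf{m}}$ vanish by Proposition \ref{harmpoles}. Your additional details (the regular neighbourhood, the Euler characteristic count, and the consistency check $\int_{\partial N}\nu_{\mathbf{z},\mathbf{m}}=2\pi R\sum_j m_j=0$) simply make explicit what the paper's proof leaves implicit.
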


If $\boldsymbol{\xi}$ is a  defect graph for ${\bf z}$, we denote by 
\[I^{ \boldsymbol{\xi}}_{x_0}( \nu_{\mathbf{z},\mathbf{m}})(x)=\int_{\alpha_{x_0,x}} \nu_{\mathbf{z},\mathbf{m}}\] 
the primitive of $ \nu_{\mathbf{z},\mathbf{m}}$ on $\Sigma\setminus \mc{D}_{\mathbf{v},\boldsymbol{\xi}}$ vanishing at $x_0\in\Sigma$, where $\alpha_{x_0,x}$ is any smooth curve in $\Sigma\setminus \mc{D}_{\mathbf{v},\boldsymbol{\xi}}$, the result being independent of the curve by Lemma \ref{exactform}. 
Note that the mapping 
\[z\mapsto e^{\frac{i }{R} I^{ \boldsymbol{\xi}}_{x_0}( \nu_{\mathbf{z},\mathbf{m}})}\] 
 is single valued  on $\Sigma \setminus \{\mathbf{z}\}$. 

\begin{definition}{\bf Regularized curvature:}
We assign to each arc $\xi_p$ in the defect graph a value $\kappa(\xi_p)\in 2\pi R\Z$, corresponding to the difference of the values of $I^{ \boldsymbol{\xi}}_{x_0}( \nu^{\rm h}_{\mathbf{z},\mathbf{m}})$ on both sides of the arc,  as follows: take a small neigborhood $\mc{D}_{\mathbf{v},\boldsymbol{\xi}}(\eps)$ of $\mc{D}_{\mathbf{v},\boldsymbol{\xi}}$ homeomorphic to a disk for some small $\eps>0$, and for $x\in \xi_p(]0,1[)$, consider a $C^1$ simple curve $\alpha_x:[0,1]\to \mc{D}_{\mathbf{v},\boldsymbol{\xi}}(\eps)$ with endpoints $\alpha_x(0)=\alpha_x(1)=x$, with $\alpha_x(]0,1[)\cap 
\mc{D}_{\mathbf{v},\boldsymbol{\xi}}=\emptyset$, such that the disk bounded by $\alpha_x([0,1])$ contains at least one point of ${\bf z}$, $(\dot{\alpha}_x(t),\nu(t))$ is a positive basis of the tangent space of $\Sigma$ at $\alpha_x(t)$ if $\nu(t)$ is the unit inward normal to the disk enclosed by $\alpha_x$ and the angle between the curve $\xi_p$ at $x$ and $\dot{\alpha}(0)$ is $\pi/2$ (i.e. we start from the left face of the arc $\xi_p$, see Figure \ref{fig:kappa}). 
%$(\dot{\alpha}_x(t),\nu(t))$ is a positive basis of the tangent space of $\Sigma$ at $\alpha_x(t)$ if $\nu(t)$ is the unit inward normal to the disk enclosed by $\alpha_x$. 
Then we set 
\begin{equation}\label{defkappaxip}
\kappa(\xi_p)=\int_{\alpha_x} \nu_{\mathbf{z},\mathbf{m}}.
\end{equation} 

We define the regularized curvature term  similarly to Definition \ref{curvature_integral} by 
\begin{equation}\label{curv:mag2}
\int_{\Sigma}^{\rm reg}  I^{ \boldsymbol{\xi}}_{x_0}( \nu_{\mathbf{z},\mathbf{m}}) K_g \,\dd v_g:=\int_{\Sigma}  I^{ \boldsymbol{\xi}}_{x_0}( \nu_{\mathbf{z},\mathbf{m}}) K_g \,\dd v_g-2\sum_{p=1}^{n_\mathfrak{m}-1}\kappa(\xi_p)\int_{\xi_p}k_g\dd \ell_g.
\end{equation}
\end{definition}

  \begin{figure}[h!] 
\centering
\includegraphics[width=.7\textwidth]{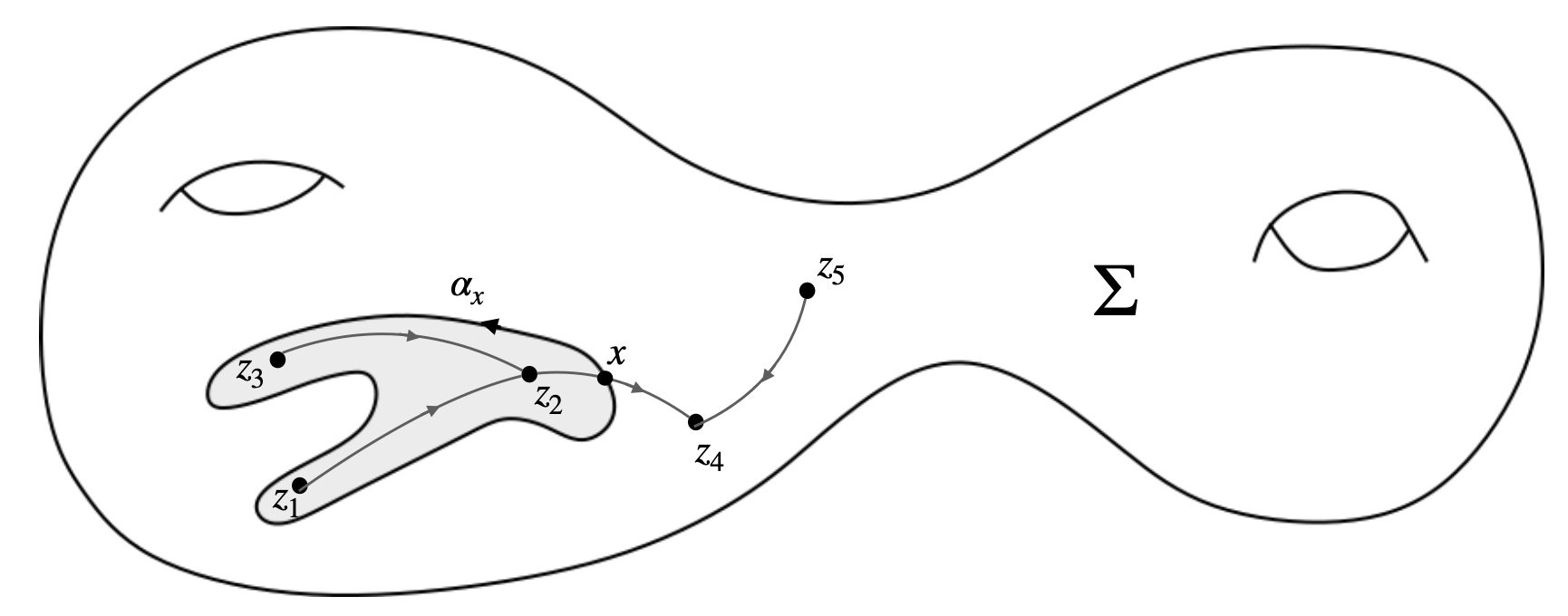} 
\caption{Curve $\alpha_x$: in gray the defect graph. The curve $\alpha_x$ starting at the point $x\in \xi_p$ is the boundary of the shaded red area corresponding to a topological disk. On this example $\kappa(\xi_p)=m_1+m_2+m_3$.}
\label{fig:kappa}
\end{figure}

This quantity can also be written as $ \kappa(\xi_p):=2\pi R\sum_{j\in J_p}m_j$ where $J_p$ is set of points $z_j$ enclosed by the curve $\alpha_{x}$. Here we assume that the number of turns of $\alpha_x$ around  $\mc{D}_{\mathbf{v},\boldsymbol{\xi}}$ is $1$, but taking curves which turn $k\geq 1$ times (with positive orientation) would lead to the same result by using that $\sum_{j=1}^{n}m_j=0$. As before, 
let $\pi: \tilde{\Sigma}_{\bf z}\to \Sigma_{\bf z}$ be the covering map on the universal cover of $\Sigma_{\bf z}=\Sigma\setminus \{z\}$,  $\Gamma=\pi_1(\Sigma_{\bf z},x_0)$ the fundamental group of $\Sigma_{\bf z}$ with $x_0\in \Sigma_{\bf z}$, $\tilde{x}_0\in \tilde{\Sigma}_{\bf z}$ a point so that $\pi(\tilde{x}_0)=x_0$ 
and $U_{x_0}\subset  \tilde{\Sigma}_{\bf z}$ the connected component of $\pi^{-1}(\Sigma_{\bf z})$ containing $\tilde{x}_0$. Then the function $\pi^*I^{ \boldsymbol{\xi}}_{x_0}( \nu_{\mathbf{z},\mathbf{m}})|_{U_{x_0}}$ 
is equal to 
\[I_{x_0}(\nu_{\mathbf{z},\mathbf{m}})(\tilde{x})=\int_{\alpha_{x_0,x}} \nu_{\mathbf{z},\mathbf{m}}\] 
where $\alpha_{x_0,x}\subset \Sigma_{\bf z}$ is any smooth curve with initial point $x_0$ and endpoint $x$ so that its lift to $\tilde{\Sigma}_{\bf z}$ is a curve with initial point $\tilde{x}_0$ and endpoint $\tilde{x}$. The function $\pi^*I^{ \boldsymbol{\xi}}_{x_0}( \nu^{\rm h}_{\mathbf{z},\mathbf{m}})|_{U_{x_0}}$ thus  extends smoothly as an equivariant function in $C^\infty_{\Gamma}(\tilde{\Sigma}_{\bf z})$.

We state now the main properties of the regularized curvature and prove all the lemmas in this subsection after this.

\begin{lemma}[\textbf{Invariance with respect to defect graph}]\label{inv_par_graphe}
The value of $\int_{\Sigma}^{\rm reg}  I^{ \boldsymbol{\xi}}_{x_0}( \nu_{\mathbf{z},\mathbf{m}}) K_g \,\dd  {\rm v}_g$ is independent of the choice of defect graph $\boldsymbol{\xi}$. 
\end{lemma}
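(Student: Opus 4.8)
The plan is to follow closely the proof of the local invariance Lemma \ref{lem:local_invariance}, treating first continuous deformations of the defect graph and then the discrete changes of its combinatorial type. Throughout I use that, by Lemma \ref{exactform}, the primitive $I^{\boldsymbol{\xi}}_{x_0}(\nu_{\mathbf{z},\mathbf{m}})$ is a genuine single-valued smooth function on $\Sigma\setminus \mc{D}_{\mathbf{v},\boldsymbol{\xi}}$, whose only discontinuity is the jump $\kappa(\xi_p)\in 2\pi R\Z$ across each arc $\xi_p$, as recorded in \eqref{defkappaxip}, and that the defect graph is contractible.

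First I would treat a smooth family $\boldsymbol{\xi}^s$ of defect graphs of fixed combinatorial type, in which a single arc $\xi_p^s=\xi_p+s\,w+\mc{O}(s^2)$ is deformed by a vector field $w$ supported away from the marked points, so that the endpoints $z_j,z_{j'}$ and the prescribed directions $v_j,v_{j'}$ are preserved and no arc crosses a point of $\mathbf{z}$. Since $\kappa(\xi_p)$ depends only on the set of enclosed charges, it is constant along such a deformation. The value of $I^{\boldsymbol{\xi}^s}_{x_0}(\nu_{\mathbf{z},\mathbf{m}})(x)$ at a fixed $x$ is unchanged unless the moving cut sweeps across $x$, in which case it jumps by $\kappa(\xi_p)$; hence
\[
\pl_s\Big(\int_{\Sigma} I^{\boldsymbol{\xi}^s}_{x_0}(\nu_{\mathbf{z},\mathbf{m}}) K_g\,\dd {\rm v}_g\Big)\Big|_{s=0}=\pm\,\kappa(\xi_p)\int_{\xi_p} g(w,\nu)\,K_g\,\dd\ell_g,
\]
where $\nu$ is the unit normal to $\xi_p$. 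Differentiating the Gauss--Bonnet formula on the thin region swept between $\xi_p$ and $\xi_p^s$ --- a bigon whose corners at $z_j,z_{j'}$ are cusps of vanishing interior angle, since the arcs are tangent there --- exactly as in Lemma \ref{lem:local_invariance}, this variation is cancelled by the $s$-derivative of the counterterm $-2\kappa(\xi_p)\int_{\xi_p^s} k_g\,\dd\ell_g$ in \eqref{curv:mag2}. Thus the regularized integral is constant in $s$, proving invariance within a fixed combinatorial type.

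The main work, and the real obstacle, is to pass between two defect graphs $\boldsymbol{\xi},\boldsymbol{\xi}'$ of different combinatorial type, i.e. between two spanning trees on $\mathbf{z}$. Here I would argue globally. The difference $h:=I^{\boldsymbol{\xi}}_{x_0}(\nu_{\mathbf{z},\mathbf{m}})-I^{\boldsymbol{\xi}'}_{x_0}(\nu_{\mathbf{z},\mathbf{m}})$ has differential $\nu_{\mathbf{z},\mathbf{m}}-\nu_{\mathbf{z},\mathbf{m}}=0$ on $W:=\Sigma\setminus(\mc{D}_{\mathbf{v},\boldsymbol{\xi}}\cup \mc{D}_{\mathbf{v},\boldsymbol{\xi}'})$, so it is locally constant there with values in $2\pi R\Z$ (using that $e^{\frac{i}{R}I^{\boldsymbol{\xi}}_{x_0}(\nu_{\mathbf{z},\mathbf{m}})}$ is single-valued on $\Sigma\setminus\{\mathbf{z}\}$ and independent of the defect graph). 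Writing $\int_\Sigma(I^{\boldsymbol{\xi}}_{x_0}-I^{\boldsymbol{\xi}'}_{x_0})(\nu_{\mathbf{z},\mathbf{m}})\,K_g\,\dd {\rm v}_g=\sum_a h_a\int_{\Omega_a}K_g\,\dd {\rm v}_g$ over the connected components $\Omega_a$ of $W$ and applying Gauss--Bonnet with corners on each $\Omega_a$, I expect the boundary geodesic-curvature contributions to reorganize, after summation, into $2\sum_p\kappa(\xi_p)\int_{\xi_p}k_g\,\dd\ell_g-2\sum_{p'}\kappa(\xi_{p'}')\int_{\xi_{p'}'}k_g\,\dd\ell_g$ --- exactly the difference of the counterterms in \eqref{curv:mag2} --- while the topological contributions (the weighted $2\pi\chi(\Omega_a)$ and the corner-angle terms) cancel by the neutrality $\sum_j m_j=0$ of Proposition \ref{harmpoles} and the contractibility of the graphs.

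This last cancellation is the crux and the main difficulty: one must track the corner angles produced where arcs of $\boldsymbol{\xi}$ and $\boldsymbol{\xi}'$ meet at marked points or cross, and fix the orientation conventions so that each shared arc contributes its jump with the correct sign and multiplicity. A cleaner route, which I would adopt if the direct summation proves unwieldy, is to connect $\boldsymbol{\xi}$ to $\boldsymbol{\xi}'$ by finitely many elementary tree moves (delete one arc, add another), realize each move by a continuous family of arcs allowed to sweep across a single point $z_j$ at one instant, and verify invariance across that instant by a local model computation near $z_j$: the jump of the affected $\kappa$ by $\pm 2\pi R m_j$ is compensated by the change of the region on which $h$ takes a given value, again through a Gauss--Bonnet argument localized at $z_j$. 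This reduces the global combinatorial statement to the already-established continuous-deformation invariance together with one local computation at a marked point.
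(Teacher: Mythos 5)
Your first step---invariance under smooth deformations of the defect graph at fixed combinatorial type---is correct and is essentially the paper's S-move: you verify it infinitesimally by differentiating Gauss--Bonnet on the swept bigon, in the style of Lemma \ref{lem:local_invariance}, whereas the paper applies Gauss--Bonnet once, globally, to the disk bounded by $\xi_p$ and the replacement arc $\tilde{\xi}_p$ (reducing to the non-intersecting case by interpolating through a third curve). Up to this point the two arguments are interchangeable, and your observation that the jump of the primitive across the moving cut is $\kappa(\xi_p)$, constant along the deformation, is the right mechanism.

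The genuine gap is the change of combinatorial type, which you correctly identify as the crux but do not actually prove. Your primary route ends with the statement that you ``expect'' the boundary geodesic-curvature contributions over the components of $\Sigma\setminus(\mc{D}_{\mathbf{v},\boldsymbol{\xi}}\cup\mc{D}_{\mathbf{v},\boldsymbol{\xi}'})$ to reorganize into the difference of counterterms; that reorganization, with all corner angles and signs where the two graphs meet, is precisely the content of the lemma and cannot be left as an expectation. Your fallback route rests on a move that cannot do the job: sweeping the \emph{interior} of an arc across a marked point $z_j$ keeps both endpoints fixed, hence leaves the spanning tree unchanged; it only alters the isotopy class of that edge in $\Sigma\setminus\{\mathbf{z}\}$ and the values $\kappa$ of the arcs crossed, so it cannot realize an elementary tree exchange ``delete one arc, add another.'' What is needed is a move that reattaches an endpoint of an edge from one vertex to another: these are the paper's A- and D-moves, which replace an arc $\xi_p$ sharing an endpoint $z_j$ with another arc $\xi_{p'}$ by an arc homotopic to $\xi_{p'}^{-1}\circ\xi_p$ (resp.\ $\xi_{p'}\circ\xi_p^{-1}$), each verified by Gauss--Bonnet in the triangle with vertices $z_j,z_{j'},z_{j''}$---a computation your proposal never performs. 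Even the sweep move, were it needed, is unverified: at the crossing instant the configuration is degenerate, and before/after, the swept bigon contains $z_j$ together with the other arcs incident to it, so the difference of the two primitives is not constant on it and several $\kappa$'s jump by $2\pi R m_j$ simultaneously. Finally, the connectivity claim---that any defect graph reaches the canonical charge-ordered chain through finitely many admissible moves, including the case of equal charges (cf.\ Figure \ref{fig:masses})---must be formulated and checked for the moves you actually use, which you have not done.
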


\begin{lemma}[{\bf Conformal change of metrics}]\label{change_conf_magnetic}
Consider two conformal metrics $g'=e^{\rho}g$. The regularized magnetic curvature term defined by \eqref{curv:mag2}  satisfies
\[\int_{\Sigma}^{\rm reg}  I^{\boldsymbol{\xi}}_{x_0}( \nu^{\rm h}_{\mathbf{z},\mathbf{m}}) K_{g'} \, {\rm dv}_{g'}=\int_{\Sigma}^{\rm reg}  I^{\boldsymbol{\xi}} _{x_0}( \nu^{\rm h}_{\mathbf{z},\mathbf{m}}) K_g \, {\rm dv}_g.\]
\end{lemma}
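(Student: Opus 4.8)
The plan is to follow the same strategy as in the proof of Lemma \ref{conformal_change_reg_int}. Write $u := I^{\boldsymbol{\xi}}_{x_0}(\nu^{\rm h}_{\mathbf{z},\mathbf{m}})$, a function that is smooth on $\Sigma\setminus \mc{D}_{\mathbf{v},\boldsymbol{\xi}}$, bounded near the marked points, and which jumps by $\kappa(\xi_p)$ across each arc $\xi_p$ by the very definition \eqref{defkappaxip}. Using the conformal relations $K_{g'}=e^{-\rho}(K_g+\Delta_g\rho)$ and ${\rm dv}_{g'}=e^{\rho}{\rm dv}_g$ together with the identity $k_{g'}\dd\ell_{g'}=k_g\dd\ell_g-\demi\pl_\nu\rho\,\dd\ell_g$ established in the proof of Lemma \ref{conformal_change_reg_int}, the claimed equality reduces, after cancelling the common terms $\int_\Sigma u K_g{\rm dv}_g$ and $-2\sum_p\kappa(\xi_p)\int_{\xi_p}k_g\dd\ell_g$, to the single relation
\[\int_\Sigma u\,\Delta_g\rho\,{\rm dv}_g=-\sum_{p=1}^{n-1}\kappa(\xi_p)\int_{\xi_p}\pl_\nu\rho\,\dd\ell_g,\]
where $\nu$ is the unit normal along $\xi_p$ fixing the geodesic curvature convention.

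To obtain this, I would excise small geodesic disks $B_g(z_j,\epsilon)$ around the poles and cut $\Sigma$ along the tree $\mc{D}_{\mathbf{v},\boldsymbol{\xi}}$, on which $u$ is single valued and smooth up to each side. Green's identity on $\Sigma\setminus(\mc{D}_{\mathbf{v},\boldsymbol{\xi}}\cup\bigcup_j B_g(z_j,\epsilon))$, using $\dd u=\nu^{\rm h}_{\mathbf{z},\mathbf{m}}$, produces the bulk term $\cjg \nu^{\rm h}_{\mathbf{z},\mathbf{m}},\dd\rho\cjd_2$, the integrals $-\int u\,\pl_\nu\rho\,\dd\ell_g$ over the two faces of each arc, and boundary contributions on the small circles. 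The circle terms vanish as $\epsilon\to 0$ since $u$ is bounded while the circumference tends to $0$. On each arc the two faces carry opposite normals while $\rho$ is smooth across $\mc{D}_{\mathbf{v},\boldsymbol{\xi}}$, so the two face integrals combine into $\pm(u_L-u_R)\int_{\xi_p}\pl_\nu\rho\,\dd\ell_g=\pm\kappa(\xi_p)\int_{\xi_p}\pl_\nu\rho\,\dd\ell_g$.

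It then remains to show $\cjg \nu^{\rm h}_{\mathbf{z},\mathbf{m}},\dd\rho\cjd_2=0$, and this is exactly where the harmonic representative is essential (and the reason no analogue of the surviving $\cjg \dd\rho,\omega\cjd_2$ term of Lemma \ref{conformal_change_reg_int} appears): since $\nu^{\rm h}_{\mathbf{z},\mathbf{m}}$ is coclosed, $\dd^*\nu^{\rm h}_{\mathbf{z},\mathbf{m}}=0$ on $\Sigma\setminus\{\mathbf{z}\}$ by Proposition \ref{harmpoles}, and integrating by parts against $\rho$ on $\Sigma\setminus\bigcup_j B_g(z_j,\epsilon)$ leaves only the circle terms $\int_{\pl B_g(z_j,\epsilon)}\rho\, (i_\nu\nu^{\rm h}_{\mathbf{z},\mathbf{m}})\,\dd\ell_g$; as $\nu^{\rm h}_{\mathbf{z},\mathbf{m}}=m_jR\,\dd\theta+\dd f_{\mathbf{m}}$ near $z_j$, its normal component $i_\nu\nu^{\rm h}_{\mathbf{z},\mathbf{m}}$ is bounded (the singular part $m_jR\,\dd\theta$ being annihilated by the radial normal), so these integrals are $\mc{O}(\epsilon)$ and vanish in the limit. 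Substituting $\cjg \nu^{\rm h}_{\mathbf{z},\mathbf{m}},\dd\rho\cjd_2=0$ yields the required relation, and hence the lemma. The main obstacle is the orientation bookkeeping: one must check that the sign of the jump $u_L-u_R=\kappa(\xi_p)$ is consistent with the convention fixing $\nu$ in both $\pl_\nu\rho$ and the geodesic curvature $k_g$ of the oriented arc, so that the $\pl_\nu\rho$ contribution from the integration by parts cancels exactly (rather than doubling) the one generated by the counterterm; a secondary difficulty is organising the excision consistently at the vertices of the defect tree, i.e. at the marked points where several arcs meet.
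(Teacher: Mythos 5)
Your proposal is correct and follows essentially the same route as the paper's proof: the conformal identities $K_{g'}=e^{-\rho}(K_g+\Delta_g\rho)$, ${\rm dv}_{g'}=e^{\rho}{\rm dv}_g$ and $k_{g'}\dd\ell_{g'}=k_g\dd\ell_g-\tfrac{1}{2}\pl_\nu\rho\,\dd\ell_g$, integration by parts across the defect graph producing the jump terms $\kappa(\xi_p)\int_{\xi_p}\pl_\nu\rho\,\dd\ell_g$ that cancel against the counterterm variation, and the vanishing $\cjg \dd\rho,\nu^{\rm h}_{\mathbf{z},\mathbf{m}}\cjd_2=0$ from harmonicity. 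Your $\epsilon$-excision argument for this last vanishing merely spells out what the paper compresses into ``because $\nu^{\rm h}_{\mathbf{z},\mathbf{m}}$ is harmonic'' (it reproduces the boundary-term computation already carried out in the proof of Proposition \ref{harmpoles}), so it is the same proof in expanded form.
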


\begin{proof}[Proof of Lemma \ref{exactform}] 
By construction $\int_{\sigma_j}\nu_{\mathbf{z},\mathbf{m}}=0$ for all cycles $\sigma_j$ in $\mc{H}_1(\Sigma)$. Let $u(x):=\int_{\alpha_{x_0,x}} \nu_{\mathbf{z},\mathbf{m}}$ where $\alpha_{x_0,x}$ is a $C^1$ curve with image in 
$\Sigma\setminus \mc{D}_{\mathbf{v},\boldsymbol{\xi}}$ and endpoints at $x_0$ and $x$. The function $u$ is a priori multivalued and $\dd u=\nu_{\mathbf{z},\mathbf{m}}$. To prove it is singled valued, it suffices to check that the value of $u$ does not depend on the choice of curve $\alpha_{x_0,x}$.  Taking two such curves, we get two (a priori multivalued) primitives $u$ and $u'$ of $\nu_{\mathbf{z},\mathbf{m}}$ with $u(x)-u'(x)=\int_{\beta_{x_0}}\nu_{\mathbf{z},\mathbf{m}}$ for some closed curve $\beta_{x_0}\in \pi_1(\Sigma\setminus \mc{D}_{\mathbf{v},\boldsymbol{\xi}},x_0)$ in the fundamental group of $\Sigma\setminus \mc{D}_{\mathbf{v},\boldsymbol{\xi}}$. By assumption, 
the graph  $\mc{D}_{\mathbf{v},\boldsymbol{\xi}}$, viewed as a subset of $\Sigma$, is homotopic to a point, thus the first absolute
homology group of $\Sigma \setminus \mc{D}_{\mathbf{v},\boldsymbol{\xi}}$ is that of $\Sigma$ with a small disk removed, thus isomorphic to $\mc{H}_1(\Sigma)$. This means that 
the homology class $[\beta_{x_0}]$ of $\beta_{x_0}$ in $\Sigma\setminus \mc{D}_{\mathbf{v},\boldsymbol{\xi}}$ is a linear combination 
of the basis elements $[\sigma_j]\in \mc{H}_1(\Sigma)\simeq \mc{H}_1(\Sigma\setminus \mc{D}_{\mathbf{v},\boldsymbol{\xi}})$, and therefore $\int_{\beta_{x_0}}\nu_{\mathbf{z},\mathbf{m}}=0$.
%Furthermore, the neutrality condition entails that  $\int_\gamma\nu^{\rm h}_{\mathbf{z},\mathbf{m}}=0$ for any loop $\gamma$ surrounding the defect graph. So $ \nu^{\rm h}_{\mathbf{z},\mathbf{m}}$ is exact.
\end{proof}

\begin{proof}[Proof of Lemma \ref{inv_par_graphe}] We will describe elementary deformations (S for smooth, A for arrival, D for departure) of the defect graph (for fixed $\mathbf{v}$) that produce the same correlation functions. In that aim, we need to first introduce the basis moves:
\begin{definition}{{\bf basic moves}}
\begin{itemize}
\item  An S-move on the defect graph $\mc{D}_{\mathbf{v},\boldsymbol{\xi}}$ consists in picking $p\in\{1,\dots, n-1\}$ and in replacing $\xi_p$ by another smooth simple oriented curve 
$\tilde{\xi}_p:[0,1]\to \Sigma$, homotopic to $\xi_p$, parametrized by arclength with the same endpoints $\zeta_j(0)=z_j$ and $\zeta_{p}(0)=z_{j'}$ for $j\not=j'$, and still reaching the endpoints in the directions prescribed by $\mathbf{v}$, meaning  $\zeta_{p}'(0)=\lambda_{p,j} v_{j}$  and $\zeta_{p}'(1)=\lambda_{p,j'} v_{j'}$ for some  $\lambda_{p,j},\lambda_{p,j'} >0$. An S-move thus consists in changing the shape of the curve between the endpoints of an edge.
 \begin{figure}[h!] 
\centering
\includegraphics[width=.5\textwidth]{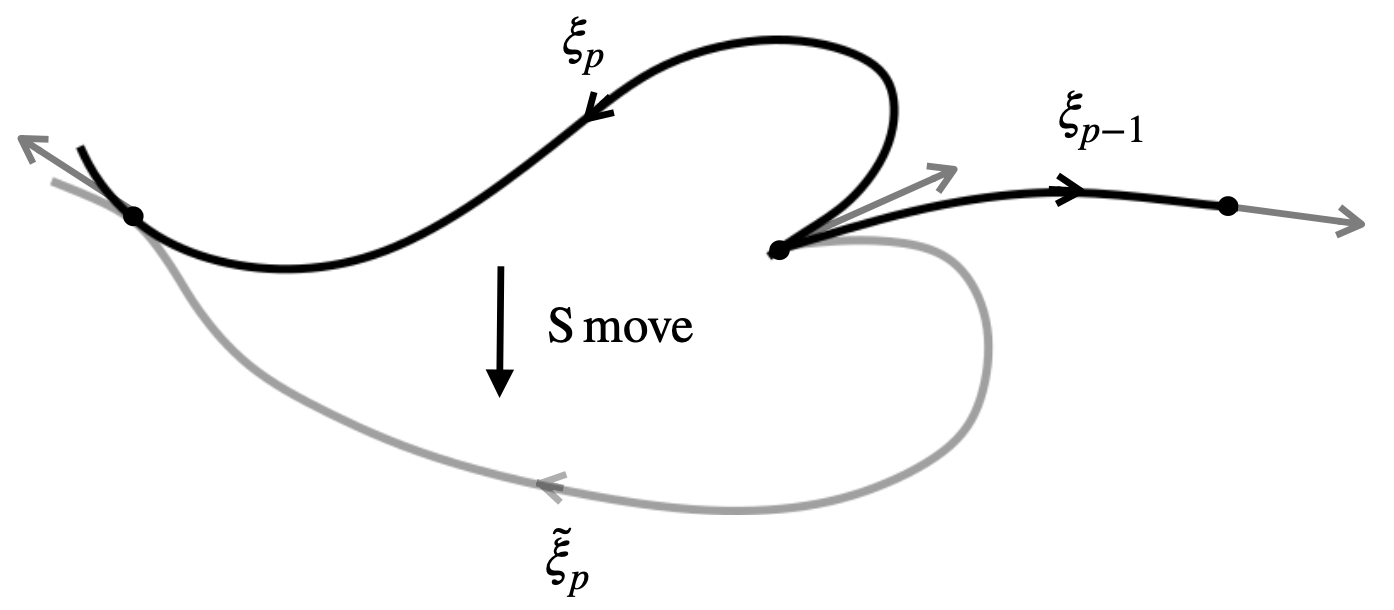} 
%\caption{S move}
\label{fig:amplitude}
\end{figure}
\item An A-move changes the structure of the graph. Assume we are given distinct $p,p'\in\{1,\dots, n-1\}$ such that $\xi_{p}(1)=\xi_{p'}(1)=z_j$ and $\xi_{p}(0)=z_{j'}$, $\xi_{p'}(0)=z_{j''}$ with $z_{j'}\not=z_{j''}$. Assume $m_{j''}\geq m_{j'}$. Choose a smooth oriented curve $\tilde{\xi}:[0,1]\to  \Sigma $ homotopic to $\xi_{p'}^{-1}\circ \xi_p$ and parametrized by arclength with endpoints $\tilde{\xi}(0)=z_{j'}$ and $\tilde{\xi}(1)=z_{j''}$, and $\tilde\xi'(0)=\lambda_{j'} v_{j'}$  and $\tilde{\xi}'(1)=\lambda_{j''} v_{j''}$ for some  $\lambda_{j'},\lambda_{j''} >0$.   The A-move then consists in removing the edge $\xi_p$ and in replacing it by $\tilde{\xi}$.
 \begin{figure}[h] 
\centering
\includegraphics[width=.8\textwidth]{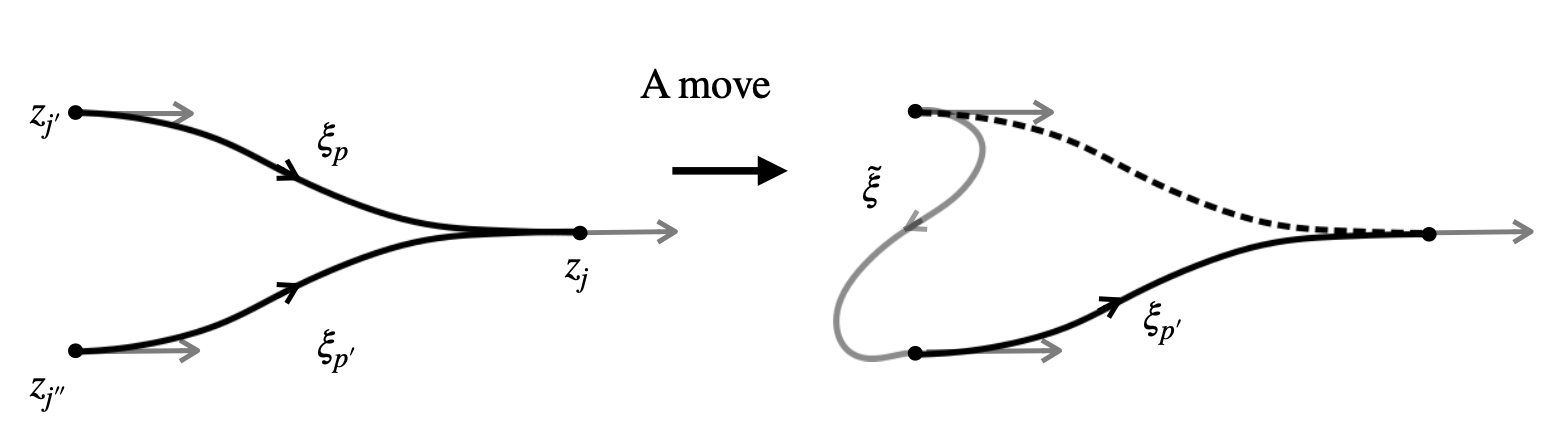} 
\label{fig:amplitude}
\end{figure}
\item A D-move changes the structure of the graph too. Assume we are given distinct $p,p'\in\{1,\dots, n-1\}$ such that $\xi_{p}(0)=\xi_{p'}(0)=z_j$ and $\xi_{p}(1)=z_{j'}$, $\xi_{p'}(1)=z_{j''}$ with $z_{j'}\not=z_{j''}$. Assume $m_{j''}\geq m_{j'}$. Choose a smooth oriented curve $\tilde{\xi}:[0,1]\to  \Sigma $ homotopic to $\xi_{p'}\circ \xi_p^{-1}$ and parametrized by arclength with endpoints $\tilde{\xi}(0)=z_{j'}$ and $\tilde{\xi}(1)=z_{j''}$, and $\tilde\xi'(0)=\lambda_{j'} v_{j'}$  and $\xi_{p}'(1)=\lambda_{j''} v_{j''}$ for some  $\lambda_{j'},\lambda_{j''} >0$.   The D-move then consists in removing the edge $\xi_p$ and in replacing it by $\tilde{\xi}$.
 \begin{figure}[h] 
\centering
\includegraphics[width=.7\textwidth]{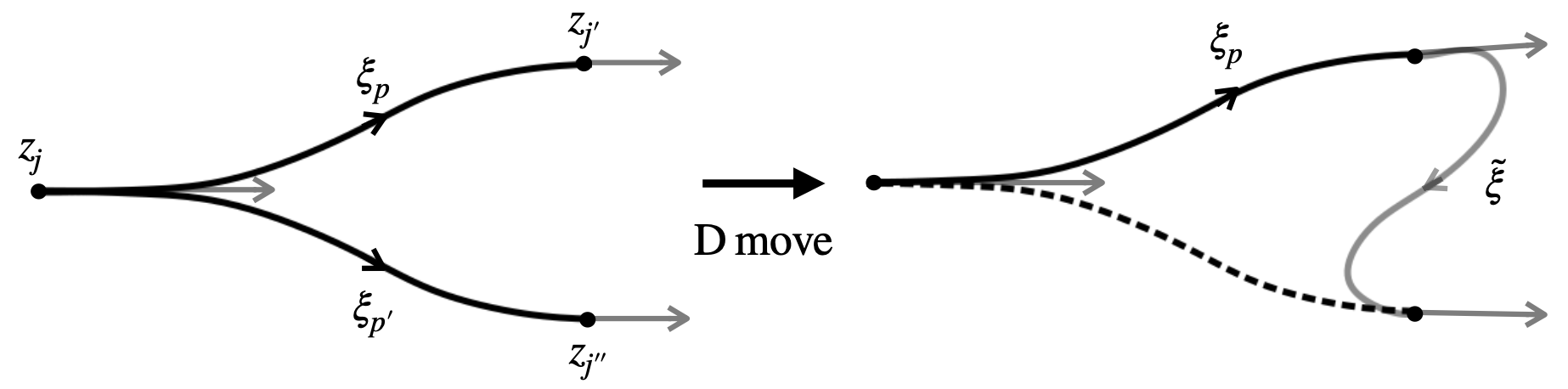} 
%\caption{Homology in open surfaces. In red, relative homology. In blue, winding numbers of the absolute forms.}
\label{fig:amplitude}
\end{figure}
\end{itemize}
\end{definition}
Now we claim that the value of $\int_{\Sigma}^{\rm reg}  I^{ \boldsymbol{\xi}}_{x_0}( \nu_{\mathbf{z},\mathbf{m}}) K_g \,\dd  {\rm v}_g$ remains unchanged if we perform S-moves, A-moves and D-moves to the defect graph.  Let us focus first on the case of $S$-move. We consider first the case when the curves $\xi_p$ and $\tilde{\xi}_p$ do not intersect (except at their endpoints). 
We can always reduce to this case since, if two curves intersect, by choosing a third one that does not intersect these two curves and comparing the integrals for the two curves with this third one, we get the desired result. 
Let us call $\tilde{\mc{D}}_{\mathbf{v},\boldsymbol{\xi}}$ the defect graph after the S-move. The curves $\xi_p$ and $\tilde{\xi}_p$ bound a domain denoted by $D$ (homeomorphic to the  disk (see Figure \ref{fig:proofS}). Furthermore, the boundary of $D$ inherits an orientation from $\Sigma$ (with the positive orientation given 
by $-J\nu$ if $\nu$ is the inward unit normal to $\pl D$ and $J$ the rotation of $+\pi/2$), which 
coincides coincide with the orientation of either $\xi_p$ or $\tilde{\xi}_p$.
\begin{figure}[h] 
\centering
\includegraphics[width=.6\textwidth]{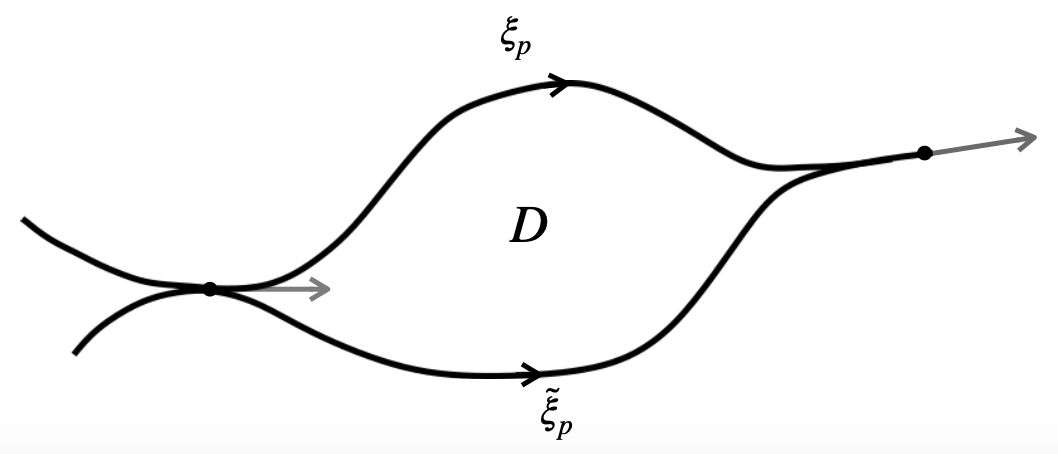} 
\caption{}
\label{fig:proofS}
\end{figure}
 Without loss of generality, we may assume this coincides with the orientation of $\tilde{\xi}_p$. The two defect graphs give rise to two different primitives $ I^{ \boldsymbol{\xi}}_{x_0}( \nu_{\mathbf{z},\mathbf{m}}) $ and $ I^{\tilde {\boldsymbol{\xi}}}_{x_0}( \nu_{\mathbf{z},\mathbf{m}}) $ and, on $D$, we have $  I^{\tilde {\boldsymbol{\xi}}}_{x_0}( \nu_{\mathbf{z},\mathbf{m}})+\kappa(\xi_p)= I^{ {\boldsymbol{\xi}}}_{x_0}( \nu_{\mathbf{z},\mathbf{m}})$. 
The difference of the two regularized integrals is then
\begin{align*}
\int_{\Sigma}^{\rm reg}&  I^{ \boldsymbol{\xi}}_{x_0}( \nu_{\mathbf{z},\mathbf{m}}) K_g \,\dd {\rm v}_g-\int_{\Sigma}^{\rm reg}   I^{\tilde {\boldsymbol{\xi}}}_{x_0}( \nu_{\mathbf{z},\mathbf{m}})K_g \,\dd {\rm v}_g
\\
&=\int_D( I^{\boldsymbol{\xi}}_{x_0}( \nu_{\mathbf{z},\mathbf{m}})-   I^{\tilde {\boldsymbol{\xi}}}_{x_0}( \nu_{\mathbf{z},\mathbf{m}}))K_g\dd {\rm v}_g-2\kappa(\xi_p)\int_{\xi_p}k_g\dd \ell_g+2\kappa(\xi_p)\int_{\tilde\xi_p}k_g\dd \ell_g.
\end{align*}
Now we apply the Gauss-Bonnet theorem  on $D$ to get
\[\begin{split}
\int_D( I_{x_0}^{\boldsymbol{\xi}}( \nu_{\mathbf{z},\mathbf{m}})-  I^{\tilde {\boldsymbol{\xi}}}_{x_0}( \nu_{\mathbf{z},\mathbf{m}}))K_g \dd {\rm v}_g=
\int_D k(\xi_p)K_g \dd {\rm v}_g 
=- 2 \kappa(\xi_p)(\int_{\tilde{\xi}_p}k_g\dd \ell_g -  \int_{\xi_p}k_g\dd \ell_g) 
\end{split}\]
and we deduce that the difference of the two regularized integrals vanishes.
This means that the regularized curvature term does not change if we perform an S-move to the defect graph. Furthermore, since $F\in  \mc{E}_R(\Sigma,g)$, we have $F(c+X_g+I^{\boldsymbol{\sigma}}_{x_0}( \omega_{\mathbf{k}})+ I^{  {\boldsymbol{\xi}}}_{x_0}( \nu_{\mathbf{z},\mathbf{m}}))=F(c+X_g+I^{\boldsymbol{\sigma}}_{x_0}( \omega_{\mathbf{k}})+ I^{\tilde {\boldsymbol{\xi}}}_{x_0}( \nu_{\mathbf{z},\mathbf{m}}))$ so that this 
establishes our claim in the case of S-moves. The cases of A-moves and D-moves can be treated similarly by applying the Gauss-Bonnet theorem in the triangle with vertices $z_j,z_{j'},z_{j''}$. Now it can be checked that every defect graph can be deformed, via S-A-D-moves, to the following defect graph: let $\sigma$ be a permutation of $\{1,\dots,n_{\mathfrak{m}}\}$ reordering the charges, i.e. $m_{\sigma(1)}\leq \dots\leq m_{\sigma(n_{\mathfrak{m}})}$ and $\xi_p$ the curve joining $z_{\sigma(p)}$ to $z_{\sigma(p+1)}$. Also, note that this graph is not unique if some charges are equal but then   two such graphs are still related by S-A-D-moves (see Figure \ref{fig:masses}). This proves that the magnetic correlation functions dont depend on the defect graph.
\begin{figure}[h] 
\centering
\includegraphics[width=.5\textwidth]{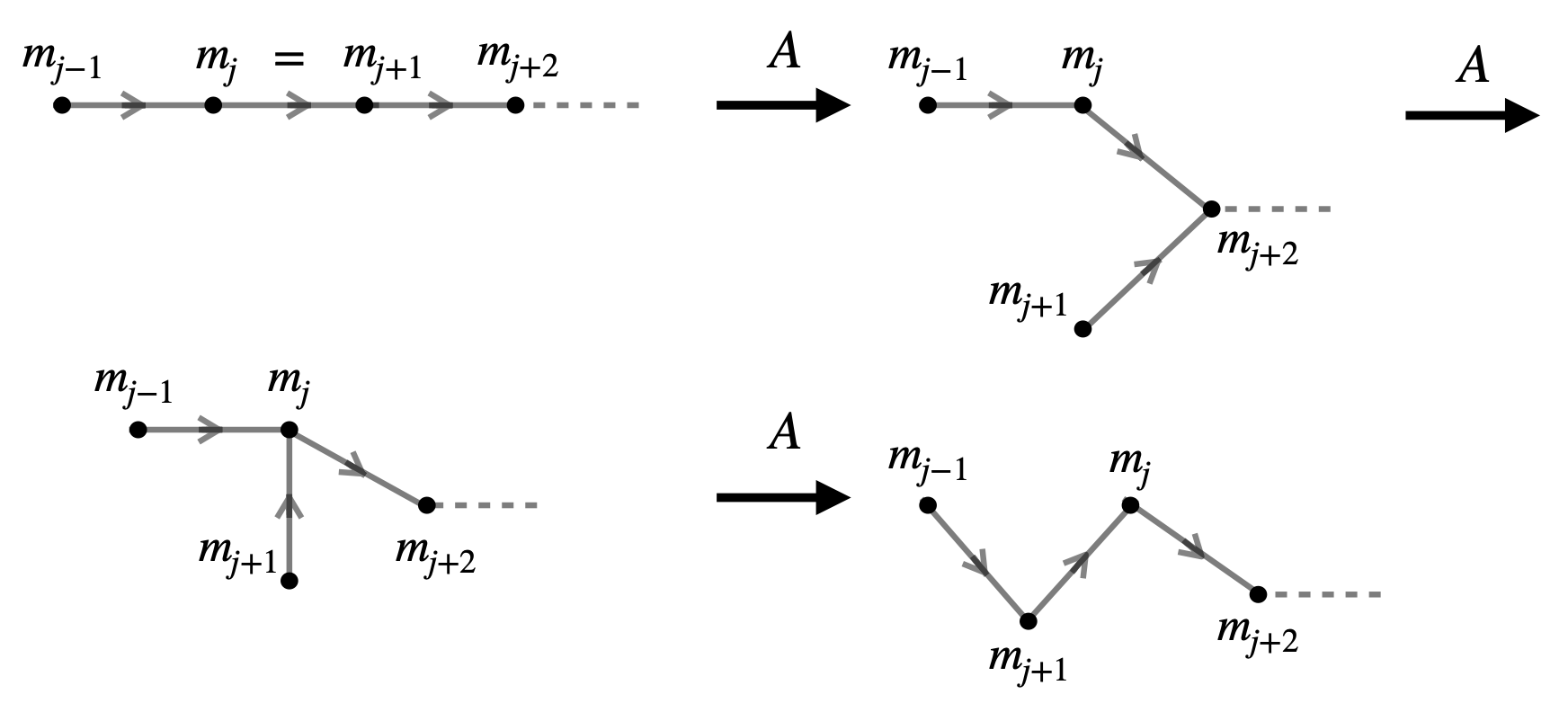} 
\caption{Sequence of A-moves to switch two points with equal masses }
\label{fig:masses}
\end{figure}
\end{proof}
 
 \begin{proof}[Proof of Lemma \ref{change_conf_magnetic}]
 We use the relation $K_{g'}=e^{-\rho}(K_g+ \Delta_{g}\rho)$ and $\dd{\rm v}_{g'}=e^{\rho}\dd{\rm v}_{g}$,
 then by integration by parts 
 \[\begin{split} 
 \int_{\Sigma} I^{\boldsymbol{\xi}} _{x_0}( \nu^{\rm h}_{\mathbf{z},\mathbf{m}}) K_{g'} \,\dd v_{g'}
 =&
 \int_{\Sigma} I^{\boldsymbol{\xi}} _{x_0}( \nu^{\rm h}_{\mathbf{z},\mathbf{m}})K_g\dd v_{g}
 +  \cjg  \dd\rho, \nu^{\rm h}_{\mathbf{z},\mathbf{m}}\cjd_2   \\
&- \sum_{p=1}^{n_{\mathfrak{m}}-1}\int_{\xi_p}\pl_{\nu}\rho (I^{\boldsymbol{\xi}} _{x_0}( \nu^{\rm h}_{\mathbf{z},\mathbf{m}})^+-I^{\boldsymbol{\xi}} _{x_0}( \nu^{\rm h}_{\mathbf{z},\mathbf{m}})^-)\dd \ell_g \end{split} \]
 where $\nu=J\dot{\xi}_p$ is the left normal pointing vector with respect to the orientation of $\xi_p$ ($J$ being the rotation of $+\pi/2$ in the tangent space), $I^{\boldsymbol{\xi}} _{x_0}( \nu^{\rm h}_{\mathbf{z},\mathbf{m}})^+$ is the limit of $I^{\boldsymbol{\xi}} _{x_0}( \nu^{\rm h}_{\mathbf{z},\mathbf{m}})$ on $\xi_p$ from the side given by $-\nu$ (the right side)
  and $I^{\boldsymbol{\xi}} _{x_0}( \nu^{\rm h}_{\mathbf{z},\mathbf{m}})^-$   the value from the side given by $\nu$ (the left side). Also $I^{\boldsymbol{\xi}} _{x_0}( \nu^{\rm h}_{\mathbf{z},\mathbf{m}})^+-I^{\boldsymbol{\xi}} _{x_0}( \nu^{\rm h}_{\mathbf{z},\mathbf{m}})^-=2\pi R\kappa(\xi_p)$.
  Next we use  that $k_{g'}\dd \ell_{g'}=k_{g}\dd \ell_{g}-\demi\pl_{\nu}\rho \, \dd \ell_{g}$ on $\xi_p$. Finally  $\cjg  \dd\rho, \nu^{\rm h}_{\mathbf{z},\mathbf{m}}\cjd_2 =0$ because $\nu^{\rm h}_{\mathbf{z},\mathbf{m}}$ is harmonic.
Combining these facts, we obtain our claim.
\end{proof}

 %%%%%%%%%%%%%%%%%%%%%%%%%%%%%%%%%%%%%%%%%%%%%%%%%%%%%%%%%%%%%%%%%%%%%%%%%%%%%%%%%%
%%%%%%%%%%%%%%%%%%%%%%%%%%%%%%%%%%%%%%%%%%%%%%%%%%%%%%%%%%%%%%%%%%%%%%%%%%%%%%%%%%
\section{Imaginary Gaussian multiplicative chaos}\label{IGMC}
%%%%%%%%%%%%%%%%%%%%%%%%%%%%%%%%%%%%%%%%%%%%%%%%%%%%%%%%%%%%%%%%%%%%%%%%%%%%%%%%%%
%%%%%%%%%%%%%%%%%%%%%%%%%%%%%%%%%%%%%%%%%%%%%%%%%%%%%%%%%%%%%%%%%%%%%%%%%%%%%%%%%%

In this section, we first recall some facts about the Gaussian Free Field $X_g$ (resp. $X_{g,D}$) on closed surfaces (resp. surfaces with boundary). This is a Gaussian random distribution on the surface, living in a negative Sobolev space $H^{s}(\Sigma)$ for $s<0$. In order to make sense of certain functionals, we need to regularize it at a small scale $\eps>0$. This will be done either  in a geometric fashion or using white noise, as we explain below. Finally, we recall the construction and properties of the Imaginary Gaussian multiplicative chaos $e^{i\beta X_g}\dd {\rm v}_g$, which is a random distribution on $\Sigma$. We shall need estimates on its exponential moments. 

 \subsection{Gaussian Free Fields} 
 %%%%%%%%%%%%%%%%%%%%%%%%%%%%%%%%%%%%%%
On a Riemann surface without boundary, the Gaussian Free Field (GFF in short) is defined as follows. Let $(a_j)_j$ be a sequence of i.i.d. real Gaussians   $\mc{N}(0,1)$ with mean $0$ and variance $1$, defined on some probability space   $(\Omega,\mc{F},\mathbb{P})$,  and define  the Gaussian Free Field with vanishing mean in the metric $g$ by the random distribution
\begin{equation}\label{GFFong}
X_g:= \sqrt{2\pi}\sum_{j\geq 1}a_j\frac{e_j}{\sqrt{\la_j}} 
\end{equation}
 where  the sum converges almost surely in the Sobolev space  $H^{s}(\Sigma)$ for $s<0$ defined by
\begin{equation}\label{sobolev}
 H^{s}(\Sigma):=\{f=\sum_{j\geq 0}f_je_j\,|\, \|f\|_{s}^2:=|f_0|^2+\sum_{j\geq 1}\lambda_j^{s}|f_j|^2<+\infty\}.
 \end{equation}
This Hilbert space is independent of $g$, only its norm depends on a choice of $g$.
The covariance $X_g$ is then the Green function when viewed as a distribution, which we will write with a slight abuse of notation
\[\mathbb{E}[X_g(x)X_g(x')]= \,G_g(x,x').\]

In the case of a surface with boundary $\Sigma$, the Dirichlet Gaussian free field (with covariance $G_{g,D}$) will be denoted $X_{\Sigma,D}$. It is defined similarly to the sum \eqref{GFFong} with the $(e_j)_j$  and $(\lambda_j)_j$ replaced by the normalized eigenfunctions $(e_{j,D})_j$ and ordered eigenvalues $(\lambda_{j,D})_j$ of the Laplacian with Dirichlet boundary conditions, the sum being convergent   almost surely  in the Sobolev space  $H^{s}(\Sigma)$ (for all $s\in (-1/2,0)$) defined by
\begin{equation}\label{sobolevD}
H^{s}(\Sigma):=\{f=\sum_{j\geq 0}f_je_{j,D}\,|\, \|f\|_{s}^2:=\sum_{j\geq 0}\lambda_{j,D}^{s}|f_j|^2<+\infty\}.
\end{equation}
In both cases (closed or open surfaces), we will denote by $(\cdot, \cdot)_s$ the inner product in $H^{s}(\Sigma)$, and by extension also the duality bracket on $H^{s}(\Sigma)\times H^{-s}(\Sigma)$. 
 
 \subsection{$g$-Regularisations and white noise regularisation} 
 %%%%%%%%%%%%%%%%%%%%%%%%%%%%%%
 As Gaussian Free Fields are rather badly behaved (they are distributions), we will need to consider  regularizations, and we will mainly consider two of them. First we introduce a regularization procedure, which we will call \emph{$g$-regularization}. Let $\Sigma$ be a surface with or without boundary equipped with a Riemannian metric $g$ and associated distance $d_g$. For a random distribution $h$ on $\Sigma$ and  for $\eps>0$ small, we define a regularization $h_{\eps}$ of $h$ by averaging on geodesic circles of radius $\eps>0$: let $x\in \Sigma$ and let $\mc{C}_g(x,\eps)$ be the geodesic circle of center $x$ and radius $\eps>0$, and let $(f^n_{x,\eps})_{n\in \N} \in C^\infty(\Sigma)$ be a sequence with $||f^n_{x,\eps}||_{L^1}=1$ 
which is given by $f_{x,\eps}^n=\theta^n(d_g(x,\cdot)/\eps)$ where $\theta^n(r)\in C_c^\infty((0,2))$ is non-negative, 
supported near $r=1$ and such that $f^n_{x,\eps}{\rm dv}_g$ 
converges in $\mc{D}'(\Sigma)$ to the uniform probability measure 
$\mu_{x,\eps}$
on $\mc{C}_g(x,\eps)$ as $n\to \infty$ (for $\eps$ small enough, the geodesic circles form a submanifold and the restriction of $g$ along this manifold gives rise to a finite measure, which corresponds to the uniform measure after renormalization so as to have mass $1$). If the pairing $\langle h, f_{x,\eps}^n\rangle$ converges almost surely towards a random variable $h_\epsilon(x)$ that has a modification which is continuous in the parameters $(x,\epsilon)$, we will say that $h$ admits a $g$-regularization $(h_\epsilon)_\epsilon$. This is the case for the GFF $X_g,X_D$ and we denote by $X_{g,\epsilon},X_{D,\epsilon}$ their  $g$-regularization.

Second, and in the case we consider the Dirichlet GFF over a surface $ \Sigma$ with boundary, we introduce another regularization, dubbed white-noise regularization. The Green function $G_{g,D}$ can be written as
$$G_{g,D}(x,x')=2\pi\int_0^\infty p_t(x,x')\,\dd t $$
where $p_t(x,x')$   denotes the transition densities of the Brownian motion on  $\Sigma$ killed upon touching the boundary $\partial \Sigma$ (i.e. the heat kernel of the Laplacian with Dirichlet condition).
Let $W$ be a white noise on $\R_+\times \Sigma$ and define for $\delta>0$
$$X_{g,D,\delta}(x):=(2\pi)^{1/2}\int_{\delta^2}^{\infty}\int_{\Sigma}p_{t/2}(x,y)W(\dd t,\dd y).$$
Then the covariance kernel of these processes is given by
$$\E[X_{g,D,\delta}(x)X_{g,D,\delta'}(x')]=2\pi\int_{\delta^2\vee\delta'^2}^{\infty} p_t(x,x')\,\dd t.$$

\subsection{Gaussian multiplicative chaos}
%%%%%%%%%%%%%%%%%%%%%%%%%%%%%%
For $\beta\in\R$ and $h$ a random distribution admitting a  $g$-regularization $(h_\epsilon)_\epsilon$, we define the complex measure 
\begin{equation}\label{GMCg}
M^{g,\eps}_{\beta}(h,\dd x):= \eps^{-\frac{\beta^2}{2}}e^{i\beta h_{ \eps}(x)}{\rm dv}_g(x).
\end{equation}
Of particular interest for us is the case when $h=X_g$ or  $h=X_{g,D}$. In that case, for $\beta^2<2$,  the random measures above converge  as $\eps\to 0$ in $L^2(\Omega)$ and weakly in the space of distributions \cite[Theorem 3.1]{LRV15}  to a non trivial distribution of order $2$  denoted by $M^g_\beta(X_g,\dd x)$ or $M^g_\beta(X_{g,D},\dd x)$; this means that there exists a random variable $D_\Sigma\in L^2(\Omega)$,  such that
\begin{equation}\label{order2}
\forall \varphi\in C^\infty(\Sigma),\quad \Big|\int_\Sigma \varphi (x) M^g_\beta(X_g,\dd x)\Big|\leq D_\Sigma \|d^2\varphi\|_\infty.
\end{equation}
For notational readability and if no risk of confusion, we will use the notation $M^{g,\eps}_{\beta}( \dd x)$ for $M^{g,\eps}_{\beta}(X_g,\dd x)$ (i.e. we skip the field dependence). Also we stress that the condition $\beta^2<2$ will be in force throughout the paper.

Also, from  \cite[Lemma 3.2]{GRV},  we recall that there exist $W_g$ (resp. $W_{g,D}$) such that uniformly on $\Sigma$ 
\begin{equation}\label{varYg}
\lim_{\eps \to 0}\E[X^2_{g,\eps}(x)]+\ln\eps=W_g(x)\quad \text{ and }\quad \lim_{\delta \to 0}\E[X^2_{g,D,\delta}(x)]+ \ln\delta=W_{g,D}(x) .
\end{equation}
 In particular, considering a metric $g'=e^{\omega}g$ conformal to the  metric $g$, we obtain the relation
\begin{equation}\label{relationentrenorm} 
M^{  g'}_{\beta}(X_g,\dd x)=e^{(1-\frac{\beta^2}{4})\omega(x)} M^{g}_{\beta}(X_g,\dd x).
\end{equation}

We state the following elementary, though important, equivalence between the GMC construction via white-noise or g-regularization.
\begin{proposition}\label{equiv}
Assume $\Sigma$ is a surface without boundary and $\mc{C}$ is a family of smooth simple curves splitting $\Sigma$ into two connected components $\Sigma_1$ and $\Sigma_2$. Consider the equality in law stated in Proposition \ref{decompGFF}
$$X_g=Y_1+Y_2+P-c_g$$
where $Y_1,Y_2,P$ are three independent Gaussian fields with $Y_i$ the Dirichlet GFF on $\Sigma_i$ for $i=1,2$, $P$ the harmonic extension of the boundary values $X_{g|\mc{C}}$ and $c_g=\frac{1}{\rm v_g(\Sigma)}\int_\Sigma (Y_1+Y_2+P)\,\dd {\rm v}_g$. Then, for $\beta^2<2$, the following limits agree in law
$$M_\beta^g(X_g,\dd x)\stackrel{law}{=}M_\beta^g(Y_1+Y_2+P-c_g,\dd x)$$
where the limit in the lhs is taken via $g$-regularization and the limit in the rhs can be either white noise or g-regularization.
\end{proposition}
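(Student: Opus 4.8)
The plan is to split the statement into two assertions. The first --- that $M_\beta^g(X_g,\dd x)$ and $M_\beta^g(Y_1+Y_2+P-c_g,\dd x)$ agree in law when \emph{both} are built by $g$-regularization --- is essentially tautological given Proposition \ref{decompGFF}. Indeed, the map sending a random distribution $h\in H^s(\Sigma)$ to the process $(x,\epsilon)\mapsto h_\epsilon(x)$ of its circle averages is a fixed measurable operation, and $M^{g,\epsilon}_\beta(h,\dd x)$ is a measurable functional of this process, being its $L^2(\Omega)$-limit. Since $X_g$ and $Y_1+Y_2+P-c_g$ have the same law as elements of $H^s(\Sigma)$, the laws of the associated $g$-regularized chaos measures coincide. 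Thus the only substantial point is the second assertion: that on the right-hand side one may replace the $g$-regularizations of the Dirichlet fields $Y_1,Y_2$ by their white-noise regularizations without changing the limit.

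First I would reduce to the two Dirichlet pieces. Since $e^{i\beta h_\epsilon}=e^{i\beta Y_{\epsilon}}\,e^{i\beta(P-c_g)_\epsilon}$ at the regularized level, and since $P-c_g$ is continuous in the interior of each $\Sigma_i$ (so $(P-c_g)_\epsilon$ converges pointwise to a function that is continuous once we condition on the $\sigma$-algebra generated by $X_g|_{\mc{C}}$, the constant $c_g$ entering only as a common multiplicative phase $e^{-i\beta c_g}$), I would condition on that $\sigma$-algebra and factor the regularized integrand. As $Y_1,Y_2$ live on the disjoint domains $\Sigma_1,\Sigma_2$, the problem localizes to proving, on each $\Sigma_i$ separately, that the white-noise and the circle-average regularizations of the Dirichlet GFF produce the same imaginary GMC.

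The heart is an $L^2(\Omega)$ comparison, valid in the full range $\beta^2<2$, which is exactly the $L^2$-phase of imaginary chaos. For a test function $\varphi$ and any two regularizations indexed by $\epsilon,\epsilon'$, the cross moment of $M^{\epsilon}:=\int\varphi\,M^{g,\epsilon}_\beta(Y_i,\dd x)$ and $M^{\epsilon'}$ is
\[
\E\big[M^{\epsilon}\,\overline{M^{\epsilon'}}\big]=\int_{\Sigma_i^2}\varphi(x)\overline{\varphi(x')}\,(\epsilon\epsilon')^{-\frac{\beta^2}{2}}\,e^{\beta^2\E[Y_{\epsilon}(x)Y_{\epsilon'}(x')]-\frac{\beta^2}{2}\E[Y_\epsilon(x)^2]-\frac{\beta^2}{2}\E[Y_{\epsilon'}(x')^2]}\,\dd{\rm v}_g\,\dd{\rm v}_g.
\]
I would then use that, for both schemes, the off-diagonal covariance (and the cross-covariance between a white-noise and a circle-average copy of the \emph{same} field $Y_i$) converges to $G_{g,D}(x,x')$, while the diagonal variance obeys $\E[Y_\epsilon(x)^2]=-\log\epsilon+W_{g,D}(x)+o(1)$ with the \emph{same} continuous function $W_{g,D}$ of \eqref{varYg}. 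Feeding this in, the integrand converges pointwise off the diagonal to $e^{-\frac{\beta^2}{2}(W_{g,D}(x)+W_{g,D}(x'))}d_g(x,x')^{-\beta^2}$, which is integrable precisely because $\beta^2<2$; together with a uniform domination obtained by Gaussian comparison of the covariances to the singular kernel, dominated convergence shows that $\E[M^{\epsilon}\overline{M^{\epsilon'}}]$ tends to one and the same limit irrespective of which scheme labels each factor. Expanding $\E[|M^{\mathrm{wn}}-M^{g}|^2]$ into four such terms then forces it to $0$, giving equality of the two limits in $L^2(\Omega)$, hence in probability and in law; reinstating the $P$-factor and integrating over the conditioning finishes the argument.

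The main obstacle I anticipate is twofold and concentrated in the last step: establishing the uniform domination of the second-moment integrand near the diagonal \emph{and} near the cutting curves $\mc{C}$, where $G_{g,D}$ degenerates and $P$ is rough, so that dominated convergence truly applies to the double integral; and, more delicately, verifying that the two regularizations share the \emph{same} renormalization $W_{g,D}$ rather than differing by an additive continuous function --- which would otherwise rescale the limit by a factor $e^{-\frac{\beta^2}{2}(\cdot)}$. This matching is exactly the point where the common normalization by $\epsilon^{-\beta^2/2}$ and the short-time heat-kernel asymptotics underlying \eqref{varYg} must be reconciled with the circle-average asymptotics.
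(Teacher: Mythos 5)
Your proposal is correct and coincides with the paper's intended argument: the paper's own proof of Proposition \ref{equiv} is a one-line remark that the result is a standard exercise in straightforward $L^2$-computations (available precisely because $\beta^2<2$ keeps the imaginary chaos in its $L^2$ phase), which is exactly the second-moment cross-comparison you carry out. Your reduction via measurability of the $g$-regularization, the conditioning on $X_g|_{\mc{C}}$ to localize to the Dirichlet pieces, and the matching of the diagonal renormalizations $W_{g,D}$ are the natural fleshing-out of that remark, with the normalization-matching you flag being indeed the only point requiring actual verification.
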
  

\begin{proof}
This result is standard and left to the reader as an exercise: it results from straightforward $L^2$-computations. Recall that $\beta^2<2$ and GMC is therefore in $L^2$.
\end{proof}

\subsection{Exponential moments}
%%%%%%%%%%%%%%%%%%%%%%%%%%%%%%
In this section, we assume that $\beta^2<2$ and we recall the following result, originally proved in  \cite{LRV19} but here adapted to our context, which will be fundamental for the existence of the path integral and correlation functions:
\begin{proposition}\label{expmoment}
Assume $\Sigma$ is a Riemann surface with or without boundary. If $\Sigma$ has a non empty boundary, we consider the Dirichlet GFF $X_{g,D}$ on $\Sigma$ and we set $\mc{D}'=\Sigma$. If $\Sigma$ has empty boundary,  we consider an open subset    $\mc{D}'$  of $\Sigma$, with closure $\overline{\mc{D}'}\not=\Sigma$. Let $Z$ a  real valued  random variable (not necessarily assumed to be independent of $X_{g,D}$). Let   $\mc{D}$ be an open subset of $\mc{D}'$.  Finally we consider a measurable function $f:\mc{D}\to \C$. Then for $\mu\in\R$
$$ \E\Big[ \exp\Big(\mu\Big| \int_{\mc{D}} f(x)M^{g }_{\beta}(Z+X_{g,D},\dd x) \Big|\Big) \Big]
\leq
e^{C\mu v}(1+C\mu ue^{C\mu^2u^2})$$
with
$$  u^2:=  \iint_{\mc{D}^2}|f(x)||f(y)|e^{\beta^2G_{g,D}(x,y)}{\rm v}_g(\dd x){\rm v}_g(\dd y)   ,\quad \text{ and }\quad  v:=\int_{\mc{D}} |f(x)|{\rm v}_g(\dd x),$$
for some constant $C$ only depending on $\beta$.
\end{proposition}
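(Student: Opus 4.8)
The plan is to follow the strategy of \cite{LRV19} for exponential moments of imaginary chaos, adapting it to a closed or bounded surface and to the presence of the auxiliary variable $Z$. The first observation is that $Z$ plays no role: since $Z$ is real valued and does not depend on the integration variable $x$, the factor $e^{i\beta Z}$ pulls out of the integral with unit modulus, so that
\[
\Big|\int_{\mc D} f(x)\,M^g_\beta(Z+X_{g,D},\dd x)\Big| = \Big|\int_{\mc D} f(x)\,M^g_\beta(X_{g,D},\dd x)\Big| =: |J|,
\]
an identity that persists after regularisation and in the $L^2$-limit. Hence it suffices to bound $\E[\exp(\mu|J|)]$ with $Z$ removed, and we may assume $\mu>0$ (for $\mu\le 0$ the left side is $\le 1$). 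Concretely I would work with the regularised $J_\eps=\int_{\mc D} f\,\eps^{-\beta^2/2}e^{i\beta X_{g,D,\eps}}\,{\rm dv}_g$, prove the estimate uniformly in $\eps$, and pass to the limit using the $L^2$ convergence recalled above (valid since $\beta^2<2$) together with Fatou's lemma.

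The next step is to compute the even moments by the Gaussian formula. Writing $\E[|J_\eps|^{2k}]=\E[J_\eps^{k}\bar J_\eps^{k}]$ and using $\E[e^{i\beta(\cdots)}]=e^{-\frac{\beta^2}{2}\Var{\cdots}}$ produces a $2k$-fold integral whose kernel is $\prod_{i}f(x_i)\prod_{j}\bar f(y_j)$ multiplied by $\exp\!\big(-\tfrac{\beta^2}{2}\Var{\sum_i X_{g,D,\eps}(x_i)-\sum_j X_{g,D,\eps}(y_j)}\big)$. The normalising powers $\eps^{-k\beta^2}$ cancel exactly the self-energy coming from the diagonal terms $\E[X_{g,D,\eps}^2]=-\log\eps+O(1)$, and in the limit the kernel becomes the Coulomb-gas kernel, with opposite-charge attraction $\prod_{i,j}e^{\beta^2 G_{g,D}(x_i,y_j)}$ and like-charge repulsion $\prod_{i<i'}e^{-\beta^2 G_{g,D}(x_i,x_{i'})}\prod_{j<j'}e^{-\beta^2 G_{g,D}(y_j,y_{j'})}$. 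The universal short-distance behaviour $G_{g,D}(x,y)=\log\frac1{d_g(x,y)}+O(1)$ is what makes the ensuing constants depend only on $\beta$, and it guarantees, since $\beta^2<2$, that $u<\infty$.

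I would then extract two estimates from this kernel. The crude one uses only that the quadratic form in the exponent is non-negative, so the modulus of the kernel is bounded by its integrable weight; integrating gives $\E[|J_\eps|^{2k}]\le C^{2k}v^{2k}$ and, in particular, $|\E J_\eps|\le Cv$. The refined estimate is the heart of the matter: exploiting the positivity of the Dirichlet Green's function $G_{g,D}\ge 0$ (so each like-charge factor is $\le 1$) together with a matching/permanent bound for the opposite-charge factors, one dominates the \emph{centered} kernel by a sum over pairings, obtaining the Gaussian-type growth
\[
\E\big[|J_\eps-\E J_\eps|^{2k}\big]\le C^{k}\,k!\,u^{2k}.
\]
This is precisely where the imaginary (rather than real) nature of the chaos and the restriction $\beta^2<2$ are essential, and I expect this to be the genuine obstacle; the relevant combinatorial and analytic estimates would be imported from \cite{LRV15,LRV19}, adapted to the Dirichlet Green's function on the surface.

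Finally, to assemble the claim I would decompose $J_\eps=\E J_\eps+\hat J_\eps$ with $\hat J_\eps$ centered, so that $\E[\exp(\mu|J_\eps|)]\le e^{\mu|\E J_\eps|}\,\E[\exp(\mu|\hat J_\eps|)]$. The mean is controlled by the crude bound, $|\E J_\eps|\le Cv$, giving the factor $e^{C\mu v}$. For the centered part I insert the factorial moment bound into the series $\E[\exp(\mu|\hat J_\eps|)]=\sum_n\frac{\mu^n}{n!}\E|\hat J_\eps|^n$, handling odd moments by Cauchy--Schwarz between neighbouring even ones; the inequality $(2k)!\ge(k!)^2$ turns the factorial growth into the convergent Gaussian factor $e^{C\mu^2u^2}$, while isolating the first-order term ($\E|\hat J_\eps|\le Cu$) produces the prefactor $C\mu u$, yielding $\E[\exp(\mu|\hat J_\eps|)]\le 1+C\mu u\,e^{C\mu^2u^2}$. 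Multiplying the two factors and letting $\eps\to0$ gives the stated bound. The main difficulty, as indicated, lies in the refined moment estimate of the previous paragraph, i.e. in controlling the Coulomb-gas kernel and establishing the Gaussian-dominated (factorial) growth of the imaginary-chaos moments.
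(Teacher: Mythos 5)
Your overall architecture (removing $Z$ by unimodularity, bounding the mean part by $Cv$, sub-Gaussian concentration with proxy $u^2$, series assembly via $(2k)!\ge (k!)^2$ and Cauchy--Schwarz for odd moments) has the right shape, but the central step is asserted rather than proved, and it is exactly the hard point. Your refined estimate $\E\big[|J_\eps-\E J_\eps|^{2k}\big]\le C^k k!\,u^{2k}$ rests on an unspecified ``matching/permanent bound'' dominating the Coulomb-gas kernel $\prod_{i,j}e^{\beta^2 G_{g,D}(x_i,y_j)}\prod_{i<i'}e^{-\beta^2 G_{g,D}(x_i,x_{i'})}\prod_{j<j'}e^{-\beta^2 G_{g,D}(y_j,y_{j'})}$ by a sum over pairings; no such domination is stated precisely, and it is not available in \cite{LRV15,LRV19} in this form. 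The Onsager-type electrostatic inequalities from the imaginary-chaos literature give moment growth of order $(k!)^{\beta^2/2}$, but with constants depending on $f$ through quantities that do not reduce to $u$ and $v$ alone, whereas the whole content of the proposition is that the constant depends only on $\beta$ and the $f$-dependence enters only through $u$ and $v$. In fact the reference you propose to import from proves the estimate by a completely different mechanism, which is the paper's actual proof: the white-noise decomposition in the scale parameter turns the regularized chaos into a continuous martingale $\mc{M}_t$; It\^o calculus, together with the positivity of the Dirichlet heat kernel and its Gaussian upper bound (from \cite{CLY} plus domain monotonicity --- this is precisely why the statement is formulated for the Dirichlet GFF, resp.\ on a strict open subset of a closed surface), yields a \emph{deterministic} bound $\sup_{t}\langle \mc{M}^0\rangle_t\le Cu^2$ on the bracket of the post-time-zero part $\mc{M}^0$; the exponential supermartingale inequality then gives the sub-Gaussian tail $\P(|\mc{M}^0_\infty|>x)\le 2e^{-x^2/(4Cu^2)}$, while the time-zero part $\mc{M}_0$ is bounded \emph{pathwise} by $Cv$, producing the factor $e^{C\mu v}$. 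Your scheme never uses the heat-kernel representation, and without it (or a genuinely worked-out combinatorial substitute) the sub-Gaussian moment bound remains an open claim, as you yourself acknowledge.

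There is also a concrete error in your auxiliary step: the ``crude bound'' $\E[|J_\eps|^{2k}]\le C^{2k}v^{2k}$ is false. Once the normalisation $\eps^{-k\beta^2}$ is used to cancel the diagonal self-energies inside the variance, you cannot simultaneously invoke nonnegativity of that variance to bound the kernel by $1$: the renormalised kernel retains the unbounded attraction factors $\prod_{i,j}e^{\beta^2 G_{g,D}(x_i,y_j)}$, and already at $k=1$ one has $\E[|J_\eps|^2]\asymp u^2$, which can be far larger than $v^2$. The only consequence you actually use, $|\E J_\eps|\le Cv$, is correct, but it follows directly from $|\E\,e^{i\beta X_{g,D,\eps}(x)}|=e^{-\frac{\beta^2}{2}\E[X_{g,D,\eps}(x)^2]}\le C\eps^{\beta^2/2}$, not from the claimed $2k$-th moment bound. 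Note finally that the paper decomposes at time zero of the martingale, $J=\mc{M}_0+\mc{M}^0_\infty$ with $|\mc{M}_0|\le Cv$ almost surely, rather than at the mean; your mean-centering is formally compatible with the target only because $G_{g,D}\ge 0$ forces $v\le u$, but this observation does not repair the missing concentration estimate.
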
  

\begin{proof} Before proceeding to the proof, let us stress that the crucial input for the proof is an integral representation of the Green function in terms of a positive heat kernel. This is valid for the Dirichlet GFF and that is the reason why we focus on this case. From this result, we will deduce later exponential moment estimates for the GFF on closed Riemann surfaces (for which the heat kernel is not positive). Of course, on closed Riemann surfaces there is no globally defined Dirichlet GFF and this is the reason why we restrict to a strict subset in this case.

Note that adding $Z$ to the GFF is harmless because it multiplies the GMC by a number with modulus $1$, so we may as well assume that $Z  =0$.
Let us first   prove exponential moments for the real part of $   \int_{\mc{D}} f(x)M^{g}_{\beta}(X_D,\dd x) $; the argument for the imaginary part is the same. The real part can be obtained 
 as the limit as $\delta\to 0$ of the process
$$ \int_{\mc{D}}  \delta^{-\frac{\beta^2}{2}}{\rm Re}(  f(x))\cos(\beta  X_{g,D,\delta}(x) ){\rm dv}_g(x)-\int_{\mc{D}}  \delta^{-\frac{\beta^2}{2}}{\rm Im}(  f(x))\sin(\beta  X_{g,D,\delta}(x) ){\rm dv}_g(x) .$$
For readability we will write the proof with details in the case when ${\rm Im}( f(x))=0$ but the argument is similar for the second term in the relation above.

Recall the asymptotic of $\E[( X_{g,D,\delta}(x))^2]$  in \eqref{varYg}. Therefore the limit above can be obtained as the limit $t\to\infty$ of the following martingale
$$\mc{M}_{t}= \int_{\mc{D}}k(x)\cos(\beta  X_{g,D,e^{-t}}(x) )e^{\frac{\beta^2}{2}\E[ X_{g,D,e^{-t}}(x) ^2]}  {\rm dv}_g(x)$$
where we have set $k(x)={\rm Re}(  f(x)) e^{-\frac{\beta^2}{2}W_{g,D}(x)}$. The argument will follow from the boundedness of the quadratic variations of this martingale for large times. Yet  controlling negatively large times  would involve controlling the long-time behaviour of the heat kernel $p_u(x,y)$ (with $u=e^{-2t}$ ), which would be unnecessarily complicated. Instead, we first remove from the martingale the contribution from negative times and we write for $t\geq 0$
$$\mc{M}_{t}=\mc{M}^0_{t}+\mc{M}_{0}, \quad \text{with }\mc{M}^0_{t}:=\mc{M}_{t}-\mc{M}_{0}.$$
Then $(\mc{M}^0_{t})_{t\geq 0}$ is a continuous square integrable martingale starting from $\mc{M}^0_{0}=0$.  In what follows, we will be mainly focused on controlling this martingale, the case of $\mc{M}_{0}$ being trivial.

The quadratic variations of $\mc{M}^0_{t}$ are easily computed with basic It\^o calculus. Indeed, we have
$$\dd \mc{M}^0_{t}=- \int_{\mc{D}}k(x) \beta\sin(\beta  X_{g,D,e^{-t}}(x) )e^{\frac{\beta^2}{2}\E[ X_{g,D,e^{-t}}(x) ^2]}  {\rm dv}_g(x)\,\dd X_{g,D,e^{-t}}(x), $$
so that
\begin{align*}
\langle \mc{M}^0  \rangle_t =& \int_{0}^{t}\iint_{\mc{D}^2}k(x)k(y)\beta^2\sin(\beta  X_{g,D,e^{-u}}(x) )\sin(\beta  X_{g,D,e^{-u}}(y) ) \\
&
e^{\frac{\beta^2}{2}\E[ X_{g,D,e^{-u}}(x) ^2]} e^{\frac{\beta^2}{2}\E[ X_{g,D,e^{-u}}(y) ^2]}  {\rm dv}_g(x) {\rm dv}_g(y)\,\dd \langle   X_{g,D,e^{-u}}(x),   X_{g,D,e^{-u}}(y)\rangle_u.
\end{align*}
The bracket  is given by (using the Markov property of the transition kernels)
 $$\dd \langle   X_{g,D,e^{-u}}(x),   X_{g,D,e^{-u}}(y)\rangle_u=  4\pi e^{-2u}p_{e^{-2u}}(x,y)\,\dd u.$$
 Now we bound the quadratic variation, by estimating both sines by $1$, by using the standard  inequality for the Dirichlet heat kernel (it follows from \cite{CLY} and the monotonicity of the Dirichlet heat kernel with respect to the domain) 
 $$  e^{-2u}p_{e^{-2u}}(x,y)\leq Ce^{-d_g(x,y)^2e^{2u}/C},$$
 and by using \eqref{varYg} again, to get  that (for some constant $C$ that may change along lines)
\begin{align*}
\sup_{t\geq 0} \langle \mc{M}^0  \rangle_t  \leq &C\int_{0}^\infty \iint_{\mc{D}^2}|k(x)||k(y)|e^{\frac{\beta^2}{2}(W_{g,D}(x)+W_{g,D}(y))}\beta^2e^{\beta^2 t}e^{-d_g(x,y)^2e^{2t}/2}\,{\rm dv}_g(x){\rm dv}_g(y)\dd t\\
\leq &C  \iint_{\mc{D}^2}|k(x)||k(y)|e^{\frac{\beta^2}{2}(W_{g,D}(x)+W_{g,D}(y))} d_g(x,y)^{-\beta^2} \,{\rm dv}_g(x){\rm dv}_g(y) ,
\end{align*}
where we have performed the change of variables $s=d_g(x,y)e^{t}$ to get the last line.
Recall next that, as  a continuous martingale starting from $0$, the martingale $ \mc{M}^0 $ satisfies 
$$\E[e^{\mu  \mc{M}^0_t-\frac{\mu^2}{2} \langle \mc{M}^0  \rangle_t}]=1$$
and, since the bracket is bounded, we deduce for $\mu\in\R$
$$\E[e^{\mu  \mc{M}^0_\infty}]\leq e^{ \frac{\mu^2}{2} \sup_t\langle \mc{M}^0  \rangle_t}.$$
This implies, with the notations above, $\P(\mc{M}^0_\infty>x)\leq \exp(-\frac{x^2}{4Cu^2})$. The same argument with $\E[e^{-\mu  \mc{M}^0_\infty}]$ leads to the estimate $\P(\mc{M}^0_\infty<-x)\leq \exp(-\frac{x^2}{4Cu^2})$ for $x>0$. Therefore
$$\P(|\mc{M}^0_\infty|>x)\leq 2\exp(-\frac{x^2}{4Cu^2}).$$
Finally, we can use the standard trick
$$\E[e^{\mu|\mc{M}^0_\infty|}]=1+\int_0^\infty \P(|\mc{M}^0_\infty|>\frac{x}{\mu})e^x\,\dd x$$
to get that $$\E[e^{\mu|\mc{M}^0_\infty|}]\leq 1+Cu\mu e^{Cu^2\mu^2},$$
up to multiplying $C$ by an irrelevant factor. And we can then conclude by coupling this estimate to the fact that $\mc{M}_{0}$  is obviously bounded by $Cv$, with $v:=\int_\Sigma |f(x)|{\rm v}_g(\dd x)$, to get 
$$\E[e^{\mu|\mc{M}^0_\infty|}]\leq e^{C\mu v}(1+Cu\mu e^{Cu^2\mu^2}).$$
as claimed.
\end{proof}
Note that the proof relies on a white noise decomposition of the GFF and therefore works only for positive heat kernels; that is why we focused on the Dirichlet GFF. Such estimates for the GFF on a closed manifold will be later established using the Markov decomposition of the GFF.%%%%%%%%%%%%%%%%%%%%%%%%%%%%%%%%%%%%%%%%%%
%%%%%%%%%%%%%%%%%%%%%%%%%%%%%%%%%%%%%%%%
%%%%%%%%%%%%%%%%%%%%%%%%%%%%%%%%%%%%%%%%%%%%%%%%%%%%%%%%%%%%%%%%%%%%%%%%%%%%%%%%%%
\section{The   path integral and correlation functions}\label{secPI}
%%%%%%%%%%%%%%%%%%%%%%%%%%%%%%%%%%%%%%%%%%%%%%%%%%%%%%%%%%%%%%%%%%%%%%%%%%%%%%%%%%
%%%%%%%%%%%%%%%%%%%%%%%%%%%%%%%%%%%%%%%%%%%%%%%%%%%%%%%%%%%%%%%%%%%%%%%%%%%%%%%%%%
 Throughout this section we will work under the constraint $\beta^2<2$, in order to ensure convergence of the imaginary GMC.  We also assume that $\beta>0$ since the case $\beta<0$ would be symmetric to $\beta>0$. 
  We further impose the compactification radius $R>0$ to obey 
\begin{equation}
 \beta\Z \subset\frac{1}{R}\Z.
\end{equation}
Let us introduce  a further parameter    $\mu\in \C\setminus\{0\}$ and set
 \begin{equation}\label{valueQ}
  Q= \frac{\beta}{2}- \frac{2}{\beta}.
 \end{equation} 
 Finally we introduce the central charge   ${\bf c}=1-6Q^2$. We focus in this section in the case that ew call rational, meaning we assume
 \begin{equation}\label{valueQrat}
 Q\in \frac{1}{R}\Z.
  \end{equation}

%%%%%%%%%%%%%%%%%%%%%%%%%%%%%%%%%%%%%%%%
%%%%%%%%%%%%%%%%%%%%%%%%%%%%%%%%%%%%%%%%%%%%%%%%%%%%%%%%%%%%%%%%%%%%%%%%%%%%%%%%%%
\subsection{Path integral }\label{subsecPI}
%%%%%%%%%%%%%%%%%%%%%%%%%%%%%%%%%%%%%%%%%%%%%%%%%%%%%%%%%%%%%%%%%%%%%%%%%%%%%%%%%%
%%%%%%%%%%%%%%%%%%%%%%%%%%%%%%%%%%%%%%%%%%%%%%%%%%%%%%%%%%%%%%%%%%%%%%%%%%%%%%%%%%

 Consider now a closed Riemann surface $\Sigma$  equipped with a metric $g$.  To construct the path integral, we need:
 
\begin{assumption}\label{ass}
We fix:
 \begin{itemize}
 \item a geometric symplectic basis $\sigma=(\sigma_j)_{j=1,\dots,2{\mathfrak{g}}}$ of  $\mc{H}^1(\Sigma)$ and $2{\mathfrak{g}}$ independent closed smooth $1$-forms $\omega_1,\dots,\omega_{2{\mathfrak{g}}}$ forming a basis of the cohomology $\mc{H}^1_R(\Sigma)$ dual to $\sigma$ (see Lemma \ref{basisH^1}). Let $\omega_{\bf k}=\sum_{j=1}^{2{\mathfrak{g}}}k_j\omega_j$ if ${\bf k}=(k_1,\dots,k_{2{\mathfrak{g}}})$, as defined in \eqref{omega_k}. 
\item a base point $x_0\in \Sigma_{\boldsymbol{\sigma}}=\Sigma \setminus \cup_{j=1}^{2{\mathfrak{g}}}\sigma_j$, and we define  $I_{x_0}^\sigma(\omega_{\bf k})$ the   function \eqref{primitive} on $\Sigma_{\boldsymbol{\sigma}}$ obtained from the  closed form $\omega_{\bf k}$. 
\end{itemize}
\end{assumption}

The first step  is to introduce  a space of reasonable test functions for our path integral.    Recall  that the family $(e_j)_{j\geq 0}$ stands for an orthonormal basis of $L^2(\Sigma,{\rm v}_g)$ of eigenfunctions of the Laplacian. Write any function $f\in H^s(\Sigma)$ (for $s<0$) as
\begin{equation}\label{f_Fourier}
f=f_0+\sqrt{2\pi}\sum_{j\geq1}f_je_j
\end{equation}
 and notice that the zero mode $f_0$ is unnormalized in the sense that it is not multiplied by ${\rm v}_g(\Sigma)^{-1/2}$, which corresponds to the constant eigenfunction $e_0$\footnote{This term can be absorbed in the $c$-integral up to changing the compactification radius and this is why we choose this normalization.}. We equip $H^s ( \Sigma)$  with the pushforward of the measure $\dd c\otimes\P$ on $(\R/2\pi R\Z)\times \Omega$ through the map $(c,\omega)\mapsto c+X_g(\omega)$.

 Recall from Section \ref{UnivCover} that equivariant distributions $u\in H^s_{\Gamma}(\tilde\Sigma)$ 
 can be uniquely  decomposed as
 \[u=\pi^*(f_0+\sqrt{2\pi}\sum_{j\geq1}f_j e_j) +I_{x_0}(\omega_{\bf k}) \]
 for some ${\bf k}\in\Z^{2\mathfrak{g}}$, where $\pi:=\tilde{\Sigma}\to \Sigma$ is the projection on the base. Identifying this way $H^s_{\Gamma}(\tilde\Sigma)$ with $\Z^{2\mathfrak{g}}\times H^s(\Sigma)$, we consider the space $\mc{E}_R(\Sigma)$ of functionals $F$, defined on $H^s_{\Gamma}(\tilde\Sigma)$, of the form
\begin{equation}\label{polytrigo}
F(u)=\sum_{n=-N}^Ne^{\frac{i}{R} n f_0}P_n(f-f_0)  G( e^{\frac{i}{R} I_{x_0}(\omega_{\bf k})} ) 
\end{equation}
 for arbitrary $N\in\N$, where $P_n$ are polynomials of the form $P_n\big(( f-f_0,g_1)_s ,\dots, ( f-f_0,g_{m_n})_s \big)$ where $g_1,\dots,g_{m_n}$  belong to $H^{-s}(\Sigma)$, and $G$ is   continuous and bounded on $C^0(\Sigma;\S^1)$. Notice in particular that these functionals are $2\pi R$-periodic in the zero mode $f_0$. Next we define   the space:
 \begin{itemize}
\item $\mc{L}^{\infty,p}(H^s(\Sigma))$ as the closure of $\mc{E}_R(\Sigma)$ with respect to the  norm 
\begin{equation}\label{defnorm}
\|F\|_{\mc{L}^{\infty,p}}:=\sup_{\bf k}\Big(\int_{\R/2\pi R\Z}\E\Big[e^{-\frac{1}{2\pi}\langle \dd X_g,\omega_{\bf k} \rangle_2-\frac{1}{4\pi}\|\dd f_{\bf k}\|_2^2}|F(c+\pi^*X_g+I_{x_0}(\omega_{\bf k}))|^p\Big]\,\dd c\Big)^{1/p}
\end{equation}
where $(1-\Pi_1)\omega_{\bf k}=\dd f_{\bf k}$ with $\Pi_1$ is the projection on harmonic forms (recall Lemma \ref{dX_gomega}).
\end{itemize}

\begin{lemma}\label{invariance_of_norm} The norm $\|F\|_{\mc{L}^{\infty,p}}$ does not depend on the choice of cohomology basis.
 \end{lemma}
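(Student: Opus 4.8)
\emph{The plan.} I would write $\|F\|_{\mc{L}^{\infty,p}}^p=\sup_{\bf k}A(\omega_{\bf k})$, where for a closed form $\omega\in\mc{H}^1_R(\Sigma)$ I set
\[
A(\omega):=\int_{\R/2\pi R\Z}\E\Big[e^{-\frac{1}{2\pi}\langle \dd X_g,\omega\rangle_2-\frac{1}{4\pi}\|(1-\Pi_1)\omega\|_2^2}\,|F(c+\pi^*X_g+I_{x_0}(\omega))|^p\Big]\,\dd c,
\]
recalling that $(1-\Pi_1)\omega$ is exact, so $\|(1-\Pi_1)\omega\|_2=\|\dd f_{\bf k}\|_2$ when $\omega=\omega_{\bf k}$. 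A change of cohomology basis of $\mc{H}^1_R(\Sigma)$ does two things at once: it replaces the homology-dual classes by a unimodular change of $\Z$-basis of the lattice, and it replaces each form by another representative of its class. By Lemma \ref{basisH^1} the $\omega_1,\dots,\omega_{2{\mathfrak{g}}}$ are a $\Z$-basis of $\mc{H}^1_R(\Sigma)$, so for either choice the map ${\bf k}\mapsto[\omega_{\bf k}]$ runs exactly once over every de Rham class in $\mc{H}^1_R(\Sigma)$. Hence the whole statement reduces to the single claim that $A(\omega)$ depends only on the class $[\omega]$, i.e. $A(\omega-\dd\varphi)=A(\omega)$ for every $\varphi\in C^\infty(\Sigma)$ (the form in which Lemma \ref{basisH^1} describes two representatives of the same class); granting this, $\sup_{\bf k}A(\omega_{\bf k})$ equals the supremum of $A$ over the lattice of classes $\mc{H}^1_R(\Sigma)$, which is manifestly independent of the chosen enumeration.

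\emph{The key computation.} To prove class-invariance I would transfer the exact part of $\omega$ onto the Gaussian field. Since $\dd\varphi$ is exact, $I_{x_0}(\omega-\dd\varphi)=I_{x_0}(\omega)-\pi^*(\varphi-\varphi(x_0))$ on $\tilde\Sigma$, whence, writing $\psi:=\varphi-\bar\varphi$ with $\bar\varphi:=\frac{1}{{\rm v}_g(\Sigma)}\int_\Sigma\varphi\,\dd{\rm v}_g$ the mean of $\varphi$,
\[
c+\pi^*X_g+I_{x_0}(\omega-\dd\varphi)=\tilde c+\pi^*(X_g-\psi)+I_{x_0}(\omega),\qquad \tilde c:=c+\varphi(x_0)-\bar\varphi.
\]
Here $X_g-\psi$ is again a zero-mean field and $\tilde c$ a constant, so the right-hand side is precisely the decomposition of the shifted configuration adapted to the \emph{same} representative $\omega$; plugging this identity of points of $H^s_\Gamma(\tilde\Sigma)$ into $F$ needs no explicit formula for $F$. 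I would then apply the Cameron--Martin theorem for $X_g$ (in the paper's normalisation $C=2\pi\Delta_g^{-1}$, so $\|\psi\|_{\mathbf H}^2=\frac{1}{2\pi}\|\dd\psi\|_2^2$ with Paley--Wiener integral $\frac{1}{2\pi}\langle \dd X_g,\dd\psi\rangle_2$) to the substitution $X_g\mapsto X_g-\psi$ inside the expectation. This produces the density $\exp\big(-\frac{1}{2\pi}\langle \dd X_g,\dd\psi\rangle_2-\frac{1}{4\pi}\|\dd\psi\|_2^2\big)$ and, collecting the linear-in-$X_g$ exponents and using $\dd\psi=\dd\varphi$, turns the weight $e^{-\frac{1}{2\pi}\langle \dd X_g,\omega-\dd\varphi\rangle_2}$ back into $e^{-\frac{1}{2\pi}\langle \dd X_g,\omega\rangle_2}$.

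\emph{Cancellation and conclusion.} It then remains to check that the deterministic prefactors match. The Gaussian-weight change produced by Cameron--Martin, combined with the change of $\|(1-\Pi_1)\omega\|_2^2$, is handled using the two orthogonality facts $\Pi_1\dd\varphi=0$ (so $(1-\Pi_1)(\omega-\dd\varphi)=(1-\Pi_1)\omega-\dd\varphi$) and $\langle(1-\Pi_1)\omega,\dd\varphi\rangle_2=\langle\omega,\dd\varphi\rangle_2$; a short expansion shows the three quadratic exponents sum to $0$, so that
\[
A(\omega-\dd\varphi)=\int_{\R/2\pi R\Z}\E\Big[e^{-\frac{1}{2\pi}\langle \dd X_g,\omega\rangle_2-\frac{1}{4\pi}\|(1-\Pi_1)\omega\|_2^2}\,|F(\tilde c+\pi^*X_g+I_{x_0}(\omega))|^p\Big]\,\dd c.
\]
Finally, since the functionals of $\mc{E}_R(\Sigma)$ are $2\pi R$-periodic in the zero mode (which here is $\tilde c$, because $X_g$ has zero mean), the integrand is $2\pi R$-periodic in $\tilde c$, and the deterministic shift $\tilde c=c+\varphi(x_0)-\bar\varphi$ is absorbed by translation invariance of $\dd c$ on $\R/2\pi R\Z$, giving $A(\omega-\dd\varphi)=A(\omega)$. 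I would carry this out first for $F\in\mc{E}_R(\Sigma)$, where all functionals have the explicit form \eqref{polytrigo} and the exponential weight is a pointwise Gaussian exponential (so every step, including integrability and the Cameron--Martin shift, is legitimate); the two basis-dependent norms then coincide on the dense subspace $\mc{E}_R(\Sigma)$, hence define the same completion and one and the same norm on $\mc{L}^{\infty,p}$. The main point to get right is the exact bookkeeping of the three quadratic terms in this last step; the rest is the bijection-of-lattices reduction together with a standard Gaussian shift.
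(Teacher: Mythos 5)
Your proof is correct and follows essentially the same route as the paper: a Girsanov/Cameron--Martin shift by the exact part (the paper phrases this as reducing each $\omega_{\bf k}$ to its harmonic representative $\omega^{\rm h}_{\bf k}$, which is equivalent to your class-invariance $A(\omega-\dd\varphi)=A(\omega)$), combined with absorbing the zero-mode shift by translation invariance of $\dd c$ on $\R/2\pi R\Z$ and re-indexing the supremum over the lattice $\mc{H}^1_R(\Sigma)$ via a unimodular matrix. Your explicit bookkeeping of the three quadratic exponents is exactly the cancellation the paper's normalisation $-\frac{1}{4\pi}\|\dd f_{\bf k}\|_2^2$ is designed to produce, so there is no gap.
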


 \begin{proof}
First we show that this norm does not depend on the choice of cohomology basis dual to $\sigma$. Let $\omega_{\bf k}^{\rm h}:=\Pi_1\omega_{\bf k}$ be the harmonic 1 form with in the same cohomology class as $\omega_{\bf k}$, so that $\omega_{\bf k}=\omega_{\bf k}^{\rm h}+\dd f_{\bf k}$. Using the Girsanov transform and a shift in the $c$-variable we have
 \begin{align*} 
 \int_{\R/2\pi R\Z}\E\Big[&e^{-\frac{1}{2\pi}\langle \dd X_g,\omega_{\bf k} \rangle_2-\frac{1}{4\pi}\|\dd f_{\bf k}\|_2^2}|F(c+\pi^*X_g+I_{x_0}(\omega_{\bf k}))|^p\Big]\,\dd c\\
 =&\int_{\R/2\pi R\Z}\E\Big[e^{-\frac{1}{2\pi}\langle \dd X_g,\omega^{\rm h}_{\bf k} \rangle_2 }|F(c+\pi^*X_g+I_{x_0}(\omega^{\rm h}_{\bf k}))|^p\Big]\,\dd c\\
 =&\int_{\R/2\pi R\Z}\E\Big[ |F(c+\pi^*X_g+I_{x_0}(\omega^{\rm h}_{\bf k}))|^p\Big]\,\dd c
  \end{align*}
  where we have used that $\langle \dd X_g,\omega_{\bf k}^{\rm h} \rangle_2=\langle  X_g,\dd^*\omega_{\bf k}^{\rm h} \rangle_2=0$ because $\omega_{\bf k} ^{\rm h}$ is harmonic. This proves our first claim. Next we study a change of homology basis $\tilde{\sigma}$, and therefore a change of cohomology basis $\tilde{\omega}^{\rm h}_1,\dots,\tilde{\omega}^{\rm h}_{2\mathfrak{g}}$ made of harmonic 1-forms dual to $\tilde{\sigma}$. Then there is $A\in {\rm SL}(2{\mathfrak{g}},\Z)$ such that ${\omega}^{\rm h}_{\bf k}=\tilde{\omega}^{\rm h}_{A\bf k}$, and it is clear that
 \begin{align*} 
  \sup_{\bf k}\Big(\int_{\R/2\pi R\Z}\E\Big[ |F(c+\pi^*X_g+I_{x_0}(\omega^{\rm h}_{\bf k}))|^p\Big]\,\dd c\Big)^{1/p}=& \sup_{\bf k}\Big(\int_{\R/2\pi R\Z}\E\Big[ |F(c+\pi^*X_g+I_{x_0}(\omega^{\rm h}_{A\bf k}))|^p\Big]\,\dd c\Big)^{1/p}\\
  =& \sup_{\bf k}\Big(\int_{\R/2\pi R\Z}\E\Big[ |F(c+\pi^*X_g+I_{x_0}(\tilde{\omega}^{\rm h}_{\bf k}))|^p\Big]\,\dd c\Big)^{1/p}.
   \end{align*}
  Hence our claim.
 \end{proof}
 %We denote by $L^p_{\rm per}(H^s(\Sigma,g))$ the closure of $\mc{E}_R(\Sigma,g)$ with respect to the measure $\dd c\otimes \P$ on $[0,R]\times \Omega$.

 \medskip

On closed surfaces  $\Sigma$, we will denote the Liouville field by 
\[\phi_g:=c+X_g+I^{\boldsymbol{\sigma}}_{x_0}(\omega_{\bf k}).\] 
This field belongs to $H^s(\Sigma_{\boldsymbol{\sigma}})$ but can also be considered as an element in $H^{s}_{\Gamma}(\tilde{\Sigma})$. Indeed, recall from Section \ref{sub:fund} that $I^{\boldsymbol{\sigma}}_{x_0}(\omega_{\bf k})$ is a smooth function on $\Sigma_{\boldsymbol{\sigma}}$ such that 
there is an open set $U^{\boldsymbol{\sigma}}_{x_0}\subset \tilde{\Sigma}$ containing $\tilde{x}_0$ for which 
$I_{x_0}(\omega_{\bf k})|_{U^\sigma_{x_0}}=\pi^*I_{x_0}^{\boldsymbol{\sigma}}(\omega_{{\bf k}})$ and $\pi:U_{x_0}^{\boldsymbol{\sigma}} \to \Sigma_{\boldsymbol{\sigma}}$ a surjective local diffeomorphism. 
This means that the lift $\pi^*\phi_g|_{U^{\boldsymbol{\sigma}}_{x_0}}$ has a unique  extension to $\tilde{\Sigma}$ 
as an equivariant element in $H^{s}_{\Gamma}(\tilde{\Sigma})$, and we shall therefore freely  identify $\phi_g$ with this extension when considering $F(\phi_g)$ 
with $F$ defined on $H^{s}_{\Gamma}(\tilde{\Sigma})$.
 
The Liouville field is thus a function of the zero mode   $c\in \R/ 2\pi R\Z$, the free field $X_g$,  ${\bf k}\in\Z^{2\mathfrak{g}}$,  the base point $x_0$, the canonical basis $\sigma$  and the choice of cohomology basis $\omega_1,\dots,\omega_{2{\mathfrak{g}}}$.

\begin{definition}[Path integral] 
We consider the path integral, defined for all $F\in  \mc{E}_R(\Sigma),$  
\begin{equation}\label{defPI}
   \langle F\rangle_{\Sigma,g } 
   :=
 \big(\frac{{\rm v}_{g}(\Sigma)}{{\det}'(\Delta_{g})}\big)^\hf\sum_{{\bf k}\in \Z^{2\mathfrak{g}}}e^{-\frac{1}{4\pi}\|\omega _{\bf k}\|_2^2}\int_{\R/2\pi R\Z}\E\Big[e^{-\frac{1}{2\pi}\langle \dd X_g,\omega_{\bf k} \rangle_2}F(\phi_g)e^{-\frac{i   Q}{4\pi}\int_{\Sigma_{\boldsymbol{\sigma}}}^{\rm reg} K_g\phi_g\,\dd v_g -\mu  M^g_\beta(\phi_g,\Sigma)}\Big]\,\dd c
\end{equation}
where the curvature term is defined following \eqref{def_reg_integtral}, namely
\begin{equation}
\int_\Sigma^{\rm reg} K_g\phi_g\, {\rm dv}_g :=\int_\Sigma   (c+X_g)K_g\,{\rm dv}_g +\int_{\Sigma_{\boldsymbol{\sigma}}}^{\rm reg}I^{\boldsymbol{\sigma}}_{x_0}(\omega_{\bf k})K_g \,{\rm dv}_g 
\end{equation}
\end{definition}
Note   that  Lemma \ref{phidescend} entails that  
the potential term $M^g_\beta(\phi_g,\Sigma)$ is well defined since the (regularized) integrand  descends to a function on $\Sigma$.
Note also that the above definition a priori depends on the marking $\sigma$ (i.e. the basis of $\mc{H}_1(\Sigma)$), the choice of closed forms representing a basis of cohomology (used to define the $(\omega_{\bf k})_{\bf k}$) as well as the base point $x_0$. We will show that actually  it does not and this is why we dropped all of these dependences from the notations. 

\begin{proposition}\label{propdefpath} 
The path integral \eqref{defPI} satisfies the following basic properties:
\begin{enumerate}
\item the quantity $ \langle F\rangle_{\Sigma,g }$  is well defined and finite for $F\in  \mc{E}_R(\Sigma)$, and extends to $F\in  \mc{L}^{\infty,p}(H^s(\Sigma))$ for $p>1$. 
\item the quantity $ \langle F\rangle_{\Sigma,g }$    depends neither on the base point $x_0\in \Sigma$,  nor on the choice of the homology basis $\sigma$, nor on the closed forms representing the cohomology basis dual to $\sigma$.
\end{enumerate}
\end{proposition}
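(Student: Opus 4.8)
The plan is to prove the two assertions of Proposition \ref{propdefpath} in sequence, beginning with well-definedness (1) and then using the machinery of Section \ref{sub:fund} and Section \ref{UnivCover} to establish the independence statements (2).

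\textbf{Step 1: Finiteness for $F\in\mc{E}_R(\Sigma)$.} First I would fix ${\bf k}$ and analyze the integrand of \eqref{defPI} for a single summand. The curvature term $\int^{\rm reg}_{\Sigma_{\boldsymbol{\sigma}}}K_g\phi_g\,{\rm dv}_g$ is, by Definition \ref{curvature_integral}, a finite real-linear expression in $c$, in $\langle X_g,K_g\rangle$, and in the (deterministic) regularized integral of $I^{\boldsymbol{\sigma}}_{x_0}(\omega_{\bf k})$; since $K_g\in C^\infty(\Sigma)$ and $\int_\Sigma K_g\,{\rm dv}_g=4\pi\chi(\Sigma)$ by \eqref{GB}, the factor $e^{-\frac{iQ}{4\pi}\int^{\rm reg}K_g\phi_g}$ has modulus $e^{-\frac{Q}{4\pi}\langle\Im X_g,K_g\rangle\cdots}$ controlled by exponential moments of the Gaussian $\langle X_g,K_g\rangle_2$, which are finite. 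The genuinely delicate factor is the GMC potential $e^{-\mu M^g_\beta(\phi_g,\Sigma)}$: here I would invoke Proposition \ref{expmoment}, but since that result is stated for the Dirichlet GFF (positive heat kernel), I would first apply the Markov decomposition $X_g=Y_1+Y_2+P-c_g$ from Proposition \ref{decompGFF}/\ref{equiv} to reduce the closed-surface GMC to a Dirichlet GMC plus an independent smooth-enough field $Z=P-c_g+I^{\boldsymbol{\sigma}}_{x_0}(\omega_{\bf k})$. Then $f(x)=e^{-\mu(\text{smooth})}e^{i\beta I^{\boldsymbol{\sigma}}_{x_0}(\omega_{\bf k})(x)}$ is bounded (using $\beta\in\frac1R\Z$ so that the winding factor is a genuine function on $\Sigma$ by Lemma \ref{phidescend}), and Proposition \ref{expmoment} gives a bound of the form $e^{C|\mu|v}(1+C|\mu|u\,e^{C\mu^2u^2})$ with $u,v$ finite and independent of ${\bf k}$. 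Combined with Cauchy–Schwarz applied to $F(\phi_g)e^{-\frac{1}{2\pi}\langle\dd X_g,\omega_{\bf k}\rangle_2}$ against the GMC factor, this controls each summand.

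\textbf{Step 2: Summability over ${\bf k}$.} The key is the Gaussian weight $e^{-\frac{1}{4\pi}\|\omega_{\bf k}\|_2^2}$. Since $\omega_{\bf k}=\sum_j k_j\omega_j$ with the $\omega_j$ a fixed basis, $\|\omega_{\bf k}\|_2^2$ is a positive-definite quadratic form in ${\bf k}\in\Z^{2{\mathfrak{g}}}$, giving Gaussian decay $e^{-c|{\bf k}|^2}$. I must check that the expectation factor grows at most polynomially (or subgaussianly) in ${\bf k}$. The Girsanov shift $e^{-\frac{1}{2\pi}\langle\dd X_g,\omega_{\bf k}\rangle_2}$ recentres $X_g$, and after the computation in Lemma \ref{invariance_of_norm} (replacing $\omega_{\bf k}$ by its harmonic representative $\omega^{\rm h}_{\bf k}$) the exponential tilt disappears entirely, leaving $\E[|F(c+X_g+I_{x_0}(\omega^{\rm h}_{\bf k}))|^p\,e^{-\mu M^g_\beta(\cdots)}]$. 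The remaining ${\bf k}$-dependence sits in the bounded winding factor $e^{i\beta I_{x_0}(\omega^{\rm h}_{\bf k})}$ inside the GMC and in the deterministic regularized curvature integral, both of which contribute factors of modulus $1$ or of at most polynomial size; thus the series converges absolutely. The extension from $\mc{E}_R(\Sigma)$ to $\mc{L}^{\infty,p}(H^s(\Sigma))$ for $p>1$ then follows because $\langle F\rangle_{\Sigma,g}$ is, by Hölder against the (uniformly-in-${\bf k}$ bounded) GMC moment, a bounded linear functional with respect to the norm $\|\cdot\|_{\mc{L}^{\infty,p}}$ of \eqref{defnorm}, so it passes to the completion.

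\textbf{Step 3: Independence of the choices.} Independence of the cohomology representatives $\omega_1,\dots,\omega_{2{\mathfrak{g}}}$ (for fixed $\sigma$) is exactly the Girsanov-plus-shift computation already carried out in Lemma \ref{invariance_of_norm}: replacing $\omega_j$ by $\omega_j-\dd\varphi_j$ (Lemma \ref{basisH^1}) changes $\phi_g$ by $-\pi^*\varphi_{\bf k}$, and the combination of the Gaussian weight $e^{-\frac14\pi\|\omega_{\bf k}\|^2}$ with the tilt $e^{-\frac1{2\pi}\langle\dd X_g,\omega_{\bf k}\rangle_2}$ and the conformal-type variation of the regularized curvature term (Lemma \ref{conformal_change_reg_int} gives the matching $\langle\dd\rho,\omega\rangle_2$ correction) absorbs this exactly; I would verify that the curvature counterterm and the $\|\cdot\|_2^2$-weight transform consistently so every summand is unchanged. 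For independence of the homology marking $\sigma$, a change of geometric symplectic basis is implemented by $A\in{\rm Sp}(2{\mathfrak{g}},\Z)$, under which ${\bf k}\mapsto A{\bf k}$ is a bijection of $\Z^{2{\mathfrak{g}}}$ permuting the summands, and Lemma \ref{independence_basis} guarantees the regularized curvature integral changes only by an element of $8\pi^2R\Z$ — but since $Q\in\frac1R\Z$, the prefactor $e^{-\frac{iQ}{4\pi}\cdot 8\pi^2 R\Z}=e^{-2\pi i QR\Z}=1$, so the integrand is genuinely invariant; summing over the reindexed ${\bf k}$ recovers the same total. Finally, independence of the base point $x_0$ follows because changing $x_0$ shifts $I^{\boldsymbol{\sigma}}_{x_0}(\omega_{\bf k})$ by an additive constant $\int_{x_0'}^{x_0}\omega_{\bf k}$, which is absorbed into the zero-mode $c$-integral over $\R/2\pi R\Z$ by translation invariance of $\dd c$ (the electric and winding factors being $2\pi R$-periodic in $c$ by construction of $\mc{E}_R(\Sigma)$).

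\textbf{Main obstacle.} I expect the hardest point to be Step 2, specifically obtaining a bound on the GMC exponential moment that is \emph{uniform in ${\bf k}$} despite the oscillating winding factor $e^{i\beta I_{x_0}(\omega^{\rm h}_{\bf k})}$: although this factor has modulus one, the quantity $u^2$ in Proposition \ref{expmoment} involves $|f|=|e^{-\mu(\cdots)}|$ which is ${\bf k}$-independent in modulus, so in fact the uniformity should come for free once the reduction to harmonic representatives kills the Girsanov tilt — the subtlety is ensuring the deterministic curvature prefactor does not grow faster than the Gaussian weight $e^{-c|{\bf k}|^2}$ decays, which requires checking that $|\int^{\rm reg}_{\Sigma_{\boldsymbol{\sigma}}}I^{\boldsymbol{\sigma}}_{x_0}(\omega^{\rm h}_{\bf k})K_g\,{\rm dv}_g|$ grows at most linearly in $|{\bf k}|$ (it is linear in $\omega_{\bf k}$, hence in ${\bf k}$), so the Gaussian decay dominates and the series converges.
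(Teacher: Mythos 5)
Your proposal is correct and follows essentially the same route as the paper's proof: uniform-in-${\bf k}$ exponential moments for the imaginary GMC via the Markov decomposition of Proposition \ref{decompGFF} combined with Proposition \ref{expmoment} (with $|f|=1$), H\"older plus Girsanov to reduce the tilt to harmonic representatives so that the surviving weight $e^{-\frac{1}{4\pi}\|\omega^{\rm h}_{\bf k}\|_2^2}$ yields summability and the extension to $\mc{L}^{\infty,p}$, and for (2) a translation in $c$ for the base point, Girsanov plus a $c$-shift for the change of cohomology representatives, and the reindexing ${\bf k}\mapsto A{\bf k}$ together with Lemma \ref{independence_basis} and $QR\in\Z$ for the change of homology basis. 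Two cosmetic slips, neither affecting validity: since $\phi_g$ is real-valued the curvature factor has modulus exactly $1$ (so no exponential-moment control or linear-growth check is needed, and your ``main obstacle'' is moot), and for the cohomology-representative step the relevant mechanism is the Girsanov shift of $X_g$ cancelling $f_{\bf k}$ inside the $\int K_g X_g$ term up to a constant absorbed in $c$ --- Lemma \ref{conformal_change_reg_int}, which concerns conformal changes of metric, is not the lemma doing that work.
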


\begin{proof}
We first prove (1). For this  we observe that $\E\Big[\Big|\exp\Big(-\mu  M^g_\beta(c+X_g+I^{\boldsymbol{\sigma}}_{x_0}(\omega_{\bf k}),\Sigma)\Big)\Big|\Big]\leq C$ for some constant $C$ depending only on $\Sigma$ and  $\beta$ (and thus not on ${\bf k}$). To see this, we want to use Proposition \ref{expmoment}. Let us consider an analytic closed simple curve $\mc{C}$ disconnecting the surface $\Sigma$ into two connected components $\Sigma_1$ and $\Sigma_2$ (for example $\mc{C}$ bound a small disk). Now we use the Markov decomposition in Proposition \ref{decompGFF} (item 2) to write the GFF as the sum
$$X_g=X_1+X_2+P-c_g$$ where $X_1,X_2,P$ are independent Gaussian processes, $X_{1},X_2$ are Dirichlet GFF respectively on $\Sigma_1,\Sigma_2$,   $P$ is the harmonic extension of the boundary values $X_{g}|_{\mc{C}}$ (which we also write  $X_{\mc{C}}$) and $c_g$ is a Gaussian random variable. Conditioning on the values $X_{\mc{C}}$, we can then bound our expectation as
\begin{align*}
\E\Big[ \Big|&\exp\Big(-\mu  M^g_\beta(c+X_g+I^{\boldsymbol{\sigma}}_{x_0}(\omega_{\bf k}),\Sigma)\Big)\Big|\Big]
\\
\leq &\E\Big[  \E\Big[\prod_{j=1,2} \Big| \exp\Big(-\mu  M^g_\beta(c-c_g+X_j+P+I^{\boldsymbol{\sigma}}_{x_0}(\omega_{\bf k}),\Sigma_j)\Big)\Big|\,|\,X_{\mc{C}}\Big]               \Big]
\\
\leq &\E\Big[ \prod_{j=1,2} \E\Big[  \exp\Big(  \Big| \mu  M^g_\beta(c+X_j+P+I^{\boldsymbol{\sigma}}_{x_0}(\omega_{\bf k}),\Sigma_j) \Big| \Big)\,|\,X_{\mc{C}}\Big]               \Big].
\end{align*}
Proposition \ref{expmoment} (applied with $Z=c$ and $f=e^{i\beta(P+I^{\boldsymbol{\sigma}}_{x_0}(\omega_{\bf k}))}$) then ensures that the following estimate holds true for the conditional expectation given $X_{\mc{C}}$
$$\E\Big[   \exp\Big( \big| \mu  M^g_\beta(c+X_j+P+I^{\boldsymbol{\sigma}}_{x_0}(\omega_{\bf k}),\Sigma_j) \big|\Big) |\,X_{\mc{C}}\Big]  \leq  e^{C|\mu| v}(1+C|\mu| ue^{C\mu^2u^2})$$
with
$$  u^2:=  \iint_{\Sigma_j^2}|f(x)||f(y)|e^{\beta^2G_{g,D}(x,y)}{\rm v}_g(\dd x){\rm v}_g(\dd y)   ,\quad \text{ and }\quad  v:=\int_{\Sigma_j} |f(x)|{\rm v}_g(\dd x),$$
for some constant $C$ only depending on $\beta$. Since $|f(x)|=1$  the above conditional expectation is uniformly bounded by a deterministic constant (independent of $c,{\bf k}$). We deduce
$$\E\Big[ \Big| \exp\Big(-\mu  M^g_\beta(c+X_g+I^{\boldsymbol{\sigma}}_{x_0}(\omega_{\bf k}),\Sigma)\Big)\Big|\Big]<+\infty.$$

As a consequence,   our claim (1) follows easily: indeed,  the term  $e^{-\frac{i   Q}{4\pi}\int_{\Sigma_{\boldsymbol{\sigma}}}^{\rm reg} K_g\phi_g\,\dd {\rm v}_g}$ has absolute value bounded by $1$. Using H\"older, the integrand in \eqref{defPI}   is thus bounded by
\begin{multline*}
  \big(\frac{{\rm v}_{g}(\Sigma)}{{\det}'(\Delta_{g})}\big)^\hf  \sum_{{\bf k}\in \Z^{2\mathfrak{g}}}e^{-\frac{1}{4\pi}\|\omega _{\bf k}\|_2^2} \Big(\int_{\R/2\pi R\Z}\E\Big[e^{-\frac{1}{2\pi}\langle \dd X_g,\omega_{\bf k} \rangle_2}|F(\phi_g)|^{p_1}\Big]\dd c \Big)^{1/p_1}\\
   \Big(\int_{\R/2\pi R\Z}\E\Big[e^{-\frac{1}{2\pi}\langle \dd X_g,\omega_{\bf k} \rangle_2}|e^{  -\mu    M^g_\beta(\phi_g,\Sigma)}|^{p_2}\Big]\dd c \Big)^{1/p_2}   
\end{multline*}
for some $p_1,p_2>1$ with $\frac{1}{p_1}+\frac{1}{p_2} =1$. Using \eqref{defnorm}
\begin{align*}
   \big(\frac{{\rm v}_{g}(\Sigma)}{{\det}'(\Delta_{g})}\big)^\hf  \sum_{{\bf k}\in \Z^{2\mathfrak{g}}}e^{-\frac{1}{4\pi}\|\omega _{\bf k}\|_2^2}
  \|F\|_{\mc{L}^{\infty,p}}e^{\frac{1}{4\pi p_1}\|(1-\Pi_1)\omega_{\bf k}  \|_2^2} \Big(\int_{\R/2\pi R\Z}\E\Big[e^{-\frac{1}{2\pi}\langle \dd X_g,\omega_{\bf k} \rangle_2}|e^{  -\mu   M^g_\beta(\phi_g,\Sigma)}|^{p_2}\Big]\dd c \Big)^{1/p_2}.
   \end{align*}

 We can apply the Girsanov transform to the last expectation above to get (by Lemma \ref{dX_gomega})
\begin{align*}
 \Big(\int_{\R/2\pi R\Z}\E\Big[&e^{-\frac{1}{2\pi}\langle \dd X_g,\omega_{\bf k} \rangle_2}|e^{  -\mu   M^g_\beta(\phi_g,\Sigma)}|^{p_2}\Big]\dd c \Big)^{1/p_2} \\
  \leq & e^{\frac{1}{4\pi p_2}\|(1-\Pi_1)\omega_{\bf k}  \|_2^2} \Big(\int_{\R/2\pi R\Z}\E\Big[ |e^{  -\mu   M^g_\beta(c+X_g+I_{x_0}^\sigma (\omega_{\bf k}^{\rm h}),\Sigma)}|^{p_2}\Big]\dd c \Big)^{1/p_2}  .
\end{align*}  
The last expectation is bounded by some constant $C$ independent of ${\bf k}$ as shown above. Summarizing,
\begin{align*}
  \langle F\rangle_{\Sigma,g } \leq  & C \big(\frac{{\rm v}_{g}(\Sigma)}{{\det}'(\Delta_{g})}\big)^\hf \sum_{{\bf k}\in \Z^{2\mathfrak{g}}}e^{-\frac{1}{4\pi}\|\omega _{\bf k}\|_2^2} \|F\|_{\mc{L}^{\infty,p}}e^{\frac{1}{4\pi  }\|(1-\Pi_1)\omega_{\bf k}  \|_2^2}
  \\
  =&
  C \big(\frac{{\rm v}_{g}(\Sigma)}{{\det}'(\Delta_{g})}\big)^\hf  \|F\|_{\mc{L}^{\infty,p}}\sum_{{\bf k}\in \Z^{2\mathfrak{g}}}e^{-\frac{1}{4\pi}\|\omega _{\bf k}^{\rm h}\|_2^2} .
  \end{align*}  
We conclude about integrability and well posedness of the path integral, as well as to its extension to $\mc{L}^{\infty,p}(H^s(\Sigma))$.

For (2), observe that changing the base point $x_0$ amounts to shifting the zero mode $c$ by some constant, and this is absorbed by  a change of variables in the $\dd c$-integral, since the integrand is periodic in $c$ (the assumption $F\in \mc{L}^{\infty,p}(\Sigma,g)$, hence periodic in $c$, is crucial). Next we show that the path integral is invariant under  change of cohomology basis (even if not dual to $\sigma$). Let $\hat{\omega}_{j} $, for $j=1,\dots, 2\mathfrak{g}$,  be another basis of cohomology.  For  ${\bf k}\in \Z^{2\mathfrak{g}}$, we set  $\hat{\omega}_{\bf k} :=\sum_{j=1}^{2\mathfrak{g}}k_j\hat{\omega}_{j} $. Then there is $A\in {\rm GL}_{2\mathfrak{g}}(\Z)$ such that $\omega _{\bf k}=\hat{\omega}_{A\bf k} +\dd f_{A\bf k}$ for all ${\bf k}$ and for some exact form $\dd f_{A\bf k} $ (see Lemma \ref{basisH^1}).
  We can then replace $\omega _{\bf k}$ by $\hat{\omega}_{A\bf k} +\dd f_{A\bf k}$ in the expression for the path integral. By making a change of variables in the summation over ${\bf k}$, we can get rid of the change of basis matrix $A$, i.e. we get
  \begin{equation} 
   \langle F\rangle_{\Sigma,g } 
   =
 \big(\frac{{\rm v}_{g}(\Sigma)}{{\det}'(\Delta_{g})}\big)^\hf\sum_{{\bf k}\in \Z^{2\mathfrak{g}}}e^{-\frac{1}{4\pi}\|\hat{\omega}_{\bf k}+\dd f_{\bf k}\|_2^2}\int_{\R/2\pi R\Z}\E\Big[e^{-\frac{1}{2\pi}\langle \dd X_g,\hat{\omega}_{\bf k}+\dd f_{\bf k}\rangle_2}F(\phi_g)e^{ -\frac{i   Q}{4\pi}\int_{\Sigma_{\boldsymbol{\sigma}}}^{\rm reg} K_g \phi_g\,\dd v_g -\mu  M^g_\beta(\phi_g,\Sigma)}\Big]\,\dd c
\end{equation}
where the Liouville field is now $\phi_g=c+X_g+I_{x_0}^\sigma(\hat{\omega}_{\bf k})+f_{\bf k}(x)-f_{\bf k}(x_0)$. Next we apply the Girsanov transform to the term $e^{-\frac{1}{2\pi}\langle \dd X_g,\dd f_{\bf k}\rangle_2}$. It produces a variance type term $e^{\frac{1}{4\pi}\| \dd f_{\bf k}\|_2^2}$ and it shifts the law of the GFF as $X_g\to X_g-(f_{\bf k}-m_g(f_{\bf k}))$ where $m_g(f_{\bf k}):=\frac{1}{{\rm v}_g(\Sigma)}\int_\Sigma f_{\bf k} \,\dd {\rm v}_g$. We can then shift the $c$-integral to absorb the constant $m_g(f_{\bf k})-f_{\bf k}(x_0)$.  Combining with the norm term in front of the expectation, we get the result.  

Finally if we consider another  basis of homology $\sigma'$.  Let $\hat{\omega}_{j} $, for $j=1,\dots, 2\mathfrak{g}$,  be another basis of cohomology, dual to $\sigma'$.  For  ${\bf k}\in \Z^{2\mathfrak{g}}$, we set  $\hat{\omega}_{\bf k} :=\sum_{j=1}^{2\mathfrak{g}}k_j\hat{\omega}_{j} $. Then there is $A\in {\rm GL}_{2\mathfrak{g}}(\Z)$ such that $\omega _{\bf k}=\hat{\omega}_{A\bf k} +\dd f_{A\bf k}$ for all ${\bf k}$ and for some exact form $\dd f_{A\bf k} $ (see Lemma \ref{basisH^1}). Then, the previous result tells us that we can replace, in the path integral associated to $\sigma$ and the $\omega_{\bf k}$'s, the closed 1-forms ${\omega}_{\bf k}$'s by the $\hat{\omega}_{\bf k}$'s. Next we want to change the homology basis. Only three terms depend now on $\sigma$: $F(\phi_g)$, the curvature term and the potential term. Note  that $e^{i\frac{1}{R}I^{\boldsymbol{\sigma}}_{x_0}(\hat{\omega}_{\bf k})}=  e^{i\frac{1}{R} I_{x_0}^{\sigma'}(\hat{\omega}_{\bf k}) }$. Also Lemma \ref{independence_basis} shows that $e^{ -\frac{i   Q}{4\pi}\int_{\Sigma_{\boldsymbol{\sigma}}}^{\rm reg} K_g \phi_g\,\dd v_g}=e^{ -\frac{i   Q}{4\pi}\int_{\Sigma_{\sigma'}}^{\rm reg} K_g \phi_g\,\dd v_g}$ because $Q\in \frac{1}{R}\Z$.  Hence we are done.
\end{proof}

\begin{remark}
The invariance under change of cohomology basis in  Proposition \ref{propdefpath} is quite intuitive from the path integral perspective. Indeed, note that formally, $\|\hat{\omega}_{\bf k} \|_2^2 +2\langle \dd X_g,\hat{\omega}_{\bf k} \rangle_2=\|\hat{\omega}_{\bf k} +\dd X_g\|^2-\|\dd X_g\|^2$. Next, the GFF expectation can be understood as a path integral $e^{-\frac{1}{4\pi}\|\dd X_g\|_2^2}DX_g$ so that, all in all, the path integral can be understood as $e^{-\frac{1}{4\pi}\|d\phi _g\|_2^2}D\phi_g$, namely a "Gaussian" measure over closed 1-forms.
\end{remark}
   
%%%%%%%%%%%%%%%%%%%%%%%%%%%%%%%%%%%%%%%%%%%%%%%%%%%%%%%%%%%%%%%%%%%%%%%%%%%%%%%%%%%%%%%%%%%%%%%%%%%%%%%%%%%%%%%%%%%%%%%%%%%%%%%%%%%%%%%%%%%%%%%%%%%%%%%%%%%%%%%%%%%%%%%%%%%%%%%%%%%%%%%%%%%%%%%%%%%%%%%%%%%%%%%%%%%%%%%%%%%%%%%%%%%%%%%%%%%%%%%%%%%%%%%%%%%%%%%%%%%%%%%%%%%%%%%%%%%%%%%%%%%%%%%%%%%%%%%%%%%%   
\subsection{Correlation functions: electric and magnetic operators}\label{sec:correls}
%%%%%%%%%%%%%%%%%%%%%%%%%%%%%%%%%%%%%%%%%%%%%%%%%%%%%%%%%%%%%%%%%%%%%%%%%%%%%%%%%%%%%%%%%%%%%%%%%%%%%%%%%%%%%%%%%%%%%%%%%%%%%%%%%%%%%%%%%%%%%%%%%%%%%%%%%%%%%%%%%%%%%%%%%%%%%%%%%%%%
In this section, we introduce the correlation functions for all the operators we need in this theory. They are of two types, electric or magnetic, and we will construct each of them in the next two subsections. Finally we will construct mixed electric-magnetic operators by combining the two constructions.

\subsubsection{Magnetic operators}
%%%%%%%%%%%%%%%%%%%%%%%%%%%%%%

Let $z_1,\dots, z_{n_\mathfrak{m}}$ be distinct points on a closed Riemann surface $\Sigma$. For each such a point $z_j$ we   assign a unit tangent vector $v_j\in T_{z_j}\Sigma$ and  a magnetic charge $m_j\in\Z$. We collect those datas in $\mathbf{z}=(z_1,\dots,z_{n_\mathfrak{m}}) \in\Sigma^{n_\mathfrak{m}}$, 
$\mathbf{v}=((z_1,v_1),\dots,(z_{n_\mathfrak{m}},v_{n_\mathfrak{m}}))\in (T\Sigma)^{n_\mathfrak{m}}$ and $\mathbf{m}\in\Z^{n_\mathfrak{m}}$. We assume that $\sum_{j=1}^{n_\mathfrak{m}} m_j=0$.
We wish to insert on $\Sigma$ magnetic defects at the $z_j$'s so that the  field $\phi_g(z)$ is multivalued and gains a factor $ 2\pi m_j  R$  when $z$ turns once around a small circle around $z_j$ (and not the other $z_{j'}$'s). As before, we choose a set of datas given by Assumption \ref{ass}. We assume that the geometric symplectic basis $\sigma$ as well as the base point $x_0$ are distinct from $\mathbf{z}$, in particular $\mathbf{z}\subset \Sigma_{\boldsymbol{\sigma}}$ (recall \eqref{sigma^2}).

The structure of the magnetic operators  relies on the construction of the harmonic 1-forms   of Proposition \ref{harmpoles}. Consider the harmonic 1-form $\nu^{\rm h}_{\mathbf{z},\mathbf{m}}$ with windings $2\pi Rm_j$ around the point $z_j$ given by this Proposition\footnote{Being harmonic is not necessary, we could choose closed 1-forms instead, according to the same proposition.}.

We  define the Liouville field 
 \begin{equation}\label{eq:liouvmag}
 \phi_g:=c+X_g+I^{\boldsymbol{\sigma}}_{x_0}( \omega_{\mathbf{k}})+I^{ \boldsymbol{\xi}}_{x_0}( \nu^{\rm h}_{\mathbf{z},\mathbf{m}}).
\end{equation}
As explained in Sections \ref{sub:fund} and Section \ref{curvature_mp}, this field belongs to $H^{s}(\Sigma\setminus \{\boldsymbol{\sigma}\cup \boldsymbol{\xi}\})$ but can alternatively be viewed as an element in $H^s_\Gamma (\tilde{\Sigma}_{\bf z})$ 
as $I^{\boldsymbol{\sigma}}_{x_0}( \omega_{\mathbf{k}})+I^{ \boldsymbol{\xi}}_{x_0}( \nu^{\rm h}_{\mathbf{z},\mathbf{m}})$ has a lift to a fundamental domain of $\pi_1(\Sigma_{\bf z},x_0)$ in $\tilde{\Sigma}_{{\bf z}}$ given by $I_{x_0}( \omega_{\mathbf{k}})+I_{x_0}( \nu^{\rm h}_{\mathbf{z},\mathbf{m}})$. 
 Recall that each $u\in H^s_\Gamma (\tilde{\Sigma}_{\bf z})$ decomposes uniquely as 
 $u=\pi^*f+I_{x_0}(\omega_{\bf k})+I_{x_0}(\nu^{\rm h}_{{\bf z}, {\bf m}})$ for some $f\in H^s(\Sigma)$, $({\bf k},{\bf m})\in \Z^{2{\mathfrak{g}}+n_{\mathfrak{m}}}$. We also write $f_0={\rm v}_g(\Sigma)^{-1}\int_{\Sigma} f{\rm dv}_g$ as in \eqref{f_Fourier}.
 Let us then consider the space $\mc{E}^{\rm m}_R(\Sigma)$ of functionals $F$, defined on $H^s_{\Gamma}(\tilde\Sigma_{\bf z})$, of the form
\begin{equation}\label{polytrigo}
F(u)=\sum_{n=-N}^Ne^{\frac{i}{R} n f_0}P_n(f-f_0)  G( e^{\frac{i}{R} I_{x_0}(\omega_{\bf k})} ) G'( e^{\frac{i}{R} I_{x_0}( \nu^{\rm h}_{\mathbf{z},\mathbf{m}})} ) 
\end{equation}
 for arbitrary $N\in\N$, where $P_n$ are polynomials of the form $P_n\big (\cjg f-f_0,g_1\cjd,\dots,\cjg f-f_0,g_{m_n}\cjd\big)$ where $g_1,\dots,g_{m_n}$  belong to $H^{-s}(\Sigma)$, and $G,G'$ is   continuous and bounded on $C(\Sigma;\S^1)$. Notice in particular that these functionals are $2\pi R$-periodic in the zero mode $f_0$. Next we define   the space  for ${\bf m}$ fixed:
 \begin{itemize}
\item $\mc{L}^{\infty,p}_m(H^s(\Sigma))$ as the closure of $\mc{E}^m_R(\Sigma,g)$ with respect to the norm defined by
$$\|F\|_{\mc{L}^{\infty,p}_m}:=\sup_{\bf k}\Big(\int_{\R/2\pi R\Z}\E\Big[e^{-\frac{1}{2\pi}\langle \dd X_g,\omega_{\bf k} \rangle_2-\frac{1}{4\pi}\|\dd f_{\bf k}\|_2^2}|F(c+\pi^*X_g+I_{x_0}(\omega_{\bf k})+I_{x_0}( \nu^{\rm h}_{\mathbf{z},\mathbf{m}}))|^p\Big]\,\dd c\Big)^{1/p}$$
where $(1-\Pi_1)\omega_{\bf k}=\dd f_{\bf k}$ with $\Pi_1$ is the projection on harmonic forms (recall Lemma \ref{dX_gomega}).
\end{itemize}

\begin{lemma}\label{invnormm} The norm $\|F\|_{\mc{L}^{\infty,p}_m}$ does not depend on the choice of cohomology basis.
%nor on the defect graph $\boldsymbol{\xi}$.
 \end{lemma}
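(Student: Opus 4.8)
The plan is to follow the proof of Lemma \ref{invariance_of_norm} essentially verbatim, the only new feature being the presence of the fixed magnetic shift $I_{x_0}(\nu^{\rm h}_{\mathbf{z},\mathbf{m}})$ inside the argument of $F$. Since the charge vector $\mathbf{m}$ is held fixed throughout (we are working with $\mc{L}^{\infty,p}_m$), this term is a deterministic function that rides along inertly under every operation below, so the estimates reduce to the electric case.

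First I would show independence with respect to the choice of cohomology basis dual to a fixed homology marking $\boldsymbol{\sigma}$. Writing $\omega_{\bf k}=\omega^{\rm h}_{\bf k}+\dd f_{\bf k}$ with $\omega^{\rm h}_{\bf k}=\Pi_1\omega_{\bf k}$ the harmonic representative, I apply the Girsanov transform in the direction $\dd f_{\bf k}$, whose variance contribution is exactly cancelled by the prefactor $e^{-\frac{1}{4\pi}\|\dd f_{\bf k}\|_2^2}$, and absorb the resulting constant shift into a translation of the $c$-integral (legitimate by the $2\pi R$-periodicity of $F$ in the zero mode). The Girsanov shift acts only on the Gaussian field $X_g$, leaving $I_{x_0}(\nu^{\rm h}_{\mathbf{z},\mathbf{m}})$ unchanged, so that
\[
\int_{\R/2\pi R\Z}\E\Big[e^{-\frac{1}{2\pi}\langle \dd X_g,\omega_{\bf k}\rangle_2-\frac{1}{4\pi}\|\dd f_{\bf k}\|_2^2}|F(c+\pi^*X_g+I_{x_0}(\omega_{\bf k})+I_{x_0}(\nu^{\rm h}_{\mathbf{z},\mathbf{m}}))|^p\Big]\dd c
\]
equals the same integral with $\omega_{\bf k}$ replaced by $\omega^{\rm h}_{\bf k}$ everywhere and the prefactor removed. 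Using $\langle \dd X_g,\omega^{\rm h}_{\bf k}\rangle_2=\langle X_g,\dd^*\omega^{\rm h}_{\bf k}\rangle_2=0$ (co-closedness of the harmonic form), the remaining weight is trivial, so the expression depends only on the harmonic forms $\omega^{\rm h}_{\bf k}$ and not on the chosen representatives.

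Second I would treat a change of homology basis $\tilde{\boldsymbol{\sigma}}$, giving a dual basis $\tilde\omega^{\rm h}_1,\dots,\tilde\omega^{\rm h}_{2{\mathfrak{g}}}$ of harmonic forms. There is $A\in{\rm SL}(2{\mathfrak{g}},\Z)$ with $\omega^{\rm h}_{\bf k}=\tilde\omega^{\rm h}_{A\bf k}$, and since $A$ acts bijectively on $\Z^{2{\mathfrak{g}}}$ the supremum over $\bf k$ is unchanged under the relabelling ${\bf k}\mapsto A{\bf k}$. Combining the two steps yields the claimed independence.

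I do not expect any genuine obstacle: the argument is structurally identical to that of Lemma \ref{invariance_of_norm}. The one point deserving explicit verification is that the Girsanov change of variables touches only $X_g$ and leaves the deterministic magnetic primitive $I_{x_0}(\nu^{\rm h}_{\mathbf{z},\mathbf{m}})$ fixed, so that the argument of $F$ transforms exactly as in the electric case and periodicity in the zero mode still allows absorbing constants into $c$; with $\mathbf{m}$ fixed this is automatic.
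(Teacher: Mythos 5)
Your proof is correct and takes exactly the approach the paper intends: the paper's own proof of this lemma is the one-line remark that the argument is the same as for Lemma \ref{invariance_of_norm}, which is precisely the two steps you carry out (Girsanov reduction to the harmonic representatives $\omega^{\rm h}_{\bf k}$ with the variance term cancelling the prefactor $e^{-\frac{1}{4\pi}\|\dd f_{\bf k}\|_2^2}$ and the constant absorbed by periodicity in $c$, followed by the ${\rm SL}(2{\mathfrak{g}},\Z)$ relabelling of the supremum over $\bf k$). Your explicit verification that the deterministic magnetic primitive $I_{x_0}(\nu^{\rm h}_{\mathbf{z},\mathbf{m}})$ is fixed throughout, since $\mathbf{m}$ is held fixed and the Girsanov shift acts only on $X_g$, is exactly the observation that justifies the paper's ``similar'' and completes the argument.
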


\begin{proof}
The invariance under change of cohomology basis is similar to Lemma \ref{invariance_of_norm}. 
%and the invariance under change of defect graph results from the fact that the function $e^{\frac{i}{R} I^{ \boldsymbol{\xi}}_{x_0}( \nu^{\rm h}_{\mathbf{z},\mathbf{m}})} $ descends to $\Sigma$.
\end{proof}

\begin{definition}
The definition of the path integral with magnetic operators at locations $\mathbf{z}=(z_1,\dots,z_{n_{\mathfrak{m}}})$ 
with magnetic charges $\mathbf{m}=(m_1,\dots,m_{n_{\mathfrak{m}}})$ and tangent vectors ${\bf v}=((z_1,v_1),\dots,(z_{n_{\mathfrak{m}}},v_{n_{\mathfrak{m}}}))\in (T\Sigma)^{n_{\mathfrak{m}}}$ reads for $F\in \mc{E}^{\rm m}_R(\Sigma)$
\begin{align}\label{defPI:mag}
   \langle F V^g_{(0,\mathbf{m})}({\bf v}) \rangle_{\Sigma,g } 
   :=&
 \big(\frac{{\rm v}_{g}(\Sigma)}{{\det}'(\Delta_{g})}\big)^\hf\sum_{{\bf k}\in \Z^{2\mathfrak{g}}}e^{-\frac{1}{4\pi}\|\omega _{\bf k}\|_2^2-\frac{1}{4\pi}\|\nu^{\rm h}_{\mathbf{z},\mathbf{m}}\|^2_{g,0}-\frac{1}{2\pi}\langle \omega_{\bf k},\nu^{\rm h}_{\mathbf{z},\mathbf{m}}\rangle_2}
 \\ & \int_{\R/2\pi R\Z}\E\Big[e^{-\frac{1}{2\pi}\langle \dd X_g,\omega_{\bf k} \rangle_2}F(\phi_g)e^{-\frac{i   Q}{4\pi}\int_{\Sigma }^{\rm reg} K_g\phi_g\,\dd {\rm v}_g -\mu  M^g_\beta(\phi_g,\Sigma)}\Big]\,\dd c\nonumber
\end{align}
where $V^g_{(0,\mathbf{m})}({\bf v}) $ is a formal notation to indicate no electric charge (the $0$ index) but 
the presence of magnetic charges ${\bf m}=(m_1,\dots,m_{n_{\mathfrak{m}}})$, and where the regularized curvature term has now a further magnetic  term 
\begin{equation}\label{curv:mag}
\int_\Sigma^{\rm reg} K_g\phi_g\,\dd {\rm v}_g :=\int_\Sigma   (c+X_g)K_g\,\dd {\rm v}_g +\int_{\Sigma_{\boldsymbol{\sigma}}}^{\rm reg} I^{\boldsymbol{\sigma}}_{x_0}(\omega_{\bf k})K_g \,\dd {\rm v}_g +\int_{\Sigma}^{\rm reg}  I^{ \boldsymbol{\xi}}_{x_0}( \nu^{\rm h}_{\mathbf{z},\mathbf{m}}) K_g \,\dd {\rm v}_g
\end{equation}
with $\int_{\Sigma}^{\rm reg}  I^{ \boldsymbol{\xi}}_{x_0}( \nu^{\rm h}_{\mathbf{z},\mathbf{m}}) K_g \,\dd v_g$ defined by 
\eqref{curv:mag2} and $\int_{\Sigma}^{\rm reg} I^{\boldsymbol{\sigma}}_{x_0}(\omega_{\bf k})K_g \,\dd {\rm v}_g$ by \eqref{def_reg_integtral}.
\end{definition}
This definition is similar to \eqref{defPI}; notice in particular that we have not put the 
term $e^{-\frac{1}{2\pi}\langle \dd X_g,\nu^{\rm h}_{{\bf z},{\bf m}} \rangle_2}$ since by Proposition \ref{harmpoles}, we know that
$\dd^*\nu^{\rm h}_{{\bf z},{\bf m}}\in C^\infty(\Sigma)$ ($d^*\nu^{\rm h}_{{\bf z},{\bf m}}$ is understood in the distributional sense), and in turn equal to $0$, 
thus
\[ \lim_{\eps\to 0} \langle \dd X_{g,\eps},\nu^{\rm h}_{{\bf z},{\bf m}}\cjd_2=\lim_{\eps\to 0} \langle X_{g,\eps},\dd^*\nu^{\rm h}_{{\bf z},{\bf m}}\cjd_2=0.\]
This term would however appear if we were using closed 1-form $\nu_{{\bf z},{\bf m}}$ (with prescribed windings) instead of the harmonic 1-form $\nu^{\rm h}_{{\bf z},{\bf m}}$\footnote{
Adding an exact form to  $ \nu^{\rm h}_{\mathbf{z},\mathbf{m}}$ in the path integral expression amounts to adding this exact form to $\omega_{\bf k}$ so that our statement is already completely equivalent to considering closed 1-forms instead of $ \nu^{\rm h}_{\mathbf{z},\mathbf{m}}$.
}.

Also  this path integral possesses the same properties as  \eqref{defPI}: this can be shown in the same way up to some caveats that we explain in the proof of the proposition below. One further important property is that the path integral does not depend on the choice of the defect graph.

\begin{proposition}\label{propdefpathmag} 
The path integral \eqref{defPI:mag} satisfies the following basic properties:
\begin{enumerate}
\item the quantity $ \langle F V^g_{(0,\mathbf{m})}({\bf v}) \rangle_{\Sigma,g }$  is well defined and finite for $F\in  \mc{E}^m_R(\Sigma)$, and extends to $F\in   \mc{L}^{\infty,p}_m(H^s(\Sigma))$ for $p>1$. 
\item the quantity $ \langle FV^g_{(0,\mathbf{m})}({\bf v}) \rangle_{\Sigma,g }$     depends neither on the base point $x_0\in \Sigma$, nor on the choice of geometric symplectic basis $\sigma$ of $\mc{H}^1_{R}(\Sigma)$, nor on the choice of the cohomology basis dual to  $\sigma$,
\item  the quantity $ \langle FV^g_{(0,\mathbf{m})}({\bf v}) \rangle_{\Sigma,g }$ only depends on the location of the points ${\bf v}=(z_j,v_j)_{j=1,\dots,n_{\mathfrak{m}}}$ in $T\Sigma$ and the charges $\mathbf{m}$, but not  on the defect graph.
\end{enumerate}
\end{proposition}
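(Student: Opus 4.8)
The plan is to mirror the proof of Proposition \ref{propdefpath}, isolating the three places where the magnetic data $(\mathbf{z},\mathbf{m},\mathbf{v})$ and the defect graph $\boldsymbol{\xi}$ enter: the potential $M^g_\beta(\phi_g,\Sigma)$ and the functional $F(\phi_g)$, both through $I^{\boldsymbol{\xi}}_{x_0}(\nu^{\rm h}_{\mathbf{z},\mathbf{m}})$; the magnetic curvature term in \eqref{curv:mag}; and the new deterministic prefactor $e^{-\frac{1}{4\pi}\|\nu^{\rm h}_{\mathbf{z},\mathbf{m}}\|^2_{g,0}-\frac{1}{2\pi}\langle\omega_{\bf k},\nu^{\rm h}_{\mathbf{z},\mathbf{m}}\rangle_2}$. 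In each case the extra magnetic term will be seen to contribute only through a unit-modulus phase or through an already-established lemma.

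\textbf{Part (1).} I would reproduce the integrability argument of Proposition \ref{propdefpath}(1) almost verbatim, the key new input being that $\nu^{\rm h}_{\mathbf{z},\mathbf{m}}$ enters $\phi_g$ only through the phase $e^{i\beta I^{\boldsymbol{\xi}}_{x_0}(\nu^{\rm h}_{\mathbf{z},\mathbf{m}})}$ of modulus one. Choosing an analytic simple curve $\mc{C}$ bounding a small disk disjoint from $\mathbf{z}$, using the Markov decomposition (Proposition \ref{decompGFF}) $X_g=X_1+X_2+P-c_g$, conditioning on $X_{\mc{C}}$, and applying Proposition \ref{expmoment} on each piece with $f=e^{i\beta(P+I^{\boldsymbol{\sigma}}_{x_0}(\omega_{\bf k})+I^{\boldsymbol{\xi}}_{x_0}(\nu^{\rm h}_{\mathbf{z},\mathbf{m}}))}$, I would observe that $|f|=1$ on $\Sigma\setminus\{\mathbf{z}\}$, so that the constants $u^2$ and $v$ of Proposition \ref{expmoment} coincide with those of the non-magnetic case; they are finite (since $\beta^2<2$) and independent of ${\bf k}$, $\mathbf{z}$ and $\mathbf{m}$. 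The new prefactors are finite by Lemma \ref{renorm_L^2} (for $\|\nu^{\rm h}_{\mathbf{z},\mathbf{m}}\|^2_{g,0}$) and, for the magnetic curvature term, by boundedness of $I^{\boldsymbol{\xi}}_{x_0}(\nu^{\rm h}_{\mathbf{z},\mathbf{m}})$ away from $\mathbf{z}$ together with finiteness of $\int_{\xi_p}k_g\,\dd\ell_g$. Since $\dd^*\nu^{\rm h}_{\mathbf{z},\mathbf{m}}=0$, we have $\cjg \dd X_g,\nu^{\rm h}_{\mathbf{z},\mathbf{m}}\cjd_2=\cjg X_g,\dd^*\nu^{\rm h}_{\mathbf{z},\mathbf{m}}\cjd_2=0$, which is why no Girsanov factor for the magnetic form appears. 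Hölder combined with Lemma \ref{dX_gomega} then yields the same summable-in-${\bf k}$ bound as before, giving finiteness and the extension to $\mc{L}^{\infty,p}_m(H^s(\Sigma))$.

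\textbf{Part (2).} Base-point independence follows from the $2\pi R$-periodicity of the integrand in $c$. Independence of the cohomology basis dual to $\sigma$ is the Girsanov computation of Lemma \ref{invnormm} applied verbatim, the factor $G'(e^{\frac{i}{R}I_{x_0}(\nu^{\rm h}_{\mathbf{z},\mathbf{m}})})$ being inert under the shift. For the homology basis the crucial remark is that the form $\nu^{\rm h}_{\mathbf{z},\mathbf{m}}$ of Proposition \ref{harmpoles} is the unique closed and coclosed form on $\Sigma\setminus\{\mathbf{z}\}$ with windings $m_jR\,\dd\theta$ at each $z_j$ and vanishing periods over all of $\mc{H}_1(\Sigma)$; this characterization does not refer to a particular basis, so $\nu^{\rm h}_{\mathbf{z},\mathbf{m}}$ and hence the magnetic curvature term are unchanged by a change of $\sigma$. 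Thus only $\int^{\rm reg}_{\Sigma_{\boldsymbol{\sigma}}}I^{\boldsymbol{\sigma}}_{x_0}(\omega_{\bf k})K_g\,\dd v_g$ is affected, and by Lemma \ref{independence_basis} it changes by an element of $8\pi^2R\Z$; the induced phase is trivial because $Q\in\frac1R\Z$ forces $QR\in\Z$, exactly as in Proposition \ref{propdefpath}(2).

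\textbf{Part (3) and main obstacle.} The defect graph $\boldsymbol{\xi}$ enters only via $I^{\boldsymbol{\xi}}_{x_0}(\nu^{\rm h}_{\mathbf{z},\mathbf{m}})$. The magnetic curvature term is $\boldsymbol{\xi}$-independent by Lemma \ref{inv_par_graphe}, whose S/A/D-move proof uses only that the form is closed with windings $2\pi Rm_j$ and therefore applies to the harmonic representative. For $F(\phi_g)$ and $M^g_\beta(\phi_g,\Sigma)$, I would use that $F\in\mc{E}^m_R(\Sigma)$ depends on the magnetic primitive only through $G'(e^{\frac{i}{R}I^{\boldsymbol{\xi}}_{x_0}(\nu^{\rm h}_{\mathbf{z},\mathbf{m}})})$, while the potential depends on it through $e^{i\beta I^{\boldsymbol{\xi}}_{x_0}(\nu^{\rm h}_{\mathbf{z},\mathbf{m}})}$ with $\beta\in\frac1R\Z$. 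Since $\frac1R\nu^{\rm h}_{\mathbf{z},\mathbf{m}}$ has all periods in $2\pi\Z$, crossing a defect line $\xi_p$ multiplies these exponentials by $e^{\frac{i}{R}\kappa(\xi_p)}=1$, resp. $e^{i\beta\kappa(\xi_p)}=1$, so both extend to single-valued smooth functions on $\Sigma\setminus\{\mathbf{z}\}$ that are independent of $\boldsymbol{\xi}$; hence the whole integrand of \eqref{defPI:mag} is $\boldsymbol{\xi}$-independent. The genuinely delicate point, hidden in (1) and (3), is that this phase is only bounded \emph{measurable} (smooth off $\mathbf{z}$, singular at the $z_j$), so $M^g_\beta(\phi_g,\Sigma)$ cannot be defined by pairing against the order-$2$ distribution \eqref{order2}; it must be constructed through the martingale/white-noise limit, and it is precisely the measurable-$f$ formulation of Proposition \ref{expmoment} that simultaneously makes sense of this random variable and delivers the uniform exponential-moment bound. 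Once this is granted, the remaining content of (2) and (3) is bookkeeping resting on Lemmas \ref{independence_basis} and \ref{inv_par_graphe}.
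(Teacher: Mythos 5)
Your proposal is correct and follows essentially the same route as the paper, which disposes of the proposition in a few lines: items (1) and (2) are the proof of Proposition \ref{propdefpath} rerun with the magnetic data riding along as unit-modulus phases, and item (3) is delegated to Lemma \ref{inv_par_graphe} (whose proof already contains your observation that $F$ and the potential see the magnetic primitive only through single-valued functions, so only the curvature term needs the Gauss--Bonnet argument). The one place you are vaguer than the paper --- and it is precisely the single point the paper's proof flags as needing care --- is the summability over ${\bf k}$: the claim that H\"older and Lemma \ref{dX_gomega} yield \emph{the same} summable-in-${\bf k}$ bound as before is not literally true, because the summand now carries the extra factor $e^{-\frac{1}{2\pi}\langle \omega_{\bf k},\nu^{\rm h}_{\mathbf{z},\mathbf{m}}\rangle_2}$, which is not uniformly bounded in ${\bf k}$. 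You acknowledge this prefactor in your overview but only assert it is finite; what is needed is the estimate $|\langle \omega_{\bf k},\nu^{\rm h}_{\mathbf{z},\mathbf{m}}\rangle_2|\leq C_{\bf m}|{\bf k}|$ (finite since $\nu^{\rm h}_{\mathbf{z},\mathbf{m}}\in L^1$ while the $\omega_j$ are smooth, and linear in ${\bf k}$), after which the Gaussian decay $e^{-\frac{1}{4\pi}\|\omega^{\rm h}_{\bf k}\|_2^2}$ dominates $e^{C_{\bf m}|{\bf k}|}$ and summability survives. This is a one-line fix, so I would not call it a genuine gap, but it should be made explicit. Your two additional remarks go beyond the paper's terse proof and are sound: the basis-independent characterization of $\nu^{\rm h}_{\mathbf{z},\mathbf{m}}$ (its periods vanish on all of $\mc{H}_1(\Sigma)$ by linearity once they vanish on one basis, and the harmonic representative is unique) cleanly justifies why a change of $\boldsymbol{\sigma}$ leaves the magnetic data untouched; and your point that the phase $e^{i\beta I^{\boldsymbol{\xi}}_{x_0}(\nu^{\rm h}_{\mathbf{z},\mathbf{m}})}$ is singular at ${\bf z}$, so the potential must be built as an $L^2$/martingale limit rather than by pairing with the order-$2$ distribution, is exactly how the paper handles the matter later, in the proof of Theorem \ref{limitcorel}.
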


\begin{proof}
The proof of items (1) and (2) is similar to Proposition \ref{propdefpath}, but there is only one point to be careful with. We have to check the summability over ${\bf k}$ as this expression features now a further term $e^{-\frac{1}{2\pi}\langle \omega_{\bf k},\nu^{\rm h}_{\mathbf{z},\mathbf{m}}\rangle_2}$. This term is bounded by $e^{C_{{\bf m}}|{\bf k}|}$ for some $C_{\bf m}>0$ and thus does not affect the summability over ${\bf k}$ in the proof of Proposition \ref{propdefpath}.
To prove (3), it suffices to use Lemma \ref{inv_par_graphe}. 
\end{proof}

Below, we denote by $S\Sigma:=\{ (x,v)\in T\Sigma\,|\, |v|_{g_x}=1\}$ the unit sphere bundle.
 \begin{corollary}\label{corospin}
For $r_{\theta}$ being the rotation of angle $\theta$ in the tangent bundle, set 
\[r_{\boldsymbol{\theta}}\mathbf{v}:=((z_1,r_{\theta_1}v_1),\dots,(z_{n_{\mathfrak{m}}},r_{\theta_{n_\mathfrak{m}}} v_{{n_\mathfrak{m}}}) )\in (T\Sigma)^{n_{\mathfrak{m}}}.\]
Then
\[ \langle   FV^g_{(0,\mathbf{m})}(r_{\boldsymbol{\theta}}\mathbf{v}) \rangle_{\Sigma,g } =e^{-i QR\langle\mathbf{m},\boldsymbol{\theta}\rangle }  \langle F  V^g_{(0,\mathbf{m})}({\bf v}) \rangle_{\Sigma,g } .\]
Denoting $ 2\pi R Q=-\ell\in -\N$,  the correlation functions, viewed as functions 
\[ \mathbf{v}\in (S\Sigma)^{n_{\mathfrak{m}}}\mapsto  \langle   FV^g_{(0,\mathbf{m})}(\mathbf{v}) \rangle_{\Sigma,g } \]
are sections of $\mc{K}^{\ell m_1}\otimes \dots \otimes 
\mc{K}^{\ell m_{n_{\mathfrak{m}}}}$ where $\mc{K}=(T^{1,0}\Sigma)^*$ is the canonical line bundle and 
$\mc{K}^{-1}=(T^{0,1}\Sigma)^*$ the anti-canonical bundle; 
by convention, if $k\geq 1$, we write $\mc{K}^k:=\otimes_{j=1}^k\mc{K}$ and $\mc{K}^{-k}:=\otimes_{j=1}^k\mc{K}^{-1}$.
 \end{corollary}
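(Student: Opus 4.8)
The plan is to track exactly how the vectors $\mathbf{v}$ enter the definition \eqref{defPI:mag} and to show that rotating them only shifts the regularized curvature term by an explicit linear functional of $\boldsymbol{\theta}$. By Proposition \ref{propdefpathmag}(3) the correlation function is independent of the defect graph, so for each $\boldsymbol{\theta}$ I am free to pick a convenient graph $\boldsymbol{\xi}$ whose arcs reach $z_j$ along $r_{\theta_j}v_j$. Then I would check that, among the ingredients of \eqref{defPI:mag}, the harmonic form $\nu^{\rm h}_{\mathbf{z},\mathbf{m}}$ (built in Proposition \ref{harmpoles} purely from the windings $\mathbf{m}$), the potential $M^g_\beta(\phi_g,\Sigma)$, the functional $F(\phi_g)$, and the Gaussian/instanton weights are all insensitive to the directions: they depend on $I^{\boldsymbol{\xi}}_{x_0}(\nu^{\rm h}_{\mathbf{z},\mathbf{m}})$ only through $e^{\frac{i}{R}I_{x_0}(\nu^{\rm h}_{\mathbf{z},\mathbf{m}})}$, which is single valued on $\Sigma\setminus\{\mathbf{z}\}$ and arc-independent. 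Hence the only $\boldsymbol{\theta}$-dependence sits in the magnetic curvature counterterm $\int_\Sigma^{\rm reg}I^{\boldsymbol{\xi}}_{x_0}(\nu^{\rm h}_{\mathbf{z},\mathbf{m}})K_g\,\dd{\rm v}_g$ of \eqref{curv:mag2} appearing inside \eqref{curv:mag}.

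The heart of the argument is a local computation of $\partial_{\theta_j}$ of this quantity. I would rotate the arcs meeting $z_j$ by deforming only a terminal segment of length $\rho$ near $z_j$. In the bulk term $\int_\Sigma I^{\boldsymbol{\xi}}_{x_0}(\nu^{\rm h}_{\mathbf{z},\mathbf{m}})K_g\,\dd{\rm v}_g$ the primitive changes only on the swept wedge, whose area is $O(\rho^2)$ and can be made arbitrarily small, so it contributes nothing to the derivative. For the counterterm $-2\sum_p\kappa(\xi_p)\int_{\xi_p}k_g\,\dd\ell_g$, the change is governed by the turning angle exactly as in the proof of Lemma \ref{lem:local_invariance}: rotating the endpoint tangent of an arc changes its total geodesic curvature linearly, with $\partial_{\theta_j}\int_{\xi_p}k_g\,\dd\ell_g=\epsilon_{p,j}$, where $\epsilon_{p,j}=+1$ if $z_j$ is the terminal point of $\xi_p$ and $-1$ if it is the initial point. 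Thus $\partial_{\theta_j}\big(\int_\Sigma^{\rm reg}I^{\boldsymbol{\xi}}_{x_0}(\nu^{\rm h}_{\mathbf{z},\mathbf{m}})K_g\,\dd{\rm v}_g\big)=-2\sum_{p:\,z_j\in\partial\xi_p}\epsilon_{p,j}\kappa(\xi_p)$.

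Next I would identify this sum through the monodromy of $\nu^{\rm h}_{\mathbf{z},\mathbf{m}}$ at $z_j$. Since $\psi_j^*\nu^{\rm h}_{\mathbf{z},\mathbf{m}}=m_jR\,\dd\theta$ near $z_j$ by Proposition \ref{harmpoles}, a counterclockwise loop increases $I^{\boldsymbol{\xi}}_{x_0}(\nu^{\rm h}_{\mathbf{z},\mathbf{m}})$ continuously by $2\pi Rm_j$; single-valuedness off the defect lines forces the signed jumps $\kappa(\xi_p)$ across the arcs emanating from $z_j$ to compensate, namely $\sum_{p:\,z_j\in\partial\xi_p}\epsilon_{p,j}\kappa(\xi_p)=-2\pi Rm_j$. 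Consequently $\partial_{\theta_j}\big(\int_\Sigma^{\rm reg}\cdots\big)=4\pi Rm_j$, which is constant in $\boldsymbol{\theta}$; integrating shows the regularized curvature at $r_{\boldsymbol{\theta}}\mathbf{v}$ exceeds its value at $\mathbf{v}$ by $4\pi R\langle\mathbf{m},\boldsymbol{\theta}\rangle$. Feeding this into $e^{-\frac{iQ}{4\pi}\int^{\rm reg}_\Sigma K_g\phi_g\,\dd{\rm v}_g}$ in \eqref{defPI:mag} factors out exactly $e^{-iQR\langle\mathbf{m},\boldsymbol{\theta}\rangle}$, proving the first display. For the final statement I would note that $Q\in\frac1R\Z$ gives $QRm_j\in\Z$, so $e^{-iQRm_j\theta_j}$ is $2\pi$-periodic and the correlation functions are honest single-valued functions on $(S\Sigma)^{n_{\mathfrak{m}}}$; the equivariance $\psi(\dots,r_{\theta_j}v_j,\dots)=e^{-iQRm_j\theta_j}\psi(\dots,v_j,\dots)$ is precisely the transformation rule under the structure $S^1$-action characterizing a section of $\mc{K}^{\ell m_j}$ in the $j$-th slot, with the power read off from the relation between $\ell$ and $QR$, which yields the claimed identification.

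The main obstacle is the local $\partial_{\theta_j}$ computation: getting the constant $4\pi Rm_j$ and, above all, its sign right requires carefully matching the orientation conventions in the turning-angle variation $\partial_{\theta_j}\int_{\xi_p}k_g\,\dd\ell_g=\epsilon_{p,j}$ with the crossing-sign convention in the monodromy identity $\sum_p\epsilon_{p,j}\kappa(\xi_p)=-2\pi Rm_j$, while simultaneously confirming that the bulk contribution genuinely drops out in the limit $\rho\to 0$ and that the outcome is insensitive to how many arcs meet $z_j$ and to the global shape of the defect tree.
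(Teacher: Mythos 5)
Your proof is correct and rests on the same underlying mechanism as the paper's (Gauss--Bonnet bookkeeping of turning angles near the insertion points, together with the graph-independence of Lemma \ref{inv_par_graphe}), but you implement it differently, and the difference is real. The paper argues globally, one rotation at a time: it fixes the canonical chain graph $z_1\to z_2\to\dots\to z_{n_{\mathfrak{m}}}$, replaces the incident arc by a homotopic arc reaching $z_j$ in the rotated direction, and applies Gauss--Bonnet on the disk $D$ bounded by the two arcs; there the bulk term $\int_D(I^{\tilde{\boldsymbol{\xi}}}_{x_0}(\nu^{\rm h}_{\mathbf{z},\mathbf{m}})-I^{\boldsymbol{\xi}}_{x_0}(\nu^{\rm h}_{\mathbf{z},\mathbf{m}}))K_g\,\dd{\rm v}_g=-2\pi Rm_j\int_DK_g\,\dd{\rm v}_g$ is the carrier of the phase, converted by Gauss--Bonnet into geodesic-curvature boundary terms plus the angle defect, giving the shift $4\pi Rm_j\theta_j$; intermediate vertices are then handled by a separate case (``two domains''). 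You instead localize: by deforming only terminal segments within radius $\rho$ and using $\rho$-independence of the regularized curvature, the bulk contribution is $O(\rho^2)$ and the entire phase is extracted from the counterterm through $\pl_{\theta_j}\int_{\xi_p}k_g\,\dd\ell_g=\epsilon_{p,j}$, closed by the monodromy identity $\sum_{p:\,z_j\in\pl\xi_p}\epsilon_{p,j}\kappa(\xi_p)=-2\pi Rm_j$. That identity is correct and your sign conventions cohere (check on the chain graph: at $z_2$ one gets $\kappa(\xi_1)-\kappa(\xi_2)=2\pi Rm_1-2\pi R(m_1+m_2)=-2\pi Rm_2$), and it reproduces the paper's constant $4\pi Rm_j\theta_j$ and hence the factor $e^{-iQR\langle\mathbf{m},\boldsymbol{\theta}\rangle}$; your closing identification via the fiberwise $S^1$-equivariance is exactly the paper's Fourier-mode argument on $S\Sigma$. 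What your route buys is uniformity: leaves, intermediate vertices, and arbitrary valence at $z_j$ are treated identically, with no case distinction. Two small points you should make explicit to seal it: $\kappa(\xi_p)$ is a combinatorial invariant of the tree (it is $2\pi R$ times the total charge of the subtree cut off by $\xi_p$, the two sides agreeing up to sign since $\sum_jm_j=0$), so it does not vary as the directions rotate --- this is what legitimizes ``constant in $\boldsymbol{\theta}$; integrating''; and for large $\theta_j$ one should compose small rotations so that all intermediate graphs remain admissible (arcs simple, non-intersecting), which is harmless since the phase is additive.
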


\begin{proof} It suffices to consider the case where only one vector is rotated, as we can apply recursively the result to each angle.  Consider the defect graph $\mc{D}_{\mathbf{v},\boldsymbol{\xi}}$. Up to relabelling the sites $z_j$, we may assume that the charges are in increasing order $m_1\leq \dots\leq m_{n_{\mathfrak{m}}}$. Since the correlation functions don't depend on the graph, we may choose the canonical defect graph $z_1\to z_2\to\dots \to z_{n_{\mathfrak{m}}}$. Let us first investigate the case when the 1st vector is rotated. We proceed as in the proof of the previous proposition using Gauss-Bonnet. Denote by $(\xi_p)_p$ the paths of the defect graph $\mc{D}_{\mathbf{v},\boldsymbol{\xi}}$. Let us consider another  path $\tilde\xi_1$ such that $\tilde\xi(0)=z_1$, $\tilde\xi(1)=z_2$ with $\tilde\xi'(0)=\tilde \lambda_1 r_{\theta_1}v_1$, $\tilde\xi'(1)=\tilde \lambda_2 v_2$ for $\tilde \lambda_1,\tilde \lambda_2>0$. We compute the change in the correlation functions when replacing $\xi_1$ by $\tilde\xi$. Let us call $\tilde{\mc{D}}_{r_{\boldsymbol{\theta}}\mathbf{v},\boldsymbol{\xi}}$ the defect graph after this replacement. As before, the curves $\xi_1$ and $\tilde{\xi}$ bound a domain $D$ homeomorphic to a disk, and the boundary of $D$ inherits an orientation from $\Sigma$. Without loss of generality, we may assume this is $\tilde{\xi}$ is positively oriented, and $\xi_1$ negatively oriented with respect to the orientation of $\pl D$. The two defect graphs give rise to two different primitives $ I^{\boldsymbol{\xi}}_{x_0}( \nu^{\rm h}_{\mathbf{z},\mathbf{m}})$ and $   I^{\tilde {\boldsymbol{\xi}}}_{x_0}( \nu^{\rm h}_{\mathbf{z},\mathbf{m}})$ and, on $D$, we have $ I^{\tilde {\boldsymbol{\xi}}}_{x_0}( \nu^{\rm h}_{\mathbf{z},\mathbf{m}}) = I^{\boldsymbol{\xi}}_{x_0}( \nu^{\rm h}_{\mathbf{z},\mathbf{m}})-2\pi Rm_1$, where we noted that  $\kappa(\xi_1)=\kappa(\tilde\xi)=m_1$. 
The difference of the two regularized integrals is then
\begin{align*}
\int_{\Sigma}^{\rm reg}&  I^{\tilde {\boldsymbol{\xi}}}_{x_0}( \nu^{\rm h}_{\mathbf{z},\mathbf{m}}) K_g \,\dd {\rm v}_g-\int_{\Sigma}^{\rm reg}  I^{\boldsymbol{\xi}}_{x_0}( \nu^{\rm h}_{\mathbf{z},\mathbf{m}}) K_g \,\dd {\rm v}_g
\\
&=\int_D(  I^{\tilde {\boldsymbol{\xi}}}_{x_0}( \nu^{\rm h}_{\mathbf{z},\mathbf{m}})-  I^{\boldsymbol{\xi}}_{x_0}( \nu^{\rm h}_{\mathbf{z},\mathbf{m}}))K_g \dd {\rm v}_g+4\pi m_1 R(\int_{\xi_1}k_g\dd \ell_g-\int_{\tilde\xi}k_g\dd \ell_g.
\end{align*}
 Now we apply again the Gauss-Bonnet theorem  on $D$ to get
\[
\int_D(  I^{\tilde {\boldsymbol{\xi}}}_{x_0}( \nu^{\rm h}_{\mathbf{z},\mathbf{m}})-  I^{\boldsymbol{\xi}}_{x_0}( \nu^{\rm h}_{\mathbf{z},\mathbf{m}}))K_g \dd {\rm v}_g
 =-2\pi Rm_1\int_D  K_g \dd {\rm v}_g=4\pi Rm_1(\int_{\tilde{\xi}}k_g\dd\ell_g-\int_{\xi_1}k_g\dd\ell_g)+4\pi Rm_1\theta_1
\]
and we deduce that the difference of the two regularized integrals is $4\pi m_1\theta_1R$. Hence $ \langle  F V^g_{(0,\mathbf{m})}(r_{\boldsymbol{\theta}}\mathbf{v}) \rangle_{\Sigma,g } =e^{-i QRm_1\theta_1 } \langle  F V^g_{(0,\mathbf{m})}(\mathbf{v}) \rangle_{\Sigma,g } $. The same argument works when we rotate the last vector. % $ \langle   V^g_{(0,\mathbf{m})}({\bf z},r_{\boldsymbol{\theta}}\mathbf{v}) \rangle_{\Sigma,g } =e^{-i\frac{Qm_{n_{\mathfrak{m}}}\theta_{n_{\mathfrak{m}}}}{2\pi}} \langle   V^g_{(0,\mathbf{m})}({\bf z}, \mathbf{v}) \rangle_{\Sigma,g } $. 
 The proof has a little twist when rotating an intermediate point because turning an angle affects then two domains, each of which has to be applied the Gauss-Bonnet theorem. But there is no further subtlety   and this yields similarly $ \langle  F V^g_{(0,\mathbf{m})}(r_{\boldsymbol{\theta}}\mathbf{v}) \rangle_{\Sigma,g } =e^{-i QRm_{p}\theta_p } \langle   FV^g_{(0,\mathbf{m})}( \mathbf{v}) \rangle_{\Sigma,g } $ in case we rotate the $p$-th vector only. Hence our claim. 
 Any $C^0$ function $f$ on $S\Sigma$ can be decomposed in Fourier modes in the fibers (which are circles), the fact that for $k\in \Z$ one has
 $f(z,r_{\theta}v)=e^{ik\theta }f(z,v)$ for all $x$ means exactly that $f$ has only Fourier modes in the fibers of order $k$, which means that 
 $f$ is the restriction of an $C^0$ section of $\mc{K}^k$ to the unit sphere bundle (see \cite[Chapter 4.4.]{Guillarmou-Mazzuchelli} for instance).
\end{proof}

\subsubsection{Electric operators}
%%%%%%%%%%%%%%%%%%%%%%%%%%%%%%
We construct now the electric operators in the presence of magnetic operators. Pure electric correlations can be obtained as a particular case of the following by taking the magnetic field to be $0$. Such fields need to be regularized. Recall that each $u\in H^s_\Gamma (\tilde{\Sigma}_{\bf z})$ decomposes uniquely as 
 $u=\pi^*f+I_{x_0}(\omega_{\bf k})+I_{x_0}(\nu^{\rm h}_{{\bf z}, {\bf k}})$ for some $f\in H^s(\Sigma)$, $({\bf k},{\bf m})\in \Z^{2{\mathfrak{g}}+n_{\mathfrak{m}}}$.  We introduce the regularized electric operators, for fixed electric charge $\alpha\in\R$ and $x\in \Sigma$,  
\begin{equation*}%\label{defvertex}
V_{\alpha,g,\eps}(u,x)=\eps^{-\alpha^2/2}  e^{i\alpha u_{g,\epsilon}(x) } %e^{i\alpha I_{x_0}^\sigma(\omega_{\bf k})(x)+i\alpha I_{x_0}(\nu^{\rm h}_{{\bf z}, {\bf k}})}
\end{equation*}
where $u_{g,\epsilon}$ is a $g$-regularization of the field $u$. When $u=\phi_g$ is the Liouville field (as below), we will shortcut this expression as  $V_{\alpha,g,\eps}(x)$.

Next, we choose distinct points $x_1,\dots,x_{n_{\mathfrak{e}}}$ on $\Sigma$ (and distinct from the locations $\mathbf{z}$ of the magnetic defects), which we collect in the vector $\mathbf{x}\in\Sigma^{n_{\mathfrak{e}}}$, with associated electric charges $\boldsymbol{\alpha}:=(\alpha_1,\dots,\alpha_{n_{\mathfrak{e}}})\in \R^{n_{\mathfrak{e}}}$. We denote $V_{(\boldsymbol{\alpha},0)}^{g,\epsilon}(u,\mathbf{x}):=\prod_{j=1}^{n_{\mathfrak{e}}}V_{\alpha_j,g,\eps}(u,x_j)$ (which we shortcut as $V_{(\boldsymbol{\alpha},0)}^{g,\epsilon}(\mathbf{x})$ if  $u$ is the Liouville field). Note that this functional belongs to $\mc{L}^{\infty,p}_m(H^s(\Sigma))$   iff the charges satisfy $\alpha_j\in \frac{1}{R}\Z$, which we will assume from now on.
Let us introduce the function $u_{\bf x}(x)=\sum_{j=1}^{n_{\mathfrak{e}}}i\alpha_jG_g(x,x_j)$ and note that $u_{\bf x}\in H^s ( \Sigma)$ for $s<1$.
 We consider the space $\mc{E}^{\rm m}_R(\Sigma)$ as before. Next we define   the space: 
 \begin{itemize}
\item $\mc{L}^{\infty,p}_{\rm e,m}(H^s(\Sigma))$ as the closure of $\mc{E}^{\rm m}_R(\Sigma)$ with respect to the  seminorm 
$$\|F\|_{\mc{L}^{\infty,p}_{\rm e,m}}:=\sup_{\bf k}\Big(\int_{\R/2\pi R\Z}\E\Big[e^{-\frac{1}{2\pi}\langle \dd X_g ,\omega_{\bf k} \rangle_2-\frac{1}{4\pi}\|\dd f_{\bf k}\|_2^2}|F(c+X_g+u_{\bf x}+I_{x_0}(\omega_{\bf k})+I_{x_0}( \nu^{\rm h}_{\mathbf{z},\mathbf{m}}))|^p\Big]\,\dd c\Big)^{1/p}$$
where $(1-\Pi_1)\omega_{\bf k}=\dd f_{\bf k}$ with $\Pi_1$ is the projection on harmonic forms (recall Lemma \ref{dX_gomega}).
\end{itemize}
Similarly to Lemma \ref{invnormm}, we claim
\begin{lemma}\label{invnormm_bis} The norm $\|F\|_{\mc{L}^{\infty,p}_{\rm e,m}}$ does not depend on the choice of cohomology basis 
%nor on the defect graph $\boldsymbol{\xi}$.
 \end{lemma}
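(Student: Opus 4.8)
The plan is to reduce Lemma \ref{invnormm_bis} to the corresponding invariance statements already established in Lemma \ref{invariance_of_norm} and Lemma \ref{invnormm}, since the only new ingredient here is the presence of the harmonic shift $I_{x_0}(\nu^{\rm h}_{\mathbf{z},\mathbf{m}})$ together with the electric shift $u_{\bf x}$. Both of these are fixed data that do not interact with the choice of cohomology basis: the form $\nu^{\rm h}_{\mathbf{z},\mathbf{m}}$ is determined by the magnetic charges and the points $\mathbf{z}$ via Proposition \ref{harmpoles} (independently of any basis of $\mc{H}^1_R(\Sigma)$), and $u_{\bf x}$ is built from the Green's function, which is intrinsic. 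So the structure of the proof should be essentially identical to that of Lemma \ref{invariance_of_norm}, with these two extra terms carried along as inert spectators.

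Concretely, I would first handle the change within a basis dual to a \emph{fixed} homology marking $\sigma$. Writing $\omega_{\bf k}=\omega^{\rm h}_{\bf k}+\dd f_{\bf k}$ with $\omega^{\rm h}_{\bf k}=\Pi_1\omega_{\bf k}$ the harmonic representative, I would apply the Girsanov transform to the factor $e^{-\frac{1}{2\pi}\langle \dd X_g,\dd f_{\bf k}\rangle_2}$ and absorb the resulting shift of $X_g$ by $-(f_{\bf k}-m_g(f_{\bf k}))$. The key point is that this shift, composed with a change of variables in the $\dd c$-integral exploiting the $2\pi R$-periodicity in the zero mode $f_0$, exactly cancels the variance term $e^{\frac{1}{4\pi}\|\dd f_{\bf k}\|_2^2}$, leaving
\[
\sup_{\bf k}\Big(\int_{\R/2\pi R\Z}\E\Big[|F(c+X_g+u_{\bf x}+I_{x_0}(\omega^{\rm h}_{\bf k})+I_{x_0}(\nu^{\rm h}_{\mathbf{z},\mathbf{m}}))|^p\Big]\,\dd c\Big)^{1/p},
\]
using $\langle \dd X_g,\omega^{\rm h}_{\bf k}\rangle_2=\langle X_g,\dd^*\omega^{\rm h}_{\bf k}\rangle_2=0$. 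The presence of $u_{\bf x}$ and $I_{x_0}(\nu^{\rm h}_{\mathbf{z},\mathbf{m}})$ plays no role in this computation, as they are untouched by the manipulation.

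Next I would treat a change of the underlying homology basis $\sigma\to\tilde\sigma$, which induces a matrix $A\in {\rm SL}(2{\mathfrak{g}},\Z)$ with $\omega^{\rm h}_{\bf k}=\tilde\omega^{\rm h}_{A\bf k}$. Since the supremum is taken over all ${\bf k}\in\Z^{2{\mathfrak{g}}}$ and $A$ is a bijection of $\Z^{2{\mathfrak{g}}}$, the substitution ${\bf k}\mapsto A{\bf k}$ leaves the supremum invariant, and again the fixed terms $u_{\bf x}$ and $I_{x_0}(\nu^{\rm h}_{\mathbf{z},\mathbf{m}})$ are inert. I expect no genuine obstacle here; the proof is a routine transcription, and the main thing to verify carefully is that the extra shifts truly do not couple to the harmonic forms $\omega_j$, which follows because $\nu^{\rm h}_{\mathbf{z},\mathbf{m}}$ and $u_{\bf x}$ depend only on $\mathbf{z},\mathbf{m},\mathbf{x}$ and the metric, not on the chosen basis of $\mc{H}^1_R(\Sigma)$. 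I would therefore write the proof as a short remark that the argument of Lemma \ref{invariance_of_norm} applies verbatim, simply carrying the additional fixed summands through each step.
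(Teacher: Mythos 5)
Your proposal is correct and follows essentially the same route as the paper, which gives no separate argument for this lemma but defers to Lemma \ref{invnormm}, itself reduced to Lemma \ref{invariance_of_norm}: Girsanov on the exact part $\dd f_{\bf k}$ combined with a shift of the $c$-integral (using $2\pi R$-periodicity) to reduce to harmonic representatives, where $\langle \dd X_g,\omega^{\rm h}_{\bf k}\rangle_2=0$, followed by the $SL(2\mathfrak{g},\Z)$ reindexing ${\bf k}\mapsto A{\bf k}$ under the supremum for a change of homology basis. You also correctly identify the only new feature here, namely that $u_{\bf x}$ and $I_{x_0}(\nu^{\rm h}_{\mathbf{z},\mathbf{m}})$ are basis-independent data (via the Green's function and Proposition \ref{harmpoles}) that pass through every step untouched.
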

 \color{black}

The path integral with both electric and magnetic operators is defined by the limit
\begin{equation}\label{defcorrelg}
\cjg   F V_{(\boldsymbol{\alpha},0)}^{g}(\mathbf{x})V^g_{(0,\mathbf{m})}({\bf v})  \cjd_ {\Sigma, g}:=\lim_{\eps \to 0} \: \cjg F V_{(\boldsymbol{\alpha},0)}^{g,\epsilon}(\mathbf{x}) V^g_{(0,\mathbf{m})}({\bf v}) \cjd_ {\Sigma, g}
\end{equation}
for $F \in  \mc{E}^{\rm m}_R(\Sigma)$. The existence  of the limit is non trivial and only holds under some constraints that we summarize below:
% (see \cite[Prop 4.4]{GRV}):

\begin{theorem}\label{limitcorel} Assume that
\begin{align}\label{seiberg}
 \forall j,\quad \alpha_j> Q\quad \text{ and }\quad \alpha_j\in \frac{1}{R}\Z   ,\qquad  \sum_{j=1}^{n_{\mathfrak{m}}}m_j=0.
\end{align}
The mapping $F\in  \mc{E}^m_R(\Sigma)\mapsto \cjg  F V_{(\boldsymbol{\alpha},0)}^{g}(\mathbf{x})V^g_{(0,\mathbf{m})}({\bf v})  \cjd_ {\Sigma, g}$ satisfies the following properties:
\begin{enumerate}
\item {\bf Existence:} it is well-defined and extends to $F\in \mc{L}^{\infty,p}_{\rm e,m}(H^s(\Sigma))$ for $s<0$. For $F=1$, it defines  the correlation functions  $\cjg    V_{(\boldsymbol{\alpha},0)}^{g}(\mathbf{x})V^g_{(0,\mathbf{m})}({\bf v})  \cjd_ {\Sigma, g}$.
\item {\bf Conformal anomaly:} let $g'=e^{\rho}g$ be two conformal metrics on the closed Riemann surface $\Sigma$ for some $\rho\in C^\infty(\Sigma)$, and let $ \mathbf{x}=(x_1,\dots,x_{n_{\mathfrak{e}}})\in \Sigma^{n_{\mathfrak{e}}}$, ${\bf v}=((z_1,v_1),\dots,(z_{n_{\mathfrak{m}}},v_{n_{\mathfrak{m}}}))\in (T\Sigma)^{n_{\mathfrak{m}}}$ with $z_j$ and $x_i$ distincts for all $i,j$, and $\boldsymbol{\alpha}=(\alpha_1,\dots,\alpha_{n_{\mathfrak{e}}})\in\R^{n_{\mathfrak{e}}}$ obeying the constraint \eqref{seiberg}. Then for ${\bf m}=(m_1,\dots,m_{n_{\mathfrak{m}}})\in \Z^{n_{\mathfrak{m}}}$, we have
\begin{align}\label{confan} 
& \big\cjg  F  V_{(\boldsymbol{\alpha},0)}^{g'}(\mathbf{x})V^{g'}_{(0,\mathbf{m})}({\bf v})  \big\cjd_ {\Sigma, g'}
 \\
 &=\big\cjg   F(\cdot- \tfrac{i  Q}{2}\rho) V_{(\boldsymbol{\alpha},0)}^{g}(\mathbf{x})V^g_{(0,\mathbf{m})}({\bf v}) \big \cjd_ {\Sigma, g} 
e^{\frac{{\bf c}}{96\pi}\int_{\Sigma}(|d\rho|_g^2+2K_g\rho) {\rm dv}_g-\sum_{j=1}^{n_{\mathfrak{e}}}\Delta_{(\alpha_j,0)}\rho(x_j)-\sum_{j=1}^{n_{\mathfrak{m}}}\Delta_{(0,m_j)}\rho(z_j)}\nonumber
\end{align}
where the real numbers $\Delta_{\alpha,m}$, called {\it conformal weights}, are defined by the relation for $\alpha\in\R$
 \begin{align}\label{deltaalphadef}
\Delta_{(\alpha,m)}=\frac{\alpha}{2}(\frac{\alpha}{2}-  Q)+\frac{m^2R^2}{4}
\end{align} 
and the central charge is ${\bf c}:=1-6 Q^2 $.
\item {\bf Diffeomorphism invariance:} let $\psi:\Sigma'\to \Sigma$ be an orientation preserving diffeomorphism. Then  
\begin{equation}\label{diffinvariance}
\big\cjg  F (\phi_{\psi^*g}) V_{(\boldsymbol{\alpha},0)}^{\psi^*g}(\mathbf{x})V^{\psi^*g}_{(0,\mathbf{m})}({\bf v})  \big\cjd_ {\Sigma', \psi^*g}=\big\cjg  F(\phi_g\circ\psi)  V_{(\boldsymbol{\alpha},0)}^{g}(\psi(\mathbf{x}))V^{g}_{(0,\mathbf{m})}(\psi_*{\bf v})  \big\cjd_ {\Sigma, g}
\end{equation}
where we used the collective notations 
\[\psi_{*}\mathbf{v}:=((\psi(z_1),\dd \psi_{z_1}.v_1),\dots,(\psi(z_{n_\mathfrak{m}}),\dd \psi_{z_{n_{\mathfrak{m}}}}.v_{n_{\mathfrak{m}}})), \quad \psi({\bf x})=(\psi(x_1),\dots,\psi(x_{n_{\mathfrak{e}}})).\]
\item {\bf Spins:} with $r_{\boldsymbol{\theta}}\mathbf{v}:=((z_1,r_{\theta_1}v_1),\dots,(z_{n_\mathfrak{m}},r_{\theta_{n_\mathfrak{m}}} v_{\theta_{n_\mathfrak{m}}}))$, then
\begin{equation}\label{spinrelation} 
 \langle   FV_{(\boldsymbol{\alpha},0)}^{g}(\mathbf{x})V^g_{(0,\mathbf{m})}(r_{\boldsymbol{\theta}}\mathbf{v}) \rangle_{\Sigma,g } =e^{-i QR\langle\mathbf{m},\boldsymbol{\theta}\rangle  }  \langle F V_{(\boldsymbol{\alpha},0)}^{g}(\mathbf{x}) V^g_{(0,\mathbf{m})}({\bf v}) \rangle_{\Sigma,g } .
 \end{equation}
\end{enumerate}

 \end{theorem}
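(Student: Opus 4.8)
The plan is to establish Theorem \ref{limitcorel} in four parts, treating existence first and then deriving the transformation laws as consequences of the geometric invariances already proved in the earlier sections.

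\textbf{Existence.} First I would reduce the convergence of the limit \eqref{defcorrelg} as $\eps\to 0$ to a Girsanov computation. The idea is that each regularized electric operator $V_{\alpha_j,g,\eps}(x_j)=\eps^{-\alpha_j^2/2}e^{i\alpha_j\phi_{g,\eps}(x_j)}$ is, after Wick reordering, a bounded (modulus one) complex exponential of the Gaussian field evaluated at $x_j$. Applying the Girsanov/Cameron--Martin theorem to the factor $\prod_j e^{i\alpha_j X_{g,\eps}(x_j)}$ shifts the law of $X_g$ by the deterministic function $u_{\bf x}(x)=\sum_j i\alpha_j G_g(x,x_j)$, and produces the explicit constant $C(g,{\bf x},\boldsymbol\alpha)$ from the Green function self-energies (these are the $\eps$-dependent divergences cancelled exactly by the $\eps^{-\alpha_j^2/2}$ renormalization, since $G_g(x,x')\sim -\log d_g(x,x')$). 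Because the weights are imaginary, this Girsanov step is only literally valid after analytic continuation in the charges; I would set it up by proving the identity for charges with small negative imaginary part and continuing, exactly as announced in the introduction. The upshot is that the $\eps\to 0$ limit of the correlation reduces to the path integral \eqref{defPI:mag} with Liouville field $\phi_g=c+X_g+u_{\bf x}+I^{\boldsymbol\sigma}_{x_0}(\omega_{\bf k})+I^{\boldsymbol\xi}_{x_0}(\nu^{\rm h}_{{\bf z},{\bf m}})$. Finiteness then follows from Proposition \ref{expmoment} exactly as in Proposition \ref{propdefpath}: the only new feature is that the shift $u_{\bf x}$ creates singularities $e^{-\beta\sum_j\alpha_j G_g(x,x_j)}$ in the potential, and these are integrable against the imaginary GMC precisely when $\beta\alpha_j$ is not too large; the condition $\alpha_j>Q$ in \eqref{seiberg} is what guarantees the exponential moment bound $u^2<\infty$ near each $x_j$ survives. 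The $c$-integral converges because the neutrality/quantization forces the oscillatory phase $e^{ic\sum_j\alpha_j}$ to be $2\pi R$-periodic.

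\textbf{Conformal anomaly and diffeomorphism invariance.} For part (2) I would collect the Weyl-scaling behaviour of each ingredient under $g'=e^\rho g$: the determinant and volume prefactor give the Polyakov term via \eqref{detpolyakov}, contributing the $\frac{\bf c}{96\pi}\int(|d\rho|^2+2K_g\rho)$ factor with ${\bf c}=1-6Q^2$; the imaginary GMC scales by \eqref{relationentrenorm}; the renormalized magnetic norm $\|\nu^{\rm h}_{{\bf z},{\bf m}}\|^2_{g,0}$ scales by the local term $\pi R^2\sum_j m_j^2\rho(z_j)$ from Lemma \ref{renorm_L^2}; the regularized curvature integrals transform by Lemma \ref{conformal_change_reg_int} and Lemma \ref{change_conf_magnetic}; and the electric Green-function self-energies scale via the standard $\log$ of the conformal factor. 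The key algebraic point is that assembling the electric contribution $\tfrac{\alpha_j}{2}(\tfrac{\alpha_j}{2}-Q)$ and the magnetic contribution $\tfrac{m_j^2R^2}{4}$ reproduces exactly the weights $\Delta_{(\alpha,m)}$ of \eqref{deltaalphadef}, while the shift $F(\cdot-\tfrac{iQ}{2}\rho)$ arises from how the curvature coupling $e^{-\frac{iQ}{4\pi}\int^{\rm reg}K_g\phi_g}$ responds to the change of metric combined with the Weyl variation of the zero mode. Part (3) is then essentially tautological once the path integral \eqref{defPI:mag} is known to be canonically attached to the geometric data: pulling back all objects by $\psi$ and using the diffeomorphism invariance of the Gaussian Free Field, the GMC, the determinant, and above all the curvature regularization (Lemma \ref{lemcurvdiff}) gives \eqref{diffinvariance} directly.

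\textbf{Spins.} Part (4) is an immediate corollary of Corollary \ref{corospin}: the rotation $v_j\mapsto r_{\theta_j}v_j$ only affects the directions in which the defect arcs leave the marked points, and the Gauss--Bonnet argument there shows the regularized curvature term changes by $4\pi R m_j\theta_j$ per rotated vector, producing the phase $e^{-iQR\langle{\bf m},\boldsymbol\theta\rangle}$; the presence of the additional electric insertions and the shift $u_{\bf x}$ does not interfere, since $u_{\bf x}$ is single-valued and the electric operators do not see the defect graph.

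\textbf{Main obstacle.} The hard part will be the rigorous justification of the Girsanov step for imaginary electric charges and the attendant analytic continuation, since the shift $u_{\bf x}$ is purely imaginary and the naive Cameron--Martin argument does not apply directly; making the $\eps\to 0$ limit and the exchange of limit with the $c$-integral and the ${\bf k}$-summation rigorous, while controlling the GMC singularities uniformly in ${\bf k}$, is where the real analytic work lies.
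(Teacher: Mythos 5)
Your overall route coincides with the paper's: existence via an imaginary Girsanov shift justified by analytic continuation in the charges, the conformal anomaly by collecting the Weyl scaling of each ingredient (Polyakov formula \eqref{detpolyakov}, GMC scaling \eqref{relationentrenorm}, Lemmas \ref{conformal_change_reg_int}, \ref{change_conf_magnetic} and \ref{renorm_L^2}), diffeomorphism invariance from Lemma \ref{lemcurvdiff} and the standard pullback identities, and spins from Corollary \ref{corospin}. Two remarks on part (1): the regime where Cameron--Martin applies literally is purely imaginary charges $\mathbf{w}\in (i\R)^{n_{\mathfrak{e}}}$ (so that $i\mathbf{w}$ is real), not charges with a small negative imaginary part; and the holomorphicity in $\mathbf{w}$ that underlies the continuation is itself nontrivial because $M^g_\beta$ is only a distribution of order $2$, not a measure, so the paper must mollify the shift ($u_{0,\epsilon,\delta}$) and invoke the order-$2$ bound of Lemma \ref{expmomentunif} before passing $\delta\to0$ in expectation. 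You flag the continuation as the main obstacle, which is fair, but these are the concrete devices that make it work.

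The genuine gap is in your part (2). After the imaginary Girsanov shift of the curvature term, the field is translated by $-\tfrac{iQ}{2}(\rho-m_g(\rho))$ and one is left with residual factors depending on the mean $m_g(\rho)$: a prefactor $e^{-\frac{Q}{2}(\sum_j\alpha_j-\chi(\Sigma)Q)m_g(\rho)}$, a rescaled coupling $\mu e^{-\frac{\beta Q}{2}m_g(\rho)}$ multiplying the GMC potential, and a shift of the vertex insertions. In real Liouville theory these are absorbed by translating the zero mode, but here $c$ lives on the circle $\R/2\pi R\Z$ and the required translation $c\to c+\tfrac{iQ}{2}m_g(\rho)$ is complex; Lebesgue measure on the circle is not invariant under complex shifts, so "collecting the Weyl-scaling behaviour of each ingredient", as you propose, leaves these factors uncancelled and does not yield \eqref{confan}. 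The paper replaces translation invariance by a Fourier argument: writing $F$ with zero-mode frequency $n$, one expands $e^{-\mu e^{-\beta Qm_g(\rho)/2}M^{g}_\beta(\phi_g,\Sigma)}$ in powers of $e^{ip\beta c}$, the $c$-integral annihilates every term except the one with $\tfrac{n}{R}+\sum_j\alpha_j-\chi(\Sigma)Q+p\beta=0$ (this is also the source of the vanishing statement \eqref{0integral}), and precisely for that $p$ the accumulated $m_g(\rho)$-factors combine to $\exp\bigl(-\tfrac{Q}{2}m_g(\rho)(\sum_j\alpha_j-\chi(\Sigma)Q+\tfrac{n}{R}+\beta p)\bigr)=1$. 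This charge-conservation mechanism is the missing idea: without it, the shift $F(\cdot-\tfrac{iQ}{2}\rho)$ and the clean anomaly exponent in \eqref{confan} cannot be extracted, and any proof along your lines must supply it (note it also uses crucially that $F\in\mc{E}^{\rm m}_R(\Sigma)$ is a trigonometric polynomial in the zero mode).
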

\begin{proof} We split the proof in 4 parts: the existence and convergence of the path integral, the conformal anomaly and the diffeomorphism invariance.

(1) \textbf{Existence.} The condition $ \alpha_j\in \frac{1}{R} \Z$ makes sure that the product $ \prod_j V_{\alpha_j,g,\eps}(x_j)$ is in $ \mc{L}^{\infty,p}_{\rm e,m}(H^s(\Sigma))$. We will use the Cameron-Martin theorem to transform the electric insertions into singularities in the potential. There is some caveat here: this theorem applies only for real valued Gaussians whereas we face here imaginary Gaussians. We thus need to use an analytic continuation argument. The fact that $F \in \mc{E}_R(\Sigma)$ is crucial for this: indeed $F$ is polynomial, hence analytic, in linear observables of the GFF. This argument only need to be applied to the GFF expectation and that is why we only average over $\E$ below (and irrelevant factors are removed from computations). So, consider the map
$$\mathbf{w}:=(w_1,\dots,w_{n_{\mathfrak{e}}})\in \C^{n_\mathfrak{e}}\mapsto A(\mathbf{w}) $$
with $A$ defined by (the variables $c$ and ${\bf k}$ are fixed)
$$A(\mathbf{w}):=\E\Big[e^{-\frac{1}{2\pi}\langle \dd X_g,\omega_{\bf k} \rangle_2}F(\phi_g)\prod_{j=1}^{n_{\mathfrak{e}}} \epsilon^{-w_j^2/2}e^{iw_j(c+X_{g,\epsilon}(x_j))}e^{-\frac{i   Q}{4\pi}\int_\Sigma^{\rm reg} K_g\phi_g\,\dd v_g -\mu M^g_\beta(\phi_g,\Sigma)}\Big]$$
where $\phi_g=c+X_g+I^{\boldsymbol{\sigma}}_{x_0}(\omega_{\bf k})+ I^{  {\boldsymbol{\xi}}}_{x_0}( \nu^{\rm h}_{\mathbf{z},\mathbf{m}})$ is the Liouville field.
For fixed $\epsilon>0$ this quantity is obviously holomorphic on $\C^{n_\mathfrak{e}}$. %as it results from  standard complex differentiation of parametrized   expectation. 
For $w_1,\dots,w_m\in i\R$, we can use the Cameron-Martin theorem to get that
\begin{align}
&A(\mathbf{w})=   e^{-\frac{1}{2}\E[(\sum_{j=1}^{n_{\mathfrak{e}}}w_jX_{g,\epsilon}(x_j))^2]}\prod_{j=1}^{n_{\mathfrak{e}}}\epsilon^{-\frac{w_j^2}{2}}\nonumber
\\
\label{mainquant}& e^{i\sum_jw_jc}\E\Big[e^{-\frac{1}{2\pi}\langle \dd X_g+\dd u_{0,\epsilon},\omega_{\bf k} \rangle_2}F(\phi_g+u_{0,\epsilon} )e^{-\frac{i   Q}{4\pi}\int_\Sigma^{\rm reg}  K_g(\phi_g+u_{0,\epsilon})\,\dd v_g -\mu  M^g_\beta(\phi_g +u_{0,\epsilon},\Sigma)}\Big] 
\end{align}
where we have set $G_{\epsilon,\epsilon'}(x,x'):=\E[X_{g,\epsilon}(x)X_{g,\epsilon}(x')]$ (with the convention that $X_{g,0}=X_g$) and $u_{\epsilon,\epsilon'}( x):=\sum_{j=1}^{n_{\mathfrak{e}}} i w_jG_{\epsilon,\epsilon'}(x,x_j)$, which is a continuous function of $x\in\Sigma$, holomorphic in $\mathbf{w}$. Now we would like to argue that the right hand side is a holomorphic function of $\mathbf{w}$. As already explained, the fact that $F$ is polynomial in the GFF is crucial but there is a further subtlety here in the potential: we stress that $M_\beta^g$ is not a.s. a measure, but a distribution of order 2. Therefore, to apply the theorem of complex differentiation for parametrized integrals, we need to control the quantities $ \partial_x^2u_{0,\epsilon}$ uniformly over $x$ and the compact subsets in $\mathbf{w}$. The point is that, because of our regularization along geodesic circles, the partial derivatives $\partial_x^2u_{0,\epsilon}$ do not exist as functions, hence are not bounded. Therefore, it is not clear that a.s. the mapping $\mathbf{w}\mapsto M^g_\beta(\phi_g+u_{0,\epsilon},\Sigma)$ is holomorphic (recall that the dependence on $\mathbf{w}$ is encoded in the function $u_{0,\epsilon}$). Furthermore, the term $I^{\boldsymbol{\xi}}_{x_0}(\nu^{\rm h}_{\mathbf{z,m}})$ appearing in the potential is not of class $C^2$. Yet, this statement is true at the level of expectation values and this is all what we need. To prove  this, we will approximate $u_{0,\epsilon}$ by a family of $\mathbf{w}$-holomorphic and two times $x$-differentiable functions. Let us thus consider a family $(u_{0,\epsilon,\delta})_\delta$ obtained by convolution in the $x$-variable of the function  $u_{0,\epsilon}$ with a mollifying family indexed by $\delta$, which stands for the regularization scale, and such that $\sup_{\Sigma}|u_{0,\epsilon}-u_{0,\epsilon,\delta}|\to 0$ as $\delta\to 0$. Such a family is holomorphic in $\mathbf{w}$ and two times differentiable in $x$ for each fixed $\delta>0$. The fact that $I^{\boldsymbol{\xi}}_{x_0}(\nu^{\rm h}_{\mathbf{z,m}})$ is not $C^2$ is not really problematic: indeed, since it is a deterministic smooth function outside of a set of zero Lebesgue measure (and extending as a piecewise smooth function at the singularities), the singularities are not seen by the imaginary GMC. To see this, observe that $ \int_\Sigma f(x)M^g_\beta(X_g+I^{\boldsymbol{\sigma}}_{x_0}(\omega_{\bf k})+ I^{  {\boldsymbol{\xi}}}_{x_0}( \nu^{\rm h}_{\mathbf{z},\mathbf{m}}),\dd x) $ can be obtained as $L^2$ limit of regularized approximations for each smooth $f$. Now we claim:
\begin{lemma}\label{expmomentunif}
The random variable  $ M^g_\beta(X_g+I^{\boldsymbol{\sigma}}_{x_0}(\omega_{\bf k})+ I^{  {\boldsymbol{\xi}}}_{x_0}( \nu^{\rm h}_{\mathbf{z},\mathbf{m}}),\dd x) $  is a random distribution (in the sense of Schwartz) of order 2 on $\Sigma$ and there exists some $L^2$ random variable $D_\Sigma$ such that
$$\forall f\in C^\infty(\Sigma),\qquad \big|\int_\Sigma f(x)  M^g_\beta(X_g+I^{\boldsymbol{\sigma}}_{x_0}(\omega_{\bf k})+ I^{  {\boldsymbol{\xi}}}_{x_0}( \nu^{\rm h}_{\mathbf{z},\mathbf{m}}),\dd x) \big|\leq D_\Sigma({\bf k}) (\|f\|_\infty+\|\Delta_gf\|_\infty).$$
\end{lemma}
\begin{proof}
Notice that all $f\in C^\infty(\Sigma)$ can be written as $f(x)=m_g(f)+\int G_g(x,y)\Delta_gf(y)\dd {\rm v}_g( y)$, where $m_g(f)=\frac{1}{{\rm v}_g(\Sigma)}\int f(y)\dd {\rm v}_g(y)$. Then for all fixed $f$ we have

\begin{align*}
  \Big|\int_\Sigma  f(x)  M^g_\beta(X_g+I^{\boldsymbol{\sigma}}_{x_0}(\omega_{\bf k})+& I^{  {\boldsymbol{\xi}}}_{x_0}( \nu^{\rm h}_{\mathbf{z},\mathbf{m}}),\dd x)\Big| \\
=&\Big|\int_\Sigma \big(m_g(f)+\int_\Sigma G_g(x,y)\Delta_gf(y)\dd {\rm v}_g(y)\big)  M^g_\beta(X_g+I^{\boldsymbol{\sigma}}_{x_0}(\omega_{\bf k})+ I^{  {\boldsymbol{\xi}}}_{x_0}( \nu^{\rm h}_{\mathbf{z},\mathbf{m}}),\dd x)\Big|
\\
  \leq &
 |m_g(f)||M^g_\beta(X_g+I^{\boldsymbol{\sigma}}_{x_0}(\omega_{\bf k})+ I^{  {\boldsymbol{\xi}}}_{x_0}( \nu^{\rm h}_{\mathbf{z},\mathbf{m}}),\Sigma)|
 \\
 &+ \int \Big| \Delta_gf(y) \Big(\int G_g(x,y)  M^g_\beta(X_g+I^{\boldsymbol{\sigma}}_{x_0}(\omega_{\bf k})+ I^{  {\boldsymbol{\xi}}}_{x_0}( \nu^{\rm h}_{\mathbf{z},\mathbf{m}}),\dd x)\Big)\Big|\dd {\rm v}_g(y)
 \\
    \leq & ( \|f\|_\infty+\|\Delta_gf\|_\infty)\Big(|M^g_\beta(X_g+I^{\boldsymbol{\sigma}}_{x_0}(\omega_{\bf k})+ I^{  {\boldsymbol{\xi}}}_{x_0}( \nu^{\rm h}_{\mathbf{z},\mathbf{m}}),\Sigma)|
    \\
    &+\int \Big| \int G_g(x,y)  M^g_\beta(X_g+I^{\boldsymbol{\sigma}}_{x_0}(\omega_{\bf k})+ I^{  {\boldsymbol{\xi}}}_{x_0}( \nu^{\rm h}_{\mathbf{z},\mathbf{m}}),\dd x)\Big|\dd {\rm v}_g(y) \Big).
\end{align*}
Since this bound is valid almost surely for a countable dense family of $C^\infty(\Sigma)$ equipped with the norm $|f|_\infty+|\Delta_gf|_\infty$, we deduce that $ M^g_\beta(X_g+I^{\boldsymbol{\sigma}}_{x_0}(\omega_{\bf k})+ I^{  {\boldsymbol{\xi}}}_{x_0}( \nu^{\rm h}_{\mathbf{z},\mathbf{m}}),\dd x) $  is a distribution of order 2 almost surely. The random variable in the right-hand side above is our $D_\Sigma({\bf k})$.  One can easily check that it is a $L^2$ random variable: this amounts to computing the following integral (in local coordinates, and using   that  $e^{i\beta(I^{\boldsymbol{\sigma}}_{x_0}(\omega_{\bf k})+ I^{  {\boldsymbol{\xi}}}_{x_0}( \nu^{\rm h}_{\mathbf{z},\mathbf{m}}))}$ are bounded)   
\[u^2(y):=\iint_{D^2}\ln\frac{1}{|x-y|}\ln\frac{1}{|x'-y|} |x-x'|^{-\beta^2}\dd x \dd x'<+\infty.\]
This is not only obviously true since $\beta^2<2$, but also this quantity is bounded uniformly in $y$ over compact subsets.  
\end{proof}

 Next we claim that, for $\delta>0$ fixed, the expectation
\begin{equation}\label{int}
 e^{i\sum_jw_jc}\E\Big[e^{-\frac{1}{2\pi}\langle \dd X_g+\dd u_{0,\epsilon},\omega_{\bf k} \rangle_2}F(\phi_g+u_{0,\epsilon} )e^{-\frac{i   Q}{4\pi}\int^{\rm reg}_\Sigma K_g(\phi_g+u_{0,\epsilon})\,\dd v_g -\mu  M^g_\beta(\phi_g+u_{0,\epsilon,\delta} ,\Sigma)}\Big] 
\end{equation}
is holomorphic in $\mathbf{w}\in \C^{n_{\mathfrak{e}}}$. We have to check that, for any compact subset $K\subset  \C^{n_{\mathfrak{e}}}$,
$$\sup_{\mathbf{w}\in K}\E\Big[\Big|e^{-\frac{1}{2\pi}\langle \dd X_g+\dd u_{0,\epsilon},\omega_{\bf k} \rangle_2}F(\phi_g+u_{0,\epsilon} )e^{-\frac{i   Q}{4\pi}\int^{\rm reg}_\Sigma K_g(\phi_g+u_{0,\epsilon})\,\dd v_g -\mu  M^g_\beta(\phi_g+u_{0,\epsilon,\delta} ,\Sigma)}\Big|\Big] <\infty.$$
Up to using Holder inequality, this amounts to proving that
$$\sup_{\mathbf{w}\in K}\E\Big[\Big| e^{  -\mu  M^g_\beta(\phi_g+u_{0,\epsilon,\delta} ,\Sigma)}\Big|\Big] <\infty.$$
This follows from Proposition \ref{expmoment}.

 Hence our claim for the holomorphicity of \eqref{int}. Now we claim that the integral \eqref{int} converges locally uniformly with respect to $\mathbf{w}$ towards the same expression with $\delta=0$. To see this, it is enough to observe that, locally uniformly in $\mathbf{w}$,
\begin{equation}\label{int2}
\E\Big[ \Big|e^{\mu  (M^g_\beta(\phi_g+u_{0,\epsilon,\delta} ,\Sigma)-M^g_\beta(\phi_g+u_{0,\epsilon} ,\Sigma))}-1\Big|^2\Big]\to 0,\quad \text{ as }\delta\to 0. 
\end{equation}
Indeed, from Proposition \ref{expmoment}, we have for each $\alpha\in\R$
\begin{align*}
\E\Big[  &e^{\alpha  |M^g_\beta(\phi_g+u_{0,\epsilon,\delta} ,\Sigma)-M^g_\beta(\phi_g+u_{0,\epsilon} ,\Sigma)|} \Big]\\
\leq &
e^{C\alpha  \|e^{i\beta u_{0,\epsilon,\delta}}-e^{i\beta u_{0,\epsilon}}\|_\infty}(1+C\alpha \|e^{i\beta u_{0,\epsilon,\delta}}-e^{i\beta u_{0,\epsilon}}\|_\infty e^{C\alpha^2  \|e^{i\beta u_{0,\epsilon,\delta}}-e^{i\beta u_{0,\epsilon}}\|_\infty^2})
\end{align*}
and the latter estimate goes to $1$ as $\delta\to 0$ locally uniformly in $\mathbf{w}$. The claim \eqref{int2} follows. % use     $|e^z-1|\leq e^{|z|}-1$
 In conclusion the right hand side of \eqref{mainquant} defines a holomorphic quantity of $\mathbf{w}\in\C^{n_{\mathfrak{e}}}$. So does the left hand side, and both sides coincide on $(i\R)^{n_{\mathfrak{e}}}$, therefore on $\R^{n_{\mathfrak{e}}}$.
 
Next, we want to take $\mathbf{w}=\boldsymbol{\alpha}$  in  \eqref{mainquant}, integrate  over $c$ and ${\bf k}$, and then pass to the limit $\epsilon\to 0$ in the right hand side to give sense to the limit of the left hand side. The limit candidate is the same expression with $\epsilon=0$. The main issue is to make sense of the limit of the potential $ M^g_\beta(\phi_g+u_{0,\epsilon} ,\Sigma)$. In the limit, the contribution from $u_{0,\epsilon}$ will create a singularity in the surface $\Sigma$ and we have to show that we can integrate $M^g_\beta$ against those singularities. Actually, it is not clear that we can make sense of $\int_\Sigma e^{-\beta\sum_j\alpha_jG_g(x,x_j)} M^g_\beta(\phi_g,\dd x)$ almost surely for all possible values of the $x_j$'s, as we have an understanding of $M^g_\beta$ only as a distribution of order $2$. Yet, since we fix $x_1,\dots,x_{n_{\mathfrak{e}}}$, we can still make sense of this quantity on average. Indeed, under the condition \eqref{seiberg}, it is plain to see that the family $(M^g_\beta(\phi_g+u_{0,\epsilon} ,\Sigma))_\epsilon$ is Cauchy in $L^2$ and converges towards a random variable denoted by $M^g_\beta(\phi_g+u_{0,0} ,\Sigma)$ satisfying
\begin{multline*}
\E[|M^g_\beta(\phi_g+u_{0,0} ,\Sigma)|^2]\\
=\iint_{\Sigma^2}e^{i\beta (u_{0,0}(x)-  u_{0,0}(y))+\beta^2G_g(x,y)-\frac{\beta^2}{2}(W_g(x)+W_g(y))}e^{i\beta (I^{\boldsymbol{\sigma}}_{x_0}(\omega_{\bf k})+  I^{  {\boldsymbol{\xi}}}_{x_0}( \nu^{\rm h}_{\mathbf{z},\mathbf{m}}))(x)-i\beta (I^{\boldsymbol{\sigma}}_{x_0}(\omega_{\bf k})+ I^{  {\boldsymbol{\xi}}}_{x_0}( \nu^{\rm h}_{\mathbf{z},\mathbf{m}}))(y)}\dd {\rm v}_g(x)\dd {\rm v}_g(y).
\end{multline*}
The control of this integral uses the elementary computation  that the integral
\begin{equation}\label{elem}
\int_{|x|,|y| \leq 1}  \frac{|x|^{\beta\alpha}|y|^{\beta\alpha }\dd x\dd y}{|x-y|^{\beta^2}},
 \end{equation}
  is finite provided that $\beta\alpha>-2 + \frac{\beta^2}{2}$, i.e. $\alpha>  Q$. Furthermore, using Fatou's lemma in Proposition \ref{expmoment} gives the estimate for $\alpha\in\R$
  \begin{equation}
\E\Big[\exp\Big(\alpha |M^g_\beta(\phi_g+u_{0,0} ,\Sigma)|\Big)\Big]\leq   e^{C\alpha v}\big(1+C\alpha u\exp(C\alpha^2u^2)\big)
\end{equation}
with
$$u^2:=\iint_{\Sigma^2}e^{i\beta (u_{0,0}(x)+  u_{0,0}(y))+\beta^2G_g(x,y) }{\rm v}_g(\dd x){\rm v}_g(\dd y),\quad \text{ and }\quad v:=\int_{\Sigma}e^{i\beta u_{0,0}(x) }{\rm v}_g(\dd x).$$
Let us write $\T_R:=\R/2\pi R\Z$. Using Holder, we can then bound the difference between regularized amplitudes and  their candidate for the limit, call $\Delta_\epsilon$ this difference,  
\begin{align*}
|\Delta_\epsilon|\leq  &C\sum_{{\bf k}\in \Z^{2\mathfrak{g}}} e^{-\frac{1}{4\pi}\|\omega _{\bf k}\|_2^2-\frac{1}{4\pi}\|\nu^{\rm h}_{\mathbf{z},\mathbf{m}}\|^2_{g,0}-\frac{1}{2\pi}\langle \omega_{\bf k},\nu^{\rm h}_{\mathbf{z},\mathbf{m}}\rangle_2} \big(R_{\epsilon}^1+R_{\epsilon}^2+R_{\epsilon}^3\big)
\end{align*}
with
\begin{align*}
R_{\epsilon}^1:=&e^{-\frac{1}{2\pi}\langle  \dd u_{0,\epsilon},\omega_{\bf k}\rangle_2} \Big(\int_{\T_R}\E\Big[e^{-\frac{1}{2\pi}\langle \dd X_g ,\omega_{\bf k} \rangle_2 }|F(\phi_g+u_{0,\epsilon})-F(\phi_g+u_{0,0})|^p\Big]\,\dd c\Big)^{1/p}
\\
& \Big(\int_{\T_R}\E\Big[e^{-\frac{1}{2\pi}\langle \dd X_g ,\omega_{\bf k} \rangle_2 }|e^{-\mu  M^g_\beta(\phi_g+u_{0,\epsilon},\Sigma)}|^q\Big]\,\dd c\Big)^{1/q}\\
R_{\epsilon}^2:=&e^{-\frac{1}{2\pi}\langle  \dd u_{0,0},\omega_{\bf k}\rangle_2} \Big(\int_{\T_R}\E\Big[e^{-\frac{1}{2\pi}\langle \dd X_g ,\omega_{\bf k} \rangle_2 }|F(\phi_g+u_{0,
0}) |^p\Big]\,\dd c\Big)^{1/p}
\\
& \Big(\int_{\T_R}\E\Big[e^{-\frac{1}{2\pi}\langle \dd X_g ,\omega_{\bf k} \rangle_2 }|e^{-\mu  M^g_\beta(\phi_g+u_{0,\epsilon},\Sigma)}-e^{-\mu  M^g_\beta(\phi_g+u_{0,0},\Sigma)}|^q\Big]\,\dd c\Big)^{1/q}\Big)
\\
R_{\epsilon}^3:=&
\big|e^{-\frac{1}{2\pi}\langle  \dd u_{0,\epsilon},\omega_{\bf k}\rangle_2}  - e^{-\frac{1}{2\pi}\langle  \dd u_{0,0},\omega_{\bf k}\rangle_2}\big|
\Big(\int_{\T_R}\E\Big[e^{-\frac{1}{2\pi}\langle \dd X_g ,\omega_{\bf k} \rangle_2 }|F(\phi_g+u_{0,\epsilon})-F(\phi_g+u_{0,0})|^p\Big]\,\dd c\Big)^{1/p}
\\
& \Big(\int_{\T_R}\E\Big[e^{-\frac{1}{2\pi}\langle \dd X_g ,\omega_{\bf k} \rangle_2 }|e^{-\mu  M^g_\beta(\phi_g+u_{0,\epsilon},\Sigma)}|^q\Big]\,\dd c\Big)^{1/q}\Big)
\end{align*}
In the term $R_{\epsilon}^1$ and $R_{\epsilon}^2$ above, there are two trivial terms $e^{-\frac{1}{2\pi}\langle  \dd u_{0,\epsilon},\omega_{\bf k}\rangle_2}  $ and $e^{-\frac{1}{2\pi}\langle  \dd u_{0,0},\omega_{\bf k}\rangle_2} $, which we bound by $Ce^{C|{\bf k}|}$ for some constant $C>0$ uniformly in $\epsilon$. In $R^3_{\epsilon}$, the difference $\big|e^{-\frac{1}{2\pi}\langle  \dd u_{0,\epsilon},\omega_{\bf k}\rangle_2}  - e^{-\frac{1}{2\pi}\langle  \dd u_{0,0},\omega_{\bf k}\rangle_2}\big|$ is bounded by $Ce^{C|{\bf k}|}(e^{C|{{\bf k}}|o(1)}-1)$ (using Landau notation as $\epsilon\to 0$).

Next, after using the Girsanov transform in $R_{\epsilon}^1$, the first  integral term  is bounded by $e^{\frac{1}{4\pi p}\|\dd f_{\bf k}\|_2^2}\|F(\cdot+u_{0,\epsilon}-u_{0,0}) -F(\cdot)\|_{\mc{L}^{\infty,p}_{\rm e,m}}$ (recall that $\dd f_{\bf k}=(1-\Pi_1)\omega_{\bf k}$) by definition of $\|F(\cdot)\|_{\mc{L}^{\infty,p}_{\rm e,m}}$. It is straightforward to check   that $\|F(\cdot+u_{0,\epsilon}-u_{0,0}) -F(\cdot)\|_{\mc{L}^{\infty,p}_{\rm e,m}}\to 0$  as $\epsilon\to 0$ for $F\in \mc{E}^{\rm e,m}_R(\Sigma)$: indeed,  this follows from the fact that $u_{0,\epsilon}\to u_{0,0}$ in $H^{s}(\Sigma,g)$ for $s<1$ and from the fact that $F(f)$ depends on $f$ in terms of a polynomial in the variables $( f,g_1)_s,\dots, ( f,g_n)_s$ for some functions $g_1,\dots, g_n\in H^{-s}(\Sigma)$. The second integral in $R_{\epsilon}^1$ is bounded by $Ce^{\frac{1}{4\pi q}\|\dd f_{\bf k}\|_2^2}$ for some universal constant $C$  as a result of the Girsanov transform and Proposition \ref{expmoment}.  

Concerning $R_{\epsilon}^2$, the first integral is bounded by $e^{\frac{1}{4\pi p}\|\dd f_{\bf k}\|_2^2}\|F(\cdot)\|_{\mc{L}^{\infty,p}_{\rm e,m}}$, similarly to the first integral in $R_{\epsilon}^1$.
The main problem lies in evaluating the last integral. First, using the Girsanov transform in the first line and then Holder inequality for conjugate exponents $p_1,p_2$, we bound 
\begin{align*}
\sup_{c} \E\Big[&e^{-\frac{1}{2\pi}\langle \dd X_g ,\omega_{\bf k} \rangle_2 }|e^{-\mu  M^g_\beta(\phi_g+u_{0,\epsilon},\Sigma)}-e^{-\mu  M^g_\beta(\phi_g+u_{0,0},\Sigma)}|^q\Big]
 \\
 \leq & \sup_{c}e^{ \frac{1}{4\pi}\|\dd f_{\bf k}\|^2_2 }\E\Big[  |e^{-\mu  M^g_\beta(\phi_g+u_{0,\epsilon}+f_{\bf k},\Sigma)}-e^{-\mu  M^g_\beta(\phi_g+u_{0,0}+f_{\bf k},\Sigma)}|^q\Big]
  \\
 \leq & \sup_{c,{}}e^{ \frac{1}{4\pi}\|\dd f_{\bf k}\|^2_2 }\E\Big[  |e^{-\mu  M^g_\beta(\phi_g +u_{0,0}+f_{\bf k},\Sigma)}|^{qp_1}\Big]^{1/p_1}\E\Big[  |e^{-\mu  \big(M^g_\beta(\phi_g+u_{0,\epsilon}+f_{\bf k},\Sigma)-  M^g_\beta(\phi_g+u_{0,0}+f_{\bf k},\Sigma)\big)}-1|^{qp_2}\Big]^{1/p_2}.
\end{align*}
The first expectation above is bounded by constant (independent of $c,{\bf k}$) by Proposition \ref{expmoment} and as shown above. Now we focus on the second. Recall first the trivial inequality $|e^z-1|\leq e^{|z|}-1$ for $z\in\C$, and then $(e^u-1)^q\leq C(e^{qu}-1)$ for $u\in\R_+$ and $q>1$. Therefore the second expectation is bounded by 
$$C\E\Big[  e^{|\mu| qp_2 \big|M^g_\beta(\phi_g+u_{0,\epsilon}+f_{\bf k},\Sigma)-  M^g_\beta(\phi_g+u_{0,0}+f_{\bf k},\Sigma)\big|}-1 \Big]^{1/p_2},$$
which is bounded by (using Prop \ref{expmoment})
$$\Big( e^{C\alpha v}\big(1+C\alpha u\exp(C\alpha^2u^2)\big)-1\Big)^{1/p_2}
$$
with $\alpha=|\mu|qp_2$, $v:=\int_{\Sigma}|e^{i\beta u_{0,\epsilon}(x) }-e^{i\beta u_{0,0}(x) }|{\rm v}_g(\dd x)$ and
$$u^2:=\iint_{\Sigma^2}|e^{i\beta u_{0,\epsilon}(x) }-e^{i\beta u_{0,0}(x) }||e^{i\beta u_{0,\epsilon}(y) }-e^{i\beta u_{0,0}(y) }|e^{\beta^2G_g(x,y) }{\rm v}_g(\dd x){\rm v}_g(\dd y).$$
This quantity goes to $0$ as $\epsilon\to 0$ as a simple consequence of Lebesgue dominated convergence (recall $\beta^2<2$). 

For $R_{\epsilon}^3$, the two integral terms are bounded by $C  e^{\frac{1}{4\pi }\|\dd f_{\bf k}\|_2^2}\|F(\cdot)\|_{\mc{L}^{\infty,p}_{\rm e,m}}$, as above. Overall, we have the bound $R_{\epsilon}^3\leq C e^{\frac{1}{4\pi }\|\dd f_{\bf k}\|_2^2}\|F(\cdot)\|_{\mc{L}^{\infty,p}_{\rm e,m}} e^{C|{\bf k}|}(e^{C|{{\bf k}}|o(1)}-1)$.

Gathering these estimates, we deduce  (using the estimate $e^{-\frac{1}{2\pi}\langle \omega_{\bf k},\nu^{\rm h}_{\mathbf{z},\mathbf{m}}\rangle_2}\leq e^{C |{\bf k}|}$ for some $C>0$) 
\begin{align*}
|\Delta_\epsilon|\leq  &C\sum_{{\bf k}\in \Z^{2\mathfrak{g}}} e^{-\frac{1}{4\pi}\|\Pi_1\omega _{\bf k}\|_2^2 +C|{\bf k}|}\Big(\|F(\cdot+u_{0,\epsilon}-u_{0,0}) -F(\cdot)\|_{\mc{L}^{\infty,p}_{\rm e,m}}+(C_\epsilon+e^{C_\epsilon|{\bf k}|}-1)\|F(\cdot)\|_{\mc{L}^{\infty,p}_{\rm e,m}} \Big)
\end{align*}
for some constant $C_\epsilon$ such that $\lim_{\epsilon\to 0}C_\epsilon=0$. Therefore, up to the multiplicative factor $\big(\frac{{\rm v}_{g}(\Sigma)}{{\det}'(\Delta_{g})}\big)^\hf$ that is harmless, the regularized correlation functions in the right hand side of \eqref{defcorrelg} converge   as $\epsilon\to 0$ towards 
\begin{align}
&e^{-\frac{1}{2} \sum_j\alpha_j^2W_{g}(x_j)-\sum_{j<j'}\alpha_j\alpha_{j'}G_g(x_j,x_{j'})}  \sum_{{\bf k}\in \Z^{2\mathfrak{g}}}e^{-\frac{1}{4\pi}(\|\omega _{\bf k}\|_2^2+\|\nu^{\rm h}_{\mathbf{z},\mathbf{m}}\|^2_{g,0}+2\langle \omega_{\bf k},\nu^{\rm h}_{\mathbf{z},\mathbf{m}}\rangle_2)}\prod_{j}e^{i\alpha_j I_{x_0}(\omega_{\bf k}+\nu^{\rm h}_{\mathbf{z,m}})(x_j)}
\label{defelec}
\\
& \times \int_{\T_R}e^{i\sum_j\alpha_jc} \E\Big[e^{-\frac{1}{2\pi}\langle \dd X_g+\dd u_{0,0},\omega_{\bf k} \rangle_2}F(\phi_g+u_{0,0})e^{(-\frac{i   Q}{4\pi}\int_\Sigma^{\rm reg} K_g(\phi_g+u_{0,0})\,\dd v_g -\mu  M^g_\beta(\phi_g+u_{0,0},\Sigma))}\Big]\,\dd c,\nonumber
\end{align}
expression that we take as a definition of  $\cjg  F V_{(\boldsymbol{\alpha},0)}^{g}(\mathbf{x})V^g_{(0,\mathbf{m})}({\bf v})  \cjd_ {\Sigma, g}$. This expression extends  to functionals $F\in \mc{L}^{\infty,p}_{\rm e,m}(H^s(\Sigma))$ for $s<-1$.\\

(2) \textbf{Conformal anomaly.} Next, we prove the conformal anomaly. The argument is similar to \cite[Prop. 4.4]{GRV}, so we sketch the proof up to the crucial argument, following \cite[Prop. 4.4]{GRV}.
But first of all, and in order to simplify the proof, let us recall that the path integral is invariant under change of cohomology basis. It will then be convenient to choose a basis of harmonic 1-forms, hence the $\omega_{\bf k}$'s are harmonic in the following.
Let $g'=e^{\rho}g$ conformal to $g$. It suffices to consider the case $F\in  \mc{E}^{\rm m}_R(\Sigma)$. We recall the equality in law $X_{g'}=X_g-m_{g'}(X_g)$ with $m_{g'}(f):=\frac{1}{{\rm v}_{g'}(\Sigma)}\int_\Sigma f \dd {\rm v}_g$ (see \cite[Lemma 3.1]{GRV}). Using invariance under translations of the Lebesgue measure on the circle, we deduce
\begin{align*}
\cjg  F  V_{(\boldsymbol{\alpha},0)}^{g'}(\mathbf{x})V^{g'}_{(0,\mathbf{m})}({\bf v})  \cjd_ {\Sigma, g'}=& \big(\frac{{\rm v}_{g'}(\Sigma)}{{\det}'(\Delta_{g'})}\big)^\hf\sum_{{\bf k}\in \Z^{2\mathfrak{g}}}e^{-\frac{1}{4\pi}\|\omega _{\bf k}\|_2^2-\frac{1}{4\pi}\|\nu^{\rm h}_{\mathbf{z},\mathbf{m}}\|^2_{g',0}-\frac{1}{2\pi}\langle \omega_{\bf k},\nu^{\rm h}_{\mathbf{z},\mathbf{m}}\rangle_2}\\ 
 &\times 
 \int_{\T_R}\E\Big[V_{(\boldsymbol{\alpha},0)}^{g' }(\phi_g,\mathbf{x}) F(\phi_g)e^{-\frac{i   Q}{4\pi}\int_\Sigma^{\rm reg} K_{g'}\phi_g\,\dd v_{g'} -\mu  M^{g'}_\beta(\phi_g,\Sigma)}\Big]\,\dd c.
\end{align*}
The point, in the expression above, is that we are integrating the Liouville field $\phi_g=c+X_g+I_{x_0}(\omega_{\bf k}+\nu^{\rm h}_{\mathbf{z,m}})$ in the path integral regularized in the metric $g'$. So we have to remove every $g'$-dependency. We treat first the curvature term. For this we need to use Lemma \ref{change_conf_magnetic}.
Using this Lemma, the relations \eqref{relationentrenorm} and $K_{g'}=e^{-\rho}(K_g+\Delta_{g}\rho)$ and Lemma \ref{conformal_change_reg_int} (and note that $\langle \dd \rho,\omega_{\bf k}\rangle_2=0$ because $\omega_{\bf k}$ is harmonic), we deduce
\begin{align*}
\cjg  F V_{(\boldsymbol{\alpha},0)}^{g'}(\mathbf{x})V^{g'}_{(0,\mathbf{m})}({\bf v})  \cjd_ {\Sigma, g'}=   
&\big(\frac{{\rm v}_{g'}(\Sigma)}{{\det}'(\Delta_{g'})}\big)^\hf\sum_{{\bf k}\in \Z^{2\mathfrak{g}}}e^{-\frac{1}{4\pi}\|\omega _{\bf k}\|_2^2-\frac{1}{4\pi}\|\nu^{\rm h}_{\mathbf{z},\mathbf{m}}\|^2_{g',0}-\frac{1}{2\pi}\langle \omega_{\bf k},\nu^{\rm h}_{\mathbf{z},\mathbf{m}}\rangle_2}
 \\ 
 & \int_{\T_R}\E\Big[e^{-\frac{i Q}{4\pi}\int_\Sigma \Delta_g\rho X_g\,\dd {\rm v}_g}V_{(\boldsymbol{\alpha},0)}^{g' }(\phi_g,\mathbf{x}) F(\phi_g)e^{-\frac{i   Q}{4\pi}\int_\Sigma^{\rm reg} K_{g}\phi_g  \,\dd v_{g} -\mu  M^{g}_\beta(\phi_g+i\tfrac{  Q}{2}\rho,\Sigma)}\Big]\,\dd c.
\end{align*}
The same argument of analytic continuation as before allows us to use the (imaginary) Cameron-Martin theorem with the term $e^{-\frac{i Q}{4\pi}\int_\Sigma \Delta_g\rho X_g\,\dd {\rm v}_g}$: the field $X_g$ in the above expression is then replaced by $X_g-\tfrac{i  Q}{2}(\rho-m_g(\rho))$ and the variance of this transform is 
$$\frac{  Q^2}{16\pi^2}\iint_{\Sigma^2} \Delta_g\rho (x) G_g(x,x') \Delta_g\rho(x') \dd {\rm v}_g(x) \dd {\rm v}_g(x')=\frac{ Q^2}{8\pi}\int_\Sigma|d\rho|_g^2{\rm dv}_g.$$
Therefore, using \eqref{detpolyakov} and Lemma \ref{renorm_L^2} to transform both the det term and the regularized norm, we deduce
\begin{align*}
&\cjg  F  V_{(\boldsymbol{\alpha},0)}^{g'}(\mathbf{x})V^{g'}_{(0,\mathbf{m})}({\bf v})  \cjd_ {\Sigma, g'}=\\    
&e^{\frac{1-6  Q^2}{96\pi}\int_{\Sigma}(|d\rho|_g^2+2K_g\rho) {\rm dv}_g+\sum_j\tfrac{Q\alpha_j}{2}\rho(x_j)-\sum_j  \frac{R^2 m_j^2}{4}\rho(z_j)} \big(\frac{{\rm v}_{g}(\Sigma)}{{\det}'(\Delta_{g})}\big)^\hf\sum_{{\bf k}\in \Z^{2\mathfrak{g}}}e^{-\frac{1}{4\pi}\|\omega _{\bf k}\|_2^2-\frac{1}{4\pi}\|\nu^{\rm h}_{\mathbf{z},\mathbf{m}}\|^2_{g,0}-\frac{1}{2\pi}\langle \omega_{\bf k},\nu^{\rm h}_{\mathbf{z},\mathbf{m}}\rangle_2}\\ 
& \int_{\T_R}\E\Big[ V_{(\boldsymbol{\alpha},0)}^{g' }(\phi_g+i\tfrac{Q}{2}m_g(\rho),\mathbf{x}) F(\phi_g-\tfrac{i  Q}{2}(\rho-m_g(\rho)))e^{-\frac{i   Q}{4\pi}\int_\Sigma^{\rm reg} K_{g}(\phi_g+i\frac{Q}{2}m_g(\rho))\,\dd v_{g} -\mu  M^{g}_\beta(\phi_g+i\frac{Q}{2}m_g(\rho),\Sigma)}\Big]\,\dd c.
\end{align*}
Note now that  the vertex operator $V_{\alpha_j,g'}(\phi_g,x_j) $ is not regularized in the metric $g$, but $g'$ instead. Repeating the argument for the construction of the correlation functions before,  we see that this only affects the variance in the Cameron-Martin theorem. Otherwise stated, a straightforward consequence of  \eqref{varYg} is the relation
\begin{equation}\label{scalingvertex}
V_{\alpha,g'}(\phi_g,x)=  e^{-\frac{\alpha^2}{4} \rho(x)} V_{\alpha,g}(\phi_g,x)
\end{equation}
 when plugging this relation into the expectation. In conclusion, and using Gauss-Bonnet for the constant in the curvature term, we get
 \begin{align}\label{exptodev}
& \cjg  F  V_{(\boldsymbol{\alpha},0)}^{g'}(\mathbf{x})V^{g'}_{(0,\mathbf{m})}({\bf v})  \cjd_ {\Sigma, g'}
=\\
&  e^{\frac{1-6  Q^2}{96\pi}\int_{\Sigma}(|d\rho|_g^2+2K_g\rho) {\rm dv}_g-\sum_j\Delta_{\alpha_j}\rho(x_j)-\sum_j  \frac{R^2 m_j^2}{4}\rho(z_j)}  
\big(\frac{{\rm v}_{g}(\Sigma)}{{\det}'(\Delta_{g})}\big)^\hf\sum_{{\bf k}\in \Z^{2\mathfrak{g}}}e^{-\frac{1}{4\pi}\|\omega _{\bf k}\|_2^2-\frac{1}{4\pi}\|\nu^{\rm h}_{\mathbf{z},\mathbf{m}}\|^2_{g,0}-\frac{1}{2\pi}\langle \omega_{\bf k},\nu^{\rm h}_{\mathbf{z},\mathbf{m}}\rangle_2}
 \nonumber\\ 
 & \int_{\T_R}e^{-\frac{  Q}{2}(\sum_j\alpha_j-\chi(\Sigma)  Q)m_g(\rho)}\E\Big[ V_{(\boldsymbol{\alpha},0)}^{g }(\phi_g,\mathbf{x}) F(\phi_g-\tfrac{i  Q}{2}(\rho-m_g(\rho)))e^{-\frac{i   Q}{4\pi}\int_\Sigma^{\rm reg} K_{g}\phi_g \,\dd v_{g} -\mu e^{-\frac{Q\beta}{2}m_g(\rho)} M^{g}_\beta(\phi_g,\Sigma)}\Big]\,\dd c.\nonumber
\end{align}
In the case of standard Liouville theory, the further terms involving $m_g(\rho)$ are absorbed thanks to invariance of the Lebesgue measure under translations. This argument fails to work here: indeed it would  require  the Lebesgue measure (on the circle) to be invariant under complex shifts $c\to c+i a$ for $a$ real, which of course does not hold. We explain now how it works and, basically, translation invariance of the Lebesgue measure is replaced by a Fourier type argument. The function $F$ is a linear combination of the form \eqref{polytrigo}. So it is enough to consider $F$ of the form $F(c,\phi)=e^{  i n c/R}P_k(\phi)G(e^{i\frac{1}{R}I_{x_0}  }) $ (writing $I_{x_0}$ as a shortcut for $I_{x_0}(\omega_{\bf k}+\nu^{\rm h}_{\mathbf{z,m}})$). Expand now  the term  
$$e^{-\mu e^{-\frac{\beta  Q}{2} m_g(\rho) }M^{g}_\beta(\phi_g,\Sigma)}=\sum_{p=0}^\infty (-1)^p\frac{\mu^p}{p!}e^{ip\beta c}e^{-p\frac{\beta  Q}{2} m_g(\rho) }M^{g}_\beta(X_g+I_{x_0},\Sigma)^p$$
and plug this relation into \eqref{exptodev}. Performing the $c$-integral preserves only at most one term in the summation over $p$, i.e. the term corresponding to  $ \frac{1}{R}n+ \sum_j\alpha_j-\chi(\Sigma)  Q +p\beta=0$, if it exists. As a side remark, notice that this argument also shows that for $F(c,\phi)=e^{  i n c/R}P_k(\phi)G(e^{\frac{i }{R}I_{x_0}})$ we have 
\begin{equation}\label{0integral}
\forall p\in \N_0,\,\,  \frac{1}{R}n+ \sum_j\alpha_j-\chi(\Sigma)  Q +p\beta\not=0 \Longrightarrow \cjg  F  V_{(\boldsymbol{\alpha},0)}^{g'}(\mathbf{x})V^{g}_{(0,\mathbf{m})}({\bf v})  \cjd_ {\Sigma, g}=0.
\end{equation}
For this $p$, the contribution of all the terms involving $m_g(\rho)$ is a multiplicative factor given by
$$\exp\Big(-\frac{  Q}{2}m_g(\rho)\big(\sum_j\alpha_j-\chi(\Sigma)  Q+\frac{1}{R}n +\beta p\big)\Big).$$
But the condition above on $p$ implies that this term equals $1$. Therefore we end up with the final expression
\begin{align*}
 & \cjg  F  V_{(\boldsymbol{\alpha},0)}^{g'}(\mathbf{x})V^{g'}_{(0,\mathbf{m})}({\bf v})  \cjd_ {\Sigma, g'}=\\
&  e^{\frac{1-6  Q^2}{96\pi}\int_{\Sigma}(|d\rho|_g^2+2K_g\rho) {\rm dv}_g-\sum_j\Delta_{\alpha_j}\rho(x_j)-\sum_j  \frac{R^2 m_j^2}{4}\rho(z_j)} \times\big(\frac{{\rm v}_{g}(\Sigma)}{{\det}'(\Delta_{g})}\big)^\hf\sum_{{\bf k}\in \Z^{2\mathfrak{g}}}e^{-\frac{1}{4\pi}\|\omega _{\bf k}\|_2^2-\frac{1}{4\pi}\|\nu^{\rm h}_{\mathbf{z},\mathbf{m}}\|^2_{g,0}-\frac{1}{2\pi}\langle \omega_{\bf k},\nu^{\rm h}_{\mathbf{z},\mathbf{m}}\rangle_2}
 \nonumber\\ 
 & \times \int_{\T_R} \E\Big[ V_{(\boldsymbol{\alpha},0)}^{g }(\phi_g,\mathbf{x}) F(\phi_g-\tfrac{i  Q}{2}\rho )e^{-\frac{i   Q}{4\pi}\int_\Sigma^{\rm reg} K_{g}\phi_g \,\dd v_{g} -\mu M^{g}_\beta(\phi_g,\Sigma)}\Big]\,\dd c,\nonumber
\end{align*}
as claimed.  

(3) \textbf{Diffeomorphism invariance.} We turn now to the diffeomorphism invariance. We consider \eqref{defelec} in the metric $\psi^*g$ and we want to reformulate it in the metric $g$. For this, several observations are needed. First, as orientation preserving diffeomorphism preserve canonical basis, the natural choice of homology basis for \eqref{defelec} in the   metric $\psi^*g$ is $\psi^*\sigma$, with dual basis $\psi^*\omega_1,\dots,\psi^*\omega_{2\mathfrak{g}}$. Then $I_{x_0}^{\psi^*\sigma}(\psi^*\omega_{\bf k})=I_{\psi(x_0)}^{\sigma}(\omega_{\bf k})\circ\psi$. Similarly for the magnetic operators, the defect graph $\mc{D}_{\mathbf{v},\boldsymbol{\xi}}$ is mapped by $\psi$ to $\mc{D}_{\psi_*\mathbf{v},\psi\circ\boldsymbol{\xi}}$. Thus we deduce that 
$I_{x_0}^{\boldsymbol{\xi}}(\psi^*\nu_{\bf z,m}^{\rm h})=I_{\psi(x_0)}^{\psi\circ\boldsymbol{\xi}}(\nu_{\bf z,m}^{\rm h})\circ\psi$. 
Then we note  the standard relations
$$G_{\psi^*g}(x,y)=G_g(\psi(x),\psi(y)),\quad K_{\psi^*g}(x)=K_g(\psi(x)),\quad X_{\psi^*g} \stackrel{law}{=}X_g\circ\psi.$$
In particular $W_{\psi^*g}(x)=W_g(\psi(x))$ and, combining with the relations just above for the primitives  $I_{x_0}^{\psi^*\sigma}(\psi^*\omega_{\bf k})$ and $I_{x_0}^{\boldsymbol{\xi}}(\psi^*\nu_{\bf z,m}^{\rm h})$, we also obtain $M^{\psi^*g}_\beta(\phi_{\psi^*g}+u^{\psi^*g,\mathbf{x}}_{0,0},\Sigma)=M^{\psi^*g}_\beta(\phi_g+u^{g,\psi(\mathbf{x})}_{0,0},\Sigma)$, where we have made explicit the dependence on $g,\mathbf{x}$ of $u_{0,0}$ in the notations. Combining again with Lemma \ref{lemcurvdiff} for the curvature term, we get the result.

(4) \textbf{Spins.}  The spin property results from Corollary \ref{corospin} since the regularized electric operators are in  $\mc{E}^{\rm m}_R(\Sigma)$.
\end{proof}

 %%%%%%%%%%%%%%%%%%%%%%%%%%%%%%%%%%%%%%%%%%%%%%%%%%%%%%%%%%%%%%%%%%%%%%%%%%%%%%%%%%
\subsubsection{Electro-magnetic operators}
%%%%%%%%%%%%%%%%%%%%%%%%%%%%%%%%%%%%%%%%%%%%%%%%%%%%%%%%%%%%%%%%%%%%%%%%%%%%%%%%%%
We complete this section with the operators that will be of utmost importance to describe the spectrum of this path integral: the electro-magnetic operators.  Basically they are obtained by merging the positions $\mathbf{x}$ and $\mathbf{z}$  in Proposition \ref{limitcorel}. So, the setup is the same as previously with the further condition that the numbers of electric or magnetic charges are the same, i.e. $n_{\mathfrak{e}}=n_{\mathfrak{m}}$.

The path integral with electro-magnetic operators is defined by the limit
\begin{equation}\label{defcorrelmixed}
\cjg   F  V^g_{(\alpha,\mathbf{m})}({\bf v})  \cjd_ {\Sigma, g}:=\lim_{t\to 1} \: \cjg F V_{(\boldsymbol{\alpha},0)}^{g}(\mathbf{x}(t)) V^g_{(0,\mathbf{m})}({\bf v}) \cjd_ {\Sigma, g}
\end{equation}
for $F \in   \mc{E}^{\rm m}_R(\Sigma)$ (with ${\bf x}={\bf z}$), where ${\bf x}(t)=(x_1(t),\dots,x_{n_{\mathfrak{m}}}(t))$ with
 $t\in [0,1]\mapsto x_j(t)$ being any $C^1$ curve such that $x_j(1)=z_j$ and $\dot{x}_j(1)=v_j$. 
 Indeed, the quantity in the right hand side only has a limit when $x_j\to z_j$ along a fixed direction, because of the winding around the points $\mathbf{z}$. This is why we need to fix a direction $v_j$ when $x_j$ approaches $z_j$.

 \begin{theorem}\label{limitcorelmixed} 
 Under the conditions \eqref{seiberg}, the limit \eqref{defcorrelmixed} exists. Moreover the mapping $F\in  \mc{E}^{\rm m}_R(\Sigma)\mapsto \cjg  F  V^g_{(\boldsymbol{\alpha},\mathbf{m})}({\bf v})  \cjd_ {\Sigma, g}$ satisfies the following properties:
\begin{enumerate}
\item {\bf Existence:} It is well-defined and extends to $F\in \mc{L}^{\infty,p}(H^s(\Sigma))$ for $s<0$. For $F=1$, it  gives the correlation functions  $\cjg    V^g_{(\alpha,\mathbf{m})}({\bf v})  \cjd_ {\Sigma, g}$.
\item {\bf Conformal anomaly:} let $g,g'$ be two conformal metrics on the closed Riemann surface $\Sigma$ with $g'=e^{\rho}g$ for some $\rho\in C^\infty(\Sigma)$. Then we have
\begin{align}\label{confanmixed} 
& \frac{\big\cjg  F   V^{g'}_{(\boldsymbol{\alpha},\mathbf{m})}({\bf v})  \big\cjd_ {\Sigma, g'}}
 {\big\cjg   F(\cdot- \tfrac{i  Q}{2}\rho)  V^g_{(\boldsymbol{\alpha},\mathbf{m})}({\bf v}) \big \cjd_ {\Sigma, g}}
 =  
\exp\Big(\frac{{\bf c}}{96\pi}\int_{\Sigma}(|d\rho|_g^2+2K_g\rho) {\rm dv}_g-\sum_{j=1}^{n_{\mathfrak{e}}}\Delta_{(\alpha_j,m_j)}\rho(z_j)\Big)
\end{align}
where the conformal weights $\Delta_{\alpha,m}$ are given by \eqref{deltaalphadef} and the central charge is ${\bf c}:=1-6 Q^2 $.
\item {\bf Diffeomorphism invariance:} let $\psi:\Sigma'\to \Sigma$ be an orientation preserving diffeomorphism. Then  
$$
\big\cjg  F (\phi_{\psi^*g}) V_{(\boldsymbol{\alpha},\mathbf{m})}^{\psi^*g} ({\bf v})  \big\cjd_ {\Sigma', \psi^*g}=\big\cjg  F(\phi_g\circ\psi)  V_{(\boldsymbol{\alpha},\mathbf{m})}^{g}(\psi_*{\bf v})  \big\cjd_ {\Sigma, g}.
$$
\item {\bf Spins:} with $r_{\boldsymbol{\theta}}\mathbf{v}:=(r_{\theta_1}v_1,\dots,r_{\theta_{n_\mathfrak{m}}} v_{\theta_{n_\mathfrak{m}}} )$, then
$$ 
 \langle   F V^g_{(\boldsymbol{\alpha},\mathbf{m})}(r_{\boldsymbol{\theta}}\mathbf{v}) \rangle_{\Sigma,g } =e^{-i QR\langle\mathbf{m},\boldsymbol{\theta}\rangle }  \langle F  V^g_{(\boldsymbol{\alpha},\mathbf{m})}({\bf v}) \rangle_{\Sigma,g } .
 $$
\end{enumerate}
 \end{theorem}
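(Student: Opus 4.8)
The plan is to obtain all four assertions by letting $t\to 1$ (that is, $\mathbf{x}(t)\to\mathbf{z}$ along the prescribed directions $\mathbf{v}$) in the corresponding statements of Theorem \ref{limitcorel}; the bulk of the work is the existence statement (1), after which the conformal anomaly (2), diffeomorphism invariance (3) and spins (4) follow by passing to the limit in \eqref{confan}, \eqref{diffinvariance} and \eqref{spinrelation}. I would work from the explicit formula \eqref{defelec} for $\cjg F V_{(\boldsymbol{\alpha},0)}^{g}(\mathbf{x})V^g_{(0,\mathbf{m})}({\bf v})\cjd_{\Sigma,g}$ and analyze each factor as $x_j\to z_j$.

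For (1), the deterministic prefactors are benign: $W_g(x_j)\to W_g(z_j)$ and $G_g(x_j,x_{j'})\to G_g(z_j,z_{j'})$ for $j\ne j'$ since the $z_j$ are distinct, while the phase $\prod_j e^{i\alpha_j I_{x_0}(\omega_{\bf k}+\nu^{\rm h}_{\mathbf{z},\mathbf{m}})(x_j)}$ converges to a finite, direction-dependent value. Indeed, the regular part $I_{x_0}(\omega_{\bf k})$ is smooth, and near $z_j$ one has $\nu^{\rm h}_{\mathbf{z},\mathbf{m}}=m_jR\,\dd\theta+O(1)$, so $I_{x_0}(\nu^{\rm h}_{\mathbf{z},\mathbf{m}})(x_j)\to m_jR\,\theta_{v_j}+(\textrm{regular})$ precisely because $x_j$ is forced to reach $z_j$ along the fixed direction $v_j$; this is exactly why a direction must be prescribed. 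These phases have modulus one and depend on ${\bf k}$ only linearly, so they do not disturb the summability over ${\bf k}$ established in Proposition \ref{propdefpathmag}.

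The heart of (1) is the convergence of the random potential, which I expect to be the main obstacle. Writing $u^{(t)}_{0,0}(x)=\sum_j i\alpha_j G_g(x,x_j(t))$, the field $\phi_g+u^{(t)}_{0,0}$ develops at each $z_j$ a logarithmic singularity superimposed on the magnetic winding, so that locally $e^{i\beta(u^{(t)}_{0,0}+I^{\boldsymbol{\xi}}_{x_0}(\nu^{\rm h}_{\mathbf{z},\mathbf{m}}))}\sim|x-z_j|^{\beta\alpha_j}e^{i\beta m_jR\theta}$. The modulus $|x-z_j|^{\beta\alpha_j}$ is integrable against the second moment of the imaginary GMC exactly under the Seiberg bound $\alpha_j>Q$ of \eqref{seiberg}, by the elementary computation \eqref{elem}. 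I would then show that $\big(M^g_\beta(\phi_g+u^{(t)}_{0,0},\Sigma)\big)_t$ is Cauchy in $L^2$, the $L^2$-norm of the increments being bounded by $\iint_{\Sigma^2}|e^{i\beta u^{(t)}_{0,0}(x)}-e^{i\beta u^{(1)}_{0,0}(x)}|\,|e^{i\beta u^{(t)}_{0,0}(y)}-e^{i\beta u^{(1)}_{0,0}(y)}|e^{\beta^2 G_g(x,y)}{\rm v}_g(\dd x){\rm v}_g(\dd y)\to 0$ by dominated convergence, uniformly in ${\bf k}$, as in \eqref{int2}. Coupling this with the uniform exponential-moment bound of Proposition \ref{expmoment} (applied to $f=e^{i\beta(u^{(t)}_{0,0}+I^{\boldsymbol{\xi}}_{x_0}(\nu^{\rm h}_{\mathbf{z},\mathbf{m}}))}$), a H\"older split into a potential part and an $F$-part, and the ${\bf k}$-summability of Proposition \ref{propdefpathmag}, lets me pass to the limit inside both the expectation and the ${\bf k}$-sum. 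This is the same machinery as in the $\eps\to0$ argument of Theorem \ref{limitcorel}(1), with ``$\eps\to0$'' replaced by ``$\mathbf{x}(t)\to\mathbf{z}$''; the genuinely new difficulty is that the electric log-singularity now sits on a magnetic point, so the $L^2$ bound and its uniformity in ${\bf k}$ must be re-examined at the merged insertion.

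Properties (2)--(4) are then inherited by taking $t\to1$. For the conformal anomaly I pass to the limit in \eqref{confan}: the prefactor converges because $\rho$ is continuous and the weights are additive, $\Delta_{(\alpha_j,0)}\rho(x_j)+\Delta_{(0,m_j)}\rho(z_j)\to\Delta_{(\alpha_j,m_j)}\rho(z_j)$ by \eqref{deltaalphadef}, while both correlation functions converge by (1) in the metrics $g$ and $g'=e^{\rho}g$ (note that $F(\cdot-\tfrac{iQ}{2}\rho)$ still lies in the relevant space). For diffeomorphism invariance I use that an orientation-preserving $\psi$ sends the curve $x_j(t)$ to a curve reaching $\psi(z_j)$ with tangent $\dd\psi_{z_j}.v_j$, so the limit of \eqref{diffinvariance} is exactly the electro-magnetic correlation with data $\psi_*\mathbf{v}$, the homogeneity of the correlation in $v_j$ (its section interpretation in Corollary \ref{corospin}) absorbing the fact that $\dd\psi_{z_j}.v_j$ need not be of unit length. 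Finally, for the spins, rotating $\mathbf{v}\mapsto r_{\boldsymbol{\theta}}\mathbf{v}$ modifies the regularized magnetic curvature term exactly as in the Gauss--Bonnet computation of Corollary \ref{corospin} and, through the prescribed approach direction, the limiting winding phase $e^{i\alpha_j I_{x_0}(\nu^{\rm h}_{\mathbf{z},\mathbf{m}})(x_j)}$ via the behaviour $\nu^{\rm h}_{\mathbf{z},\mathbf{m}}\sim m_jR\,\dd\theta$ near $z_j$; collecting these contributions yields the transformation law \eqref{spinrelation}.
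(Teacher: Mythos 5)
Your architecture is the same as the paper's: pass to the limit $t\to1$ in the formula \eqref{defelec}, note that the prefactors converge because $\nu^{\rm h}_{\mathbf{z},\mathbf{m}}$ has the local form $m_jR\,\dd\theta$ near $z_j$ and the approach direction is fixed by $v_j$, choose harmonic cohomology representatives to kill the $e^{-\frac{1}{2\pi}\langle \dd X_g,\omega_{\bf k}\rangle_2}$ term, reduce the potential via H\"older and Proposition \ref{expmoment} to the convergence of the double integral $\iint_{\Sigma^2}|e^{-\beta u_{\mathbf{x}(t)}(y)}-e^{-\beta u_{\mathbf{z}}(y)}|\,|e^{-\beta u_{\mathbf{x}(t)}(y')}-e^{-\beta u_{\mathbf{z}}(y')}|e^{\beta^2 G_g(y,y')}\,\dd{\rm v}_g(y)\dd{\rm v}_g(y')$, and inherit (2)--(4) by taking limits in \eqref{confan}, \eqref{diffinvariance}, \eqref{spinrelation}, with the additivity $\Delta_{(\alpha_j,0)}+\Delta_{(0,m_j)}=\Delta_{(\alpha_j,m_j)}$. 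Up to this point the proposal matches the paper.

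The gap is your final claim that the double integral tends to $0$ ``by dominated convergence, uniformly in ${\bf k}$, as in \eqref{int2}''. The situations are not analogous. In \eqref{int2} and in the $\eps\to0$ step of Theorem \ref{limitcorel}, the singularities sit at the \emph{fixed} points $x_j$, and the regularized densities are uniformly dominated by $C\prod_j|y-x_j|^{\beta\alpha_j}$, so Lebesgue's theorem applies. Here the log-singularity \emph{moves} along the curve $x_j(t)\to z_j$, and the natural dominating function $\sup_{t}|y-x_j(t)|^{\beta\alpha_j}\asymp\mathrm{dist}\big(y,\gamma_j\big)^{\beta\alpha_j}$, with $\gamma_j$ the approach curve, need not be integrable: the Seiberg bound \eqref{seiberg} only guarantees $\beta\alpha_j>\tfrac{\beta^2}{2}-2$, and since $\tfrac{\beta^2}{2}-2<-1$ the exponent $\beta\alpha_j$ may lie in $(\tfrac{\beta^2}{2}-2,-1]$, in which case a negative power of the distance to a one-dimensional set fails to be locally integrable in dimension two, let alone against the singular kernel $e^{\beta^2G_g(y,y')}$. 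This is exactly why the paper isolates this convergence as Lemma \ref{estimeedechien} and proves it in the Appendix by a genuinely quantitative argument: after localizing at $z_j$, one derives for the relevant integral $F_\delta$ the self-improving inequality $F_\delta(x)\le 2^{-(2\beta\alpha_j+4-\beta^2)}F_\delta(2x)+C(|x|^{\beta\alpha_j+2}+|x|)$ and iterates along dyadic scales, the Seiberg bound entering through the positivity of the exponent $2\beta\alpha_j+4-\beta^2$ (the same exponent that makes \eqref{elem} finite). Without this lemma or an equivalent estimate near the merging point, your Cauchy-in-$L^2$ claim for $M^g_\beta(\phi_g+u^{(t)}_{0,0},\Sigma)$ is unproved at precisely the step you yourself identify as the new difficulty; the rest of your argument stands once it is supplied.
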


\begin{proof} The proof consists in taking the limit in the expression \eqref{defelec} as $(x_j(t),\dot{x}_j(t))\to (z_j,v_j)$ when $t\to 1$. 
The properties of the path integral then results from taking the limit in the related properties of Proposition \ref{limitcorel}. The crucial argument in the proof is the following: since the 1-form $\nu^{\rm h}_{\mathbf{z},\mathbf{m}}$ is of the form $m_j2\pi R\dd \theta$ in local radial coordinates $z-z_j=re^{i\theta}$ near $z_j$ (see Proposition  \ref{harmpoles}), then the function $e^{\frac{i}{R}I^{\boldsymbol{\xi}}_{x_0}(\nu^{\rm h}_{\mathbf{z},\mathbf{m}})(x)}$ has a limit when $(x_j(t),\dot{x}_j(t))\to (z_j,v_j)$ as $t\to 1$. An immediate consequence is the convergence of all terms of the form $e^{i\alpha_j I^{\boldsymbol{\xi}}_{x_0}(\nu^{\rm h}_{\mathbf{z},\mathbf{m}})(x_j(t))}$ as $(x_j(t),\dot{x}_j(t))\to (z_j,v_j)$. This makes the convergence obvious for all the prefactors in the expression \eqref{defelec}. It then remains to focus on the integral. To get the argument simpler, we can choose the cohomology basis to consist of harmonic 1-forms $\omega_{\bf k}$; in particular the term $e^{-\frac{1}{2\pi}\langle\dd X_g,\omega_{\bf k}\rangle_2}=1$ in  \eqref{defelec}. In \eqref{defelec}, the terms involving $F$ and the curvature depend on $\mathbf{x}$ (recall that this dependence is hidden in $u_{0,0}$) and  converge towards their value at $\mathbf{z}$ (in the direction $\mathbf{v}$) in $L^p$ for the measure $e^{-\frac{1}{4\pi}\|\omega _{\bf k}\|_2^2}\delta_{\bf k}\otimes\P\otimes\dd c$.
 Using H\"older inequality, it remains to investigate the interaction term. Let us write  $ M(\mathbf{x})$ as a shortcut for the random variable $M^g_\beta(\phi_g+u_{0,0},\Sigma)$ at $\mathbf{x}$, and $ M(\mathbf{z})$ for this random variable evaluated at $\mathbf{z}$. Therefore we have to show that 
$$\sup_{{\bf k}\in \Z^{2{\mathfrak{g}}}}\sup_{c\in \R/2\pi R\Z}\E\Big[\Big|e^{-\mu M(\mathbf{x}(t))}-e^{-\mu M(\mathbf{z})}\Big|^q\Big]\to 0$$ as $t\to 1$, for  $q>1$.
 Using H\"older inequality (taking $q$ slightly larger) and Proposition \ref{expmoment}, this amounts to showing that, as $t\to 1$,  
\[\sup_{{\bf k}\in \Z^{2{\mathfrak{g}}}}\sup_{c\in \R/2\pi R\Z}\E\Big[|e^{|\mu| |M(\mathbf{x}(t))- M(\mathbf{z})|}-1|^q \Big]\to 0.\]
Using super-additivity of the mapping $x\mapsto x^q$ (for $q>1$), this amounts to showing that
\[\sup_{{\bf k}\in \Z^{2{\mathfrak{g}}}}\sup_{c\in \R/2\pi R\Z}\E\Big[e^{A |M(\mathbf{x}(t))- M(\mathbf{z})|}-1 \Big]\to 0\]
for any $A>0$. Using Proposition \ref{expmoment}, we see that the above statement follows from the following Lemma  the proof of which is defered to the Appendix:
\begin{lemma}\label{estimeedechien}
We set $u_{\mathbf{x}(t)}(y):=\sum_j\alpha_jG_g(y,x_j(t))+I_{x_0}(\omega_{\bf k}+\nu^{\rm h}_{\mathbf{z},\mathbf{m}})(y) $. Then we have
\[\int_\Sigma\int_\Sigma \big|e^{-\beta u_{\mathbf{x}(t)}(y)}-e^{-\beta u_{\mathbf{z}}(y)}\big| .\big|e^{-\beta u_{\mathbf{x}(t)}(y')}-e^{-\beta u_{\mathbf{z}}(y')} \big| e^{\beta^2G_g(y,y')}\dd {\rm v}_g(y)\dd {\rm v}_g(y')\to 0,\quad \text{when }t\to 1,\]
uniformly in ${\bf k}$.
\end{lemma} 
This ends the proof of Theorem \ref{limitcorelmixed}.
\end{proof}

%%%%%%%%%%%%%%%%%%%%%%%%%%%%%%%%%%%%%%%%%%%%%%%%%%%%%%%%%%%%%%%%%%%%%%%%%%%%%%%%%%
\section{Three point correlation functions on the Riemann sphere}\label{sec:3point}
%%%%%%%%%%%%%%%%%%%%%%%%%%%%%%%%%%%%%%%%%%%%%%%%%%%%%%%%%%%%%%%%%%%%%%%%%%%%%%%%%%
%%%%%%%%%%%%%%%%%%%%%%%%%%%%%%%%%%%%%%%%%%%%%%%%%%%%%%%%%%%%%%%%%%%%%%%%%%%%%%%%%%
 The condition   \eqref{seiberg} guarantees the existence of correlation functions. The condition $\alpha_j\in \frac{1}{R}\Z$ is however slightly misleading, for \eqref{0integral} implies that these correlation functions vanish if $\sum_j\alpha_j-\chi(\Sigma)  Q\in \frac{1}{R}\N$. Therefore the region of interest where the correlations are non-trivial is
 \begin{equation}
\sum_{j=1}^{n_{\mathfrak{m}}}\alpha_j-\chi(\Sigma)  Q\in -\beta \N,
\end{equation}
which, together with the condition $\alpha_j>  Q$ for all $j$, shows that the  non trivial correlation functions on the Riemann sphere have $n_{\mathfrak{m}}\geq 3$ (recall that $\chi(\Sigma)=2$ for the Riemann sphere). We thus recover the standard fact for CFTs that 3 point correlation functions on the Riemann sphere are of special importance. Below, we will relate them to   Dotsenko-Fateev type  integrals.

We identify the Riemann sphere with the extended complex plane $\hat\C$ by stereographic projection. On the sphere, every metric is (up to diffeomorphism) conformal to the special metric $g_0=|z|_+^{-4}|dz|^2$ with $|z|_+=\max(|z|,1)$.  Applying the transformation rules of Proposition \ref{limitcorelmixed} (namely items 3 and 4), one gets, with some straightforward computations,  that the  correlation functions 
are {\it conformally covariant}. More precisely, if $g=e^{\omega(z)}g_0=g(z)|dz|^2$ is a conformal metric and if $z_1, \cdots, z_{n_{\mathfrak{m}}}$ are $n_{\mathfrak{m}}$ distinct points in $\hat \C$, with associated unit tangent vectors $v_1,\dots, v_{n_\mathfrak{m}}$,  then for a M\"obius map $\psi(z)= \frac{az+b}{cz+d}$ (with $a,b,c,d \in \C$ and $ad-bc=1$) 
 \begin{equation}\label{KPZformula}
\langle V^g_{(\boldsymbol{\alpha},\mathbf{m})}(\psi_*{\bf v})     \rangle_{\hat\C,g}=  \prod_{j=1}^{n_{\mathfrak{m}}} \Big(\frac{|\psi'(z_j)|^2g(\psi(z_j))}{g(z_j)}\Big)^{-  \Delta_{(\alpha_j,m_j)}}     \langle V^g_{(\boldsymbol{\alpha},\mathbf{m})}({\bf v})     \rangle_{\hat\C,g}.
\end{equation}

Now we specify to the case $n_{\mathfrak{m}}=3$.
Without loss of generality, we may assume that the magnetic charges satisfy $m_1\leq m_2\leq m_3$. The M\"obius covariance implies in particular that the three point functions ($n_{\mathfrak{m}}=3$) are determined up to a constant denoted $C_{\beta,\mu} (\boldsymbol{\alpha},\boldsymbol{m})$, called the {\it structure constant}:
\begin{equation} \label{3point}
  \langle \prod_{j=1}^3 V^g_{(\alpha_j,m_j)}(z_j,v_j) \rangle_{\hat\C,g}
 =e^{\frac{{\bf c}}{96\pi}\int_{\Sigma}(|\dd\omega|_{g_0}^2+2K_{g_0}\omega) {\rm dv}_{g_0}}
e^{-i QR \sum_{j=1}^{3}m_j{\rm arg}(v_j)}  P_{\boldsymbol{\alpha},\boldsymbol{m}}({\bf z})\prod_{j=1}^3g(z_j)^{-\Delta_{(\alpha_j,m_j)}} C_{\beta,\mu} (\boldsymbol{\alpha},\boldsymbol{m})
 \end{equation}
with 
 \[\begin{split} 
  P_{\boldsymbol{\alpha},\boldsymbol{m}}({\bf z}):= & |z_1-z_3|^{2(\Delta_{(\alpha_2,m_2)}-\Delta_{(\alpha_1,m_1)}-\Delta_{(\alpha_3,m_3)})}|z_2-z_3|^{2(\Delta_{(\alpha_1,m_1)}-\Delta_{(\alpha_2,m_2)}-\Delta_{(\alpha_3,m_3)})}\\
 & \times |z_1-z_2|^{2(\Delta_{(\alpha_3,m_3)}-\Delta_{(\alpha_1,m_1)}-\Delta_{(\alpha_2,m_2)})}.
\end{split}\] 
We will compute the 3 point function in the metric $g_0$ to deduce the structure constants. For this we assume $z_1,z_2,z_3\in\R$ with $z_1<z_2<z_3$ and $v_1=v_2=v_3=1$ (identifying canonically $T_{z_j}\C$ with $\C$ using the global coordinate $z$ on $\C$). In that case, we choose the branch cut to be $[z_1,z_3]$ with defect graph $z_1\to z_2\to z_3$. The primitive is then  $I_{0}( \nu^{\rm h}_{\mathbf{z},\mathbf{m}})(z)=- Rm_1 {\rm arg}\big(\frac{z_2-z}{z_1-z}\big)+ Rm_3 {\rm arg}\big(\frac{z_3-z}{z_2-z}\big)$ (which vanishes on the half-line $]z_3,+\infty[$). By an abuse of notations, we will denote by $I_{0}( \nu^{\rm h}_{\mathbf{z},\mathbf{m}})(z_j):=\lim_{t\to 1}I_{0}( \nu^{\rm h}_{\mathbf{z},\mathbf{m}})(x_j(t))$ where $x_j(t)=z_j-1+t$ is converging to $z_j$ in the direction $v_j$ as $t\to 1$.

Using \eqref{defcorrelg} (recall also \eqref{defPI:mag}) and \eqref{defcorrelmixed} and the expression of the Green function in the metric $g_0$:
\begin{equation}\label{crepe}
G_{g_0}(z,z')=\ln\frac{1}{|z-z'|}+\ln|z|_++\ln|z'|_+, \quad (\text{ thus } W_{g_0}(z)=0)
\end{equation}
we deduce that (here we take the limit $t\to 1$ in the expression \eqref{defcorrelg} for correlation functions as explained in the proof of Proposition \ref{limitcorelmixed})  
\begin{align}\label{corr}
 \langle    \prod_{j=1}^3 V_{(\alpha_j,m_j)}(z_j,v_j)  \rangle_{\hat \C,g_0} = &\big(\frac{{\rm v}_{g_0}(\Sigma)}{{\det}'(\Delta_{g_0})}\big)^\hf 
 \prod_{j < j'}e^{-\alpha_j \alpha_{j'}G_{g_0}(z_j,z_{j'})} e^{-\frac{1}{4\pi}\| \nu^{\rm h}_{\mathbf{z},\mathbf{m}}\|^2_{g_0,0}+i\sum_{j=1}^3\alpha_jI_{0}( \nu^{\rm h}_{\mathbf{z},\mathbf{m}})(z_j)}
 \\
 &\times \int_0^{2\pi R} e^{i(\sum_j\alpha_j-2  Q)c}\E \left [  e^{-\mu e^{i\beta c}  \int_{\C}  F(x,{\bf z})\big(\prod_j|z_j|_+^{-\beta\alpha_j}\big)  M^{g_0}_\beta(\dd x) }  \right ]\dd c\nonumber
\end{align} 
with $\mathbf{z}=(z_1,z_2,z_3)$ and
\begin{equation}\label{coulomb}
F(x,{\bf z})=\prod_{j=1}^3 \left ( \frac{|x-z_j|}{ |x|_+} \right )^{\beta \alpha_j} e^{i\beta I_{0}( \nu^{\rm h}_{\mathbf{z},\mathbf{m}})(x)}.% \hat{g}(x)^{  - \frac{\gamma}{4} \sum_{k=1}^N \alpha_k } 
\end{equation}
Set $s:=\frac{ 2 Q-\sum_j\alpha_j}{\beta}\in \N$. By performing the series expansion of the exponential in the expectation and computing the Fourier coefficients, we get
\begin{align*}
\langle    \prod_{j=1}^3& V_{(\alpha_j,m_j)}(z_j,v_j)   \rangle_{\hat \C,g_0}
\\ 
=&   2\pi R \frac{(-\mu)^s}{s!}\E \left [  \left( \int_{\C}  F(x,{\bf z})  M^{g_0}_\beta(\dd x)  \right )^s  \right ] \prod_{j < j'} |z_j-z_{j'}|^{\alpha_j \alpha_{j'}}\prod_j|z_j|_+^{4\Delta_{\alpha_j}}e^{ -i\pi \alpha_2Rm_1 +i \pi\alpha_3Rm_3 }.
\end{align*}
Note that (if $\beta= \frac{p}{R}$)
$$e^{i\beta I_{0}( \nu^{\rm h}_{\mathbf{z},\mathbf{m}})(x)}= \big(\frac{(z_2-x)(\bar{z}_1-\bar{x})}{(\bar{z}_2-\bar{x})(z_1-x)}\big)^{\frac{-pm_1}{2}}\big(\frac{(z_3-x)(\bar{z}_2-\bar{x})}{(z_2-x)(\bar{z}_3-\bar{x})}\big)^{\frac{pm_3}{2}}.$$

It will be convenient to fix the values of $(x_1,x_2,x_3)$ to be $(0,1,\infty)$ where the related correlation functions are defined by
$$\langle   V_{(\alpha_1,m_1)}(0)V_{(\alpha_2,m_2)}(1)V_{(\alpha_3,m_3)}(\infty)    \rangle_{\hat \C,g_0}=\lim_{|z_3|\to\infty} \langle   V_{(\alpha_1,m_1)}(0)V_{(\alpha_2,m_2)}(1)V_{(\alpha_3,m_3)}(z_3)    \rangle_{\hat \C,g_0},$$
in which case the above relation becomes for $s\in\N$ 
\begin{align*}
\langle  & V_{(\alpha_1,m_1)}(0)V_{(\alpha_2,m_2)}(1)V_{(\alpha_3,m_3)}(\infty)    \rangle_{\hat \C,g_0}\\
= &2\pi R  \frac{(-\mu)^s}{s!}\E \left [  \left( \int_{\C}  \frac{|x|^{\beta\alpha_1}|1-x|^{\beta\alpha_2}}{ |x|_+^{\beta\bar\alpha}}   \big(\frac{ x}{\bar{x}}\big)^{\frac{pm_1}{2}} \big(\frac{1-x}{1-\bar{x}}\big)^{\frac{p(-m_1-m_3)}{2}}    M^g_\beta(\dd x)  \right )^s  \right ]\\
=&2\pi R  \frac{(-\mu)^s}{s!}\E \left [  \left( \int_{\C}  \frac{x^{\Delta_1}\bar{x}^{\bar\Delta_1}(1-x)^{\bar\Delta_2}
(1-\bar x)^{\bar\Delta_2}
}{ |x|_+^{\beta\bar\alpha}}     M^g_\beta(\dd x)  \right )^s  \right ]
\end{align*}
where we have set $\bar\alpha=\sum_j\alpha_j$, $\Delta_1=\beta\alpha_1+\frac{pm_1}{2}$, $\bar\Delta_1=\beta\alpha_1-\frac{pm_1}{2}$, $\Delta_2=\beta\alpha_2+\frac{pm_2}{2}$, and $\bar\Delta_2=\beta\alpha_2-\frac{pm_2}{2}$.
This expression can be expanded as a multiple integral
\begin{align}
&\langle   V_{(\alpha_1,m_1)}(0)V_{(\alpha_2,m_2)}(1)V_{(\alpha_3,m_3)}(\infty)    \rangle_{\hat \C,g_0}\nonumber %\label{npointfunction}
\\
&= 2\pi R  \frac{(-\mu)^s}{s!}\E \left [  \int_{\C^s}\prod_{j=1}^s \frac{x_j^{\Delta_1}\bar{x}_j^{\bar\Delta_1}(1-x_j)^{ \Delta_2}
(1-\bar x_j)^{\bar\Delta_2}}{ |x_j|_+^{\beta\bar\alpha}}    M^g_\beta(\dd x_1)\dots M_\beta^g(\dd x_s)     \right ]\nonumber\\
&=2\pi R  \frac{(-\mu)^s}{s!}\int_{\C^s}\prod_{j=1}^sx_j^{\Delta_1}\bar{x}_j^{\bar\Delta_1}(1-x_j)^{ \Delta_2}
(1-\bar x_j)^{\bar\Delta_2}\prod_{j<j'}|x_j-x_{j'}|^{\beta^2}\dd x_1\dots\dd x_s\nonumber\\
&=: 2\pi R \, \mathcal{I}_\mu(\beta,\alpha_1,\alpha_2,m_1,m_2,s).\label{npointfunction}
\end{align}
In the case when there are no magnetic charges, i.e. $m_1=m_2=0$,  this expression corresponds to the famous Dotsenko-Fateev integral \cite{dotsenko}, whose value is known (in case $m_1=m_2=0$)
\begin{equation}\label{ImDOZZ}
\mathcal{I}_\mu(\beta,\alpha_1,\alpha_2,0,0,s)=   (\frac{-\pi\mu }{l(\frac{\beta^2}{4})})^s\frac{\prod_{j=1}^{n} l(j \frac{\beta^2}{4}) }{ \prod_{j=0}^{s-1} l(- \frac{\beta \alpha_1}{2}  -j\frac{\beta^2}{4} )l(- \frac{\beta \alpha_2}{2}-   j\frac{\beta^2}{4} )l(- \frac{\beta \alpha_3}{2}    -j\frac{\beta^2}{4} )}
\end{equation}
with the convention that $\ell(x)=\Gamma(x)/\Gamma(1-x)$. This expression coincides with the imaginary DOZZ formula (see \cite{Za}), which is actually an analytic extension of this expression to all possible values of $\alpha_1,\alpha_2,\alpha_3$. In the presence of magnetic charges, it is an open question to find an explicit expression for the integral \eqref{npointfunction}.

 \section{Segal's axioms}\label{sec:segal}
%%%%%%%%%%%%%%%%%%%%%%%%%%%%%%%%%%%%%%%%%%%%%%%%%%%%%%%%%%%%%%%%%%%%%%%%%%%%%%%%%%
%%%%%%%%%%%%%%%%%%%%%%%%%%%%%%%%%%%%%%%%%%%%%%%%%%%%%%%%%%%%%%%%%%%%%%%%%%%%%%%%%%
In this section, we prove that our path integral satisfies Segal's axioms for CFT. These axioms are based on the notion of amplitude  that are building pieces of the path integral. They can be paired together to reconstruct the partition function or correlation functions, where pairing of amplitudes of surfaces with boundary correspond to amplitudes of glued surfaces along their boundaries. The pairing involves an integration over some functional space, and is associated to a Hilbert space (Section \ref{sub:hilbert}). 

In order to define the amplitudes properly, we shall need the gluing formalism for Riemann surfaces explained in Section \ref{sub:gluing}, and a few facts on Dirichlet-to-Neumann maps recalled below in Section \ref{subDTN}. Next, we give the setup for amplitudes and then study the main properties of the defect graph associated to the electro-magnetic operators for amplitudes in Section \ref{sub:setupamp}. This is all we need to give the definition of amplitudes in Section \ref{sub:defamp}.  Finally we shall prove the Segal axioms in Section \ref{sub:segal}.

\subsection{Hilbert space}\label{sub:hilbert} 
%%%%%%%%%%%%%%
Consider the following real-valued random Fourier series defined on  the unit circle $\T=\{z\in \C\,|\, |z|=1\}\simeq \R/2\pi \Z$ 
\begin{equation}\label{GFFcircle0}
\forall \theta\in\R,\quad \varphi(\theta)=\sum_{n\not=0}\varphi_ne^{in\theta}, \quad \textrm{ with } \varphi_{n}=\frac{1}{2\sqrt{n}}(x_n+iy_n), \,\, \forall n>0
\end{equation}
where $x_n,y_n$ are i.i.d. standard real Gaussians with $0$ mean and variance $1$. The convergence holds in the  Sobolev space  $H^{s}(\T)$ with $s<0$, where $H^s(\T)$ can be identified with the set of sequences $(\varphi_n)_n$ in $\C^\Z$ such that
\begin{equation}\label{outline:ws}
\|\varphi\|_{H^s(\T)}^2:=\sum_{n\in\Z}|\varphi_n|^2(|n|+1)^{2s} <\infty.
\end{equation}
Such a random series arises naturally when considering the restriction of the whole plane GFF to the unit circle. Also, note that the series $ \varphi$ has no constant mode or equivariant part. Both of them will play an important role in what follows and this is why we want to single them out.  In a way similar to Section \ref{UnivCover}, we can define equivariant functions and distributions on $\T$. Consider the space $H^s_{\Z}(\R)$ to be the space of real-valued distributions $u$ on $\R$ such that their restriction on any finite size open interval belongs to $H^s(\R)$ and such that 
\[ \forall n\in \Z, \quad u(\theta+2\pi n)-u(\theta)\in 2\pi R\Z.\]
If one restricts to smooth functions in this space, this amounts precisely to the space of smooth functions $u$ on $\R$ such 
that $e^{\frac{i}{R}u}$ descends to a smooth function on $\T$.
As in Section \ref{UnivCover} we get an identification (with $\pi: \R\to \R/2\pi R\Z$ being the projection)
\[ \Z \times H^s(\T)\mapsto H^s_{\Z}(\R) , \quad (k,\tilde{\varphi})\mapsto \pi^*\tilde{\varphi}(\theta)+ kR\theta \]
and $k$ corresponds to the degree of the map $e^{\frac{i}{R}u}:\T\to \T$. Below, we will write $\tilde{\varphi}$ instead of $\pi^*\tilde{\varphi}$ and implicitely identify $\tilde{\varphi}$ with its periodic lift to $\R$.
Consider the probability space 
\begin{align}\label{omegat}
  \Omega_\T=(\R^{2})^{\N^*}
\end{align}
 equipped with the cylinder sigma-algebra $ \Sigma_\T=\mathcal{B}^{\otimes \N^*}$ ($\mathcal{B}$  stands for the Borel sigma-algebra on $\R^2$) and the   product measure 
 \begin{align}\label{Pdefin}
 \P_\T:=\bigotimes_{n\geq 1}\frac{1}{2\pi}e^{-\frac{1}{2}(x_n^2+y_n^2)}\dd x_n\dd y_n.
\end{align}
The push-forward of $\P_\T$ by the random variable $(x_n,y_n)_{n\in \N}\to \varphi$ defined in \eqref{GFFcircle0}
induces a measure, still denoted $ \P_\T$ by a slight abuse of notation, that is supported on $H^s(\T)$ for any $s<0$ in the sense that $ \P_\T(\varphi\in H^s(\T))=1$.
We next equip the space $\Z$ is with the discrete sigma-algebra and the counting measure $\mu_{\Z}=\sum_{k\in\Z}\delta_k$. 
Then we consider the Hilbert space 
\[\mc{H}:=L^2((\R/2\pi \Z )\times\Z\times  \Omega_{\T},\mu_0)\] 
where the underlying  measure is given by
\[\mu_0:=\dd c\otimes \mu_{\Z} \otimes \P_{\T}\] 
and Hermitian product denoted by $\langle\cdot,\cdot\rangle_{\mc{H}}$.
The random  variable (with $\varphi$ defined in \eqref{GFFcircle0})
\[
(c,k,(x_n,y_n)_{n\in \N})\mapsto  \tilde{\varphi}^{k}(\theta):=c+kR\theta+\varphi(\theta) \in H_{\Z}^{s}(\R)
\] 
induces, by push-forward of $\mu_0$, a measure on $H_{\Z}^{s}(\R)$, that we still denote $\mu_0$. This means that we can also rewrite our Hilbert space as 
\[\mc{H}=L^2(H_{\Z}^{s}(\R),\mu_0).\]

\subsection{Dirichlet-to-Neumann map}\label{subDTN}
%%%%%%%%%%%%%%%%%%%%%%%%%%%%%%
Let  $\Sigma$ be a compact Riemann surface   with real analytic boundary $\partial\Sigma=\sqcup_{j=1}^{\mathfrak{b}}\pl_j\Sigma$  consisting of 
 ${\mathfrak{b}}$ closed simple curves, which do not intersect each other (here  ${\mathfrak{b}}$ could possibly be equal to $0$ in case $\pl\Sigma=\emptyset$) and parametrized by $\zeta_j:\T\to \pl_j\Sigma$ as in Section \ref{subsub:boundary}. We consider a metric $g$ on $\Sigma$ so that each boundary component has length $2 \pi$; except when mentioned, $g$ is not assumed to be admissible. We denote by $d\ell_g$ the Riemannian measure on $\pl\Sigma$ induced by $g$.
  
A generic (real valued) field $\tilde\varphi\in H^s(\T)$ (for $s<0$) will be decomposed into its constant mode $c$ and orthogonal part
$$\tilde\varphi=c+\varphi,\quad \varphi(\theta)=\sum_{n\not=0}\varphi_ne^{in\theta}$$
with $(\varphi_n)_{n\not=0}$  its other Fourier coefficients, which will be themselves parametrized by $\varphi_n=\frac{x_n+iy_n}{2\sqrt{n}}$  and $\varphi_{-n}=\frac{x_n-iy_n}{2\sqrt{n}}$  for $n>0$. Here, 
we make the slight abuse of notation of identifying a function on $\T$ with a function on $\R/2\pi \Z$.
In what follows, we will consider a family of such fields $\boldsymbol{\tilde \varphi}=(\tilde\varphi_1,\dots,\tilde\varphi_{b})\in (H^{s}(\T))^{{\mathfrak{b}}}$, in which case the previous notations referring to the $j$-th field will be augmented with an index $j$, namely, $c_j$, $\varphi_{j,n}$, $x_{j,n}$ or $y_{j,n}$.  We still denote by $\cjg \cdot,\cdot\cjd$ the distributional pairing between $(H^{s}(\T))^{{\mathfrak{b}}}$ and $(H^{-s}(\T))^{{\mathfrak{b}}}$ normalized so that $\cjg 1,u\cjd=\frac{1}{2\pi}\int u\dd \theta$ if $u\in C^\infty(\T)$.
 
For such a field  $\boldsymbol{\tilde \varphi}=(\tilde\varphi_1,\dots,\tilde\varphi_{{\mathfrak{b}}})\in (H^{s}(\T))^{{\mathfrak{b}}}$ with $s\in\R$, we will write   $P\tilde{\boldsymbol{\varphi}}$ for  the harmonic extension  of $\tilde{\boldsymbol{\varphi}}$, that is $\Delta_g P\tilde{\boldsymbol{\varphi}}=0$ on $\Sigma\setminus \bigcup_j\pl_j\Sigma$ with boundary values  $P\tilde{\boldsymbol{\varphi}}_{|\pl_j\Sigma}=\tilde\varphi_j\circ \zeta_j^{-1} $  for $j=1,\dots,{\mathfrak{b}}$.  The boundary value has to be understood in the following weak sense: for all $u\in C^\infty(\T)$, if $\zeta_j$ is the (analytic extension to a small annulus around $\T$ of the) parametrization of $\pl_j\Sigma$
\[\lim_{r\to 1^-}\int_{0}^{2\pi}P\tilde{\boldsymbol{\varphi}}(\zeta_j(re^{i\theta}))u(e^{i\theta})\dd \theta =2\pi \cjg \tilde\varphi_j,u\cjd.\]

The definition of our amplitudes will involve   the Dirichlet-to-Neumann operator (DN map for short). Recall that the DN map $\mathbf{D}_\Sigma:C^\infty(\T)^{{\mathfrak{b}}}\to C^\infty(\T)^{{\mathfrak{b}}}$ is defined as follows: for $\tilde{\boldsymbol{\varphi}}\in C^\infty(\T)^{{\mathfrak{b}}}$ 
\[\mathbf{D}_\Sigma\tilde{\boldsymbol{\varphi}}=(-\partial_{\nu } P\tilde{\boldsymbol{\varphi}}_{|\pl_j\Sigma}\circ\zeta_j)_{j=1,\dots,{\mathfrak{b}}} \]
where $\nu$ is the inward unit normal vector field to $\mc{C}_j$. Note that $\mathbf{D}_\Sigma$ is a non-negative symmetric operator with kernel $\ker {\bf D}_{\Sigma}=\R \tilde{1}$ where $\tilde{1}= (1, \dots, 1)$
Indeed, by Green's formula, for $\tilde{\boldsymbol{\varphi}}\in C^\infty(\T)^{{\mathfrak{b}}}$
\begin{equation}\label{Greenformula}
\int_{\Sigma} |dP\tilde{\boldsymbol{\varphi}}|_g^2{\rm dv}_g = 2 \pi \cjg \tilde{\boldsymbol{\varphi}},\mathbf{D}_\Sigma\tilde{\boldsymbol{\varphi}}\cjd.
\end{equation}
Consider ${\mathfrak{b}}'$ parametrized analytic closed simple non overlapping curves $\zeta_j':\T \to \mc{C}_j'$ in the interior of $\Sigma$ and denote 
$\mc{C}':=\sqcup_{j=1}^{{\mathfrak{b}}'}\mathcal{C}'_j$. 
For   a field  $\boldsymbol{\tilde \varphi}=(\tilde\varphi_1,\dots,\tilde\varphi_{{\mathfrak{b}}'})\in (H^{s}(\T))^{{\mathfrak{b}}'}$ with $s\in\R$, we will write   $P_{\mc{C}'}\tilde{\boldsymbol{\varphi}}$ for  the harmonic extension   $\Delta_g P_{\mc{C}'}\tilde{\boldsymbol{\varphi}}=0$ on $\Sigma\setminus \mc{C}'$ with boundary value $0$ on $\partial\Sigma$ and equal to $\tilde\varphi_j\circ \zeta'_j $ on $\mc{C}'_j$ for $j=1,\cdots,{\mathfrak{b}}'$. 
The DN map $\mathbf{D}_{\Sigma,\mc{C}'}:C^\infty(\T)^{{\mathfrak{b}}'}\to C^\infty(\T)^{{\mathfrak{b}}'}$ associated to $\mathcal{C}'$ is defined as the jump at $\mc{C}'$ of the harmonic extension:  for $\tilde{\boldsymbol{\varphi}}\in C^\infty(\T)^{{\mathfrak{b}}'}$ 
\begin{equation}\label{defDSigmaC}
\mathbf{D}_{\Sigma,{\mc{C}'}}\tilde{\boldsymbol{\varphi}}:=-((\partial_{\nu_-} P_{\mc{C}'}\tilde{\boldsymbol{\varphi}})|_{\mc{C}'_j}+(\partial_{\nu_+} P_{\mc{C}'}\tilde{\boldsymbol{\varphi}})|_{\mc{C}'_j})_{j=1,\dots,{\mathfrak{b}}'}.
\end{equation}
Here $\partial_{\nu_\pm}$ denote the two inward normal derivatives along $\mc{C}'_j$ from the right and from the left. Since $P_{\mc{C}'}(\tilde{\boldsymbol{\varphi}})$ is not $C^1$ at $\mc{C}'$ but it is piecewise smooth, the two normal derivatives are well-defined. 
The operator $\mathbf{D}_{\Sigma,{\mc{C}'}}$ is invertible and the Schwartz kernel of $\mathbf{D}_{\Sigma,{\mc{C}'}}^{-1}$ is (see e.g. the proof of \cite[Theorem 2.1]{Carron})
\begin{equation}\label{DNmapandGreen}
\mathbf{D}_{\Sigma,{\mc{C}'}}^{-1}(y,y')=\frac{1}{2\pi}G_{g,D}(y,y'), \quad y\not=y' \in \mc{C}'.
\end{equation}  
For respectively $k={\mathfrak{b}}$ or $k={\mathfrak{b}'}$, we let 
$\mathbf{D}:H^1(\T)^{k}\to L^2(\T)^{k}$ 
defined by
 \begin{equation}\label{defmathbfD}
\forall \tilde{\boldsymbol{\varphi}} \in C^\infty(\T;\R)^{k},\quad (\mathbf{D}\tilde{\boldsymbol{\varphi}},\tilde{\boldsymbol{\varphi}}):= 2\sum_{j=1}^{k}\sum_{n>0} n|\varphi_{j,n}|^2=\frac{1}{2}\sum_{j=1}^{k}\sum_{n>0}((x_{j,n})^2+(y_{j,n})^2).
\end{equation}
and  we define the operators on $C^\infty(\T)^{{\mathfrak{b}}}$ and $C^\infty(\T)^{{\mathfrak{b}}'}$
\begin{align}
\label{tildeD}& \widetilde{\mathbf{D}}_{\Sigma}  :=\mathbf{D}_{\Sigma}-\mathbf{D},  
& \widetilde{\mathbf{D}}_{\Sigma,{\mc{C}'}} :=\mathbf{D}_{\Sigma,{\mc{C}'}}-2\mathbf{D}.
\end{align}
We recall from \cite[Lemma 4.1]{GKRV} that the operators $\widetilde{\mathbf{D}}_{\Sigma}$ and $\widetilde{\mathbf{D}}_{\Sigma,{\mc{C}'}}$ are smoothing operators in the sense that they are operators with smooth Schwartz kernel that are bounded for all $s,s'\in \R$ as maps 
\[(H^{s}(\T))^{b} \to (H^{s'}(\T))^{b}, \textrm{ respectively } (H^{s}(\T))^{b'} \to (H^{s'}(\T))^{b'}.\]

\subsection{Curvature terms in the case with boundary}\label{sub:setupamp}
%%%%%%%%%%%%%%
In what follows, we consider  a Riemann surface  $\Sigma$  with an analytic parametrization $
\boldsymbol{\zeta}=(\zeta_1,\dots, \zeta_{\mathfrak{b}})$ of the boundary with $\zeta_j:\T\to \pl_j\Sigma$ where $\pl_1\Sigma,\dots,\pl_{\mathfrak{b}}\Sigma$ are the   ${\mathfrak{b}}>0$  boundary components of $\Sigma$. 
We consider the following data that we call \emph{geometric data} of $\Sigma$:\\

\noindent \textbf{Geometric data of $\Sigma$:}

(i) Let $g$ be an admissible  metric on $\Sigma$, let $x_0$ be a base point $x_0$ chosen to be on the boundary $\pl\Sigma$ and distinct from $p_j:=\zeta_j(1)$ for $j=1,\dots,{\mathfrak{b}}$. Let  ${\bf z}:=(z_1,\dots,z_{n_{\mathfrak{m}}})$ be some marked points in its interior $\Sigma^\circ$ and let 
$\mathbf{v}=((z_1,v_1),\dots,(z_n,v_{n_{\mathfrak{m}}}))\in (T\Sigma)^{n_\mathfrak{m}}$ be unit tangent vectors at these points, to which we attach magnetic charges $\mathbf{m}=(m_1,\dots,m_{n_{\mathfrak{m}}})\in\Z^{n_\mathfrak{m}}$. 

(ii) Let us fix a canonical geometric basis $\boldsymbol{\sigma}:=(\sigma_1,\dots,\sigma_{2{\mathfrak{g}}+{\mathfrak{b}}-1})$ of the relative homology $\mc{H}_1(\Sigma,\pl \Sigma)$ (following Lemma \ref{compactsupp}), consisting of $2\mathfrak{g}$ interior cycles $(\sigma_1,\dots,\sigma_{2{\mathfrak{g}}})=(a_1,b_1,\dots,a_{\mathfrak{g}},b_{\mathfrak{g}})$ satisfying the intersection pairings 
\[\iota(a_j,b_i)=\delta_{ij}, \quad  \iota(a_i,a_j)=0,\quad \iota(b_i,b_j)=0,\]
and ${\mathfrak{b}}-1$ arcs $(\sigma_{2{\mathfrak{g}}+1},\dots, \sigma_{2{\mathfrak{g}}+{\mathfrak{b}}-1})=(d_1,\dots,d_{\mathfrak{b}-1})$ with endpoints on the boundary and no intersection with $\cup_j (a_j \cup b_j)$. 
We consider a basis $\omega^{c}_1,\dots, \omega^{c}_{2{\mathfrak{g}}+{\mathfrak{b}}-1}$ of $\mc{H}^1(\Sigma,\pl\Sigma)$ dual to $\boldsymbol{\sigma}$, made of closed forms that are compactly supported inside $\Sigma^\circ$. We ask the arc $d_j$ to have endpoints at $p_j=\zeta_j(1)\in\partial_j\Sigma$ and $p_{j+1}=\zeta_{j+1}(1)\in\partial_j\Sigma$ while making an (non-oriented) angle $\pi/2$ with $\partial_j\Sigma$ at the endpoints. The orientation of $d_j$ can be taken either way, this will not play any role later. 
Then for each ${\bf k}^c=(k_1^c,\dots,k^c_{2{\mathfrak{g}}+{\mathfrak{b}}-1})\in\Z^{2\mathfrak{g}+{\mathfrak{b}}-1}$, the form $\omega^c_{\bf k}:=\sum_{j=1}^{2{\mathfrak{g}}+{\mathfrak{b}}-1}k^c_j \omega^c_j$ satisfies $\int_{\sigma_i}\omega^c_{{\bf k}^c}=2\pi  k^c_i R$.\\
 \begin{figure}[h] 
\centering
\includegraphics[width=0.45\textwidth]{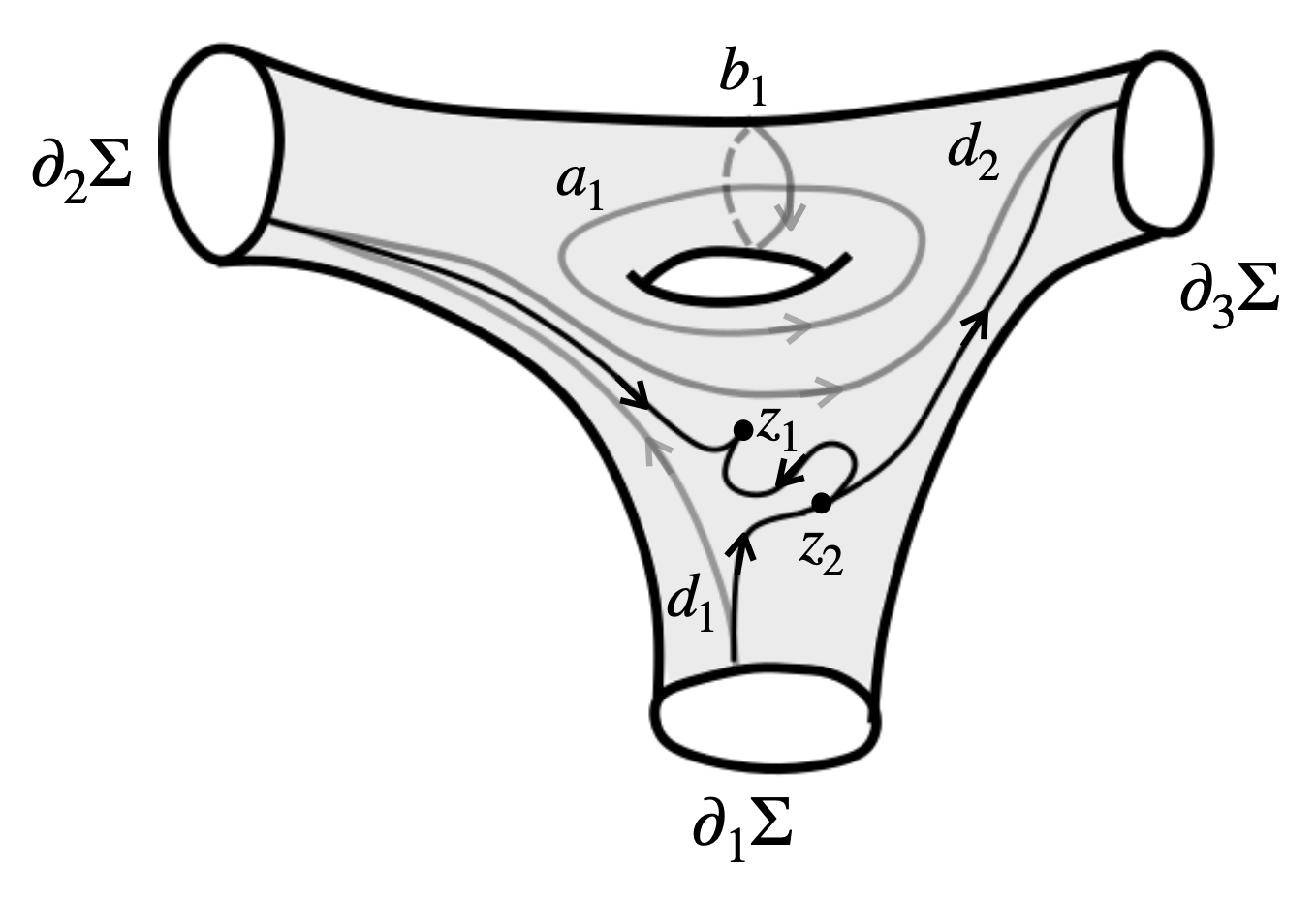} 
\caption{Homology on open surfaces. In gray, relative homology. In black, defect graph.}
\label{fig:amplitude}
\end{figure}

%(iii) Consider 1-forms in the relative cohomology basis dual to the arcs, i.e. $\omega^{c}_{2\mathfrak{g}+1},\dots, \omega^{c}_{2{\mathfrak{g}}+{\mathfrak{b}}-1}$, given in terms of  harmonic functions. For this, we write each ${\bf k}^c=(k_1^c,\dots,k^c_{2{\mathfrak{g}}+{\mathfrak{b}}-1})\in\Z^{2\mathfrak{g}+{\mathfrak{b}}-1}$ as a couple ${\bf k}^c=({\bf k}^{ci},{\bf k}^{cb})\in  \Z^{2\mathfrak{g} }\times \in\Z^{{\mathfrak{b}}-1}$ ($i$ stands for \textit{interior} and $b$   for \textit{boundary-to-boundary arc}). With an abuse of notations, we will identify  ${\bf k}^{ci}$ with $(k_1^c,\dots,k^c_{2{\mathfrak{g}}},0,\dots,0)$ and ${\bf k}^{cb}$ with $(0,\dots,k_{2\mathfrak{g}+1}^c,\dots,k^c_{2{\mathfrak{g}}+b-1})$. Next, we  consider an harmonic function $h_{\bf k^{cb}}$ on $\Sigma$, constant with   values in $2\pi R\Z$ on each boundary component, such that
%\[
%\forall i=1,\dots, 2\mathfrak{g}+{\mathfrak{b}}-1,\quad \int_{\sigma_i}\dd h_{{\bf k}^{cb}}=\int_{\sigma_i}\omega^c_{{\bf k}^{cb}}=2\pi  k^{cb}_i R.
%\]

(iii) We encode the structure of the absolute cohomology and the magnetic operators in the closed 1-forms  $\nu_{\mathbf{z},\mathbf{m},\mathbf{k}}$ of Proposition \ref{harmpoles} using the basis of $\mc{H}_1(\Sigma)
$
\[(a_1,b_1,\dots,a_{\mathfrak{g}},b_{\mathfrak{g}}, \pl_1\Sigma,\dots,\pl_{\mathfrak{b}-1}\Sigma)\]
where  $\nu_{\mathbf{z},\mathbf{m},\mathbf{k}}$ are labeled by   ${\bf k}=(k_1,\dots,k_{\mathfrak{b}})\in\Z^{\mathfrak{b}}$ satisfying 
\begin{equation}\label{condition_charges}
\qquad \frac{1}{2\pi R}\int_{\partial_j\Sigma}\nu_{\mathbf{z},\mathbf{m},\mathbf{k}}=\varsigma_j k_j,\qquad \sum_{j=1}^{n_\mathfrak{m}} m_j+\sum_{j=1}^{{\mathfrak{b}}}\varsigma_jk_j=0, 
\end{equation}
where we recall that $\varsigma_j=-1$ if the boundary is outgoing and $\varsigma_j=1$ if it is incoming. The choice of such a form is not unique: indeed the difference of two such forms is an exact 1-form $\dd f$ with $\pl_\nu f|_{\pl \Sigma}=0$ (see Lemma \ref{existence_primitive}). By possibly adding such an exact form, we can require that the form $\nu_{\mathbf{z},\mathbf{m},\mathbf{k}}$ takes values in $2\pi R\Z$ along the boundary-to-boundary arcs: 
\begin{equation}\label{btbaRz}
\forall j=2\mathfrak{g}+1,\dots,2\mathfrak{g}+{\mathfrak{b}}-1 ,\quad \int_{\sigma_{j}}  \nu_{\mathbf{z},\mathbf{m},\mathbf{k}}\in 2\pi R\Z.
\end{equation}
 
(iv) Next we construct a defect graph associated to the 1-forms $\nu_{\mathbf{z},\mathbf{m},\mathbf{k}}$ as follows. The construction, detailed below, is similar to the case of closed Riemann surfaces except that we will  see the point $\zeta_j(1)\in \pl_j\Sigma$ as extra marked points with a magnetic charge $ \varsigma_jk_j$   assigned. So, for notational simplifications,  we set $z_{n_{\mathfrak{m}}+j}=\zeta_j(1)$ and $m_{n_{\mathfrak{m}}+j}= \varsigma_{j}k_{j}$ for $j=1,\dots,{\mathfrak{b}}$. We then associate the total magnetic charges (which are now ${\bf k}$ dependent) defined by
\begin{equation}\label{defm(k)} 
{\bf m}({\bf k})=(m_1({\bf k}),\dots, m_{n_{\mathfrak{m}}+\mathfrak{b}}({\bf k})):= 
(m_1,\dots, m_{n_{\mathfrak{m}}}, \varsigma_1 k_1, \dots, \varsigma_{\mathfrak{b}} k_{\mathfrak{b}}).
\end{equation}
Also, for $j=n_{\mathfrak{m}}+1,\dots,n_{\mathfrak{m}}+{\mathfrak{b}}$, we set $v_j\in T_{z_j}\Sigma$ to be the inward unit vector at $z_j$ normal to $\pl \Sigma$.

\begin{definition}\label{def:defect}{\bf Defect graph:} 
We consider a family of $n_\mathfrak{m}+{\mathfrak{b}}-1$ arcs as follows:
\begin{itemize}
\item these arcs are indexed by $p\in\{1,\dots, n_\mathfrak{m}+{\mathfrak{b}}-1\}$, are simple and do not intersect except eventually at their endpoints,
\item each arc is a smooth oriented curve $\xi_p:[0,1]\to  \Sigma $ parametrized by arclength with endpoints $\xi_p(0)=z_j$ and $\xi_p(1)=z_{j'}$ for $j\not =j'$, with orientation in the direction of increasing charges, meaning $m_j\leq m_{j'}$.
\item  these arcs satisfy $\dot{\xi}_{p}(0)=\lambda_{p,j} v_{j}$ and $\dot{\xi}_{p}(1)=\lambda_{p,j'} v_{j'}$ for some  $\lambda_{p,j}>0$ and $\lambda_{p,j'} >0$ if $\xi_{p}(1)\notin \pl \Sigma$, while $\lambda_{p,j'}<0$ if $\xi_{p}(1)\in \pl \Sigma$.
\item consider the  oriented  graph with vertices $\mathbf{z}$ and edges $(z_j,z_{j'})$, if there is an oriented arc with basepoint $z_j$ and endpoint $z_{j'}$. This graph must be connected and without cycle, i.e. there is no sequence of edges $(z_{j_1},z_{j_2}),\dots,(z_{j_k},z_{j_{k+1}})$ with $j_1=j_{k+1}$. 
\end{itemize}
In what follows, the union $\mc{D}_{\mathbf{v},\boldsymbol{\xi}}:=\bigcup_{p\in\{1,\dots, n_\mathfrak{m}+{\mathfrak{b}}-1\}}\xi_p([0,1])$ will be called the defect graph associated to $\mathbf{v}$ and the collection of arcs $\boldsymbol{\xi}:=(\xi_{1},\dots, \xi_{n_\mathfrak{m}+{\mathfrak{b}}-1})$.
\end{definition}

 \begin{figure}[h] 
\centering
\includegraphics[width=.5\textwidth]{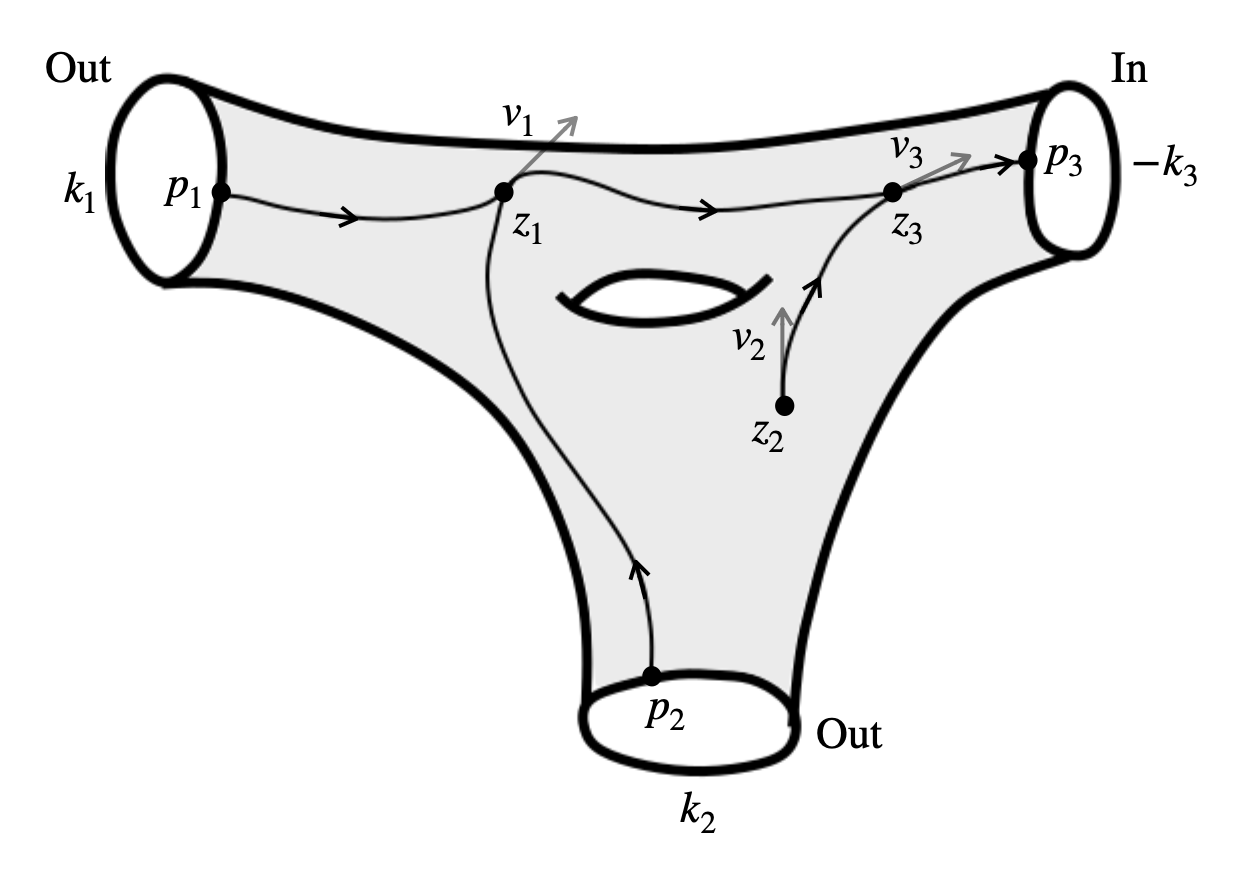} 
\caption{Exemple of defect graph (black) in case $ k_2\leq k_1\leq m_1\leq m_2\leq m_3 \leq -k_3$.}
\label{fig:case1}
\end{figure}

Notice that the graph $\mc{D}_{\mathbf{v},\boldsymbol{\xi}}$ is contractible to a point. The form $\nu_{\mathbf{z},\mathbf{m},\mathbf{k}}$ is exact on $\Sigma\setminus \mc{D}_{\mathbf{v},\boldsymbol{\xi}}$ so that we can consider the primitive $I^{\boldsymbol{\xi}}_{x_0}(\nu_{\mathbf{z},\mathbf{m},\mathbf{k}})$ on $\Sigma\setminus \mc{D}_{\mathbf{v},\boldsymbol{\xi}}$. As in the closed case (recall \eqref{defkappaxip}), we assign to each arc $\xi_p$ a value $\kappa(\xi_p)\in 2\pi R\Z$, corresponding to the difference of the values of $I^{ \boldsymbol{\xi}}_{x_0}(\nu_{\mathbf{z},\mathbf{m},\mathbf{k}})$ on 
both sides of the arc. The value $\kappa(\xi_p)$ is defined by first gluing disks $\mc{D}_j$ to $\pl _j \Sigma$ for each $j$ in order to be in the setting of a closed surface and using the same definition as in the closed case, see \eqref{defkappaxip}.

The regularized curvature terms are then defined by the same formula as \eqref{def_reg_integtral} and  \eqref{curv:mag2}:
\begin{equation}\label{def_reg_integtral_open}
\int_{\Sigma_{\boldsymbol{\sigma}}}^{\rm reg} K_g I^{\boldsymbol{\sigma}}_{x_0}(\omega^c_{{\bf k}^c})\dd {\rm v}_g:=  \int_{\Sigma_{\boldsymbol{\sigma}}}  I^{\boldsymbol{\sigma}}_{x_0}(\omega^c_{{\bf k}^c}) K_g\dd {\rm v}_g+2\sum_{j=1}^{{\mathfrak{g}}}
\Big(\int_{a_j}\omega^c_{{\bf k}^c} \int_{b_j}k_{g}\dd \ell_{g}-\int_{b_j}\omega^c_{{\bf k}^c}\int_{a_j}k_{g}\dd \ell_{g}\Big),
\end{equation}
\begin{equation}\label{curv:mag2open}
\int_{\Sigma }^{\rm reg}  I^{ \boldsymbol{\xi}}_{x_0}( \nu_{\mathbf{z},\mathbf{m},\mathbf{k}}) K_g \,\dd v_g:=\int_{\Sigma\setminus \mc{D}_{\mathbf{v},\boldsymbol{\xi}}}   I^{ \boldsymbol{\xi}}_{x_0}( \nu_{\mathbf{z},\mathbf{m},\mathbf{k}}) K_g \,\dd v_g-2\sum_{p=1}^{n_\mathfrak{m}+{\mathfrak{b}}-1}\kappa(\xi_p)\int_{\xi_p}k_g\dd \ell_g,
\end{equation}
where $k_g$ is the geodesic curvature as defined before. Remark that in the expression \eqref{def_reg_integtral_open}, there is no boundary 
term involving the arcs $d_j$ or the boundary cycles $c_j$: the reason is that the curves $c_j$ will be chosen to be geodesics (since our metrics are admissible) and that $\iota_{\pl\Sigma}^*\omega^c_{{\bf k}^c}=0$  (thus $\int_{\pl_j\Sigma}\omega^c_{{\bf k}^c}=0$) so that the natural boundary terms that one could add actually vanish: 
\[\int_{\pl_j\Sigma}\omega^c_{{\bf k}^c} \int_{d_j}k_{g}\dd \ell_{g}-\int_{d_j}\omega^c_{{\bf k}^c}\int_{c_j}k_{g}\dd \ell_{g}=0.\]
With the same proof as Lemma \ref{conformal_change_reg_int}, we have 
\begin{lemma}\label{conformal_change_reg_int_open}
If $\hat{g}=e^{\rho}g$ with $\pl_\nu \rho|_{\pl \Sigma}=0$ and $\omega\in \mc{H}^1_R(\Sigma,\pl \Sigma)$ 
with compact support in $\Sigma^\circ$, the following identity holds true
\[ \int^{\rm reg}_{\Sigma_{\boldsymbol{\sigma}}} I^{\boldsymbol{\sigma}}_{x_0}(\omega)K_{\hat{g}}\dd {\rm v}_{\hat{g}}
=\int^{\rm reg}_{\Sigma_{\boldsymbol{\sigma}}} I^{\boldsymbol{\sigma}}_{x_0}(\omega)K_{g}\dd {\rm v}_{g}+\cjg  \dd\rho,\omega\cjd_2.\]
 \end{lemma}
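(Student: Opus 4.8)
The plan is to follow the proof of Lemma \ref{conformal_change_reg_int} almost verbatim, the only new features being the boundary $\pl\Sigma$ and the boundary-to-boundary arcs $d_j$ appearing in the cut surface $\Sigma_{\boldsymbol\sigma}=\Sigma\setminus\boldsymbol\sigma$. First I would use the pointwise relations $K_{\hat g}=e^{-\rho}(K_g+\Delta_g\rho)$ and $\dd{\rm v}_{\hat g}=e^\rho\dd{\rm v}_g$, so that $K_{\hat g}\dd{\rm v}_{\hat g}=(K_g+\Delta_g\rho)\dd{\rm v}_g$ and the problem reduces to integrating by parts the term $\int_{\Sigma_{\boldsymbol\sigma}} I^{\boldsymbol\sigma}_{x_0}(\omega)\,\Delta_g\rho\,\dd{\rm v}_g$ on the cut surface. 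Since $\dd I^{\boldsymbol\sigma}_{x_0}(\omega)=\omega$, Green's formula on $\Sigma_{\boldsymbol\sigma}$ produces the bulk term $\cjg \dd\rho,\omega\cjd_2$ together with boundary contributions over $\pl\Sigma_{\boldsymbol\sigma}$, which consists of $\pl\Sigma$ and of the two sides of each cut $a_j,b_j,d_j$.

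The three families of boundary terms are then disposed of separately. On $\pl\Sigma$ the integrand carries the factor $\pl_\nu\rho$, which vanishes by the hypothesis $\pl_\nu\rho|_{\pl\Sigma}=0$, so this contribution is zero. For the arcs $d_j$ the crucial observation is that, by Lemma \ref{compactsupp}, the dual forms $\omega^c_{2{\mathfrak g}+j}=\dd f_j$ are \emph{globally} exact on $\Sigma$; hence the component of $\omega$ carried by the $d_j$-duals contributes only a single-valued function $f$ to $I^{\boldsymbol\sigma}_{x_0}(\omega)$, which has no monodromy across $d_j$. Consequently $I^{\boldsymbol\sigma}_{x_0}(\omega)$ is continuous across each $d_j$, while the two inward normals satisfy $\pl_{\nu^+}=-\pl_{\nu^-}$ along $d_j$; the two one-sided integrals therefore cancel and the $d_j$-contribution vanishes. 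For the handle cuts $a_j,b_j$ the situation is identical to the closed case: using $\bar\sigma_{a_j}=\gamma_{b_j}(\sigma_{a_j})$, $\bar\sigma_{b_j}=\gamma_{a_j}(\sigma_{b_j})$, the jump relations $I^{\boldsymbol\sigma}_{x_0}(\omega)(\gamma_{\sigma}x)=I^{\boldsymbol\sigma}_{x_0}(\omega)(x)+\int_\sigma\omega$, and the fact that $\rho$ is a genuine smooth function on $\Sigma$, the paired one-sided integrals combine to
\[
\sum_{j=1}^{{\mathfrak g}}\Big(\int_{a_j}\omega\int_{b_j}\pl_\nu\rho\,\dd\ell_g-\int_{b_j}\omega\int_{a_j}\pl_\nu\rho\,\dd\ell_g\Big).
\]

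Finally I would absorb these remaining $a_j,b_j$ terms into the geodesic-curvature corrections of the definition \eqref{def_reg_integtral_open} by means of the identity $k_{\hat g}\dd\ell_{\hat g}=k_g\dd\ell_g-\demi\pl_\nu\rho\,\dd\ell_g$ on each $\sigma_j$. Indeed, the difference of the correction terms for $\hat g$ and $g$ equals $2\sum_j\big(\int_{a_j}\omega\int_{b_j}(k_{\hat g}\dd\ell_{\hat g}-k_g\dd\ell_g)-\int_{b_j}\omega\int_{a_j}(k_{\hat g}\dd\ell_{\hat g}-k_g\dd\ell_g)\big)$, which is exactly the negative of the above $a_j,b_j$ boundary sum; note that the curves $c_j=\pl_j\Sigma$ are geodesics for the admissible metric and $\iota^*_{\pl\Sigma}\omega=0$, so no boundary curvature terms involving $d_j$ or $c_j$ enter \eqref{def_reg_integtral_open}. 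Adding back the bulk term $\cjg\dd\rho,\omega\cjd_2$ yields the claimed identity. The main point requiring care — and the only genuine departure from Lemma \ref{conformal_change_reg_int} — is the verification that the arcs $d_j$ and the true boundary $\pl\Sigma$ produce no net contribution; this rests precisely on the exactness of the $d_j$-dual forms, the compact support of $\omega$ in $\Sigma^\circ$, and the Neumann condition $\pl_\nu\rho|_{\pl\Sigma}=0$.
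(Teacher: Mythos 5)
Your proof is correct and is essentially the paper's intended argument: the paper gives no separate proof of Lemma \ref{conformal_change_reg_int_open}, stating only that it follows "with the same proof as Lemma \ref{conformal_change_reg_int}", and your treatment of the two new features — the Neumann condition killing the $\pl\Sigma$ terms and the absence of monodromy across the arcs $d_j$ — supplies exactly the omitted details, with the correct cancellation against the $a_j,b_j$ geodesic-curvature corrections via $k_{\hat g}\dd\ell_{\hat g}=k_g\dd\ell_g-\demi\pl_\nu\rho\,\dd\ell_g$. One small sharpening: the continuity of $I^{\boldsymbol{\sigma}}_{x_0}(\omega)$ across $d_j$ holds for \emph{all} components of $\omega$, not only the exact $d_j$-duals, since the jump across $d_j$ equals the period of $\omega$ along a loop crossing $d_j$ once, which is homologous to a sum of boundary cycles $c_1+\dots+c_j$, and these periods vanish because $\omega$ is compactly supported in $\Sigma^\circ$ (equivalently $\iota_{\pl\Sigma}^*\omega=0$).
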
 
Similarly to Lemma \ref{independence_basis}, we also have 
\begin{lemma}\label{independence_basis_open}
 Let $\boldsymbol{\sigma}=(\sigma_j)_{j=1,\dots,{\mathfrak{g}}+{\mathfrak{b}}-1}$ and $\boldsymbol{\sigma}'=(\sigma_j')_{j=1,\dots,{\mathfrak{g}}+{\mathfrak{b}}-1}$ 
 be canonical geometric bases  as described above and $\omega\in \mc{H}^1_R(\Sigma,\pl\Sigma)$ with compact support in $\Sigma^\circ$. Then the following identity holds true 
\[ \int_{\Sigma_{\boldsymbol{\sigma}}}^{\rm reg} K_g I^{\boldsymbol{\sigma}}_{x_0}(\omega)\dd {\rm v}_g -\int_{\Sigma_{\boldsymbol{\sigma}'}}^{\rm reg} K_{g}I^{\boldsymbol{\sigma}'}_{x_0}(\omega)\dd {\rm v}_{g}\in 8\pi^2 R\Z.\]
 \end{lemma}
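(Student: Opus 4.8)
The plan is to reduce Lemma~\ref{independence_basis_open} to the already-established closed-surface statement, Lemma~\ref{independence_basis}, by passing to the double $\Sigma^{\#2}$. The key observation is that a compactly supported form $\omega\in \mc{H}^1_R(\Sigma,\pl\Sigma)$ extends to a closed $1$-form on the closed Riemann surface $\Sigma^{\#2}$ by setting it to $0$ on the second copy of $\Sigma$ (it vanishes near $\pl\Sigma$ by compact support, so the extension is smooth). Under the isomorphism $\mc{H}^1(\Sigma,\pl\Sigma)\simeq \{u\in \mc{H}^1(\Sigma^{\#2})\,|\, \tau^*u=-u\}$ recorded in Section~\ref{sub:cohomoboundary}, this extended form lies in $\mc{H}^1_R(\Sigma^{\#2})$ (its periods are still in $2\pi R\Z$). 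The first step is therefore to check carefully that the extension preserves the integrality of periods and that the relevant canonical geometric bases $\boldsymbol\sigma,\boldsymbol\sigma'$ of $\mc{H}_1(\Sigma,\pl\Sigma)$ complete, via their images together with the $\tau$-translates, to genuine symplectic bases of $\mc{H}_1(\Sigma^{\#2})$ (this is exactly the mechanism depicted in Figure~\ref{fig:double}).

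Next I would compare the open regularized integral \eqref{def_reg_integtral_open} with the closed one \eqref{def_reg_integtral} for the doubled data. Since the extended form is supported in $\Sigma^\circ$, the primitive $I^{\boldsymbol{\sigma}}_{x_0}(\omega)$ and all curvature contributions on the second copy are trivial, and the interior cycles $a_j,b_j$ used in \eqref{def_reg_integtral_open} become part of the symplectic basis used in \eqref{def_reg_integtral}. The arcs $d_j$ and boundary cycles $c_j$ do not contribute counterterms on $\Sigma$, precisely because the metric is admissible (so the $c_j$ are geodesics) and $\iota^*_{\pl\Sigma}\omega=0$; one must verify that, after doubling, the corresponding cycles in $\Sigma^{\#2}$ also contribute no net counterterm, using $\tau$-antisymmetry of $\omega$ together with the $\tau$-symmetry of the doubled metric so that the contributions from the two copies cancel in pairs. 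This identifies
\[
\int_{\Sigma_{\boldsymbol{\sigma}}}^{\rm reg} K_g I^{\boldsymbol{\sigma}}_{x_0}(\omega)\dd {\rm v}_g
= \tfrac12\int_{\Sigma^{\#2}_{\boldsymbol{\sigma}^{\#2}}}^{\rm reg} K_{g^{\#2}} I^{\boldsymbol{\sigma}^{\#2}}_{x_0}(\omega^{\#2})\dd {\rm v}_{g^{\#2}}
\]
up to a controlled, basis-independent factor, where $\boldsymbol{\sigma}^{\#2}$ is the symplectic basis of $\mc{H}_1(\Sigma^{\#2})$ determined by $\boldsymbol{\sigma}$.

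Having set up this dictionary for both $\boldsymbol\sigma$ and $\boldsymbol\sigma'$, I would then invoke Lemma~\ref{independence_basis} directly on $\Sigma^{\#2}$: the two symplectic bases $\boldsymbol\sigma^{\#2}$ and $(\boldsymbol\sigma')^{\#2}$ of $\mc{H}_1(\Sigma^{\#2})$ yield regularized curvature integrals differing by an element of $8\pi^2 R\Z$. Transporting this back through the identification above gives the desired membership in $8\pi^2 R\Z$ on $\Sigma$ (the factor $\tfrac12$ is absorbed because the two copies contribute equally, doubling the ambiguity back to the full lattice; this bookkeeping of the lattice constant is a point to handle with care). Alternatively, if the doubling bookkeeping proves delicate, I would instead mimic the proof of Lemma~\ref{independence_basis} verbatim on $\Sigma$ itself, tracking how $I^{\boldsymbol{\sigma}}_{x_0}(\omega)$ changes under an elementary move in $\mathrm{SP}(2{\mathfrak{g}},\Z)$ acting on the interior cycles, with the arcs $d_j$ playing a passive role since $\omega$ is exact near them.

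The main obstacle I anticipate is the precise accounting of the lattice constant $8\pi^2 R$ through the doubling construction: one must ensure that the periods of the extended form, the symplectic normalization of $\boldsymbol\sigma^{\#2}$, and the factor relating the two copies all combine to give exactly $8\pi^2 R\Z$ and not, say, $4\pi^2 R\Z$ or $16\pi^2 R\Z$. This requires matching the monodromy conventions of Definition~\ref{curvature_integral} with the antisymmetric extension across $\pl\Sigma$, and checking that the boundary arcs $d_j$ (whose endpoints lie on $\pl\Sigma$) genuinely contribute no monodromy to $\omega$, which holds because $\omega$ is compactly supported in $\Sigma^\circ$ so its primitive is locally constant near $\pl\Sigma$. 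Once this constant is pinned down, the rest is a routine transfer of Lemma~\ref{independence_basis}.
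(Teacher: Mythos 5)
Your overall strategy --- double the surface, extend $\omega$ to the closed surface $\Sigma^{\#2}$, and reduce to Lemma \ref{independence_basis} --- is exactly the route the paper takes, and several of your intermediate checks match its proof: the completion of the interior cycles $(a_j,b_j)$ of $\boldsymbol{\sigma}$, together with their $\tau$-translates and pairs $(c_i,d_i)$ built from boundary circles and doubled arcs, to a geometric symplectic basis of $\mc{H}_1(\Sigma^{\#2})$; the vanishing of counterterms along the $c_i$ because they are geodesics with $\int_{c_i}\omega=0$; and the final invocation of Lemma \ref{independence_basis} on the double (the paper applies it to a change of geometric symplectic basis fixing all left-copy cycles and moving $(a_i,b_i)$ to $(a_i',b_i')$ inside the right copy, which is the same comparison you make).

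The genuine flaw is in your comparison identity and the reasoning around it. Having extended $\omega$ by zero on the second copy --- which is the correct move, and the one the paper makes --- the extension is \emph{not} $\tau$-antisymmetric (only its cohomology class is, after averaging), and the two copies do \emph{not} ``contribute equally'': the left copy is inert. All periods of the extended form over left-copy cycles vanish, so the counterterms attached to $(\tau(a_j),\tau(b_j))$ vanish; the $(c_i,d_i)$ counterterms vanish as noted above; and on the left copy the primitive $I^{\boldsymbol{\sigma}^{\#}}_{x_0}(\omega)$ is locally constant with values in $2\pi R\Z$ (they are relative periods of $\omega\in\mc{H}^1_R(\Sigma,\pl\Sigma)$), so the left-copy bulk term lies in $2\pi R\Z\cdot\int_{\Sigma}K_g\,\dd{\rm v}_g=2\pi R\Z\cdot 4\pi\chi(\Sigma)\subset 8\pi^2R\Z$. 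Hence the correct dictionary is
\[ \int^{\rm reg}_{\Sigma^{\#2}_{\boldsymbol{\sigma}^{\#}}} K_g\, I^{\boldsymbol{\sigma}^{\#}}_{x_0}(\omega)\,\dd{\rm v}_g \;=\; \int^{\rm reg}_{\Sigma_{\boldsymbol{\sigma}}} K_g\, I^{\boldsymbol{\sigma}}_{x_0}(\omega)\,\dd{\rm v}_g \]
(as an equality, and in any case modulo $8\pi^2R\Z$), with no factor $\tfrac12$ and no lattice constant to ``absorb'': your displayed half-identity is incompatible with your own zero-extension setup, and the cancellation-in-pairs argument via $\tau$-antisymmetry would in fact produce cancellation, not doubling, if the extension were antisymmetric. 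Once the identity is stated without the spurious $\tfrac12$, the bookkeeping obstacle you flagged (whether the lattice comes out as $4\pi^2R\Z$, $8\pi^2R\Z$ or $16\pi^2R\Z$) disappears, and a single application of Lemma \ref{independence_basis} on $\Sigma^{\#2}$ finishes the proof, just as in the paper.
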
 
\begin{proof} We can double the surface $\Sigma^{\#2}=\Sigma\# \Sigma$, view $(\sigma_j)_{j\leq {\mathfrak{g}}}$ as cycles of $\Sigma^{\#2}$ contained in the right copy of $\sigma$, and let $\tau$ denote the natural involution on $\Sigma^{\#2}$. The cycles
\begin{equation}\label{doublecycles}
(a_1,b_1,\dots,a_{\mathfrak{g}},b_{\mathfrak{g}}, -\tau(a_1),\tau(b_1),\dots, -\tau(a_{\mathfrak{g}}),\tau(b_{\mathfrak{g}})) \in \mc{H}_1(\Sigma^{\sharp 2})
\end{equation}
form a part of a geometric symplectic basis of $\mc{H}_1(\Sigma^{\sharp 2})$, which can be completed in a full geometric symplectic basis denoted $\sigma^\sharp$ by adding ${\mathfrak{b}}-1$ cycles $c_1,\dots,c_{{\mathfrak{b}}-1}$ coming from boundary cycles of $\Sigma$ and ${\mathfrak{b}}-1$ non intersecting cycles $d_1,\dots,d_{{\mathfrak{b}}-1}$ with intersection pairing $\iota(c_i,d_j)=\delta_{ij}$ and $d_j$ not intersecting $a_i,b_i$. Let us extend $\omega$ by $0$ from the left copy of $\Sigma$ to $\Sigma^{\# 2}$. Since the integral $\int_{c_i}\omega=0$ for all $i$ and since $c_i$ are also geodesic curves, the geodesic curvature terms in the regularized integral on $\Sigma^{\#2}$ coming from the cycles are $0$ except for those in \eqref{doublecycles}. The regularized integral of $\omega$  on $\Sigma^{\sharp 2}$ satisfies 
\[ \int_{\Sigma^{\sharp 2}_{\boldsymbol{\sigma}^\sharp}}^{\rm reg} K_gI^{\boldsymbol{\sigma}^\#}_{x_0}(\omega)\dd {\rm v}_g= \int_{\Sigma_{\boldsymbol{\sigma}}}^{\rm reg} K_g I^{\boldsymbol{\sigma}}_{x_0}(\omega)\dd {\rm v}_g.\]
We can then apply Lemma \ref{independence_basis} on $\Sigma^{\#2}$ to a change of geometric symplectic basis which preserve all the cycles contained in the left copy of $\Sigma^{\#2}$ and map $(a_i,b_i)_{i\leq {\mathfrak{g}}}$ to other elements $(a'_i,b_i')_{i\leq {\mathfrak{g}}}$ contained in the interior of the right copy of $\Sigma$. This proves the desired the result.
\end{proof}
The invariance by diffeomorphism is proved in the same exact way as Lemma \ref{lemcurvdiff} and reads 
\begin{lemma}\label{lemcurvdiffopen}
 For $\psi: \Sigma\to \Sigma$ an orientation preserving diffeomorphism and $\boldsymbol{\sigma}$ a canonical geometric basis  of $\mc{H}_1(\Sigma,\pl \Sigma)$, let $\psi(\boldsymbol{\sigma})$ be  the image canonical geometric basis of $\mc{H}_1(\Sigma,\pl \Sigma)$.  Let $x_0\in \Sigma$ and $\omega\in \mc{H}^1_R(\Sigma,\pl \Sigma)$ with compact support in $\Sigma^\circ$. The following identity then holds true 
\[ \int_{\Sigma_{\boldsymbol{\sigma}}}^{\rm reg} K_g I^{\boldsymbol{\sigma}}_{x_0}(\omega)\dd {\rm v}_g =\int_{\Sigma_{\psi(\boldsymbol{\sigma})}}^{\rm reg} K_{\psi_*g} I^{\psi(\boldsymbol{\sigma})}_{\psi(x_0)}(\psi_*\omega)\dd {\rm v}_{\psi_*g}.\]
 \end{lemma}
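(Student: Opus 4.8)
The plan is to follow the proof of Lemma \ref{lemcurvdiff} essentially verbatim, since the only genuinely new ingredient is bookkeeping at the boundary. First I would check that the right-hand side is well posed, i.e.\ that $\psi(\boldsymbol{\sigma})$ is again a canonical geometric basis of $\mc{H}_1(\Sigma,\pl\Sigma)$ adapted to the metric $\psi_*g$. The interior cycles $(a_j,b_j)_{j\leq{\mathfrak{g}}}$ are carried to $(\psi(a_j),\psi(b_j))_{j\leq{\mathfrak{g}}}$, and since an orientation-preserving $\psi$ preserves the intersection form \eqref{int_numbers}, these satisfy the same pairings; the arcs $d_j$ are carried to boundary-to-boundary arcs $\psi(d_j)$; and the boundary geodesics $c_j$ of $g$ are carried to $\psi_*g$-geodesics, geodesics being natural under a diffeomorphism transporting $g$ to $\psi_*g$. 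Moreover $\psi_*\omega\in\mc{H}^1_R(\Sigma,\pl\Sigma)$ is again closed, compactly supported in $\Sigma^\circ$, and has vanishing periods on the boundary cycles, so the reduced form of the counterterm in \eqref{def_reg_integtral_open}, supported only on the interior cycles, is again the correct one on the image basis.

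For the bulk term I would argue exactly as in Lemma \ref{lemcurvdiff}: for $\omega\in\mc{H}^1_R(\Sigma,\pl\Sigma)$ and $y\in\Sigma_{\psi(\boldsymbol{\sigma})}$ one has
\[
I^{\boldsymbol{\sigma}}_{x_0}(\omega)(\psi^{-1}(y))=\int_{\psi(x_0)}^{y}\psi_*\omega=I^{\psi(\boldsymbol{\sigma})}_{\psi(x_0)}(\psi_*\omega)(y),
\]
because $\psi$ maps a path in $\Sigma_{\boldsymbol{\sigma}}$ from $x_0$ to $\psi^{-1}(y)$ to a path in $\Sigma_{\psi(\boldsymbol{\sigma})}$ from $\psi(x_0)$ to $y$. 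Combined with the naturality relations $K_{\psi_*g}\circ\psi=K_g$ and $\psi^*(\dd{\rm v}_{\psi_*g})=\dd{\rm v}_g$, the change of variables $x=\psi^{-1}(y)$ gives
\[
\int_{\Sigma_{\boldsymbol{\sigma}}}I^{\boldsymbol{\sigma}}_{x_0}(\omega)K_g\dd{\rm v}_g=\int_{\Sigma_{\psi(\boldsymbol{\sigma})}}I^{\psi(\boldsymbol{\sigma})}_{\psi(x_0)}(\psi_*\omega)K_{\psi_*g}\dd{\rm v}_{\psi_*g}.
\]

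For the counterterms I would use that periods are preserved, $\int_{a_j}\omega=\int_{\psi(a_j)}\psi_*\omega$ and $\int_{b_j}\omega=\int_{\psi(b_j)}\psi_*\omega$, while geodesic curvature and arclength are intrinsic, so that $\int_{a_j}k_g\dd\ell_g=\int_{\psi(a_j)}k_{\psi_*g}\dd\ell_{\psi_*g}$ and likewise with $a_j$ replaced by $b_j$. Hence each summand $\int_{a_j}\omega\int_{b_j}k_g\dd\ell_g-\int_{b_j}\omega\int_{a_j}k_g\dd\ell_g$ equals its $\psi$-image, and adding the transformed counterterm to the bulk identity yields the claim. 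The main (mild) obstacle is purely the boundary bookkeeping of the first paragraph: one must make sure that the vanishing of the boundary periods and the geodesic character of the $c_j$ — the two facts that allow \eqref{def_reg_integtral_open} to carry no $d_j$- or $c_j$-curvature terms — are preserved by $\psi$. Both are immediate, so I do not expect any substantial difficulty beyond this.
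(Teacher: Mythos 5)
Your proof is correct and follows essentially the paper's own route: the paper simply invokes the argument of Lemma \ref{lemcurvdiff} (naturality of the primitive, of $K_g$, ${\rm dv}_g$, periods, and geodesic curvature under $\psi$), which is exactly what you carry out. Your additional boundary bookkeeping (image basis is again canonical, boundary periods vanish, the $c_j$ stay geodesic so \eqref{def_reg_integtral_open} keeps its reduced form) is the right, if implicit, content of the paper's ``same proof'' remark.
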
 

Now we state all the main properties of defect graphs needed in the sequel. We claim
\begin{lemma}\label{defectopen}
The magnetic regularized curvature term only depends on  the points $\mathbf{z}\in \Sigma^{n_{\mathfrak{m}}+{\mathfrak{b}}}$, the charges $\mathbf{m}\in \Z^{n_{\mathfrak{m}}+{\mathfrak{b}}}$ and the unit tangent vectors $\mathbf{v}$ but not on the defect graph (i.e. the choice of arcs constructed from this data).
\end{lemma}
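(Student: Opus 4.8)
The plan is to follow verbatim the strategy used in the closed case (Lemma~\ref{inv_par_graphe}), now treating the boundary points $z_{n_{\mathfrak{m}}+j}=\zeta_j(1)$ as additional marked points carrying magnetic charge $\varsigma_j k_j$ and equipped with the inward normal vector $v_{n_{\mathfrak{m}}+j}$, exactly as prescribed in Definition~\ref{def:defect}. Since \eqref{condition_charges} guarantees that the total charge $\sum_{j=1}^{n_{\mathfrak{m}}}m_j+\sum_{j=1}^{\mathfrak{b}}\varsigma_j k_j$ vanishes, the combinatorial structure of admissible defect graphs is identical to the closed case, and it suffices to show that the regularized curvature term \eqref{curv:mag2open} is invariant under the three basic moves (S, A, D) and that any two defect graphs are connected by a finite sequence of such moves.

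First I would treat the S-move. Given an edge $\xi_p$ and a homotopic replacement $\tilde\xi_p$ with the same endpoints and the same prescribed tangent directions, I may assume (as in the closed case, after interposing a third curve if necessary) that $\xi_p$ and $\tilde\xi_p$ meet only at their endpoints and bound a topological disk $D$. On $D$ one has $I^{\tilde{\boldsymbol{\xi}}}_{x_0}(\nu_{\mathbf{z},\mathbf{m},\mathbf{k}})=I^{\boldsymbol{\xi}}_{x_0}(\nu_{\mathbf{z},\mathbf{m},\mathbf{k}})\mp\kappa(\xi_p)$, so the difference of the two regularized integrals reduces, via Gauss--Bonnet on $D$, to a cancellation between the bulk term $\int_D(I^{\boldsymbol{\xi}}_{x_0}-I^{\tilde{\boldsymbol{\xi}}}_{x_0})K_g\,\dd{\rm v}_g$ and the counterterms $-2\kappa(\xi_p)\big(\int_{\xi_p}k_g\,\dd\ell_g-\int_{\tilde\xi_p}k_g\,\dd\ell_g\big)$, precisely as in Lemma~\ref{inv_par_graphe}. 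The one new point is the behaviour at a boundary endpoint $z_{j'}=\zeta_j(1)$: since both $\xi_p$ and $\tilde\xi_p$ reach $z_{j'}$ in the inward normal direction $v_{j'}$, they are tangent there, so $D$ is pinched to a cusp at $z_{j'}$ and $\partial D$ contains no arc of $\partial\Sigma$; consequently the boundary of $\Sigma$ contributes nothing to the Gauss--Bonnet balance and the closed-case computation applies unchanged.

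Next I would handle the A- and D-moves by the same Gauss--Bonnet argument applied to the geodesic triangle with vertices $z_j,z_{j'},z_{j''}$, where now one or more of these vertices may lie on $\partial\Sigma$, again approached normally, so that the relevant triangular region stays in $\Sigma^\circ$ up to cusps at the boundary vertices. Finally, as in the closed case, every admissible defect graph can be deformed through S-, A- and D-moves to the canonical one obtained by relabelling $z_1,\dots,z_{n_{\mathfrak{m}}+{\mathfrak{b}}}$ so that $m_1\le\dots\le m_{n_{\mathfrak{m}}+{\mathfrak{b}}}$ and joining consecutive points; two such canonical graphs differing only by the ordering of equal charges are related by further A-moves. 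Combining these steps yields the asserted independence of the defect graph.

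The main obstacle is the geometric bookkeeping at the boundary vertices: one must check that for arcs terminating on $\partial\Sigma$ in the normal direction the deformation domains (the disk $D$ for an S-move, the triangle for an A/D-move) never acquire a genuine boundary segment, so that no spurious geodesic-curvature contribution from $\partial\Sigma$ enters the cancellation. Once this tangency-at-the-boundary observation is in place, every computation is the literal transcription of the proof of Lemma~\ref{inv_par_graphe}.
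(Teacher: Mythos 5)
Your proposal is correct and follows essentially the same route as the paper, whose proof of Lemma~\ref{defectopen} is literally the remark ``same proof as Lemma~\ref{inv_par_graphe}'': you transplant the S-, A-, D-move Gauss--Bonnet argument after treating the points $\zeta_j(1)$ as extra marked points with charges $\varsigma_j k_j$, exactly the paper's setup. Your additional observation that arcs meet $\pl\Sigma$ perpendicularly, so the deformation domains only touch the boundary in cusps and contribute no boundary segment to the Gauss--Bonnet balance, is a correct and welcome explication of what the paper leaves implicit.
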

\begin{proof} Same proof as Lemma \ref{inv_par_graphe}.
\end{proof}

\begin{lemma} \label{magcurvconfopen}
Consider two conformal metrics $g'=e^{\rho}g$ with $\pl_\nu \rho=0$ on $\pl \Sigma$. The regularized magnetic curvature term defined by \eqref{curv:mag2open}  satisfies
$$\int_{\Sigma }^{\rm reg}  I^{\boldsymbol{\xi}}_{x_0}(\nu_{\mathbf{z},\mathbf{m},\mathbf{k}}) K_{g'} \, {\rm dv}_{g'}=\int_{\Sigma }^{\rm reg}  I^{\boldsymbol{\xi}} _{x_0}( \nu_{\mathbf{z},\mathbf{m},\mathbf{k}}) K_g \, {\rm dv}_g+\cjg  \dd\rho,  \nu_{\mathbf{z},\mathbf{m},{\bf k}}\cjd_2.$$
\end{lemma}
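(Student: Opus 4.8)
The plan is to follow exactly the proof of Lemma \ref{conformal_change_reg_int} (the closed-surface conformal covariance of the regularized magnetic curvature), adapting it to the boundary setting. The key new feature compared with Lemma \ref{change_conf_magnetic} is that here we work with the closed (not harmonic) form $\nu_{\mathbf{z},\mathbf{m},\mathbf{k}}$, so the pairing $\cjg \dd\rho,\nu_{\mathbf{z},\mathbf{m},\mathbf{k}}\cjd_2$ will \emph{not} vanish (in Lemma \ref{change_conf_magnetic} it vanished because $\dd^*\nu^{\rm h}_{\mathbf{z},\mathbf{m}}=0$), and one must check that no extra boundary contributions appear along $\pl\Sigma$ thanks to the hypothesis $\pl_\nu\rho=0$ on $\pl\Sigma$.

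First I would use the two pointwise relations $K_{g'}=e^{-\rho}(K_g+\Delta_g\rho)$ and $\dd{\rm v}_{g'}=e^{\rho}\dd{\rm v}_g$, giving
\[
\int_{\Sigma}  I^{\boldsymbol{\xi}}_{x_0}(\nu_{\mathbf{z},\mathbf{m},\mathbf{k}}) K_{g'} \,\dd{\rm v}_{g'}=\int_{\Sigma}  I^{\boldsymbol{\xi}}_{x_0}(\nu_{\mathbf{z},\mathbf{m},\mathbf{k}}) K_{g}\,\dd{\rm v}_{g}+\int_{\Sigma}  I^{\boldsymbol{\xi}}_{x_0}(\nu_{\mathbf{z},\mathbf{m},\mathbf{k}})\,\Delta_g\rho\,\dd{\rm v}_{g}.
\]
Then I would integrate by parts on $\Sigma\setminus\mc{D}_{\mathbf{v},\boldsymbol{\xi}}$, where $I^{\boldsymbol{\xi}}_{x_0}(\nu_{\mathbf{z},\mathbf{m},\mathbf{k}})$ is smooth and single-valued with $\dd I^{\boldsymbol{\xi}}_{x_0}(\nu_{\mathbf{z},\mathbf{m},\mathbf{k}})=\nu_{\mathbf{z},\mathbf{m},\mathbf{k}}$. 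Moving one derivative off $\Delta_g\rho=\dd^*\dd\rho$ produces the bulk term $\cjg \dd I^{\boldsymbol{\xi}}_{x_0}(\nu_{\mathbf{z},\mathbf{m},\mathbf{k}}),\dd\rho\cjd_2=\cjg \dd\rho,\nu_{\mathbf{z},\mathbf{m},\mathbf{k}}\cjd_2$, plus boundary terms coming from two sources: the outer boundary $\pl\Sigma$ and the two faces of each defect arc $\xi_p$.

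The main obstacle—though a routine one—is bookkeeping the boundary terms. On $\pl\Sigma$ the boundary integrand is $\pl_\nu\rho\, I^{\boldsymbol{\xi}}_{x_0}(\nu_{\mathbf{z},\mathbf{m},\mathbf{k}})$, which vanishes identically because $\pl_\nu\rho=0$ on $\pl\Sigma$ by hypothesis; this is precisely why the hypothesis is imposed. Along each arc $\xi_p$, integrating from both faces produces a jump term involving $(I^{\boldsymbol{\xi}}_{x_0}(\nu_{\mathbf{z},\mathbf{m},\mathbf{k}})^+ - I^{\boldsymbol{\xi}}_{x_0}(\nu_{\mathbf{z},\mathbf{m},\mathbf{k}})^-)=2\pi R\,\kappa(\xi_p)/(2\pi R)$—more precisely the jump equals $\kappa(\xi_p)$ in the normalization of Section \ref{curvature_mp}—times $\int_{\xi_p}\pl_\nu\rho\,\dd\ell_g$, exactly as in the closed-surface computations in the proofs of Lemmas \ref{conformal_change_reg_int} and \ref{change_conf_magnetic}.

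Finally I would combine this with the variation of the counterterm $-2\sum_{p}\kappa(\xi_p)\int_{\xi_p}k_g\dd\ell_g$ in \eqref{curv:mag2open}, using the conformal change of geodesic curvature $k_{g'}\dd\ell_{g'}=k_g\dd\ell_g-\tfrac12\pl_\nu\rho\,\dd\ell_g$ along $\xi_p$, just as in the proofs of the analogous closed-surface lemmas. The arc jump terms from the integration by parts cancel exactly against the variation of the counterterms, leaving only the bulk pairing $\cjg \dd\rho,\nu_{\mathbf{z},\mathbf{m},\mathbf{k}}\cjd_2$. Since the surface has boundary but the defect arcs now may also terminate on $\pl\Sigma$ (at the points $z_{n_{\mathfrak{m}}+j}=\zeta_j(1)$), I would note that the contributions at these endpoints are point contributions of measure zero and do not affect the computation, so the identity
\[
\int_{\Sigma }^{\rm reg}  I^{\boldsymbol{\xi}}_{x_0}(\nu_{\mathbf{z},\mathbf{m},\mathbf{k}}) K_{g'} \,{\rm dv}_{g'}=\int_{\Sigma }^{\rm reg}  I^{\boldsymbol{\xi}}_{x_0}(\nu_{\mathbf{z},\mathbf{m},\mathbf{k}}) K_g \,{\rm dv}_g+\cjg \dd\rho,\nu_{\mathbf{z},\mathbf{m},{\bf k}}\cjd_2
\]
follows. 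The proof is essentially identical in structure to that of Lemma \ref{conformal_change_reg_int}, with the single modification that the harmonicity of $\nu$ is no longer used (hence the surviving bulk pairing) and the hypothesis $\pl_\nu\rho=0$ is used to kill the outer-boundary term.
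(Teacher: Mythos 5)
Your proposal is correct and follows essentially the same route as the paper: the paper's proof is the one-line remark that the argument of Lemma \ref{conformal_change_reg_int} applies verbatim, i.e.\ the relations $K_{g'}=e^{-\rho}(K_g+\Delta_g\rho)$, $\dd{\rm v}_{g'}=e^{\rho}\dd{\rm v}_g$, integration by parts on $\Sigma\setminus\mc{D}_{\mathbf{v},\boldsymbol{\xi}}$ with the jump $\kappa(\xi_p)$ across each arc cancelling against the conformal variation $k_{g'}\dd\ell_{g'}=k_g\dd\ell_g-\tfrac12\pl_\nu\rho\,\dd\ell_g$ of the counterterms, the hypothesis $\pl_\nu\rho|_{\pl\Sigma}=0$ killing the outer-boundary term, and the bulk pairing $\cjg\dd\rho,\nu_{\mathbf{z},\mathbf{m},\mathbf{k}}\cjd_2$ surviving precisely because $\nu_{\mathbf{z},\mathbf{m},\mathbf{k}}$ is only closed rather than coclosed (contrast Lemma \ref{change_conf_magnetic}). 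You correctly identified all of these points, including the role of the harmonicity hypothesis in the closed-surface version, so nothing is missing.
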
 

\begin{proof} Same proof as Lemma \ref{conformal_change_reg_int}.
\end{proof}
 
 \begin{lemma} \label{magcurvdiffopen}
Let $\psi:\Sigma'\to\Sigma$ be an  orientation preserving diffeomorphism. The regularized magnetic curvature term defined by \eqref{curv:mag2open}  satisfies the relation
\[\int_{\Sigma' }^{\rm reg}  I^{\boldsymbol{\xi }}_{x_0}(\nu_{\mathbf{z},\mathbf{m},\mathbf{k}}) K_{g} \, {\rm dv}_{g}=\int_{\Sigma }^{\rm reg}  I^{\boldsymbol{\psi\circ\xi }} _{\psi(x_0)}( \psi_*\nu_{\mathbf{z},\mathbf{m},\mathbf{k}}) K_{\psi_*g} \,{\rm  dv}_{\psi_*g}.\]
\end{lemma}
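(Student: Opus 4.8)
The plan is to follow exactly the strategy of Lemma~\ref{lemcurvdiff} (invariance by diffeomorphism in the closed case), adapted to the present setting with a boundary and a defect graph. The key point is that the regularized magnetic curvature integral \eqref{curv:mag2open} is built from geometrically natural quantities (the integral of $K_g$ against the primitive $I^{\boldsymbol{\xi}}_{x_0}(\nu_{\mathbf{z},\mathbf{m},\mathbf{k}})$, and the geodesic curvature integrals $\int_{\xi_p}k_g\,\dd\ell_g$ weighted by the monodromies $\kappa(\xi_p)$), all of which transform functorially under an orientation-preserving diffeomorphism $\psi$.

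First I would record the naturality of the primitive: for $\omega=\nu_{\mathbf{z},\mathbf{m},\mathbf{k}}$ and $y\in \Sigma\setminus \mc{D}_{\psi_*\mathbf{v},\psi\circ\boldsymbol{\xi}}$ one has, exactly as in the proof of Lemma~\ref{lemcurvdiff},
\[
I^{\boldsymbol{\xi}}_{x_0}(\omega)(\psi^{-1}(y))=\int_{x_0}^{\psi^{-1}(y)}\omega=\int_{\psi(x_0)}^{y}\psi_*\omega=I^{\psi\circ\boldsymbol{\xi}}_{\psi(x_0)}(\psi_*\omega)(y),
\]
using that $\psi$ maps the defect graph $\mc{D}_{\mathbf{v},\boldsymbol{\xi}}$ (on $\Sigma'$) to $\mc{D}_{\psi_*\mathbf{v},\psi\circ\boldsymbol{\xi}}$ (on $\Sigma$), which is a defect graph for $\psi_*\nu_{\mathbf{z},\mathbf{m},\mathbf{k}}$ by Definition~\ref{def:defect}. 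Combined with the standard change-of-variables relations $K_{\psi_*g}=K_g\circ\psi^{-1}$ and $\psi_*({\rm v}_g)={\rm v}_{\psi_*g}$, this gives the volume-integral identity
\[
\int_{\Sigma'\setminus \mc{D}_{\mathbf{v},\boldsymbol{\xi}}} I^{\boldsymbol{\xi}}_{x_0}(\nu_{\mathbf{z},\mathbf{m},\mathbf{k}})K_g\,\dd v_g
=\int_{\Sigma\setminus \mc{D}_{\psi_*\mathbf{v},\psi\circ\boldsymbol{\xi}}} I^{\psi\circ\boldsymbol{\xi}}_{\psi(x_0)}(\psi_*\nu_{\mathbf{z},\mathbf{m},\mathbf{k}})K_{\psi_*g}\,\dd v_{\psi_*g}.
\]

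Next I would handle the counterterms. The monodromy $\kappa(\xi_p)=2\pi R\sum_{j\in J_p}m_j$ is defined purely in terms of which marked points are enclosed by a small loop crossing $\xi_p$ (see \eqref{defkappaxip}); since $\psi$ is an orientation-preserving homeomorphism it preserves this enclosure data, so $\kappa(\psi\circ\xi_p)=\kappa(\xi_p)$. Moreover the geodesic curvature is natural under isometric identification, so $\int_{\xi_p}k_g\,\dd\ell_g=\int_{\psi(\xi_p)}k_{\psi_*g}\,\dd\ell_{\psi_*g}$. Hence the sum of counterterms matches on both sides, and adding it to the volume integral above yields precisely the claimed identity for \eqref{curv:mag2open}. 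The whole argument is a near-verbatim transcription of Lemma~\ref{lemcurvdiff}, and indeed the statement asserts ``same proof'', so I expect no genuine obstacle; the only thing to check carefully is that $\psi$ sends a valid defect graph to a valid defect graph with the correct orientations and tangent-vector conditions at the endpoints (including the boundary points $z_{n_{\mathfrak{m}}+j}$, where $v_j$ is the inward normal, which $\psi$ preserves as $\psi(\pl\Sigma')=\pl\Sigma$), so that $I^{\psi\circ\boldsymbol{\xi}}_{\psi(x_0)}$ is well defined. This is immediate from the compatibility of $\psi$ with the boundary parametrizations.
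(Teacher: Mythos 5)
Your proposal is correct and follows essentially the same route as the paper, which simply invokes the proof of Lemma \ref{lemcurvdiff}: naturality of the primitive under $\psi$, invariance of the monodromy coefficients $\kappa(\xi_p)$ (since $\psi$ preserves the enclosure data defining them), and the isometric naturality of the geodesic-curvature counterterms, with the extra check that $\psi$ carries a valid defect graph to a valid one. Your expanded write-up fills in exactly the details the paper leaves implicit, so there is nothing to correct.
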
 

\begin{proof}
The proof is the same as that of Lemma \ref{lemcurvdiff}.
\end{proof} 

Now we treat the additivity of regularized curvature integrals. First we treat the case when we glue  two Riemann surfaces:
\begin{lemma} \label{magcurvtwo}
Consider two Riemannian surfaces $(\Sigma_i,g_i)$  for $i=1,2$ with admissible metrics, with ${\mathfrak{b}}_i>0$ boundary components and genus ${\mathfrak{g}}_i$ for $i=1,2$ and with analytic parametrizations $
\boldsymbol{\zeta}_i$ of their respective boundary. Assume $\pl_{{\mathfrak{b}}_1}\Sigma_1$ is outgoing and $\pl_{{\mathfrak{b}}_2}\Sigma_2$ is incoming and let $\Sigma=\Sigma_1\#\Sigma_2$ the glued surface obtained by identifying $\pl_{{\mathfrak{b}}_1}\Sigma_1$ with $\pl_{{\mathfrak{b}}_2}\Sigma_2$ and $g$ the metric induced by $g_1,g_2$.  Consider the geometric data (i), (ii), (iii), (iv) described above for $\Sigma_1$ and for $\Sigma_2$, and denote them with either a subscript $i$ or a superscript $i$ when they are associated to $\Sigma_i$, for example 
$\boldsymbol{\sigma}_i,{\bf k}^c_i,\boldsymbol{\xi}_i,x_{0}^{i}, \nu^{i}_{{\bf z}_i,{\bf m}_i,{\bf k}_i},\omega^{i,c}_{{\bf k}_i^c}$ etc.
We choose these defect graphs by imposing that there is only one arc $\xi_{1j_0}$ having $\zeta_{1{\mathfrak{b}}_1}(1)$ as endpoint on $\Sigma_1$ and only one arc $\xi_{2j'_0}$ having $\zeta_{2{\mathfrak{b}}_2}(1)$ as endpoint on $\Sigma_2$. Assume $k_{1\mathfrak{b}_1}=k_{2\mathfrak{b}_2}$ and set ${\bf z}:=({\bf z}_1,{\bf z}_2)$, ${\bf m}:=({\bf m}_1,{\bf m}_2)$ and ${\bf v}:=({\bf v}_1,{\bf v}_2)$. Then we obtain a defect graph $\mc{D}_{{\bf v},\boldsymbol{\xi}}$   on $\Sigma$ by gathering all the arcs in $\boldsymbol{\xi}_1$ but $\xi_{1j_0}$, all the arcs in $\boldsymbol{\xi}_2$ but $\xi_{2j_0'}$, and the arc $\xi$ obtained by gluing the arcs $\xi_{1j_0}$ and $\xi_{2j_0'}$ with orientation in the sense of increasing charges. Furthermore, if ${\bf k}:=({\bf k}^-_{1},{\bf k}_2^-)$ where ${\bf k}_i^-\in\Z^{{\mathfrak{b}}_i-1}$ stands for the vector ${\bf k}_i$ with its ${\mathfrak{b}}_i$-th component removed and if one glues together the $1$-forms  $\nu^{i}_{{\bf z}_i,{\bf m}_i,{\bf k}_i}$, $i=1,2$, to get the $1$-form $\nu_{{\bf z},{\bf m},{\bf k}}$ on $\Sigma$, then
\begin{equation}\label{addmag}
\int_\Sigma^{\rm reg}I_{x_0}^{\boldsymbol{\xi}}(\nu_{{\bf z},{\bf m},{\bf k}})K_g \dd {\rm v}_g=\int_{\Sigma_1}^{\rm reg}I_{x_0^{1}}^{\boldsymbol{\xi}_1}(\nu^{1}_{{\bf z}_1,{\bf m}_1,{\bf k}_1})K_{g_1} \dd {\rm v}_{g_1}+\int_{\Sigma_2}^{\rm reg}I_{x_0^{2}}^{\boldsymbol{\xi}_2}(\nu^{2}_{{\bf z}_2,{\bf m}_2,{\bf k}_2})K_{g_2} \dd {\rm v}_{g_2}
\end{equation}
where $x_0^{1}$ and $x_0^{2}$ are two base points respectively on $\pl_{\mathfrak{b}_1}\Sigma_1\subset\Sigma_1$ and $\pl_{\mathfrak{b}_2}\Sigma_2\subset\Sigma_2$, which we assume are identified under gluing to $x_0=x_0^{1}=x_0^{2}\in\Sigma$.
Similarly, let $\boldsymbol{\sigma}=\boldsymbol{\sigma}_1\# \boldsymbol{\sigma}_2$ be the gluing of the canonical geometric bases $\boldsymbol{\sigma}_1$ and $\boldsymbol{\sigma}_2$ from Lemma \ref{baseglue}, let ${\bf k}^c=({\bf k}_1^c,{\bf k}_2^c)$ and $\omega_{{\bf k}^c}^c:=\omega_{{\bf k}_1^c}^{1,c}+\omega_{{\bf k}_2^c}^{2,c}$.
Then the following holds true
\begin{equation}\label{add_curvature}
\int_\Sigma^{\rm reg}I_{x_0}^{\boldsymbol{\sigma}}(\omega^c_{{\bf k}^c})K_g \dd {\rm v}_g=\int_{\Sigma_1}^{\rm reg}I_{x^{1}_0}^{\boldsymbol{\sigma}_1}(\omega^{1,c}_{{\bf k}_1^c})K_{g_1} \dd {\rm v}_{g_1}+\int_{\Sigma_2}^{\rm reg}I_{x^{2}_0}^{\boldsymbol{\sigma}_2}(\omega^{2,c}_{{\bf k}_2^c})K_{g_2} \dd {\rm v}_{g_2}.
\end{equation}
\end{lemma}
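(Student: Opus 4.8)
The plan is to prove the additivity formula \eqref{addmag} for the magnetic curvature term by directly comparing the defining expression \eqref{curv:mag2open} on $\Sigma$ with the sum of the two corresponding expressions on $\Sigma_1$ and $\Sigma_2$, exploiting that gluing along $\pl_{{\mathfrak{b}}_1}\Sigma_1\sim\pl_{{\mathfrak{b}}_2}\Sigma_2$ decomposes every geometric object into pieces. First I would observe that, since $k_{1{\mathfrak{b}}_1}=k_{2{\mathfrak{b}}_2}$ and the windings around $\zeta_{1{\mathfrak{b}}_1}(1)$ and $\zeta_{2{\mathfrak{b}}_2}(1)$ are opposite (one boundary outgoing, one incoming, so $\varsigma_{1{\mathfrak{b}}_1}=-\varsigma_{2{\mathfrak{b}}_2}$), the two forms $\nu^1_{{\bf z}_1,{\bf m}_1,{\bf k}_1}$ and $\nu^2_{{\bf z}_2,{\bf m}_2,{\bf k}_2}$ genuinely glue to a smooth closed $1$-form $\nu_{{\bf z},{\bf m},{\bf k}}$ on $\Sigma$ with poles exactly at ${\bf z}=({\bf z}_1,{\bf z}_2)$: the singular contributions at the two gluing marked points cancel, using the neutrality constraints \eqref{condition_charges} on each piece together with $k_{1{\mathfrak{b}}_1}=k_{2{\mathfrak{b}}_2}$. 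One must check that the glued form still satisfies the hypotheses of Proposition \ref{harmpoles} on $\Sigma$ (correct periods on the cycles of the glued homology basis from Lemma \ref{baseglue}, the prescribed windings at the interior $z_j$'s, and the boundary integral conditions), which is where the compatibility of the two sets of charges is used.

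Next I would split the integral $\int_{\Sigma\setminus\mc{D}_{{\bf v},\boldsymbol{\xi}}}I^{\boldsymbol{\xi}}_{x_0}(\nu_{{\bf z},{\bf m},{\bf k}})K_g\,\dd{\rm v}_g$ as the sum of integrals over $\Sigma_1\setminus\mc{D}$ and $\Sigma_2\setminus\mc{D}$, since the curvature $K_g$ and the measure $\dd{\rm v}_g$ restrict to $K_{g_i},\dd{\rm v}_{g_i}$ away from the gluing circle (the metric is admissible, hence $K_g=0$ near $\pl\Sigma_i$). The key point is to control the primitive: $I^{\boldsymbol{\xi}}_{x_0}(\nu_{{\bf z},{\bf m},{\bf k}})$ is the primitive computed from the common base point $x_0=x_0^1=x_0^2$ on the gluing circle, so on $\Sigma_1$ it agrees with $I^{\boldsymbol{\xi}_1}_{x_0^1}(\nu^1_{{\bf z}_1,{\bf m}_1,{\bf k}_1})$ and on $\Sigma_2$ with $I^{\boldsymbol{\xi}_2}_{x_0^2}(\nu^2_{{\bf z}_2,{\bf m}_2,{\bf k}_2})$, \emph{provided} the defect graph pieces match up. This is precisely guaranteed by the hypothesis that $\xi_{1j_0}$ and $\xi_{2j_0'}$ are the unique arcs reaching the respective gluing points and are glued into the single arc $\xi$: the resulting $\boldsymbol{\xi}$ on $\Sigma$ is honestly the union of the two subgraphs with this one arc identified, so the path integrals defining the primitive never cross the gluing circle in a way that introduces ambiguity.

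Then I would match the geodesic-curvature counterterms. The counterterm sum on $\Sigma$ runs over the arcs of $\boldsymbol{\xi}$; the arcs inherited from $\boldsymbol{\xi}_1$ (resp. $\boldsymbol{\xi}_2$) contribute $-2\kappa(\xi_p)\int_{\xi_p}k_g\,\dd\ell_g$ exactly as in the corresponding term on $\Sigma_i$, because $\kappa(\xi_p)$ is defined by gluing disks to make a closed surface and this local definition is unchanged by whether the disk is glued on $\Sigma_i$ or on the enlarged $\Sigma$. For the one glued arc $\xi=\xi_{1j_0}\cup\xi_{2j_0'}$, the integral $\int_\xi k_g\,\dd\ell_g$ splits as $\int_{\xi_{1j_0}}k_{g_1}\,\dd\ell_{g_1}+\int_{\xi_{2j_0'}}k_{g_2}\,\dd\ell_{g_2}$, and one must check the $\kappa$-values agree; since the winding structure across $\xi$ is the one induced by the common charges and $k_{1{\mathfrak{b}}_1}=k_{2{\mathfrak{b}}_2}$, the value $\kappa(\xi)$ coincides with both $\kappa(\xi_{1j_0})$ and $\kappa(\xi_{2j_0'})$. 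I expect the main obstacle to be exactly this bookkeeping at the gluing circle: verifying that the primitive $I^{\boldsymbol{\xi}}_{x_0}$ on $\Sigma$ restricts correctly and that the $\kappa$-coefficient of the glued arc is consistent, which requires carefully tracking orientations (incoming vs.\ outgoing, the sign $\varsigma_j$) and the neutrality relations \eqref{condition_charges}. The proof of the analogous cohomological identity \eqref{add_curvature} is simpler: there $\omega^c_{{\bf k}^c}=\omega^{1,c}_{{\bf k}_1^c}+\omega^{2,c}_{{\bf k}_2^c}$ with each summand compactly supported in the respective interior $\Sigma_i^\circ$, so no matching across the gluing circle is needed and the integral and its counterterms split termwise, using that $\iota_{\pl\Sigma}^*\omega^c_{{\bf k}^c}=0$ makes the boundary geodesic-curvature terms vanish as already noted before the statement.
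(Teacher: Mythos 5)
Your overall strategy is the same as the paper's (restrict the primitive to each $\Sigma_i$ using that the branch cuts are unchanged, split the bulk integral, then match the counterterms arc by arc), and your treatment of \eqref{add_curvature} via compact support is exactly the paper's one-line argument. But there is a genuine gap in the counterterm matching, and it sits precisely where you claim the step is easy. You assert that for the arcs $\xi_{1p}$, $p\neq j_0$, the coefficient $\kappa(\xi_{1p})$ is unchanged because it is a ``local definition'' insensitive to whether the disk is glued onto $\Sigma_1$ or onto the enlarged $\Sigma$. This is not a proof: $\kappa(\xi_{1p})$ is a \emph{global} charge count, namely $2\pi R$ times the sum of charges enclosed by the loop $\alpha_x$ from \eqref{defkappaxip}. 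If the disk $D_\alpha$ bounded by $\alpha_x$ does not contain the gluing point $\zeta_{1\mathfrak{b}_1}(1)$, the two computations indeed agree trivially. But if $D_\alpha$ contains $\zeta_{1\mathfrak{b}_1}(1)$, then on the capped surface $\hat\Sigma_1$ the enclosed charges include the boundary charge $\varsigma_{1\mathfrak{b}_1}k_{1\mathfrak{b}_1}=-k_{1\mathfrak{b}_1}$, whereas on $\Sigma$ the same loop encloses instead the \emph{entire} defect graph of $\Sigma_2$, contributing $2\pi R\big(\sum_{j}m_{2j}+\sum_{j\leq \mathfrak{b}_2-1}\varsigma_{2j}k_{2j}\big)$. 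That these two contributions coincide is not locality: it follows from the neutrality relation \eqref{condition_charges} on $\Sigma_2$, which gives $\sum_{j}m_{2j}+\sum_{j\leq\mathfrak{b}_2-1}\varsigma_{2j}k_{2j}=-k_{2\mathfrak{b}_2}$, together with the hypothesis $k_{1\mathfrak{b}_1}=k_{2\mathfrak{b}_2}$. This charge-counting identity is the central computation of the paper's proof, and your argument as written would fail exactly in this case.

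The second problem is your claim that $\kappa(\xi)$ ``coincides with both $\kappa(\xi_{1j_0})$ and $\kappa(\xi_{2j_0'})$'' for the glued arc. This is false in general: $\xi$ is oriented by increasing charges on $\Sigma$, while $\xi_{1j_0}$ was oriented by increasing charges on $\Sigma_1$, where one endpoint carries the artificial boundary charge $\varsigma_{1\mathfrak{b}_1}k_{1\mathfrak{b}_1}$; these two orderings need not agree. When the orientation of $\xi|_{\Sigma_1}$ is reversed relative to $\xi_{1j_0}$, the sum defining $\kappa$ runs over the \emph{complementary} set of charges, and since the total charge vanishes one gets $\kappa(\xi|_{\Sigma_1})=-\kappa(\xi_{1j_0})$, not equality. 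The counterterms nevertheless match because the geodesic curvature $k_g$ also changes sign under orientation reversal, so only the product is invariant:
\begin{equation*}
\kappa(\xi_{1j_0})\int_{\xi_{1j_0}}k_g\,\dd \ell_g=\kappa(\xi|_{\Sigma_1})\int_{\xi|_{\Sigma_1}}k_g\,\dd \ell_g.
\end{equation*}
You correctly flag orientations and \eqref{condition_charges} as ``the main obstacle,'' but the argument you then give resolves neither point; to complete the proof you must replace the locality claim by the enclosed-charge computation above and the equality of $\kappa$'s on the glued arc by the sign-compensation identity just displayed. (A minor side remark: the forms $\nu^i_{{\bf z}_i,{\bf m}_i,{\bf k}_i}$ are smooth near the glued boundary, equal to $\varsigma_{i\mathfrak{b}_i}k_{i\mathfrak{b}_i}R\,\dd\theta$ in the annular chart; there are no ``singular contributions at the gluing marked points'' to cancel — the gluing points are endpoints of defect arcs, not poles of the form.)
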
 

\begin{proof}
The fact that $\mc{D}_{{\bf v},\boldsymbol{\xi}}$ is a defect graph is clear. Since $\xi$ is obtained by gluing of the arcs $\xi_{1j_0}$ and $\xi_{2,j_0'}$, then $I^{\boldsymbol{\xi}}_{x_0}(\nu_{{\bf z},{\bf m},{\bf k}})|_{\Sigma_i}=I_{x^{i}_0}^{\boldsymbol{\xi}_i}(\nu^{i}_{{\bf z}_i,{\bf m}_i,{\bf k}_i})$ (indeed the branch cuts remain unchanged). What is less straightforward is the additivity of regularized curvature integrals. It results from a simple observation concerning the computation of the coefficients $\kappa(\xi_{ip})$ on $\Sigma_i$.  Let us begin with $\Sigma_1$ ($\Sigma_2$ is similar). To compute the coefficient $\kappa(\xi_{1p})$ of the arc $\xi_{1p}$ (for $p\not=j_0$) in the defect graph associated to $\Sigma_1$, recall  (see after Definition \ref{def:defect}) that we consider a positively oriented closed contractible curve $\alpha_x:[0,1]$ in  the closed surface $\hat{\Sigma}_1$ (obtained by gluing disks at the boundary components of $\Sigma_1$), with $x=\alpha_x(0)=\xi_{1p}(t)=\alpha_x\cap \boldsymbol{\xi}$ for some $t$ and $\dot{\alpha}_x(0)=J\dot{\xi}_{1p}(t)$; note that $\alpha_x$ bounds a disk $D_\alpha$ with positively oriented boundary. 
If $D_\alpha$ does not contain $\zeta_{1\mathfrak{b}_1}(1)$, this means the same curve $\alpha_x$ can be used to compute $\kappa(\xi_{1p})$, both when the arc $\xi_{1p}$ is seen as an element of the defect graph of $\Sigma_1$ or $\Sigma$, hence it takes the same value in both cases.
If the point  $\zeta_{1\mathfrak{b}_1}(1)$ belongs to $D_\alpha$, then its contribution to $\kappa(\xi_{1p})$ is   $-2\pi Rk_{1\mathfrak{b}_1}$  (recall that $\pl_{\mathfrak{b}_1}\Sigma_1$ is outgoing). Next, if  we compute the   coefficient $\kappa(\xi_{1p})$ for the   arc  $\xi_{1p}$ seen as an arc in the defect graph $\mc{D}_{{\bf v},\boldsymbol{\xi}}$   on $\Sigma$, the curve $\alpha_x$ bounds a disk containing all the points of the defect graphs located  in $\Sigma_2$ as well, producing  a total contribution $ 2\pi R(\sum_{j=1}^{n^{(2)}_\mathfrak{m}} m_{2j}+\sum_{j=1}^{\mathfrak{b}_2-1} \varsigma_{2j}k_{2j})$, which by \eqref{condition_charges} is equal to $ -2\pi Rk_{2\mathfrak{b}_2} =  -2\pi Rk_{1\mathfrak{b}_1}$. Hence $\kappa(\xi_{1p})$ has the same value when viewed in the graph $\mc{D}_{{\bf v},\boldsymbol{\xi}}$  or in $\mc{D}_{{\bf v}_1, \boldsymbol{\xi}_1}$. The situation is slightly different if we compute $\kappa(\xi_{1j_0})$ because of the orientation of $\xi|_{\Sigma_1}$. Of course, if this  orientation is the same as $\xi_{1j_0}$ then  the argument is the same as above. If the orientation is reversed, then   the sum defining $\kappa(\xi)$ in $\mc{D}_{{\bf v},\boldsymbol{\xi}}$ involves the complement of the charges used to compute  $\kappa(\xi_{1j_0})$ in $\Sigma_1$, and since the total sum of all charges is $0$, this means that $\kappa (\xi|_{\Sigma_1})=-\kappa(\xi_{1j_0})$. But changing the orientation also changes the sign of the geodesic curvature so that 
$$\kappa(\xi_{1j_0})\int_{\xi_{1j_0}}k_g\dd \ell_g =\kappa (\xi|_{\Sigma_1})\int_{\xi|_{\Sigma_1}}k_g\dd \ell_g.$$
The argument is the same on $\Sigma_2$, and all in all, this proves the relation \eqref{addmag}.  

The identity \eqref{add_curvature} is clear, using that  $\omega^c_{i{\bf k}_i^c}$ vanishes near the boundaries $\pl \Sigma_i$. 
\end{proof} 

It remains to consider the case of self-gluing. We claim: 

\begin{lemma} \label{magcurvtwoself}
Consider  a Riemann surface  $\Sigma $ with $\mathfrak{b}\geq 2$ boundary components  and with analytic parametrizations $
\boldsymbol{\zeta}$ of their respective boundary. We consider marked points ${\bf z}_1:=(z_{1},\dots,z_{n_{\mathfrak{m}}})$   on  $\Sigma $ and associated respective magnetic charges ${\bf m}:=(m_{1},\dots,m_{n_{\mathfrak{m}}})$  and unit tangent vectors $\mathbf{v}=((z_1,v_{1}),\dots,(z_n,v_{n_{\mathfrak{m}}}))\in (T\Sigma)^{n_\mathfrak{m}}$ at the points $z_j$.  We assume $\pl_{\mathfrak{b}-1}\Sigma$ is outgoing and $\pl_{\mathfrak{b}}\Sigma$ is incoming. 
Finally we consider ${\bf k}\in \Z^{\mathfrak{b}}$   and the 1-forms $\nu_{{\bf z},{\bf m},{\bf k}}$ on $\Sigma$ given by   Lemma \ref{harmpoles} such that (recall that $\varsigma_{j}=-1$ is the boundary $\partial_j\Sigma$ is outgoing and $\varsigma_{j}=1$ if it is incoming)
\begin{align*}
 &\int_{\partial_j\Sigma}\nu_{\mathbf{z},\mathbf{m},\mathbf{k}}= 2\pi R\varsigma_{j}k_{j}\text{ for }j=1,\dots,\mathfrak{b},\qquad   \int_{d_g(z_{j},\cdot)=\eps}\nu_{\mathbf{z},\mathbf{m},\mathbf{k}}=2\pi R m_{j}\quad \text{ for } j=1,\dots,n_{\mathfrak{m}}\\
& \sum_{j=1}^{n_\mathfrak{m}} m_{j}+\sum_{j=1}^{\mathfrak{b}} \varsigma_{j}k_{j}=0
\end{align*}
and a  defect graph  $\mc{D}_{{\bf v}, \boldsymbol{\xi}}$  on $\Sigma$. We choose this defect graphs by imposing that there is only one arc $\xi_{j_0}$ having $\zeta_{\mathfrak{b}}(1)$ as endpoint, and that this arc has $\zeta_{\mathfrak{b}-1}(1)$ as other endpoint.
Let $\Sigma^{\#}$ be the Riemann surface obtained by self-gluing $\Sigma$ by identifying $\partial_{\mathfrak{b}-1}\Sigma$ and $\partial_{\mathfrak{b}}\Sigma$ using the corresponding parametrizations, and we assume $k_{\mathfrak{b}-1}=k_{\mathfrak{b}}$.   Then we can obtain a defect graph $\mc{D}_{{\bf v},\boldsymbol{\xi}^\#}$   on $\Sigma^\#$ by taking all the arcs in $\boldsymbol{\xi}$ and removing the two arcs having $\zeta_{\mathfrak{b}-1}(1)$ or $\zeta_{\mathfrak{b}}(1)$ as endpoints. The curves $\xi_{j_0}$ and $\zeta_{\mathfrak{b}-1}$ have intersection number 1. Let $\boldsymbol{\sigma}$ be a basis of the relative homology on $\Sigma^{\#}$ having both $\xi_{j_0}$ and $\zeta_{\mathfrak{b}-1}$ has interior cycles.

We write ${\bf k}\in\Z^\mathfrak{b}$ as ${\bf k}=({\bf k}_-,k_{\mathfrak{b}-1} ,k_\mathfrak{b})\in \Z^{\mathfrak{b}-2}\times \Z\times \Z$. Then $\nu_{{\bf z},{\bf m},{\bf k}}$ can be split as $\nu_{{\bf z},{\bf m},{\bf k}}=\nu_{{\bf z},{\bf m},({\bf k}_-,0,0)}+\nu_{{\bf z},{\bf m},({\bf 0},k_\mathfrak{b},k_\mathfrak{b})}$, and each 1-form involved in this expression makes sense as 1-form on $\Sigma^{\#}$ by Lemma \ref{baseselfglue}. Then
\begin{equation}\label{addmag}
\int_\Sigma^{\rm reg}I_{x_0}^{\boldsymbol{\xi}}(\nu_{{\bf z},{\bf m},{\bf k}})K_g \dd {\rm v}_g=\int_{\Sigma^{\#}}^{\rm reg}I_{x_0}^{\boldsymbol{\xi}^\#}(\nu_{{\bf z},{\bf m},({\bf k}_-,0,0)})K_g \dd {\rm v}_g+\int_{\Sigma^{\#}}^{\rm reg}I_{x_0}^{\boldsymbol{\sigma}}(\nu_{{\bf z},{\bf 0},({\bf 0},k_\mathfrak{b},k_\mathfrak{b})})K_g \dd {\rm v}_g.
\end{equation}
\end{lemma}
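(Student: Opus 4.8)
The plan is to reduce the self-gluing identity \eqref{addmag} to the additivity-under-gluing result already established in Lemma \ref{magcurvtwo}, combined with an analysis of how the coefficients $\kappa(\xi_p)$ behave under self-gluing. As with the two-surface case, the key conceptual point is that the regularized curvature integral decomposes into a ``bulk'' piece $\int_{\Sigma\setminus \mc{D}_{\mathbf{v},\boldsymbol{\xi}}} I^{\boldsymbol{\xi}}_{x_0}(\nu_{{\bf z},{\bf m},{\bf k}}) K_g\,\dd {\rm v}_g$, which transforms naturally, and the geodesic-curvature counterterms $-2\sum_p \kappa(\xi_p)\int_{\xi_p} k_g\,\dd\ell_g$, whose invariance rests on the computation of the $\kappa$-coefficients via contractible loops $\alpha_x$ enclosing prescribed magnetic charges. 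First I would use the splitting $\nu_{{\bf z},{\bf m},{\bf k}}=\nu_{{\bf z},{\bf m},({\bf k}_-,0,0)}+\nu_{{\bf z},{\bf 0},({\bf 0},k_\mathfrak{b},k_\mathfrak{b})}$ guaranteed by Lemma \ref{baseselfglue}: the first form descends to $\Sigma^{\#}$ because its periods on $\pl_{\mathfrak{b}-1}\Sigma$ and $\pl_{\mathfrak{b}}\Sigma$ vanish (so it is genuinely a relative-cohomology class supported away from the glued boundary), while the second, carrying equal charges $k_\mathfrak{b}$ on the two glued circles, becomes a form on $\Sigma^{\#}$ with nontrivial period along the new cycle $c_1^{\#}$ created by the gluing.

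For the first term on the right of \eqref{addmag}, the argument is essentially identical to the gluing computation in Lemma \ref{magcurvtwo}: since $\nu_{{\bf z},{\bf m},({\bf k}_-,0,0)}$ has zero flux across $\pl_{\mathfrak{b}-1}\Sigma=\pl_{\mathfrak{b}}\Sigma$, the primitive $I_{x_0}^{\boldsymbol{\xi}}$ agrees with $I_{x_0}^{\boldsymbol{\xi}^\#}$ after removing the two arcs terminating at $\zeta_{\mathfrak{b}-1}(1)$ and $\zeta_{\mathfrak{b}}(1)$. I would then verify that each surviving arc $\xi_p$ has the same $\kappa(\xi_p)$ whether computed on $\Sigma$ (using a disk in the capped-off closed surface $\hat\Sigma$) or on $\Sigma^{\#}$: when the enclosing loop $\alpha_x$ does not wind around the glued circle the value is manifestly unchanged, and when it does, the contributions $-2\pi R k_{\mathfrak{b}-1}$ and $-2\pi R k_{\mathfrak{b}}$ from the two glued boundary points cancel precisely because $k_{\mathfrak{b}-1}=k_{\mathfrak{b}}$ and $\varsigma_{\mathfrak{b}-1}=-1$, $\varsigma_{\mathfrak{b}}=1$ have opposite signs (so the net magnetic charge swept by crossing the seam is zero). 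This is the same cancellation mechanism that appears in Lemma \ref{magcurvtwo}, now applied with both glued boundaries lying in a single surface.

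For the second term, involving $\nu_{{\bf z},{\bf 0},({\bf 0},k_\mathfrak{b},k_\mathfrak{b})}$, the situation is genuinely new and is where I expect the main obstacle to lie: this form has a nonzero period along the new homology cycle $c_1^{\#}$, so after self-gluing it is no longer in the relative cohomology class of an exact form on $\Sigma^{\#}$; rather, its primitive acquires monodromy along a loop crossing the seam. The right-hand side therefore uses $I_{x_0}^{\boldsymbol{\sigma}}$ — the primitive defined by cutting along the interior cycles $\xi_{j_0}$ and $\zeta_{\mathfrak{b}-1}$ of a relative homology basis on $\Sigma^{\#}$ — rather than a defect-graph primitive. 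The identity to prove is then that the regularized curvature integral built from the defect arc $\xi_{j_0}$ on $\Sigma$ (which connects the two soon-to-be-identified boundary points) coincides with the symplectic-basis regularization $\int_{\Sigma^{\#}}^{\rm reg} I_{x_0}^{\boldsymbol{\sigma}}(\cdot)K_g\,\dd{\rm v}_g$ of Definition \ref{curvature_integral} on the glued surface. The crucial geometric input is that the glued arc $\xi_{j_0}$ becomes the cycle whose dual partner is $\zeta_{\mathfrak{b}-1}$, with intersection number $1$, so that the defect-graph counterterm $-2\kappa(\xi_{j_0})\int_{\xi_{j_0}} k_g\,\dd\ell_g$ on $\Sigma$ matches exactly the counterterm $2\,\chi_\omega(a_j)\int_{b_j} k_g\,\dd\ell_g$ appearing in \eqref{def_reg_integtral} for the newly formed symplectic pair.

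I would organize the proof by doubling or by direct comparison of counterterms: the cleanest route is to observe that $\kappa(\xi_{j_0})=2\pi R k_{\mathfrak{b}}$ equals the period $\int_{c_1^{\#}}\nu_{{\bf z},{\bf 0},({\bf 0},k_\mathfrak{b},k_\mathfrak{b})}$, which is precisely the quantity $\chi_\omega(b_j)$ (up to sign and orientation) attached to the dual cycle in the symplectic regularization \eqref{def_reg_integtral}. With the arcs $\xi_{j_0}$ and $\zeta_{\mathfrak{b}-1}$ playing the roles of $a_j$ and $b_j$, the single geodesic-curvature term surviving in the defect-graph formula on $\Sigma$ becomes identified with the single surviving term in the symplectic-basis formula on $\Sigma^{\#}$, while the bulk integrals agree because the cut loci $\mc{D}_{\mathbf{v},\boldsymbol{\xi}}$ and $\Sigma^{\#}_{\boldsymbol{\sigma}}$ differ only on a set of measure zero carrying no curvature mass. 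The main difficulty is keeping careful track of orientations and signs in this identification: the outgoing/incoming convention $\varsigma_{\mathfrak{b}-1}=-\varsigma_{\mathfrak{b}}$, the direction in which $\xi_{j_0}$ reaches the two boundary points, and the sign of the geodesic curvature under orientation reversal must all be reconciled, exactly as in the final paragraph of the proof of Lemma \ref{magcurvtwo}. Once these bookkeeping issues are settled the two terms add to give \eqref{addmag}, and no further analytic estimates are needed.
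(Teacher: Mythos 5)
Your proposal is correct and follows essentially the same route as the paper's proof: split $\nu_{{\bf z},{\bf m},{\bf k}}=\nu_{{\bf z},{\bf m},({\bf k}_-,0,0)}+\nu_{{\bf z},{\bf 0},({\bf 0},k_\mathfrak{b},k_\mathfrak{b})}$, check that $\kappa(\xi_p^\#)=\kappa(\xi_p)$ for the surviving arcs via the cancellation $\varsigma_{\mathfrak{b}-1}k_{\mathfrak{b}-1}+\varsigma_{\mathfrak{b}}k_{\mathfrak{b}}=-k_{\mathfrak{b}-1}+k_{\mathfrak{b}}=0$ across the seam, and identify the defect-graph counterterm $-2\kappa(\xi_{j_0})\int_{\xi_{j_0}}k_g\,\dd \ell_g$ on $\Sigma$ with the counterterm of the symplectic-basis regularization \eqref{def_reg_integtral} on $\Sigma^{\#}$, where $\xi_{j_0}$ and $\zeta_{\mathfrak{b}-1}$ form the dual pair and $\nu_{{\bf z},{\bf 0},({\bf 0},k_\mathfrak{b},k_\mathfrak{b})}$ has nontrivial winding only along $\zeta_{\mathfrak{b}-1}$. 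The only detail you leave implicit is that $\kappa(\xi_{j_0'})=0$ for the second removed arc --- which follows from exactly the cancellation mechanism you invoke, since the subtree it separates carries total charge $-k_{\mathfrak{b}-1}+k_{\mathfrak{b}}=0$ --- so its counterterm drops without trace, as in the paper.
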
 

\begin{proof}
It is obvious that $\mc{D}_{{\bf v},\boldsymbol{\xi}^\#}$ is a defect graph associated to the charges ${\bf m}$ for the points in ${\bf z}$ and ${\bf k}_-$ for the boundary components of $\Sigma^{\#}$. Recall that $\xi_{j_0}$ is the arc with basepoint $\zeta_{\mathfrak{b}}(1)$ and endpoint $\zeta_{\mathfrak{b}-1}(1)$. Let us call 
$\xi_{j_0'}$, $j_0'\not=j_0$, the other arc with basepoint/endpoint $\zeta_{\mathfrak{b}-1}(1)$. Let us assume for convenience that all the arcs in the defect graph $\mc{D}_{{\bf v},\boldsymbol{\xi}^\#}$ keep the same label as in $\mc{D}_{{\bf v},\boldsymbol{\xi}}$. Therefore the arcs in the defect graph $\mc{D}_{{\bf v},\boldsymbol{\xi}^\#}$ is labelled with $p=1,\dots, n_{\mathfrak{m}}+\mathfrak{b}-1$ with $p\not=j_0,j_0'$.

Let us compute the coefficients $\kappa(\xi^\#_p)$.   Then $\kappa(\xi_{j_0})=-k_{\mathfrak{b}}$ (since $\zeta_\mathfrak{b}(1)$ is an end of the graph tree) and $\kappa(\xi_{j_0'})=0$ (because $k_{\mathfrak{b}-1}=k_\mathfrak{b}$). For $p\not=j_0,j_0'$, we have $\kappa(\xi^\#_p)=\kappa(\xi_p)$ because the structure for the defect graph $\mc{D}_{{\bf v},\boldsymbol{\xi}}$ we chose imposes that each time we meet $\zeta_{\mathfrak{b}-1}(1)$ when following counterclockwise the contour of the graph, we also meet $\zeta_{\mathfrak{b}}(1)$, for a total contribution of both points given by $k_{\mathfrak{b}-1}-k_{\mathfrak{b}}=0$. Therefore
\begin{align*}
\int_\Sigma^{\rm reg}I_{x_0}^{\boldsymbol{\xi}}(\nu_{{\bf z},{\bf m},{\bf k}})K_g \dd {\rm v}_g=&\int_{\Sigma\setminus \mc{D}_{\mathbf{v},\boldsymbol{\xi}}}   I^{ \boldsymbol{\xi}}_{x_0}( \nu_{\mathbf{z},\mathbf{m},\mathbf{k}}) K_g \ {\rm dv}_g-2\sum_{p=1}^{n_\mathfrak{m}+\mathfrak{b}-1}\kappa(\xi_p)\int_{\xi_p}k_g\dd \ell_g
\\
=&\int_{\Sigma\setminus \mc{D}_{\mathbf{v},\boldsymbol{\xi}}}   I^{ \boldsymbol{\xi}}_{x_0}(\nu_{{\bf z},{\bf m},({\bf k}_-,0,0)}) K_g \,{\rm dv}_g+\int_{\Sigma\setminus \mc{D}_{\mathbf{v},\boldsymbol{\xi}}}   I^{ \boldsymbol{\xi}}_{x_0}( \nu_{{\bf z},{\bf m},({\bf 0},k_{\mathfrak{b}},k_{\mathfrak{b}})}) K_g \,{\rm dv}_g
\\
&-2\sum_{p=1,p\not =j_0,j_0'}^{n_\mathfrak{m}+b-1}\kappa(\xi_p)\int_{\xi_p}k_g\dd \ell_g-2\sum_{p=j_0,j_0'} \kappa(\xi_p)\int_{\xi_p}k_g\dd \ell_g
\\
=&\int_{\Sigma\setminus \mc{D}_{{\bf v},\boldsymbol{\xi}^\#}}   I^{ \boldsymbol{\xi}^\#}_{x_0}(\nu_{{\bf z},{\bf m},({\bf k}_-,0,0)}) K_g \,\dd v_g+\int_{\Sigma_{\boldsymbol{\sigma}}}   I^{ \sigma}_{x_0}( \nu_{{\bf z},{\bf m},({\bf 0},k_{\mathfrak{b}},k_{\mathfrak{b}})}) K_g \,{\rm dv}_g
\\
&-2\sum_{p=1,p\not =j_0,j_0'}^{n_\mathfrak{m}+\mathfrak{b}-1}\kappa(\xi^\#_p)\int_{\xi_p}k_g\dd \ell_g-2 \kappa(\xi_{j_0})\int_{\xi_{j_0}}k_g\dd \ell_g
\\
=&\int_{\Sigma}^{\rm reg}   I^{ \boldsymbol{\xi}^\#}_{x_0}(\nu_{{\bf z},{\bf m},({\bf k}_-,0,0)}) K_g \,{\rm dv}_g+\int_{\Sigma}^{\rm reg}   I^{\boldsymbol{ \sigma}}_{x_0}( \nu_{{\bf z},{\bf m},({\bf 0},k_{\mathfrak{b}},k_{\mathfrak{b}})}) K_g \,
{\rm dv}_g,
\end{align*}
where we have used in the last line the fact that the 1-form $ \nu_{{\bf z},{\bf m},({\bf 0},k_b,k_b)}$ on $\Sigma^{\#}$ possesses no trivial winding only around the cycle $\pl_{\mathfrak{b}-1}\Sigma$. Therefore the regularizing term in expression \eqref{def_reg_integtral} reduces to 
$$-2\chi_{\nu_{{\bf z},{\bf m},({\bf 0},k_{\mathfrak{b}},k_{\mathfrak{b}})}}( \zeta_{\mathfrak{b}-1})\int_{\xi_{j_0} }k_g\,\dd\ell_g=-2k_{\mathfrak{b}}\int_{\xi_{j_0} }k_g\,\dd\ell_g=2\kappa(\xi_{j_0})\int_{\xi_{j_0} }k_g\,\dd\ell_g.$$
\end{proof}

\subsection{Definition of the amplitudes}\label{sub:defamp}
%%%%%%%%%%%%%%%%%%%%%%%%%
%%%%%%%%%%%%%%%%%%%%%%%%%
We are now in position to define  the  amplitudes. In the case of closed Riemann surfaces, amplitudes basically correspond to the path integral  \eqref{defcorrelmixed}. In the case of Riemann surfaces with $\mathfrak{b}$ boundary components, the amplitudes will be a  functional of the boundary fields  
\[(\tilde{\varphi}_1^{k_1},\dots, \tilde{\varphi}_{\mathfrak{b}}^{k_{\mathfrak{b}}}):=(k_1R\theta+ \tilde{\varphi}_1, \dots,k_{\mathfrak{b}}R\theta+  \tilde{\varphi}_{\mathfrak{b}}) \in (H^s_\Z (\T))^{\mathfrak{b}}.\]
Below we shall identify a pair $(k,\tilde{\varphi})\in \Z\times H^{s}(\T)$ with the field 
$kR\theta+ \pi^*\tilde{\varphi}(\theta)\in H^s_\Z (\T)$ and we use the notation  
\[ ({\bf k},\tilde{\boldsymbol{\varphi}}):=(k_1, \dots, k_{\mathfrak{b}},\tilde{\varphi}_1, \dots, \tilde{\varphi}_{\mathfrak{b}})\in \Z^{\mathfrak{b}} \times (H^s (\T))^{\mathfrak{b}}, \quad \tilde{\boldsymbol{\varphi}}^{{\bf k}}:=(\tilde{\varphi}_1^{k_1},\dots,\tilde{\varphi}_{\mathfrak{b}}^{k_{\mathfrak{b}}})\in (H_\Z^s(\T))^{\mathfrak{b}}.\]
We notice that we recover the fields $\tilde{\boldsymbol{\varphi}}$ from $\tilde{\boldsymbol{\varphi}}^{{\bf k}}$ by setting ${\bf k}={\bf 0}=(0,\dots,0)$, i.e. 
\[\tilde{\boldsymbol{\varphi}}=\tilde{\boldsymbol{\varphi}}^{{\bf 0}}\in H^s(\T)^{\mathfrak{b}}.\]
Recall from Section \ref{subDTN} that $P\tilde{\boldsymbol{\varphi}}$ is the harmonic extension of $\tilde{\boldsymbol{\varphi}}\in H^{s}(\T)^{\mathfrak{b}}$. The definition of amplitudes will also involve the $1$-form $\nu_{\mathbf{z},\mathbf{m},{\bf k}} $ of Proposition \ref{harmpoles}. Recall that this 1-form is not in $L^2$, which is why we introduced its regularized norm  (see Lemma \ref{renorm_L^2}). By extension and for notational simplicity, we define the regularized norm of $\nu_{\mathbf{z},\mathbf{m},{\bf k}} +\omega$, where $\omega  $ is a smooth 1-form on $\Sigma$, by  
\begin{equation}
\|\nu_{\mathbf{z},\mathbf{m},{\bf k}} +\omega \|_{g,0}^2:=\|\nu_{\mathbf{z},\mathbf{m},{\bf k}} \|_{g,0}^2+2\langle \nu_{\mathbf{z},\mathbf{m},{\bf k}} ,\omega \rangle^2+\|\omega \|_{2}^2.
\end{equation}

 Finally we need to introduce  a space of reasonable test functions for our path integral in the case when $\Sigma$ has a non-empty boundary.   
%Recall  that the family $(e_j)_{j\geq 0}$ stands for an orthonormal basis of $L^2(\Sigma,{\rm v}_g)$ of eigenfunctions of the Laplacian. Write any function $f\in H^s(\Sigma)$ (for $s<0$) as
% $$f=f_0+\sqrt{2\pi}\sum_{j\geq1}f_je_j$$
% and notice that the zero mode $f_0$ is unnormalized in the sense that it is not multiplied by ${\rm v}_g(\Sigma)^{-1/2}$, which corresponds to the constant eigenfunction $e_0$\footnote{This term can be absorbed in the $c$-integral up to changing the compactification radius and this is why we choose this normalization.}. We equip $H^s ( \Sigma)$  with the pushforward of the measure $\dd c\otimes\P$ on $(\R/R\Z)\times \Omega$ through the map $(c,\omega)\mapsto x+X_g(\omega)$.   
  Equivariant maps $u\in H^s_{\Gamma}(\tilde\Sigma_{\bf z})$ can then be uniquely  decomposed as  
 \[u = \pi^*f +I_{x_0}(\omega^c_{\bf k}) + I_{x_0}(\nu_{\bf z,m,k})\]
 for some ${\bf k}^c\in\Z^{2\mathfrak{g}+\mathfrak{b}-1}$, ${\bf k}\in\Z^{\mathfrak{b}}$ and ${\bf m}\in \Z^{n_{\mathfrak{m}}}$, where $\pi: \tilde{\Sigma}_{\bf z}\to \Sigma$ is the projection from the universal cover of $\Sigma_{\bf z}=\Sigma\setminus \{{\bf z}\}$ to the base. We consider the space $\mc{E}^{\rm m}_R(\Sigma)$ of functionals $F$, defined on $H^s_{\Gamma}(\tilde\Sigma_{\bf z})$ for $s<-1$, of the form
\begin{equation}\label{polytrigo}
F(u)=  P(f) G( e^{\frac{i}{R} I_{x_0}(\omega^c_{\bf k})} ) G'( e^{\frac{i}{R} I_{x_0}(\nu_{\bf z,m,k})} ) 
\end{equation}
  where $P$ is a polynomial  of the form $P\big(\cjg f,g_1\cjd,\dots,\cjg f,g_{m_n}\cjd\big)$ where $g_1,\dots,g_{m_n}$  belong to $H^{-s}(\Sigma)$ (hence $s<0$), and $G,G'$ are  continuous and bounded on $C^0(\Sigma;\S^1)$.  Let $u^0_{\bf z}(x):=\sum_{j=1}^{n_{\mathfrak{m}}}i\alpha_jG_{g,D}(x,z_j)$. Next we define   the seminorm on $\mc{E}^{\rm m}_R(\Sigma)$
\[\|F\|_{\mc{L}^{\infty,p}_{\rm e,m}}:=\sup_{\bf k}\Big( \E\Big[e^{-\frac{1}{2\pi}\langle \dd X_{g,D},\nu_{\mathbf{z},\mathbf{m},{\bf k}} +\omega^{c}_{{\bf k}^c} \rangle_2-\frac{1}{4\pi}\|\dd f_{\bf k}\|_2^2}|F( X_{g,D}+P\tilde{\boldsymbol{\varphi}}+u^0_{\bf z}+I_{x_0}(\omega^c_{\bf k^c})+I_{x_0}( \nu_{\mathbf{z},\mathbf{m},{\bf k}}))|^p\Big] \Big)^{1/p}\]
where $(1-\Pi^c_1)\omega^{c}_{{\bf k}^c}=\dd f_{\bf k^c}$ with $\Pi^c_1$ is the projection on the space of harmonic $1$-forms with relative boundary condition.

With the same argument as in Lemma  \ref{invariance_of_norm}, we have:
\begin{lemma}\label{invnormm_bis_bis} The seminorm $\|F\|_{\mc{L}^{\infty,p}_{\rm e,m}}$ does not depend on the choice of cohomology basis 
%nor on the defect graph $\boldsymbol{\xi}$.
 \end{lemma}

 \begin{definition}[\textbf{Amplitudes}]\label{def:amp}

\noindent {\bf (A)}
Let $\partial\Sigma=\emptyset$. We suppose that the condition \eqref{seiberg} holds.  For $F$ continuous  bounded function on $\mc{E}^{\rm m}_{R}(\Sigma)$ for some $s<0$, we define 
\begin{equation}\label{defampzerobound} 
\caA_{\Sigma,g,{\bf v},\boldsymbol{\alpha},{\bf m}}(F):= \langle F(\phi_g)  V^g_{(\boldsymbol{\alpha},{\bf m}) }({\bf v}) \rangle_{\Sigma,g}  
\end{equation}
using  \eqref{defcorrelmixed}.
 If $F=1$ then the amplitude is just the  correlation function  and will simply be denoted by $\caA_{\Sigma,g,{\bf v},\boldsymbol{\alpha},\boldsymbol{m}}$.
 \vskip 3mm
 
\noindent {\bf (B)}  If $\partial\Sigma$ has $\mathfrak{b}>0$ boundary components, consider a set of geometric data of $\Sigma$, as described above. Then the amplitude 
$\caA_{\Sigma,g,{\bf v},\boldsymbol{\alpha},\bf{m},\boldsymbol{\zeta}}$ is
  a function 
 \[(F, \tilde{\boldsymbol{\varphi}}^{ {\bf k}})\in \mc{E}^{\rm m}_R(\Sigma)\times (H^s_\Z (\R))^{\mathfrak{b}}  \mapsto\caA_{\Sigma,g,{\bf v},\boldsymbol{\alpha},\bf{m},\boldsymbol{\zeta}}(F,\tilde{\boldsymbol{\varphi}}^{ {\bf k}}),\] 
  that depends on a marked point $x_0$ in $\pl\Sigma$. It is  defined as follows 
\begin{align}\label{amplitude}
 \caA_{\Sigma,g,{\bf v},\boldsymbol{\alpha},\bf{m},\boldsymbol{\zeta}}&(F,\tilde{\boldsymbol{\varphi}}^{ {\bf k}}) \\
: =&
 \delta_0(\sum_{j=1}^{n_{\mathfrak{m}}+\mathfrak{b}} m_j({\bf k})) \lim_{t\to 1}\lim_{\eps\to 0}\sum_{{\bf k}^c\in \Z^{2\mathfrak{g}+\mathfrak{b}-1}} e^{-\frac{1}{4\pi}\|\nu_{\mathbf{z},\mathbf{m},{\bf k}} +\omega^{c}_{{\bf k}^c}\|_{g,0}^2 }Z_{\Sigma,g}\caA^0_{\Sigma,g}(\tilde{\boldsymbol{\varphi}}  ) 
 \nonumber\\
 &
 \E \Big[e^{-\frac{1}{2\pi}\langle \dd X_{g,D}+\dd P\tilde{\boldsymbol{\varphi}},\nu_{\mathbf{z},\mathbf{m},{\bf k}} +\omega^{c}_{{\bf k}^c}\rangle_2}F(\phi_g)\prod_{j=1}^{n_{\mathfrak{m}}} V_{\alpha_j,g,\epsilon}(x_j(t))e^{-\frac{i  Q}{4\pi}\int^{\rm reg}_\Sigma K_g\phi_g\dd {\rm v}_g
 -\mu M_\beta^g (\phi_g,\Sigma)}\Big]\nonumber
\end{align}
where $\delta_0$ is the Dirac mass at $0$, $t\in [0,1] \mapsto x_j(t)$ is a $C^1$-curve so that $(x_j(t),\dot{x}_j(t))\to (z_j,v_j)$ as $t\to 1$, the Liouville field is $\phi_g= X_{g,D}+P\tilde{\boldsymbol{\varphi}} +I^{\boldsymbol{\xi}}_{x_0}(\nu_{\mathbf{z},\mathbf{m},\mathbf{k}})+I^{\boldsymbol{\sigma}}_{x_0}(\omega^{c}_{{\bf k}^c})$, the expectation $\E$ is over the Dirichlet GFF $X_{g,D}$, $M_\beta^g (\phi_g,\Sigma)$ is defined as a limit in \eqref{GMCg}, $m_j({\bf k})$ is defined in \eqref{defm(k)},   
and $Z_{\Sigma,g}$ is the normalization constant 
 \begin{align}\label{znormal}
 Z_{\Sigma,g}:=\det (\Delta_{g,D})^{-\hf}.
 \end{align}
The regularized curvature term is
 \begin{equation}\label{defcurvamp}
 \int^{\rm reg}_\Sigma K_g\phi_g\dd {\rm v}_g:=\int_\Sigma K_g(X_{g,D}+P\tilde{\boldsymbol{\varphi}} )\dd {\rm v}_g+\int^{\rm reg}_\Sigma K_gI^{\boldsymbol{\sigma}}_{x_0}(\omega^{c}_{{\bf k}^c})\dd {\rm v}_g+\int^{\rm reg}_\Sigma K_gI^{\boldsymbol{\xi}}_{x_0}(\nu_{\mathbf{z},\mathbf{m},\mathbf{k}})\dd {\rm v}_g
 \end{equation}
and $\caA^0_{\Sigma,g}(\tilde{\boldsymbol{\varphi}})$ is the free field amplitude defined as  
\begin{align}\label{amplifree}
\caA^0_{\Sigma,g}(\tilde{\boldsymbol{\varphi}})=e^{-\frac{1}{2}\cjg\tilde{\boldsymbol{\varphi}}, (\mathbf{D}_\Sigma-\mathbf{D})  \tilde{\boldsymbol{\varphi}}\cjd}.
\end{align}
When $F=1$, we will simply write $\caA_{\Sigma,g,{\bf v},\boldsymbol{\alpha},\bf{m},\boldsymbol{\zeta}}( \tilde{\boldsymbol{\varphi}}^{\bf k})$.
 \end{definition}
 Let us make a couple of remarks on the definition of the amplitudes for what follows.
\begin{remark}\label{remarkamplitude}
(1) We first remark that if $\pl \Sigma\not=\emptyset$, the amplitude depends on the marked point $x_0$. When this dependance needs to be precised, we shall add the uperscript $x_0$ to the amplitude, i.e. we will write 
\[\caA^{x_0}_{\Sigma,g,{\bf v},\boldsymbol{\alpha},\bf{m},\boldsymbol{\zeta}}(F, \tilde{\boldsymbol{\varphi}}^{\bf k}).\]
(2) The amplitude does not depend on the choice of orientation of the arcs $d_j$, since such a change just amounts to make a resummation in the sum $\sum_{{\bf k}^c\in \Z^{2\mathfrak{g}+\mathfrak{b}-1}}$ appearing in the definition of the amplitude.
%(2) The surface is supposed to be admissible, in particular it has geodesic boundary. A more general definition of amplitudes involving non necessarily admissible metrics is possible but we do not need such a framework in this work.
 \end{remark} 

We now state the main properties of the amplitudes as well as their gluing properties in Section \ref{sub:segal}. The proofs are postponed to Section \ref{proof_of_propositions}.
 
  \begin{proposition}\label{prop:symmetry}
  If $\pl\Sigma$  has $\mathfrak{b}>0$ connected components, then the amplitudes  satisfy:  
  \vskip1mm

\noindent $\bullet$ 
  The limit \eqref{amplitude} is well defined for $F\in\mc{E}^{\rm m}_R(\Sigma)$ in $\mu_0^{\otimes \mathfrak{b}}$-probability and belongs to $L^2(\mc{H}^{\otimes \mathfrak{b}})$.

 \vskip1mm
 \noindent $\bullet$ 
The amplitudes do not depend on the choice of the relative homology basis $\boldsymbol{\sigma}$, nor on the choice of the relative cohomology basis $(\omega_j^c)_j$ dual to $\boldsymbol{\sigma}$.

\vskip1mm

\noindent $\bullet$ The amplitudes do not depend on the choice of the 1-form $\nu_{\mathbf{z},\mathbf{m},{\bf k}} $ in the absolute cohomology, satisfying the conditions of Proposition \ref{harmpoles} and \eqref{btbaRz}.
\vskip1mm

\noindent $\bullet$  {\bf Conformal anomaly:} let $g,g'$ be two conformal admissible metrics on   $\Sigma$ with $g'=e^{\rho}g$ for some $\rho\in C^\infty(\Sigma)$, vanishing on $\pl\Sigma$. Then we have
\begin{align}\label{confanmixed} 
& {\caA_{\Sigma,g',{\bf v},\boldsymbol{\alpha},\bf{m},\boldsymbol{\zeta}}(F, \tilde{\boldsymbol{\varphi}}^{ {\bf k}})}
=  e^{\frac{{\bf c}}{96\pi}\int_{\Sigma}(|d\rho|_g^2+2K_g\rho) {\rm dv}_g-\sum_{j=1}^{n_{\mathfrak{e}}}\Delta_{(\alpha_j,m_j)}\rho(z_j)}
 {\caA_{\Sigma,g,{\bf v},\boldsymbol{\alpha},\bf{m},\boldsymbol{\zeta}}(F(\cdot- \tfrac{i  Q}{2}\rho),{\bf k},\tilde{\boldsymbol{\varphi}})  }
 %\exp\Big(\frac{{\bf c}}{96\pi}\int_{\Sigma}(|d\rho|_g^2+2K_g\rho) {\rm dv}_g-\sum_{j=1}^{n_{\mathfrak{e}}}\Delta_{(\alpha_j,m_j)}\rho(z_j)\Big)
\end{align}
where the conformal weights $\Delta_{\alpha,m}$ are given by \eqref{deltaalphadef} and the central charge is ${\bf c}:=1-6 Q^2 $.
\vskip1mm

\noindent $\bullet$  {\bf Diffeomorphism invariance:} let $\psi:\Sigma'\to \Sigma$ be an orientation preserving diffeomorphism. Then  
$$
\caA^{x_0}_{\Sigma',\psi^*g,{\bf v},\boldsymbol{\alpha},\bf{m},\boldsymbol{\zeta}}(F, \tilde{\boldsymbol{\varphi}}^{ {\bf k}})=\caA^{\psi(x_0)}_{\Sigma,g,\psi_*{\bf v},\boldsymbol{\alpha},\bf{m},\psi\circ\boldsymbol{\zeta}}(F_\psi, \tilde{\boldsymbol{\varphi}}^{ {\bf k}}).
$$
where $F_\psi(\phi):=F(\phi\circ\psi)$.
\vskip1mm

\noindent $\bullet$  {\bf Spins:} with $r_{\boldsymbol{\theta}}\mathbf{v}:=(r_{\theta_1}v_1,\dots,r_{\theta_{n_\mathfrak{m}}} v_{\theta_{n_\mathfrak{m}}} )$, then  
\[
 \caA_{\Sigma, g,r_{\boldsymbol{\theta}}\mathbf{v} ,\boldsymbol{\alpha},\bf{m},\boldsymbol{\zeta}}(F, \tilde{\boldsymbol{\varphi}}^{ {\bf k}})
  =e^{-i QR\langle\mathbf{m},\boldsymbol{\theta}\rangle }  \caA_{\Sigma, g,r_{\boldsymbol{\theta}}\mathbf{v} ,\boldsymbol{\alpha},\bf{m},\boldsymbol{\zeta}}(F, \tilde{\boldsymbol{\varphi}}^{ {\bf k}}).
\]

\end{proposition}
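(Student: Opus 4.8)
The plan is to prove Proposition~\ref{prop:symmetry} by reducing each assertion to the corresponding statement already established for closed surfaces and for the curvature terms in the geometric data. The overall strategy is to observe that the amplitude \eqref{amplitude} is built out of the same ingredients as the closed-surface correlation functions of Theorem~\ref{limitcorelmixed}, namely a Dirichlet GFF $X_{g,D}$ (replacing the closed-surface GFF), the harmonic extension $P\tilde{\boldsymbol{\varphi}}$ of the boundary data, the regularized curvature terms of Section~\ref{sub:setupamp}, the imaginary GMC potential, and the vertex operators. Each claimed property therefore follows by transporting the corresponding closed-surface argument, the main new feature being the boundary field $P\tilde{\boldsymbol{\varphi}}$ and the sum over ${\bf k}^c$.

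For the well-posedness and $L^2(\mc{H}^{\otimes\mathfrak{b}})$ membership, I would first fix the boundary data $\tilde{\boldsymbol{\varphi}}$ and run the existence argument of Theorem~\ref{limitcorel} and Theorem~\ref{limitcorelmixed} verbatim: the analytic continuation/Cameron--Martin argument converting the regularized vertex operators $\prod_j V_{\alpha_j,g,\epsilon}(x_j(t))$ into singularities in the potential goes through because $P\tilde{\boldsymbol{\varphi}}$ is a deterministic shift, and the exponential moment bound of Proposition~\ref{expmoment} (stated precisely for the Dirichlet GFF, exactly the present setting) controls the potential $M^g_\beta(\phi_g,\Sigma)$. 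The limits $\epsilon\to0$ and $t\to1$ are then handled by the same $L^2$-Cauchy estimates, with Lemma~\ref{estimeedechien} controlling the merging of electric and magnetic insertions. To get $L^2$ integrability in the boundary variables against $\mu_0^{\otimes\mathfrak{b}}$, I would use the smoothing property of $\widetilde{\mathbf{D}}_\Sigma=\mathbf{D}_\Sigma-\mathbf{D}$ recalled in Section~\ref{subDTN} together with the Gaussian structure of $P\tilde{\boldsymbol{\varphi}}$, bounding the free-field amplitude $\caA^0_{\Sigma,g}(\tilde{\boldsymbol{\varphi}})=e^{-\frac12\cjg\tilde{\boldsymbol{\varphi}},(\mathbf{D}_\Sigma-\mathbf{D})\tilde{\boldsymbol{\varphi}}\cjd}$ and checking that the polynomial dependence of $F$ on linear functionals of $P\tilde{\boldsymbol{\varphi}}$ integrates against the Gaussian measure on $\Omega_\T$.

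The independence statements follow from the invariance lemmas of Section~\ref{sub:setupamp}. Independence of the relative homology basis $\boldsymbol{\sigma}$ and the dual cohomology basis $(\omega_j^c)_j$ reduces, after the Girsanov manipulation already used in Lemma~\ref{invariance_of_norm} and Proposition~\ref{propdefpath}, to Lemma~\ref{independence_basis_open} (which gives invariance of the regularized curvature integral modulo $8\pi^2R\Z$, killed by the constraint $Q\in\frac1R\Z$) and to the invariance of the $\mc{L}^{\infty,p}_{\rm e,m}$-norm established in Lemma~\ref{invnormm_bis_bis}. Independence of the choice of $\nu_{\mathbf{z},\mathbf{m},{\bf k}}$ uses that two admissible choices differ by an exact form $\dd f$ with $\pl_\nu f|_{\pl\Sigma}=0$ (Lemma~\ref{existence_primitive}), which can be absorbed into $\omega^c_{{\bf k}^c}$ and hence into the summation over ${\bf k}^c$, exactly as in Remark~\ref{remarkamplitude}(2). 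The conformal anomaly \eqref{confanmixed} is obtained by the same computation as in Theorem~\ref{limitcorelmixed}(2): the law relation $X_{g',D}\stackrel{law}{=}X_{g,D}$ for the \emph{Dirichlet} GFF (it is conformally invariant when $\rho|_{\pl\Sigma}=0$), the Polyakov formula for $\det(\Delta_{g,D})$, the conformal transformation of the regularized norm (Lemma~\ref{renorm_L^2}), the curvature transformation Lemmas~\ref{conformal_change_reg_int_open} and \ref{magcurvconfopen} (whose extra terms $\cjg\dd\rho,\omega\cjd_2$ and $\cjg\dd\rho,\nu_{\mathbf{z},\mathbf{m},{\bf k}}\cjd_2$ vanish since $\rho|_{\pl\Sigma}=0$ and one may again choose harmonic representatives), and the vertex scaling \eqref{scalingvertex}; the Fourier/neutrality argument of \eqref{0integral} that absorbed the zero-mode terms $m_g(\rho)$ in the closed case is here replaced by the fact that $X_{g,D}$ has no zero mode, so the substitution $X_{g,D}\to X_{g,D}-\tfrac{iQ}{2}\rho$ produces the weight shift cleanly. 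Diffeomorphism invariance follows from Lemmas~\ref{lemcurvdiffopen} and \ref{magcurvdiffopen} combined with $G_{\psi^*g,D}(x,y)=G_{g,D}(\psi(x),\psi(y))$ and $X_{\psi^*g,D}\stackrel{law}{=}X_{g,D}\circ\psi$, tracking the transport of the homology/cohomology bases and the defect graph exactly as in Theorem~\ref{limitcorelmixed}(3). The spin property is immediate from Corollary~\ref{corospin} applied to the $\nu^{\rm h}$-term.

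The main obstacle I anticipate is the $L^2(\mc{H}^{\otimes\mathfrak{b}})$ estimate in the boundary variables $(\tilde{\boldsymbol{\varphi}}_j)$, since the potential $M^g_\beta(\phi_g,\Sigma)$ depends on $P\tilde{\boldsymbol{\varphi}}$ in a nonlinear way through the GMC and the harmonic extension $P\tilde{\boldsymbol{\varphi}}$ blows up near the boundary circles. Controlling $\E_{\mu_0}[\,|\caA|^2\,]$ uniformly over the infinite sum $\sum_{{\bf k}^c}$ (weighted by $e^{-\frac1{4\pi}\|\nu_{\mathbf{z},\mathbf{m},{\bf k}}+\omega^c_{{\bf k}^c}\|_{g,0}^2}$, whose Gaussian decay in ${\bf k}^c$ must dominate the polynomial-in-$|{\bf k}^c|$ growth coming from the cross terms $\cjg\dd P\tilde{\boldsymbol{\varphi}},\nu+\omega^c\cjd_2$) requires combining the exponential moment bound of Proposition~\ref{expmoment} with a careful use of the smoothing property of $\widetilde{\mathbf{D}}_\Sigma$ to show that the boundary-dependent factors are integrable against the Gaussian measure $\P_\T^{\otimes\mathfrak{b}}$; this is where the bulk of the genuinely new work lies, the remaining items being essentially functorial transcriptions of the closed-surface proofs.
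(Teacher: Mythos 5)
Most of your proposal tracks the paper's own proof: the invariance under change of $\nu_{\mathbf{z},\mathbf{m},{\bf k}}$ (decompose the Neumann-type exact form $\dd f$ into a part with periods in $2\pi R\Z$ absorbed by resummation over ${\bf k}^c$ plus a Dirichlet-type part absorbed by Girsanov -- note condition \eqref{btbaRz} is what makes this decomposition possible, and the mechanism is the one spelled out at the start of Section \ref{proof_of_propositions}, not Remark \ref{remarkamplitude}(2), which concerns only the orientation of the arcs $d_j$), the Weyl anomaly via the conformal invariance of the Dirichlet Green function, the Girsanov shift $X_{g,D}\to X_{g,D}-\tfrac{iQ}{2}\rho$, Polyakov's formula for $\det(\Delta_{g,D})$, Lemmas \ref{renorm_L^2}, \ref{conformal_change_reg_int_open}, \ref{magcurvconfopen}, and the vanishing of the boundary term $\int_{\pl\Sigma}\pl_\nu\rho\, P\tilde{\boldsymbol{\varphi}}\,\dd\ell_g$ for admissible metrics; likewise diffeomorphism invariance and spin. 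On these items you are essentially reproducing the paper's argument.

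The genuine gap is the first bullet, which you yourself flag as ``the bulk of the new work'' but do not carry out, and where your proposed route is not the paper's and would likely fail as stated. You suggest proving $L^2(\mc{H}^{\otimes\mathfrak{b}})$ membership directly, by bounding $\caA^0_{\Sigma,g}(\tilde{\boldsymbol{\varphi}})=e^{-\frac12\cjg\tilde{\boldsymbol{\varphi}},(\mathbf{D}_\Sigma-\mathbf{D})\tilde{\boldsymbol{\varphi}}\cjd}$ using the smoothing property of $\widetilde{\mathbf{D}}_\Sigma$ and integrating against $\P_\T^{\otimes\mathfrak{b}}$. But $\widetilde{\mathbf{D}}_\Sigma$ smoothing does not make this factor bounded or obviously integrable (it is not sign-definite), and the cross term $e^{-\frac{1}{2\pi}\cjg \dd P\tilde{\boldsymbol{\varphi}},\,\nu_{\mathbf{z},\mathbf{m},{\bf k}}+\omega^c_{{\bf k}^c}\cjd_2}$ is, under $\P_\T^{\otimes\mathfrak{b}}$, an exponential of a Gaussian whose variance grows like $|{\bf k}^c|^2$, so after Gaussian integration it competes with the damping $e^{-\frac{1}{4\pi}\|\nu_{\mathbf{z},\mathbf{m},{\bf k}}+\omega^c_{{\bf k}^c}\|_{g,0}^2}$ at the same quadratic order; whether the net quadratic form in $(\tilde{\boldsymbol{\varphi}},{\bf k}^c)$ is positive is precisely a statement about a Dirichlet-to-Neumann operator on the doubled surface, which your sketch never establishes. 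The paper circumvents all of this by a structurally different device: it first proves the gluing identities for the compactified boson (Propositions \ref{glueampli0} and \ref{selfglueampli0}, Section \ref{s:gluingboson}), then in Lemma \ref{amplitudeL^2} pairs the amplitude with its orientation-reversed copy on the double $\Sigma^{\#2}$ (using Lemma \ref{reverting}), so that $\int|\caA|^2\,\dd\mu_0^{\otimes\mathfrak{b}}$ is identified, up to explicit constants, with a finite closed-surface quantity $\caA^0_{\Sigma^{\#2},g,\hat{\bf z},\hat{\bf m}}(\hat F_{\bf z})$ controlled by the arguments of Theorem \ref{limitcorelmixed}. The same doubling plus a polarization argument (Lemma \ref{cvepsamp}: $\|\caA_{\bf x}(\eps)-\caA_{\bf x}(\eps')\|^2_{\mc{H}^{\otimes\mathfrak{b}}}\to L-L-L+L=0$, with two independent regularization scales $\eps,\eps'$ on the two copies and a potential with reversed sign $-\beta$ and weight $\bar\mu$ on the reflected copy) is what yields convergence of the regularized amplitudes in $\mc{H}^{\otimes\mathfrak{b}}$, a step entirely absent from your plan; your proposal only addresses almost-sure convergence at fixed boundary data. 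Without the doubling identity, or an equivalent positivity result that would in effect reprove it, the first bullet remains unproved.
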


 \subsection{Gluing of surfaces and amplitudes: statements of the results}\label{sub:segal}
%%%%%%%%%%%%%%

Let us consider $(\Sigma_1,g_1, {\bf z}_1,\boldsymbol{\zeta}_1)$ and $(\Sigma_2,g_2, {\bf z}_2,\boldsymbol{\zeta}_2)$ two admissible surfaces with $\partial\Sigma_i\neq\emptyset$ for $i=1,2$. We enumerate  the boundary components of $\partial\Sigma_i$ as $\partial_j\Sigma_i$ for $j=1,\dots,\mathfrak{b}_i$.
Assume that  $\caC_1:=\partial_{\mathfrak{b}_1}\Sigma_1$ is  outgoing  and $\caC_{2}:=\partial_{\mathfrak{b}_2}\Sigma_2$ is incoming, and that $x_0^{i}\in \partial_{\mathfrak{b}_i}\Sigma_i$ for $i=1,2$. Then we can glue the two surfaces $(\Sigma_i,g_i)$, $i=1,2$ by identifying  $\mc{C}_1\sim \mc{C}_2$ using the parametrizations.
This forms an admissible surface denoted $(\Sigma,g, {\bf z},\boldsymbol{\zeta})$, 
 with $\mathfrak{b}=\mathfrak{b}_1+\mathfrak{b}_2-2$ boundary components.  
 We assume that $x_0^{1}=x_0^{2}$ on the glued surface.
 
 At the marked points $\mathbf{z}_i=(z_{i1},\dots,z_{in^i_{\mathfrak{m}}})$ on $\Sigma_i$ we choose unit vectors ${\bf v}_i=(v_{i1},\dots,v_{in^i_{\mathfrak{m}}})$ and we attach some  weights $\boldsymbol{\alpha}_i=(\alpha_{11},\dots,\alpha_{in^i_{\mathfrak{m}}})$ and  magnetic charges ${\bf m}_i=(m_{i1},\dots,m_{in^i_{\mathfrak{m}}})$. 
We use the notation 
\[\mathbf{z}:=(\mathbf{z}_1,\mathbf{z}_2), \quad \boldsymbol{\alpha}:=(\boldsymbol{\alpha}_1,\boldsymbol{\alpha}_2),\quad  {\bf m}:=({\bf m}_1,{\bf m}_2), \quad {\bf v}:=({\bf v}_1,{\bf v}_2)\] 
and denote by $\boldsymbol{\zeta}$ the collection of parametrizations of the boundaries $\partial_j\Sigma_1$ with $j=1,\dots,\mathfrak{b}_1-1$ and $\partial_j\Sigma_2$ with $j=1,\dots,\mathfrak{b}_2-1$.
The surface $\Sigma$ thus has an analytic boundary consisting of the curves $\partial_j\Sigma_1$ with $j=1,\dots,\mathfrak{b}_1-1$ and $\partial_j\Sigma_2$ with $j=1,\dots,\mathfrak{b}_2-1$. We denote by $\mc{C}=\mc{C}_{1}=\mc{C}_{2}$ the glued curve on $\Sigma$.
 Boundary conditions on $\Sigma$ will thus be written as couples $(\tilde{\boldsymbol{\varphi}}_1^{{\bf k}_1},\tilde{\boldsymbol{\varphi}}_2^{{\bf k}_2})\in   \mc{H}^{\mathfrak{b}_1-1}\times \mc{H}^{\mathfrak{b}_2-1}$. Similarly, for $i=1,2$, the surface $\Sigma_i$ has analytic boundary made up of the curves  $\partial_j\Sigma_i $ for $j=1,\dots,\mathfrak{b}_i-1$ and  $\caC_{i} $   and boundary conditions on $\Sigma_i$ will thus be written as couples $(\tilde{\boldsymbol{\varphi}}_i^{{\bf k}_i},\tilde{\varphi}^{k})\in   \mc{H}^{\mathfrak{b}_i-1}\times \mc{H}$.

 The corresponding amplitudes compose as follows: 

\begin{proposition}\label{glueampli}
Let $F_1,F_2$   respectively  in $\mc{E}^{\rm m}_R(\Sigma_1)$ and  $\mc{E}^{\rm m}_R(\Sigma_2)$  and let us denote by $F_1\otimes F_2$ the functional on $\mc{E}^{\rm m}_R(\Sigma,g)$ defined by  \[F_1\otimes F_2(\phi_g):=
F_1(\phi_{g}|_{\Sigma_1})F_2(\phi_{g}|_{\Sigma_2}).\] 
The following holds true
 \begin{align*}%\label{}
&\caA_{\Sigma,g, {\bf v},\boldsymbol{\alpha},{\bf m},\boldsymbol{\zeta}}
(F_1\otimes F_2,\tilde{\boldsymbol{\varphi}}_1^{{\bf k}_1},\tilde{\boldsymbol{\varphi}}_2^{{\bf k}_2})\\
&\quad =C\int_{H^s_\Z(\R)} \caA_{\Sigma_1,g_1,{\bf v}_1,\boldsymbol{\alpha}_1,{\bf m}_1,\boldsymbol{\zeta}_1}(F_1, \tilde{\boldsymbol{\varphi}}_1^{{\bf k}_1},\tilde{ \varphi}^{k})\caA_{\Sigma_2,g_2,{\bf v}_2,\boldsymbol{\alpha}_2,{\bf m}_2,\boldsymbol{\zeta}_2}(F_2,\tilde{\boldsymbol{\varphi}}_2^{{\bf k}_2},\tilde{ \varphi}^k) \dd\mu_0  (\tilde{ \varphi}^k).
\end{align*}
 where $C= \frac{1}{(\sqrt{2} \pi) }$ if $\partial\Sigma \not =\emptyset$ and $C= \sqrt{2}  $ if $\partial\Sigma =\emptyset$.
  \end{proposition}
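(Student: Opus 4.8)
The plan is to reduce the gluing identity to three separate factorizations --- of the Gaussian (free field) part, of the regularized curvature term, and of the GMC potential --- and then to recognise the $\tilde\varphi^k$-integration against $\mu_0$ as the Gaussian integration over the restriction of the field to the gluing curve $\mc{C}=\mc{C}_1=\mc{C}_2$, together with the summation over the winding number along $\mc{C}$. Since $F_1\otimes F_2$ restricts to $F_i$ on each piece by definition, it may be carried passively, and one may first run the argument for $F_1=F_2=1$. First I would invoke the Markov decomposition of the Dirichlet GFF $X_{g,D}$ on $\Sigma$ relative to $\mc{C}$ (Proposition \ref{decompGFF}): one writes $X_{g,D}\stackrel{\rm law}{=}X^1+X^2+P_{\mc{C}}\Phi$, where $X^1,X^2$ are independent Dirichlet GFFs on $\Sigma_1,\Sigma_2$, $\Phi:=X_{g,D}|_{\mc{C}}$ is the trace on the gluing curve, and $P_{\mc{C}}\Phi$ its harmonic extension vanishing on $\pl\Sigma$. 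Conditionally on $\Phi$, the Liouville field restricted to $\Sigma_i$ is exactly the one entering $\caA_{\Sigma_i,g_i,\dots}$ with boundary data $(\tilde{\boldsymbol\varphi}_i,\Phi)$ on $(\pl_j\Sigma_i,\mc{C}_i)$; this is the conceptual content of the identity.

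Next I would establish the three factorizations. The potential splits additively, $M^g_\beta(\phi_g,\Sigma)=M^{g_1}_\beta(\phi_g|_{\Sigma_1},\Sigma_1)+M^{g_2}_\beta(\phi_g|_{\Sigma_2},\Sigma_2)$, because $\Sigma=\Sigma_1\cup_{\mc{C}}\Sigma_2$ and the imaginary GMC is an $L^2$-limit of measures; the consistency of the $g$-regularization used on $\Sigma$ with the white-noise/$g$-regularization natural on each piece is exactly Proposition \ref{equiv}. The regularized curvature term splits by Lemma \ref{magcurvtwo}: formulas \eqref{addmag} and \eqref{add_curvature} give additivity precisely under the matching condition $k_{1\mathfrak{b}_1}=k_{2\mathfrak{b}_2}=:k$, and the gluing of the $1$-form $\nu_{\mathbf{z},\mathbf{m},\mathbf{k}}$ together with the defect arc $\xi$ across the cut is the one prescribed there. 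Finally, the discrete data reorganise through Lemma \ref{baseglue}: the summation $\sum_{\mathbf{k}^c\in\Z^{2\mathfrak{g}+\mathfrak{b}-1}}$ on $\Sigma$ factors into $\sum_{\mathbf{k}_1^c}\sum_{\mathbf{k}_2^c}$ on the pieces plus the extra winding $k\in\Z$ carried by the common cycle $\mc{C}$, and the neutrality constraint $\delta_0(\sum_j m_j(\mathbf{k}))$ factors once the charge $\varsigma k$ is attributed to $\zeta_{\mathfrak{b}_i}(1)$ on each side. The summation over $k$ is then identified with the counting-measure factor $\mu_\Z$ of $\mu_0$.

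The remaining computation is the Gaussian one. The law of the trace $\Phi$ on $\mc{C}$ is centred Gaussian with covariance $2\pi\mathbf{D}_{\Sigma,\mc{C}}^{-1}$ by \eqref{DNmapandGreen}; its fluctuation part matches $\P_\T$ and its zero mode matches the $\dd c$ factor, so that the expectation over $\Phi$ becomes integration against $\mu_0$. The free-field amplitudes compose via the Green identity \eqref{Greenformula}: gluing expresses $\mathbf{D}_\Sigma$ as the Schur complement of $\mathbf{D}_{\Sigma_1}$ and $\mathbf{D}_{\Sigma_2}$ along $\mc{C}$, which, after subtracting $\mathbf{D}$ as in \eqref{tildeD}, realises $\caA^0_{\Sigma,g}$ as the $\Phi$-integral of $\caA^0_{\Sigma_1,g_1}\,\caA^0_{\Sigma_2,g_2}$. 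The normalization $Z_{\Sigma,g}=\det(\Delta_{g,D})^{-\demi}$ factorizes through the Burghelea--Friedlander--Kappeler gluing formula $\det\Delta_{g,D}=\det\Delta_{g_1,D}\det\Delta_{g_2,D}\det(\mathbf{D}_{\Sigma,\mc{C}})$, and collecting the Gaussian Jacobian with these determinant factors produces the stated constant $C=\frac{1}{\sqrt2\,\pi}$.

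The main obstacle I anticipate is analytic rather than algebraic: justifying the interchange of the $\Phi$-expectation (equivalently the $\mu_0$-integration) with the GMC expectation and with the limits $\eps\to0$ and $t\to1$ defining the electro-magnetic insertions. This requires uniform integrability of the exponential moments of $M^g_\beta$ on each piece, supplied by Proposition \ref{expmoment} conditionally on $\Phi$, together with control of the double summation over $\mathbf{k}^c$ and $k$: the Gaussian decay $e^{-\frac1{4\pi}\|\nu_{\mathbf{z},\mathbf{m},\mathbf{k}}+\omega^c_{\mathbf{k}^c}\|^2_{g,0}}$ must dominate the at most exponential growth $e^{C(|\mathbf{k}^c|+|k|)}$ of the cross terms $\langle\omega,\nu^{\rm h}\rangle_2$, exactly as in the proof of Proposition \ref{propdefpathmag}. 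The delicate bookkeeping point is to identify the zero mode of $\Phi$ with the $c$-variable of $\mu_0$ on $\mc{C}$ without double counting it.
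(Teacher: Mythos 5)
Your overall architecture is the right one and matches the paper's: decompose the Dirichlet GFF Markov-wise along $\mc{C}$ (Proposition \ref{decompGFF}), carry $F_1\otimes F_2$ passively, use Lemma \ref{magcurvtwo} for additivity of the regularized curvature, identify the law of the trace on $\mc{C}$ via the DN map, and match determinant factors to produce $C$. However, there are two genuine gaps. The first is the discrete bookkeeping: in the disconnecting case the $\mu_\Z$ factor of $\mu_0$ does \emph{not} supply a free extra $\Z$-summation that you can match with "the winding along $\mc{C}$" --- neutrality forces it to collapse to a Dirac mass at the single value $k=k_{1\mathfrak{b}_1}=k_{2\mathfrak{b}_2}$ determined by the magnetic charges (the paper states this explicitly after Proposition \ref{glueampli}). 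Meanwhile the relative cohomology counts genuinely mismatch: $\mc{H}_1(\Sigma,\pl\Sigma)$ has rank $2\mathfrak{g}+\mathfrak{b}_1+\mathfrak{b}_2-3$ while the product of the two amplitude sums runs over $\Z^{2\mathfrak{g}_1+\mathfrak{b}_1-1}\times\Z^{2\mathfrak{g}_2+\mathfrak{b}_2-1}$, because the arcs $d_{1(\mathfrak{b}_1-1)}$ and $d_{2(\mathfrak{b}_2-1)}$ merge into one cycle $D=d_{11}-d_{21}$ after gluing. The missing $\Z$-sum is produced by a mechanism absent from your plan: the zero mode of the trace field lives on $\R$, and folding its $\dd c$-integral onto $\R/2\pi R\Z$ by Chasles generates shifts by $n2\pi R$; these enter the free-field amplitudes through the harmonic functions $P_n^i$ (equal to $n2\pi R$ on $\mc{C}$, $0$ elsewhere), and the identity $\dd P_n^i=\omega^{i,c}_{{\bf n}^i}+\dd f^i_{\bf n}$ followed by a Girsanov absorption of $\dd f^i_{\bf n}$ and the change of variables $k^{i,c}_{2\mathfrak{g}_i+\mathfrak{b}_i-1}+n\to k^{i,c}_{2\mathfrak{g}_i+\mathfrak{b}_i-1}$ is what reconstitutes the full product summation. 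Without this step your two sides carry inequivalent discrete sums and the identity fails at the level of the compactified boson already.

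The second gap is the limit interchange. You propose to justify exchanging $\eps\to0$, $t\to1$ with the $\mu_0$-integration by conditional exponential moments (Proposition \ref{expmoment}) and Gaussian decay in ${\bf k}^c$; this yields a.s.\ convergence of the amplitudes but not convergence in $\mc{H}^{\otimes\mathfrak{b}}=L^2(\mu_0)$, which is what the pairing on the right-hand side requires. The paper's resolution is the doubling trick: one first proves the gluing identity for the $\eps$-regularized amplitudes (a direct application of the free-field gluing with $F_i$ containing the regularized insertions, curvature and potential), then shows the regularized amplitudes form a Cauchy sequence in $L^2(\mu_0)$ by gluing $\caA_{\bf x}(\eps)$ against its orientation-reversed complex conjugate $\overline{\caA_{\bf x}(\eps')}$ over the closed double $\Sigma^{\#2}$ (Lemmas \ref{reverting}, \ref{amplitudeL^2}, \ref{cvepsamp}), so that the inner products $\langle\caA_{\bf x}(\eps),\caA_{\bf x}(\eps')\rangle$ become correlation-type quantities on a closed surface whose convergence is already known from Propositions \ref{limitcorel}--\ref{limitcorelmixed}. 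Some such device is needed: a uniform-in-$(\eps,t,\tilde{\boldsymbol{\varphi}})$ $L^2(\mu_0)$-dominating bound is not available from the conditional moment estimates alone, and without it the passage to the limit in $\int\caA_1\caA_2\,\dd\mu_0$ is unjustified.
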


The proof of this Proposition will be done in the next sections.
We remark that, in the case of a curve disconnecting the surface, the summation over $k$ in the Hilbert space will always reduce to a Dirac mass at $k=0$.\\

The situation will be different for self-gluing (or non disconnecting curve), which we focus on now.
Let $(\Sigma,g, {\bf z},\boldsymbol{\zeta})$ be an admissible surface with $\mathfrak{b}$ boundary components such that the boundary contains an  outgoing boundary component  $\pl_{\mathfrak{b}-1} \Sigma\subset\partial\Sigma$ and an incoming boundary component $\pl_{\mathfrak{b}}\Sigma\subset\partial\Sigma$. We glue these two boundary components  to produce the surface denoted $(\Sigma^{\#} ,g, {\bf z},\boldsymbol{\zeta}_{\#})$. In this context, we will write the various field living on the boundary components of $\Sigma$ 
as  $(\tilde{\boldsymbol{\varphi}}_\#^{{\bf k}},\tilde\varphi_{\mathfrak{b}-1}^{k_{\mathfrak{b}-1}},\tilde\varphi_{\mathfrak{b}}^{k_{\mathfrak{b}}} ) \in      \mc{H}^{\mathfrak{b}-2} \times\mc{H} \times \mc{H}$ where $\tilde\varphi_{\mathfrak{b}-1}^{k_{\mathfrak{b}-1}}$ corresponds to  $\pl_{\mathfrak{b}-1} \Sigma$, $\tilde\varphi_{\mathfrak{b}}^{k_{\mathfrak{b}}}$ corresponds to  $\pl_{\mathfrak{b}} \Sigma$, and $\tilde{\boldsymbol{\varphi}}_\#^{{\bf k}}$ corresponds to the boundary components of $\Sigma^{\#}$. The location of the base point $x_0$ could be on any boundary component and we still denote by $g$ the glued metric on $\Sigma^{\#}$.

\begin{proposition}\label{selfglueampli}
For  $F\in \mc{E}^{m}_R(\Sigma)$, the following holds true
 \[%\label{}
\caA_{\Sigma^{\#},g, {\bf v},\boldsymbol{\alpha},{\bf m},\boldsymbol{\zeta}_\#}
(F,\tilde{\boldsymbol{\varphi}}_\#^{{\bf k}})=C \int \caA_{\Sigma,g,{\bf v},\boldsymbol{\alpha},{\bf m},\boldsymbol{\zeta}}(F,  \tilde{\boldsymbol{\varphi}}_\#^{{\bf k}},\tilde\varphi^k , \tilde\varphi^k) \dd\mu_0(\tilde{\varphi}^k).
\]
where $C= \frac{1}{\sqrt{2} \pi}$ if $\partial\Sigma \not =\emptyset$ and $C= \sqrt{2}  $ if $\partial\Sigma =\emptyset$.
 \end{proposition}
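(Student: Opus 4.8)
The plan is to reduce the self-gluing statement (Proposition \ref{selfglueampli}) to the already-established decomposition properties of the Gaussian Free Field and the additivity of the regularized curvature terms (Lemma \ref{magcurvtwoself}), following the same scheme that proves the disconnecting-gluing case (Proposition \ref{glueampli}). The key structural input is the Markov property: when self-gluing $\pl_{\mathfrak{b}-1}\Sigma$ to $\pl_\mathfrak{b}\Sigma$, the Dirichlet GFF $X_{g,D}$ on $\Sigma^{\#}$ should decompose in law as the sum of the Dirichlet GFF on $\Sigma$ (which sees the glued circle $\mc{C}$ as a boundary) plus the harmonic extension $P\tilde\varphi$ of a boundary field $\tilde\varphi$ living on $\mc{C}$, where $\tilde\varphi$ is exactly the random Fourier series \eqref{GFFcircle0} distributed according to $\P_\T$. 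This is the self-gluing analogue of Proposition \ref{decompGFF}/Proposition \ref{equiv}, and it is what converts the single expectation defining $\caA_{\Sigma^{\#}}$ into an integral over $\tilde\varphi^k\in H^s_\Z(\R)$ of the $\Sigma$-amplitude evaluated with the \emph{same} boundary field $\tilde\varphi^k$ imposed on both $\pl_{\mathfrak{b}-1}\Sigma$ and $\pl_\mathfrak{b}\Sigma$.

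First I would fix the geometric data on $\Sigma^{\#}$ and pull it back to $\Sigma$ exactly as arranged in Lemma \ref{magcurvtwoself}: choose the defect graph on $\Sigma$ so that a single arc $\xi_{j_0}$ runs from $\zeta_\mathfrak{b}(1)$ to $\zeta_{\mathfrak{b}-1}(1)$, and split the $1$-form as $\nu_{{\bf z},{\bf m},{\bf k}}=\nu_{{\bf z},{\bf m},({\bf k}_-,0,0)}+\nu_{{\bf z},{\bf 0},({\bf 0},k_\mathfrak{b},k_\mathfrak{b})}$. The second piece carries the winding $k_\mathfrak{b}$ around the new homology cycle created by self-gluing; this is precisely the source of the nontrivial sum over $k\in\Z$ in the Hilbert-space integral (in contrast to the disconnecting case, where $k$ is forced to $0$). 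Lemma \ref{magcurvtwoself} then gives the additive splitting of the regularized curvature integral, so that the curvature factor in the $\Sigma^{\#}$-amplitude factorizes into the two contributions appearing in the $\Sigma$-amplitude with boundary fields carrying winding $k_\mathfrak{b}=k_{\mathfrak{b}-1}=k$. The matching of the magnetic charge constraint $\delta_0(\sum m_j)$ across the two surfaces is handled by the condition $\sum_{j}m_j+\sum_j\varsigma_jk_j=0$ in \eqref{condition_charges}, which reduces to the correct neutrality condition on $\Sigma^{\#}$ once the two glued boundary charges $\varsigma_{\mathfrak{b}-1}k_{\mathfrak{b}-1}$ and $\varsigma_\mathfrak{b}k_\mathfrak{b}$ cancel (they have opposite signs since one boundary is outgoing and one incoming, and $k_{\mathfrak{b}-1}=k_\mathfrak{b}$).

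The core computation is then to verify that the free-field amplitude $\caA^0$ and the partition-function normalizations behave correctly under self-gluing. Here I would invoke the factorization of the determinant $\det(\Delta_{g,D})$ on $\Sigma^{\#}$ into $\det(\Delta_{g,D})$ on $\Sigma$ times a Dirichlet-to-Neumann determinant on the gluing circle, together with the Polyakov--Alvarez/BFK-type gluing formula, which is what produces the constant $C$ and the $\mathbf{D}$-dependent Gaussian weight. The smoothing property of $\widetilde{\mathbf{D}}_\Sigma=\mathbf{D}_\Sigma-\mathbf{D}$ (recalled in Section \ref{subDTN}) ensures that the free-field amplitude $\caA^0_{\Sigma,g}(\tilde\varphi)=e^{-\frac12\cjg\tilde{\boldsymbol\varphi},(\mathbf{D}_\Sigma-\mathbf{D})\tilde{\boldsymbol\varphi}\cjd}$ pairs into a genuine $L^2$ function against $\mu_0$, so the Gaussian integral over $\tilde\varphi$ along $\mc{C}$ converges and reproduces the Markov decomposition of the Gaussian measure. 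The pairing of the two copies of the same $\tilde\varphi^k$ — rather than two independent fields — is the signature of self-gluing and is exactly what the diagonal structure of Green's function restricted to $\mc{C}$ demands via \eqref{DNmapandGreen}.

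The main obstacle I expect is controlling the interaction term $M_\beta^g(\phi_g,\Sigma)$ and the electric insertions uniformly in the winding number $k$ as we perform the Gaussian integral over $\tilde\varphi^k$, since the harmonic extension $P\tilde{\varphi}^k$ now contains the unbounded linear-in-$\theta$ term $kR\theta$ that feeds into the Liouville field through $I_{x_0}(\nu_{{\bf z},{\bf 0},({\bf 0},k,k)})$. The exponential-moment bound of Proposition \ref{expmoment}, applied as in the proof of Proposition \ref{propdefpathmag}, should control the GMC contribution with a bound growing at most like $e^{C|k|}$, which is dominated by the Gaussian factor $e^{-\frac{1}{4\pi}\|\nu_{{\bf z},{\bf 0},({\bf 0},k,k)}\|^2_{g,0}}\sim e^{-ck^2}$ coming from the instanton weight, guaranteeing summability over $k$ and justifying the exchange of the $\lim_{t\to1}\lim_{\eps\to0}$ limits with the Hilbert-space integral by dominated convergence. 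Once these uniform estimates are in place, the identity follows by matching the two sides term by term in $k$ and ${\bf k}^c$, just as in Proposition \ref{glueampli}, with the only new feature being the diagonal self-pairing and the persistence of the $k$-sum.
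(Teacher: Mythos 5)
Your geometric and free-field bookkeeping is essentially the paper's: the paper first proves the self-gluing identity for the compactified boson (Proposition \ref{selfglueampli0}) via the Markov decomposition of the Dirichlet GFF, the splitting $\nu_{{\bf z},{\bf m},{\bf k}}=\nu_{{\bf z},{\bf m},({\bf k}_-,0,0)}+\nu_{{\bf z},{\bf 0},({\bf 0},k_\mathfrak{b},k_\mathfrak{b})}$, the curvature additivity of Lemma \ref{magcurvtwoself}, and the reabsorption of the harmonic shift $\dd P_n$ into the relative cohomology ($\dd P_n=\omega^c_{\bf n}+\dd f_{\bf n}$, with $\dd f_{\bf n}$ killed by Girsanov); your observation that the winding sum over $k$ survives in the self-gluing case, in contrast to the disconnecting case, is also correct. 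One imprecision worth flagging: the trace of the GFF on $\mc{C}$ is \emph{not} distributed according to $\P_\T$; its law is absolutely continuous with respect to $\dd c\otimes \P_\T$ with a density expressed through the free-field amplitudes $\caA^0$ (this is \cite[Lemmas 5.3 and 5.4]{GKRV2}), and it is this change of density that produces the constant $C$ and the $\caA^0$ factors. You partially recover this later via the DN-determinant discussion, so this point is repairable.

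The genuine gap is in your final step, the exchange of $\lim_{t\to 1}\lim_{\eps\to 0}$ with the diagonal integral $\int \caA_{\Sigma,g,{\bf v},\boldsymbol{\alpha},{\bf m},\boldsymbol{\zeta}}(F,\tilde{\boldsymbol{\varphi}}_\#^{{\bf k}},\tilde\varphi^k,\tilde\varphi^k)\,\dd\mu_0(\tilde\varphi^k)$. Your proposed dominated-convergence argument, with an $e^{C|k|}$ bound beaten by the instanton weight $e^{-ck^2}$, controls only the sum over winding sectors; it produces no $\mu_0$-integrable dominating function in the $\varphi$-variable on the diagonal. The only a priori regularity established for the limiting amplitudes is membership in $L^2(\mc{H}^{\otimes\mathfrak{b}})$ (Lemma \ref{amplitudeL^2}), and an $L^2$ kernel defines merely a Hilbert--Schmidt operator, whose restriction to the diagonal is not even well defined as a $\mu_0$-a.e.\ object, let alone integrable: self-gluing is a partial trace, and Hilbert--Schmidt operators are in general not trace class. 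This is exactly the subtlety the paper isolates, and it resolves it by factorizing every amplitude as a composition of Hilbert--Schmidt operators --- cut small annuli off the boundary components, note that annulus amplitudes are $L^2$ kernels of operators $\mc{H}\to\mc{H}$ --- so that the resulting operator is trace class and the partial trace is well defined by \cite[Lemma 7.2]{GKRV2}; one then passes to the limit in $\eps$ and ${\bf x}\to{\bf z}$ using the $\mc{H}^{\otimes\mathfrak{b}}$-norm convergence of the regularized amplitudes (Lemma \ref{cvepsamp}, itself proven by the doubling trick and gluing to a closed-surface correlation), since the partial trace of a composition is continuous under Hilbert--Schmidt convergence of each factor. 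Without this factorization step your argument does not close, and it is the one essential ingredient missing from your proposal.
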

The proof of this Proposition will be done in the next sections.\\

The gluing of amplitudes is a property that is valid for the compactified boson, i.e. the $\mathbb{T}$-valued free field. So we will first state a general lemma for the gluing of the compactified boson as a starting point. This will allow us to establish that Liouville amplitudes are in $L^2$ and we will later deduce the gluing for Liouville amplitudes. This will not be a straightforward consequence because of subtleties related to the curvature term.

%%%%%%%%%%%%%%%%%%%%%%%%%%%%%%%%%%%%%%%%%%%%%%%%%%%%%%%%%%%%%%%%%%%%%%%%%%%%%%%%%%%%%%%%%%%%%%%%%%%%%%%%%%%%%%%%%%%%%%%%%%%%%%%%%%%%%%%%%%%%%%%%%%%%%%%%%%%%%%%%%%%%%%%%

%%%%%%%%%%%%%%%%%%%%%%%%%%%%%%%%%%%%%%%%%%%%%%%%%%%%%%%%%%%%%%%%%%%%%%%%%%%%%%%%%%%%%%%%%%%%%%%%%%%%%%%%%%%%%%%%%%%%%%%%%%%%%%%%%%%%%%%%%%%%%%%%%%%%%%%%%%%%%%%%%%%%%%%%
 \subsection{Gluing for the compactified boson}\label{s:gluingboson}
  
%%%%%%%%%%%%%%%%%%%%%%%%%%%%

We consider the setup drawn for amplitudes with the difference that we will consider amplitudes where $\boldsymbol{\alpha}=0$, $\mu=0$ and $Q=0$. This means that we consider
\begin{align}\label{amplitudegff}
 \caA^{0}_{\Sigma,g,{\bf z},\bf{m},\boldsymbol{\zeta}}&(F,\tilde{\boldsymbol{\varphi}}^{{\bf k}}) \\
: =&
 \delta_0(\sum_{j=1}^{n_{\mathfrak{m}}+\mathfrak{b}} m_j({\bf k}))  \sum_{{\bf k}^c\in \Z^{2\mathfrak{g}+\mathfrak{b}-1}} e^{-\frac{1}{4\pi}\|\nu_{\mathbf{z},\mathbf{m},{\bf k}} +\omega^{c}_{{\bf k}^c}\|_{g,0}^2 }Z_{\Sigma,g}\caA^0_{\Sigma,g}(\tilde{\boldsymbol{\varphi}}  ) 
 \E \Big[e^{-\frac{1}{2\pi}\langle \dd X_{g,D}+\dd P\tilde{\boldsymbol{\varphi}},\nu_{\mathbf{z},\mathbf{m},{\bf k}} +\omega^{c}_{{\bf k}^c}\rangle_2}F(\phi_g)  \Big]\nonumber
\end{align}
where expectation is taken with respect to the Dirichlet GFF and $F$ can be any  measurable positive functional, or any  measurable functional such that $ \caA^{0}_{\Sigma,g,{\bf v},\bf{m},\boldsymbol{\zeta}} (|F|,\tilde{\boldsymbol{\varphi}}^{{\bf k}})<\infty$, $\tilde{\boldsymbol{\varphi}}^{{\bf k}}$ almost surely (condition that we will shortcut as {\it integrable}). Such expression thus perfectly makes sense and we note that there is no dependence of this amplitude with respect to ${\bf v}$. We will further require $F$ to be $2\pi R$-periodic, meaning that $F(\phi_g+2\pi nR)=F(\phi_g)$ for all $n\in\Z$.

We first claim that these amplitudes glue as prescribed by Segal. Under the conditions stated just before Proposition \ref{glueampli}, we claim

\begin{proposition}\label{glueampli0}
Let $F_1,F_2$    be  periodic measurable positive or periodic integrable respectively  on $\Sigma_1$  and  $ \Sigma_2$  and let us denote by $F_1\otimes F_2$ the functional on the glued surface  $\Sigma$ defined by  \[F_1\otimes F_2(\phi_{g}):=F_1(\phi_{g_1}|_{\Sigma_1})F_2(\phi_{g}|_{\Sigma_2}).\] 
Then the following holds true
 \begin{align*}%\label{}
&\caA^{0}_{\Sigma,g,{\bf z},{\bf m},\boldsymbol{\zeta}}
(F_1\otimes F_2,\tilde{\boldsymbol{\varphi}}_1^{{\bf k}_1},\tilde{\boldsymbol{\varphi}}_2^{{\bf k}_2})\\
&\quad =C\int_{H^s_\Z(\R)} \caA^{0}_{\Sigma_1,g_1,{\bf z}_1,{\bf m}_1,\boldsymbol{\zeta}_1}(F_1, \tilde{\boldsymbol{\varphi}}_1^{{\bf k}_1},\tilde{\varphi}^{k})\caA^{0}_{\Sigma_2,g_2,{\bf z}_2, {\bf m}_2,\boldsymbol{\zeta}_2}(F_2,\tilde{\boldsymbol{\varphi}}_2^{{\bf k}_2},\tilde{ \varphi}^k) \dd\mu_0  (\tilde{ \varphi}^{k}).
\end{align*}
 where $C= \frac{1}{(\sqrt{2} \pi) }$ if $\partial\Sigma \not =\emptyset$ and $C= \sqrt{2}  $ if $\partial\Sigma =\emptyset$.
  \end{proposition}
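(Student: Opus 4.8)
The plan is to reduce the identity to a Gaussian computation governed by the Markov property of the Dirichlet GFF across the gluing curve $\mc{C}$, combined with the additivity of Dirichlet--to--Neumann maps and a careful bookkeeping of the winding and cohomological data. Since $\mu=0$, $Q=0$ and $\boldsymbol{\alpha}=0$, the amplitude \eqref{amplitudegff} is, for each fixed $({\bf k},{\bf k}^c)$, a genuine Gaussian expectation dressed by the functional $F$ and the deterministic prefactors $Z_{\Sigma,g}\caA^0_{\Sigma,g}$ and $e^{-\frac{1}{4\pi}\|\nu+\omega^c\|_{g,0}^2}$; no Cameron--Martin continuation or GMC estimate is needed, which is precisely why the free field is treated first. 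By the basis-independence established as in Lemma \ref{invariance_of_norm}, I would fix once and for all glued bases from Lemma \ref{baseglue}: the canonical geometric basis $\boldsymbol{\sigma}=\boldsymbol{\sigma}_1\#\boldsymbol{\sigma}_2$ with dual compactly supported forms $\omega^c_{{\bf k}^c}=\omega^{1,c}_{{\bf k}^c_1}+\omega^{2,c}_{{\bf k}^c_2}$, and the glued closed $1$-form $\nu_{{\bf z},{\bf m},{\bf k}}$ built from $\nu^{1}_{{\bf z}_1,{\bf m}_1,{\bf k}_1}$ and $\nu^{2}_{{\bf z}_2,{\bf m}_2,{\bf k}_2}$ as in Lemma \ref{magcurvtwo}, with the gluing winding $k:=k_{1\mathfrak{b}_1}=k_{2\mathfrak{b}_2}$ attached to $\mc{C}$.

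First I would invoke the Markov decomposition of the Dirichlet GFF (Proposition \ref{decompGFF}): writing $\phi_{\mc{C}}:=X_{g,D}|_{\mc{C}}$ for the trace on the gluing curve, one has the equality in law $X_{g,D}=X_{g_1,D}+X_{g_2,D}+P_{\mc{C}}\phi_{\mc{C}}$, where $X_{g_i,D}$ are Dirichlet GFFs on $\Sigma_i$ (vanishing on all of $\pl\Sigma_i$, in particular on $\mc{C}_i$), $P_{\mc{C}}$ is the harmonic extension from $\mc{C}$ with zero data on $\pl\Sigma$, and the three fields are independent. The law of $\phi_{\mc{C}}$ is the centred Gaussian with covariance $\frac{1}{2\pi}\mathbf{D}_{\Sigma,\mc{C}}^{-1}$ by \eqref{DNmapandGreen}, and since the two-sided normal-derivative jump is the sum of the one-sided ones, $\mathbf{D}_{\Sigma,\mc{C}}=\mathbf{D}_{\Sigma_1,\mc{C}}+\mathbf{D}_{\Sigma_2,\mc{C}}$. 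Using that $\widetilde{\mathbf{D}}_{\Sigma_i,\mc{C}}=\mathbf{D}_{\Sigma_i,\mc{C}}-2\mathbf{D}$ is smoothing (recalled after \eqref{tildeD}, from \cite[Lemma 4.1]{GKRV}), this law is absolutely continuous with respect to the reference measure $\P_\T$ (of covariance $\tfrac12\mathbf{D}^{-1}$, cf. \eqref{defmathbfD}), and this is the mechanism producing $\dd\mu_0$ on the internal circle. The structural key is that harmonic extensions glue: $(P\tilde{\boldsymbol{\varphi}}+P_{\mc{C}}\phi_{\mc{C}})|_{\Sigma_i}=P_{\Sigma_i}(\tilde{\boldsymbol{\varphi}}_i,\phi_{\mc{C}})$, so $\phi_g|_{\Sigma_i}$ is exactly the free field on $\Sigma_i$ with external datum $\tilde{\boldsymbol{\varphi}}_i$ and internal datum $\tilde\varphi^k:=\phi_{\mc{C}}+kR\theta$; hence $F_1\otimes F_2(\phi_g)=F_1(\phi_{g_1})F_2(\phi_{g_2})$ factorises and, the bulk fields being independent, the $X_{g,D}$-expectation splits into the product of the two $\Sigma_i$-expectations followed by integration over $\phi_{\mc{C}}$.

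The next step is the purely Gaussian identity matching the prefactors. Integrating out $\phi_{\mc{C}}$ against $\P_\T$, the $\phi_{\mc{C}}$-quadratic forms coming from $\caA^0_{\Sigma_1}$ and $\caA^0_{\Sigma_2}$ (each contributing $-\tfrac12(\mathbf{D}_{\Sigma_i}-\mathbf{D})$ in the $\mc{C}$-block) and from the reference weight $-\mathbf{D}$ of $\P_\T$ combine to $-\tfrac12(\mathbf{D}_{\Sigma_1,\mc{C}}+\mathbf{D}_{\Sigma_2,\mc{C}})=-\tfrac12\mathbf{D}_{\Sigma,\mc{C}}$, the reference operators $\mathbf{D}$ cancelling by design; a Schur-complement computation then identifies the resulting quadratic form in $(\tilde{\boldsymbol{\varphi}}_1,\tilde{\boldsymbol{\varphi}}_2)$ with $\mathbf{D}_\Sigma-\mathbf{D}$, reproducing $\caA^0_{\Sigma,g}(\tilde{\boldsymbol{\varphi}})$ from \eqref{amplifree}. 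The normalisation produced by this integral is a power of $\det(\mathbf{D}_{\Sigma,\mc{C}})$, which together with the Burghelea--Friedlander--Kappeler factorisation $\det(\Delta_{g,D})=\det(\Delta_{g_1,D})\det(\Delta_{g_2,D})\det(\mathbf{D}_{\Sigma,\mc{C}})$ (as used in \cite{GKRV}) reconciles $Z_{\Sigma,g}$ with $Z_{\Sigma_1,g_1}Z_{\Sigma_2,g_2}$ via \eqref{znormal} and yields the constant $C$; the value $\tfrac{1}{\sqrt2\pi}$ versus $\sqrt2$ simply records whether a compact bulk zero mode survives after gluing (closed $\Sigma$, where the $c$-integral over $\R/2\pi R\Z$ contributes an extra volume factor) or not. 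This is the analytic core, but it runs exactly parallel to the free-field gluing of \cite{GKRV}.

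It remains to glue the winding and cohomological data, where the compactified structure enters and where I expect the main obstacle. Using Lemma \ref{baseglue}(3) the glued form splits as $\nu_{{\bf z},{\bf m},{\bf k}}+\omega^c_{{\bf k}^c}=(\nu^1+\omega^{1,c})+(\nu^2+\omega^{2,c})$ on $\Sigma=\Sigma_1\cup_{\mc{C}}\Sigma_2$, the only shared datum being the period $\tfrac{1}{2\pi R}\int_{\mc{C}}\nu=\mp k$ across $\mc{C}$, the signs dictated by $\varsigma_{\mc{C}_1}=-1$, $\varsigma_{\mc{C}_2}=+1$. Since $\mc{C}$ is disjoint from ${\bf z}$, the regularised norm of Lemma \ref{renorm_L^2} is additive, $\|\nu+\omega^c\|_{g,0}^2=\|\nu^1+\omega^{1,c}\|_{g_1,0}^2+\|\nu^2+\omega^{2,c}\|_{g_2,0}^2$ (the $\log\eps$-subtractions being local to each $z_j$, the cross term vanishing by orthogonality of the harmonic representatives), and the linear term $\cjg \dd X_{g,D}+\dd P\tilde{\boldsymbol{\varphi}},\nu+\omega^c\cjd_2$ splits likewise; the primitives restrict as $I^{\boldsymbol{\xi}}_{x_0}(\nu)|_{\Sigma_i}=I^{\boldsymbol{\xi}_i}_{x_0^i}(\nu^i)$ because the branch cuts are inherited from the glued defect graph of Lemma \ref{magcurvtwo}. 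Finally the neutrality constraints assemble: the two masses $\delta_0\big(\sum_j m_{ij}+\sum_\ell \varsigma^{(i)}_\ell k_{i\ell}\big)$ carried by $\caA^0_{\Sigma_i}$, with $\mc{C}_i$-charge $\varsigma_{\mc{C}_i}k=\mp k$ from \eqref{defm(k)} and \eqref{condition_charges}, are compatible precisely when their product matches the global $\delta_0(\sum_j m_j({\bf k}))$ of \eqref{amplitudegff}, which fixes the internal winding $k$ and, for a disconnecting $\mc{C}$, collapses the $\mu_\Z$-sum to a single term. The difficulty is thus not any one estimate but the simultaneous consistency of three bookkeeping layers — the compact zero mode $c\in\R/2\pi R\Z$, the winding $k\in\Z$, and the relative/absolute cohomology of $\Sigma_1,\Sigma_2$ against that of $\Sigma$ — together with checking that the determinant and zero-mode normalisations conspire to the stated $C$; all the individual ingredients being available, the real work lies in organising the $\phi_{\mc{C}}$-integral so that these layers separate cleanly.
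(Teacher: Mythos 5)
Your overall architecture is the paper's: Markov decomposition of the Dirichlet GFF across $\mc{C}$ (Proposition \ref{decompGFF}), the description of the law of the trace field via the DN-map density from \cite[Lemmas 5.3--5.4]{GKRV2} together with the determinant factorisation, the glued homology/cohomology bases of Lemma \ref{baseglue}, the glued $1$-form and defect graph of Lemma \ref{magcurvtwo}, and the assembly of the neutrality $\delta$-masses. But there is a genuine gap at the point you yourself flag as ``the real work'': the interaction between the compact zero mode of $\mu_0$ and the full-line constant mode of the Gaussian trace. The law of the trace ${\bf Y}+h_{\mc{C}}$ produces a $c$-integral over all of $\R$, whereas $\mu_0$ integrates $c$ over a circle; the paper bridges this by the Chasles decomposition $\int_\R \dd c=\sum_{n\in\Z}\int_0^{2\pi R}\dd c$, and the shift $\tilde\varphi\to\tilde\varphi+2\pi nR$ on $\mc{C}$ is \emph{not} innocuous when a piece $\Sigma_i$ has other boundary components: $\caA^0_{\Sigma_i,g_i}(\tilde{\boldsymbol{\varphi}}_i,\tilde\varphi+2\pi nR)$ differs from $\caA^0_{\Sigma_i,g_i}(\tilde{\boldsymbol{\varphi}}_i,\tilde\varphi)$ by the factors $e^{-\frac{1}{2\pi}\langle \dd P(\tilde{\boldsymbol{\varphi}}_i,\tilde\varphi),\dd P^i_n\rangle}e^{-\frac{1}{4\pi}\|\dd P^i_n\|_2^2}$, where $P^i_n$ is harmonic with value $2\pi nR$ on $\mc{C}$ and $0$ elsewhere. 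Your Schur-complement computation reconstructing $\mathbf{D}_\Sigma-\mathbf{D}$ ignores these $n$-dependent corrections, so as written the periodicity of $F_1,F_2$ does not suffice to close the identity.

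This omission is not a technicality: it is exactly where the extra relative-cohomology summation comes from. Count dimensions: the product side carries sums over $\Z^{2\mathfrak{g}_1+\mathfrak{b}_1-1}\times\Z^{2\mathfrak{g}_2+\mathfrak{b}_2-1}$, while the glued amplitude sums over $\Z^{2\mathfrak{g}+\mathfrak{b}_1+\mathfrak{b}_2-3}$ --- one $\Z$-factor fewer --- and your glued basis (union of the compactly supported bases minus the merged arc) reflects this. The paper resolves the mismatch by observing that $\dd P^i_n$ has the same periods as $\omega^{i,c}_{{\bf n}^i}$ (the form dual to the boundary-to-boundary arc ending on $\mc{C}$), hence $\dd P^i_n=\omega^{i,c}_{{\bf n}^i}+\dd f^i_{\bf n}$ with $f^i_{\bf n}$ of Dirichlet type; the exact part is absorbed by Girsanov, the norm recombines as $\|\nu^i+\omega^{i,c}_{{\bf k}^c_i}+\dd P^i_n\|^2_{g_i,0}=\|\nu^i+\omega^{i,c}_{{\bf k}^c_i}\|^2_{g_i,0}+\|\dd P^i_n\|^2_2$ (using Stokes and the absolute/relative boundary conditions), and a change of variables $k^{i,c}_{2\mathfrak{g}_i+\mathfrak{b}_i-1}+n\to k^{i,c}_{2\mathfrak{g}_i+\mathfrak{b}_i-1}$ turns the $n$-sum into the missing cohomology sum. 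Without this mechanism (and its degenerate variant when $\mathfrak{b}_2=1$, where $P^2_n\equiv 2\pi nR$ is constant and periodicity does absorb the shift on $\Sigma_2$ alone), your three ``bookkeeping layers'' cannot be made consistent, so the proof as proposed would fail at the reconstruction of the ${\bf k}^c$-sum. A minor further remark: the additivity of the regularised norm across $\mc{C}$ needs no orthogonality argument --- it is plain additivity of the integral over $\Sigma=\Sigma_1\cup\Sigma_2$, since the Hadamard subtractions are localised at the points ${\bf z}$, away from $\mc{C}$.
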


 \begin{proof}
We split the proof in two cases depending on whether the glued surface $\Sigma$ has a  trivial boundary (case 1) or not (case 2). 
We write $n_{\mathfrak{m}}=n_{\mathfrak{m}}^{1}+n_{\mathfrak{m}}^{2}$ and  $\mathfrak{g}$ for   $\mathfrak{g}_1 +\mathfrak{g}_2$. 
 
\medskip
{\bf 1) Assume first $\partial\Sigma  = \emptyset$}.   Let us call $\boldsymbol{\sigma}_{i}$ a canonical geometric basis of the relative homology on $\Sigma_i$ and note that this basis contains no boundary-to-boundary arcs, only interior cycles. 
Since the glued curve $\mc{C}$ is homologically trivial in $\Sigma_1$, $\Sigma_2$ or $\Sigma$,  the family  $\boldsymbol{\sigma}:=\boldsymbol{\sigma}_1\cup\boldsymbol{\sigma}_2$ forms a homology  basis on $\Sigma$ (see Figure \ref{fig:case1}).  Let  $\omega^{i,c}_1,\dots, \omega^{i,c}_{2{\mathfrak{g}}_i}$ be a cohomology basis of $\mc{H}^1(\Sigma_i,\pl \Sigma_i)$ dual to $\boldsymbol{\sigma}_i$ made of closed forms that are compactly supported inside $\Sigma^\circ_i$. Since they are compactly supported, all these forms can be obviously extended to $\Sigma$ by prescribing their value to be $0$ on $\Sigma\setminus\Sigma_i$. Then $\omega^{1,c}_1,\dots, \omega^{1,c}_{2{\mathfrak{g}}_1},\omega^{2,c}_1,\dots, \omega^{2,c}_{2{\mathfrak{g}}_2}$ is a basis of $\mc{H}^1(\Sigma)$ made of closed forms, dual to $\boldsymbol{\sigma}$. For ${\bf k}^c:=({\bf k}^{c}_1,{\bf k}^{c}_2)\in\Z^{2\mathfrak{g}_1}\times \Z^{2\mathfrak{g}_2}$, we set $\omega_{\bf k^c}:=\omega^{1,c}_{{\bf k}^{c}_1} +\omega^{2,c}_{{\bf k}^c_2} $.

 \begin{figure}[h] 
\centering
\includegraphics[width=.5\textwidth]{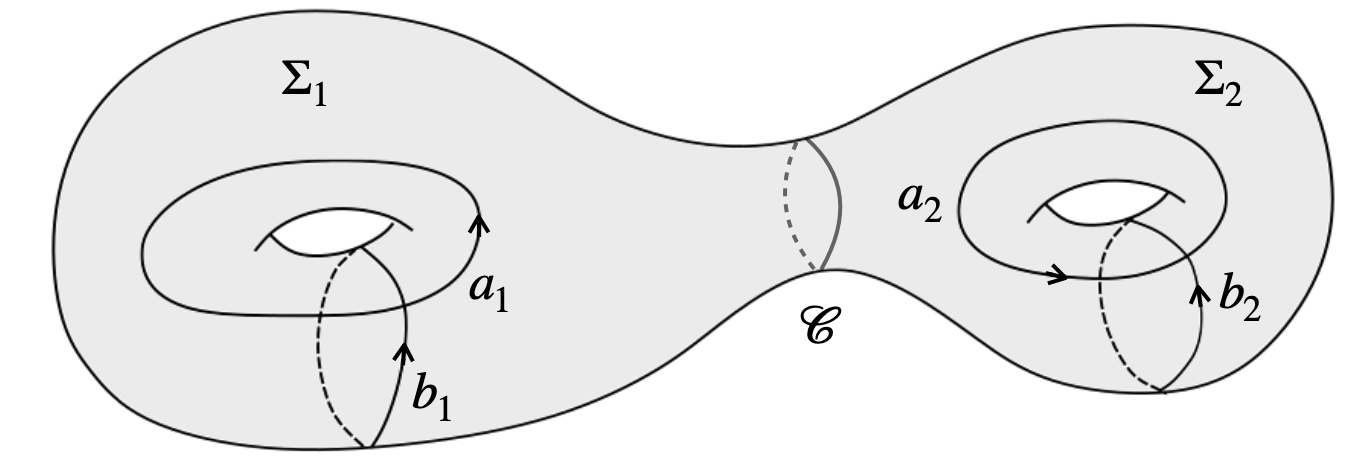} 
\caption{Case $\pl\Sigma=\emptyset$ with $\Sigma=\Sigma_1\#\Sigma_2$ cut along $\mc{C}$.}
\label{fig:case1}
\end{figure}

%
%
%
%Let $\sigma$ be a marking of $\Sigma$ and we denote by $\sigma^1,\sigma^2$ the subfamilies of $\sigma$ made up of cycles respectively contained in $\Sigma_1,\Sigma_2$. We consider $F_1,F_2$ respectively supported in $\Sigma_1,\Sigma_2$, and we denote                      $F(\phi):=F_1(\phi)F_2(\phi)                                            
%$.
%Now we consider the compactly supported closed 1-forms  $\hat{\omega}^i_{\bf k_i}\in \mc{H}_{\rm cs}^1(\Sigma_i,\R)$ (with ${\bf k}=({\bf k}_1,{\bf k}_2)$) such that
%$$\int_{\gamma_j}\hat{\omega}^i_{\bf k_i}=T {\bf k}_{i}(j),\quad  \forall \gamma_j\in \sigma^i.$$
% Then we set $\hat{\omega}_{\bf k}=\hat{\omega}^1_{\bf k_1}\mathbf{1}_{\Sigma_1}+\hat{\omega}^2_{\bf k_2}\mathbf{1}_{\Sigma_2}$, which is generated by a basis of closed forms on $\Sigma$ dual to $\sigma$.

Next we consider the 1-forms $\nu_{{\bf z}_1,{\bf m}_1,k_1 }^1$, $\nu_{{\bf z}_2,{\bf m}_2,k_2 }^2$ respectively on $\Sigma_1,\Sigma_2$ given by Proposition \ref{harmpoles}. Notice that its proof (based on Lemma \ref{boundary_forms_absolute}) shows that we can choose the forms $\nu_{{\bf z}_i,{\bf m}_i,k_i}^i$ to be equal to 
$-\varsigma_i k_i R \dd \theta$ near $\pl \Sigma_i$, in the chart $\omega_j:U_j\to \mathbb{A}_\delta$ associated to $\pl \Sigma_i$. 
Note that, in order for the amplitudes on $\Sigma_1,\Sigma_2$ to be non zero, we must have  $k_1=\sum_{j=1}^{n^1_{\mathfrak{k}}}m_{1j}$ and $k_2=-\sum_{j=1}^{n^2_{\mathfrak{k}}}m_{2j}$. Furthermore, for the amplitude on $\Sigma$ to exist, we must have $\sum_{j=1}^{n^1_{\mathfrak{k}}}m_{1j}+\sum_{j=1}^{n^2_{\mathfrak{k}}}m_{2j}=0$. All in all, $k_1=k_2=\sum_{j=1}^{n^1_{\mathfrak{k}}}m_{1j}$. In particular $\nu_{{\bf z}_2,{\bf m}_2,k_2 }^2$ is a smooth extension of $\nu_{{\bf z}_1,{\bf m}_1,k_1 }^1$ (viewed as a form on $\Sigma\setminus \Sigma_2$)  to $\Sigma$, which means that we get a smooth closed 1-form $\nu_{{\bf z},{\bf m} }$ on $\Sigma$ with winding $m_{ij}$ around the point $z_{ij}$, for $i=1,2$ and $j=1,\dots n^i_{\mathfrak{m}}$. The defect graph $\mc{D}_{{\bf v}_1,\boldsymbol{\xi}_1}$ on $\Sigma_1$ is chosen so that only one arc has the boundary point $\zeta(1)$ as endpoint, and similarly for $\Sigma_2$. We get a defect graph $\mc{D}_{{\bf v},\boldsymbol{\xi}}$  by gluing the two defect graphs, i.e. we keep all the arcs in $\Sigma_1,\Sigma_2$ that do not have endpoint on the boundary and we form one arc out of the two arcs (one on $\Sigma_1$ and one on $\Sigma_2$) with an endpoint on the boundary, that we orient in the direction of increasing charges.

On $\Sigma$, the path integral  can be expressed, for all positive or integrable $F$, as 
\begin{align}\label{defPI1}
 \caA^{0}_{\Sigma,g, {\bf z},{\bf m},\boldsymbol{\zeta}}
(F)
   :=&
 \big(\frac{{\rm v}_{g}(\Sigma)}{{\det}'(\Delta_{g})}\big)^\hf\sum_{{\bf k}\in \Z^{2\mathfrak{g}}}e^{-\frac{1}{4\pi}\|\omega _{\bf k}+\nu_{\mathbf{z},\mathbf{m}}\|^2_{g,0} }
  \int_{\R/2\pi R\Z}\E\Big[e^{-\frac{1}{2\pi}\langle \dd X_g,\omega_{\bf k} +\nu_{{\bf z},{\bf m} }\rangle_2}F(\phi_g) \Big]\,\dd c 
\end{align}
where $\phi_g =c+X_g  +I^{\boldsymbol{\sigma}}_{x_0}(\omega_{\bf k})+ I^{\boldsymbol{\xi}}_{x_0}(\nu_{\bf z,m})$.
Let now $X_1$ and $X_2$ be two  independent Dirichlet GFF respectively on $\Sigma_1$ and $\Sigma_2$. We assume that they are both defined on $\Sigma$ by setting $X_i=0$ outside of $\Sigma_i$. Then we have the following decomposition in law (see Proposition \ref{decompGFF})
\begin{align*}%\label{}
X_g\stackrel{\rm law}=X_1+X_2+P{\bf X}-c_g
\end{align*}
where ${\bf X}$ is the restriction of the GFF $X_g$ to the  glued boundary component  $\mathcal{C}$ expressed in parametrized coordinates, i.e. ${\bf X} = X_{g|_{{\caC}}}\circ \zeta_{\mc{C}}  $ with    $\zeta_{\mc{C}} $   the parametrization of $\mc{C}$,   $P{\bf X}$ is its harmonic extension to $ \Sigma $ and $c_g:=\frac{1}{{\rm v}_g(\Sigma)}\int_\Sigma (X_1+X_2+P{\bf X})\,\dd {\rm v}_g$. We can then plug this decomposition into the path integral \eqref{defPI1} and then shift the $c$-integral by $c_g$ (i.e. $c\mapsto c+c_g$)
to get
\begin{multline} \label{productE}
 \caA^{0}_{\Sigma,g, {\bf z},{\bf m},\boldsymbol{\zeta}}
(F_1\otimes F_2)
   :=
 \big(\frac{{\rm v}_{g}(\Sigma)}{{\det}'(\Delta_{g})}\big)^\hf\sum_{{\bf k}\in \Z^{2\mathfrak{g}}}e^{-\frac{1}{4\pi}\|\omega _{\bf k}+\nu_{\mathbf{z},\mathbf{m}}\|^2_{g,0} }\\ \int_{\R/2\pi R\Z} \E[\mc{B}_1(c,{\bf X}, \omega^{1,c}_{{\bf k}_1^c}+\nu^1_{{\bf z}_1,{\bf m}_1,k_1})\mc{B}_2(c,{\bf X}, \omega^{2,c}_{{\bf k}_2^{c}}+\nu^2_{{\bf z}_2,{\bf m}_2,k_2})]\,\dd c
\end{multline}
where we have set  
$$\mc{B}_i(c,\varphi, \omega)=\E\Big[e^{-\frac{1}{2\pi}\langle \dd X_i+\dd P\varphi,\omega \rangle_2}F_i(\phi_i ) \Big]$$
with $\phi_i:=c+X_i+P\varphi+I^i_{x_0}(\omega)$ and expectation is over the GFF $X_i$.
Here we have used a shortcut notation $I^i_{x_0}(\omega)$: it means that for $\omega=\omega^{i,c}_{{\bf k}_i^{c}}+\nu^i_{{\bf z}_i,{\bf m}_i,k_i}$ we have $I^i_{x_0}(\omega)=I_{x_0}^{\boldsymbol{\sigma}_i}(\omega^{i,c}_{{\bf k}_i^{c}})+I^{\boldsymbol{\xi}_i}_{x_0}(\nu^i_{{\bf z}_i,{\bf m}_i,k_i})$.  

Now we make a further shift in the $c$-variable in the expression above to subtract the mean $m_{\mc{C}}({\bf X}):=\frac{1}{2\pi}\int_0^{2\pi}{\bf X}(e^{i\theta})\,\dd \theta$ to the field ${\bf X}$. As a consequence we can replace the law $ \P_{{\bf X}} $ of ${\bf X}$ in \eqref{productE} (expectation is there w.r.t $ \P_{{\bf X}} $) by the law $  \P_{{\bf X}-m_{\mc{C}}({\bf X})}$ of the recentered field ${\bf X}-m_{\mc{C}}({\bf X})$ so that we end up with (using the description of the law of $X-m_{\mc{C}}({\bf X})$ proved in \cite[eq (5.14)]{GKRV2} together with the  computation of determinants \cite[eq (5.15)]{GKRV2}) 
\[\begin{split} 
\caA^{0}_{\Sigma,g, {\bf z},{\bf m},\boldsymbol{\zeta}}(F)
  = & Z_{\Sigma_1,g_1}Z_{\Sigma_2,g_2}
 \sum_{{\bf k}\in \Z^{2\mathfrak{g}}}e^{-\frac{1}{4\pi}\|\omega _{\bf k}+\nu_{\mathbf{z},\mathbf{m}}\|^2_{g,0} }\\
&  \int_{\R/2\pi R\Z}\int 
\mc{B}_1(c,\varphi, \omega^{1,c}_{{\bf k}_1^{c}})\mc{B}_2(c,\varphi, \omega^{2,c}_{{\bf k}_2^{c}})
e^{-\frac{1}{2}\cjg \varphi,\widetilde{\mathbf{D}}_{\Sigma,{\mc{C}}}   \varphi\cjd}   \dd \P_\T   (  \varphi) \,\dd c.
\end{split}\]
Here we recall that $Z_{\Sigma_i,g_i}$ and $\widetilde{\mathbf{D}}_{\Sigma,{\mc{C}}}$ were defined in \eqref{znormal} and \eqref{tildeD}. 
Next we observe that
$$\|\omega_{\bf k}+\nu_{\mathbf{z},\mathbf{m}}\|_{g,0}^2=\|\omega^{1,c}_{{\bf k}_1^{c}}+\nu^1_{\mathbf{z}_1,\mathbf{m}_1,k_1}\|_{g_1,0}^2+\|\omega^{2,c}_{{\bf k}_2^{c}}+\nu^2_{\mathbf{z}_2,\mathbf{m}_2,k_2}\|_{g_2,0}^2$$
and   that $\exp(-\frac{1}{2}\cjg\varphi,\widetilde{\mathbf{D}}_{\Sigma,{\mc{C}}}   \varphi\cjd)=\caA^0_{\Sigma_1,g_1}(\tilde{\boldsymbol{   \varphi}})\caA^0_{\Sigma_2,g_2}(\tilde{\boldsymbol{  \varphi}} ) $ (whatever the value of $c$ is, since $\mathbf{D}_{\Sigma,\mc{C}}1={\bf D}1=0$). Also, note that summation over $k$ in the measure $ \mu_0$ reduces to $k=k_1=k_2=\sum_{j=1}^{n^1_{\mathfrak{m}}}m_{1j}$. This completes the proof of the first case.

\medskip
{\bf 2) Assume now $\partial\Sigma  \not= \emptyset$.} In that case, $\Sigma_1$ or $\Sigma_2$ has at least $2$ boundary connected components and we will assume (even if it means re-labelling the surfaces) that this will be the case for $\Sigma_1$. Let us take 
\[\boldsymbol{\sigma}_i=(a_{i1},b_{i1}, \dots,a_{i{\mathfrak{g}}_i},b_{i{\mathfrak{g}}_i}, d_{i1},\dots,d_{i(\mathfrak{b}_i-1)})\] 
a canonical geometric basis of  $\mc{H}_1(\Sigma_i,\pl \Sigma_i)$ (see Figure \ref{fig:partition}). Let  $\omega^{i,c}_1,\dots, \omega^{i,c}_{2{\mathfrak{g}}_i+\mathfrak{b}_i-1}$ be a basis of $\mc{H}^1_R(\Sigma_i,\pl \Sigma_i)$ dual to $\boldsymbol{\sigma}_i$ made of closed forms that are compactly supported inside $\Sigma^\circ_i$. Since they are compactly supported, all these forms can be obviously extended to $\Sigma$ by prescribing their value to be $0$ on $\Sigma\setminus\Sigma_i$. 

$\bullet$ We first consider the case where $\Sigma_2$ has  $\mathfrak{b}_2\geq 2$ boundary components. By Lemma \ref{baseglue}, we get a basis of the relative homology $\mc{H}_1(\Sigma,\pl\Sigma)$
\[\boldsymbol{\sigma}=\boldsymbol{\sigma}_1\# \boldsymbol{\sigma}_2\]  
by gathering the curves for $i=1,2$: $a_{ij},b_{ij}$ for $j=1,\dots,\mathfrak{g}_i$, $d_{ij}$  for $j=1,\dots,\mathfrak{b}_i-2$,  and finally  the curve $d_{1(\mathfrak{b}_1-1)}-d_{2(\mathfrak{b}_2-1)}$ (see Figure \ref{fig:partition}). Then $\omega^{1,c}_1,\dots, \omega^{1,c}_{2{\mathfrak{g}}_1+\mathfrak{b}_1-1},\omega^{2,c}_1,\dots, \omega^{2,c}_{2{\mathfrak{g}}_2+\mathfrak{b}_2-2} $ is a basis of $\mc{H}^1_R(\Sigma,\partial\Sigma)$ made of closed forms, dual to $\boldsymbol{\sigma}$ and compactly supported. For ${\bf k}^c:=({\bf k}_1^{c},{\bf k}_2^{c})\in\Z^{2\mathfrak{g}_1+\mathfrak{b}_1-1}\times \Z^{2\mathfrak{g}_2+\mathfrak{b}_2-2} $, we set $\omega^c_{\bf k^c}:=\omega^{1,c}_{{\bf k}_1^c} +\omega^{2,c}_{({\bf k}_2^{c},0)} $.

\begin{figure}[h]
     \begin{subfigure}[b]{0.5\textwidth}
         \includegraphics[width=1\textwidth]{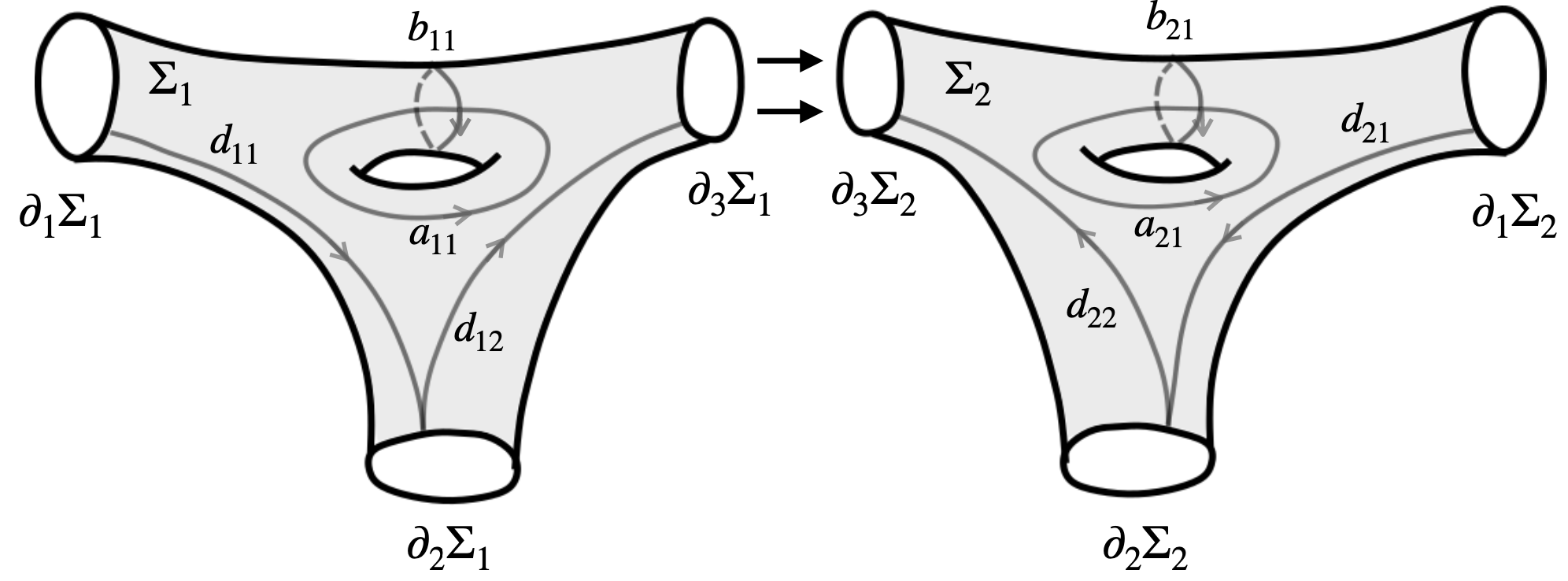}
         \caption{Disconnected surface. In gray, the relative homology on both surfaces.}
     \end{subfigure}
     \hfill
     \begin{subfigure}[b]{0.46\textwidth}
         \centering
         \includegraphics[width=1\textwidth]{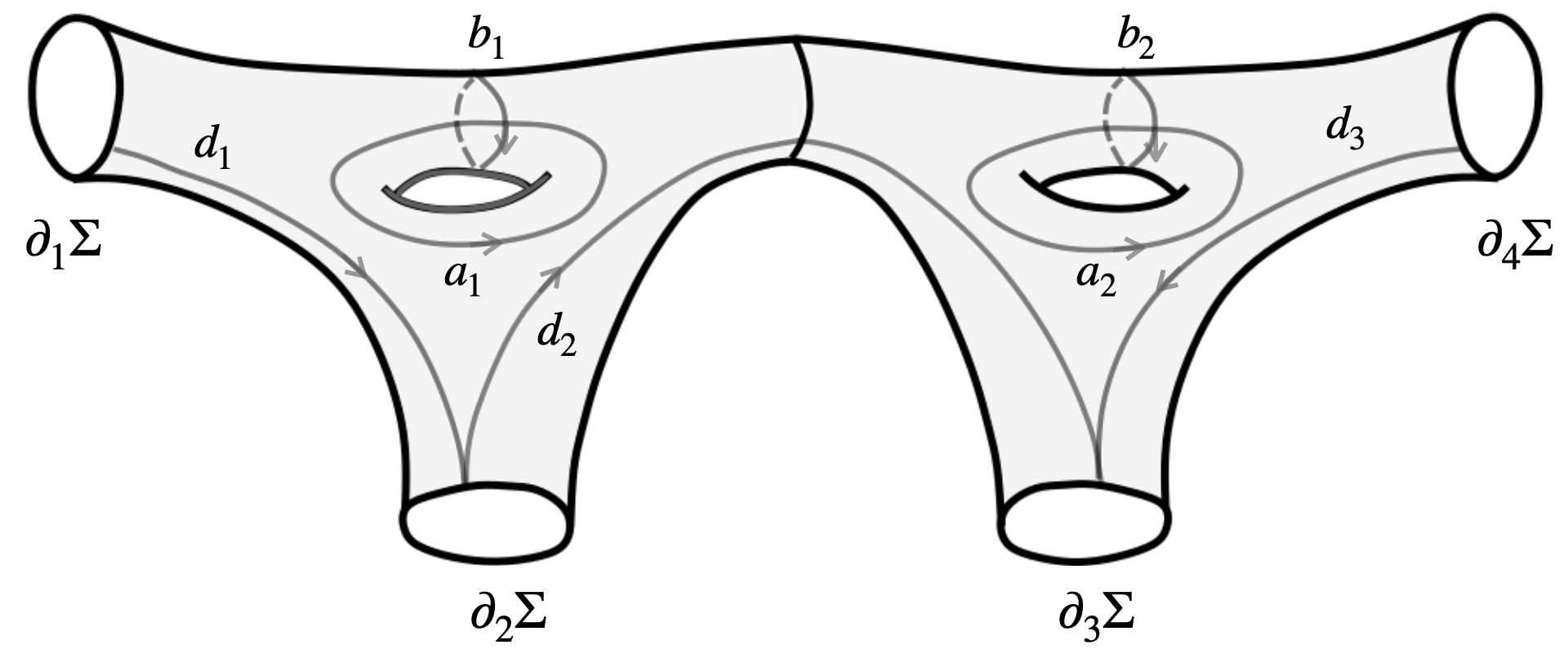}
         \caption{Glued surface. In gray, the relative homology built from those on $\Sigma_1$ and $\Sigma_2$.}
     \end{subfigure} 
        \begin{subfigure}[b]{0.6\textwidth}
\includegraphics[width=1\textwidth]{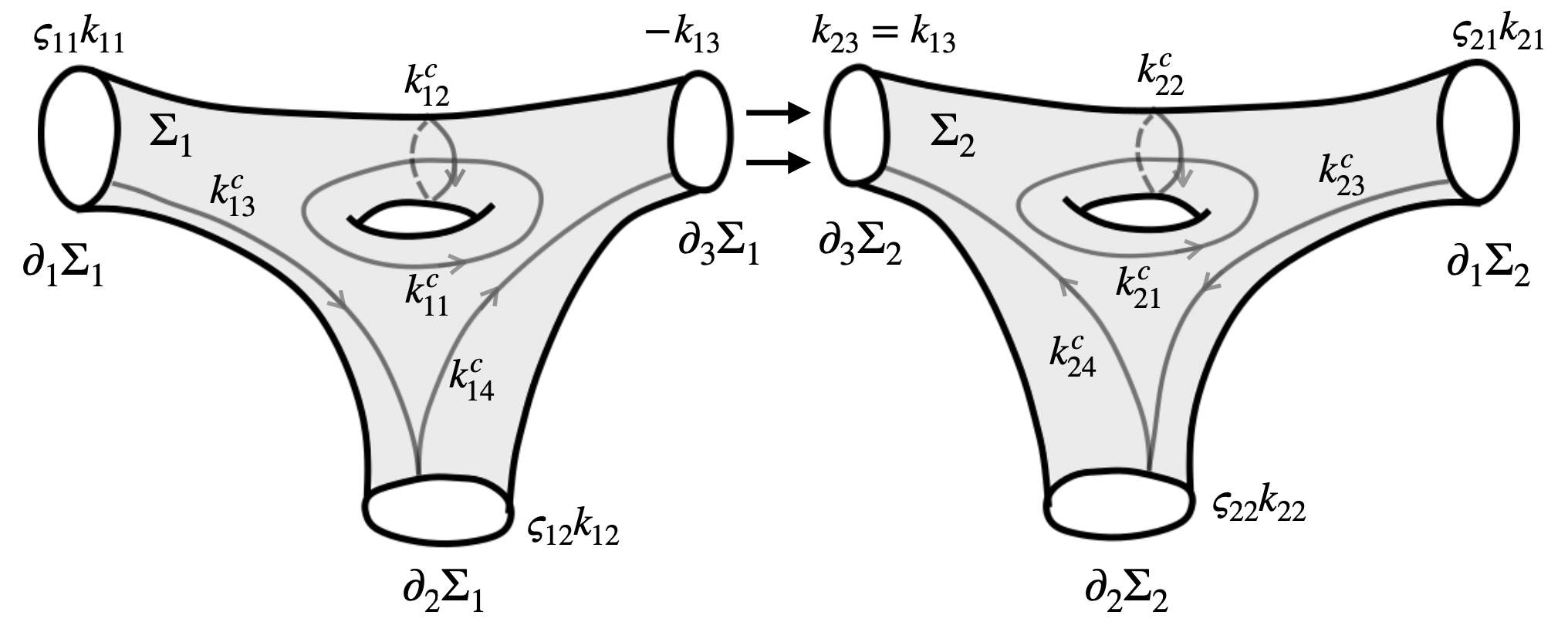}
\caption{Assignment of values of $k$-parameters to each homology curve.}
\end{subfigure}
  \caption{Case $\pl\Sigma\not=\emptyset$.}
 \label{fig:partition}\end{figure}

%\begin{figure}[h]

%\end{figure}

% \begin{figure}[h] 
%\centering
%\subfloat[Disconnected surface. In red, the relative homology on both surfaces.]{\includegraphics[width=.4\textwidth]{cutting3.png} 
%}
%\,\hspace{1cm}
%\subfloat[Glued surface. In red, the relative homology built from those on $\Sigma_1$ and $\Sigma_2$.]{\includegraphics[width=.4\textwidth]{cutting2.png} }
%\caption{Case $\pl\Sigma\not=\emptyset$. }
%\label{fig:partition}
%\end{figure}

Now we focus on the absolute cohomology and magnetic $1$-forms. Notice first that in case $\sum_{j=1}^{\mathfrak{b}_1-1} \varsigma_{1j}k_{1j}+\sum_{j=1}^{\mathfrak{b}_2-1} \varsigma_{2j}k_{2j}+\sum_{j=1}^{n^1_\mathfrak{m}}m_{1j}+\sum_{j=1}^{n^2_\mathfrak{m}}m_{2j}\not=0$, both sides in the gluing statement vanish (because of $\delta$-masses in the definition of amplitudes) so that equality obviously  holds. So the case of interest is 
\begin{equation}\label{neutral}
\sum_{j=1}^{\mathfrak{b}_1-1} \varsigma_{1j}k_{1j}+\sum_{j=1}^{\mathfrak{b}_2-1} \varsigma_{2j}k_{2j}+\sum_{j=1}^{n^1_\mathfrak{m}}m_{1j}+\sum_{j=1}^{n^2_\mathfrak{m}}m_{2j}=0,
\end{equation} 
in which case the only contributing term in the summation over $k$ in the Hilbert space comes from the case when $\sum_{j=1}^{n^1_\mathfrak{m}}m_{1j}+\sum_{j=1}^{\mathfrak{b}_1} \varsigma_{1j}k_{1j}=0$, $\sum_{j=1}^{n^2_\mathfrak{m}}m_{2j}+\sum_{j=1}^{\mathfrak{b}_2} \varsigma_{2j}k_{2j}=0$ and \eqref{neutral}, which implies in particular $k_{1\mathfrak{b}_1}=k_{2\mathfrak{b}_2}$. So we fix $k_{1\mathfrak{b}_1}=k_{2\mathfrak{b}_2}$ such that these conditions are fulfilled. Under these conditions, the forms $\nu^{i}_{{\bf z}_i,{\bf m}_i,({\bf k}_i,k_{i\mathfrak{b}_i})}$ (for $i=1,2$) satisfy 
\begin{align}\label{3mag}
&\int_{\partial_j\Sigma_i}\nu^{i}_{{\bf z}_i,{\bf m}_i,({\bf k}_i,k_{i\mathfrak{b}_i})}=  2\pi R\varsigma_{ij}k_{ij},\quad\text{ for }j=1,\dots, \mathfrak{b}_i,
\\
&\int_{\partial_{b_1}\Sigma_1}\nu^{1}_{{\bf z}_1,{\bf m}_1,({\bf k}_1,k_{1\mathfrak{b}_1})} =-\int_{\partial_{\mathfrak{b}_2}\Sigma_2}\nu^{2}_{{\bf z}_2,{\bf m}_2,({\bf k}_2,k_{2b_2})} \nonumber\\
&\int_{\pl \mc{D}_{ij}}\nu^{i}_{{\bf z}_i,{\bf m}_i,({\bf k}_i,k_{i\mathfrak{b}_i})}= 2\pi R m_{ij}\nonumber
\end{align}
where $\mc{D}_{ij}$ is a small containing $z_{ij}$ (and of course all integrals along interior cycles vanish).
The second condition (and the fact that both forms involved are of the form $ \varsigma_{i\mathfrak{b}_i}k_{i\mathfrak{b}_i}\,R\dd\theta$ over a neighborhood of the boundary $\partial_{\mathfrak{b}_i}\Sigma_i$, see Proposition \ref{harmpoles}) allows us to glue together these forms to get a closed 1-form on $\Sigma$ by the relation 
$\nu_{{\bf z},{\bf m},{\bf k}}=\nu^{1}_{{\bf z}_1,{\bf m}_1,({\bf k}_1,k_{1\mathfrak{b}_1})} \mathbf{1}_{\Sigma_1}+\nu^{2}_{{\bf z}_2,{\bf m}_2,({\bf k}_2,k_{2\mathfrak{b}_2})} \mathbf{1}_{\Sigma_2}$, 
for ${\bf k}:=({\bf k}_1,{\bf k}_2)\in\Z^{\mathfrak{b}_1-1}\times \Z^{\mathfrak{b}_2-1}$, satisfying our basic conditions. Finally we consider two defect graphs on $\Sigma_1$ and $\Sigma_2$ as in Lemma \ref{magcurvtwo},   and glue them to get a defect graph on $\Sigma$. The amplitude on $\Sigma$ then reads
\begin{align*} 
 & \caA^{0}_{\Sigma,g,{\bf z},{\bf m},\boldsymbol{\zeta}}(F_1\otimes F_2,\tilde{\boldsymbol{\varphi}}_1^{{\bf k}_1},\tilde{\boldsymbol{\varphi}}_2^{{\bf k}_2}) \\
 &=
 \sum_{{\bf k}^c\in \Z^{2\mathfrak{g}+\mathfrak{b}_1+\mathfrak{b}_2-3}} e^{-\frac{1}{4\pi}\|\nu_{{\bf z},{\bf m},{\bf k}}+\omega^{c}_{{\bf k}^c}\|_{g,0}^2} Z_{\Sigma,g}\caA^0_{\Sigma,g}(\tilde{\boldsymbol{\varphi}}_1 ,\tilde{\boldsymbol{\varphi}}_2 )\mathcal{B}_{\Sigma,g}(F_1\otimes F_2,\tilde{\boldsymbol{\varphi}}_1,\tilde{\boldsymbol{\varphi}}_2,{\bf k},{\bf k}^c) 
\end{align*} 
 with
 \begin{align} 
 \mathcal{B}_{\Sigma,g}(F_1\otimes F_2,\tilde{\boldsymbol{\varphi}}_1,\tilde{\boldsymbol{\varphi}}_2,{\bf k},{\bf k}^c):=&
 \E \Big[e^{-\frac{1}{2\pi}\langle \dd X_{g,D}+\dd P\tilde{\boldsymbol{\varphi}},\nu_{{\bf z},{\bf m},{\bf k}}+\omega^{c}_{{\bf k}^c}\rangle}F_1\otimes F_2(\phi_g)  \Big]\nonumber
\end{align}
where the Liouville field is $\phi_g= X_{g,D}+P(\tilde{\boldsymbol{\varphi}}_1,\tilde{\boldsymbol{\varphi}}_2)+ I_{x_0}^{\boldsymbol{\sigma}}(\nu_{{\bf z},{\bf m},{\bf k}})+I_{x_0}^{\boldsymbol{\xi}}(\omega^c_{{\bf k}^c})$, the expectation $\E$ is over the Dirichlet GFF $X_{g,D}$ on $\Sigma$, $P(\tilde{\boldsymbol{\varphi}}_1,\tilde{\boldsymbol{\varphi}}_2)$ stands for the harmonic extension to $\Sigma$ of the boundary fields $\tilde{\boldsymbol{\varphi}}_1,\tilde{\boldsymbol{\varphi}}_2$, which stand respectively for the  boundary conditions on the remaining (i.e. unglued) components of $\partial \Sigma_1$ and  $\partial \Sigma_2$, namely
\[\Delta_g P(\tilde{\boldsymbol{\varphi}}_1,\tilde{\boldsymbol{\varphi}}_2)=0\quad  \text{on }\Sigma ,\qquad   P(\tilde{\boldsymbol{\varphi}}_1,\tilde{\boldsymbol{\varphi}}_2)_{|\partial_j\Sigma_i}= \tilde{\varphi}_{ij}\circ \zeta_{ij}^{-1} \text{   for }j<\mathfrak{b}_i 
%\quad  \quad P(\tilde{\boldsymbol{\varphi}}_1,\tilde{\boldsymbol{\varphi}}_2)_{|\partial_j\Sigma_2}= \tilde{\varphi}_{2,j}\circ\zeta_{2,j}^{-1} \text{   for }j<b_2 
.\]
Let now $X_1:=X_{g_1,D}$ and $X_2:=X_{g_2,D}$ be two  independent Dirichlet GFF respectively on $\Sigma_1$ and $\Sigma_2$. Then we have the following decomposition in law (see Proposition \ref{decompGFF})
\begin{align*}%\label{}
X_{g,D}\stackrel{{\rm law}}=X_1+X_2+P{\bf Y}
\end{align*}
where ${\bf Y}$ is the restriction of $X_{g,D}$ to the  glued boundary component  $\mathcal{C}$ expressed in parametrized coordinates, i.e. ${\bf Y} =  X_{g,D|_{{\caC}}}\circ \zeta  $, and  $P{\bf Y}$ is its harmonic extension to $ \Sigma $ vanishing on $\partial \Sigma $, which is non empty. We stress that, since  $X_{g,D}$ is only a distribution,  making sense of ${\bf Y}$ is not completely obvious but, using the parametrization, this can be done in the same way as making sense of the restriction of the GFF to a circle: since this is a standard argument, we do not elaborate more on this point. Finally we denote by $h_\mathcal{C}$ the restriction of  the harmonic function $P(\tilde{\boldsymbol{\varphi}}_1,\tilde{\boldsymbol{\varphi}}_2)$ to $\mathcal{C}$ in parametrized coordinates
\[ h_{\mc{C}}: =P(\tilde{\boldsymbol{\varphi}}_1,\tilde{\boldsymbol{\varphi}}_2)_{|\mc{C} }\circ\zeta  .\]
Observe now the trivial fact that, on $\Sigma_i$ ($i=1,2$), the function $P(\tilde{\boldsymbol{\varphi}}_1,\tilde{\boldsymbol{\varphi}}_2)+P{\bf Y}$ is harmonic with boundary values (expressed in parametrized coordinates  on $\Sigma_i$) $ \tilde\varphi_{ij}$ on $\partial_j\Sigma_i$ for $j<\mathfrak{b}_i$ and $({\bf Y}+h_{\mathcal{C}})$ on $\mc{C}$.  Thus we get, using Lemma  \ref{magcurvtwo}
\begin{align*}
& \mathcal{B}_{\Sigma,g}(F_1\otimes F_2,\tilde{\boldsymbol{\varphi}}_1,\tilde{\boldsymbol{\varphi}}_2,{\bf k},{\bf k}^c)\\
 &=   \int \mathcal{B}_ {\Sigma_1,g_1}(F_1,\tilde{\boldsymbol{\varphi}}_1,\tilde{ \varphi} +h_{\mc{C}},{\bf k}_1 ,k_{1\mathfrak{b}_1},{\bf k}^c_1 )\mathcal{B}_ {\Sigma_2,g_2}(F_2, \tilde{\boldsymbol{\varphi}}_2, \tilde{\varphi}   +h_{\mc{C}},{\bf k}_2 ,k_{2\mathfrak{b}_2},{\bf k}^c_2 ) \P_{{\bf Y}} (\dd \tilde{ \varphi})
\end{align*}
with $\P_{{\bf Y}}$ the law of $ {\bf Y} $ and, for $i=1,2$, 
\begin{equation}
\mathcal{B}_{\Sigma_i,g_i}(F_i,\tilde{\boldsymbol{\varphi}}_i,\tilde{ \varphi},{\bf k}_i ,k_{i\mathfrak{b}_i} ,{\bf k}_i^{c}):=\E \big[F_i(\phi_i)e^{-\frac{1}{2\pi}\langle \dd X_i+\dd P( \tilde{\boldsymbol{\varphi}}_i,\tilde{ \varphi} ),\nu^{i}_{{\bf z}_i,{\bf m}_i,({\bf k}_i,k{i\mathfrak{b}_i})} +\omega^{i,c}_{{\bf k}^c_i}\rangle}  \big]
\end{equation}
where $\E$ is taken with respect to the Dirichlet GFF  $X_i$  on $\Sigma_i$, 
\[\phi_i=X_i+P( \tilde{\boldsymbol{\varphi}}_i,\tilde{ \varphi} )+I^{\boldsymbol{\sigma}_i}_{x_0} (\omega^{i,c}_{{\bf k}^c_i})
+I^{\boldsymbol{\xi}_i}_{x_0}(\nu^{i}_{{\bf z}_i,{\bf m}_i,({\bf k}_i,k_{i\mathfrak{b}_i})} )\]
(here with an abuse of notations we identify ${\bf k}^c_2$ with $({\bf k}^c_2,0)$) and $P( \tilde{\boldsymbol{\varphi}}_i,\tilde{ \varphi} )$ stands for the harmonic extension on $\Sigma_i$ of the boundary fields $  \tilde{\boldsymbol{\varphi}}_i,\tilde{ \varphi} $ respectively on   $\partial\Sigma_i\setminus \mathcal{C}$ and $\mathcal{C}$. 

Now we use Lemma \cite[Lemmas 5.3 \&  5.4]{GKRV2} to get
\[
\P_{{\bf Y}+h_{\mc{C}}}(\dd \tilde{ \varphi})
:=
\frac{Z_{\Sigma_1,g_1}Z_{\Sigma_2,g_2}}{\sqrt{2}\pi Z_{\Sigma,g}}   \frac{\caA^0_{\Sigma_1,g_1}(\tilde{\boldsymbol{\varphi}}_1 ,c+ \varphi  )\caA^0_{\Sigma_2,g_2}(\tilde{\boldsymbol{\varphi}}_2,c+ \varphi ) }{\caA^0_{\Sigma,g}(\tilde{\boldsymbol{\varphi}}_1 ,\tilde{\boldsymbol{\varphi}}_2 )}  \dd c\otimes \P_\T(\dd  {\varphi}).
\]
 Thus we get (the expectation $\E$ is taken with respect to $\varphi$)
 \begin{align*}
 & \caA^{0}_{\Sigma,g,{\bf z},{\bf m},\boldsymbol{\zeta}}(F_1\otimes F_2,\tilde{\boldsymbol{\varphi}}_1^{{\bf k}_1},\tilde{\boldsymbol{\varphi}}_2^{{\bf k}_2})\\
 =&
 \sum_{{\bf k}^c\in \Z^{2\mathfrak{g}+\mathfrak{b}-1}} e^{-\frac{1}{4\pi}\|\nu_{{\bf z},{\bf m},{\bf k}}+\omega^{c}_{{\bf k}^c}\|_{g,0}^2}Z_{\Sigma_1,g_1}Z_{\Sigma_2,g_2} \int_\R \E\Big[\caA^0_{\Sigma_1,g_1}(\tilde{\boldsymbol{\varphi}}_1 ,c+ \varphi  )\caA^0_{\Sigma_2,g_2}(\tilde{\boldsymbol{\varphi}}_2,c+ \varphi )
 \\
 &\times  \mathcal{B}_ {\Sigma_1,g_1}(F_1,\tilde{\boldsymbol{\varphi}}_1,c+ \varphi   ,{\bf k}_1 ,k_{\mathfrak{b}_1},{\bf k}^c_1 )\mathcal{B}_ {\Sigma_2,g_2}(F_2, \tilde{\boldsymbol{\varphi}}_2, c+ \varphi     ,{\bf k}_2 ,k_{\mathfrak{b}_2} ,{\bf k}^c_2)\Big]\,\dd c 
 \\
 =&
 \sum_{{\bf k}^c\in \Z^{2\mathfrak{g}+\mathfrak{b}-1}}\sum_{n\in\Z}  e^{-\frac{1}{4\pi}\|\nu_{{\bf z},{\bf m},{\bf k}}+\omega^{c}_{{\bf k}^c}\|_{g,0}^2}Z_{\Sigma_1,g_1}Z_{\Sigma_2,g_2} \int_0^{2\pi R}\E\Big[\caA^0_{\Sigma_1,g_1}(\tilde{\boldsymbol{\varphi}}_1,c+ \varphi +n2\pi R   )\caA^0_{\Sigma_2,g_2}(\tilde{\boldsymbol{\varphi}}_2,c+ \varphi +n2\pi R   )
 \\
 &\times  \mathcal{B}_ {\Sigma_1,g_1}(F_1,\tilde{\boldsymbol{\varphi}}_1,c+ \varphi +n2\pi R  ,{\bf k}_1 ,k_{1\mathfrak{b}_1},{\bf k}_1^c )\mathcal{B}_ {\Sigma_2,g_2}(F_2, \tilde{\boldsymbol{\varphi}}_2, c+ \varphi +n2\pi R   ,{\bf k}_2 ,k_{2\mathfrak{b}_2} ,{\bf k}^c_2)\Big]\,\dd c .
 \end{align*}
The last relation was obtained using the Chasles relation on the $c$-integral. Next, we introduce the harmonic functions $P^i_{n}$, for $i=1,2$ and $n\in\Z$, that are harmonic on $\Sigma_i$ with boundary values $n2\pi R$ on $\mc{C}$ and $0$ on the other boundary components of $\Sigma_i$. Then we notice that, writing $\tilde{\varphi}$ for $c+\varphi$,
\begin{equation}\label{A^0phi+nR}
\begin{split}
\caA^0_{\Sigma_1,g_1}(\tilde{\boldsymbol{\varphi}}_1,\tilde{\varphi}+n2\pi R   ) =&
e^{-\frac{1}{2}\cjg (\mathbf{D}_{\Sigma_1}-\mathbf{D})(\tilde{\boldsymbol{\varphi}}_1,\tilde{\varphi} +n2\pi R   ),(   \tilde{\boldsymbol{\varphi}}_1,\tilde{\varphi}+n2\pi R)\cjd}\\
=& \caA^0_{\Sigma_1,g_1}(\tilde{\boldsymbol{\varphi}}_1,\tilde{\varphi}    ) e^{-\frac{1}{2\pi}\langle \dd P( \tilde{\boldsymbol{\varphi}}_1,\tilde{ \varphi} ),\dd P^1_n\rangle} e^{-\frac{1}{4\pi}\|\dd P^1_n\|^2_2}
\end{split}
\end{equation}
and similarly for $\caA^0_{\Sigma_2,g_2} (\tilde{\boldsymbol{\varphi}}_2 ,\tilde{\varphi}+n2\pi R  )$. 
Let us write $\omega^{1,c}_{{\bf k}^{1,c}}=\omega^{1,h}_{{\bf k}^{1,c}}+\dd u$ for some $u\in C^\infty(\Sigma_1)$ with $u|_{\pl \Sigma_1}=0$ and $\dd^*\omega^{1,h}_{{\bf k}^{1,c}}=0$ satisfies the relative condition $\iota_{\pl \Sigma_1}^*\omega^{1,h}_{{\bf k}^{1,c}}=0$, then by Stokes formula 
\[ \langle  \omega^{1,c}_{{\bf k}^{1,c}},\dd P_n^1\rangle_{\Sigma_1}=\langle  \dd^*\omega^{1,h}_{{\bf k}^{1,c}}, P_n^1\rangle_{\Sigma_1}+\cjg   u, \dd^*\dd P_n^1\rangle_{\Sigma_1}=0.\]
Similarly we use \eqref{nuh-nu} to write 
$ \nu^{1}_{{\bf z}_1,{\bf m}_1,({\bf k}^1,k^1_{\mathfrak{b}_1})}= \nu^{1,h}_{{\bf z}_1,{\bf m}_1,({\bf k}^1,k^1_{\mathfrak{b}_1})}+\dd u'$ for some $u'\in C^\infty(\Sigma_1)$ with $u'|_{\pl \Sigma}=0$ and $\dd^*\nu^{1,h}_{{\bf z}_1,{\bf m}_1,({\bf k}^1,k^1_{\mathfrak{b}_1})}=0$: then by Stokes and the fact that $\nu^{1}_{{\bf z}_1,{\bf m}_1,({\bf k}^1,k^1_{\mathfrak{b}_1})}$ satisfies the absolute boundary condition
 \[\langle \nu^{1}_{{\bf z}_1,{\bf m}_1,({\bf k}^1,k^1_{\mathfrak{b}_1})} ,d P_n^1\rangle_{\Sigma_1}=\int_{\partial \Sigma_1}P_n^1 *  \nu^{1,h}_{{\bf z}_1,{\bf m}_1,({\bf k}^1,k^1_{\mathfrak{b}_1})}=\int_{\partial \Sigma_1}P_n^1  \nu^{1}_{{\bf z}_1,{\bf m}_1,({\bf k}^1,k^1_{\mathfrak{b}_1})}(\nu)d\ell=0.
\]
These two identities above imply on $\Sigma_i$ with $g_i:=g|_{\Sigma_i}$
\begin{equation}\label{doublesquare}
 \|\nu^{i}_{{\bf z}_i,{\bf m}_i,({\bf k}^i,k^i_{b_i})}+\omega^{i,c}_{{\bf k}^{i,c}}\|_{g_i,0}^2+ \|\dd P^i_n\|^2_{2} = \|\nu^{i}_{{\bf z}_i,{\bf m}_i,({\bf k}^1,k^i_{\mathfrak{b}_i})}+\omega^{i,c}_{{\bf k}^{i,c}}+\dd P_n^i\|_{g_i,0}^2 .
 \end{equation} 
We deduce, by combining with \eqref{A^0phi+nR}, that
 \begin{align*}
 & \caA^{0}_{\Sigma,g, {\bf z},\boldsymbol{\alpha}{\bf m},\boldsymbol{\zeta}}(F_1\otimes F_2,\tilde{\boldsymbol{\varphi}}_1^{{\bf k}_1},\tilde{\boldsymbol{\varphi}}_2^{{\bf k}_2}) \\
  =&
 \sum_{{\bf k}^c\in \Z^{2\mathfrak{g}+\mathfrak{b-}1}}\sum_{n\in\Z}  e^{-\frac{1}{4\pi}\|\nu^{1}_{{\bf z}_1,{\bf m}_1,({\bf k}_1,k_{1\mathfrak{b}_1})}+\omega^{1,c}_{{\bf k}^c_1}+\dd P_n^1\|_{g_1,0}^2} 
e^{-\frac{1}{4\pi}\|\nu^{2}_{{\bf z}_2,{\bf m}_2,({\bf k}_2,k_{2\mathfrak{b}_2})}+\omega^{2,c}_{({\bf k}^c_2,0)}+\dd P_n^2\|_{g_2,0}^2} 
 \\
& \int_0^{2\pi R}\E\Big[\caA^0_{\Sigma_1,g_1}( \tilde{\boldsymbol{\varphi}}_1,  \tilde{\varphi}    )\caA^0_{\Sigma_2,g_2}( \tilde{\boldsymbol{\varphi}}_2,\tilde{\varphi} )
   \hat{\mc{B}}_ {\Sigma_1,g_1}(F_1,\tilde{\boldsymbol{\varphi}}_1, \tilde{ \varphi}   ,{\bf k}_1 ,k_{1\mathfrak{b}_1},{\bf k}^c_1,n )\hat{\mc{B}}_ {\Sigma_2,g_2}(F_2, \tilde{\boldsymbol{\varphi}}_2, \tilde{ \varphi}     ,{\bf k}_2 ,k_{2\mathfrak{b}_2} ,({\bf k}^c_2,0),n)\Big]\,\dd c 
 \end{align*}
  with
\[
\hat{\mathcal{B}}_{\Sigma_i,g_i}(F_i,\tilde{\boldsymbol{\varphi}}_i,\tilde{ \varphi},{\bf k}_i ,k_{i\mathfrak{b}_i} ,{\bf k}^c_i,n):= 
\E \Big[F_i(\phi_i+P^i_n)e^{-\frac{1}{2\pi}\langle dX_i+dP( \tilde{\boldsymbol{\varphi}}_i,\tilde{ \varphi} ),\nu^{i}_{{\bf z}_i,{\bf m}_i,({\bf k}_i,k_{i\mathfrak{b}_i})}+\omega^{i,c}_{{\bf k}^c_i}+\dd P^i_n\rangle}   \Big].
\]
 
Now the point is to see  that the term $\dd P^i_n$ encodes part of the relative cohomology on $\Sigma_i$. For this, let us first introduce the notation ${\bf n}^i$ for the vector $(0,\dots,0,n)\in \Z^{2\mathfrak{g}_i+\mathfrak{b}_i-1}$.  Since the 1-form $\dd P^i_n$ is exact and since it takes the value $0$ on $\partial_j\Sigma_i$ and $n$ on  $\partial_j\Sigma_{\mathfrak{b}_i}$,  we have 
$$\int_{a^i_j}\dd P^i_n=0,\qquad \int_{b^i_j}\dd P^i_n=0,\qquad \int_{d^i_j}\dd P^i_n=0 \text{ for }j=1,\dots,\mathfrak{b}_i-1,\qquad \int_{d^i_{\mathfrak{b}_i}}\dd P^i_n=n2\pi R.$$
The 1-form $\dd P^i_n$ has therefore the same cycles/arcs as the 1-form $  \omega^{i,c}_{ {\bf n}^i} $.
Thus,  we have $ \dd P^i_n=\omega^{i,c}_{ {\bf n}^i}+\dd f^i_{\bf n}$, for some smooth function on $\Sigma_i$ vanishing on the boundary $\partial\Sigma_i$. We can absorb the term $\dd f_{\bf n}^i$ by means of the Girsanov transform. More precisely, on $\Sigma_i$ ($i=1,2$), we apply the Girsanov transform to the term $e^{-\frac{1}{2\pi}\langle dX_i,\dd f^i_{\bf n}\rangle-\frac{1}{4\pi}\|\dd f^i_{\bf n}\|_{2}^2}  $, which has the effect of shifting the law of the GFF as $X_{g,D}\to X_{g,D}- f^i_{\bf n}$, to get that (note that $\langle dP( \tilde{\boldsymbol{\varphi}}_i,\tilde{ \varphi} ),\dd f^i_{\bf n}\rangle=0$ on $\Sigma_i$)
\begin{multline*}
e^{-\frac{1}{4\pi}\|\nu^{i}_{{\bf z}_i,{\bf m}_i,({\bf k}_i,k_{i\mathfrak{b}_i})}+\omega^{i,c}_{{\bf k}_i^{c}}+\dd P_n^i\|_{g,0,\Sigma_i}^2}\hat{\mathcal{B}}_{\Sigma_i,g_i}(F_i,\tilde{\boldsymbol{\varphi}}_i,\tilde{ \varphi},{\bf k}_i ,k_{i\mathfrak{b}_i} ,{\bf k}^{c}_i,n )
\\
=e^{-\frac{1}{4\pi}\|\nu^{i}_{{\bf z}_i,{\bf m}_i,({\bf k}_i,k_{i\mathfrak{b}_i})}+\omega^{i,c}_{{\bf k}_i^{c}}+\omega^{i,c}_{ {\bf n}^i}\|_{g_i,0}^2}\hat{\mathcal{B}}'_{\Sigma_i,g_i}(F_i,\tilde{\boldsymbol{\varphi}}_i,\tilde{ \varphi},{\bf k}_i ,k_{i\mathfrak{b}_i} ,{\bf k}^{c}_i ,n)
\end{multline*}
where
\begin{multline*}
\hat{\mathcal{B}}'_{\Sigma_i,g_i}(F_i,\tilde{\boldsymbol{\varphi}}_i,\tilde{ \varphi},{\bf k}^i ,k_{\mathfrak{b}_i}^i ,{\bf k}^{i,c},n):= 
\E \Big[F_i(\phi_i+I^i_{x_0}(\omega^{i,c}_{ {\bf n}^i}))e^{-\frac{1}{2\pi}\langle dX_i+dP( \tilde{\boldsymbol{\varphi}}_i,\tilde{ \varphi} ),\nu^{i}_{{\bf z}_i,{\bf m}_i,({\bf k}_i,k_{i\mathfrak{b}_i})}+\omega^{i,c}_{{\bf k}_i^{c}}+\omega^{i,c}_{ {\bf n}^i}\rangle}   \Big].
\end{multline*}
In particular, note  the relation $\omega^{i,c}_{{\bf k}^{i,c}}+\omega^{i,c}_{ {\bf n}^i} =\omega^{i,c}_{{\bf k}^{i,c}+ {\bf n}^i}$. Making next the change of variables $k^{i,c}_{2\mathfrak{g}_i+\mathfrak{b}_i-1}+n\to k^{i,c}_{2\mathfrak{g}_i+\mathfrak{b}_i-1}$ in the summation over ${\bf k}^c$, we end up with the gluing statement we claimed.

\medskip
$\bullet$ Now we have to consider the case when   $\Sigma_2$ has only $\mathfrak{b}_2=1$ boundary components, in which case we get a basis $\boldsymbol{\sigma}$ of the relative homology by gathering all the curves $a_{ij},b_{ij}$ (for $j=1,\dots,\mathfrak{g}_i$), $d_{1j}$ (for $j=1,\dots,\mathfrak{b}_1-2$). Then $\omega^{1,c}_1,\dots, \omega^{1,c}_{2{\mathfrak{g}}_1+\mathfrak{b}_1-2},\omega^{2,c}_1,\dots, \omega^{2,c}_{2{\mathfrak{g}}_2} $ is a basis of $H^1(\Sigma,\partial\Sigma)$ made up of closed forms, dual to $\sigma$ and compactly supported. For ${\bf k}^c:=({\bf k}_1^c,{\bf k}_2^c)\in\Z^{2\mathfrak{g}_1+\mathfrak{b}_1-2}\times \Z^{2\mathfrak{g}_2} $, we set $\omega^c_{\bf k^c}:=\omega^{1,c}_{({\bf k}_1^{c},0)} +\omega^{2,c}_{{\bf k}_2^c} $.

Regarding  the absolute cohomology and magnetic 1-forms, the situation is somewhat simpler. The Dirac masses in the definition of amplitudes make it clear that the non trivial case is (in other cases, both hands of the gluing statement are $0$)
\begin{equation}\label{neutral2}
\sum_{j=1}^{\mathfrak{b}_1-1} \varsigma_{1j}k_{1j} +\sum_{j=1}^{n^1_\mathfrak{m}}m_{1j}+\sum_{j=1}^{n^2_\mathfrak{m}}m_{2j}=0,
\end{equation}
together with  $\sum_{j=1}^{n^1_\mathfrak{m}}m_{1j}+\sum_{j=1}^{\mathfrak{b}_1} \varsigma_{1j}k_{1j}=0$, $\sum_{j=1}^{n^2_\mathfrak{m}}m_{2j}+\varsigma_{21}k_{21}=0$, which implies again $k_{1\mathfrak{b}_1}=k_{21}$. So we fix $k_{1\mathfrak{b}_1}=k_{21}$ such that these conditions are fulfilled. Under these conditions, the forms $\nu^{1}_{{\bf z}_1,{\bf m}_1,({\bf k}^1,k_{1\mathfrak{b}_1})}$ and $\nu^{2}_{{\bf z}_2,{\bf m}_2, k_{21}}$  satisfy still \eqref{3mag} (the only difference is that, now, $\mathfrak{b}_2=1$). The magnetic forms on $\Sigma_1$ and $\Sigma_2$ glue to form a magnetic 1-form on $\Sigma$  by the relation $\nu_{{\bf z},{\bf m},{\bf k}}=\nu^{1}_{{\bf z}_1,{\bf m}_1,({\bf k}_1,k_{1\mathfrak{b}_1})} \mathbf{1}_{\Sigma_1}+\nu^{2}_{{\bf z}_2,{\bf m}_2, k_{21}} \mathbf{1}_{\Sigma_2}$, for ${\bf k}:={\bf k}_1\in\Z^{\mathfrak{b}_1-1}$. Finally we consider two defect graphs on $\Sigma_1$ and $\Sigma_2$ as in Lemma \ref{magcurvtwo},   and glue them to get a defect graph on $\Sigma$.

We can then follow the proof of the case $\mathfrak{b}_2\geq 2$ with the difference that, on $\Sigma_2$, the harmonic extension $P(\tilde{\varphi}+n2\pi R)$ is now trivial in the sense that it is equal to   $P\tilde{\varphi}+n2\pi R$. Furthermore, the free field amplitude on $\Sigma_2$ is now
 \begin{align*}
\caA^0_{\Sigma_2,g_2}(\tilde{\varphi}+n2\pi R   ) =&
e^{-\frac{1}{2}\cjg (\mathbf{D}_{\Sigma_1}-\mathbf{D})(\tilde{\varphi} +n2\pi R   ),\tilde{\varphi}+n2\pi R \cjd}=
\caA^0_{\Sigma_2,g_2}(\tilde{\varphi}   )  
\end{align*}
so that no change in the relative cohomology on $\Sigma_2$ is involved (note that adding $n2\pi R$ to $\phi_2$ in $\hat{\mc{B}}_2$ does not change $\hat{\mc{B}}_2$ by periodicity). We still have a change of relative cohomology   on $\Sigma_1$ (i.e. again with an extra $\dd P_n$) and it produces the summation over the 1-form that was absent on $\Sigma$, i.e. $\omega^{1,c}_{\mathfrak{b}_1-1}$. We obtain this way the gluing statement.
\end{proof}

Now we focus on the case of self-gluing for our reduced form amplitudes. The setup is the same as that drawn just before Proposition \ref{selfglueampli}. We claim
\begin{proposition}\label{selfglueampli0}
Let  $F$ be periodic positive or periodic integrable.
 \[%\label{}
\caA^{0}_{\Sigma^{\#},g, {\bf z},{\bf m},\boldsymbol{\zeta}_\#}
(F,\tilde{\boldsymbol{\varphi}}_\#^{{\bf k}})=C \int \caA^{0}_{\Sigma,g,{\bf z}, {\bf m},\boldsymbol{\zeta}}(F,  ,\tilde{\boldsymbol{\varphi}}_\#^{{\bf k}},\tilde\varphi^k , \tilde\varphi^k) \dd\mu_0(\tilde{\varphi}^k).
\]
where $C= \frac{1}{\sqrt{2} \pi}$ if $\partial\Sigma \not =\emptyset$ and $C= \sqrt{2}  $ if $\partial\Sigma =\emptyset$.
 \end{proposition}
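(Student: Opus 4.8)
The plan is to mirror the proof of Proposition \ref{glueampli0} in the self-gluing setting, replacing the two-surface gluing lemmas by their self-gluing counterparts. First I would fix on $\Sigma$ a canonical geometric basis $\boldsymbol{\sigma}=S\cup D$ of $\mc{H}_1(\Sigma,\pl\Sigma)$ as in Lemma \ref{baseselfglue}, with the arc $d_1$ chosen so that after identifying $\pl_{\mathfrak{b}-1}\Sigma\sim -\pl_{\mathfrak{b}}\Sigma$ its endpoints meet and $d_1^\#$ becomes a closed curve; together with the boundary cycle $c_1^\#$ this supplies the two new homology directions on $\Sigma^\#$ produced by the self-gluing. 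Using the compactly supported dual forms of Lemma \ref{compactsupp} and extending them by $0$, I would relate a relative cohomology basis on $\Sigma^\#$ to the basis $(\omega^c_j)_j$ on $\Sigma$, exactly as in Lemma \ref{baseselfglue}(3). The magnetic $1$-form is split as in Lemma \ref{magcurvtwoself}, namely $\nu_{{\bf z},{\bf m},{\bf k}}=\nu_{{\bf z},{\bf m},({\bf k}_-,0,0)}+\nu_{{\bf z},{\bf 0},({\bf 0},k_{\mathfrak{b}},k_{\mathfrak{b}})}$ (only the $1$-form decomposition is relevant here, not the curvature identity, since $Q=0$ for these reduced amplitudes); the constraint $k_{\mathfrak{b}-1}=k_{\mathfrak{b}}=:k$ imposed by the two $\delta_0$-masses is what forces the two copies of $\mc{C}$ to carry the same boundary field in the end.

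The core step is the Markov decomposition. Since $\mc{C}$ is an interior analytic curve of $\Sigma^\#$ whose cutting returns $\Sigma$, Proposition \ref{decompGFF} gives $X_{g,D}\stackrel{\rm law}{=}X_{\Sigma,D}+P{\bf Y}$, where $X_{\Sigma,D}$ is the Dirichlet GFF on the cut surface $\Sigma$ (Dirichlet on all of $\pl\Sigma$, including the two copies of $\mc{C}$), ${\bf Y}=X_{g,D}|_{\mc{C}}\circ\zeta$ its restriction, and $P{\bf Y}$ the harmonic extension vanishing on $\pl\Sigma^\#$. Plugging this in and using the $2\pi R$-periodicity of $F$, I would invoke the restriction-law formula of \cite[Lemmas 5.3 \& 5.4]{GKRV2} to write the law of ${\bf Y}+h_{\mc{C}}$ in terms of the free-field amplitudes $\caA^0_{\Sigma,g}$ and the determinants $Z_{\Sigma,g},Z_{\Sigma^\#,g}$; crucially, in the self-gluing situation the \emph{same} boundary field $\tilde\varphi^k=c+\varphi$ is fed into both copies of $\mc{C}$, which is precisely the source of the two identical arguments $\tilde\varphi^k,\tilde\varphi^k$ on the right-hand side. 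In parallel, the norm $\|\nu_{{\bf z},{\bf m},{\bf k}}+\omega^c_{{\bf k}^c}\|^2_{g,0}$ on $\Sigma^\#$ splits, via the Stokes computation of \eqref{doublesquare}, into the corresponding norm on $\Sigma$ plus the Dirichlet energy $\|\dd P_n\|_2^2$ of the winding correction introduced below.

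The remaining point is the reconstruction of the new cohomology direction. As in the $\pl\Sigma\neq\emptyset$ case of Proposition \ref{glueampli0}, I would split the $c$-integral over $\R$ into $\int_0^{2\pi R}\sum_{n\in\Z}$ by the Chasles relation, introduce the harmonic function $P_n$ on $\Sigma$ equal to $n2\pi R$ on $\pl_{\mathfrak{b}-1}\Sigma$ and $0$ on the remaining boundary, and note that $\dd P_n$ has the same periods as $\omega^c_{{\bf n}}$, the form dual to the new cycle $d_1^\#$. Writing $\dd P_n=\omega^c_{{\bf n}}+\dd f_{\bf n}$ with $f_{\bf n}|_{\pl\Sigma}=0$ and absorbing the exact part $\dd f_{\bf n}$ by a Girsanov shift $X_{\Sigma,D}\to X_{\Sigma,D}-f_{\bf n}$ turns the sum over $n$ into an extra summation over the cohomology generator dual to $d_1^\#$. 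After the corresponding shift of summation indices, the sum over ${\bf k}^c$ on $\Sigma$ together with the sum over $n$ reproduces exactly the summation over the full relative cohomology of $\Sigma^\#$ dual to \eqref{basisH_1rel'}, and the determinant prefactors collapse to the single constant $C$. I expect the main obstacle to be precisely this bookkeeping: verifying that the pair (new cycle $d_1^\#$, recentering integer $n$) matches the added cohomology generator of $\Sigma^\#$, and that the two $\delta_0$-masses with the constraint $k_{\mathfrak{b}-1}=k_{\mathfrak{b}}$ collapse the double winding to the single $\mu_\Z$-integration over $k$ contained in $\dd\mu_0(\tilde\varphi^k)$, leaving no spurious factor in $C$.
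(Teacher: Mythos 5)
Your strategy is the paper's own: Markov decomposition of the Dirichlet GFF across $\mc{C}$, the restriction law of \cite[Lemmas 5.3 \& 5.4]{GKRV2}, Chasles on the $c$-integral, a harmonic correction $P_n$ absorbed via Girsanov, and an index shift reassembling the cohomology sums; your handling of the winding (the two $\delta_0$'s forcing $k_{\mathfrak{b}-1}=k_{\mathfrak{b}}=:k$, with $\nu_{{\bf z},{\bf 0},({\bf 0},k)}$ supplying the form dual to the new cycle $\mc{C}$ and $k$ matching the $\mu_{\Z}$-part of $\dd\mu_0$) is also correct. But your central reconstruction step is wrong as stated, in two linked places. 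First, $P_n$ must be the harmonic function equal to $n2\pi R$ on \emph{both} boundary circles of $\Sigma$ produced by cutting along $\mc{C}$, and $0$ elsewhere: after the Chasles shift the same field $\tilde\varphi+2\pi nR$ is fed into \emph{both} copies (as you yourself note), so $P(\tilde{\boldsymbol{\varphi}},\tilde\varphi+2\pi nR,\tilde\varphi+2\pi nR)=P(\tilde{\boldsymbol{\varphi}},\tilde\varphi,\tilde\varphi)+P_n$ holds only with that choice. With your $P_n$ (value $n2\pi R$ on $\pl_{\mathfrak{b}-1}\Sigma$ alone) the factorization of $\caA^0_{\Sigma,g}$ and the norm-splitting analogous to \eqref{doublesquare} both fail, and $\dd P_n$ acquires a spurious period of magnitude $2\pi nR$ along the arc joining the two glued circles.

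Second, with the correct $P_n$ the periods of $\dd P_n$ vanish along the arc joining the two glued circles (both endpoints sit at the value $n2\pi R$) and equal $2\pi nR$ along the arc entering the glued pair from outside --- in the paper's labeling $d_{\mathfrak{b}-2}$, which is precisely the arc \emph{discarded} when forming the basis of $\mc{H}_1(\Sigma^{\#},\pl\Sigma^{\#})$. Hence the $n$-summation restores the generator of $\mc{H}^1_R(\Sigma,\pl\Sigma)$ that has no counterpart on $\Sigma^{\#}$, not the form ``dual to the new cycle $d_1^\#$'': that form ($\omega_{d_1^\#}$ in Lemma \ref{baseselfglue}) is retained in the glued basis and needs no reconstruction. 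Your closing bookkeeping, pairing $(d_1^{\#},n)$, is therefore the wrong matching; the correct dictionary is $k\leftrightarrow\mc{C}$ and $n\leftrightarrow d_{\mathfrak{b}-2}$. Finally, your decomposition $X_{g,D}\stackrel{\rm law}{=}X_{\Sigma,D}+P{\bf Y}$ presumes $\pl\Sigma^{\#}\neq\emptyset$; in the closed case $\mathfrak{b}=2$ the paper instead uses the restriction law with zero-mode recentering (\cite[eq. (5.14)]{GKRV2}), and there the whole mechanism degenerates: the harmonic function worth $n2\pi R$ on both boundary circles is the constant $n2\pi R$, killed by periodicity of $F$, which is also why $C=\sqrt2$ there instead of $\frac{1}{\sqrt2\,\pi}$. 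Your proposal does not treat this case.
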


\begin{proof}
We denote by $\mc{C}$ the glued curve on $\Sigma$. Again, we split the proof in two parts depending whether $\Sigma^{\#}$ has a non empty boundary or not.

\medskip
{\bf 1) Assume first $\partial\Sigma^{\#}  \not= \emptyset$}. Let us call $\boldsymbol{\sigma}$ a basis of the relative homology on $\Sigma$. We choose this basis   by taking 
$a_1,\dots,a_{\mathfrak{g}}$, $b_1,\dots,b_{\mathfrak{g}}, d_1,\dots,d_{\mathfrak{b}-1}$ where $a_j,b_j$ are chosen  inside $\Sigma^\circ$ such that the intersection numbers are given as in \eqref{int_numbers}, and the $d_j$ are non-intersecting simple curves (not closed) with the base point on 
$\pl_{j}\Sigma$ and the  endpoint on $\pl_{j+1}\Sigma$, and each $d_j$ is not intersecting any other curve of the basis (see Figure \ref{fig:selfopen}). Let  $\omega^{c}_1,\dots, \omega^{c}_{2{\mathfrak{g}}+\mathfrak{b}-1}$ be a basis of $\mc{H}^1_R(\Sigma,\pl\Sigma)$ dual to $\boldsymbol{\sigma}$ made of closed forms that are compactly supported inside $\Sigma^\circ$ (hence can be viewed as compactly supported closed 1-forms on $ \Sigma^{\#}  $ too).  
 We stress that the last boundary-to-boundary arc $d_{\mathfrak{b}-1}$ joining $ \partial_{\mathfrak{b}-1}\Sigma$ to $\partial_{\mathfrak{b}}\Sigma$ will form a cycle in the glued surface, and therefore will play a special role in what follows. 
For ${\bf k}^c  \in\Z^{2\mathfrak{g} +\mathfrak{b}-1} $, we set $\omega^c_{\bf k^c}:=\sum_{j=1}^{2\mathfrak{g}+\mathfrak{b}-1}k^c_j\omega^{c}_{j} $ as usual.

\begin{figure}[h]
     \begin{subfigure}[b]{0.45\textwidth}
         \includegraphics[width=1\textwidth]{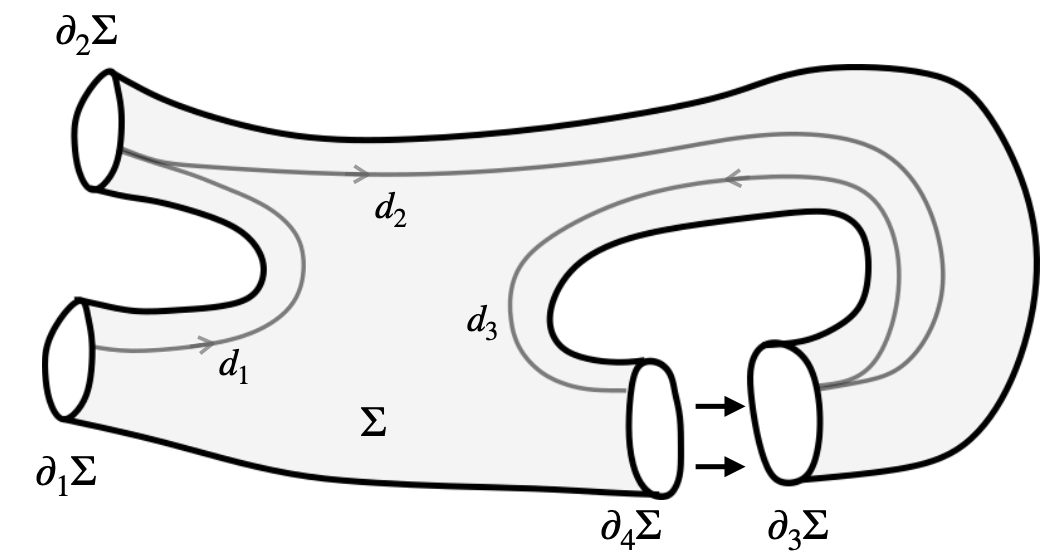}
         \caption{Unglued surface $\Sigma$. In gray, the relative homology on $\Sigma$. \vspace{1.75cm}}
         %\label{fig:y equals x}
     \end{subfigure}
     \hfill
     \begin{subfigure}[b]{0.45\textwidth}
         \includegraphics[width=1\textwidth]{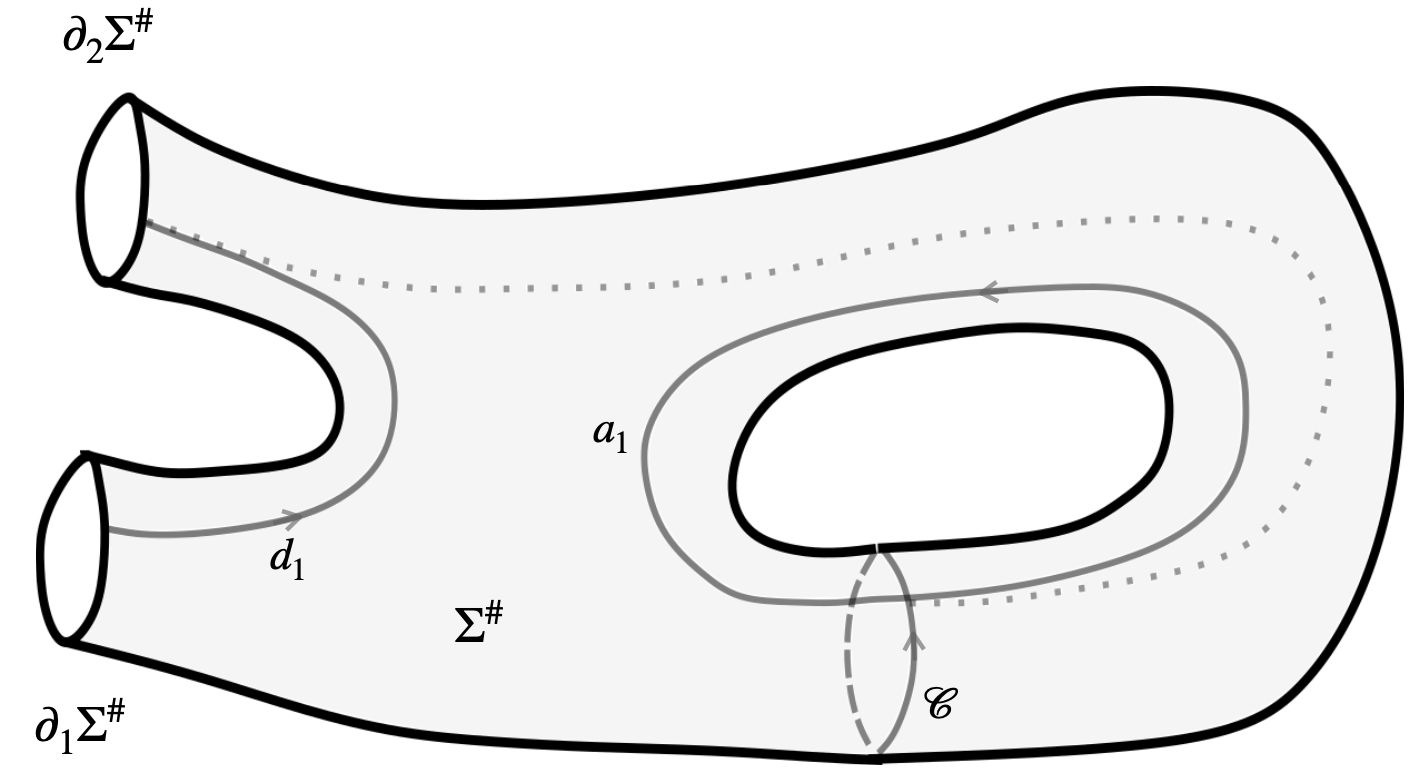}
         \caption{Glued surface. The curve $d_1$ remains in $\mc{H}_1(\Sigma^{\#},\pl \Sigma^{\#})$ after gluing. 
         The curve $d_3$ becomes a closed curve denoted $a_1$ in $\mc{H}_1(\Sigma^{\#},\pl \Sigma^{\#})$. The glued circle $\mc{C}$ is a cycle on $\Sigma^\#$, with intersection number $1$ with $a_1$. The dotted arc $d_2\in \mc{H}_1(\Sigma\pl\Sigma)$ has been  removed to construct $\mc{H}_1(\Sigma^{\#},\pl \Sigma^{\#})$.}
        % \label{fig:three sin x}
     \end{subfigure}
        \caption{Case $\pl\Sigma\not=\emptyset$. }
        \label{fig:selfopen}
\end{figure}

%
%\begin{figure}[h] 
%\centering
%\subfloat[Unglued surface $\Sigma$. In red, the relative homology on $\Sigma$.]{\includegraphics[width=.6\textwidth]{self2.png} 
%}
%\,\hspace{1cm}
%\subfloat[Glued surface. In red, the curves that remain boundary-to-boundary arcs after gluing. The cyan curve stands for the arc on $\Sigma$ ( $d_4$ in the figure (A)) that has become a cycle on the glued surface. The glued circle $\mc{C}$ is a cycle on the glued surface. In blue, the arc $d_3$ in the relative homology on $\Sigma$ that was   removed to construct the homology on the glued surface. ]{\includegraphics[width=.6\textwidth]{self1.png} }
%\caption{Case $\pl\Sigma\not=\emptyset$. }
%\label{fig:selfopen}
%\end{figure}

Now we focus on the   absolute cohomology and magnetic 1-forms. We discard first trivial cases: indeed, if ${\bf k}=(k_1,\dots,k_{\mathfrak{b}-2})$, notice that in case $\sum_{j=1}^{n_{\mathfrak{m}}}m_j+\sum_{j=1}^{\mathfrak{b}-2} \varsigma_jk_j \not=0$, both sides in the gluing statement vanish (because of Dirac masses) so that equality obviously  holds. So the case of interest is 
\begin{equation}\label{neutralbis}
\sum_{j=1}^{n_{\mathfrak{m}}}m_j+\sum_{j=1}^{\mathfrak{b}-2} \varsigma_jk_j =0,
\end{equation} in which case the winding numbers around $\partial_{\mathfrak{b}-1}\Sigma$ and $\partial_{\mathfrak{b}}\Sigma$ must satisfy $k_{\mathfrak{b}-1}=k_{\mathfrak{b}}$. We will simply write $k$ for $k_{\mathfrak{b}-1}=k_{\mathfrak{b}}$. Under these conditions, we consider the 1-form $\nu_{{\bf z},{\bf m},({\bf k},k,k)}$ on $\Sigma$ given by Lemma \ref{harmpoles} satisfying 
\[
\int_{\partial_j\Sigma}\nu_{{\bf z},{\bf m},({\bf k},k,k)}= 2\pi R\varsigma_jk_j,\quad\text{ for }j=1,\dots, \mathfrak{b}-2,\quad \text{and }\int_{\partial_{\mathfrak{b}}\Sigma}\nu_{{\bf z},{\bf m},({\bf k},k,k)}=-\int_{\partial_{\mathfrak{b}-1}\Sigma}\nu_{{\bf z},{\bf m},({\bf k},k,k)}= 2\pi R k.
\]
The last condition (and the fact that both forms involved are of the form $\pm k\,R\dd\theta$ over a neighborhood of the boundaries $\partial_{\mathfrak{b}-1}\Sigma$ and $\partial_{b}\Sigma$) allows us to self-glue  (see Lemma  \ref{baseselfglue}) these forms 
by identifying $\partial_{\mathfrak{b}-1}\Sigma$ and $\partial_{\mathfrak{b}}\Sigma$
to get a closed 1-form on $\Sigma^{\#}$  denoted by $\nu_{{\bf z},{\bf m},({\bf k},k)}$, satisfying our basic conditions for the definition of amplitude on the glued surface. Note that the curve $\mc{C}$ will become a cycle on the glued surface, and therefore the form $\nu_{{\bf z},{\bf m},({\bf k},k)}$, which has a winding $k$ along this curve, will produce the missing part in the relative cohomology on the glued surface.  For this we will split it as   $\nu_{{\bf z},{\bf m},({\bf k},k)}=\nu_{{\bf z},{\bf m},({\bf k},0)}+ \nu_{{\bf z},{\bf 0},({\bf 0},k)}$, the last term producing 1-forms with winding $k$ along $\mc{C}$ and vanishing along any other cycle.

As outlined above, we get now a basis $\boldsymbol{\sigma}_\#$ of the relative homology on $ \Sigma^{\#}$, which has $\mathfrak{b}-2$ boundary components,    by taking the cycles $a_1,\dots,a_{\mathfrak{g}}$, $b_1,\dots,b_{\mathfrak{g}}$ together with the cycles $ d_{\mathfrak{b}-1},\mc{C}$, and the boundary-to-boundary arcs $ d_1,\dots,d_{\mathfrak{b}-3}$. Note that the arc $d_{\mathfrak{b}-2}$ has been discarded. A basis of the relative cohomology, dual to $\boldsymbol{\sigma}_\#$, is now $\omega^{c}_1,\dots,  \omega^{c}_{2{\mathfrak{g}}+\mathfrak{b}-3},\omega^{c}_{2{\mathfrak{g}}+\mathfrak{b}-1},\nu_{{\bf z},{\bf 0},({\bf 0},k)}$ ($\omega^c_{2{\mathfrak{g}}+\mathfrak{b}-2}$ has been discarded). Absolute cohomology and magnetic 1-forms  are  then encoded in the forms $\nu_{{\bf z},{\bf m},({\bf k},0)}$ for ${\bf k}\in\Z^{\mathfrak{b}-2}$ with defect graph given by   Lemma \ref{magcurvtwoself}.
Since we have removed the $1$-form $\omega^c_{2{\mathfrak{g}}+\mathfrak{b}-2}$, we need to consider 1-forms $\omega^c_{{\bf k}^c}$ where the $(2\mathfrak{g}+\mathfrak{b}-2)$-th component of ${\bf k}^c$ has been set to $0$. We will thus need to consider  the vector $(k^c_1,\dots,k^c_{2\mathfrak{g}+\mathfrak{b}-3},0,k^c_{2\mathfrak{g}+\mathfrak{b}-1}) \in\Z^{2\mathfrak{g} +\mathfrak{b}-1} $, which will be identified with ${\bf k}^c_- \in\Z^{2\mathfrak{g} +\mathfrak{b}-2}$.

The definition of the amplitude on $\partial\Sigma^{\#}$ then yields
\begin{align*} 
&\caA^{0}_{\Sigma^{\#},g,{\bf z}, {\bf m},\boldsymbol{\zeta}_\# }(F,\tilde{\boldsymbol{\varphi}}^{{\bf k}}):=\\
    &
 \sum_{({\bf k}^c_-,k)\in \Z^{2\mathfrak{g}+\mathfrak{b}-2}\times\Z} e^{-\frac{1}{4\pi}\|\nu_{{\bf z},{\bf m},({\bf k'},0)}+\omega^{c}_{{\bf k}^c_-}+\nu_{{\bf z},{\bf 0},({\bf 0},k)}\|_{g,0}^2} Z_{\Sigma^{\#},g}\caA^0_{\Sigma^{\#},g}(\tilde{\boldsymbol{\varphi}})
 \mathcal{B}_{\Sigma^{\#},g}(F ,\tilde{\boldsymbol{\varphi}},{\bf k},{\bf k}^c_-,k) 
\end{align*} 
 with
 \begin{align} 
 \mathcal{B}_{\Sigma^{\#},g}(F ,\tilde{\boldsymbol{\varphi}},{\bf k},{\bf k}^c_-,k) :=&
 \E \Big[e^{-\frac{1}{2\pi}\langle \dd X_{g,D}+\dd P\tilde{\boldsymbol{\varphi}},\nu_{{\bf z},{\bf m},({\bf k},0)}+\omega^{c}_{{\bf k}^c_-}+\nu_{{\bf z},{\bf 0},({\bf 0},k)}\rangle}F(\phi_g)  \Big]\nonumber
\end{align}
where the Liouville field is $\phi_g= X_{g,D}+P\tilde{\boldsymbol{\varphi}}+I_{x_0}^{\boldsymbol{\xi}_\#}(\nu_{{\bf z},{\bf m},({\bf k},0)})+I_{x_0}^{\boldsymbol{\sigma}_\#}(\omega^{c}_{{\bf k}^c_-})+I_{x_0}^{\boldsymbol{\sigma}_\#}(\nu_{{\bf z},{\bf 0},({\bf 0},k)})$, the expectation $\E$ is over the Dirichlet GFF $X_{g,D}$ on $\Sigma^{\#} $, $P\tilde{\boldsymbol{\varphi}}$ stands for the harmonic extension to $\Sigma^{\#}$ of the boundary fields $\tilde{\boldsymbol{\varphi}}$, which stand respectively for the  boundary conditions on the remaining (i.e. unglued) components of $\partial\Sigma^{\#}$, namely
\[\Delta_g P\tilde{\boldsymbol{\varphi}}=0\quad  \text{on }\partial\Sigma^{\#}  ,\qquad   P\tilde{\boldsymbol{\varphi}}|_{\partial_j\Sigma}= \tilde{\varphi}_{j}\circ \zeta_{j}^{-1} \text{   for }j\leq \mathfrak{b}-2 
.\]

Let now $X$ be an independent Dirichlet GFF  on $\Sigma$. Then we have the following decomposition in law (see Proposition \ref{decompGFF})
\begin{align*}%\label{}
X_{g,D}\stackrel{{\rm law}}=X+P{\bf Y}
\end{align*}
where ${\bf Y}$ is the restriction of $X_{g,D}$ to the  glued boundary component  $\mathcal{C}$ expressed in parametrized coordinates, i.e. ${\bf Y} =  X_{g,D}|_{{\caC}}\circ \zeta  $, and  $P{\bf Y}$ is its harmonic extension to $\Sigma  $ vanishing on $\partial\Sigma^{\#}$, which is non empty.  Again we denote by $h_\mathcal{C}$ the restriction of  the harmonic function $P\tilde{\boldsymbol{\varphi}}$ to $\mathcal{C}$ in parametrized coordinates
\[ h_{\mc{C}}: =P\tilde{\boldsymbol{\varphi}}|_{\mc{C} }\circ\zeta  \]
and, on $\Sigma$, the function $P\tilde{\boldsymbol{\varphi}}+P{\bf Y}$ is harmonic with boundary values (expressed in parametrized coordinates  on $\Sigma$) $ \tilde\varphi_{j}$ on $\partial_j\Sigma$ for $j<\mathfrak{b}-2$ and $({\bf Y}+h_{\mathcal{C}})$ on $\partial_{\mathfrak{b}-1}\Sigma$ and $\partial_{\mathfrak{b}}\Sigma$.  Proceeding as in the proof of the previous proposition, using the law of the field ${\bf Y}$ proved in \cite[Lemma 5.3 and Lemma 5.4]{GKRV2} and applying the Chasles relation, we arrive at the expression (expectation $\mathbb{E}$ is taken wrt $\varphi$)
\begin{align*}
&\caA^{0}_{\Sigma^{\#},g,{\bf z}, {\bf m},\boldsymbol{\zeta}_\# }(F,\tilde{\boldsymbol{\varphi}}^{{\bf k}})\\
 &=   \sum_{({\bf k}^c_-,k,n)\in \Z^{2\mathfrak{g}+\mathfrak{b}-2}\times\Z\times\Z} e^{-\frac{1}{4\pi}\|\nu_{{\bf z},{\bf m},({\bf k},0)}+\omega^{c}_{{\bf k}^c_-}+\nu_{{\bf z},{\bf 0},({\bf 0},k)}\|_{g,0}^2} Z_{\Sigma,g}
 \\
 &\times \int_0^{2\pi R}\E\Big[\caA^0_{ \Sigma,g}(\tilde{\boldsymbol{\varphi}},c+ \varphi +n2\pi R,c+ \varphi +n2\pi R   )  \mathcal{B}_ {\Sigma,g}(F,\tilde{\boldsymbol{\varphi}},c+ \varphi +n2\pi R  ,c+ \varphi +n2\pi R,{\bf k}',{\bf k}^c_-,k ) \Big]\,\dd c ,
\end{align*}
where expectation is taken over $\varphi$, with  
\begin{equation*}
\mathcal{B}_{\Sigma,g}(F,\tilde{\boldsymbol{\varphi}},\tilde{ \varphi},\tilde{ \varphi},{\bf k},{\bf k}^c_-,k):=\E \big[F(\phi)e^{-\frac{1}{2\pi}\langle \dd X+\dd P( \tilde{\boldsymbol{\varphi}},\tilde{ \varphi},\tilde{ \varphi} ),\nu_{{\bf z},{\bf m},({\bf k},k)}+\omega^{c}_{{\bf k}^c_-} \rangle}   \big]
\end{equation*}
where $\E$ is taken with respect to the Dirichlet GFF  $X$  on $\Sigma$, $\phi=X+P( \tilde{\boldsymbol{\varphi}},\tilde{ \varphi},\tilde{ \varphi} )+I^{\boldsymbol{\sigma}}_{x_0}(\omega^{c}_{{\bf k}^c_-})+I_{x_0}^{\boldsymbol{\xi}}(\nu_{{\bf z},{\bf m},({\bf k},k)}) $ and $P( \tilde{\boldsymbol{\varphi}},\tilde{ \varphi},\tilde{ \varphi} )$ stands for the harmonic extension to $\Sigma$ of the boundary fields $  \tilde{\boldsymbol{\varphi}},\tilde{ \varphi},\tilde{ \varphi} $. Also, we have obtained the defect graph on $\Sigma$ from the one on  $\Sigma^{\#} $  by using Lemma \ref{magcurvtwoself}.

Next, we introduce the harmonic functions $P_{n}$, for  $n\in\Z$, that are harmonic on $\Sigma$ with boundary values $n2\pi R$ on both boundary components corresponding to  $\mc{C}$ in $\Sigma$, and $0$ on the other boundary components of $\Sigma$. Then we notice that, writing $\tilde{\varphi}$ for $c+\varphi$,
\[ \begin{split}
\caA^0_{\Sigma,g}( \tilde{\boldsymbol{\varphi}},\tilde{\varphi}+n2\pi R ,\tilde{\varphi}+n2\pi R   ) =&
e^{-\frac{1}{2}(\cjg \mathbf{D}_{\Sigma}-\mathbf{D})(\tilde{\boldsymbol{\varphi}},\tilde{\varphi}+n2\pi R ,\tilde{\varphi}+n2\pi R ),(\tilde{\boldsymbol{\varphi}},\tilde{\varphi}+n2\pi R ,\tilde{\varphi}+n2\pi R )\cjd}\\
=& 
\caA^0_{\Sigma,g}( \tilde{\boldsymbol{\varphi}},\tilde{\varphi}  ,\tilde{\varphi}  ) e^{-\frac{1}{2\pi}\langle \dd P( \tilde{\boldsymbol{\varphi}},\tilde{\varphi} ,\tilde{\varphi} ),\dd P_n\rangle} e^{-\frac{1}{4\pi}\|\dd P_n\|^2_{2}}.
\end{split}\]
Notice also, using the similar argument as for \eqref{doublesquare}, that 
\[ \|\nu_{{\bf z},{\bf m},({\bf k},k)}+\omega^{c}_{{\bf k}_-^{c}}\|_{g,0}^2+ \|\dd P_n\|^2_{2} = \|\nu_{{\bf z},{\bf m},({\bf k},k)}+
\omega^{c}_{{\bf k}_-^{c}}+\dd P_n\|_{g,0}^2.\]
Therefore, one obtains
\begin{align*}
 &\caA^{0}_{\Sigma^{\#},g,{\bf z}, {\bf m},\boldsymbol{\zeta}_\# }(F,\tilde{\boldsymbol{\varphi}}^{\bf k})\\
  =&
  \sum_{k}\sum_{({\bf k}^c_-,n)\in \Z^{2\mathfrak{g}+\mathfrak{b}-2}\times\Z}    e^{-\frac{1}{4\pi}\|\nu_{{\bf z},{\bf m},({\bf k},k)}+\omega^{c}_{{\bf k}^c_-} +\dd P_n\|_{g,0}^2} Z_{\Sigma,g} \int_0^{2\pi R}\E\Big[\caA^0_{ \Sigma,g}(\tilde{\boldsymbol{\varphi}}, \tilde{\varphi}  ,\tilde{\varphi}    ) \hat{\mathcal{B}}_{\Sigma,g}(F,\tilde{\boldsymbol{\varphi}},\tilde{ \varphi},\tilde{ \varphi},{\bf k},{\bf k}^c_-,k,n)
 \Big]\,\dd c 
 \end{align*}
  with
\begin{equation*}
 \hat{\mathcal{B}}_{\Sigma,g}(F,\tilde{\boldsymbol{\varphi}},\tilde{ \varphi},\tilde{ \varphi},{\bf k},{\bf k}^c_-,k,n):
 =
\E \Big[F(\phi+P_n)e^{-\frac{1}{2\pi}\langle \dd X+\dd P( \tilde{\boldsymbol{\varphi}},\tilde{ \varphi},\tilde{ \varphi} ),\nu_{{\bf z},{\bf m},({\bf k},k)}+\omega^{c}_{{\bf k}^c_-} +\dd P_n\rangle}   \Big].
\end{equation*}

Again, the point is now to relate $\dd P_n$ to  the relative cohomology: indeed, it encodes the 1-forms dual to the boundary-to-boundary arc $d_{\mathfrak{b}-2}$ that we have previously removed. To see this, let us first introduce the notation ${\bf n}$ for the vector $(0,\dots,n,0)\in \Z^{2\mathfrak{g}+\mathfrak{b}-1}$. Since the 1-form $\dd P_n$ is exact and since it takes the value $0$ on $\partial_j\Sigma$ for $j\leq b-2$ and $n$ on both $\partial_{b-1}\Sigma $ and $\pl_b\Sigma$,  we have 
$$\int_{a_j}\dd P_n=0,\qquad \int_{b_j}\dd P_n=0,\qquad \int_{d_j}\dd P_n=0 \text{ for }j\not=\mathfrak{b}-2,\qquad \int_{d_{\mathfrak{b}-2}}\dd P_n=n2\pi R.$$
The 1-form $\dd P_n$ has therefore the same cycles/arcs as the 1-form $  \omega^{c}_{ {\bf n}} $. Thus, since $dP_n\in \mc{H}_1(\Sigma,\pl\Sigma)$ satisfies the relative boundary condition,  we have $ \dd P_n=\omega^{c}_{ {\bf n}}+\dd f_{\bf n}$, for some smooth function on $\Sigma$ vanishing on the boundary $\partial\Sigma$.  As in the proof of Proposition \ref{glueampli0}, we can replace $\dd P_n $
by $\omega^{c}_{ {\bf n}}+\dd f_{\bf n}$ in the expression of $\hat{\mathcal{B}}_{\Sigma,g}$ and apply Girsanov to the term $e^{-\frac{1}{2\pi}\langle \dd X , \dd f_{\bf n}\rangle-\frac{1}{4\pi}\|\dd f_{\bf n}\|_2^2}    $ to get that 
\begin{align*}
 &\caA^{0}_{\Sigma^{\#},g,{\bf v}, {\bf m},\boldsymbol{\zeta}_\# }(F,\tilde{\boldsymbol{\varphi}}^{\bf k})\\
  =&
  \sum_{k}\sum_{({\bf k}^c_-,n)\in \Z^{2\mathfrak{g}+\mathfrak{b}-2}\times\Z}    e^{-\frac{1}{4\pi}\|\nu_{{\bf z},{\bf m},({\bf k'},k)}+\omega^{c}_{{\bf k}^c_-} +\omega^{c}_{ {\bf n}}\|_2^2} Z_{\Sigma,g} \int_0^{2\pi R}\E\Big[\caA^0_{ \Sigma,g}(\tilde{\boldsymbol{\varphi}}, \tilde{\varphi}  ,\tilde{\varphi}    ) \hat{\mathcal{B}}_{\Sigma,g}(F,\tilde{\boldsymbol{\varphi}},\tilde{ \varphi},\tilde{ \varphi},{\bf k},{\bf k}^c_-,k,n)
 \Big]\,\dd c 
 \end{align*}
  with
\begin{equation*}
 \hat{\mathcal{B}}_{\Sigma,g}(F,\tilde{\boldsymbol{\varphi}},\tilde{ \varphi},\tilde{ \varphi},{\bf k},{\bf k}^c_-,k,n):
 =
\E \Big[F(\phi+I_{x_0}(\omega^{c}_{ {\bf n}}))e^{-\frac{1}{2\pi}\langle \dd X+\dd P( \tilde{\boldsymbol{\varphi}},\tilde{ \varphi},\tilde{ \varphi} ),\nu_{{\bf z},{\bf m},({\bf k},k)}+\omega^{c}_{{\bf k}^c_-} +\omega^{c}_{ {\bf n}}\rangle_{2}}   \Big].
\end{equation*}

This means that, in the expression of $\hat{\mc{B}}_{\Sigma,g}$, we have the relation $\omega^{c}_{{\bf k}^{c}_-}+\omega^{c}_{ {\bf n}}=\omega^{c}_{{\bf k}^{c}_-+ {\bf n}}$.   Therefore, the relative cohomology term in $ \hat{\mathcal{B}}_{\Sigma,g}$ is $\omega^{c}_{{\bf k}^{c}_-+ {\bf n}}$ and the summation $\sum_{({\bf k}^c_-,n)\in \Z^{2\mathfrak{g}+\mathfrak{b}-2}\times\Z} $ thus corresponds to a sum over the whole relative cohomology basis on $\Sigma$. This proves the claim.

 \medskip
{\bf 2) Assume now $\partial\Sigma^{\#}   = \emptyset$}.  The surface $\Sigma$  has now two boundary components $\pl_1\Sigma,\pl_2\Sigma$, which we want to glue to get a surface  $\Sigma^{\#}$. We take a basis $\boldsymbol{\sigma}$ of $\mc{H}_1(\Sigma,\pl \Sigma)$ made of $a_1,\dots,a_{\mathfrak{g}}$, $b_1,\dots,b_{\mathfrak{g}}, d_1 $ where $a_j,b_j$ are cycles chosen  inside $\Sigma^\circ$ such that the intersection numbers are given as in \eqref{int_numbers}, and  $d_1$ is a non-intersecting simple curve  (not closed) with the base point on 
$\pl_{1}\Sigma$ and the  endpoint on $\pl_{2}\Sigma$, and   $d_1$ is not intersecting any other curve of the basis. Let  $\omega^{c}_1,\dots, \omega^{c}_{2{\mathfrak{g}}+ 1}$ be a basis of $\mc{H}^1_R(\Sigma,\pl\Sigma)$ dual to $\sigma$ made of closed forms that are compactly supported inside $\Sigma^\circ$ (hence can be viewed as   closed 1-forms on $ \Sigma^{\#}  $ too).   For ${\bf k}^c  \in\Z^{2\mathfrak{g} +1} $, we set $\omega^c_{\bf k^c}:=\sum_{j=1}^{2\mathfrak{g}+1}k^c_j\omega^{c}_{j} $ as usual.

We focus now on  the absolute cohomology and magnetic forms. Again, we identify the only non trivial case to be treated. It corresponds to
\begin{equation}
\sum_{j=1}^{n_{\mathfrak{m}}}m_j=0,\qquad k_1=k_2.
\end{equation}
As such, we will simply write $k$ for $k_1=k_2$. Next, we consider the closed  $1$-form $\nu_{{\bf z},{\bf m},(k,k)}$  with  winding numbers given by 
$$
 \int_{\partial_{1}\Sigma}\nu_{{\bf z},{\bf m},(k,k)}= 2\pi R k,\qquad \int_{\partial_{2}\Sigma}\nu_{{\bf z},{\bf m},(k,k)}=- 2\pi R k,
$$
winding $m_j$ around $z_j$, and $0$ along any other interior cycle.
The   condition above (and the fact that both forms involved are of the form $\pm  R k\,\dd\theta$ over a neighborhood of the boundaries $\partial_{1}\Sigma$ and $\partial_{2}\Sigma$) allows us to self-glue   this form
by identifying $\partial_{1}\Sigma$ and $\partial_{2}\Sigma$
to get a closed 1-form on $\Sigma^{\#}$  denoted by $\nu_{{\bf z},{\bf m},k}$. Note that the curve $\mc{C}$ will become a cycle on the glued surface, and therefore the form $\nu_{{\bf z},{\bf m},k}$, which has a winding $k$ along this curve, will be part of the cohomology on the glued surface.  For this, we split $\nu_{{\bf z},{\bf m},k}$ as $\nu_{{\bf z},{\bf m},0}+\nu_{{\bf z},{\bf 0},k}$.

Now we get a basis $\boldsymbol{\sigma}_\#$ of   homology on $ \Sigma^{\#}$, which is a closed surface,    by taking the cycles $a_1,\dots,a_{\mathfrak{g}}$, $b_1,\dots,b_{\mathfrak{g}}$ together with the cycles $ d_{1},\mc{C}$. A basis of   cohomology, dual to $\boldsymbol{\sigma}_\#$, is now $\omega^{c}_1,\dots, \omega^{c}_{2{\mathfrak{g}}+1},\nu_{{\bf z},{\bf 0},k}$.   The magnetic 1-form on $\Sigma^{\#}$ is now $\nu_{{\bf z},{\bf m},0}$.

Based on this, the rest of the proof    is a combination of arguments  already used so we just outline the proof.    We condition the amplitude on $\Sigma^{\#}$ on the values of the GFF along $\mc{C}$ and use the description of this law given by \cite[eq (5.14)]{GKRV2} (similar to proof of Prop. \ref{glueampli}  case $\partial\Sigma=\emptyset$). We apply the Chasles relation on the $c$-integral, then we use
\begin{align*}
\caA^0_{\Sigma,g}(\tilde{\varphi}+n2\pi R,\tilde{\varphi}+n2\pi R  ) =&
\caA^0_{\Sigma,g}(\tilde{\varphi}   ,\tilde{\varphi} )  
\end{align*}
because harmonic functions on $\Sigma$, worth $n2\pi R$ on both $\partial_1\Sigma$ and $\partial_2\Sigma$, must be constant and equal to $n2\pi R$. Therefore there is no contribution coming from the shift by $n2\pi R$ and we get this way the expression of the glued amplitude on $\Sigma$. Details are left to the reader.
\end{proof} 

\subsection{Amplitudes are $L^2$}\label{s:L^2amplitude}  
%%%%%%%%%%%%%%%%%%%%%%%%%
In this section, we shall prove that the Liouville amplitudes of surfaces with boundary are in the Hilbert space $\mc{H}^{\otimes \mathfrak{b}}$. This is done by doubling the surface and using that the correlation functions on closed surfaces exist.

First we state a lemma about the effect of reverting orientation on a given Riemann surface with or without boundary. So we consider the setup for the definition of amplitudes. We denote by $\Sigma'$ the surface $\Sigma$ but with orientation reversed.
The following lemma directly follows from definitions:
\begin{lemma}\label{reverting}
We have the relation
\[ \caA^0_{\Sigma,g,{\bf z},\bf{m},\boldsymbol{\zeta}} (F,\tilde{\boldsymbol{\varphi}}^{\bf k}) = \caA^{0}_{\Sigma',g,{\bf z},-\bf{m},\boldsymbol{\zeta}} (F,\tilde{\boldsymbol{\varphi}}^{\bf k}) .\]
\end{lemma}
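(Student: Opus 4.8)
The plan is to verify the identity term by term in the defining formula \eqref{amplitudegff}, using that passing from $\Sigma$ to $\Sigma'$ amounts, in each holomorphic chart, to precomposing with complex conjugation $z\mapsto\bar z$. This is an orientation-reversing diffeomorphism that fixes the metric $g$, reverses the Hodge star $*\mapsto-*$ and the orientation of integration over $\Sigma$ and over each boundary circle, and sends the local angular form $\dd\theta\mapsto-\dd\theta$ near a marked point. From this I first record that all the purely metric ingredients of \eqref{amplitudegff} are unchanged: the normalization $Z_{\Sigma,g}=\det(\Delta_{g,D})^{-1/2}$ depends only on $\Delta_g=\dd^*\dd$; the law of the Dirichlet GFF $X_{g,D}$ and the harmonic extension operator $P$ are built from $\Delta_g$ and the (orientation-free) inward normal, so the free-field amplitude $\caA^0_{\Sigma,g}(\tilde{\boldsymbol{\varphi}})=e^{-\frac12\langle\tilde{\boldsymbol{\varphi}},(\mathbf{D}_\Sigma-\mathbf{D})\tilde{\boldsymbol{\varphi}}\rangle}$ is identical for $\Sigma$ and $\Sigma'$. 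Moreover any $L^2$ pairing of $1$-forms $\langle\omega,\omega'\rangle_2=\int_\Sigma\omega\wedge *\omega'$ is unchanged, since the sign flips of $*$ and of $\int_\Sigma$ cancel; the same cancellation shows the regularized norm $\|\cdot\|_{g,0}^2$ of Lemma \ref{renorm_L^2} is orientation-independent, its counterterm $2\pi R^2(\log\epsilon)\sum_j m_j^2$ being in addition invariant under $\mathbf{m}\mapsto-\mathbf{m}$.

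The one genuinely orientation-sensitive object is the closed $1$-form $\nu_{\mathbf{z},\mathbf{m},{\bf k}}$ of Proposition \ref{harmpoles}, and the main (though elementary) step will be to track the sign bookkeeping here. Since $\boldsymbol{\zeta}$ is kept fixed while the boundary orientation reverses, the incoming/outgoing signs flip, $\varsigma_\ell\mapsto-\varsigma_\ell$. I then claim that one and the same closed $1$-form on the underlying punctured manifold can serve simultaneously as $\nu_{\mathbf{z},\mathbf{m},{\bf k}}$ on $(\Sigma,g)$ and as $\nu_{\mathbf{z},-\mathbf{m},{\bf k}}$ on $(\Sigma',g)$. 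Indeed, near $z_j$ one has $\psi_j^*(\nu_{\mathbf{z},\mathbf{m},{\bf k}}|_{U_j})=m_jR\,\dd\theta=(-m_j)R\,\dd\theta'$ with $\theta'=-\theta$ the angular coordinate in the reversed orientation, so the winding there is $-m_j$; and since $\partial_\ell\Sigma'$ is $\partial_\ell\Sigma$ with reversed orientation, $\tfrac{1}{2\pi R}\int_{\partial_\ell\Sigma'}\nu_{\mathbf{z},\mathbf{m},{\bf k}}=-\varsigma_\ell k_\ell=\varsigma'_\ell k_\ell$, while the interior periods still vanish. These are exactly the conditions of Proposition \ref{harmpoles} defining $\nu_{\mathbf{z},-\mathbf{m},{\bf k}}$ on $\Sigma'$, so this common form is an admissible choice in both amplitudes (the amplitude being insensitive to the choice of $\nu$ among admissible representatives, by the same Girsanov argument as elsewhere and Lemma \ref{existence_primitive}). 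The compactly supported relative-cohomology forms $\omega^c_{{\bf k}^c}$ carry no winding around the $z_j$ and are dual to the same homology basis, so they may be taken identical on $\Sigma$ and $\Sigma'$, and the index set of the sum over ${\bf k}^c$ is unchanged.

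With these identifications the remaining comparison will be immediate. The sum $\nu_{\mathbf{z},\mathbf{m},{\bf k}}+\omega^c_{{\bf k}^c}$ is literally the same $1$-form in both amplitudes, so by orientation-independence of $\|\cdot\|_{g,0}^2$ and of $\langle\cdot,\cdot\rangle_2$ the Gaussian weight $e^{-\frac{1}{4\pi}\|\nu+\omega^c\|_{g,0}^2}$ and the Girsanov factor $e^{-\frac{1}{2\pi}\langle\dd X_{g,D}+\dd P\tilde{\boldsymbol{\varphi}},\nu+\omega^c\rangle_2}$ agree. Since $Q=0$ there is no curvature term (hence no $\kappa$/geodesic-curvature subtleties), and the Liouville field $\phi_g=X_{g,D}+P\tilde{\boldsymbol{\varphi}}+I^{\boldsymbol{\xi}}_{x_0}(\nu_{\mathbf{z},\mathbf{m},{\bf k}})+I^{\boldsymbol{\sigma}}_{x_0}(\omega^c_{{\bf k}^c})$ is the same random distribution, so $F(\phi_g)$ is the same random variable (its dependence on $\nu$ is only through the single-valued $e^{\frac{i}{R}I_{x_0}(\nu)}$). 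Finally the neutrality constraint is preserved: under $(\mathbf{m},\varsigma)\mapsto(-\mathbf{m},-\varsigma)$ one has $\sum_j m_j({\bf k})\mapsto-\sum_j m_j({\bf k})$, so $\delta_0\big(\sum_j m_j({\bf k})\big)$ is unchanged. Reading \eqref{amplitudegff} for $(\Sigma',-\mathbf{m})$ term by term then reproduces \eqref{amplitudegff} for $(\Sigma,\mathbf{m})$, giving the claim. The only delicate point is the orientation sign bookkeeping of the second paragraph; everything else reduces to the orientation-independence of the metric data.
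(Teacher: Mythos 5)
Your proof is correct and follows essentially the same route as the paper's: the paper's one-line proof rests on the observation that $\nu^{\Sigma'}_{{\bf z},{\bf m},{\bf k}}=\nu^{\Sigma}_{{\bf z},-{\bf m},{\bf k}}$ together with the remark that all other ingredients of \eqref{amplitudegff} are orientation-independent, which is exactly your identification of a single admissible $1$-form serving both amplitudes plus your term-by-term invariance check. You merely make explicit the sign bookkeeping (cancellation of the flips of $*$ and $\int_\Sigma$, $\varsigma_\ell\mapsto-\varsigma_\ell$, $\dd\theta\mapsto-\dd\theta$, preservation of the $\delta_0$ constraint) that the paper leaves implicit.
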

\begin{proof}
In view of the properties of the form $\nu^{\Sigma}_{{\bf z},{\bf k},{\bf m}}$ on $\Sigma$ and $\nu^{\Sigma}_{{\bf z},{\bf k},{\bf m}}$ on $\Sigma'$ given in Proposition \ref{harmpoles},
we observe that 
\[ \nu^{\Sigma'}_{{\bf z},{\bf m},{\bf k}}=\nu^{\Sigma}_{{\bf z},-{\bf m},{\bf k}}.\]
This directly implies the result by using the definition of \eqref{amplitudegff} and the fact that all the other quantities involved in the free field amplitude are independent of the orientation of $\Sigma$. 
\end{proof}

\subsubsection{Regularized amplitudes}
%%%%%%%%%%%%%%%%%%%%%%%%%%%%
We consider a surface $\Sigma$ with non empty boundary and all the datas from the setup for amplitudes. For $\epsilon>0$, we denote   by $ \caA^\eps_{\Sigma,g,{\bf x},{\bf v},\boldsymbol{\alpha},\bf{m},\boldsymbol{\zeta}} (F,\tilde{\boldsymbol{\varphi}}^{\bf k})$ the regularized amplitudes:
\begin{align}\label{amplitudereg}
 \caA^\eps_{\Sigma,g,{\bf x},{\bf v},\boldsymbol{\alpha},\bf{m},\boldsymbol{\zeta}}&(F,\tilde{\boldsymbol{\varphi}}^{\bf k}) \\
: =&
\delta_0(\sum_{j=1}^{n_{\mathfrak{m}}+\mathfrak{b}} m_j) \sum_{{\bf k}^c\in \Z^{2\mathfrak{g}+\mathfrak{b}-1}} e^{-\frac{1}{4\pi}\|\nu_{\mathbf{z},\mathbf{m},{\bf k}} +\omega^{c}_{{\bf k}^c}\|_{g,0}^2 }Z_{\Sigma,g}\caA^0_{\Sigma,g}(\tilde{\boldsymbol{\varphi}}  ) 
 \nonumber\\
 &
 \E \Big[e^{-\frac{1}{2\pi}\langle \dd X_{g,D}+\dd P\tilde{\boldsymbol{\varphi}},\nu_{\mathbf{z},\mathbf{m},{\bf k}} +\omega^{c}_{{\bf k}^c}\rangle}F(\phi_g)\prod_{j=1}^{n_{\mathfrak{m}}} V_{\alpha_j,g,\epsilon}(x_j)e^{-\frac{i  Q}{4\pi}\int^{\rm reg}_\Sigma K_g\phi_g\dd {\rm v}_g
 %-\frac{i\tilde Q}{2\pi}\int_{\partial\Sigma}k_g\phi_g\dd \ell_g 
 -\mu M_\beta^g (\phi_g,\Sigma)}\Big]\nonumber.
\end{align}
Recall that amplitudes are defined as the limit $\lim_{{\bf x}\to {\bf z}}\lim_{\epsilon\to 0}$ of this quantity, where the limit when $x_i\to z_i$ is understoof as in \eqref{defcorrelmixed} along a curve with a tangent vector given by $v_i$. We will essentially follow the argument in Propositions \ref{limitcorel} and  \ref{limitcorelmixed} to take these limits, with adaptations due to the presence of a boundary. 

The first step is to adapt the imaginary Girsanov transform. Let us define 
\[u^{\epsilon,\epsilon'}_{\bf x}(y):=\sum_{j=1}^{n_{\mathfrak{m}}}i\alpha_j\E[X_{g,D,\epsilon}(x_j)X_{g,D,\epsilon}(y)].\] 
We write simply $u^{\epsilon}_{\bf x}(y)$ for the function $u^{\epsilon,\epsilon'}_{\bf x}(y)$ for $\epsilon'=0$. We claim:
\begin{lemma}\label{fuckgirsanov}
The following identity holds
\begin{align*} 
 \caA^{\epsilon}_{\Sigma,g,{\bf x},{\bf v},\boldsymbol{\alpha},\bf{m},\boldsymbol{\zeta}}&(F,\tilde{\boldsymbol{\varphi}}^{\bf k})=   \delta_0(\sum_{j=1}^{n_{\mathfrak{m}}+\mathfrak{b}} m_j({\bf k})) 
 C_\epsilon \sum_{{\bf k}^c\in \Z^{2\mathfrak{g}+b-1}} e^{-\frac{1}{4\pi}\|\nu_{\mathbf{z},\mathbf{m},{\bf k}} +\omega^{c}_{{\bf k}^c}\|_{g,0}^2 }Z_{\Sigma,g}\caA^0_{\Sigma,g}(\tilde{\boldsymbol{\varphi}}  )   \prod_{j=1}^{n_{\mathfrak{m}}}e^{i\alpha_j P\tilde{\boldsymbol{\varphi}}_\epsilon(z_j)}
 \\
 &
 \E \Big[e^{-\frac{1}{2\pi}\langle \dd X_{g,D}+\dd P\tilde{\boldsymbol{\varphi}}+\dd u^{\epsilon}_{\bf x},\nu_{\mathbf{z},\mathbf{m},{\bf k}} +\omega^{c}_{{\bf k}^c}\rangle_2}F(\phi_g+u^{\epsilon}_{\bf x}) e^{-\frac{i  Q}{4\pi}\int^{\rm reg}_\Sigma K_g(\phi_g+u^{\epsilon}_{\bf x})\dd {\rm v}_g
 %-\frac{i\tilde Q}{2\pi}\int_{\partial\Sigma}k_g\phi_g\dd \ell_g 
 -\mu M_\beta^g (\phi_g+u^{\epsilon}_{\bf x},\Sigma)}\Big]\nonumber.
\end{align*}
where the constant $C_\epsilon$ is given by
$$C_\epsilon:=e^{-\sum_{j<j'}\alpha_j\alpha_{j'}\E[X_{g,D,\epsilon}(x_j)X_{g,D,\epsilon}(x_{j'})]} \prod_{j=1}^{n_{\mathfrak{m}}}\epsilon^{-\frac{\alpha_j^2}{2}}e^{-\frac{\alpha_j^2}{2}\E[X_{g,D,\epsilon}(x_j)]}.$$
\end{lemma}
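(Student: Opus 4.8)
The identity is a Girsanov (Cameron--Martin) transform moving the electric insertions $\prod_j V_{\alpha_j,g,\epsilon}(x_j)$ into a deterministic shift of the Dirichlet GFF. Because these insertions carry the \emph{imaginary} exponent $\sum_j i\alpha_j X_{g,D,\epsilon}(x_j)$, the transform cannot be applied directly; as in the proof of Theorem~\ref{limitcorel}(1), the plan is to obtain it by analytic continuation in the charges. So I would fix $\epsilon>0$, the boundary data $\tilde{\boldsymbol{\varphi}}$ and the integers ${\bf k},{\bf k}^c$, and introduce the family obtained by replacing each $\alpha_j$ by a complex parameter $w_j$,
\[
A(\mathbf{w}) := \E\Big[e^{-\frac{1}{2\pi}\langle \dd X_{g,D}+\dd P\tilde{\boldsymbol{\varphi}},\nu_{\mathbf{z},\mathbf{m},{\bf k}}+\omega^c_{{\bf k}^c}\rangle_2}F(\phi_g)\prod_{j}\epsilon^{-w_j^2/2}e^{iw_j \phi_{g,\epsilon}(x_j)}e^{-\frac{iQ}{4\pi}\int^{\rm reg}_\Sigma K_g\phi_g\dd {\rm v}_g-\mu M^g_\beta(\phi_g,\Sigma)}\Big],
\]
whose value at $\mathbf{w}=\boldsymbol{\alpha}$ is the expectation entering $\caA^\epsilon_{\Sigma,g,{\bf x},{\bf v},\boldsymbol{\alpha},{\bf m},\boldsymbol{\zeta}}$. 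For fixed $\epsilon>0$ this is manifestly entire in $\mathbf{w}$.

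For $\mathbf{w}\in(i\R)^{n_{\mathfrak{m}}}$ the random exponent $i\sum_j w_j X_{g,D,\epsilon}(x_j)$ is real, so the ordinary (real) Cameron--Martin theorem applies: it produces the variance factor $e^{-\frac12\E[(\sum_j w_j X_{g,D,\epsilon}(x_j))^2]}$ and shifts the law of $X_{g,D}$ by $\sum_j i w_j\,\E[X_{g,D,\epsilon}(x_j)X_{g,D}(\cdot)]$, which at $w_j=\alpha_j$ is precisely $u^\epsilon_{\bf x}$. Splitting the variance into its diagonal and off-diagonal parts and absorbing the diagonal terms into the powers $\epsilon^{-\alpha_j^2/2}$ reproduces the constant $C_\epsilon$, while the deterministic pieces of $\phi_{g,\epsilon}$ at the insertion points (the harmonic extension $P\tilde{\boldsymbol{\varphi}}$ and the primitives) factor out of the expectation and yield the stated prefactor. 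It then remains to show that the shifted right-hand side is \emph{also} holomorphic in $\mathbf{w}$; granting this, the two entire functions coincide on $(i\R)^{n_{\mathfrak{m}}}$, hence everywhere by the identity theorem, and evaluation at $\mathbf{w}=\boldsymbol{\alpha}$ gives the lemma. I note that, unlike the closed case, here we work directly with the Dirichlet GFF $X_{g,D}$, so Proposition~\ref{expmoment} applies verbatim with no Markov decomposition needed.

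The hard part is the holomorphic dependence on $\mathbf{w}$ of the interaction term $M^g_\beta(\phi_g+u^\epsilon_{\bf x},\Sigma)$: the shift $u^\epsilon_{\bf x}$ is holomorphic in $\mathbf{w}$ but, since the first-variable $\epsilon$-regularization only smooths $G_{g,D}$, it is merely continuous (not $C^2$), whereas $M^g_\beta$ is only a distribution of order~$2$, so differentiation under the integral is not immediately licensed. Following Theorem~\ref{limitcorel}(1), I would mollify $u^\epsilon_{\bf x}$ in the spatial variable into $u^\epsilon_{{\bf x},\delta}$, which is holomorphic in $\mathbf{w}$ and twice differentiable in $x$; for each $\delta>0$ the corresponding $\delta$-regularized expectation is holomorphic by differentiation under the expectation, the needed local uniform integrability being supplied by the exponential-moment bound of Proposition~\ref{expmoment}. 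I would then let $\delta\to0$, using the $L^2$ convergence $M^g_\beta(\phi_g+u^\epsilon_{{\bf x},\delta},\Sigma)\to M^g_\beta(\phi_g+u^\epsilon_{\bf x},\Sigma)$ together with the same Proposition~\ref{expmoment} estimates to pass the limit inside $\E$, and conclude holomorphicity of the limit. The singularities of $I^{\boldsymbol{\xi}}_{x_0}(\nu_{\mathbf{z},\mathbf{m},{\bf k}})$ along the defect graph are harmless, as they lie on a Lebesgue-null set and are not seen by the order-$2$ chaos, exactly as in the closed case.
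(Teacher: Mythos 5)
Your proposal is correct and follows essentially the same route as the paper, which proves the lemma by running the analytic-continuation Girsanov argument from the proof of Theorem~\ref{limitcorel} after verifying two adapted inputs: the order-$2$ distribution bound for $M^g_\beta$ (now via the Dirichlet Green function representation $f(x)=\int_\Sigma G_{g,D}(x,y)\Delta_g f(y)\,\mathrm{dv}_g(y)$, as in Lemma~\ref{expmomentunif}) and the exponential-moment estimate of Proposition~\ref{expmoment}, which applies directly to the Dirichlet GFF with a deterministic bound since $|e^{i\beta P\tilde{\boldsymbol{\varphi}}}|=1$. Your treatment of the mollification in $\delta$, the identification of $C_\epsilon$, and the defect-graph singularities matches the paper's intended argument point for point.
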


\begin{proof}
This lemma relies on the Girsanov transform applied to the product $\prod_{j=1}^{n_{\mathfrak{m}}}e^{i\alpha_jX_{g,D,\epsilon}(z_j)}$. Therefore it follows the argument explained in the beginning of the proof of Proposition \ref{limitcorel}. To reproduce the argument, we need the following tools:
\begin{enumerate}
\item $\tilde{\boldsymbol{\varphi}}$ a.s., the mapping $f\mapsto \int_\Sigma f(x) M_\beta^g(X_{g,D}+P\tilde{\boldsymbol{\varphi}}+I_{x_0}^{\boldsymbol{\sigma}}(\omega^{c}_{{\bf k}^c})+I^{\boldsymbol{\zeta}}_{x_0}(\nu_{\mathbf{z},\mathbf{m},{\bf k}} ),\dd x)$ is a distribution (in the sense of Schwartz)  of order $2$ and there exists some $L^2$ random variable $D_\Sigma$ such that
$$\forall f\in C_c^\infty(\Sigma),\qquad \big|\int_\Sigma f(x)  M^g_\beta(X_{g,D}+P\tilde{\boldsymbol{\varphi}} +I_{x_0}^{\boldsymbol{\sigma}}(\omega^{c}_{{\bf k}^c})+I^{\boldsymbol{\zeta}}_{x_0}(\nu_{\mathbf{z},\mathbf{m},{\bf k}} ),\dd x) \big|\leq D_\Sigma   \|\Delta_gf\|_\infty. $$
%note that $e^{u_{\bf x}}$ does not vanish along the boundary. Approximate with $\delta$-regularization vanishing on the boundary. Then the estimate of Prop \ref{expmoment} allows us to show that the $\delta$-expectations cv towards the desired expectation.
\item  we have the estimate $$\E\Big[\exp\big(\big|\int_\Sigma f(x)  M^g_\beta(X_g+I_{x_0}^{\boldsymbol{\sigma}}(\omega^{c}_{{\bf k}^c})+I^{\boldsymbol{\zeta}}_{x_0}(\nu_{\mathbf{z},\mathbf{m},{\bf k}} ),\dd x) \big|\big)\,|\,\tilde{\boldsymbol{\varphi}} \Big]\leq C\|f\|_\infty$$
for some deterministic constant $C>0$.
\end{enumerate}
To establish the first property, we can write $f$ as $f(x)=\int_\Sigma \Delta f(y)G_{g,D}(x,y)\,{\rm dv}_g(y)$ and then follow the argument in Lemma \ref{expmomentunif} to get that 
  $D_\Sigma=\int_\Sigma \Big|\int_\Sigma G_{g,D}(x,y)M^g_\beta(X_g+P\tilde{\boldsymbol{\varphi}} + I_{x_0}^\sigma(\omega^{c}_{{\bf k}^c})+I^{\boldsymbol{\zeta}}_{x_0}(\nu_{\mathbf{z},\mathbf{m},{\bf k}} ),\dd x) \Big| {\rm dv}_g(y)$, which is $\tilde{\boldsymbol{\varphi}} $ a.s. in $L^2$ in the expectation with respect to the Dirichlet GFF.
  
For the exponential moment estimate, this follows from Proposition \ref{expmoment} again. Note that the contribution from the harmonic extension is trivial using that $|e^{i\beta  P\tilde{\boldsymbol{\varphi}}}|=1$, and this is why we get a deterministic bound.
\end{proof}

Now we claim
\begin{lemma}
The following convergence result holds 
 $(\dd c \otimes \P_\T)^{\otimes \mathfrak{b}}$ almost surely:
\[  \caA^{\epsilon}_{\Sigma,g,{\bf x},{\bf v},\boldsymbol{\alpha},\bf{m},\boldsymbol{\zeta}} (F,\tilde{\boldsymbol{\varphi}}^{\bf k}) \to  \caA_{\Sigma,g,{\bf v},\boldsymbol{\alpha},\bf{m},\boldsymbol{\zeta}} (F,\tilde{\boldsymbol{\varphi}}^{\bf k}) \]
as $\eps\to 0$ and  ${\bf x}\to {\bf z}$ in the direction ${\bf v}=((z_1,v_1),\dots,(z_n,v_n))\in (T\Sigma)^{n}$, where 
\[\begin{split}
 \caA_{\Sigma,g,{\bf v},\boldsymbol{\alpha},\bf{m},\boldsymbol{\zeta}} (F,\tilde{\boldsymbol{\varphi}}^{\bf k}) = &  \delta_0(\sum_{j=1}^{n_{\mathfrak{m}}+\mathfrak{b}} m_j({\bf k}))   Z_{\Sigma,g}\caA^0_{\Sigma,g}(\tilde{\boldsymbol{\varphi}}  )e^{-\sum_{j<j'}\alpha_j\alpha_{j'}G_{g,D}(z_j,z_{j'})} \prod_{j=1}^{n_{\mathfrak{m}}}e^{i\alpha_j P\tilde{\boldsymbol{\varphi}}(z_j)-\frac{1}{2} \sum_j\alpha_j^2W_{g}(z_j)} \nonumber
\\
& \sum_{{\bf k}^c\in \Z^{2\mathfrak{g}+b-1}} e^{-\frac{1}{4\pi}\|\nu_{\mathbf{z},\mathbf{m},{\bf k}} +\omega^{c}_{{\bf k}^c}\|_{g,0}^2 } \prod_{j}e^{i\alpha_j (I^{\boldsymbol{\sigma}}_{x_0}(\omega^c_{\bf k^c})+I_{x_0}^{\boldsymbol{\xi}}(\nu_{\mathbf{z,m,k}}))(z_j)}
\label{defelecamp}
\\
& \E\Big[e^{-\frac{1}{2\pi}\langle \dd X_{g,D}+\dd P\tilde{\boldsymbol{\varphi}}+\dd u_{\bf z},\nu_{\mathbf{z},\mathbf{m},{\bf k}} +\omega^{c}_{{\bf k}^c} \rangle}F(\phi_g+u_{\bf z})e^{-\frac{i   Q}{4\pi}\int_\Sigma^{\rm reg} K_g(\phi_g+u_{\bf z})\,\dd v_g -\mu  M^g_\beta(\phi_g+u_{\bf z},\Sigma)}\Big] ,\nonumber
\end{split}\]
 where  $u_{\bf z}(x)=\sum_{j=1}^{n_{\mathfrak{m}}}i\alpha_jG_{g,D}(x,z_j)$ and the Liouville field is $\phi_g=X_{g,D}+P\tilde{\boldsymbol{\varphi}}+ I^{\boldsymbol{\xi}}_{x_0}(\nu_{\mathbf{z},\mathbf{m},\mathbf{k}})+I^{\boldsymbol{\sigma}}_{x_0}(\omega^{c}_{{\bf k}^c})$. Expectation is over the Dirichlet GFF and the evaluation of $I_{x_0}^{\boldsymbol{\xi}}(\nu_{\mathbf{z,m,k}})$ at the points $z_j$ is done in the direction prescribed by the vectors ${\bf v}$. 
 %The fact that the quantity  $\E\Big[\exp\Big(A|  M^g_\beta(\phi_g+u_{\bf z},\Sigma)|\Big)\Big]$ is finite almost surely in $\tilde{\boldsymbol{\varphi}}$ can be established as before using the doubling trick and conditioning.
\end{lemma}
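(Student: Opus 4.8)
The plan is to reproduce, in the Dirichlet setting and uniformly over the boundary data, the two-step limiting procedure used to prove Theorems \ref{limitcorel} and \ref{limitcorelmixed}. Starting from the Girsanov-transformed expression of Lemma \ref{fuckgirsanov}, I would fix a $(\dd c\otimes\P_\T)^{\otimes\mathfrak{b}}$-generic boundary configuration $\tilde{\boldsymbol{\varphi}}^{\bf k}$ and first let $\epsilon\to 0$ with ${\bf x}$ fixed, then let ${\bf x}\to{\bf z}$ along the prescribed directions ${\bf v}$. For a fixed boundary configuration the harmonic extension $P\tilde{\boldsymbol{\varphi}}$ is a deterministic function, smooth in the interior $\Sigma^\circ$, and since $|e^{i\beta P\tilde{\boldsymbol{\varphi}}}|=1$ it contributes only a bounded phase; this is precisely the feature (already exploited in Lemma \ref{fuckgirsanov}) that makes the magnitude estimates of Proposition \ref{expmoment} insensitive to the boundary data, hence uniform in $\tilde{\boldsymbol{\varphi}}^{\bf k}$. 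The neutrality factor $\delta_0(\sum_j m_j({\bf k}))$ is a fixed scalar carried unchanged through both limits.

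For the limit $\epsilon\to 0$, the Wick renormalization constant $C_\epsilon$ combined with $\prod_j e^{i\alpha_j P\tilde{\boldsymbol{\varphi}}_\epsilon(z_j)}$ converges to the explicit prefactor in \eqref{defelecamp}, using \eqref{varYg} for the diagonal terms and $\E[X_{g,D,\epsilon}(x_j)X_{g,D,\epsilon}(x_{j'})]\to G_{g,D}(z_j,z_{j'})$ for the cross terms. The substantial point is the $L^2(\P_{X_{g,D}})$ convergence of the potential $M^g_\beta(\phi_g+u^\epsilon_{\bf x},\Sigma)$ to $M^g_\beta(\phi_g+u_{\bf x},\Sigma)$: as in Theorem \ref{limitcorel}, this rests on the Seiberg condition $\alpha_j>Q$ from \eqref{seiberg}, which guarantees finiteness of the elementary integral \eqref{elem} and hence that $(M^g_\beta(\phi_g+u^\epsilon_{\bf x},\Sigma))_\epsilon$ is Cauchy in $L^2$. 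I would then bound the difference between the regularized amplitude and its candidate limit by a sum over ${\bf k}^c$ of three error terms $R^1_\epsilon,R^2_\epsilon,R^3_\epsilon$ exactly as in Theorem \ref{limitcorel}, controlling the potential contributions by Proposition \ref{expmoment} and the functional contributions by the seminorm $\|\cdot\|_{\mc{L}^{\infty,p}_{\rm e,m}}$. Summability over ${\bf k}^c$ is ensured by the Gaussian weight $e^{-\frac{1}{4\pi}\|\nu_{\mathbf{z},\mathbf{m},{\bf k}}+\omega^c_{{\bf k}^c}\|_{g,0}^2}$, which after extracting the harmonic part dominates the factors $e^{C|{\bf k}^c|}$ arising from the pairings $\langle \dd u,\omega^c_{{\bf k}^c}\rangle_2$ and $\langle \omega^c_{{\bf k}^c},\nu_{\mathbf{z},\mathbf{m},{\bf k}}\rangle_2$.

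For the limit ${\bf x}\to{\bf z}$, the key mechanism, as in Theorem \ref{limitcorelmixed}, is that since $\nu_{\mathbf{z},\mathbf{m},\mathbf{k}}$ equals $m_j 2\pi R\,\dd\theta$ in radial coordinates $z-z_j=re^{i\theta}$ near $z_j$ (Proposition \ref{harmpoles}), the function $e^{\frac{i}{R}I^{\boldsymbol{\xi}}_{x_0}(\nu_{\mathbf{z},\mathbf{m},\mathbf{k}})(x)}$ admits a directional limit when $(x_j(t),\dot x_j(t))\to(z_j,v_j)$; this makes all prefactors $\prod_j e^{i\alpha_j I^{\boldsymbol{\xi}}_{x_0}(\nu_{\mathbf{z},\mathbf{m},\mathbf{k}})(x_j)}$ converge. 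To pass to the limit in the interaction term I would invoke the Dirichlet analogue of Lemma \ref{estimeedechien}, bounding $\sup_{{\bf k}^c}\sup_{c}\E[|e^{-\mu M(\mathbf{x}(t))}-e^{-\mu M(\mathbf{z})}|^q]$ via Proposition \ref{expmoment} by a double integral of $|e^{-\beta u_{\mathbf{x}(t)}}-e^{-\beta u_{\bf z}}|$ against $e^{\beta^2 G_{g,D}}$, which tends to $0$ by dominated convergence using $\beta^2<2$.

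The main obstacle will be establishing both limits \emph{uniformly in ${\bf k}^c$ and in the zero mode $c$}, so that the error bounds are summable against the Gaussian cohomological weight and survive the interchange of limit with infinite sum; this is exactly where the insensitivity of the magnitude estimates to $P\tilde{\boldsymbol{\varphi}}$ (via $|e^{i\beta P\tilde{\boldsymbol{\varphi}}}|=1$) and the Seiberg condition $\alpha_j>Q$ enter decisively, the latter ensuring that the $L^2$ control of the singular insertions persists in the limit. Once uniformity over ${\bf k}^c$ and $c$ is secured, almost-sure convergence in the boundary variables follows because, for a fixed generic $\tilde{\boldsymbol{\varphi}}^{\bf k}$, all random objects reduce to functionals of the single Dirichlet GFF $X_{g,D}$ shifted by the fixed interior-smooth harmonic extension, so the arguments of Theorems \ref{limitcorel} and \ref{limitcorelmixed} apply essentially verbatim.
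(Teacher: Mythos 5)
Your proposal is correct and follows essentially the same route as the paper: the paper's proof likewise runs the two-step limit of Propositions \ref{limitcorel} and \ref{limitcorelmixed} starting from the Girsanov-transformed expression of Lemma \ref{fuckgirsanov}, bounds the difference of regularized amplitudes by $\sum_{{\bf k}^c}\caA^0_{\Sigma,g}(\tilde{\boldsymbol{\varphi}})\,e^{-\frac{1}{4\pi}\|\Pi_1^c\omega^c_{{\bf k}^c}\|_2^2+C|{\bf k}^c|}\,C_{\epsilon,{\bf x},{\bf z}}(F,\tilde{\boldsymbol{\varphi}})$ with $C_{\epsilon,{\bf x},{\bf z}}(F,\tilde{\boldsymbol{\varphi}})\to 0$ almost surely in the boundary data, and controls the boundary field precisely through $|e^{i\beta P\tilde{\boldsymbol{\varphi}}}|=1$ together with the bound $e^{-\frac{1}{2\pi}\langle \dd P\tilde{\boldsymbol{\varphi}},\Pi_1^c\omega^c_{{\bf k}^c}\rangle_2}\leq e^{C|{\bf k}^c|}$, exactly the mechanisms you identify. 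The one small point the paper flags that you pass over is the well-definedness of the pairing $\langle \dd u_{\bf z},\nu_{\mathbf{z},\mathbf{m},{\bf k}}\rangle_2$ appearing in the limiting expression, which holds because the singularity of $\dd u_{\bf z}$ at each $z_j$ is radial while that of $\nu_{\mathbf{z},\mathbf{m},{\bf k}}$ is angular, so the two are orthogonal under the Hodge pairing.
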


\begin{proof} Beware that the pairing $\cjg \dd u_{\bf z},\nu_{{\bf z},{\bf m},{\bf k}}\cjd$  makes sense due to the form of the singularity at $z_j$: it is radial for the Green $ \dd u_{\bf z}$ whereas it is angular for $\nu_{{\bf z},{\bf m},{\bf k}}$.
The limit in $\epsilon$ is the same as in the proof of Propositions \ref{limitcorel} and \ref{limitcorelmixed}.  Denoting by $\Delta_{\epsilon,{\bf x}}$ the difference between regularized amplitudes, the argument of Propositions  \ref{limitcorel} and \ref{limitcorelmixed} leads to 
\begin{align*}
|\Delta_{\epsilon,{\bf x}}|\leq  &\sum_{{\bf k}\in \Z^{2\mathfrak{g}}} \caA^0_{\Sigma,g}(\tilde{\boldsymbol{\varphi}}  ) e^{-\frac{1}{4\pi}\|\Pi^c_1\omega _{\bf k^c}^c\|_2^2 +C|{\bf k}|}e^{-\frac{1}{2\pi}\langle \dd P\tilde{\boldsymbol{\varphi}}, \Pi_1^c \omega^c_{\bf k^c}\rangle_2}C_{\epsilon,{\bf x},{\bf z}}(F, \tilde{\boldsymbol{\varphi}})
\end{align*}
for some constant $C_{\epsilon,{\bf x},{\bf z}}(F, \tilde{\boldsymbol{\varphi}})$ such that $\lim_{{\bf x}\to {\bf z}}\lim_{\epsilon\to 0}C_{\epsilon,{\bf x},{\bf z}}(F, \tilde{\boldsymbol{\varphi}})=0$, $\mu_0^{\otimes \mathfrak{b}}$ almost surely in $ \tilde{\boldsymbol{\varphi}}$. Here again, we complete the argument with the fact that $e^{-\frac{1}{2\pi}\langle \dd P\tilde{\boldsymbol{\varphi}}, \Pi_1^c \omega^c_{\bf k^c}\rangle}\leq e^{C|{\bf k}|}$.
\end{proof}

 We prove now that this expression makes sense as an element in  $\mc{H}^{\otimes \mathfrak{b}}$.
 \begin{lemma}\label{amplitudeL^2}
 Let  $(\Sigma,g)$ be an admissible surface with $\mathfrak{b}$ boundary components and parametrizations $\boldsymbol{\zeta}$. 
Let ${\bf m}\in \Z^n$ and $\boldsymbol{\alpha}\in (\frac{1}{R}\Z)^{n}$ satisfying \eqref{seiberg}, ${\bf v}\in (T\Sigma)^n$. If $F\in \mc{E}^{\bf m}_R$,  the amplitudes defined in Definition \ref{def:amp} satisfy
\[\caA_{\Sigma,g,{\bf v},\boldsymbol{\alpha},\bf{m},\boldsymbol{\zeta}} (F,\cdot)\in \mc{H}^{\otimes \mathfrak{b}}\] 
\end{lemma}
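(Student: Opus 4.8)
The plan is to reduce the $L^2$-bound to the finiteness of a correlation function on the \emph{closed double} of $\Sigma$, using the free-field gluing already established in Propositions \ref{glueampli0} and \ref{selfglueampli0} together with the existence of closed-surface correlation functions. First I would record that, after the Girsanov reduction of Lemma \ref{fuckgirsanov} and the limits $\epsilon\to 0$, ${\bf x}\to{\bf z}$ carried out in \eqref{defelecamp}, the Liouville amplitude is of the form $\caA_{\Sigma,g,{\bf v},\boldsymbol{\alpha},{\bf m},\boldsymbol{\zeta}}(F,\tilde{\boldsymbol{\varphi}}^{\bf k})=K(\tilde{\boldsymbol{\varphi}})\,\caA^0_{\Sigma,g,{\bf z},{\bf m},\boldsymbol{\zeta}}(\hat F,\tilde{\boldsymbol{\varphi}}^{\bf k})$, where $\caA^0$ is the free-field amplitude \eqref{amplitudegff},
\[
\hat F(\phi):= F(\phi+u_{\bf z})\, e^{-\frac{i  Q}{4\pi}\int_\Sigma^{\rm reg}K_g(\phi+u_{\bf z})\,\dd {\rm v}_g-\mu M_\beta^g(\phi+u_{\bf z},\Sigma)},
\]
and the factor $K(\tilde{\boldsymbol{\varphi}})$ collects the electric prefactors of \eqref{defelecamp}. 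Among these, the phases $e^{i\alpha_j P\tilde{\boldsymbol{\varphi}}(z_j)}$ and $e^{i\alpha_j I_{x_0}(\omega^c_{\bf k^c}+\nu_{\mathbf{z},\mathbf{m},\mathbf{k}})(z_j)}$ have modulus one (the primitives are real valued), while the remaining prefactors $e^{-\frac12\alpha_j^2W_g(z_j)}$ and $e^{-\alpha_j\alpha_{j'}G_{g,D}(z_j,z_{j'})}$ are real and positive and bounded by a constant $K_0$ independent of $\tilde{\boldsymbol{\varphi}}$.

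Next, since the free-field amplitude is built from a \emph{positive} Girsanov weight $e^{-\frac{1}{2\pi}\langle\cdots\rangle_2}$ (real exponent) and the positive factor $Z_{\Sigma,g}\caA^0_{\Sigma,g}(\tilde{\boldsymbol{\varphi}})$, the triangle inequality applied both under the expectation and under the sum over ${\bf k}^c$ (the phase factors having modulus one) gives the pointwise bound
\[
|\caA_{\Sigma,g,{\bf v},\boldsymbol{\alpha},{\bf m},\boldsymbol{\zeta}}(F,\tilde{\boldsymbol{\varphi}}^{\bf k})|\le K_0\,\caA^0_{\Sigma,g,{\bf z},{\bf m},\boldsymbol{\zeta}}(|\hat F|,\tilde{\boldsymbol{\varphi}}^{\bf k}),
\]
where $|\hat F|$ is the positive functional $\phi\mapsto|\hat F(\phi)|$. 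The functional $|\hat F|$ is $2\pi R$-periodic (the electric, curvature and potential factors are invariant under $\phi\mapsto\phi+2\pi nR$ because $\alpha_jR,\beta R,QR\in\Z$) and is integrable by Proposition \ref{expmoment}, so it lies within the scope of the free-field gluing. Consequently
\[
\|\caA_{\Sigma,g,{\bf v},\boldsymbol{\alpha},{\bf m},\boldsymbol{\zeta}}(F,\cdot)\|^2_{\mc{H}^{\otimes\mathfrak{b}}}\le K_0^2\int_{\mc{H}^{\otimes\mathfrak{b}}}\caA^0_{\Sigma,g,{\bf z},{\bf m},\boldsymbol{\zeta}}(|\hat F|,\tilde{\boldsymbol{\varphi}})^2\,\dd\mu_0^{\otimes\mathfrak{b}}(\tilde{\boldsymbol{\varphi}}).
\]

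Then I would invoke Lemma \ref{reverting} to rewrite the second copy of the amplitude as one on the orientation-reversed surface $\Sigma'$ with opposite charges, $\caA^0_{\Sigma,g,{\bf z},{\bf m},\boldsymbol{\zeta}}(|\hat F|,\tilde{\boldsymbol{\varphi}})=\caA^0_{\Sigma',g,{\bf z},-{\bf m},\boldsymbol{\zeta}}(|\hat F|,\tilde{\boldsymbol{\varphi}})$, so the integrand becomes a product of two amplitudes sharing the same boundary data. Iterating the gluing formulas of Propositions \ref{glueampli0} and \ref{selfglueampli0} over all $\mathfrak{b}$ boundary components (one disconnecting gluing of $\Sigma$ to $\Sigma'$ followed by $\mathfrak{b}-1$ self-gluings) identifies the integral with a constant multiple of the free-field amplitude of $|\hat F|\otimes|\hat F|$ on the closed double $\Sigma^{\#2}=\Sigma\#\Sigma'$, whence
\[
\|\caA_{\Sigma,g,{\bf v},\boldsymbol{\alpha},{\bf m},\boldsymbol{\zeta}}(F,\cdot)\|^2_{\mc{H}^{\otimes\mathfrak{b}}}\le C\,\caA^0_{\Sigma^{\#2},g,({\bf z},{\bf z}),({\bf m},-{\bf m})}(|\hat F|\otimes|\hat F|).
\]
The right-hand side is, up to the modulus replacements, a path integral with electro-magnetic insertions at the doubled points $({\bf z},{\bf z})$ on a \emph{closed} surface, whose charges satisfy $\alpha_j>Q$ and $\sum{\bf m}+\sum(-{\bf m})=0$, i.e. condition \eqref{seiberg}. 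Its finiteness is exactly the existence statement of Theorem \ref{limitcorel}(1): the only input of that argument is the modulus bound of Proposition \ref{expmoment}, together with the integrability \eqref{elem} of the Green-function singularities guaranteed by $\alpha_j>Q$, which hold verbatim for the positive functional $|\hat F|\otimes|\hat F|$ (and for complex $\mu$, since everything is controlled through the modulus).

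The main obstacle I expect is the bookkeeping under doubling: one must check that the regularized curvature terms and the magnetic $1$-forms of $\Sigma$ and $\Sigma'$ assemble correctly into the geometric data on $\Sigma^{\#2}$, which is precisely the additivity content of Lemmas \ref{magcurvtwo} and \ref{magcurvtwoself}, and that the relative/absolute cohomology sums recombine into a full cohomology basis on the double. A secondary point is the commutation of the $\epsilon\to 0$, ${\bf x}\to{\bf z}$ limits with the bound; to avoid this I would apply the modulus inequality at the regularized level \eqref{amplitudereg}, perform the gluing there, and pass to the limit by Fatou's lemma before concluding.
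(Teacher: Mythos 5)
Your proposal is correct and follows essentially the same route as the paper's proof: a pointwise modulus bound of the Liouville amplitude by a positive free-field amplitude of a positive functional, Lemma \ref{reverting} to rewrite the square as a product of amplitudes on $\Sigma$ and its orientation-reversed copy with charges $-{\bf m}$, gluing via Propositions \ref{glueampli0} and \ref{selfglueampli0} to the closed double $\Sigma^{\#2}$ with doubled insertions $({\bf z},\tau({\bf z}))$ and $({\bf m},-{\bf m})$, and finiteness of the resulting closed-surface quantity from the exponential-moment bound of Proposition \ref{expmoment} together with the $\alpha_j>Q$ integrability. The only cosmetic differences are that the paper keeps the modulus-one Girsanov phase $e^{-\frac{1}{2\pi}\langle \dd u_{\bf z},\nu_{\mathbf{z},\mathbf{m},{\bf k}}+\omega^c_{{\bf k}^c}\rangle}$ inside the positive functional $F_{\bf z}$ rather than discarding it, and invokes the argument of Proposition \ref{limitcorelmixed} directly (noting the harmless replacement of $G_g$ by $G_{g,D}$ in $\hat u_{\bf z}$) where you cite Theorem \ref{limitcorel}(1).
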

\begin{proof} Consider another copy of $\Sigma$, call it $\Sigma'$, with reverted orientation. We can glue $\Sigma$ to $\Sigma'$ along the $\mathfrak{b}$ boundary components (the $i$-th boundary component of $\Sigma$ is glued to the corresponding $i$-th boundary component of $\Sigma'$) to get the double surface $\Sigma^{\# 2}$ without boundary, which comes equipped with an involution $\tau: \Sigma^{\# 2}\to \Sigma^{\# 2}$ mapping a point $x$ in $\Sigma$ to its copy in $\Sigma'$. The metric $g$ also extends to $\Sigma^{\# 2}$ to a symmetric metric under $\tau$. We want to prove that the amplitude $(\tilde{\boldsymbol{\varphi}},{\bf k})\mapsto  \caA_{\Sigma,g,{\bf v},\boldsymbol{\alpha},\bf{m},\boldsymbol{\zeta}} (F,\tilde{\boldsymbol{\varphi}}^{\bf k}) $ is in $\mc{H}^{\otimes b}$.

Let us consider the amplitude
\[ \caA^{0}_{\Sigma,g,{\bf z},\bf{m},\boldsymbol{\zeta}} (F_{\bf z},\tilde{\boldsymbol{\varphi}}^{\bf k})\] 
where 
\[F_{\bf z}(\phi)=e^{-\frac{1}{2\pi}\langle \dd u_{\bf z},\nu_{\mathbf{z},\mathbf{m},{\bf k}} +\omega^c_{\bf k^c}\rangle}|F(\phi+u_{\bf z})| |e^{-\mu  M^g_\beta(\phi+u_{\bf z},\Sigma)}|\]
and the function  $u_{\bf z}$ is given by $u_{\bf z}(x)=\sum_{j=1}^{n_{\mathfrak{m}}}i\alpha_jG_{g,D}(x,z_j)$. Note that $ \caA^{0}_{\Sigma,g,{\bf z},\bf{m},\boldsymbol{\zeta}} (F_{\bf z},\tilde{\boldsymbol{\varphi}}^{\bf k})\geq 0$. Furthermore, given the formula for amplitudes \eqref{defelecamp}, we have 
\[ |\caA_{\Sigma,g,{\bf v},\boldsymbol{\alpha},\bf{m},\boldsymbol{\zeta}} (F,\tilde{\boldsymbol{\varphi}}^{\bf k}) |\leq C \caA^{0}_{\Sigma,g,{\bf v},\bf{m},\boldsymbol{\zeta}} (F_{\bf z},\tilde{\boldsymbol{\varphi}}^{\bf k})\]
for some $C$, which takes into account all trivial factors (it may depend on  ${\bf z}$, $\boldsymbol{\alpha}$).  

We can then glue the amplitudes $\caA^{0}_{\Sigma,g,{\bf z},\bf{m},\boldsymbol{\zeta}} (F_{\bf z},\tilde{\boldsymbol{\varphi}}^{\bf k})$ and $\caA^{0}_{\Sigma',g,{\bf z},-\bf{m},\boldsymbol{\zeta}} (F_{\bf z},\tilde{\boldsymbol{\varphi}}^{\bf k})$ using Propositions \ref{glueampli0} and \ref{selfglueampli0} to get
 \[C'\int \caA^{0}_{\Sigma,g,{\bf z},\bf{m},\boldsymbol{\zeta}} (F_{\bf z},\tilde{\boldsymbol{\varphi}}^{\bf k}) \caA^{0}_{\Sigma',\tau^*g,{\bf z},-\bf{m},\boldsymbol{\zeta}} (F_{\bf z},\tilde{\boldsymbol{\varphi}}^{\bf k})\,\dd \mu_0^{\otimes b}(\tilde{\boldsymbol{\varphi}}^{\bf k}) =\caA^{0}_{\Sigma^{\#2},g,\hat{\bf z},\hat{\bf m}} (\hat F_{\bf z}) \]
 for some explicit constant $C'$ appearing in Propositions \ref{glueampli0} and \ref{selfglueampli0}, where $\hat{\bf z}=({\bf z},\tau({\bf z}))$, and $\hat{\bf m}=({\bf m},-{\bf m})$. The functional $\hat{F}_{\bf z}(\phi)$ is given by 
\[\hat{F}_{\bf z}(\phi):=e^{-\frac{1}{2\pi}\langle \dd \hat u_{\bf z},\nu_{\mathbf{z},\mathbf{m} } + \omega_{\bf k}\rangle}|F(\phi+\hat u_{\bf z}|_{\Sigma})|.|F(\phi+\hat u_{\bf z}|_{\Sigma'})|.|e^{-\mu  M^g_\beta(\phi+\hat u_{\bf z},\Sigma^{\#2})}|\] 
and $\hat u_{\bf z}=u_{\bf z}\mathbf{1}_{\Sigma}+\tau^*u_{\bf z} \mathbf{1}_{\Sigma'}$. An argument similar to the proof of Proposition \ref{limitcorelmixed} shows that the amplitude $\caA^{0}_{\Sigma^{\#2},g,\hat{\bf z},\hat{\bf m}} (\hat F_{\bf z})$ is finite (there, we have the Green function $G_g$ on $\Sigma^{\#2}$ instead of the Dirichlet Green function  in the definition of $\hat u_{\bf z}$, but this is harmless to adapt). Therefore, using Lemma \ref{reverting}, we deduce
\[\begin{split}
\int |\caA_{\Sigma,g,{\bf v},\boldsymbol{\alpha},\bf{m},\boldsymbol{\zeta}} (F,\tilde{\boldsymbol{\varphi}}^{\bf k}) |^2\,\dd \mu_0^{\otimes \mathfrak{b}}(\tilde{\boldsymbol{\varphi}}^{\bf k})
&\leq 
C^2\int | \caA^{0}_{\Sigma,g,{\bf z},\bf{m},\boldsymbol{\zeta}} (F_{\bf z},\tilde{\boldsymbol{\varphi}}^{\bf k}) |^2\,\dd \mu_0^{\otimes \mathfrak{b}}(\tilde{\boldsymbol{\varphi}}^{\bf k})
\\
&= C^2\int   \caA^{0}_{\Sigma,g,{\bf z},\bf{m},\boldsymbol{\zeta}} (F_{\bf z},\tilde{\boldsymbol{\varphi}}^{\bf k}) \caA^{0}_{\Sigma',\tau^*g,{\bf z},-\bf{m},\boldsymbol{\zeta}} (F_{\bf z},\tilde{\boldsymbol{\varphi}}^{\bf k})  \,\dd \mu_0^{\otimes \mathfrak{b}}(\tilde{\boldsymbol{\varphi}}({\bf k}))
\\
&=C^2/C' \caA^{0}_{\Sigma^{\#2},g,\hat{\bf z},\hat{\bf m}} (\hat F_{\bf z}) <+\infty.
\end{split}\]
This shows that the amplitudes are in $\mc{H}^{\otimes \mathfrak{b}}$.    Note that the regularized amplitudes are in $\mc{H}^{\otimes b}$ for the same reason. \end{proof}

\subsection{Proof of Propositions \ref{glueampli}, \ref{selfglueampli} and \ref{prop:symmetry}}\label{proof_of_propositions}  
 %%%%%%%%%%%%%%%%%%%%%%%%%%%%%%%%%%%%%%
Now we focus on the proof of Propositions \ref{glueampli}  and \ref{selfglueampli}. We begin with the first one. Recall the setup is described just before the statement of Proposition \ref{glueampli}. First we claim that the gluing property holds for $\epsilon$-regularized amplitudes
 \begin{align}\label{regsegal}
&\caA^{\epsilon}_{\Sigma^{\#},g,{\bf x},{\bf v},\boldsymbol{\alpha},{\bf m},\boldsymbol{\zeta}}
(G_1\otimes G_2,\tilde{\boldsymbol{\varphi}}_1^{{\bf k}_1},\tilde{\boldsymbol{\varphi}}_2^{{\bf k}_2})\\
& =C\int \caA^{\epsilon}_{\Sigma_1,g_1,{\bf x}_1,{\bf v}_1,\boldsymbol{\alpha}_1,{\bf m}_1,\boldsymbol{\zeta}_1}(G_1, \tilde{\boldsymbol{\varphi}}_1^{{\bf k}_1},\tilde{ \varphi}^{k})\caA^{\epsilon}_{\Sigma_2,g_2,{\bf x}_2,{\bf v}_2,\boldsymbol{\alpha}_2,{\bf m}_2,\boldsymbol{\zeta}_2}(G_2,\tilde{\boldsymbol{\varphi}}_2^{{\bf k}_2},\tilde{ \varphi}^k) \dd\mu_0  (\tilde{ \varphi}^k).\nonumber
\end{align}
 where $C= \frac{1}{(\sqrt{2} \pi) }$ if $\partial\Sigma \not =\emptyset$ and $C= \sqrt{2}  $ if $\partial\Sigma =\emptyset$. For this we apply Proposition \ref{glueampli0} with 
$F_i(\phi)=G_i(\phi)\prod_{j=1}^{n^i_{\mathfrak{e}}} V_{\alpha_{ij},g,\epsilon}(x_j)e^{-\frac{i  Q}{4\pi}\int^{\rm reg}_{\Sigma_i} K_g\phi_g\dd {\rm v}_g -\mu M_\beta^g (\phi_g,\Sigma_i)}$ for $i=1,2$. Lemma \ref{magcurvtwo} makes sure that 
$$F_1\otimes F_2(\phi)=G_1\otimes G_2(\phi)\prod_{i=1,2}\Big(\prod_{j=1}^{n^i_{\mathfrak{e}}} V_{\alpha_{ij},g,\epsilon}(x_j)\Big)e^{-\frac{i  Q}{4\pi}\int^{\rm reg}_{\Sigma} K_g\phi_g\dd {\rm v}_g -\mu M_\beta^g (\phi_g,\Sigma)}.$$
In particular, we stress that it is not a priori straightforward, because of the regularization, to see that the regularized curvature on $\Sigma_1$ and $\Sigma_2$ sum up to produce the regularized curvature on $\Sigma$; this is the main outcome of Lemma \ref{magcurvtwo}. 
Then we claim

\begin{lemma}\label{cvepsamp}
We have the convergence
\[\lim_{\mathbf{x}\to\mathbf{z}}\lim_{\epsilon\to 0} \caA^{\epsilon}_{\Sigma,g,{\bf x},{\bf v},\boldsymbol{\alpha},\bf{m},\boldsymbol{\zeta}} (F,\tilde{\boldsymbol{\varphi}}^{\bf k}) = \caA_{\Sigma,g,{\bf v},\boldsymbol{\alpha},\bf{m},\boldsymbol{\zeta}} ( F,\tilde{\boldsymbol{\varphi}}^{\bf k})\text{ in }\mc{H}^{\otimes b}. \]
where ${\bf x}$ tends to ${\bf z}$ in the direction of ${\bf v}$.
\end{lemma}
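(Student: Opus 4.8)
The plan is to upgrade the $\mu_0^{\otimes \mathfrak{b}}$-almost sure convergence established in the preceding lemma to convergence in the norm of $\mc{H}^{\otimes \mathfrak{b}}$, and the natural device is the doubling construction already exploited in Lemma \ref{amplitudeL^2}. Writing the difference as
\[
\big\|\caA^{\epsilon}_{\Sigma,g,{\bf x},{\bf v},\boldsymbol{\alpha},{\bf m},\boldsymbol{\zeta}}(F,\cdot)-\caA_{\Sigma,g,{\bf v},\boldsymbol{\alpha},{\bf m},\boldsymbol{\zeta}}(F,\cdot)\big\|_{\mc{H}^{\otimes \mathfrak{b}}}^2=\|\caA^{\epsilon}\|^2-2\Re\langle \caA^{\epsilon},\caA\rangle_{\mc{H}^{\otimes \mathfrak{b}}}+\|\caA\|^2,
\]
I would reduce the statement to showing that both $\|\caA^{\epsilon}\|^2$ and $\langle \caA^{\epsilon},\caA\rangle$ converge to $\|\caA\|^2$ as $\epsilon\to 0$ and ${\bf x}\to{\bf z}$ in the direction ${\bf v}$. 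Note that all three quantities are finite, since both $\caA^\epsilon$ and $\caA$ lie in $\mc{H}^{\otimes\mathfrak{b}}$ by Lemma \ref{amplitudeL^2} and the remark closing its proof.

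First I would recast the regularized Liouville amplitude as a compactified-boson amplitude with a modified functional: using the imaginary Girsanov computation of Lemma \ref{fuckgirsanov} together with the curvature additivity of Lemma \ref{magcurvtwo}, one has $\caA^{\epsilon}_{\Sigma,g,\dots}(F,\cdot)=\caA^{0}_{\Sigma,g,{\bf z},{\bf m},\boldsymbol{\zeta}}(\tilde F^{\epsilon},\cdot)$ where $\tilde F^{\epsilon}(\phi):=F(\phi)\prod_j V_{\alpha_j,g,\epsilon}(x_j)\,e^{-\frac{iQ}{4\pi}\int_\Sigma^{\rm reg}K_g\phi\,\dd{\rm v}_g-\mu M_\beta^g(\phi,\Sigma)}$ is integrable. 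By Lemma \ref{reverting} and the reality of the base measure $\mu_0$, the conjugate $\overline{\caA^{0}_{\Sigma,g,{\bf z},{\bf m}}(\tilde F^{\epsilon},\cdot)}$ is a free-field amplitude on the orientation-reversed surface $\Sigma'$ with magnetic charges $-{\bf m}$ and conjugated functional. Applying the compactified-boson gluing Propositions \ref{glueampli0} and \ref{selfglueampli0} to the double $\Sigma^{\#2}=\Sigma\#\Sigma'$ then identifies $\|\caA^{\epsilon}\|^2$ with a (constant times a) regularized correlation function on the \emph{closed} surface $\Sigma^{\#2}$, carrying regularized electric insertions at the $2n$ points $\hat{\bf x}=({\bf x},\tau({\bf x}))$ and magnetic charges $\hat{\bf m}=({\bf m},-{\bf m})$. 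Since this uses only the free-field gluing, and not the Liouville gluing of Propositions \ref{glueampli}--\ref{selfglueampli} that we are in the process of proving, no circularity arises.

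Next I would pass to the limit on $\Sigma^{\#2}$. The convergence of regularized correlation functions on a closed surface as $\epsilon\to 0$ is Theorem \ref{limitcorel}, and the merging of the electric points $\hat{\bf x}$ into the electro-magnetic points $\hat{\bf z}=({\bf z},\tau({\bf z}))$ along the directions $(\,{\bf v},\dd\tau\,{\bf v}\,)$ is Theorem \ref{limitcorelmixed}; together they give that $\|\caA^{\epsilon}\|^2$ converges to the closed-surface correlation function on $\Sigma^{\#2}$ with merged electro-magnetic operators, which equals $\|\caA\|^2$ by running the doubling identity backwards with the limiting amplitudes exactly as in Lemma \ref{amplitudeL^2}. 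The cross term $\langle \caA^{\epsilon},\caA\rangle$ is handled identically, gluing the regularized amplitude on $\Sigma$ with the already-merged limit amplitude on $\Sigma'$: this produces the correlation function on $\Sigma^{\#2}$ with regularized electric insertions on the $\Sigma$-side only, which converges to the same limit $\|\caA\|^2$. The same identities furnish the uniform bound $\sup_{\epsilon,{\bf x}}\|\caA^{\epsilon}\|^2<\infty$, legitimizing the interchange of limits.

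The main obstacle is the combinatorial bookkeeping of the doubling identity: one must verify that the regularized curvature terms, the defect graphs, and the harmonic one-forms $\nu_{{\bf z},{\bf m},{\bf k}}$ on $\Sigma$ and on $\Sigma'$ assemble into the corresponding objects on $\Sigma^{\#2}$, so that $\caA^{\epsilon}\,\overline{\caA^{\epsilon}}$ is genuinely the integrand of a correlation function on the double. This is precisely where Lemmas \ref{reverting}, \ref{magcurvtwo} and \ref{magcurvtwoself} are needed, and where the orientation reversal ${\bf m}\mapsto -{\bf m}$ together with the neutrality constraint must be tracked with care. Once this identification is secured, the desired $\mc{H}^{\otimes\mathfrak{b}}$-convergence follows from the already established convergence of correlation functions on closed surfaces.
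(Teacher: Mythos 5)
Your proposal is correct and follows essentially the same route as the paper: doubling the surface, using Lemma \ref{reverting} together with the free-field gluing of Propositions \ref{glueampli0}--\ref{selfglueampli0} to express inner products of regularized amplitudes as correlation-type quantities on the closed double (with conjugated potential $\bar{\mu}M_{-\beta}$ and charges $-{\bf m}$ on the reflected copy), and then re-running the arguments of Theorems \ref{limitcorel} and \ref{limitcorelmixed} there. The only cosmetic difference is that the paper extracts the $\mc{H}^{\otimes \mathfrak{b}}$-convergence via a Cauchy argument with two regularization parameters $\epsilon,\epsilon'$ (identifying the limit through the already-known almost sure convergence), whereas you expand $\|\caA^{\epsilon}-\caA\|^2$ directly against the limit amplitude; both are the same polarization computation.
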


\begin{proof} 
We already know that the convergence holds almost surely. Now we show  that regularized amplitudes form a Cauchy sequence in $\mc{H}^{\otimes b}$. For this we consider the doubled surface $\Sigma^{\#2}$ as before equipped with its involution $\tau$ and we still write $\hat{g}$ for the symmetrized metric $g+\tau_*g$ induced by $g$ on $\Sigma^{\#2}$. For $G\in \mc{E}^{\rm e,m}_R(\Sigma)$, we denote by $\tau(G)\in \mc{E}^{\rm e,m}_R(\tau(\Sigma))$ the functional $\tau(G)(\phi_{g}):=G(\phi_{g}\circ\tau)$. We consider the surfaces with boundary $\Sigma_1=\Sigma$ and $\Sigma_2=\tau(\Sigma)$. We consider the amplitudes 
$ \caA^{\epsilon}_{\Sigma,g,{\bf x},{\bf v},\boldsymbol{\alpha},{\bf m},\boldsymbol{\zeta}}(G, \tilde{\boldsymbol{\varphi}}^{\bf k})$ and $\overline{\caA^{\epsilon'}_{\tau(\Sigma),\tau_*g,\tau({\bf x}),\tau_*{\bf v},\boldsymbol{\alpha},-{\bf m},\tau\circ\boldsymbol{\zeta}}(\tau(G),\tilde{\boldsymbol{\varphi}}^{\bf k})}$, where $\bar{a}$ means complex conjugate of $a$. We make now two observations. First, by Lemma \ref{reverting}, we have the relation
\[\overline{\caA^{\epsilon'}_{\tau(\Sigma),\tau_*g,\tau({\bf x}),\tau_*{\bf v},\boldsymbol{\alpha},-{\bf m},\tau\circ\boldsymbol{\zeta}}(\tau(G),\tilde{\boldsymbol{\varphi}}^{\bf k})}
=\overline{\caA^{\epsilon'}_{\Sigma,g,{\bf x},{\bf v},\boldsymbol{\alpha},{\bf m},\boldsymbol{\zeta}}(G, \tilde{\boldsymbol{\varphi}}{\bf k})}.\]
Second, we have obviously
\begin{align*}
&\overline{\caA^{\epsilon'}_{\tau(\Sigma),\tau_*g,\tau({\bf x}),\tau_*{\bf v},\boldsymbol{\alpha},-{\bf m},\tau\circ\boldsymbol{\zeta}}(\tau(G),\tilde{\boldsymbol{\varphi}}^{\bf k})}
\\
&=\delta_0(\sum_{j=1}^{n_{\mathfrak{m}}} -m_j+\sum_{j=1}^{\mathfrak{b}} k_j)
\sum_{{\bf k}^c\in \Z^{2\mathfrak{g}+\mathfrak{b}-1}} e^{-\frac{1}{4\pi}\|\nu_{\tau(\mathbf{z}),-\mathbf{m},{\bf k}} +\omega^{c}_{{\bf k}^c}\|_{g,0}^2 }Z_{\tau(\Sigma),\tau_*g}\caA^0_{\tau(\Sigma),\tau_*g}(\tilde{\boldsymbol{\varphi}}  ) 
 \nonumber\\
 &
\quad  \E \Big[e^{-\frac{1}{2\pi}\langle \dd X_{\tau_*g,D}+\dd P\tilde{\boldsymbol{\varphi}},\nu_{\tau(\mathbf{z}),-\mathbf{m},{\bf k}} +\omega^{c}_{{\bf k}^c}\rangle}\tau(G)(\phi_{\tau_*g})\prod_{j=1}^{n_{\mathfrak{e}}} V_{-\alpha_j,\tau_*g,\epsilon'}(\tau(x_j))e^{\frac{i  Q}{4\pi}\int^{\rm reg}_{\Sigma'} K_{\tau_*g}\phi_{\tau_*g}\dd {\rm v}_{\tau_*g}
 -\bar{\mu} M_{-\beta}^{\tau_*g} (\phi_{\tau_*g},\tau(\Sigma))}\Big]\nonumber.
\end{align*}
 which we glue together to form an amplitude on $\hat\Sigma$ using Proposition \ref{glueampli0}:
 \begin{align*} 
&\mc{A}^{\epsilon,\epsilon'}_{\Sigma^{\#2},\hat{g},\hat{\bf x},\hat{\bf v},\hat{\boldsymbol{\alpha}},\hat{\bf m}}
(G \otimes \overline{\tau(G)})\\
&\quad =C\int \caA^{\epsilon}_{\Sigma,g,{\bf x},{\bf v},\boldsymbol{\alpha},{\bf m},\boldsymbol{\zeta}}(G, \tilde{\boldsymbol{\varphi}}({\bf k}))\overline{\caA^{\epsilon'}_{\tau(\Sigma),\tau_*g,\tau({\bf x}),\tau_*{\bf v},\boldsymbol{\alpha},-{\bf m},\boldsymbol{\zeta}}(\tau(G),\tilde{\boldsymbol{\varphi}}^{\bf k})}\dd\mu_0^{\otimes b}  (\tilde{\boldsymbol{\varphi}}^{\bf k}).
\end{align*}
where $\hat{\bf x}:=({\bf x},\tau({\bf x})),\hat{\bf v}:=({\bf v},\tau_*{\bf v}), \hat{\boldsymbol{\alpha}}=(\boldsymbol{\alpha},\boldsymbol{\alpha}), \hat{\bf m}=({\bf m},-{\bf m})$, and the amplitude type functional has the following expression
\begin{align*}
&\caA^{\epsilon,\epsilon'}_{\Sigma^{\#2},\hat{g},\hat{\bf x},\hat{\bf v},\hat{\boldsymbol{\alpha}},\hat{\bf m}}
(G \otimes  \tau(G))\\
=&
 \big(\frac{{\rm v}_{g}(\Sigma)}{{\det}'(\Delta_{g})}\big)^\hf\sum_{{\bf k}\in \Z^{2\mathfrak{g}}}e^{-\frac{1}{4\pi}\|\omega _{\bf k}\|_2^2-\frac{1}{4\pi}\|\nu_{\mathbf{z},\mathbf{m}}\|^2_{\hat{g},0}-\frac{1}{2\pi}\langle \omega_{\bf k},\nu_{\mathbf{z},\mathbf{m}}\rangle_2}
 \\ 
 & \int_{0}^{2\pi R}\E\Big[e^{-\frac{1}{2\pi}\langle \dd X_{g},\omega_{\bf k}+\nu_{\mathbf{z},\mathbf{m}} \rangle}G\otimes \overline{\tau(G)}(\phi_{\hat{g}})
 V_{(\boldsymbol{\alpha},0)}^{g,\epsilon}(\mathbf{x})\overline{V_{(\boldsymbol{\alpha},0)}^{g,\epsilon'}(\tau(\mathbf{x}))}e^{-\frac{i   Q}{4\pi}\int_{\Sigma_1 }^{\rm reg} K_{\hat{g}}\phi_{\hat{g}}\,\dd {\rm v}_{\hat{g}}+\frac{i   Q}{4\pi}\int_{\Sigma_2 }^{\rm reg} K_{\hat{g}}\phi_{\hat{g}}\,\dd {\rm v}_{\hat{g}} -  \hat{M}^{\hat{g}}_\beta(\phi_{\hat{g}},\hat\Sigma)}\Big]\,\dd c 
\end{align*}
where $\phi_{\hat{g}}:=c+X_{\hat{g}}+I_{x_0}^{\boldsymbol{\sigma}}(\omega _{\bf k})+I^{\hat{\boldsymbol{\xi}}}_{x_0}(\nu_{\hat{\mathbf{z}},\hat{\mathbf{m}}})$, the potential  is given by 
\begin{align*}
& \hat M^{\hat g,\eps}_{\beta}(h,\dd x):= \mu \eps^{-\frac{\beta^2}{2}}e^{i\beta h_{ \eps}(x)}\mathbf{1}_{\Sigma_1}(x){\rm dv}_{\hat{g}}(x)+\bar\mu  \eps^{-\frac{\beta^2}{2}}e^{-i\beta h_{ \eps}(x)}\mathbf{1}_{\Sigma_2}(x){\rm dv}_{\hat{g}}(x),\\
& \hat{M}^{\hat{g}}_\beta(\phi_{\hat{g}},\Sigma^{\#2})=\lim_{\epsilon\to 0}\hat{M}^{\hat{g},\epsilon}_\beta(\phi_{\hat{g}},\Sigma^{\#2}),\end{align*}
and $V_{(\boldsymbol{\alpha},0)}^{g,\epsilon}(u,\mathbf{x})$ are   electric type operators.
% are
%$$V_{\alpha,g,\eps}(x)=\eps^{-\alpha^2/2}  e^{i\alpha u_{g,\epsilon}(x) } e^{i\alpha I_{x_0}^\sigma(\omega_{\bf k})(x)+i\alpha I_{x_0}(\nu^{\rm h}_{{\bf z}, {\bf k}})}$$
Note that the zero mode contributions from both curvature terms cancel out, hence the curvature term remains $c$-periodic. We can then follow the proof of Propositions \ref{limitcorel} and \ref{limitcorelmixed} to take the limit as $\epsilon,\epsilon'\to 0$, then $\mathbf{x}{\to}\mathbf{z}$ in the direction of ${\bf v}$ and obtain a limit that does not depend on choice of the families $\epsilon,\epsilon'$. We stress that the condition to follow these proofs remain $\alpha_j>Q$ since the electric insertions with point $x$ in $\Sigma_1$ will create a singularity  in $\Sigma_1$, and the electric insertions at point $\tau(x)$ create a singularity only on $\Sigma_2$ where the potential has a reversed sign $-\beta$.

For the last part of the argument, we shortcut $\caA^{\epsilon}_{\Sigma,g,{\bf x},{\bf v},\boldsymbol{\alpha},\bf{m},\boldsymbol{\zeta}} (F,\tilde{\boldsymbol{\varphi}}^{\bf k})$ as $\caA_{\bf x}(\epsilon)$ and we denote by $L$ the limit of $\langle \caA_{\bf x}(\epsilon),\caA_{\bf x}(\epsilon')\rangle_{\mc{H}^{\otimes \mathfrak{b}}}=\int \caA_{\bf x}(\epsilon)\overline{\caA_{\bf x}(\epsilon')}\dd\mu_0^{\otimes \mathfrak{b}}  (\tilde{\boldsymbol{\varphi}}^{\bf k})$. We have shown that 
\[\begin{split}
\|\caA_{\bf x}(\epsilon)-\caA_{\bf x}(\epsilon')\|^2_{\mc{H}^{\otimes \mathfrak{b}}}=& \langle \caA_{\bf x}(\epsilon),\caA_{\bf x}(\epsilon) \rangle_{\mc{H}^{\otimes \mathfrak{b}}}-\langle \caA_{\bf x}(\epsilon),\caA_{\bf x}(\epsilon')\rangle_{\mc{H}^{\otimes \mathfrak{b}}}-\langle  \caA_{\bf x}(\epsilon'),\caA_{\bf x}(\epsilon)\rangle_{\mc{H}^{\otimes \mathfrak{b}}}+\langle \caA_{\bf x}(\epsilon') ,\caA_{\bf x}(\epsilon')\rangle_{\mc{H}^{\otimes \mathfrak{b}}}
\\
 \to&  L-L-L+L=0
\end{split}\]
as $\epsilon,\epsilon'\to 0$, then $\mathbf{x}{\to}\mathbf{z}$ in the direction ${\bf v}$. 
Hence, the sequence $\caA_{\bf x}(\epsilon)$ is Cauchy in $\mc{H}^{\otimes b}$, and then converges in $\mc{H}^{\otimes \mathfrak{b}}$ towards the amplitude (since we know almost sure convergence already holds).
\end{proof}

Now we complete the proof of Proposition \ref{glueampli}. Back to \eqref{regsegal}, the above Lemma shows that the regularized amplitudes on respectively $\Sigma_1$ and $\Sigma_2$, in probability in  $ \tilde{\boldsymbol{\varphi}}_1^{{\bf k}_1}$ and $ \tilde{\boldsymbol{\varphi}}_2^{{\bf k}_2} $, converge as a function of $ \tilde{ \varphi}^k)$ in $\mc{H}$ towards the limiting amplitudes. We can then pass to the limit  $\epsilon\to 0$ and then $\lim_{\mathbf{x}\to \mathbf{z}}$ in \eqref{regsegal} to get our claim.

\bigskip
The case of self-gluing, namely Prop. Proposition   \ref{selfglueampli},  is slightly more subtle. Indeed self-gluing can be seen as a partial trace and it is not clear that this partial trace makes sense in generality. In \cite[Lemma 7.2]{GKRV2}, it is shown that the partial trace makes sense if we can show that amplitudes are composition of Hilbert-Schmidt operators. Observe that an annulus with Out/In boundary component can be seen as the integral kernel of some Hilbert-Schmidt operator $\mc{H}\to\mc{H}$, since we have proved that amplitudes are $L^2$. Therefore, any (regularized) amplitude can be seen as a composition of Hilbert-Schmidt operators because, for any surface $\Sigma$ with boundary, one can  see $\Sigma $ as the gluing  of the surface obtained by removing from $\Sigma$ small annuli around the boundary components and those annuli. The corresponding regularized amplitudes converge in $L^2$ towards the limiting amplitudes by Lemma \ref{cvepsamp}. It is then straightforward to pass to the limit in $\epsilon\to 0$  and then $\lim_{\mathbf{x}{\to}\mathbf{z}}$ in the analog of \eqref{regsegal} for the case of self-gluing to deduce Proposition \ref{selfglueampli}. Note that in the case of self-gluing, the behaviour of the regularized curvature is treated in Lemma \ref{magcurvtwoself}.

\subsubsection{Proof of Proposition \ref{prop:symmetry}}  
  %%%%%%%%%%%%%%%%%%%%%%%%%%%%%%%%%%%%%%

  Before proving the remaining statements, let us explain here a point that is crucial to understand the definition of amplitudes. The GFF expectation with respect to the Dirichlet GFF can absorb via the Girsanov transform any shift in the path integral by exact forms of the form $\dd f$ for some smooth function $f$ vanishing along the boundary (we call these forms exact forms of Dirichlet type). This means that the path integral features invariance along the relative cohomology classes. For instance if we want to change the relative (co-)homology basis, this can be performed as in the proof of Proposition \ref{propdefpath} because changing relative cohomology basis amounts to shifting the original basis by exact forms of Dirichlet type, up to the invariance of the curvature term with respect to homology basis.  This point that does not rise any difficulty. 
What is   more subtle is that the amplitudes are invariant under change of 1-forms in the absolute cohomology, i.e. the 1-form $\nu_{\mathbf{z},\mathbf{m},\mathbf{k}}$. Let us consider another 1-form $ \nu'_{\mathbf{z},\mathbf{m},\mathbf{k}}$ satisfying our assumptions (i.e.  Proposition \ref{harmpoles} and \eqref{btbaRz}). Then the difference can be expressed (see Lemma \ref{existence_primitive}) as $\nu_{\mathbf{z},\mathbf{m},\mathbf{k}}-\nu'_{\mathbf{z},\mathbf{m},\mathbf{k}}=\dd f$ for some smooth smooth function $f$ on $\Sigma$  such that $\pl_\nu f|_{\pl \Sigma}=0$ with $\nu$ the unit vector normal to $\pl\Sigma$ (such exact forms are said to be of Neumann type). This type of shifts cannot be absorbed by the Dirichlet GFF. This is where our complete set of assumptions will be crucial. These forms $\nu_{\mathbf{z},\mathbf{m},\mathbf{k}}$ are required to be of the form $k\dd \theta$ in local coordinates near the boundary components and near the marked points, this then forces $f$ to be locally constant near the boundary components.
From \eqref{btbaRz}, it is then plain to check that one can decompose $f$ as $f(x)=C+h(x)+\bar{f}(x)$, where $h$ is  smooth  on $\Sigma$ and constant near the boundary components/marked points with values in $R\Z$, $\bar{f}$ is smooth and vanishes on $\pl \Sigma$ and $C$ is a constant fixed to have $f(x_0)=0$. Next, we observe that there is ${\bf k}_0^c\in \Z^{2\mathfrak{g}+\mathfrak{b}-1}$ such that $\dd h=\omega^c_{{\bf k}_0^c}+\dd f_{{\bf k}_0^c}$ for some exact form $\dd f_{{\bf k}_0^c}$ of Dirichlet type by Lemma   \ref{existence_primitive} (in fact only the boundary-to-boundary arcs matter, meaning ${\bf k}_0^c=({\bf k}_0^{ic},{\bf k}_0^{bc})\in \Z^{2\mathfrak{g}}\times\Z^{\mathfrak{b}-1}$ with ${\bf k}_0^{ic}=0$). Next we plug the relation $\nu_{\mathbf{z},\mathbf{m},\mathbf{k}}=\nu'_{\mathbf{z},\mathbf{m},\mathbf{k}}+\dd f $ in the expression for amplitudes \eqref{amplitude}, we perform a change of variables ${\bf k}^c\to{\bf k}^c-{\bf k}_0^c $ in the summation $\sum_{{\bf k}^c\in \Z^{2\mathfrak{g}+\mathfrak{b}-1}} $ to absorb the $\omega^c_{{\bf k}_0^c}$-component of $f$. Finally the exponential term in the expectation in \eqref{amplitude} produces a term $e^{-\frac{1}{2\pi}\langle \dd X_{g,D},\dd \bar{f}+  \dd f_{{\bf k}_0^c}\rangle}$, to which we apply the Girsanov transform. This absorbs the $\bar{f}+ f_{{\bf k}_0^c}$ component of $f$ and in conclusion we get the expression \eqref{amplitude} where $\nu_{\mathbf{z},\mathbf{m},\mathbf{k}}$  has been replaced by $\nu'_{\mathbf{z},\mathbf{m},\mathbf{k}}$. Hence the invariance under changes of representatives of 1-form in the absolute cohomology.

\medskip Now we turn to the Weyl anomaly. We have to deal with a change of conformal metrics $g'=e^{\rho} g$ in the definition \eqref{amplitude}.    
Recall from \eqref{relationentrenorm} that $M_\beta^{g'}(X_{g',D},\dd x)=e^{-\frac{\beta Q}{2}\rho(x)}M_\beta^{g}(X_{g,D},\dd x)$ and  (recall \eqref{defvertex}) from \eqref{varYg} that $V_{\alpha_i,g'}(x_i)=e^{-\frac{\alpha_i^2}{4}\rho(x_i)}V_{\alpha_i,g}(x_i)$ (the last identity making sense when inserted inside expectation values). 
Also, from the relation for curvatures $K_{g'}=e^{-\rho}(K_g +\Delta_g\rho)$ we deduce that the term involving the curvature in  \eqref{amplitude}, given by \eqref{defcurvamp}, reads (recall that $X_{g',D}=X_{g,D}$)
\begin{align*}
 \int^{\rm reg}_\Sigma K_{g'} \phi_g\dd {\rm v}_{g'}=& \int_\Sigma  K_g (X_{g,D}+P\tilde{\boldsymbol{\varphi}})\dd {\rm v}_g+  \int_\Sigma  \Delta_g\rho (X_{g,D}+P\tilde{\boldsymbol{\varphi}})\dd {\rm v}_{g}
\\
&
+\int^{\rm reg}_\Sigma K_{g'}I^{\boldsymbol{\sigma}}_{x_0}(\omega^{c}_{{\bf k}^c})\dd {\rm v}_{g'}+\int^{\rm reg}_\Sigma K_{g'}I^{\boldsymbol{\xi}}_{x_0}(\nu_{\mathbf{z},\mathbf{m},\mathbf{k}})\dd {\rm v}_{g'}.
\end{align*}
The last two terms can be expressed in the metric $g$ thanks to Lemmas \ref{conformal_change_reg_int_open} and \ref{magcurvconfopen}.
In the second term in the right hand side, the contribution of the harmonic extension is treated  by the Green identity to produce (with $\nu$ the unit inward pointing normal at $\pl\Sigma$)
$$ \int_\Sigma  \Delta_g\rho  P\tilde{\boldsymbol{\varphi}} \dd {\rm v}_g= \int_{\partial \Sigma}\partial_\nu \rho P\tilde{\boldsymbol{\varphi}}\,\dd \ell_{g}=0$$%=- \frac{Qr}{2}\sum_j c_j.$$
since the fact that both $g,g'$ are admissible entails that $\partial_\nu \rho $ must vanish.
Therefore   \eqref{amplitude} expressed in the metric $g'$ can be rewritten as
\begin{align}\label{amplitudeg'}
& \caA_{\Sigma,g',{\bf v},\boldsymbol{\alpha},\bf{m},\boldsymbol{\zeta}} (F,\tilde{\boldsymbol{\varphi}}^{\bf k}) \\
& : =
 \delta_0(\sum_{j=1}^{n_{\mathfrak{m}}+\mathfrak{b}} m_j({\bf k}))\lim_{\mathbf{x}\to\mathbf{z}}\lim_{\eps\to 0}\sum_{{\bf k}^c\in \Z^{2\mathfrak{g}+\mathfrak{b}-1}} e^{-\frac{1}{4\pi}\|\nu_{\mathbf{z},\mathbf{m},{\bf k}} +\omega^{c}_{{\bf k}^c}\|_{g,0}^2 }Z_{\Sigma,g'}\caA^0_{\Sigma,g'}(\tilde{\boldsymbol{\varphi}}  ) e^{-\sum_{j=1}^{n_{\mathfrak{m}}}\frac{\alpha_j^2}{4}\rho(x_j)-\sum_{j=1}^{n_{\mathfrak{m}}}\frac{m_j^2R^2}{16\pi^2}\rho(z_j)}
 \nonumber\\
 &
 \times \E \Big[e^{ -i\frac{Q}{4\pi}\int_\Sigma  \Delta_g\rho  X_{g,D} \dd {\rm v}_g}e^{-\frac{1}{2\pi}\langle \dd X_{g,D}+\dd P\tilde{\boldsymbol{\varphi}},\nu_{\mathbf{z},\mathbf{m},{\bf k}} +\omega^{c}_{{\bf k}^c}\rangle}F(\phi_g)\prod_{j=1}^{n_{\mathfrak{m}}} V_{\alpha_i,g,\epsilon}(x_i)e^{-\frac{i  Q}{4\pi}\int^{\rm reg}_\Sigma K_g\phi_g\dd {\rm v}_g
 -\mu M_\beta^g (\phi_g+iQ\rho/2,\Sigma)}\Big]\nonumber
\end{align}
where we have used Lemma \ref{renorm_L^2}   to switch the metric in the term involving the regularized norm. Notice that $\mc{A}^0_{\Sigma,g}(\tilde{\boldsymbol{\varphi}})=\mc{A}^0_{\Sigma,g'}(\tilde{\boldsymbol{\varphi}})$ since the quadratic form 
$({\bf D}_\Sigma \tilde{\boldsymbol{\varphi}},\tilde{\boldsymbol{\varphi}})_{2}=\int_{\Sigma}|\nabla^gP\tilde{\boldsymbol{\varphi}}|_g^2{\rm dv}_g$ 
depends only on the conformal class of $g$ for $\tilde{\boldsymbol{\varphi}}$ smooth (and using the density of $C^\infty(\pl \Sigma)\subset 
H^{s}(\pl\Sigma)$).

Now we apply the Girsanov transform to the first exponential term $e^{ -i\frac{Q}{4\pi}\int_\Sigma  \Delta_g\rho  X_{g,D} \dd {\rm v}_g}$. Again, we have to be cautious because of the imaginary factor $i$; we should perform an analytic continuation argument, which is possible here because $F\in\mc{E}_R(\Sigma)$, as in Proposition \ref{propdefpath}. We omit the details.  The variance of this transform is given by (it is negative because of the $i^2$)
$$- \frac{Q^2}{16\pi^2}\int_{\Sigma^2}  \Delta_g\omega (x) G_D(x,y)   \Delta_g\omega (y)   {\rm dv}_g(x) {\rm dv}_g(y)=- \frac{Q^2}{8\pi}\int_{\Sigma}|\dd\omega|_g^2{\rm dv}_g.$$
It has the effect of shifting the mean of the GFF $X_{g,D}$, i.e. $X_{g,D}$ becomes $X_{g,D}-i\frac{Q}{2}\rho$.  Then we see that we get almost the result, up to the $+1$ in front of the Liouville functional $\frac{  1-6Q^2}{96\pi}\int_{\Sigma}(|d\rho|_g^2+2K_g\rho) {\rm dv}_g$. This $+1$ comes from the variations of the regularized determinant of Laplacian, here $Z_{\Sigma,g'}$. The Polyakov formula for the regularized determinant of Laplacian \cite[section 1]{OPS}  implies that (for admissible $g,g'$)
%\rr{Propbably $-$ in front of the third term below}
\begin{align*}
 Z_{\Sigma,  g'}=& \det (\Delta_{g,D})^{-\hf}\exp\Big(\frac{1 }{96\pi}\int_{\Sigma}(|\dd\omega|_g^2+2K_g\omega) {\rm dv}_g   \Big).
\end{align*}
This completes the argument for the Weyl anomaly.  

The spin property follows the same argument as in Corollary \ref{corospin}.

\section{The irrational case} \label{irrational}
%%%%%%%%%%%%%%%%%%%%%%%%%%%%%%%%%%%%%%%%%%%%%%%%%%%%%%%%%%%%%%%%%%%
  Again, we work  throughout this section  under the constraint $\beta^2<2$.  We also assume that $\beta>0$  and we further impose the compactification radius $R>0$ to obey 
\begin{equation}
 \beta\Z \subset\frac{1}{R}\Z.
\end{equation}
Let us introduce  a further parameter    $\mu\in \C\setminus\{0\}$ and set
 \begin{equation}\label{valueQbis}
  Q= \frac{\beta}{2}- \frac{2}{\beta}.
 \end{equation} 
 Finally we introduce the central charge   ${\bf c}=1-6Q^2$. The crucial assumption is now
 \begin{equation}\label{valueQirra}
 Q\notin \frac{1}{R}\Z,
  \end{equation}
 in which case the central charge is irrational.

%%%%%%%%%%%%%%%%%%%%%%%%%%%%%%%%%%%%%%%%
%%%%%%%%%%%%%%%%%%%%%%%%%%%%%%%%%%%%%%%%%%%%%%%%%%%%%%%%%%%%%%%%%%%%%%%%%%%%%%%%%%
\subsection{Path integral and correlation functions}\label{subsecPIirra}
%%%%%%%%%%%%%%%%%%%%%%%%%%%%%%%%%%%%%%%%%%%%%%%%%%%%%%%%%%%%%%%%%%%%%%%%%%%%%%%%%%
%%%%%%%%%%%%%%%%%%%%%%%%%%%%%%%%%%%%%%%%%%%%%%%%%%%%%%%%%%%%%%%%%%%%%%%%%%%%%%%%%%

 Consider now a closed Riemann surface $\Sigma$  equipped with a metric $g$.  To construct the path integral, we need: the same material as in the rational case, namely we assume Assumption \ref{ass} is in force. The construction of the path integral is similar to the rational with the main difference that we need to consider electric operators with charges that do not belong to $\frac{1}{R}\Z$, which we will abusively call irrational charges as opposed to charges $\alpha_j\in  \frac{1}{R}\Z$ which we will call rational. The reason is that the contribution of the zero mode in the curvature term is no  more rational (since $Q\notin \frac{1}{R}\Z$) and therefore those irrational charges  will be used to compensate for the curvature contribution.
 
 More precisely, let $z_1,\dots, z_{n_\mathfrak{m}},z_1',\dots, z'_{n_\mathfrak{i}}$ be distinct points on $\Sigma$.  For each  point $z_j$ we   assign a unit tangent vector $v_j\in T_{z_j}\Sigma$,  a magnetic charge $m_j\in\Z$ and an (rational) electric charge   $\alpha_j\in\frac{1}{R}\Z$. For each point $z'_j$ we   assign   an (irrational) electric charge   $\alpha'_j\in\R$.
 
 We collect those datas in $\mathbf{z}=(z_1,\dots,z_{n_\mathfrak{m}}) \in\Sigma^{n_\mathfrak{m}}$,   
$\mathbf{v}=((z_1,v_1),\dots,(z_{n_\mathfrak{m}},v_{n_\mathfrak{m}}))\in (T\Sigma)^{n_\mathfrak{m}}$ and $\mathbf{m}\in\Z^{n_\mathfrak{m}}$, $\boldsymbol{\alpha}:=(\alpha_1,\dots,\alpha_{n_{\mathfrak{e}}})\in \R^{n_{\mathfrak{m}}}$,  $\mathbf{z}'=(z'_1,\dots,z'_{n_\mathfrak{i}}) \in\Sigma^{n_\mathfrak{i}}$ and $\boldsymbol{\alpha}':=(\alpha'_1,\dots,\alpha'_{n_{\mathfrak{i}}})\in \R^{n_{\mathfrak{i}}}$. We assume that $\sum_{j=1}^{n_\mathfrak{m}} m_j=0$ and $\sum_j\alpha_j'\in \chi(\Sigma)Q+\frac{1}{R}\Z$. Note that there are no magnetic charges attached to the irrational charges.

The path integral is then defined as in Section \ref{secPI} and correlation functions will be denoted by 
$$\cjg   F  V^g_{(\boldsymbol{\alpha},\mathbf{m})}({\bf v})V^g_{\boldsymbol{\alpha}' }({\bf z}')  \cjd_ {\Sigma, g}$$
 for $F\in\mc{L}^{\infty,p}_{\rm e,m}(H^s(\Sigma))$. The same proofs as for the rational case shows the following result:
\begin{theorem}\label{irrational_case}
Assume that $\sum_{j=1}^{n_\mathfrak{m}} m_j=0$ and $\sum_j\alpha_j'\in \chi(\Sigma)Q+\frac{1}{R}\Z$ and that for all $j$, $\alpha_j>Q$ and $\alpha_j'>Q$. Then:
\begin{enumerate}
\item the correlation functions are well defined for $F\in\mc{L}^{\infty,p}_{\rm e,m}(H^s(\Sigma))$. 
\item the quantity $\cjg   F  V^g_{(\boldsymbol{\alpha},\mathbf{m})}({\bf v})V^g_{\boldsymbol{\alpha}' }({\bf z}')  \cjd_ {\Sigma, g}$ does not depend on the base point $x_0$ and it does not depend on the choice of cohomology basis dual to $\sigma$.
\item the correlation function does not depend on the curves $\sigma$ provided 
we compare two families of curves that are images one to each other by a diffeomorphism isotopic to the identity and being the identity on the points with weight that do not beloong to $\frac{1}{R}\Z$ 
\item  the quantity $\cjg   F  V^g_{(\boldsymbol{\alpha},\mathbf{m})}({\bf v})V^g_{\boldsymbol{\alpha}' }({\bf z}')  \cjd_ {\Sigma, g}$  depends on the location of the points ${\bf v}=(z_j,v_j)_{j=1,\dots,n_{\mathfrak{m}}}$ in $T\Sigma$ and the charges $\mathbf{m}$, but not  on the defect graph.
\item 
{\bf Conformal anomaly:} let $g'=e^{\rho}g$ be two conformal metrics on the closed Riemann surface $\Sigma$ for some $\rho\in C^\infty(\Sigma)$. We have
\begin{align}\label{confanirra} 
& \cjg   F  V^{g'}_{(\boldsymbol{\alpha},\mathbf{m})}({\bf v})V^{g'}_{\boldsymbol{\alpha}' }({\bf z}')  \cjd_ {\Sigma, g'}  \\
 &=\big\cjg   F(\cdot- \tfrac{i  Q}{2}\rho) V^{g}_{(\boldsymbol{\alpha},\mathbf{m})}({\bf v})V^{g}_{\boldsymbol{\alpha}' }({\bf z}')  \big \cjd_ {\Sigma, g} 
e^{\frac{{\bf c}}{96\pi}\int_{\Sigma}(|d\rho|_g^2+2K_g\rho) {\rm dv}_g-\sum_{j=1}^{n_{\mathfrak{m}}}\Delta_{(\alpha_j,m_j)}\rho(z_j)-\sum_{j=1}^{n_{\mathfrak{i}}}\Delta_{(\alpha'_j,0)}\rho(z_j')}\nonumber
\end{align}
where the real numbers $\Delta_{\alpha,m}$ are defined by the relation \eqref{deltaalphadef} for $\alpha\in\R$
and the central charge is ${\bf c}:=1-6 Q^2 $.
\item {\bf Diffeomorphism invariance:} let $\psi:\Sigma\to \Sigma$ be an orientation preserving diffeomorphism whose mapping class is an element of the Torelli group. Then, recalling the notations of Theorem \ref{limitcorel},
\begin{equation}\label{diffinvariance}
\big\cjg  F (\phi_{\psi^*g}) V_{(\boldsymbol{\alpha},\mathbf{m})}^{\psi^*g}(\mathbf{v})V^{\psi^*g}_{\boldsymbol{\alpha}'}({\bf z}')  \big\cjd_ {\Sigma, \psi^*g}=\big\cjg  F(\phi_g\circ\psi)  V_{(\boldsymbol{\alpha},\mathbf{m})}^{g}(\psi_{*}\mathbf{v})V^{g}_{\boldsymbol{\alpha}'}(\psi({\bf z}'))  \big\cjd_ {\Sigma, g}.
\end{equation}
\item {\bf Spins:} with $r_{\boldsymbol{\theta}}\mathbf{v}:=((z_1,r_{\theta_1}v_1),\dots,(z_{n_\mathfrak{m}},r_{\theta_{n_\mathfrak{m}}} v_{\theta_{n_\mathfrak{m}}}))$, then
\begin{equation}\label{spinrelation} 
 \big\langle   F  V^g_{(\boldsymbol{\alpha},\mathbf{m})}(r_{\boldsymbol{\theta}}{\bf v})V^g_{\boldsymbol{\alpha}' }({\bf z}')  \big\rangle_{\Sigma,g } =e^{-i QR\langle\mathbf{m},\boldsymbol{\theta}\rangle  }  \langle F  V^g_{(\boldsymbol{\alpha},\mathbf{m})}( {\bf v})V^g_{\boldsymbol{\alpha}' }({\bf z}') \rangle_{\Sigma,g } .
 \end{equation}
\end{enumerate}
\end{theorem}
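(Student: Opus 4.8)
The plan is to run the proofs of Theorems \ref{limitcorel} and \ref{limitcorelmixed} and of Propositions \ref{propdefpath} and \ref{propdefpathmag} essentially line by line, isolating the few places where the hypothesis $Q\in\frac{1}{R}\Z$ was actually used and replacing it with the weaker neutrality condition $\sum_j\alpha_j'\in\chi(\Sigma)Q+\frac{1}{R}\Z$. With this in mind, existence (1), the conformal anomaly (5) and the spin relation (7) transfer almost verbatim. Existence rests only on the imaginary GMC exponential-moment bound of Proposition \ref{expmoment} together with the Cameron--Martin/analytic-continuation scheme, neither of which sees the arithmetic nature of $Q$; the constraints $\alpha_j>Q$ and $\alpha_j'>Q$ are exactly what guarantees integrability of the singularities of type \eqref{elem} at all insertion points. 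The conformal anomaly follows the computation of Theorem \ref{limitcorel}(2): the one delicate step is the Fourier selection in the $c$-integral, where one expands the potential and keeps the single power $p$ for which $\frac{1}{R}n+\sum_j\alpha_j+\sum_j\alpha_j'-\chi(\Sigma)Q+p\beta=0$. The hypothesis $\sum_j\alpha_j'\in\chi(\Sigma)Q+\frac{1}{R}\Z$ is precisely what makes this neutrality condition solvable in $(n,p)$ and forces the prefactor $\exp(-\tfrac{Q}{2}m_g(\rho)(\cdots))$ to equal $1$, so that the $m_g(\rho)$-terms cancel just as in the rational case. The spin relation is again Corollary \ref{corospin}, a pure Gauss--Bonnet argument insensitive to $Q$.

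The crucial new feature, and the reason the statement of (3) and (6) is weaker than in Theorem \ref{limitcorel}, is the genuine appearance of rationality in Lemma \ref{independence_basis}: changing the geometric symplectic basis of $\mc{H}_1(\Sigma)$ shifts the regularized curvature by an element of $8\pi^2R\Z$, hence multiplies the path integral by a phase $e^{-\frac{iQ}{4\pi}\cdot 8\pi^2R n}=e^{-2\pi iQRn}$. In the rational case $QR\in\Z$ kills this phase; here $QR\notin\Z$, so the correlation functions are \emph{not} invariant under an arbitrary change of homology marking. This is exactly why diffeomorphism and curve invariance can only be asserted for maps isotopic to the identity fixing the irrational insertions, and for mapping classes lying in the Torelli group.

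For the remaining invariance statements the mechanism is as follows. Base-point and cohomology-basis independence (2) go through unchanged: a base-point change shifts the zero mode $c$ and is absorbed by the $\dd c$-integral, while a change of cohomology basis dual to a fixed $\sigma$ shifts $\omega_{\bf k}$ by an exact form, absorbed by the Girsanov transform together with a $c$-shift, the extra factors $\prod_j e^{i\alpha_j'I_{x_0}(\omega_{\bf k})(z_j')}$ changing only by the deterministic amount $e^{i\alpha_j'(f(z_j')-f(x_0))}$ that combines with the same shift. Defect-graph independence (4) is Lemma \ref{inv_par_graphe} (Gauss--Bonnet) together with the observation that, the periods of $\nu^{\rm h}_{\mathbf{z},\mathbf{m}}$ over $\mc{H}_1(\Sigma)$ vanishing, the values $I^{\boldsymbol{\xi}}_{x_0}(\nu^{\rm h}_{\mathbf{z},\mathbf{m}})(z_j')$ are unaffected by the S-, A- and D-moves as long as they do not sweep across the $z_j'$. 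For (3) and (6) one replaces Lemma \ref{independence_basis} by the \emph{local} invariance Lemma \ref{lem:local_invariance}: if $\psi$ is in the Torelli group then $\psi_*$ acts trivially on $\mc{H}_1(\Sigma)$, so $\psi(\sigma)$ represents the same symplectic homology classes as $\sigma$, and Lemma \ref{lem:local_invariance} gives equality of the two regularized curvature integrals \emph{without} invoking the $8\pi^2R\Z$ ambiguity. Combining this with Lemma \ref{lemcurvdiff} and the standard covariances $G_{\psi^*g}(x,y)=G_g(\psi(x),\psi(y))$, $K_{\psi^*g}=K_g\circ\psi$ and $X_{\psi^*g}\stackrel{law}{=}X_g\circ\psi$, exactly as in Theorem \ref{limitcorel}(3), yields \eqref{diffinvariance}.

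The hard part will be checking that, under a Torelli map or an identity-isotopic curve change, every multivalued object carrying an irrational weight recombines consistently, i.e. that the products $\prod_j e^{i\alpha_j'I_{x_0}(\omega_{\bf k})(z_j')}$ and the regularized curvature term transform compatibly. Concretely, one must verify that preservation of the homology classes is enough to pin down the primitives $I^{\boldsymbol{\sigma}}_{x_0}(\omega_{\bf k})$ up to exact forms absorbable by Girsanov, and that no residual $2\pi R\Z$-ambiguity, invisible to the rational operators $V_{(\boldsymbol{\alpha},\mathbf{m})}$ but \emph{visible} to the irrational ones $V_{\boldsymbol{\alpha}'}$, survives after the change. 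This is the precise place where the Torelli/identity-isotopy hypothesis is indispensable and where the argument genuinely departs from the rational case.
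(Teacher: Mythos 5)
Your proposal is correct and follows exactly the route the paper takes: the paper offers no separate argument for Theorem \ref{irrational_case} beyond the remark that the same proofs as in the rational case apply, and your line-by-line transfer — existence via Proposition \ref{expmoment} and the Cameron--Martin/analytic-continuation scheme with $\alpha_j,\alpha'_j>Q$ controlling the singularities of type \eqref{elem}, the Fourier selection rule $\tfrac{n}{R}+\sum_j\alpha_j+\sum_j\alpha'_j-\chi(\Sigma)Q+p\beta=0$ made solvable precisely by the hypothesis $\sum_j\alpha'_j\in\chi(\Sigma)Q+\tfrac{1}{R}\Z$, and the spin relation via the Gauss--Bonnet argument of Corollary \ref{corospin} — is precisely what that remark encodes. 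You also correctly isolate the two places where rationality genuinely entered the rational-case proofs (the $8\pi^2R\Z$ ambiguity of Lemma \ref{independence_basis}, which now produces the non-trivial phase $e^{-2\pi iQRn}$ since $QR\notin\Z$, and the $2\pi R\Z$ jumps of the primitives at the irrational insertions, invisible to $V_{(\boldsymbol{\alpha},\mathbf{m})}$ but visible to $V_{\boldsymbol{\alpha}'}$), which is exactly why statements (3) and (6) are weakened to identity-isotopic curve changes fixing the $z'_j$ and to Torelli mapping classes, with the local invariance Lemma \ref{lem:local_invariance} replacing Lemma \ref{independence_basis} in the diffeomorphism argument.
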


%%%%%%%%%%%%%%%%%%%%%%%%%%%%%%%%%%%%%%%%%%%%%%%%%%%%%%%%%%%%%%%%%%%%%%%%%%%%%%%%%%%%%%%%%%%%%%%%%%%%%%%%%%%%%%%%%%%%%%%%%%%%%%%%%%%%%%%%%%%%%%%%%%%%%%%%%%%%%%%%%%%%%%%%%%%%%%%%%%%%%%%%%%%%%%%%%%%%%%%%%%%%%%%%%%%%%%%%%%%%%%%%%%%%%%%%%%%
\appendix

 \section{Markov property of the GFF}
 %%%%%%%%%%%%%%%%%%%%%%%%%%%
 We recall here the domain Markov property of the GFF on Riemann surfaces, whose proof can be found in the appendix of \cite{GKRV2}.
 \begin{proposition}\label{decompGFF}
Let $(\Sigma,g)$ be a Riemannian manifold with smooth boundary $\partial \Sigma$. Let $\mathcal{C}$ be a  union of  smooth non overlapping closed simple curves separating  $\Sigma$ into two connected components $\Sigma_1$ and $\Sigma_2$.\\
1) if $\partial \Sigma\not=\emptyset$ then the Dirichlet GFF $X_D$ on $\Sigma$ admits the following decomposition in law as a sum of independent processes
$$X_D\stackrel{law}{=}Y_1+Y_2+P$$
with $Y_q$ a Dirichlet GFF on $\Sigma_q$ for $q=1,2$ and $P$ the harmonic extension on $\Sigma\setminus\mathcal{C}$ of the restriction of $X_D$ to $\mathcal{C}$ with boundary value $0$ on $\partial \Sigma$.\\
2) if $\partial \Sigma=\emptyset$ then the GFF $X_g$ on $\Sigma$ admits the following decomposition in law  
$$X_g\stackrel{law}{=}Y_1+Y_2+P-c_g$$
where $Y_1,Y_2,P$ are independent, $Y_q$ is a Dirichlet GFF on $\Sigma_q$ for $q=1,2$, $P$ is the harmonic extension on $\Sigma\setminus\mathcal{C}$ of the restriction of $X_g$ to $\mathcal{C}$  and $c_g:=\frac{1}{{\rm v}_g(\Sigma)}\int_\Sigma(Y_1+Y_2+P)\,\dd {\rm v}_g $.
 \end{proposition}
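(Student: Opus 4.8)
The plan is to derive all three statements from a single orthogonal decomposition of the Cameron--Martin space of the field, after which independence of the pieces is automatic for a Gaussian vector. I would treat the Dirichlet case (1) first, the case without boundary (2) being a variant requiring an extra zero-mode correction.

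For (1), recall that the Dirichlet GFF $X_D$ has covariance $G_{g,D}$ and Cameron--Martin space $H^1_0(\Sigma)$ equipped with the inner product $\frac{1}{2\pi}\cjg \dd f,\dd h\cjd_2$, since the covariance operator is $2\pi\Delta_g^{-1}$. The first step is to prove the orthogonal splitting
\[
H^1_0(\Sigma)=H^1_0(\Sigma_1)\oplus H^1_0(\Sigma_2)\oplus \mathcal{H},
\]
where each $H^1_0(\Sigma_q)$ is seen inside $H^1_0(\Sigma)$ via extension by $0$, and $\mathcal{H}$ is the space of functions vanishing on $\pl\Sigma$ and harmonic on $\Sigma\setminus\mathcal{C}$. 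Orthogonality of $H^1_0(\Sigma_1)$ and $H^1_0(\Sigma_2)$ is clear from disjointness of supports; orthogonality of $H^1_0(\Sigma_q)$ and $\mathcal{H}$ follows from integration by parts, $\cjg \dd f,\dd h\cjd_2=\cjg f,\Delta_g h\cjd_2=0$ for $f\in H^1_0(\Sigma_q)$ and $h\in\mathcal{H}$, using that $h$ is harmonic on $\Sigma_q^\circ$ and that $f$ vanishes on $\pl\Sigma_q$. Surjectivity of the sum is obtained by subtracting from a given $f$ its harmonic extension of $f|_{\mathcal{C}}$, the remainder vanishing on $\mathcal{C}\cup\pl\Sigma$ and hence splitting across $\Sigma_1,\Sigma_2$.

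The second step upgrades this to the level of the field. Writing $X_D=Y_1+Y_2+P$ for the three orthogonal Gaussian components, independence is immediate, $Y_q$ has covariance $G_{g,\Sigma_q,D}$ and is therefore a Dirichlet GFF on $\Sigma_q$, and $P$ takes values in $\mathcal{H}$. It then remains to identify $P$ with the harmonic extension of $X_D|_{\mathcal{C}}$: this is the statement that the orthogonal projection onto $\mathcal{H}$ coincides with the conditional expectation $\E[X_D \mid X_D|_{\mathcal{C}}]$, which holds because the harmonic extension is exactly the Dirichlet-energy minimiser matching the prescribed boundary data, hence the $L^2(\P)$ best predictor in the Gaussian setting. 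For (2), the same three subspaces are used inside the zero-mean space $\{f\in H^1(\Sigma):\int_\Sigma f\,\dd {\rm v}_g=0\}$, with $Y_q$ now free in the interior of $\Sigma_q$ and Dirichlet only on $\mathcal{C}$. Since independent pieces no longer carry the global zero-mean constraint, I would assemble $Y_1+Y_2+P$ from independent copies, recenter by $c_g=\frac{1}{{\rm v}_g(\Sigma)}\int_\Sigma(Y_1+Y_2+P)\,\dd {\rm v}_g$, and check equality of covariances with $X_g$ directly via the corresponding decomposition of $G_g$.

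The main obstacle will be the rigorous meaning of the trace $X_D|_{\mathcal{C}}$ and of its harmonic extension $P$: since the GFF is only a distribution, its restriction to $\mathcal{C}$ lives in a negative-order Sobolev space $H^{-s}(\mathcal{C})$, and one must verify that the harmonic extension operator is bounded on such spaces and that the decomposition converges in the ambient space $H^{s'}(\Sigma)$. This is precisely where the smoothness (and analytic parametrisation) of $\mathcal{C}$ enters, exactly as in the construction of the restriction of the whole-plane GFF to a circle; once the trace and extension are controlled, the remaining identifications of covariances are routine $L^2$ computations.
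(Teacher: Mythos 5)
Your proposal is correct, and it coincides with the paper's own route: the paper states Proposition \ref{decompGFF} without proof, referring to the appendix of \cite{GKRV2}, where the argument is exactly the Cameron--Martin orthogonal decomposition you describe --- $H^1_0(\Sigma)=H^1_0(\Sigma_1)\oplus H^1_0(\Sigma_2)\oplus\mathcal{H}$ with $\mathcal{H}$ the harmonic-off-$\mathcal{C}$ part, inducing the independent Gaussian splitting, with the trace and harmonic-extension technicalities on $\mathcal{C}$ handled along the lines you indicate. Your treatment of the boundaryless case (assembling independent copies, recentering by $c_g$, and verifying covariances against the zero-mean field, i.e.\ working modulo constants to account for the degenerate zero mode) likewise matches the cited proof.
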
   
 %%%%%%%%%%%%%%%%%%%%%%%%%%%%%%%
 
 \section{Proof of Lemma \ref{independence_basis}}\label{Lemma_indep}

Using Lemma \ref{conformal_change_reg_int}, it suffices to prove the result for one choice of conformal representative in the conformal 
class of $[g]$. By the result of Aubin \cite{Aubin}, one can prescribe the scalar curvature $K_{\hat{g}}$ of a conformal metric $\hat{g}:=e^{\rho}g$ (for some $\rho\in C^\infty(\Sigma)$ as long as $K_{\hat{g}}\leq 0$ with $\int_\Sigma K_{\hat{g}}\dd {\rm v}_{\hat{g}}<0$. We have thus reduced to studying the case where the curvature $K_g$ can be assumed to be non-positive and $K_g=0$ everywhere but on an arbitrarily small open set $\mc{K}$.

The link between a symplectic basis $([a_j],[b_j])_j$ to another one $([a_j'],[b_j'])_j$ is given by a matrix $A\in {\rm Sp}(2{\mathfrak{g}},\Z)$. 
The group ${\rm Sp}(2{\mathfrak{g}},\Z)$ is generated by four types of elements, called Burkhardt generators, see 
\cite[Proof of Th. 6.4]{Farb-Margalit}.  

The first is the factor rotation $r_{j_1}\in {\rm Sp}(2{\mathfrak{g}},\Z)$, which keeps the basis $\sigma$ 
fixed except for the two elements $[a_{j_1}],[b_{j_1}]$ where $r_{j_1}([a_{j_1}])=[b_{j_1}]$ and $r_{j_1}([b_{j_1}])=-[a_{j_1}]$. In terms of our regularized integral, this amounts to check that 
\[ \chi(\gamma_{a_{j_1}})\int_{b_{j_1}}k_{g}\dd \ell_{g}-\chi(\gamma_{b_{j_1}})\int_{a_{j_1}}k_{g}\dd \ell_{g}=\chi(\gamma_{b_{j_1}})\int_{-a_{j_1}}k_{g}\dd \ell_{g}-\chi(\gamma_{-a_{j_1}})\int_{b_{j_1}}k_{g}\dd \ell_{g},\]
which is straightforward (here we write $-a_{j_1}$ for the curve $a_{j_1}$ with reverse orientation). 

The second Burkhardt generator is the factor swap $s_{j_1,j_2}$ 
which is the identity on $[a_j],[b_j]$ for $j\notin \{k,\ell\}$ and satisfies $s_{j_1,j_2}([a_{j_1}])=[a_{j_2}]$ and $s_{j_1,j_2}([b_{j_1}])=[b_{j_2}]$, i.e. it swaps 
$\mc{T}_{j_1}$ and $\mc{T}_{j_2}$ in our geometric representation of $\Sigma$ in Figure \ref{fig:domaine}. 
For this elementary move, it is clear from the definition of the regularized integral that this is invariant. 

The third Burkhardt generator is the transvection $t_{j_1}$, which preserves all $[a_j],[b_j]$ except for 
$t_k([a_{j_1}])=[a_{j_1}]+[b_{j_1}]$. This is realized by a Dehn twist $T_{b_{j_1}}$ along $b_{j_1}$. Without loss of generality, we can 
assume that $j_1=1$ to simplify the notations. We consider the equivariant extension of 
$I^{\boldsymbol{\sigma}}_{x_0}(\omega)$ from $K_{1}$ to $\tilde{\mc{T}}_{1}$ (which is the plane with the translated disks $D(e,\eps)+k$ removed, for $k\in \Z^2$) and still denote it by $I^{\boldsymbol{\sigma}}_{x_0}(\omega)$. 
The new basis obtained by applying the transvection can be represented by simply changing the curve $a_{1}$ by a simple smooth 
curve $a'_{1}$ that we represent in $K_{1}$ by 
\[ \sigma_{a'_{1}}(t)=t+i\alpha(t) \, \quad \alpha(t)=0 \textrm{ for } t\in [0,\eps], \quad \alpha(t)=1 \textrm{ for }t\in [1-\eps,1]\] for some $\eps>0$
and $\alpha(t)\geq 0$ non decreasing in the interval $t\in [\eps,1-\eps]$. We obtain a new fundamental domain $K_{1}'$ 
of $\mc{T}_{1}$ by considering the domain in $\tilde{\mc{T}}_{1}$ bounded by $\sigma_{a'_1},\bar{\sigma}_{a'_1}:=
\gamma_{b_1}(\sigma_{a_1})$ and the vertical lines $i\R$ and $1+i\R$; see Figure \ref{fig:transvection}.
 \begin{figure}[h] 
\includegraphics[width=0.4\textwidth]{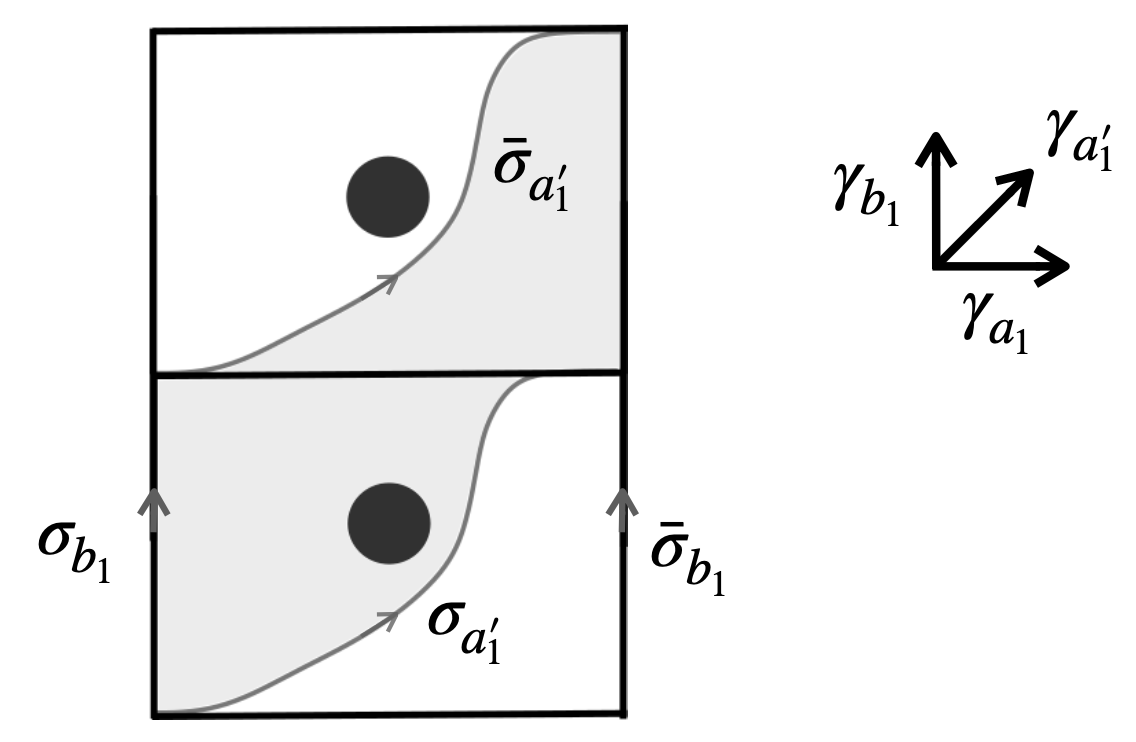} 
\caption{The domain $K_1'$ in gray and the new curves $a'_1$ lifted to $\tilde{\mc{T}}_1$.} 
\label{fig:transvection}
\end{figure} 
If $\sigma'$ denotes the new canonical basis of $\mc{H}_1(\Sigma)$, we notice that $I^{\boldsymbol{\sigma}'}_{x_0}(\omega)$ in $K_1'$ is equal to the equivariant extension of 
$I^{\boldsymbol{\sigma}}_{x_0}(\omega)$. Let us denote by $D= K_1'\setminus K_1$: we compute using 
$I^{\boldsymbol{\sigma}}_{x_0}(\omega)(z+i)=I^{\boldsymbol{\sigma}}_{x_0}(\omega)(z)+\chi(\gamma_{b_1})$ for $z\in K_1$
\[  \begin{split}
\int_{K_1'} K_gI^{\boldsymbol{\sigma}'}_{x_0}(\omega)\dd {\rm v}_g=& \int_{K_1'} K_g I^{\boldsymbol{\sigma}}_{x_0}(\omega)\dd {\rm v}_g=\int_{K_1\cap K_1'}K_g I^{\boldsymbol{\sigma}}_{x_0}(\omega)\dd {\rm v}_g+\int_{D}K_gI^{\boldsymbol{\sigma}}_{x_0}(\omega) \dd {\rm v}_g \\
=& \int_{K_1}K_g I^{\boldsymbol{\sigma}}_{x_0}(\omega)\dd {\rm v}_g+\chi(\gamma_{b_1})\int_{D}K_g \dd {\rm v}_g
\end{split}\]
and using Gauss-Bonnet, 
\[\int_{D}K_g \dd {\rm v}_g= 2(\int_{b_1}k_g\dd\ell_g+ \int_{a_1}k_g\dd\ell_g-\int_{a'_1}k_g\dd\ell_g).\]
On the other hand, the difference of the boundary terms of $\int_{\Sigma_{\boldsymbol{\sigma}'}}^{\rm reg} K_gI^{\boldsymbol{\sigma}'}_{x_0}(\omega)\dd {\rm v}_g$ with that of 
$\int_{\Sigma_{\boldsymbol{\sigma}}}^{\rm reg} K_g I^{\boldsymbol{\sigma}}_{x_0}(\omega)\dd \ell_g$ is given by 
\[ 2(\chi(\gamma_{a_1})+\chi(\gamma_{b_1}))\int_{b_1}k_g\dd\ell_g-2\chi(\gamma_{b_1})\int_{a'_1}k_g\dd\ell_g+2\chi(\gamma_{b_1})\int_{a_1}k_g\dd\ell_g -2\chi(\gamma_{a_1})\int_{b_1}k_g\dd\ell_g\]
thus implying that 
\[\int_{\Sigma_{\boldsymbol{\sigma}'}}^{\rm reg} K_gI^{\boldsymbol{\sigma}'}_{x_0}(\omega)\dd {\rm v}_g=\int_{\Sigma_{\boldsymbol{\sigma}}}^{\rm reg} K_g I^{\boldsymbol{\sigma}}_{x_0}(\omega)\dd {\rm v}_g.\]

It remains to deal with the case of the 4th Burkhardt generator, which is the factor mix $f_{j_1,j_2}$ which preserves all $[a_j],[b_j]$ except for $j=j_1,j_2$ where 
\[ f_{j_1,j_2} :  ([a_{j_1}],[b_{j_1}], [a_{j_2}],[b_{j_2}])\mapsto ([a_{j_1}]-[b_{j_2}],[b_{j_1}],[a_{j_2}]-[b_{j_1}],[b_{j_2}]).\] 
As before, witout loss of generality we can assume that $j_1=1$ and $j_2=2$ to simplify the notations. We choose a curve $a_1'$ which represent the class $[a_1]-[b_2]$ and a curve $a_2'$ which represent the class $[a_2]-[b_1]$: they must have intersection pairing equal to $0$ with all $a_j,b_j$ except for the following $j$:
\begin{align*} 
& \iota(a_1',a_1)=0, \,\, \iota(a_1',a_2)=1, \,\,  \iota(a_1',b_1)=1, \,\,  \iota(a_1',b_2)=0,\\
& \iota(a_2',a_1)=1, \,\, \iota(a_2',a_2)=0, \,\,  \iota(a_2',b_1)=0, \,\,  \iota(a_2',b_2)=1.
\end{align*}
We can assume that $S_{\boldsymbol{\sigma}}=\hat{\C}\setminus \cup_{j=1}^{\mathfrak{g}} \mc{D}_j$ and $\mc{D}_{1}=D(0,\eps)$, $\mc{D}_{2}=D(1,\eps)$. 
The curve $a_1'$ can then be chosen so that:\\ 
1) its intersection with $\mc{T}_{1}$ lifts to $\sigma_{a'_{1}}(t)=i/2+t$ for $t\in [0,1/2-\eps]$ and 
$\sigma_{a'_{1}}(t)=i/2+(t-1)$ for $t\in [3/2+\eps,2]$, and up to making a small deformation of the curve near 
$\sigma_{a_{1}'}\cap \pl \mc{D}_{1}'$, we can assume that this curve intersects $\pl \mc{D}'_{1}$ with an angle $\pi/2$;\\
2)  its intersection with $S_{\boldsymbol{\sigma}}$ decomposes into two pieces of curves 
$a'_{1}(t)$ for $t\in [1/2-\eps,1/2+\eps]$ and $t\in  [3/2-\eps,3/2+\eps]$, with $a_{1}'(1/2-\eps)=-i\eps/2$, $a_{1}'(1/2+\eps)=1-i\eps/2$ and
$a_{1}'(3/2-\eps)=1+i\eps/2$, $a_{1}'(3/2+\eps)=i\eps/2$, and the angle between $a_{1}'$ and $\pl \mc{D}_{1}$ and $\pl \mc{D}_{2}$ are $\pi/2$;\\
3) its intersection with $\mc{T}_{2}$ lifts to $\sigma_{a'_{1}}(t)=1/2+i(1-t)$ for $t\in [1/2+\eps,1]$ and 
$\sigma_{a'_{1}}(t)=1/2+i(2-t)$ for $t\in [1,3/2-\eps]$, and up to  making a small deformation of the curve near $\sigma_{a_{1}'}\cap \pl \mc{D}_{2}'$
we can assume that this curve intersects $\pl \mc{D}'_{2}$ with an angle $\pi/2$. We define similarly $a_{2}'$ by reverting the role of $a'_1$ and $a'_2$. 
See Figures \ref{fig:factormix1} and \ref{fig:factormix2}.

 \begin{figure}[h] 
\includegraphics[width=0.6\textwidth]{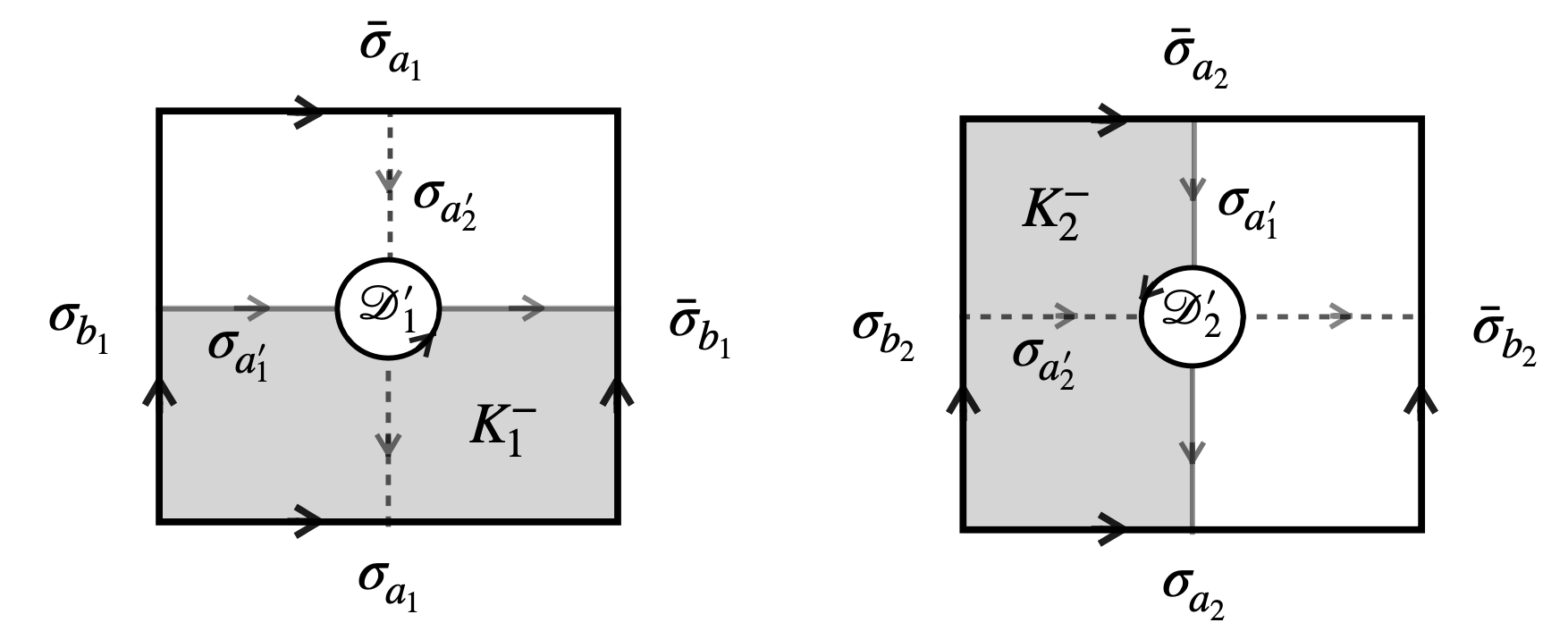} 
\caption{The domains $K_{1},K_{2}$ and the new curves $a'_{1},a'_{2}$ lifted to $\tilde{\mc{T}}_{1},\tilde{\mc{T}}_{2}$.} 
\label{fig:factormix1}
\end{figure} 
 \begin{figure}[h] 
\includegraphics[width=0.4\textwidth]{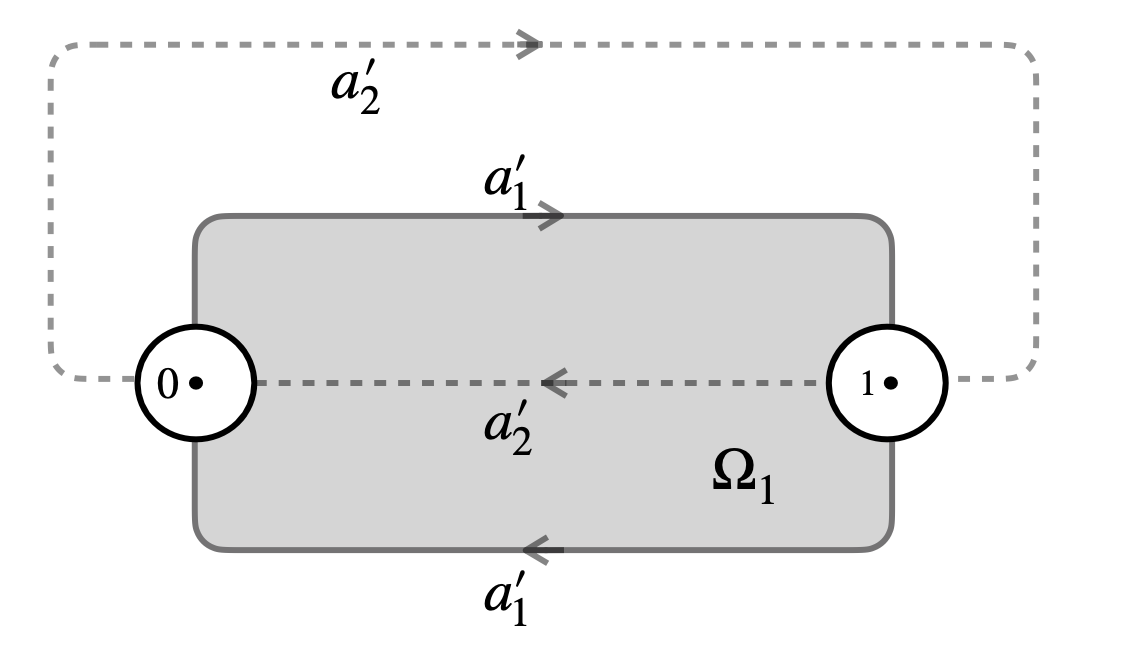} 
\caption{The domains $S_{\boldsymbol{\sigma}}$ near $\mc{D}_{1}$ and $\mc{D}_{2}$, and the new curves $a'_{1},a'_{2}$.} 
\label{fig:factormix2}
\end{figure}
For $j=1,2,$ let us denote $\Omega_{j}\subset S_\sigma\cap \C$ the connected set bounded by the two connected components of $a'_{j}\cap S_{\boldsymbol{\sigma}}$.    
We can freely  choose the curves $a'_{1},a'_{2}$ so that the set $\mc{K}$ where the curvature is non zero is contained 
in $S_{\boldsymbol{\sigma}}^\circ \setminus (\Omega_{1}\cup \Omega_{2})$.
We observe that $I^{\boldsymbol{\sigma}'}_{x_0}(\omega)(x)=I^{\boldsymbol{\sigma}}_{x_0}(\omega)(x)$ for $x\in S_{\boldsymbol{\sigma}}\setminus (\Omega_{1}\cup \Omega_{2})$. 
We compute 
\[ \int_{\Sigma_{\boldsymbol{\sigma}}} K_gI^{\boldsymbol{\sigma}'}_{x_0}(\omega)\dd {\rm v}_g=\int_{\mc{K}} K_gI^{\boldsymbol{\sigma}'}_{x_0}(\omega)\dd {\rm v}_g=
\int_{\mc{K}} K_gI^{\boldsymbol{\sigma}}_{x_0}(\omega)\dd {\rm v}_g=\int_{\Sigma_{\boldsymbol{\sigma}}} K_gI^{\boldsymbol{\sigma}}_{x_0}(\omega)\dd {\rm v}_g.\]
Let $K_{1}^-\subset K_{1}$ be the connected set bounded by $\sigma_{b_{1}},\bar{\sigma}_{b_{1}},\sigma_{a_{1}},\sigma_{a'_{1}}$ and $K_{2}^-\subset K_{2}$ the connected set 
bounded by $\sigma_{b_{2}},\sigma_{a_{2}},\bar{\sigma}_{a_{2}},\sigma_{a'_{1}}$
We compute using the Gauss-Bonnet formula in $K_{1}^-$ 
\begin{equation}\label{GBOmega_j}
\begin{split}
\int_{\sigma_{a'_{1}}\cap K_{1}}k_g\dd \ell_g=\pi+\int_{\sigma_{a_{1}}}k_g\dd \ell_g-\int_{\mc{C}'_{1}}k_g\dd \ell_g
\end{split}\end{equation}
where $\mc{C}'_{1}$  is the semi-circle $\pl\mc{D}_{1}'\cap K_{1}^-$  oriented counter-clockwise. 
We next apply Gauss-Bonnet in the region $\Omega_{1} \subset S_{\boldsymbol{\sigma}}$ bounded by the two pieces of curves representing 
$a'_{1}\cap S_{\boldsymbol{\sigma}}$: 
\begin{equation}\label{GBOmegaj1}
\int_{a'_{1}\cap S_{\boldsymbol{\sigma}}}k_g\dd \ell_g=\int_{\mc{C}_{1}}k_g\dd \ell_g+\int_{\mc{C}_{2}}k_g \dd \ell_g
\end{equation} 
where, for $j=1,2$, $\mc{C}_{j}$  is the semi-circle $\Omega_{1}\cap \pl \mc{D}_{j}$ oriented clockwise. Finally we apply Gauss-Bonnet in the domain 
$K_{2}^-$: 
\begin{equation}\label{GBK_j2}
\int_{\sigma_{a'_{1}}\cap K_{2}}k_g\dd \ell_g=-\int_{\sigma_{b_{2}}}k_g\dd \ell_g -\int_{\mc{C}'_2}k_g\dd \ell_g+\pi
\end{equation} 
where $\mc{C}'_{2}$ is the semi-circle $\mc{C}'_{2}=\pl \mc{D}'_{2}\cap K_{2}^-$ oriented counter clockwise. We can use that 
$\int_{\mc{C}_{j}}k_g\dd \ell_g=\int_{\mc{C}'_{j}}k_g\dd \ell_g$ for $j=1,2$, and summing \eqref{GBK_j2}, \eqref{GBOmegaj1} and \eqref{GBOmega_j}, we obtain 
\[ \int_{a'_{1}}k_g\dd \ell_g=2\pi+\int_{a_{1}}k_g\dd \ell_g-\int_{b_{2}}k_g\dd \ell_g.\]
By symmetry, the same argument yields 
\[ \int_{a'_{2}}k_g\dd \ell_g=2\pi+\int_{a_{2}}k_g\dd \ell_g-\int_{b_{1}}k_g\dd \ell_g.\]
We then obtain (with $\chi(a_1')=\chi(a_1)-\chi(b_2)$ and $\chi(a_2')=\chi(a_2)-\chi(b_1)$) 
\[ \Big(-\sum_{j=1}^2\chi(b_j)\int_{a_j'}k_g\dd \ell_g+ \chi(a'_j)\int_{b_j}k_g\dd \ell_g\Big)-\Big( -\sum_{j=1}^2\chi(b_j)\int_{a_j}k_g\dd \ell_g
+\chi(a_j)\int_{b_j}k_g\dd \ell_g
\Big) =-(\chi(b_1)+\chi(b_2))2\pi.\]
This shows that 
\[ \int_{\Sigma_{\boldsymbol{\sigma}}}^{\rm reg} K_gI^{\boldsymbol{\sigma}'}_{x_0}(\omega)\dd {\rm v}_g=\int_{\Sigma_{\boldsymbol{\sigma}'}}^{\rm reg} K_gI^{\boldsymbol{\sigma}}_{x_0}(\omega)\dd {\rm v}_g-4\pi(\chi(b_1)+\chi(b_2)).\]

\section{Proof of Lemma \ref{estimeedechien}}
The term $I_{x_0}(\omega_{\bf k}+\nu^{\rm h}_{\mathbf{z},\mathbf{m}})(y) $ being bounded on the complement of the defect graph, we can obviously get rid of this term. Then, observing that the Green function in isothermal local coordinates is of the form $-\ln|x-y|+f(x,y)$ for some smooth function $f$, it is plain to see that the control of our integral amounts to estimating the following quantities
\begin{align*}
\int_{B(z_j,\delta)}&\int_{B(z_{j'},\delta)}|y-x_j(t)|^{\beta\alpha_j}\big(|x_{j'}(t)-y'|^{\beta\alpha_{j'}}-|z_{j'}-y'|^{\beta\alpha_{j'}}\big)\frac{\dd y \dd y'}{|y-y'|^{\beta^2}}\\
 \int_{B(z_j,\delta)}&\int_{B(z_{j'},\delta)}|y-z_j|^{\beta\alpha_j}\big(|x_{j'}(t)-y'|^{\beta\alpha_{j'}}-|z_{j'}-y'|^{\beta\alpha_{j'}}\big)\frac{\dd y \dd y'}{|y-y'|^{\beta^2}}
 \end{align*}
when $t\to 1$, for $\delta>0$ small but fixed. We treat only  the first integral because the second one is similar. If $j\not=j'$, then choosing $\delta>0$ small enough so that the balls $ B(z_j,\delta)$ and $B(z_{j'},\delta) $ do not intersect, it is obvious to show that the integral converges to $0$ because the on-diagonal term $|y-y'|^{-\beta^2}$ is bounded and $\beta\alpha_j>-2$ for all $j$. We can thus focus on the case $j=j'$ and, by invariance under translation, we can assume $z_j=0$. We set
$$F_\delta(x):=\int_{B(0,\delta)}\int_{B(0,\delta)}|x-y|^{\beta\alpha_j}\big(|x-y'|^{\beta\alpha_{j}}-|y'|^{\beta\alpha_{j}}\big)\frac{\dd y \dd y'}{|y-y'|^{\beta^2}}.$$
We have to show that $F_\delta(x)\to 0$ as $x\to 0$. We have already seen that $F_\delta$ is finite using \eqref{elem}. Next, we want to  show that $F_\delta$ is bounded. Indeed
\begin{align*}
|F_\delta(x)|\leq &\int_{B(0,\delta)}\int_{B(0,\delta)}|x-y|^{\beta\alpha_j} |x-y'|^{\beta\alpha_{j}} \frac{\dd y \dd y'}{|y-y'|^{\beta^2}}+\int_{B(0,\delta)}\int_{B(0,\delta)}|x-y|^{\beta\alpha_j}|y'|^{\beta\alpha_{j}} \frac{\dd y \dd y'}{|y-y'|^{\beta^2}}.
 \end{align*}
 The first integral is bounded by an argument of translation invariance and we call $G_\delta(x)$ the second integral. For $|x|\geq \delta/4$ we claim that $G_\delta(x)$ is bounded by some $\delta$ dependent constant. Indeed we can split the $y$-integral in two parts depending on $|y|\leq \delta/8$ or  $|y|\geq \delta/8$. On the first part, we can remove the first singularity $|x-y|^{\beta\alpha_j}$ since it is bounded and the remainder is then obvious to bound. On the second part, one of the two singularities involving $y'$ is bounded (because $|y|\geq \delta/8$, $y'$ cannot be close to both $0$ and $y$) so that it is bounded in the same way.
  
 Now we bound  $G_\delta(x)$ for $|x|\leq \delta/2$.  Using the same argument as above, we show that $G_\delta(x)-G_{\delta/2}(x)$ is bounded by some $\delta$ dependent constant and $G_{\delta/2}(x)= (1/2)^{2\beta\alpha_j-\beta^2+4}G_\delta(2x)$ by a change of variables. Therefore $G_\delta(x)\leq  (1/2)^{2\beta\alpha_j-\beta^2+4}G_\delta(2x)+C_\delta$. Now we conclude that $G_\delta$ is bounded. Indeed, if $|x|\leq \delta/2$, we can find $n$ such that 
 $2^{-n-1}\delta<|x|\leq 2^{-n}\delta$. We can then iterate the previous relation to get $G_{\delta}(x)\leq (1/2)^{(2\beta\alpha_j-\beta^2+4)n}G_\delta(2^nx)+C_\delta\sum_{k=0}^{n-1}(1/2)^{(2\beta\alpha_j-\beta^2+4)k}$. Using that $G_\delta(x)$ is bounded for  $|x|\geq \delta/4$, we deduce that $G_\delta$ is bounded.

Now we are back to the study of $F_\delta$, which we know now to be bounded.  Next, for $|x|\leq \delta/4$ we have

\begin{align}\label{chien1}
|F_\delta(x)|\leq &\iint_{B(0,\delta/2)^2} |x-y|^{\beta\alpha_j}\big(|x-y'|^{\beta\alpha_{j'}}-|y'|^{\beta\alpha_{j'}}\big)\frac{\dd y \dd y'}{|y-y'|^{\beta^2}}\\
& +\iint_{B(0,\delta)^2\setminus B(0,\delta/2)^2} |x-y|^{\beta\alpha_j}\big(|x-y'|^{\beta\alpha_{j'}}-|y'|^{\beta\alpha_{j'}}\big)\frac{\dd y \dd y'}{|y-y'|^{\beta^2}}.\nonumber
 \end{align}
The second integral in the rhs can be split in two parts depending on $|y|\geq \delta/2$ or  $|y|\leq \delta/2$ and $|y'|\geq \delta/2$. On the first part, we can remove the first singularity $|x-y|^{\beta\alpha_j}$ since it is bounded and the remainder is then obvious to bound with the relation \eqref{fa2} below, which gives that the first part is less than $C(|x|^{\beta\alpha_j+2}+|x|)$. On the second part, we use the mean value theorem to get that $\big||x-y'|^{\beta\alpha_{j'}}-|y'|^{\beta\alpha_{j'}}\big|\leq C|x|$,  and the integral is less than $C|x|$ using invariance under translations. All in all
$$\iint_{B(0,\delta)^2\setminus B(0,\delta/2)^2} |x-y|^{\beta\alpha_j}\big(|x-y'|^{\beta\alpha_{j'}}-|y'|^{\beta\alpha_{j'}}\big)\frac{\dd y \dd y'}{|y-y'|^{\beta^2}}\leq C(|x|^{\beta\alpha_j+2}+|x|).$$
The first integral $\iint_{B(0,\delta/2)^2} \dots$ in \eqref{chien1}  can be dealt with using a change of variables (dilation); it is equal to $2^{-(2\beta\alpha_j+4-\beta^2)}F_\delta(2x)$. So we have obtained
$$|F_\delta(x)|\leq 2^{-(2\beta\alpha_j+4-\beta^2)}F_\delta(2x)+C(|x|^{\beta\alpha_j+2}+|x|).$$
We recall that $2\beta\alpha_j-\beta^2+4>0$. Iterating, we deduce that for $|x|\leq 2^{-n}\delta$,
$$F_\delta(x)\leq  2^{-(2\beta\alpha_j+4-\beta^2)(n-1)} F_\delta(2^{n-1}x)+C(|x|^{\beta\alpha_j+2}+|x|)\sum_{k=0}^{n-2}2^{-(2\beta\alpha_j+4-\beta^2)k} .$$
Next, for $|x|\leq \delta/4$, we can find $n$ such that $\delta 2^{-n-1}<|x|\leq \delta 2^{-n}$. The above relation then yields (using that $F_\delta$ is bounded)
  $|F_\delta(x)|\leq C(|x|^{2\beta\alpha_j+4-\beta^2}+|x|^{\beta\alpha_j+2}+|x|)$.

 which completes the proof of the lemma, up to the following estimates.

We claim for all $|x|\leq \delta/4$:
\begin{align}
 \int_{B(0,\delta) \setminus B(0,\delta/2)}&\big||y'-x|^{\beta\alpha_j}-|y'|^{\beta\alpha_j}\big|\,\dd y'\leq C|x|  ,\label{fa}\\
 \int_{B(0,\delta)} &\big||y'-x|^{\beta\alpha_j}-|y'|^{\beta\alpha_j}\big|\,\dd y'\leq C\big(|x|^{ \beta\alpha_j+2} +|x|\big). \label{fa2}
 \end{align}
 The proof of the first claim follows straightforwardly from the mean value theorem. For the second claim, let us call $G_\delta(x)$ this integral. It is plain to see that it is bounded. Next, we have
 \begin{align*}
G_\delta(x)
 \leq &\int_{ B(0,\delta/2)}  \big||y'-x|^{\beta\alpha_j}-|y'|^{\beta\alpha_j}\big|\,\dd y'+\int_{B(0,\delta)\setminus B(0,\delta/2)}  \big||y'-x|^{\beta\alpha_j}-|y'|^{\beta\alpha_j}\big|\,\dd y'\\
 \leq & 2^{-2-\beta\alpha_j}G_\delta(2x)+C|x|.
  \end{align*}
  Now, if $|x|\leq 2^{-n}\delta$, we can iterate the previous relation to get
$$G(x)\leq 2^{-(\beta\alpha_j+2)(n-1)}G(2^{n-1}x)+C|x|\sum_{k=0}^{n-2}2^{-(\beta\alpha_j+2)k}.$$
Finally, for any $|x|\leq \delta/4$, we can find $n$ such that $2^{-n-1}\delta<|x|\leq 2^{-n}\delta$. The previous relation then gives
which gives $|G(x)|\leq C(|x|^{\beta\alpha_j-2}+|x|)$ for some constant $C>0$, eventually depending on $\delta$.

{\small 
}

\end{document}